\newif\ifabstract
\newif\iffull
\newcommand{\myparskip}{3pt}
\renewcommand{\k}{\kappa}
\newcommand{\Erdos}{Erdos\xspace}
\newcommand{\Posa}{P\'{o}sa\xspace}
\newcommand{\EP}{{\Erdos-\Posa}\xspace}
\newenvironment{proofof}[1]{\noindent{\bf Proof of #1.}}%
        {\hfill\stopproof}
\def\etal{et al.\xspace}
\newcommand{\PoS}{Path-of-Sets System\xspace}
\newcommand{\ToS}{Tree-of-Sets System\xspace}
\newcommand{\ttt}[1]{\tilde T_{#1}}
\newcommand{\tC}{\tilde C}
\newcommand{\gkrv}{\ensuremath{\gamma_{\mbox{\tiny{\sc CMG}}}}}
\newcommand{\tl}{\tilde \ell}
\newcommand{\connect}{\leadsto}
\newcommand{\sconnect}{\overset{\mbox{\tiny{1:1}}}{\leadsto}}
\newcommand{\tw}{\mathrm{tw}}
\newcommand{\betaFCG}{\beta_{\mathrm{FCG}}}
\newcommand{\tGamma}{\tilde{\Gamma}}
\newcommand{\tS}{\tilde S}
\newcommand{\alphasc}{\ensuremath{\beta_{\mbox{\tiny{\sc ARV}}}}}
\newcommand{\algsc}{\ensuremath{{\mathcal{A}}_{\mbox{\textup{\scriptsize{ARV}}}}}\xspace}
\newcommand{\tpset}{\tilde{\mathcal{P}}}
\newcommand{\nset}{\mathcal{N}}
\newcommand{\tT}{\tilde{T}}
\newcommand{\G}{{\mathbf{G}}}
\newcommand{\ceil}[1]{\ensuremath{\left\lceil#1\right\rceil}}
\newcommand{\floor}[1]{\ensuremath{\left\lfloor#1\right\rfloor}}
\newcommand{\abs}[1]{\lvert #1\rvert}
\newcommand{\alphaCMG}{\ensuremath{\alpha_{\mbox{\tiny{\sc CMG}}}}}
\newcommand{\set}[1]{\left\{ #1 \right\}}
\newcommand{\sse}{\subseteq}
\newcommand{\tset}{T}
\newcommand{\subsetq}{\subseteq}
\newcommand{\pset}{{\mathcal{P}}}
\newcommand{\qset}{{\mathcal{Q}}}
\newcommand{\lset}{{\mathcal{L}}}
\newcommand{\bset}{{\mathcal{B}}}
\newcommand{\aset}{{\mathcal{A}}}
\newcommand{\cset}{{\mathcal{C}}}
\newcommand{\fset}{{\mathcal{F}}}
\newcommand{\xset}{{\mathcal{X}}}
\newcommand{\ttset}{\tilde{T}}
\newcommand{\tA}{\tilde A}
\newcommand{\tB}{\tilde B}
\newcommand{\yset}{{\mathcal{Y}}}
\newcommand{\rset}{{\mathcal{R}}}
\newcommand{\hset}{{\mathcal{H}}}
\newcommand{\sset}{{\mathcal{S}}}
\newcommand{\nots}{\overline S}
\newcommand{\be}{\begin{enumerate}}
\newcommand{\ee}{\end{enumerate}}
\newcommand{\bd}{\begin{description}}
\newcommand{\ed}{\end{description}}
\newcommand{\bi}{\begin{itemize}}
\newcommand{\ei}{\end{itemize}}
\newtheorem{theorem}{Theorem}[section]
\newtheorem{lemma}[theorem]{Lemma}
\newtheorem{observation}[theorem]{Observation}
\newtheorem{corollary}[theorem]{Corollary}
\newtheorem{claim}[theorem]{Claim}
\newtheorem{definition}{Definition}[section]
\newenvironment{proof}{\par \smallskip{\bf Proof:}}{\hfill\stopproof}
\def\stopproof{\square}
\def\square{\vbox{\hrule height.2pt\hbox{\vrule width.2pt height5pt \kern5pt
\vrule width.2pt} \hrule height.2pt}}
\renewcommand{\phi}{\varphi}
\newcommand{\eps}{\epsilon}
\newcommand{\half}{\ensuremath{\frac{1}{2}}}
\newcommand{\poly}{\operatorname{poly}}
\newcommand{\R}{\ensuremath{\mathbb R}}
\newcommand{\Z}{\ensuremath{\mathbb Z}}
\newcommand{\U}{\Upsilon}
\newenvironment{properties}[2][0]
{
\begin{enumerate} \setcounter{enumi}{#1}}{\end{enumerate}}
\newcommand{\mynote}[1]{{\sc\bf{[#1]}}}
\newcommand{\out}{\operatorname{out}}
\newcommand{\partition}{\mathsf{PARTITION}}
\newcommand{\separate}{\mathsf{SEPARATE}}
\begin{document}

\title{
Improved Bounds for the Excluded Grid Theorem
}

\author{
Julia Chuzhoy\thanks{Toyota Technological Institute, Chicago, IL
60637. Email: {\tt cjulia@ttic.edu}. Supported in part by NSF grant CCF-1318242.}
}

\maketitle

\begin{abstract}
We study the Excluded Grid Theorem of Robertson and Seymour. This is a fundamental result in graph theory, that states that there is some function $f:\Z^+\rightarrow \Z^+$, such that for all integers $g>0$, every graph of treewidth at least $f(g)$ contains the $(g\times g)$-grid as a minor. Until recently, the best known upper bounds on $f$ were super-exponential in $g$. A recent work of Chekuri and Chuzhoy provided the first polynomial bound, by showing that treewidth $f(g)=O(g^{98}\poly\log g)$ is sufficient to ensure the existence of the $(g\times g)$-grid minor in any graph. In this paper we improve this bound to $f(g)=O(g^{19}\poly\log g)$. We introduce a number of new techniques, including a conceptually simple and almost entirely self-contained proof of the theorem that achieves a polynomial bound on $f(g)$.\end{abstract}

\label{----------------------------------------------sec:intro----------------------------------------}

\section{Introduction}
We study the Excluded Grid Theorem of Robertson and Seymour~\cite{RS-grid} --- a fundamental and widely used result in graph theory. Informally, the theorem states that for any undirected graph $G$, if the treewidth of $G$ is large, then $G$ contains a large grid as a minor. Treewidth is an important and extensively used graph parameter, that, intuitively, measures how close a given graph $G$ is to being ``tree-like''. The treewidth of a graph is usually defined via tree-decompositions. A valid tree-decomposition of a graph $G$ consists of a tree $\tau$, and, for every node $a\in V(\tau)$, a subset $X(a)\subseteq V(G)$ of vertices of $G$, sometimes called a \emph{bag}. For every edge $e=(u,v)\in E(G)$ there must be a node $a\in V(\tau)$, whose bag $X(a)$ contains both $u$ and $v$, and for every vertex $v\in V(G)$, the set $\xset=\set{a\in V(\tau)\mid v\in X(a)}$ of nodes of $\tau$ whose bags contain $v$ must induce a non-empty connected  sub-tree of $\tau$. The \emph{width} of a given tree decomposition $(\tau,X)$ is $\min_{a\in V(\tau)}\set{|X(a)|}-1$, and the \emph{treewidth} of a graph $G$, denoted by $\tw(G)$, is the smallest width of any valid tree-decomposition of $G$. For example, the treewidth of a tree is $1$; the treewidth of the $(g\times g)$-grid is $\Theta(g)$; and the treewidth of an $n$-vertex constant-degree expander is $\Theta(n)$. Many combinatorial optimization problems that are hard on general graphs have efficient algorithms on trees, often via the dynamic programming technique. Such algorithms can frequently be extended to bounded-treewidth graphs, usually by applying the dynamic programming-based algorithms to the bounded-width tree-decomposition $(\tau,X)$ of $G$. However, for large-treewidth graphs, a different toolkit is often needed. The Excluded Grid Theorem provides a useful insight into the structure of such graphs, by showing that every large-treewidth graph must contain a large grid as a minor. (A graph $H$ is a \emph{minor} of a graph $G$, iff $H$ can be obtained from  $G$ by a series of edge-deletion, edge-contraction, and vertex-deletion operations.) We are now ready to formally state the Excluded Grid Theorem.

\begin{theorem}\label{thm: GMT}\cite{RS-grid}
There is some function $f:\Z^+\rightarrow \Z^+$, such that for every integer $g\geq 1$, every graph of treewidth at least $f(g)$ contains the $(g\times g)$-grid as a minor.
\end{theorem}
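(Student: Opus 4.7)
The plan is to prove Theorem \ref{thm: GMT} with the improved polynomial bound by routing through an intermediate combinatorial object --- the \PoS{} --- which is far more structured than a graph given only by its treewidth, and from which a large grid minor can be extracted cleanly.

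First I would reduce treewidth to a well-linked set: a standard argument (going back to Robertson--Seymour and made quantitative by Reed) turns $\tw(G)\geq k$ into a node-well-linked set $T\subseteq V(G)$ of size $\Omega(k/\poly\log k)$, so that for every balanced bipartition of $T$ there are $|T|/2$ vertex-disjoint paths across the cut. This step is lossless up to polylogarithmic factors. The heart of the argument is then the construction of a \PoS{} of length $\ell$ and width $w$ inside $G$ from $T$: a sequence of vertex-disjoint clusters $S_1,\ldots,S_\ell$, each carrying two interface sets $A_i,B_i\subseteq V(S_i)$ of size $w$ that are well-linked in $G[S_i]$, together with $w$ vertex-disjoint paths routing $B_i$ to $A_{i+1}$ for every $i<\ell$. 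I would build this iteratively: starting from $T$, repeatedly route a large fraction of the current interface across the graph, use a sparsest-cut approximation (with loss governed by $\alphasc=O(\sqrt{\log n})$) to split the endpoints into a fresh well-linked interface of size roughly $|T|/\ell$, and append new clusters while certifying via the bandwidth property that interface well-linkedness is preserved. Achieving $\ell=\Theta(g)$ and $w=\Theta(g^2\poly\log g)$ simultaneously from a well-linked set of size $\Theta(g^{19}\poly\log g)$ is the target parameter regime.

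Finally, a PoS of length $\Omega(g)$ and width $\Omega(g^2\poly\log g)$ yields a $(g\times g)$-grid minor. The well-linkedness inside each $S_i$ supports $g$ internally vertex-disjoint paths between chosen subsets of $A_i$ and $B_i$; splicing these with the inter-cluster routings produces $g$ vertex-disjoint ``column'' paths threading all clusters, while a further application of well-linkedness to cross-connect the columns inside each cluster supplies the ``rows''; appropriate contractions within each cluster then exhibit the grid. The hard part --- and the source of the improvement from the previous exponent $98$ down to $19$ --- lies in the middle step: designing the cluster-splitting and re-routing procedure so that each of the $\Theta(g)$ iterations loses only a modest polynomial factor in width and interface size, rather than the much larger losses incurred in earlier proofs. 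Making this construction simultaneously self-contained and parameter-efficient is the principal technical obstacle I would expect to confront.
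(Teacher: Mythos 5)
Your high-level plan --- pass from treewidth to a node-well-linked set, build a Path-of-Sets System, and then extract the grid --- matches the paper's framework, but there are two genuine gaps that would prevent your argument from giving the $g^{19}$ bound.

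First, the grid-extraction step does not work with the parameters you state. You claim a \PoS\ of length $\ell=\Theta(g)$ and width $w=\Theta(g^2\poly\log g)$ yields a $(g\times g)$-grid minor. But the extraction (Corollary~\ref{cor: from path-set system to grid minor}, following Leaf--Seymour and Chekuri--Chuzhoy) requires \emph{both} length and width to be $\Theta(g^2)$: it threads $g$ column paths across all clusters and then uses each even-indexed cluster to realize a single horizontal edge of the grid, so one needs $\Theta(g^2)$ clusters, one per edge. Well-linkedness inside a cluster guarantees a connecting path $\beta_{P,P'}$ between any one pair of columns internally disjoint from all columns, but it does not let you realize $\Theta(g)$ mutually disjoint such crossings in a single cluster, which is what ``using each cluster as a row'' would require. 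So the promised $\ell=\Theta(g)$ is short by a factor of $g$; with $\ell=\Theta(g^2)$ and the iterative-doubling cost of roughly $\ell^{17}w$, your construction would land at $f(g)=O(g^{36}\poly\log g)$, not $O(g^{19}\poly\log g)$.

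Second, and relatedly, the source of the exponent $19$ in the paper is not a leaner \PoS\ construction along a single path. It comes from combining a \emph{Tree-of-Sets System}, which is substantially cheaper to build (cost $\sim w\ell^5$ instead of $\sim w\ell^{17}$), with many \emph{short} \PoS\ segments attached to the tree, and then embedding a carefully designed recursive family of trees that encodes the grid in a parallel rather than sequential way (Sections~\ref{sec: ToS}--\ref{sec: parallel splitting}). Your proposal contains none of this machinery: there is no tree-shaped cluster arrangement, no ``parallel'' cluster-splitting procedure, and no recursive tree family. Without it there is no reason the parameter bookkeeping would close at $g^{19}$. A further, smaller point: the paper's cluster splitting deliberately uses \emph{exact} minimum balanced cuts (which is why the proof is non-constructive), whereas you propose an ARV sparsest-cut approximation with an $O(\sqrt{\log n})$ loss per split; this introduces an unnecessary polylogarithmic leak and is not how the paper controls its constants, though that is not the main obstacle.
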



The Excluded Grid Theorem plays an important role in Robertson and Seymour's seminal Graph Minor series, and it is one of the key elements in their efficient algorithm for the Node-Disjoint Paths problem (where the number of the demand pairs is bounded by a constant)~\cite{flat-wall-RS}. It is also widely used in \EP-type results (see, e.g.~\cite{Thomassen88,FominST11,RS-grid}) and in Fixed Parameter Tractability; in fact the Excluded Grid Theorem is the key tool in the bidimentionality theory~\cite{DemaineH-survey,DemaineH07}.

It is therefore important to study the best possible upper bounds on the function $f$, for which Theorem~\ref{thm: GMT} holds. Besides being a fundamental graph-theoretic question in its own right, better upper bounds on $f$ immediately result in faster algorithms and better parameters in its may applications. The original upper bound on $f$ of~\cite{RS-grid} was substantially improved by Robertson, Seymour and Thomas \cite{RobertsonST94} to $f(g) = 2^{O(g^5)}$. Diestel et al.~\cite{DiestelJGT99} (see also \cite{Diestel-book}) provide a simpler proof with a slightly weaker bound.  This was in turn improved  by Kawarabayashi and Kobayashi \cite{KawarabayashiK-grid}, and by Leaf and Seymour
\cite{LeafS12}, to $f(g) = 2^{O(g^2/\log g)}$. Finally, a recent work of Chekuri and Chuzhoy~\cite{CC14} provides the first polynomial upper bound on the function $f(g)$, by showing that Theorem~\ref{thm: GMT} holds for $f(g)=O(g^{98}\poly\log g)$.
 On the negative side, Robertson \etal \cite{RobertsonST94} show that $f(g)=\Omega(g^2\log g)$ must hold, and they conjecture that this value is sufficient. Demaine \etal \cite{DemaineHK09} conjecture that the bound of $f(g)=\Theta(g^3)$ is both necessary and sufficient.

In this paper we provide a proof of Theorem~\ref{thm: GMT} with an improved bound of $f(g)=O(g^{19}\poly\log g)$. 
The paper consists of two parts. In the first part, we provide what we call a basic construction, that achieves a weaker bound of $f(g)=O(g^{36}\poly\log g)$. This part is a full version of the extended abstract~\cite{GMT-STOC} that appeared in STOC 2015. 
The main advantage of this construction is that, unlike the proof of~\cite{CC14}, it is very simple conceptually. The proof is almost self-contained, in the following sense: we provide a self-contained proof of the theorem for bounded-degree graphs $G$. In order to handle general graphs, we need to use previously known results to reduce the maximum vertex degree of the input graph to a constant, while approximately preserving its treewidth. This is the only part of the basic construction that is not self-contained; we discuss this in more detail below. Unlike the proof of~\cite{CC14}, that relies on many known technical tools, such as the cut-matching game of Khandekar, Rao and Vazirani~\cite{KRV}, graph-reduction step preserving element-connectivity~\cite{element-connectivity,ChekuriK09}, edge-splitting~\cite{edge-connectivity}, and LP-based approximation algorithms for bounded-degree spanning tree~\cite{Singh-Lau} to name a few, the basic construction is entirely from first principles. 

The second part of the paper combines elements of the (somewhat improved and simplified) proof of~\cite{CC14} together with the basic construction, in order to achieve the final bound of $f(g)=O(g^{19}\poly\log g)$, in what we call the extended construction.

The contribution of this paper is therefore two-fold: we provide a conceptually simple framework for proving the Excluded Grid Theorem, and show that it can be used to obtain a polynomial bound on $f(g)$; and we improve the bound of~\cite{CC14} on $f(g)$ from $O(g^{98}\poly\log g)$ to $O(g^{19}\poly\log g)$. We note that unfortunately the goals of presenting a simple proof and optimizing the bound on $f(g)$ are somewhat conflicting. We have tried to provide an exposition balancing these two objectives in~\cite{GMT-STOC}. In the current paper we focus is on optimizing the bound on $f(g)$; we plan to write a separate expository article providing a simple proof of a polynomial bound on $f(g)$, with the focus on simplicity of exposition, rather than on achieving specific bounds.

There are two caveats in our proof. The first one, that we have already mentioned, is that it requires that  the input graph $G$ has a bounded degree. This can be achieved in several ways, using prior work. Reed and Wood~\cite{ReedW-grid} showed that any graph of treewidth $k$ contains a subgraph of maximum vertex degree $4$, and treewidth $\Omega(k^{1/4}/\log^{1/8}k)$. Kreutzer and Tazari~\cite{KreutzerT10} gave a constructive proof of a similar result, with slightly weaker bounds. The algorithm of Chekuri and Ene~\cite{ChekuriE13} can be used to construct a subgraph $G'$ of the input treewidth-$k$ graph $G$, such that the treewidth of $G'$ is $\Omega(k/\poly\log k)$, and maximum vertex degree bounded by some constant. Finally, Chekuri and Chuzhoy~\cite{tw-sparsifiers} have recently shown that any graph $G$ of treewidth $k$ contains a subgraph of maximum vertex degree $3$, and treewidth $\Omega(k/\poly\log k)$. Unfortunately, this latter result builds on parts of the previous proof of the Excluded Grid Theorem of~\cite{CC14}. Therefore, if one is interested in a simple self-contained proof of Theorem~\ref{thm: GMT}, one should use the result of~\cite{ReedW-grid} as a starting point. In this paper we chose instead to use the result of~\cite{tw-sparsifiers} as our starting point, for two reasons. First, it gives the best bounds on both the degree and the treewidth of the resulting graph, leading to better final bounds on $f(g)$. Second, working with graphs whose maximum vertex degree is $3$ is easier than with general constant-degree graphs, since routing on edge-disjoint and node-disjoint paths in such graphs is very similar. This saves on a number of technical steps and makes the proof easier to follow.  The second caveat is that, unlike the proof of~\cite{CC14}, that also provides an algorithm, whose running time is polynomial in $|V(G)|$ and $g$, to construct a model of the $(g\times g)$-grid minor, our proof is non-constructive. We believe that it can be turned into an algorithm whose running time is $2^{O(g)}\cdot \poly(|V(G)|)$, using methods similar to those used in~\cite{CC14}, but we have decided to keep the proof non-constructive for the sake of simplicity. It is however unlikely that our methods can give an algorithm whose running time is polynomial in both $g$ and $|V(G)|$, since we need to solve the sparsest cut problem (with $g$ terminals) exactly. We note that most applications of the Excluded Grid Theorem (e.g. in Fixed-Parameter Tractability and in \EP--type results) only use the non-constructive version of the theorem. In other results, where a constructive version is used, such as the algorithm of Robertson and Seymour for the Node-Disjoint Paths problem~\cite{flat-wall-RS}, a running time of $2^{O(g)}\cdot \poly(|V(G)|)$ for finding the grid minor is acceptable, since the rest of the algorithm inherently incurs this (and in fact much higher) running time. However, some technical ingredients of the proof (such as the construction of the \PoS) are useful in several applications, such as, for example, approximation algorithms for routing problems. Such application require a running time that is polynomial in both $|V(G)|$ and $g$, and from this viewpoint some results of~\cite{CC14} are not subsumed by this paper.

As in much prior work in this area, we use the notion of well-linkedness. We say that a set $T$ of vertices is $\alpha$-well-linked in a graph $H$, for $0<\alpha<1$, iff for any pair $T',T''\subseteq T$ of disjoint equal-sized subsets of vertices of $T$, there is a set $\qset(T',T'')$ of paths in $H$, connecting every vertex of $T'$ to a distinct vertex of $T''$, such that every edge of $H$ participates in at most $1/\alpha$ such paths. We will informally say that a set $T$ of vertices is well-linked, if $T$ is $\alpha$-well-linked for some constant $\alpha$. 
A central combinatorial object used in the proof of the Excluded Grid Theorem of~\cite{CC14}, and that we also use here, is the \PoS (see Figure~\ref{fig:pos}). We note that Leaf and Seymour~\cite{LeafS12} used a very similar, but somewhat weaker object, called a \emph{grill}. A \PoS of width $w$ and length $\ell$ consists of a sequence $\sset=(S_1,\ldots,S_{\ell})$ of $\ell$ clusters, where for each cluster $S_i\subseteq V(G)$, we are given two disjoint subsets $A_i,B_i\subseteq S_i$ of $w$ vertices each. We require that the vertices of $A_i\cup B_i$ are well-linked in $G[S_i]$. Additionally, for each $1\leq i<\ell$, the \PoS contains a set $\pset_i$ of $w$ paths, connecting every vertex of $B_i$ to a distinct vertex of $A_{i+1}$. The paths in $\bigcup_i\pset_i$ must be all mutually disjoint, and they cannot contain the vertices of $\bigcup_{i'=1}^{\ell}S_{i'}$ as inner vertices. Chekuri and Chuzhoy~\cite{CC14}, strengthening a similar result of Leaf and Seymour~\cite{LeafS12}, showed that if a graph $G$ contains a \PoS of width $\Theta(g^2)$ and length $\Theta(g^2)$, then $G$ contains the $(g\times g)$-grid as a minor. (We provide their proof in Appendix for completeness.) Therefore, in order to prove the Excluded Grid Theorem, it is now enough to show that every graph of treewidth at least $f(g)$ contains a \PoS of width $\Omega(g^2)$ and length $\Omega(g^2)$.

\begin{figure}[htb]
  \centering
  \scalebox{0.8}{\includegraphics[width=6in]{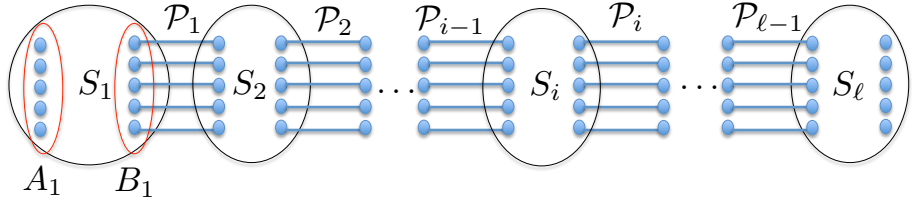}}
  \caption{Path-of-Sets System}
  \label{fig:pos}
\end{figure}

Chekuri and Chuzhoy~\cite{CC14} showed this to be true for $f(g)=O(g^{98}\poly\log g)$, and we prove it here for $f(g)=O(g^{19}\poly\log g)$. We now briefly summarize the proof of~\cite{CC14}, before we describe our proof. It is well-known (see e.g.~\cite{Reed-chapter}), that if a graph $G$ has treewidth $k$, then there is a subset $T\subseteq V(G)$ of $\Omega(k)$ vertices, such that $T$ is well-linked in $G$. Throughout the proof, we will refer to the vertices of $T$ as  \emph{terminals}. Given any cluster $C\subseteq V(G)$, we will denote by $\out(C)$ the set of edges of $G$ with exactly one endpoint in $C$, and by $\Gamma(C)$ the \emph{boundary} of $C$  --- the set of vertices of $C$ incident on the edges of $\out(C)$.

The proof of~\cite{CC14} consists of four steps. In the first step, they show that any graph $G$ of treewidth $k$ contains a large collection $\sset$ of disjoint \emph{good routers}. Informally, a good router is a cluster $C\subseteq V(G)$, such that (i) the boundary of $C$ is well-linked in $G[C]$; and (ii) there is a set $\pset(C)$ of $k^{\epsilon}$ disjoint paths, for some constant $0<\eps<1$, connecting the terminals to the vertices of $C$. The construction of the routers involves several old and new techniques, such as building a contracted graph that ``hides''  irrelevant information about $G$ by contracting some clusters; random partitions of graphs; and the so-called well-linked decompositions. In the second step, the clusters of $\sset$ are ``organized'' into a tree: that is, we construct an object, called a \ToS, that is similar to the \PoS, except that the clusters are connected via a tree-like structure instead of a path-like structure. Specifically, a \ToS of size $\ell$ and width $w$ consists of a tree $\tau$ with $\ell$ vertices; a cluster $S(v)\subseteq V(G)$ for every vertex $v\in V(\tau)$; and a set $\pset(e)$ of $w$ paths for every edge $e=(u,v)\in E(\tau)$, where every path in $\pset(e)$ connects a vertex of $S(v)$ to a vertex of $S(u)$, and is internally disjoint from $\bigcup_{x\in V(\tau)}S(x)$. We also require that all paths in $\pset=\bigcup_{e\in E(\tau)}\pset(e)$ are node-disjoint, and for every cluster $S(v)$, the endpoints of all paths in $\pset$ that lie in $S(v)$ are well-linked in $G[S(v)]$.
 If the resulting \ToS has a long root-to-leaf path, then we can use this path as the final \PoS. Otherwise, let $\sset'$ be the subset of clusters $S(v)$ that correspond to the leaves of the tree. In the third step, we repeat Step 2 on the clusters of $\sset'$ instead of the original set of clusters, and a carefully selected subgraph $G'$ of $G$, to ensure that the tree corresponding to the resulting \ToS has maximum vertex degree at most $3$. This step relies on an LP-based approximation algorithm for bounded-degree spanning trees of~\cite{Singh-Lau}. Finally, in the fourth step, we turn the resulting \ToS into a \PoS, by carefully simulating a depth-first search tour of the corresponding sub-cubic tree.

In contrast, the algorithm for our basic construction consists of only one subroutine, that, intuitively, shows that, given any \PoS of length $1$ and width $w$, we can obtain a \PoS of length $2$ and width $w/c'$, for some constant $c'$. More specifically, suppose we are given some subset $S$ of vertices of $G$, and two disjoint subsets $T_1,T_2\subseteq S$ of vertices, such that $|T_1|=w/c$ (where $c$ is some constant), $|T_2|=w$, and $(T_1\cup T_2)$ is well-linked in $G[S]$. We show that there are two disjoint clusters $X,Y\subseteq S$, a subset $E'\subseteq E(X,Y)$ of $w/c^2$ edges whose endpoints are all distinct, and two subsets $\ttset_1\subseteq X\cap T_1$ of at least $w/c^2$ vertices and $\ttset_2\subseteq Y\cap T_2$ of at least $w/c$ vertices, such that, if we denote by $\U_X$ and $\U_Y$ the endpoints of the edges of $E'$ that belong to $X$ and $Y$, respectively, then $\U_X\cup \ttt1$ is well-linked in $G[X]$, and $\U_Y\cup \ttt2$ is well-linked in $G[Y]$ (see Figure~\ref{fig: strong chain}). We call the corresponding tuple $(X,Y,\ttt1,\ttt2,E')$ a \emph{$2$-cluster chain}, and we call this procedure a \emph{splitting of a cluster}. Using this procedure, it is now easy to complete the proof of the Excluded Grid Theorem. Let $k$ be the treewidth of the input bounded-degree graph $G$. Our algorithm performs $2\log_2 g$ phases, where each phase $j$ starts with a \PoS of length $2^{j-1}$ and width $\Omega(k/c^{2(j-1)})$, and produces a \PoS of length $2^j$ and width $\Omega(k/c^{2j})$. For our initial \PoS of length $1$ and width $\Omega(k)$, we use $S_1=V(G)$, and we let $(A_1,B_1)$ be any partition of the terminals into equal-sized subsets. Clearly, after $2\log_2g$ phases, we obtain a \PoS of length $g^2$ and width $\Omega(k/g^{4\log c})$. Each phase is executed by simply splitting each cluster of the current \PoS into two, using the cluster-splitting procedure described above. We omit the technical details, that can be found in Section~\ref{sec: building PoS}.

\begin{figure}[h]
\centering
\subfigure[Strong $2$-cluster chain.]{\scalebox{0.3}{\includegraphics{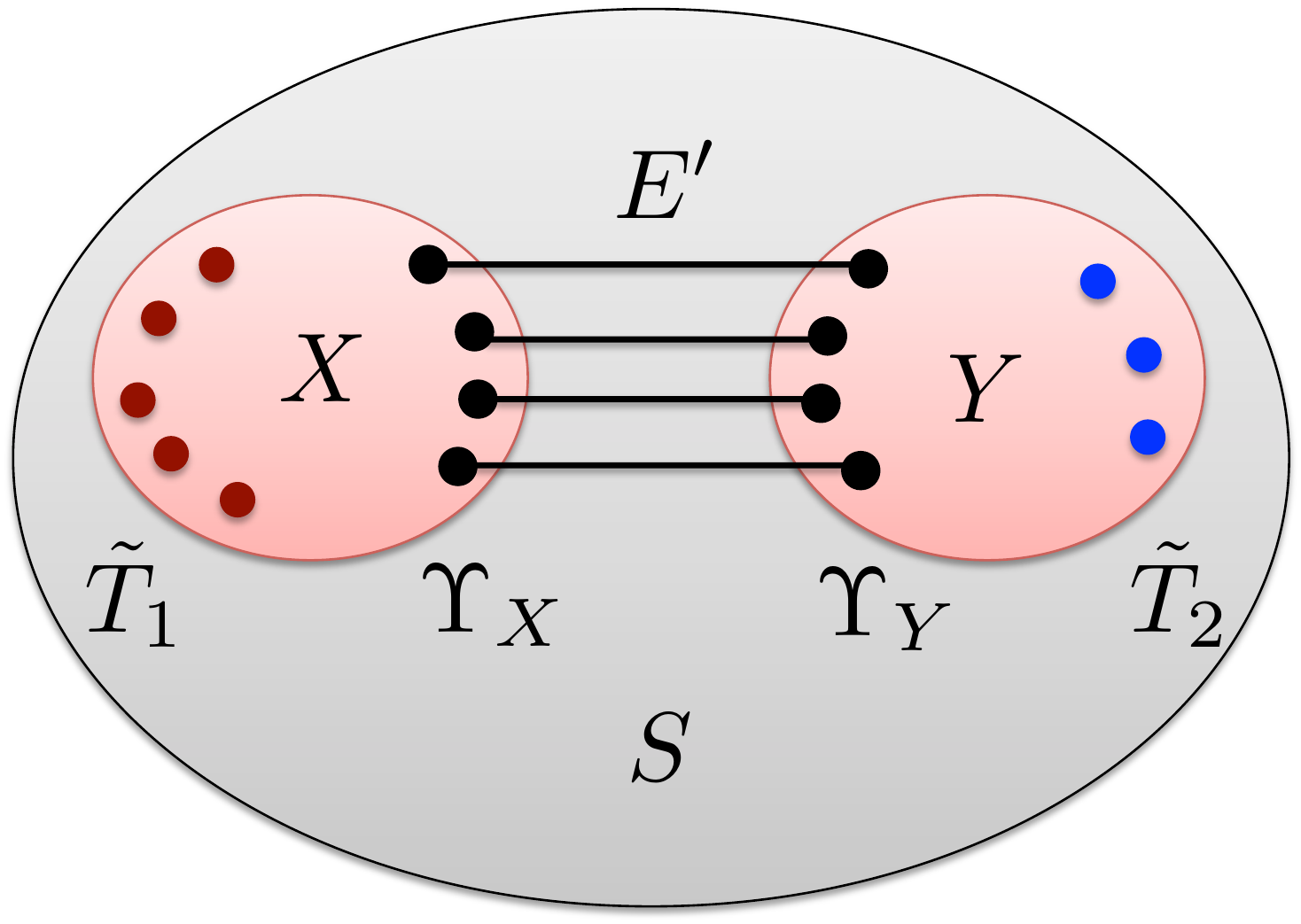}}\label{fig: strong chain}}
\hspace{1cm}
\subfigure[Weak $2$-cluster chain.]{
\scalebox{0.3}{\includegraphics{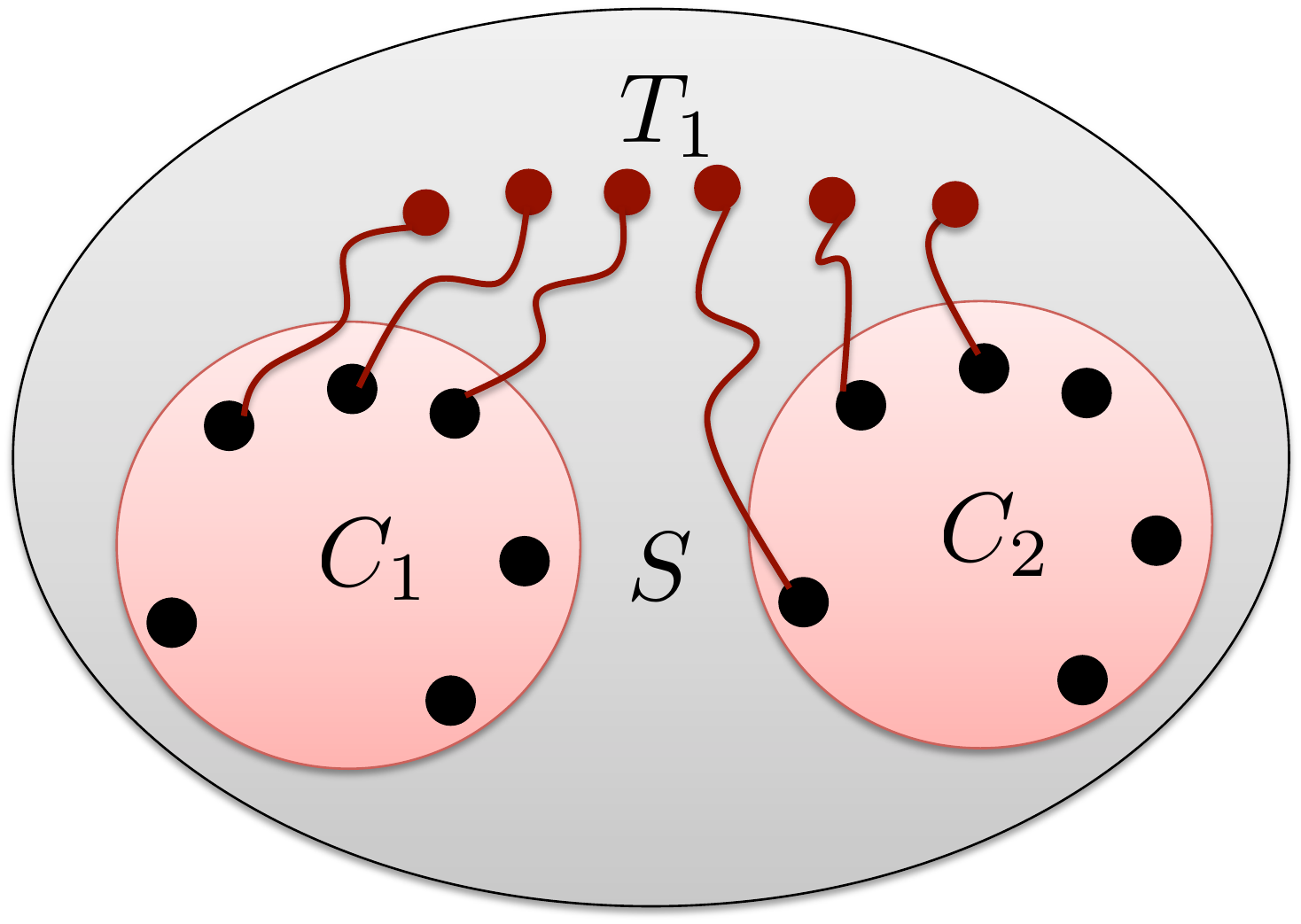}}\label{weak-two-cluster-chain}}
\caption{Splitting a cluster.\label{fig: splitting}}
\end{figure}



We now briefly sketch our algorithm for splitting a cluster $S$. We note that this is an informal and imprecise overview, that is only intended to provide intuition. Let $k'=|T_1|$. 
We start by defining a slightly weaker object, called a \emph{weak $2$-cluster chain}. This object consists of two disjoint clusters $C_1,C_2\subseteq S\setminus(T_1\cup T_2)$, such that for $i\in\set{1,2}$, the set $\Gamma(C_i)$ of the boundary vertices of $C_i$ is well-linked in $G[C_i]$, and there is a set $\pset_i$ of $\Omega(k')$ node-disjoint paths, connecting the vertices of $C_i$ to the terminals of $T_1$, such that the paths in $\pset_1\cup \pset_2$ are disjoint from each other, and do not contain the vertices of $C_1\cup C_2$ as inner vertices (see Figure~\ref{weak-two-cluster-chain}). We show that the existence of the weak $2$-cluster chain is sufficient to guarantee the existence of the (strong) $2$-cluster chain in $G[S]$: the idea is to use the well-linkedness of the set $T_1\cup T_2$ of vertices, to carefully connect the clusters $C_1,C_2$ to each other, and to connect one of them to the set $T_2$ of vertices, by large enough collections of disjoint paths. The main technical difficulty of the proof is showing that any cluster $S$, with two disjoint subsets $T_1,T_2\subseteq S$ of vertices, where $(T_1\cup T_2)$ is well-linked in $G[S]$, contains a weak $2$-cluster chain.

In order to improve the bounds on $f(g)$, we combine a construction of the \ToS of~\cite{CC14} (that we somewhat improve and simplify) with our construction of the \PoS, and use a resulting graph and its embedding into $G$ in order to construct a grid minor.

{\bf Organization.}
We start with preliminaries in Section~\ref{sec: prelims}, and provide our basic construction in sections~\ref{sec: building PoS} and \ref{sec: splitting a cluster}. Section~\ref{sec: building PoS} introduces a general framework for constructing the \PoS, and Section~\ref{sec: splitting a cluster} focuses on splitting a cluster. This part of the paper also provides all proofs omitted in the extended abstract~\cite{GMT-STOC}. Our advanced construction appears in Sections~\ref{sec: ToS}--\ref{sec: parallel splitting}, where Section~\ref{sec: ToS} provides a somewhat more streamlined construction of the \ToS of~\cite{CC14}, Section~\ref{sec: extended construction} shows how to combine both approaches to obtain stronger bounds on $f(g)$ and Section~\ref{sec: parallel splitting} completes the proof.

\label{----------------------------------------------sec:prelims----------------------------------------}
\section{Preliminaries}\label{sec: prelims}
All logarithms in this paper are to the base of $2$.
All graphs in this paper are finite, and they do not have loops. We say that a graph is \emph{simple} to indicate that it does not have parallel edges; otherwise, parallel edges are allowed.
Given a graph $G=(V,E)$ and a subset $A\subseteq V$ of vertices, we
denote by $E_G(A)$ the set of edges with both endpoints in $A$.  For two
disjoint sets $A,B\subseteq V$, the set of edges with one endpoint in $A$ and the other in $B$ is denoted by $E_G(A,B)$. The degree of a vertex $v\in V$ is denoted by $d_G(v)$, and the set of all edges incident on $v$ is denoted by $\delta_G(v)$. 
We sometimes refer to sets of vertices as \emph{clusters}. 
Given a cluster $C\subseteq V$, we denote by $\out_G(C)$ the set of edges with exactly one endpoint in $C$, and by $\Gamma_G(C)$ the set of vertices of $C$ incident on the edges of $\out_G(C)$. We sometimes call $\Gamma_G(C)$ \emph{the boundary of $C$}. We may omit the subscript $G$ if it is clear from context.


We say that a path $P$ is \emph{internally disjoint} from a set $U$ of vertices, if no vertex of $U$ serves as an inner vertex of $P$.
We say that two paths $P,P'$ are \emph{internally disjoint}, iff for every vertex $v\in V(P)\cap V(P')$, $v$ is an endpoint of both paths.
Given a set $\pset$ of paths in $G$, we denote by $V(\pset)$ the set of all vertices participating in paths in $\pset$.
Let $\pset$ be any collection of paths in graph $G$. We say that the paths in $\pset$ cause edge-congestion $\eta$, if every edge $e\in E$ is contained in at most $\eta$ paths in $\pset$. For two subsets $S,T\subseteq V(G)$ of vertices and a set $\pset$ of paths, we say that $\pset$ connects $S$ to $T$ if every path in $\pset$ has one endpoint in $S$ and another in $T$ (or it consists of a single vertex lying in $S\cap T$).

Assume that we are given two subsets $S,T\sse V$ of vertices. We denote by $\pset:S\connect T$ a collection $\pset=\set{P_v\mid v\in S}$ of paths, where  path $P_v$ has $v$ as its first vertex and some vertex of $T$ as its last vertex. Notice that each path of $\pset$ originates from a distinct vertex of $S$, and $|\pset|=|S|$. If additionally the set $\pset$ of paths causes edge-congestion at most $\eta$, then we denote this by $\pset:S\connect_{\eta}T$. Assume now that $|S|=|T|=|\pset|$, and each path in $\pset$ connects a distinct vertex of $S$ to a distinct vertex of $T$.  Then we denote $\pset:S\sconnect T$, and if the paths in $\pset$ cause edge-congestion at most $\eta$, then we denote $\pset:S\sconnect_{\eta}T$. Notice that the paths of $\pset$ are allowed to contain the vertices of $S\cup T$ as inner vertices. 
We  repeatedly use the following simple observation, whose proof appears in Appendix.

\begin{observation}\label{obs: EDP to NDP in degree-3}
Let $G$ be a graph with maximum vertex degree at most $3$, and $T_1,T_2\subseteq V(G)$ a pair of {\bf disjoint} equal-sized subsets of its vertices, such that the degree of every vertex in $T_1\cup T_2$ is at most $2$. Let $\pset: T_1\sconnect_1 T_2$ be any set of edge-disjoint paths connecting every vertex of $T_1$ to a distinct vertex of $T_2$. Then the paths in $\pset$ are node-disjoint.
\end{observation}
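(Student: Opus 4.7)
The plan is to argue by contradiction: assume some vertex $v$ belongs to two distinct paths $P,P' \in \pset$, and count the edges of $\delta_G(v)$ that these paths consume, using the degree bounds to derive a contradiction.

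First I would split into cases according to whether $v$ is an internal vertex or an endpoint of each path. Observe that along any path $Q \in \pset$, a vertex $u \in V(Q)$ uses exactly two edges of $\delta_G(u)$ if $u$ is internal to $Q$, and exactly one edge of $\delta_G(u)$ if $u$ is an endpoint of $Q$. Since the paths of $\pset$ are edge-disjoint, the edges used by $P$ at $v$ and the edges used by $P'$ at $v$ are disjoint subsets of $\delta_G(v)$.

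In the first case, $v$ is an internal vertex of both $P$ and $P'$. Then $P$ and $P'$ together use at least $2+2 = 4$ distinct edges incident to $v$, contradicting $d_G(v) \le 3$. In the second case, $v$ is internal to one of them (say $P$) and an endpoint of the other (say $P'$). Then $v \in T_1 \cup T_2$, so by assumption $d_G(v) \le 2$; but the two paths together use at least $2+1 = 3$ edges of $\delta_G(v)$, again a contradiction. In the remaining case, $v$ is an endpoint of both $P$ and $P'$. Every path of $\pset$ has one endpoint in $T_1$ and the other in $T_2$, and $T_1, T_2$ are disjoint; hence either both $P$ and $P'$ have $v$ as their $T_1$-endpoint, or both have it as their $T_2$-endpoint. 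Either way, two distinct paths of $\pset$ would share an endpoint in the same terminal set, contradicting the definition of $\pset: T_1 \sconnect_1 T_2$, which specifies that each path originates at a \emph{distinct} vertex of $T_1$ and ends at a \emph{distinct} vertex of $T_2$.

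Since all three cases yield contradictions, no such shared vertex $v$ can exist, so the paths of $\pset$ are node-disjoint. No subtle obstacle is anticipated: the main point is simply the combinatorial accounting that two paths passing through a vertex cost four incident edges, while an endpoint contributes one edge and sits in a degree-$2$ terminal set — everything else is routine.
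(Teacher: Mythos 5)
Your proof is correct and takes essentially the same approach as the paper: a case analysis on whether the shared vertex is an endpoint or an internal vertex of the two paths, combined with degree-counting against the bounds $d_G(v)\le 3$ in general and $d_G(v)\le 2$ for terminals. The only cosmetic difference is that you spell out the endpoint-endpoint case separately, whereas the paper folds it into the remark that all path endpoints are distinct.
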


\subsection{Flows and Cuts}
In this section we define standard single-commodity flows and discuss their relationships with the corresponding notions of cuts. Most definitions and results from this section can be found in standard textbooks; we refer the reader to~\cite{Schrijver} for more details.

Suppose we are given a graph $G=(V,E)$ with capacities $c(e)> 0$ on its edges $e\in E$, and two disjoint sets $S,T\subseteq V$ of vertices of $G$. Let $\pset$ be the set of all paths that start at $S$ and terminate at $T$. An $S$--$T$ flow $F$ is an assignment of non-negative values $F(P)$ to all paths $P\in \pset$. The \emph{value} of the flow is $\sum_{P\in \pset}F(P)$. Given a flow $F$, for each edge $e\in E$, we define a flow through $e$ to be: $F'(e)=\sum_{\stackrel{P\in \pset:}{e\in P}}F(P)$. 
The \emph{edge-congestion} of the flow is $\max_{e\in E}\set{F'(e)/c(e)}$.
We say that the flow $F$ is \emph{valid}, or that it causes no edge-congestion, if its edge-congestion is at most $1$.  
It is well known that if all edge capacities are integral, then whenever a valid $S$--$T$ flow of value $f$ exists in $G$, there is also a valid $S$--$T$ flow $\tilde F$ of the same value, where all values $\tilde F(P)$ are integral, and the number of non-zero values $\tilde f(P)$ is at most $|E|$. Throughout the paper, whenever the edge capacities of a given graph $G$ are not specified, we assume that they are all unit.

A \emph{cut} in a graph $G$ is a bi-partition $(A,B)$ of its vertices, with $A,B\neq \emptyset$. We sometimes use $\overline{A}$ to denote $V\setminus A$. The \emph{value} of the cut is the total capacity of all edges in $E(A,B)$ (if the edge capacities of $G$ are not specified, then the value of the cut is $|E(A,B)|$). We say that a cut $(A,B)$ \emph{separates} $S$ from $T$ if $S\subseteq A$ and $T\subseteq B$. The famous Maximum Flow -- Minimum Cut theorem states that the value of the maximum valid $S$--$T$ flow is equal to the value of the minimum cut separating $S$ from $T$ in every graph $G$. Notice that if all edges of $G$ have a unit capacity, and the value of the maximum flow from $S$ to $T$ is $f$, then the maximum number of edge-disjoint paths connecting the vertices of $S$ to the vertices of $T$ is also $f$, and if $E'$ is a minimum-cardinality set of edges, such that $G\setminus E'$ contains no path connecting a vertex of $S$ to a vertex of $T$, then $|E'|=f$.
When $S=\set{s}$ and $T=\set{t}$, then we sometimes refer to the $S$-$T$ flow and $S$-$T$ cut as $s$-$t$ flow and $s$-$t$ cut respectively.

Similarly to our notation for paths, a flow $F$ from the vertices of $S$ to the vertices of $T$, where every vertex of $S$ sends one flow unit, every vertex of $T$ receives one flow unit, and the edge-congestion is at most $\eta$, is denoted by $F:S\sconnect_{\eta}T$.

A variant of the $S$--$T$ flow that we sometimes use is when the capacities are given on the graph vertices and not edges. Such a flow $F$ is defined exactly as before, except that now, for every vertex $v\in V$, we let $F'(v)=\sum_{\stackrel{P\in \pset:}{v\in P}}F(P)$, and we define the congestion of the flow to be $\max_{v\in V}\set{F'(v)/c(v)}$. If the congestion of the flow is at most $1$, then we say that it is a valid flow, or that the flow causes no vertex-congestion. When all vertex capacities are integral, there is a maximum flow $F$, such that all values $F(P)$ for all $P\in \pset$ are integral. In particular, if all vertex-capacities are $1$, and there is a valid $S$--$T$ flow of value $f$, then there are $f$ node-disjoint paths connecting vertices of $S$ to vertices of $T$.

All the definitions and results about single-commodity flows mentioned above carry over to directed graphs as well, except that cuts are defined slightly differently. As before, a cut in $G$ is a bi-partition $(A,B)$ of the vertices of $G$. The value of the cut is the total capacity of edges connecting vertices of $A$ to vertices of $B$. The Maximum Flow -- Minimum Cut theorem remains valid in directed graphs, with this definition of cuts. For every directed flow network, there exists a maximum $S$--$T$ flow, in which for every par $(e,e')$ of anti-parallel edges, at most one of these edges carries non-zero flow; if all edge capacities are integral, then there is a maximum flow that is integral and has this property. This follows from the equivalent edge-based definition of flows.
Flows in directed graphs with capacities on vertices are defined similarly. We repeatedly use the following simple observation.

\begin{observation}\label{obs: low cong flow to NDP}
Suppose we are given a graph $G$ with maximum vertex degree $d$ and unit edge capacities, two disjoint subsets $S,T\subseteq V(G)$ of vertices, and an $S$--$T$ flow $F$ of value $\kappa$, that causes edge-congestion at most $\eta\geq 1$. Then there is a collection $\qset$ of $\ceil{\frac{\kappa}{d\eta}}$ node-disjoint paths in $G$, where every path has one endpoint in $S$ and another in $T$.
\end{observation}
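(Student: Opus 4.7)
The plan is to convert the given edge-capacitated $S$--$T$ flow into an equivalent vertex-capacitated flow, and then to appeal to the integrality of the vertex-capacitated maximum flow problem. First I would pass from edge-congestion to vertex-congestion via the degree bound. Let $\pset$ denote the set of all $S$--$T$ paths, so that $F$ assigns a non-negative weight $F(P)$ to each $P\in \pset$, with $\sum_{P\in \pset}F(P)=\kappa$ and $F'(e)=\sum_{P\ni e}F(P)\le \eta$ for every $e\in E(G)$. Define $F'(v)=\sum_{P\in \pset,\,v\in P}F(P)$. Every path $P$ containing $v$ uses at least one edge of $\delta_G(v)$ (exactly one if $v$ is an endpoint of $P$, and two if $v$ is internal), so
\[
F'(v)\;\le\;\sum_{e\in \delta_G(v)}F'(e)\;\le\; d\eta.
\]
Hence $F$ is a valid $S$--$T$ flow with respect to the vertex capacities $c(v)=d\eta$ for every $v\in V(G)$.

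Next I would scale down by $d\eta$: the flow $F/(d\eta)$ is a valid fractional $S$--$T$ flow of value $\kappa/(d\eta)$ in the network $G$ equipped with \emph{unit} vertex capacities. In particular, the value of the maximum fractional $S$--$T$ flow in this unit-vertex-capacitated network is at least $\kappa/(d\eta)$.

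Finally, I would invoke integrality. Since the vertex capacities are integral, the vertex-capacitated maximum flow problem has an integer optimum of the same value as the fractional optimum (equivalently, apply vertex-splitting to reduce to the edge-capacitated max-flow problem, which is integral). Therefore there exists an integral $S$--$T$ flow of value at least $\kappa/(d\eta)$ with unit vertex capacities; being an integer, its value is at least $\lceil \kappa/(d\eta)\rceil$. Decomposing this integral unit-capacity flow into paths yields a collection $\qset$ of $\lceil \kappa/(d\eta)\rceil$ pairwise node-disjoint paths from $S$ to $T$ in $G$, as required.

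There is no real obstacle here; the only point that requires care is the conversion from edge- to vertex-congestion, which is handled by the elementary inequality $F'(v)\le \sum_{e\in\delta_G(v)}F'(e)$ combined with the degree bound, and the ceiling in the stated bound is obtained for free from integrality of the rounded flow value.
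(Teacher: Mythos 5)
Your proposal is correct and follows essentially the same argument as the paper: bound the vertex-congestion by $d\eta$ using the degree bound, scale the flow down by $d\eta$ to obtain a unit-vertex-capacity flow of value $\kappa/(d\eta)$, and invoke integrality of vertex-capacitated flow to extract $\lceil\kappa/(d\eta)\rceil$ node-disjoint paths.
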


\begin{proof}
Let $\pset$ be the set of all paths connecting $S$ to $T$. Then for every vertex $v\in V(G)$, the total flow through $v$, $\sum_{\stackrel{P\in \pset:}{v\in V(P)}}F(P)\leq d\eta$. By sending $F(P)/(d\eta)$ flow units along every path $P\in \pset$, we obtain a flow that causes vertex-congestion at most $1$. The value of the flow is at least $\frac{\kappa}{d\eta}$. From the integrality of flow, there is a collection $\qset$ of $\ceil{\frac{\kappa}{d\eta}}$ node-disjoint paths in $G$, connecting vertices of $S$ to vertices of $T$.
\end{proof}

\subsection{Treewidth, Grids, Minors and Models}

The treewidth of a graph $G=(V,E)$ is defined via tree-decompositions.  A tree-decomposition of a graph $G$ consists of a tree
$\tau$ and a collection $\{X_v \subseteq V\}_{v \in V(\tau)}$ of vertex subsets called \emph{bags}, that have the following properties: (i) for each edge $(a,b) \in E$, there is some node $v \in V(\tau)$ with $a,b \in X_v$; and (ii) for each vertex $a \in V$, the set of all nodes of $\tau$ whose bags contain $a$ induces a non-empty (connected) subtree of $\tau$. The {\em width} of a given tree-decomposition is
$\max_{v \in V(\tau)}\set{ |X_v|} - 1$, and the \emph{treewidth} of a graph $G$, denoted by $\tw(G)$, is the width of a minimum-width tree-decomposition of $G$.

We say that a simple graph $H$ is a minor of a graph $G$, if $H$ can be
obtained from $G$ by a sequence of edge deletion, vertex deletion, and edge contraction
operations. Equivalently, a simple graph $H$ is a minor of $G$ if there is a map $\phi$, assigning to each vertex $v\in V(H)$ a subset $\phi(v)$ of vertices of $G$, and to each edge $e=(u,v)\in E(H)$ a path $\phi(e)$ connecting a vertex of $\phi(u)$ to a vertex of $\phi(v)$, such that:
\begin{itemize}
\item For each vertex $v\in V(H)$, the subgraph of $G$ induced by $\phi(v)$ is connected;
\item  If $u,v\in V(H)$ and $u\neq v$, then $\phi(u)\cap \phi(v)=\emptyset$; and
\item The paths in set $\set{\phi(e)\mid e\in E(H)}$ are node-disjoint, and they are internally disjoint from $\bigcup_{v\in V(H)}\phi(v)$.
\end{itemize}

A map $\phi$ satisfying these conditions is called \emph{a model of $H$ in $G$}. (We note that this definition is slightly different from the standard one, which requires that for each $e\in E(H)$ path $\phi(e)$ consists of a single edge; but it is immediate to verify that both definitions are equivalent.) For convenience, we sometimes refer to the map $\phi$ as an \emph{embedding} of $H$ into $G$, and specifically to $\phi(v)$ and $\phi(e)$ as the embeddings of the vertex $v\in V(H)$ and the edge $e\in E(H)$, respectively.

The $(g\times g)$-grid is a graph, whose vertex set is: $\set{v(i,j)\mid 1\leq i,j\leq g}$.
The edge set consists of two subsets: a set of \emph{horizontal edges} $E_1=\set{(v(i,j),v(i,j+1))\mid 1\leq i\leq g; 1\leq j<g}$; and a set of \emph{vertical edges} $E_2=\set{(v(i,j),v(i+1,j))\mid 1\leq i<g; 1\leq j\leq g}$. The subgraph induced by $E_1$ consists of $g$ disjoint paths, that we refer to as \emph{the rows of the grid}; the $i$th row is the row incident with $v(i,1)$. Similarly, the subgraph induced by $E_2$ consists of $g$ disjoint paths, that we refer to as \emph{the columns of the grid}; the $j$th column is the column incident with $v(1,j)$. 
We say that $G$ contains a $(g\times g)$-grid minor if some minor $H$ of $G$ is isomorphic to the $(g\times g)$-grid.

\label{------------------------------------------subsec: well-linkedness-------------------------------}
\subsection{Linkedness, Well-Linkedness, and Bandwidth Property} 
The notion of well-linkedness has played a central role in algorithms for routing problems (see e.g.~\cite{Raecke,ANF,CKS,RaoZhou,Andrews,Chuzhoy11,ChuzhoyL12,ChekuriE13}), and is also often used in graph theory. Several different variations of this notion were used in the past.  The definitions we use here are equivalent to those used in~\cite{Chuzhoy11,ChuzhoyL12,ChekuriE13,CC14}, but for convenience we define them slightly differently. 

\begin{definition}
Given a graph $G$, a subset $T\subseteq V(G)$ of vertices, and a parameter $0<\alpha\leq 1$, we say that $\tset$ is $\alpha$-well-linked in $G$, iff for every pair of disjoint equal-sized subsets $\tset',\tset''\subseteq \tset$, there is a flow $F:\tset'\sconnect_{1/\alpha}\tset''$ in $G$.
\end{definition}

 The next observation follows immediately from the definition of well-linkedness, and from the integrality of flow.
 
 \begin{observation}\label{obs: well-linkedness properties}
 Let $G$ be a graph, $T\subsetq V(G)$ a subset of its vertices, such that $T$ is $\alpha$-well-linked in $G$, for some $0<\alpha\leq 1$. Then:
 
 \begin{itemize}
 \item  for every pair of disjoint equal-sized subsets $\tset',\tset''\subseteq \tset$, there is a set $\pset:\tset'\sconnect_{\ceil{1/\alpha}}\tset''$ of paths in $G$;
 
 \item every subset $\tset'\subseteq \tset$ is $\alpha$-well-linked in $G$;
 
 \item set $\tset$ is $\alpha'$-well-linked in $G$ for all $0<\alpha'<\alpha$; and
 
 \item for every graph $G'$ with $G\subseteq G'$, set $\tset$ is $\alpha$-well-linked in $G'$.
 \end{itemize}
 \end{observation}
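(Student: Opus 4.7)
All four items follow directly from the definition of $\alpha$-well-linkedness, and only the first one requires any nontrivial reasoning beyond unwrapping definitions. The plan is to handle items 2--4 by inspection and to dedicate the bulk of the argument to item 1, which is where the integrality of flow is actually used.

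For item 2, suppose $T' \subseteq T$ and let $S', S'' \subseteq T'$ be any two disjoint equal-sized subsets. Since $S', S'' \subseteq T$, the hypothesis that $T$ is $\alpha$-well-linked in $G$ directly supplies a flow $F : S' \sconnect_{1/\alpha} S''$ in $G$, and this witnesses that $T'$ is $\alpha$-well-linked. For item 3, if $0 < \alpha' < \alpha$ then $1/\alpha < 1/\alpha'$, so every flow witnessing $\alpha$-well-linkedness automatically witnesses $\alpha'$-well-linkedness. For item 4, if $G \subseteq G'$ then any flow in $G$ is also a flow in $G'$ (it simply puts zero on edges of $G' \setminus G$), and the edge-congestion is unchanged; so $T$ remains $\alpha$-well-linked in $G'$.

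The only real content is item 1: upgrading the fractional flow guaranteed by the definition to an integral set of paths. Given disjoint equal-sized $T', T'' \subseteq T$, the definition provides a flow $F : T' \sconnect_{1/\alpha} T''$. My plan is to apply the standard super-source / super-sink reduction: add a new vertex $s$ with unit-capacity edges to every $t \in T'$, a new vertex $t$ with unit-capacity edges from every $t'' \in T''$, and give every original edge capacity $\lceil 1/\alpha \rceil$. Since $1/\alpha \leq \lceil 1/\alpha \rceil$, the flow $F$ is valid in this auxiliary network, and it carries $|T'|$ units from $s$ to $t$. Integrality of maximum flow in a capacitated network with integral capacities then produces an integral $s$--$t$ flow of the same value, and this decomposes into $|T'|$ edge-disjoint paths (in the auxiliary graph). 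Because each edge incident to $s$ or $t$ has capacity $1$, each vertex of $T'$ is the start of exactly one such path and each vertex of $T''$ is the end of exactly one, so deleting the endpoints $s, t$ gives a collection $\mathcal{P} : T' \sconnect T''$ in $G$ of exactly $|T'|$ paths. Moreover, each original edge is used by at most $\lceil 1/\alpha \rceil$ paths, so $\mathcal{P} : T' \sconnect_{\lceil 1/\alpha \rceil} T''$, as required.

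The only place where something could go wrong is whether the auxiliary capacities and the flow decomposition really yield a one-to-one correspondence between $T'$ and $T''$; but the unit capacities on the edges out of $s$ and into $t$ force integral flow of exactly $1$ on each such edge, which gives precisely the one-to-one structure encoded by the $\sconnect$ notation. Thus no further case analysis is needed, and the observation is established.
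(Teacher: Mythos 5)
Your proof is correct and is precisely the standard argument the paper has in mind; the paper omits the proof entirely, remarking only that the observation ``follows immediately from the definition of well-linkedness, and from the integrality of flow,'' which is exactly what you spell out. One cosmetic remark: the phrase ``decomposes into $|T'|$ edge-disjoint paths (in the auxiliary graph)'' is loose, since in a network with capacities $\lceil 1/\alpha\rceil$ the decomposition paths are not literally edge-disjoint and the decomposition may also contain cycles (which you should note are discarded); but you immediately draw the correct conclusion that each original edge lies on at most $\lceil 1/\alpha\rceil$ paths, so the argument goes through as written.
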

 
 The next observation relates our definition to the one used in~\cite{ANF,CKS, Chuzhoy11,ChuzhoyL12,ChekuriE13,CC14}, and its proof  appears in Appendix.

\begin{observation}\label{obs: well-linkedness alt def}
Assume that we are given a graph $G$, a vertex set $\tset\subseteq V(G)$ and a parameter $0<\alpha\leq 1$, such that $\tset$ is {\bf not} $\alpha$-well-linked in $G$.
Then there is a partition $(A,B)$ of $V(G)$, with $|E(A,B)|< \alpha\cdot \min\set{|A\cap \tset|,|B\cap \tset|}$.
\end{observation}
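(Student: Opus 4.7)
The plan is to reduce the well-linkedness violation to a single-commodity min-cut statement via Max-Flow Min-Cut. Since $\tset$ is not $\alpha$-well-linked, by definition there exist disjoint equal-sized subsets $\tset',\tset''\subseteq \tset$ such that no flow $F:\tset'\sconnect_{1/\alpha}\tset''$ exists in $G$. I would form an auxiliary graph $G^+$ by adding a super-source $s$ with unit-capacity edges to every vertex of $\tset'$ and a super-sink $t$ with unit-capacity edges from every vertex of $\tset''$, while assigning capacity $1/\alpha$ to each edge of $G$. A standard check shows that the desired flow $\tset'\sconnect_{1/\alpha}\tset''$ in $G$ exists if and only if $G^+$ admits a valid $s$--$t$ flow of value $|\tset'|$; by hypothesis the max flow in $G^+$ is strictly less than $|\tset'|$.

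By Max-Flow Min-Cut, $G^+$ then has an $s$--$t$ cut $(A^+,B^+)$ of value less than $|\tset'|$, with $s\in A^+$ and $t\in B^+$. Setting $A:=A^+\cap V(G)$ and $B:=B^+\cap V(G)$, the value of $(A^+,B^+)$ decomposes into three contributions---from super-source edges, super-sink edges, and original edges of $G$---yielding
\[
|\tset'\cap B|+|\tset''\cap A|+\tfrac{1}{\alpha}|E_G(A,B)|<|\tset'|.
\]
Using $|\tset'|=|\tset'\cap A|+|\tset'\cap B|$, this rearranges to $|E_G(A,B)|<\alpha\bigl(|\tset'\cap A|-|\tset''\cap A|\bigr)$. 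Let $d:=|\tset'\cap A|-|\tset''\cap A|$; since $|\tset'|=|\tset''|$, the identity $d=|\tset''\cap B|-|\tset'\cap B|$ also holds.

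The inequality forces $d>0$, so $|\tset'\cap A|\geq d>0$ and $|\tset''\cap B|\geq d>0$, which in particular guarantees that both $A$ and $B$ are nonempty, so $(A,B)$ is a genuine partition of $V(G)$. Since $\tset'\cup \tset''\subseteq \tset$, I then conclude $\min\{|A\cap \tset|,|B\cap \tset|\}\geq \min\{|\tset'\cap A|,|\tset''\cap B|\}\geq d>|E_G(A,B)|/\alpha$, which gives the required inequality. The only subtle point in the argument is decomposing the min-cut value in $G^+$ correctly into the three edge types and using the symmetry provided by $|\tset'|=|\tset''|$ to transfer the lower bound $d$ from the source side to the sink side; after that the conclusion is essentially arithmetic.
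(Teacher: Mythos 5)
Your proof is correct and follows essentially the same route as the paper: form an auxiliary flow network with a super-source attached to $\tset'$ and a super-sink attached to $\tset''$, and read the desired partition off the min cut guaranteed by Max-Flow Min-Cut. The only notable difference is cosmetic: where the paper argues by a ``WLOG $|A\cap\tset|\leq|B\cap\tset|$'' case split, you introduce $d=|\tset'\cap A|-|\tset''\cap A|=|\tset''\cap B|-|\tset'\cap B|$ and deduce both lower bounds $|\tset'\cap A|\geq d$ and $|\tset''\cap B|\geq d$ from the same identity, which streamlines the endgame without changing the underlying argument.
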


We call the partition given in Observation~\ref{obs: well-linkedness alt def} \emph{an $\alpha$-violating partition} of $G$ with respect to $\tset$.
We also need a slightly more general definition of well-linkedness, similar to that introduced in~\cite{Chuzhoy11}. 

\begin{definition}
Given an integer $k'$, and a parameter $0<\alpha\leq 1$, we say that a set $\tset$ of vertices is $(k',\alpha)$-well-linked in graph $G$, iff for every pair of disjoint subsets $\tset',\tset''\subseteq \tset$, with $|\tset'|=|\tset''|\leq k'$, there is a flow  $F:\tset'\sconnect_{1/\alpha}\tset''$ in $G$.
\end{definition}

Notice that if $|\tset|\leq 2k'$, then $\tset$ is $\alpha$-well-linked in $G$ iff it is $(k',\alpha)$-well-linked in $G$. Notice also that  if a set $\tset$ of terminals is $(k',\alpha)$-well-linked in $G$, then so is every subset $\tset'\subseteq\tset$. As before, if set $\tset$ is $(k',\alpha)$-well-linked in $G$, then for every pair of disjoint subsets $\tset',\tset''\subseteq \tset$, with $|\tset'|=|\tset''|\leq k'$, there is a set  $\pset:\tset'\sconnect_{\ceil{1/\alpha}}\tset''$ of paths in $G$.
The following observation is an analogue of Observation~\ref{obs: well-linkedness alt def}, and its proof \iffull appears in Appendix.\fi\ifabstract is omitted here.\fi

\begin{observation}\label{obs: generalized well-linkedness alt def}
Assume that we are given a graph $G$, a set $\tset$ of vertices of $G$, an integer $k'>0$, and a parameter $0<\alpha\leq 1$. Assume further that $\tset$ is {\bf not} $(k',\alpha)$-well-linked in $G$.
Then there is a partition $(A,B)$ of $V(G)$, such that $|E(A,B)|< \alpha\cdot \min\set{|A\cap \tset|,|B\cap \tset|, k'}$.
\end{observation}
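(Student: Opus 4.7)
The plan is to mimic the proof of Observation~\ref{obs: well-linkedness alt def}, but carefully track a parameter coming from $k'$. Since $\tset$ is not $(k',\alpha)$-well-linked, by definition there exist disjoint $\tset',\tset''\subseteq\tset$ with $|\tset'|=|\tset''|=s\leq k'$ such that no flow $F:\tset'\sconnect_{1/\alpha}\tset''$ exists in $G$. I want to translate this failure into a cut bound via max-flow/min-cut.

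Build an auxiliary capacitated network $G^*$ from $G$: give every edge of $G$ capacity $1/\alpha$, add a source $\sigma$ with a unit-capacity edge to each vertex of $\tset'$, and add a sink $\tau$ with a unit-capacity edge from each vertex of $\tset''$. A valid flow $\tset'\sconnect_{1/\alpha}\tset''$ in $G$ corresponds exactly to a $\sigma$--$\tau$ flow of value $s$ in $G^*$. Our assumption therefore says the maximum $\sigma$--$\tau$ flow in $G^*$ has value strictly less than $s$, so by max-flow/min-cut there is a $\sigma$--$\tau$ cut $(A',B')$ in $G^*$ of total capacity less than $s$. Writing $A=A'\setminus\{\sigma\}$ and $B=B'\setminus\{\tau\}$ (so $(A,B)$ is a partition of $V(G)$), the capacity of this cut decomposes as
\[
|\tset'\cap B| \;+\; |\tset''\cap A| \;+\; \tfrac{1}{\alpha}|E_G(A,B)| \;<\; s.
\]
Note that $A,B\neq\emptyset$: if $A=\emptyset$ then the cut capacity is exactly $|\tset'|=s$, contradicting strict inequality, and symmetrically for $B$.

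Rearranging gives $|E_G(A,B)|<\alpha(s-|\tset'\cap B|-|\tset''\cap A|)$. Using $s-|\tset'\cap B|=|\tset'\cap A|\leq|A\cap\tset|$ and dropping the nonnegative term $|\tset''\cap A|$ yields $|E_G(A,B)|<\alpha|A\cap \tset|$; symmetrically, using $s-|\tset''\cap A|=|\tset''\cap B|\leq|B\cap \tset|$ and dropping $|\tset'\cap B|$ yields $|E_G(A,B)|<\alpha|B\cap\tset|$; and simply dropping both nonnegative terms gives $|E_G(A,B)|<\alpha s\leq\alpha k'$. Combining these three inequalities produces exactly the desired bound $|E_G(A,B)|<\alpha\cdot\min\{|A\cap\tset|,|B\cap\tset|,k'\}$.

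The only subtle step is the bookkeeping of the cut capacity in $G^*$, since the min-cut must properly account for the unit-capacity edges attached to $\sigma$ and $\tau$ in addition to the scaled edges of $G$; beyond that, the proof is essentially a direct application of max-flow/min-cut. I do not anticipate any genuine obstacle—this is a routine adaptation of the standard well-linkedness-versus-sparse-cut duality, with the new ingredient being that capping flow demand at $s\leq k'$ automatically caps the right-hand side by $\alpha k'$.
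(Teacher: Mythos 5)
Your proof is correct and follows essentially the same max-flow/min-cut construction that the paper uses (add a source attached to $\tset'$ and a sink attached to $\tset''$ with unit capacities, scale graph edges by $1/\alpha$, and read off the cut). The only cosmetic difference is that the paper derives the bound $|E(A,B)|<\alpha k'$ indirectly, by noting $|\tset'\cup\tset''|\le 2k'$ forces one side of the cut to see at most $k'$ of these terminals, whereas you get it directly by dropping both nonnegative correction terms from the cut capacity; your bookkeeping is slightly cleaner but the argument is the same.
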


We call the partition given in Observation~\ref{obs: generalized well-linkedness alt def} a  \emph{$(k',\alpha)$-violating partition} of $V(G)$ with respect to $\tset$. We next define the notion of bandwidth property, somewhat similar to the one defined in~\cite{Raecke}. Recall that for a cluster $C\subseteq V(G)$, $\Gamma_G(C)$ denotes the set of vertices of $C$ incident with the edges of $\out_G(C)$.

\begin{definition}
Given a graph $G$, an integer $k'$ and a parameter $0<\alpha\leq 1$, we say that a cluster $C\subseteq V(G)$ has the $(k',\alpha)$-bandwidth property, iff $\Gamma_G(C)$ is $(k',\alpha)$-well-linked in $G[C]$. We say that it has the $\alpha$-bandwidth property, iff $\Gamma_G(C)$ is $\alpha$-well-linked in $G[C]$.
\end{definition}

The following observation is immediate from the definition of the bandwidth property.
\begin{observation}\label{obs: bandwidth gives connectivity}
Let $G$ be a connected graph, and let $C\subseteq V(G)$ be a cluster that has the $(k',\alpha)$-bandwidth property, for some integer $k'\geq 2$, and a parameter $0<\alpha\leq 1$. Then $G[C]$ is connected.
\end{observation}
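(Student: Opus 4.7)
The plan is to argue by contradiction. Suppose $G[C]$ is disconnected; then $C$ decomposes as a disjoint union $C = C_1 \cup C_2$ with both $C_1, C_2$ nonempty and with no edge of $G[C]$ between them. First I would dispose of the trivial case $C = V(G)$: here $G[C] = G$, which is connected by hypothesis, giving an immediate contradiction. So I may assume $C \subsetneq V(G)$, and since $G$ is connected and $C$ is nonempty, $\out_G(C) \neq \emptyset$ and hence $\Gamma_G(C) \neq \emptyset$.

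Now I would split into two cases according to how $\Gamma_G(C)$ meets the partition $\{C_1, C_2\}$. In the first case, $\Gamma_G(C)$ intersects both $C_1$ and $C_2$; then I pick $v_1 \in \Gamma_G(C) \cap C_1$ and $v_2 \in \Gamma_G(C) \cap C_2$ and set $T' = \{v_1\}$, $T'' = \{v_2\}$. These are disjoint subsets of $\Gamma_G(C)$ of size $1 \leq k'$ (using $k' \geq 2$, though $k' \geq 1$ would already suffice), so the $(k', \alpha)$-bandwidth property of $C$ requires a flow $F:T' \sconnect_{1/\alpha} T''$ in $G[C]$, and in particular a path from $v_1$ to $v_2$ in $G[C]$. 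This contradicts the $C_1/C_2$ separation.

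In the complementary case, one piece, say $C_2$, is disjoint from $\Gamma_G(C)$, so $C_2$ has no edges of $\out_G(C)$ incident to it and therefore no edge of $G$ leaves $C_2$ to $V(G) \setminus C$. By the separation of $C_1$ and $C_2$ inside $G[C]$, $C_2$ also sends no edge to $C_1$. Hence no edge of $G$ leaves $C_2$ at all; since $C_1 \neq \emptyset$ implies $V(G) \setminus C_2 \neq \emptyset$, this contradicts the connectedness of $G$. Both cases yielding a contradiction, $G[C]$ must be connected.

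The argument is essentially just a careful case analysis; the only ``obstacle'' is making sure to handle the case where $\Gamma_G(C)$ lies entirely on one side of the hypothetical cut, which is why we must explicitly invoke the connectivity of $G$ rather than only the bandwidth property.
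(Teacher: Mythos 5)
Your proof is correct, and it is a clean way to spell out what the paper dismisses as immediate (the paper provides no explicit proof of this observation). The case analysis — flow from $\Gamma_G(C)\cap C_1$ to $\Gamma_G(C)\cap C_2$ when both are nonempty, and otherwise a piece $C_i$ with $\out_G(C_i)=\emptyset$ contradicting the connectivity of $G$ — is exactly the intended reasoning, and you correctly note that $k'\geq 1$ would already suffice.
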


We now define a stronger notion of well-linkedness, called \emph{node-well-linkedness}.


\begin{definition}
  We say that a set $\tset$ of vertices is \emph{node-well-linked} in
  $G$, iff for every pair $(\tset',\tset'')$ of disjoint equal-sized subsets of
  $\tset$, there is a collection $\pset:\tset'\sconnect\tset''$ of {\bf
    node-disjoint} paths in $G$.
\end{definition}

Notice that from Observation~\ref{obs: EDP to NDP in degree-3}, if $G$ is a graph with maximum vertex degree at most $3$, and $T$ is a set of vertices of degree at most $2$ each, then $T$ is node-well-linked in $G$ iff $T$ is $1$-well-linked in $G$.
Finally, we define the notion of linkedness between a pair of vertex subsets.

\begin{definition}
We say that two disjoint subsets $(\tset_1,\tset_2)$ of vertices of $G$ are $\alpha$-\emph{linked} for $0<\alpha\leq 1$, iff for every pair $\tset'_1\subseteq \tset_1$ and $\tset_2'\subseteq \tset_2$ of equal-sized vertex subsets, there is a flow $F:\tset_1'\sconnect_{1/\alpha} \tset_2'$ in $G$. We say that they are \emph{node-linked}, iff for every pair $\tset'_1\subseteq \tset_1$ and $\tset_2'\subseteq \tset_2$ of equal-sized vertex subsets, there is a collection of $|\tset'_1|$ node-disjoint paths connecting the vertices of $\tset'_1$ to the vertices of $\tset_2'$.
\end{definition}

Notice that as before, if $(\tset_1,\tset_2)$ are $\alpha$-linked, then for every pair $\tset'_1\subseteq \tset_1$ and $\tset_2'\subseteq \tset_2$ of equal-sized vertex subsets, there is a set $\pset:\tset_1'\sconnect_{\ceil{1/\alpha}} \tset_2'$ of paths $G$.

The following lemma summarizes an important connection between the
graph treewidth, and the size of the largest node-well-linked set of
vertices in it.

\begin{lemma}\cite{Reed-chapter}
  \label{lem:tw-wl}
  Let $k$ be the size of the largest node-well-linked vertex set in $G$. Then
  $\frac{k} 4 -1 \le \tw(G) \le k-1$.
\end{lemma}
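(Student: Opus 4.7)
The lemma asserts that the size $k$ of the largest node-well-linked set and $\tw(G)$ are within a small constant factor of each other; I would prove the two inequalities separately.

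For the lower bound $\tw(G) \ge k/4 - 1$, fix a node-well-linked set $T$ with $|T|=k$ and any tree decomposition $(\tau, \{X_a\}_{a \in V(\tau)})$ of width $w = \tw(G)$. For every $v \in T$, the set $S_v = \{a \in V(\tau) : v \in X_a\}$ is, by definition, a nonempty subtree of $\tau$. The plan is a classical weighted-centroid argument: I locate a tree node $a^*$ such that every connected component of $\tau \setminus \{a^*\}$ contains the full subtree $S_v$ for at most $|T|/2$ vertices $v \in T \setminus X_{a^*}$. Such $a^*$ is produced by walking toward the ``heavier side'' along edges of $\tau$; this process is monotone and must terminate. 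The tree-decomposition axioms then partition $T \setminus X_{a^*}$ into groups $V_1, \ldots, V_m$, one per component, with $|V_i| \le |T|/2$. A greedy load-balancing merges the $V_i$ into two sides $T_1, T_2$ of equal size $\Omega(k)$ (truncating one if needed). Since no bag other than $X_{a^*}$ can meet two distinct components of $\tau \setminus \{a^*\}$, the set $X_{a^*}$ is a $T_1$--$T_2$ vertex separator in $G$. Node-well-linkedness of $T$ provides $|T_1|$ node-disjoint $T_1$--$T_2$ paths, all of which must pass through $X_{a^*}$; hence $|X_{a^*}| \ge |T_1|$, and combined with $|X_{a^*}| \le w + 1$ this yields $w + 1 \ge k/4$ after a careful accounting of the load-balancing step.

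For the upper bound $\tw(G) \le k - 1$, the plan is to exhibit a node-well-linked set of size $\tw(G) + 1$, which forces $k \ge \tw(G) + 1$. The cleanest route passes through the standard duality between treewidth and havens: a graph of treewidth $w$ admits a haven of order $w + 1$, i.e., an assignment $\beta$ sending every $X \subseteq V(G)$ with $|X| \le w$ to a connected component $\beta(X)$ of $G \setminus X$, satisfying $\beta(X') \subseteq \beta(X)$ whenever $X \subseteq X'$. Using $\beta$, I would construct a candidate set $T$ of $w+1$ vertices by iteratively picking a vertex inside $\beta$ of the previously chosen ones, then verify by a Menger-style argument (leveraging the monotonicity of $\beta$) that for every disjoint equal-sized $T', T'' \subseteq T$ no vertex cut of size $< |T'|$ can separate them, which delivers the required $|T'|$ node-disjoint paths.

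The first direction is conceptually transparent; its only mild delicacy is squeezing out the constant $1/4$ (a looser constant such as $1/6$ falls out immediately, and a more careful partition step tightens it). The main obstacle is the upper-bound direction, where the haven-to-well-linked-set extraction requires a detailed monotonicity argument at every separator size; this is essentially where the citation to \cite{Reed-chapter} does the heavy lifting.
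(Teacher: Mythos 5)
The paper does not prove this lemma; it cites~\cite{Reed-chapter} and uses it as a black box. So there is no ``paper proof'' to compare against, and I am evaluating your proposal on its own merits against the standard arguments.

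Your lower-bound argument ($\tw(G)\ge k/4-1$) is the right route and is essentially standard. You do need to be more careful than ``greedy load-balancing'': a plain greedy split of the $V_i$'s only yields $\min(|T_1|,|T_2|)\ge (m-M)/2$ where $m=|T\setminus X_{a^*}|$ and $M\le |T|/2$ is the largest piece, which gives roughly $k/8$, not $k/4$. The tightening is a two-case split: if the largest piece $M\ge m/3$, put that piece alone on one side (since $M\le k/2$ and, assuming all bags have size $<k/4$, $m>3k/4$, the other side has $>k/4$); if $M<m/3$, greedy balancing gives both sides $\ge m/3>k/4$. In either case a disjoint equal pair $T_1,T_2$ of size $>k/4$ is separated by $X_{a^*}$, and node-well-linkedness forces $|X_{a^*}|\ge |T_1|>k/4$, a contradiction. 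This is exactly the refinement you gesture at, and it works.

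The upper bound is where there is a genuine gap. The greedy construction $t_{i+1}\in\beta(\{t_1,\dots,t_i\})$ does \emph{not} obviously produce a node-well-linked set, and the monotonicity of the haven does not rescue it. Monotonicity says $\beta(X')\subseteq\beta(X)$ only when $X\subseteq X'$; for an arbitrary candidate separator $S$ (which need not be a subset of any prefix $\{t_1,\dots,t_i\}$, and need not contain one), the construction gives no control over where the $t_i$'s sit relative to $\beta(S)$. Concretely, if $S$ separates disjoint $A,B\subseteq T$ with $|A|=|B|=t>|S|$, then at most one of $A,B$ meets $\beta(S)$, say $A\cap\beta(S)=\emptyset$; to get a contradiction one would need a bound like ``at most $|S|$ elements of $T$ lie outside $\beta(S)\cup S$,'' and nothing in the greedy choice yields this for general $S$. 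The clean route to $\tw(G)\le k-1$ is Thomas's lean tree-decomposition theorem: take a lean decomposition of width $\tw(G)$ and a bag $X_a$ of maximum size $\tw(G)+1$; leanness applied with $a=b$ shows that for any disjoint equal-size $Z_1,Z_2\subseteq X_a$ either the required disjoint paths exist or some bag on the (trivial) $a$--$a$ path has size $<|Z_1|$, and the latter is impossible since $|X_a|\ge 2|Z_1|$; hence $X_a$ itself is node-well-linked of size $\tw(G)+1$. You should replace the haven sketch with this (or with a genuinely worked-out haven argument); as written, the upper bound is not proved.
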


\subsection{Boosting Well-Linkedness}
Suppose we are given a graph $G$, and a subset $T$ of its vertices, such that $T$ is $\alpha$-well-linked in $G$, for some parameter $0<\alpha\leq 1$. Boosting theorems show that there are large subsets of $T$ with much better well-linkedness properties. We start with the following simple theorem, that allows to perform basic boosting. This type of argument has been used before extensively, usually under the name of a ``grouping technique''~\cite{ANF,CKS, RaoZhou, Andrews, Chuzhoy11}.
 
\begin{theorem}\label{thm: weak well-linkedness}
Suppose we are given a connected graph $G$ with maximum vertex degree $\Delta$, and a set $\tset$ of vertices of $G$, called terminals, such that $\tset$ is $\alpha$-well-linked in $G$, for some $0<\alpha\leq 1$, and $|\tset|>1/\alpha$.
Then there is a collection $\fset$ of disjoint trees in $G$, each containing at least $\ceil{1/\alpha}$ and at most $2\Delta\cdot \ceil{1/\alpha}$ terminals. Moreover, for any set $U\subseteq \tset$ of terminals, such that for each $\tau\in \fset$, $|U\cap V(\tau)|\leq 1$, set $U$ is $1/2$-well-linked in $G$.
\end{theorem}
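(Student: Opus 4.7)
The plan is to construct $\mathcal{F}$ by a bottom-up grouping procedure on a spanning tree of $G$, and then to establish the well-linkedness conclusion via a three-phase flow argument. Let $T^*$ be any spanning tree of $G$ (which exists since $G$ is connected), root it arbitrarily, and process its vertices in post-order, maintaining an \emph{unassigned} status for each vertex. When processing $v$, let $S_v\subseteq V(T^*)$ be the set of unassigned vertices in the subtree of $T^*$ rooted at $v$; inductively, $S_v$ is the vertex set of a connected subtree containing $v$, because each earlier group consumed an entire subtree rooted at a strict descendant of $v$. If $|S_v\cap T|\ge\lceil 1/\alpha\rceil$, create a new group: add $T^*[S_v]$ to $\mathcal{F}$ and mark every vertex of $S_v$ as assigned. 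After processing the root, any still-unassigned vertices form a subtree $U^{\circ}$ containing the root with strictly fewer than $\lceil 1/\alpha\rceil$ terminals. Since $|T|>1/\alpha$ implies $|T|\ge\lceil 1/\alpha\rceil$, a group is created by the time we reach the root, so $\mathcal{F}$ is non-empty; by the connectivity of $G$ there exists an edge $e\in E(G)$ from $U^{\circ}$ to some $\tau\in\mathcal{F}$, and we replace $\tau$ by $\tau\cup U^{\circ}\cup\{e\}$, which is again a tree. By construction, the trees in $\mathcal{F}$ are pairwise vertex-disjoint.

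The size bounds follow from the stopping rule: whenever a group is formed at $v$, each child of $v$ contributes at most $\lceil 1/\alpha\rceil-1$ unassigned terminals (else a group would already have been formed at that child), and $v$ itself contributes at most one, giving at most $\Delta(\lceil 1/\alpha\rceil-1)+1\le\Delta\lceil 1/\alpha\rceil$ terminals per group. The single merged tree additionally absorbs fewer than $\lceil 1/\alpha\rceil$ leftover terminals, so it still contains at most $(\Delta+1)\lceil 1/\alpha\rceil\le 2\Delta\lceil 1/\alpha\rceil$ terminals, matching the stated bound.

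For the second claim, fix $U\subseteq T$ with $|U\cap V(\tau)|\le 1$ for every $\tau\in\mathcal{F}$, and fix disjoint equal-sized subsets $U',U''\subseteq U$. For each $u\in U'\cup U''$ let $\tau(u)\in\mathcal{F}$ be the unique tree containing $u$, and pick a set $S_u\subseteq V(\tau(u))\cap T$ of size exactly $\lceil 1/\alpha\rceil$ with $u\in S_u$ (possible by the lower bound on group size). Because the trees of $\mathcal{F}$ are vertex-disjoint and $|U\cap V(\tau)|\le 1$, the collections $S':=\bigcup_{u\in U'}S_u$ and $S'':=\bigcup_{u''\in U''}S_{u''}$ are disjoint, equal-sized subsets of $T$. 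Construct a flow $F:U'\sconnect U''$ in three phases. In Phase~1, for each $u\in U'$, route one unit from $u$ through $\tau(u)$, splitting it evenly to deliver $1/\lceil 1/\alpha\rceil$ units to each terminal of $S_u$. In Phase~2, invoke the $\alpha$-well-linkedness of $T$ to obtain a flow $F_2:S'\sconnect_{1/\alpha}S''$ in $G$, and scale $F_2$ by the factor $1/\lceil 1/\alpha\rceil$. In Phase~3, symmetrically to Phase~1, for each $u''\in U''$ gather the $1/\lceil 1/\alpha\rceil$ units arriving at each terminal of $S_{u''}$ into $u''$ along $\tau(u'')$.

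It remains to bound the edge-congestion of $F$. Phase~1 routes a total of one unit through each tree $\tau(u)$ with $u\in U'$, so it contributes at most $1$ on every edge of that tree and $0$ elsewhere; Phase~3 does the same on the trees $\{\tau(u''):u''\in U''\}$, which are vertex-disjoint from the Phase~1 trees by the hypothesis $|U\cap V(\tau)|\le 1$. The scaled Phase~2 flow has edge-congestion at most $\frac{1}{\alpha\lceil 1/\alpha\rceil}\le 1$ on every edge of $G$. Therefore every edge of $G$ carries at most $1+1=2$ units of $F$, so $F:U'\sconnect_{2}U''$, and $U$ is $1/2$-well-linked. The main subtlety is the interaction of the three flows: the tree-disjointness forced by $|U\cap V(\tau)|\le 1$ prevents Phases~1 and~3 from colliding on any edge, and the $1/\lceil 1/\alpha\rceil$ scaling in Phase~2 precisely cancels the $1/\alpha$ congestion of the well-linkedness routing, so that the total congestion is a universal constant independent of $\alpha$.
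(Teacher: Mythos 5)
Your proof is correct and takes essentially the same approach as the paper: a spanning-tree-based bottom-up grouping procedure followed by the identical three-phase flow/concatenation argument with congestion bounded by $2$. The only (cosmetic) difference is in handling the sub-threshold remainder — you merge it into an adjacent group via an extra edge, whereas the paper uses an early-termination rule that makes the residual tree (guaranteed to have at most $2\Delta\lceil 1/\alpha\rceil$ terminals) the final group; both achieve the same size bounds.
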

\begin{proof}
Let $\tau$ be any spanning tree of $G$, rooted at one of its degree-$1$ vertices. We start with $\fset=\emptyset$, and then iterate. In every iteration, we consider the current tree $\tau$. If $\tau$ contains at most $2\Delta\cdot \ceil{1/\alpha}$ terminals, then we add $\tau$ to $\sset$ and terminate the algorithm. Otherwise, let $v$ be the lowest vertex of $\tau$, such that the sub-tree $\tau_v$ of $\tau$ rooted at $v$ contains at least $\ceil{1/\alpha}$ terminals. Since the maximum vertex degree in $G$ is bounded by $\Delta$, $T_v$ contains at most $\Delta \cdot \ceil{1/\alpha}$ terminals. We add $\tau_v$ to $\fset$, delete all vertices of $\tau_v$ from $\tau$, and continue to the next iteration. Notice that $\tau$ is still guaranteed to contain at least $\Delta\cdot\ceil{1/\alpha}$ terminals.
This completes the construction of the set $\fset$ of trees.

Assume now that we are given any set $U\subseteq \tset$ of terminals, where for each $\tau\in \fset$, $|U\cap V(\tau)|\leq 1$. Let $(U',U'')$ be any pair of equal-sized subsets of $U$. It is enough to show that there is a flow $F: U'\sconnect_2 U''$ in $G$. Let $|U'|=\kappa$.

For every terminal $t\in U$, let $\tau_t\in \fset$ be the tree containing $t$, and let $L(\tau)$ be any set of $\ceil{1/\alpha}$ terminals of $\tau_t$. Let $X=\bigcup_{t\in U'}L(\tau_t)$, and let $Y=\bigcup_{t\in U''}L(\tau_t)$. Then $|X|=|Y|=\kappa \cdot \ceil{1/\alpha}$. Since the set $\tset$ of terminals is $\alpha$-well-linked, there is a flow $F':X\sconnect_{1/\alpha} Y$ in $G$. Scaling this flow down by factor $\ceil{1/\alpha}$, we obtain a new flow $F''$ from $X$ to $Y$, where every terminal in $X$ sends $1/\ceil{1/\alpha}$ flow units, every terminal in $Y$ receives $1/\ceil{1/\alpha}$ flow units, and the edge-congestion due to $F''$ is at most $1$.

The final flow $F$ is a concatenation of three flows: $F_1,F'',F_2$. Flow $F_1$ is defined as follows. Every terminal $t\in U'$ sends one flow unit to the $\ceil{1/\alpha}$ terminals of $L(\tau_t)$, splitting the flow evenly among them, so every terminal in $L(\tau_t)$ receives $1/\ceil{1/\alpha}$ flow units. The flow is sent along the edges of the tree $\tau_t$. The flow $F_1$ is the union of all such flows from all terminals $t\in U'$. The flow $F_3$ is defined similarly with respect to $U''$, except that we reverse the direction of the flow. The final flow is a concatenation of $F_1,F'',F_2$. It is easy to see that this is a flow from $U'$ to $U''$, where every terminal in $U'$ sends one flow unit and every terminal in $U''$ receives one flow unit. Since the trees in $\fset$ are edge-disjoint, the congestion due to $F$ is bounded by $2$.
\end{proof}

The above claim gives a straightforward way to boost the well-linkedness of a given set $\tset$ of terminals to $\half$. However, we need a stronger result: we would like to find a large subset $\tset'\subseteq \tset$, such that $\tset'$ is {\bf node-well-linked} in $G$. The following theorem, that was proved in~\cite{CC14}, allows us to achieve this. For completeness, we provide its proof in Appendix. 

\begin{theorem}\label{thm: grouping} \cite{CC14}
  Suppose we are given a connected graph $G=(V,E)$ with maximum vertex degree at most $\Delta$, and a set $\tset\subseteq V$ of $\kappa$ vertices called terminals, such that $\tset$ is $\alpha$-well-linked in $G$, for some $0<\alpha\leq 1$. Then there is a subset $\tset'\subseteq \tset$ of $\ceil{\frac{3\alpha\kappa }{10\Delta}}$ terminals, such that $\tset'$ is node-well-linked in $G$.
\end{theorem}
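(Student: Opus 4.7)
The plan is to apply the basic grouping of Theorem~\ref{thm: weak well-linkedness}, and then strengthen its conclusion from $1/2$-edge-well-linkedness to node-well-linkedness by carefully routing flow through the trees produced by the grouping, using the extra terminals packed inside each tree as ``distributors'' that absorb the factor of $\Delta$ one would otherwise lose when converting edge-congestion into vertex-congestion.

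First, I would apply (a mild refinement of) Theorem~\ref{thm: weak well-linkedness} to obtain a collection $\fset$ of disjoint subtrees of a spanning tree of $G$, each containing between $\lceil 1/\alpha\rceil$ and $2\Delta\lceil 1/\alpha\rceil$ terminals of $\tset$. By counting terminals and choosing the constants in the grouping appropriately, $|\fset|\geq \lceil 3\alpha\kappa/(10\Delta)\rceil$. Let $T'$ be a set consisting of one arbitrary representative terminal per tree of $\fset$, so that $|T'|\geq \lceil 3\alpha\kappa/(10\Delta)\rceil$. It remains to show that $T'$ is node-well-linked in $G$.

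To verify node-well-linkedness, fix an arbitrary pair of disjoint equal-sized subsets $T_1', T_2'\subseteq T'$. For each representative $t\in T_1'\cup T_2'$, let $L(t)\subseteq V(\tau_t)\cap \tset$ be a pool of $\lceil 1/\alpha\rceil$ terminals of the tree $\tau_t$ containing $t$, and put $X=\bigcup_{t\in T_1'}L(t)$, $Y=\bigcup_{t\in T_2'}L(t)$, so $|X|=|Y|=|T_1'|\lceil 1/\alpha\rceil$. The $\alpha$-well-linkedness of $\tset$ in $G$ yields a flow $F_0:X\sconnect_{1/\alpha} Y$; rescaling $F_0$ by $1/\lceil 1/\alpha\rceil$ and composing it on each side with an intra-tree flow that spreads one unit of flow at every representative $t$ evenly across the $\lceil 1/\alpha\rceil$ terminals of $L(t)$ (along edges of $\tau_t$) yields a $T_1'\sconnect T_2'$ flow of value $|T_1'|$ in $G$. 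The crucial point, and the place where the degree bound enters, is that each representative's unit of flow is split across $\lceil 1/\alpha\rceil$ distinct distributors before entering the main flow, so the flow through any vertex is bounded by $1$ (with suitable constants). Integrality of vertex-capacitated flow then yields $|T_1'|$ node-disjoint paths from $T_1'$ to $T_2'$, establishing node-well-linkedness.

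The main obstacle will be the vertex-congestion bookkeeping. A direct application of Observation~\ref{obs: low cong flow to NDP} to the $1/2$-edge-well-linked set of representatives would give congestion-$2$ edge flows but only $\Omega(|T_1'|/\Delta)$ node-disjoint paths, losing an extra factor of $\Delta$ and giving a bound of $\Omega(\alpha\kappa/\Delta^2)$, which is too weak. The non-trivial step is to show that the $\Omega(1/\alpha)$-sized pools $L(t)$ exactly compensate for the degree bound: after scaling and composing with the intra-tree distribution, the flow through any vertex of $G$ never exceeds the unit vertex capacity. Pinning down the constants $3/10$ in the statement amounts to balancing the tree sizes, the pool sizes $|L(t)|$, and the congestion budgets on tree edges, global edges, and vertices shared between them.
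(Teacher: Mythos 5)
Your plan has a genuine gap at the vertex-congestion step, and I believe the gap is unfixable without losing an extra factor of~$\Delta$.

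The pools $L(t)$ control the flow only at the \emph{endpoints} of the rescaled flow $F_0'$. After scaling $F_0:X\sconnect_{1/\alpha}Y$ by $1/\ceil{1/\alpha}$, the edge-congestion of $F_0'$ is at most $1$, and indeed each source or sink vertex carries at most $1/\ceil{1/\alpha}$ units, which is fine. But now consider a vertex $v$ of degree $\Delta$ that is not in any tree of $\fset$ and serves as an \emph{inner} vertex on flow-paths. The through-flow at $v$ is bounded only by $\Delta/2$ times the edge-congestion, i.e.\ by $\Delta/2$, not by $1$. The pools do nothing to help at such a $v$: they redistribute flow inside trees, but $v$ is outside every tree. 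So your claim ``the flow through any vertex is bounded by $1$'' is false for $\Delta\geq 3$, and to repair it you must rescale by another factor $\Theta(1/\Delta)$, after which the flow has value $\Theta(z/\Delta)$, too small to exhibit $z$ node-disjoint paths between $T_1'$ and $T_2'$. Since you already paid $\Theta(\alpha/\Delta)$ to the grouping and now pay $\Theta(1/\Delta)$ again, the bound you can actually prove this way is $\Omega(\alpha\kappa/\Delta^2)$, a factor of $\Delta$ short of the theorem. (Enlarging the trees so the pools can absorb the extra factor does not help either: larger trees means proportionally fewer of them, so the number of representatives drops by the same factor.)

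The paper's proof takes a different route precisely to avoid paying the $\Delta$-factor twice. It never invokes Theorem~\ref{thm: weak well-linkedness}. Instead it takes a minimum-order balanced separation $(Y,Z)$ with separator $X=V(Y)\cap V(Z)$ and proves, via a submodularity/Menger argument, that $X$ is \emph{node}-well-linked in the bigger side $Y$. It then routes a large terminal set $T_1'\subseteq T\cap V(Z)$ of size $\ceil{\kappa/4}$ (not $\Theta(\alpha\kappa)$!) toward $X$: the $\alpha$-well-linked flow $T_1'\sconnect_{1/\alpha}T_2'$ has vertex-congestion $\leq 5\Delta/(6\alpha)$, so one rescaling gives $\ceil{3\alpha\kappa/(10\Delta)}$ node-disjoint paths, all contained in $Z$ and truncated at their first hit of $X$. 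The $\alpha$ and $\Delta$ losses are paid together, exactly once, and applied to $\kappa/4$ rather than to a set that has already been shrunk by the grouping. The resulting originating terminals form $T'$. The punchline is that node-well-linkedness of $T'$ now comes ``for free'': given disjoint $A,B\subseteq T'$, the paths from $A$ and $B$ to $X$ are node-disjoint (they are a sublinkage of $\pset$, living in $Z$), and inside $Y$ one links the corresponding endpoints of $X$ by node-disjoint paths using node-well-linkedness of $X$ in $Y$; the two families meet only at $X$, so their concatenation is a node-disjoint $A$--$B$ linkage. No second congestion-scaling occurs, which is exactly what your approach is missing.
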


The next theorem allows us to achieve node-linkedness property between a pair of subsets of terminals. The theorem was proved in~\cite{CC14}, and its proof is included in Appendix for completeness.

\begin{theorem}\label{thm: linkedness from node-well-linkedness}~\cite{CC14}
Suppose we are given a graph $G$ with maximum vertex degree at most $\Delta$, and two disjoint subsets $\tset_1,\tset_2$ of vertices of $G$, with $|\tset_1|,|\tset_2|\geq \kappa$, such that $\tset_1\cup \tset_2$ is $\alpha$-well-linked in $G$, for some $0<\alpha<1$, and each one of the sets $\tset_1,\tset_2$ is node-well-linked in $G$. Let $\tset'_1\subseteq \tset_1$, $\tset_2'\subseteq\tset_2$, be any pair of subsets with $|\tset_1'|=|\tset_2'|\leq \frac{\alpha\kappa}{2\Delta}$. Then $(\tset'_1,\tset_2')$ are node-linked in $G$.
\end{theorem}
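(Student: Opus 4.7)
The plan is to argue by contradiction via Menger's theorem, combining a three-case analysis that separately exploits the node-well-linkedness of $\tset_1$, the node-well-linkedness of $\tset_2$, and the $\alpha$-well-linkedness of $\tset_1\cup\tset_2$. Set $m=|\tset_1'|=|\tset_2'|$ and suppose for contradiction that $(\tset_1',\tset_2')$ are not node-linked in $G$. Menger's theorem then yields a vertex set $S\subseteq V(G)$ with $|S|<m$ such that $G\setminus S$ contains no path from $\tset_1'\setminus S$ to $\tset_2'\setminus S$. Let $V_1$ be the union of the connected components of $G\setminus S$ that meet $\tset_1'\setminus S$, and $V_2=V(G)\setminus(S\cup V_1)\supseteq\tset_2'\setminus S$; by construction $G$ has no edges between $V_1$ and $V_2$, and in particular $\tset_1'\cap V_2=\tset_2'\cap V_1=\emptyset$. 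Let $a_i=|\tset_1\cap V_i|$ and $b_i=|\tset_2\cap V_i|$ for $i\in\{1,2\}$; since $|S|<m$ and $|\tset_i|\geq\kappa$, we obtain $a_1+a_2\geq\kappa-m$ and $b_1+b_2\geq\kappa-m$.

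In the first case, where $a_2\geq m$, I would pick any subset $Y\subseteq\tset_1\cap V_2$ of size $m$; this $Y$ is automatically disjoint from $\tset_1'$ because $\tset_1'\cap V_2=\emptyset$. Applying the node-well-linkedness of $\tset_1$ to the disjoint equal-sized pair $(\tset_1',Y)\subseteq\tset_1$ produces $m$ node-disjoint paths in $G$ from $\tset_1'$ to $Y$. The key observation is that every such path must contain a vertex of $S$: either its starting vertex already lies in $\tset_1'\cap S$, or it starts in $\tset_1'\cap V_1$ and ends in $Y\subseteq V_2$, and by the absence of $V_1$--$V_2$ edges it must traverse $S$. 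Since the paths are pairwise node-disjoint, these contributions are distinct vertices of $S$, forcing $|S|\geq m$, a contradiction. The symmetric case $b_1\geq m$ is handled identically using the node-well-linkedness of $\tset_2$ applied to $\tset_2'$ and a target set of size $m$ inside $\tset_2\cap V_1$.

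In the remaining case, $a_2<m$ and $b_1<m$, so $a_1>\kappa-2m$ and $b_2>\kappa-2m$. I would then pick equal-sized subsets $X\subseteq(\tset_1\cup\tset_2)\cap V_1$ and $Y\subseteq(\tset_1\cup\tset_2)\cap V_2$ of common size $r>\kappa-2m$, which are disjoint subsets of $\tset_1\cup\tset_2$, and use the $\alpha$-well-linkedness of $\tset_1\cup\tset_2$ to obtain a flow $F\colon X\sconnect_{1/\alpha} Y$ in $G$. Because $G$ has no $V_1$--$V_2$ edges, every unit of $F$ must cross the cut via an edge of $E(V_1,S)$, so the flow value is bounded by $r\leq|E(V_1,S)|/\alpha\leq|S|\Delta/\alpha<m\Delta/\alpha$. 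Combining $r>\kappa-2m$ with this upper bound yields $m>\alpha\kappa/(\Delta+2\alpha)$, contradicting the hypothesis $m\leq\alpha\kappa/(2\Delta)$ whenever $\Delta\geq 2\alpha$ (and the degree-$1$ case is trivial since $G$ is then a matching).

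The main obstacle is the third case: here the node-well-linkedness of neither individual $\tset_i$ can supply enough crossing paths (each side has fewer than $m$ vertices of the opposite terminal set on it), so one must fall back on the weaker $\alpha$-well-linkedness hypothesis and argue via a fractional flow instead of integral paths. The delicate calculation is verifying that the flow-based lower bound $m>\alpha\kappa/(\Delta+2\alpha)$ genuinely contradicts the hypothesized upper bound $m\leq\alpha\kappa/(2\Delta)$, which requires the inequality $\Delta+2\alpha\leq 2\Delta$ to be at our disposal; this is precisely what pins down the factor of $2\Delta$ in the denominator of the theorem's hypothesis.
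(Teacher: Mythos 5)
Your proof is correct and follows essentially the same route as the paper's: extract a Menger separator $S$, use the node-well-linkedness of each $\tset_i$ to force almost all of $\tset_1$ (resp.\ $\tset_2$) onto the $\tset_1'$-side (resp.\ $\tset_2'$-side) of $S$, and then invoke the $\alpha$-well-linkedness of $\tset_1\cup\tset_2$ to derive a contradiction from the small cut at $S$. The only substantive difference is cosmetic: you organize the argument as a three-case split and conclude case~3 by bounding the flow across the edge cut $E(V_1,S)$, whereas the paper proves the two localization facts $|A'|>\kappa-\kappa'$ and $|B'|\geq\kappa-\kappa'$ up front and then counts the number of paths passing through vertices of $S$; both counting arguments are equivalent up to constant factors.
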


\label{---------------------------------------------subsec: balanced cuts-------------------------------------------}
\subsection{Balanced Cuts}

\begin{definition}
Let $G$ be a graph, and let $T\subseteq V(G)$ be a subset of its vertices. Given a parameter $0<\rho\leq 1/2$, a partition $(A,B)$ of $V(G)$ is called a \emph{$\rho$-balanced cut} of $G$ with respect to $T$, iff $|A\cap T|,|B\cap T|\geq \rho |T|$. It is called a \emph{minimum $\rho$-balanced cut} of $G$ with respect to $T$, if it minimizes $|E(A,B)|$ among all $\rho$-balanced cuts, and subject to this, minimizes $\min\set{|A\cap T|,|B\cap T|}$.
\end{definition} 

We will use the following lemma, whose proof uses standard techniques.

\begin{lemma}\label{lemma: balanced cut large piece wl}
Let $G=(V,E)$ be a graph, and $C\subseteq V$ any cluster that has the $\alpha$-bandwidth property, for some $0<\alpha\leq 1$. Let $(A,B)$ be the minimum $\rho$-balanced cut of $G[C]$ with respect to $\Gamma(C)$, for some $0<\rho\leq 1/4$, and assume that $|\Gamma(C)\cap A|\geq |\Gamma(C)\cap B|$. Then $A$ has the $\alpha/(2+\alpha)$-bandwidth property.
\end{lemma}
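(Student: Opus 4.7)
The plan is to argue by contradiction using Observation~\ref{obs: well-linkedness alt def}. Setting $\beta := \alpha/(2+\alpha)$, suppose $A$ does not have the $\beta$-bandwidth property, so there is a partition $(X,Y)$ of $A$ with $m := |E_{G[A]}(X,Y)| < \beta \cdot \min\{|X \cap \Gamma(A)|, |Y \cap \Gamma(A)|\}$; WLOG $|X \cap \Gamma(A)| \leq |Y \cap \Gamma(A)|$. The strategy is to combine $(X,Y)$ with $B$ to produce either a cut of $G[C]$ violating the $\alpha$-well-linkedness of $\Gamma(C)$ in $G[C]$, or a $\rho$-balanced cut of $G[C]$ cheaper than $(A,B)$, each of which contradicts a hypothesis of the lemma.

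First I will translate the violation to quantities involving $T := \Gamma(C)$. Set $a = |X \cap T|$, $b = |Y \cap T|$, $c = |B \cap T|$, and let $m_X, m_Y$ denote the numbers of edges of $E(A,B)$ with endpoint in $X, Y$ respectively (so $m_X + m_Y = |E(A,B)|$). Since $\Gamma(A) = (T \cap A) \cup \Gamma_1$, where $\Gamma_1$ is the set of endpoints in $A$ of edges of $E(A,B)$, and $|\Gamma_1 \cap X| \leq m_X$ (and similarly on the $Y$ side), the violation rewrites as $m < \beta(a+m_X)$ and $m < \beta(b+m_Y)$.

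Next I extract two families of inequalities on the auxiliary partitions $P_1 = (X, Y \cup B)$ and $P_2 = (X \cup B, Y)$ of $V(G[C])$, whose cut values in $G[C]$ are $m+m_X$ and $m+m_Y$ respectively. The $\alpha$-bandwidth of $C$ gives $m + m_X \geq \alpha \min(a, b+c)$ and $m + m_Y \geq \alpha \min(b, a+c)$. The minimality of $(A,B)$, together with $|B \cap T| \geq \rho|T|$, gives $m \geq m_Y$ whenever $a \geq \rho|T|$ (so that $P_1$ is $\rho$-balanced) and $m \geq m_X$ whenever $b \geq \rho|T|$. Since $a + b = |A \cap T| \geq |T|/2 \geq 2\rho|T|$, at least one of these two minimality conditions holds.

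A short case analysis on whether $a \geq \rho|T|$ or $b \geq \rho|T|$ (and, within the former, whether $b \leq a + c$) will finish the proof. In each case I pair the relevant minimality bound (say $m_Y \leq m$) with the corresponding well-linkedness bound (say $m + m_Y \geq \alpha b$) to deduce $m \geq (\alpha/2) b$; on the other hand, substituting $m_Y \leq m$ into the violation $m < \beta(b + m_Y)$ and using the identity $\beta/(1-\beta) = \alpha/2$---which is exactly the reason for choosing $\beta = \alpha/(2+\alpha)$---yields $m < (\alpha/2) b$, contradicting the previous lower bound. The only sub-case that requires extra care is $a \geq \rho|T|$ with $b > a+c$, where $b$ is automatically large enough to force $P_2$ to also be $\rho$-balanced, so that the argument must use both minimality bounds together with the well-linkedness bound coming from $P_1$; handling this cleanly is the main (minor) obstacle.
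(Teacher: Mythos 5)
Your proposal is correct and follows essentially the same route as the paper's proof: both compare the candidate violating cut $(X,Y)$ of $A$ with the auxiliary partitions $(X,Y\cup B)$ and $(X\cup B,Y)$ of $C$, play the $\alpha$-bandwidth of $C$ against the minimality of $(A,B)$, and close via the identity $\beta/(1-\beta)=\alpha/2$. The paper organizes the case split by first proving $|E(X,Y)|\geq\min\{|E(X,B)|,|E(Y,B)|\}$ and then splitting on whether $|\Gamma(C)\cap X|\leq|\Gamma(C)|/2$, whereas you split on $a\geq\rho|T|$ versus $b\geq\rho|T|$ and on $b\lessgtr a+c$; these are equivalent and the sub-case you flagged as delicate does go through exactly as you anticipate.
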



\begin{proof}
Consider any partition $(X,Y)$ of $A$. Let $S_X=\Gamma(C)\cap X$, $S_Y=\Gamma(C)\cap Y$, and $E'=E(X,Y)$. Let $U_X\subseteq X, U_Y\subseteq Y$ be the subsets of vertices incident on the edges of $E(A,B)$ (see Figure~\ref{fig: balanced-well-linked}). From Observation~\ref{obs: well-linkedness alt def}, it is enough to show that $|E'|\geq\frac{\alpha}{2+\alpha}\min\set{|X\cap \Gamma(A)|,|Y\cap \Gamma(A)|}=\frac{\alpha}{2+\alpha}\min\set{|S_X\cup U_X|,|S_Y\cup U_Y|}$. 
Our first observation is that $|E'|\geq \min\set{|E(X,B)|,E(Y,B)|}$. Indeed, assume otherwise. Assume w.l.o.g. that $|S_X|\leq |S_Y|$, and consider the cut $(B\cup X,Y)$. Then it is easy to see that this is a $\rho$-balanced cut of $G[C]$ with respect to $\Gamma(C)$, since $\rho\leq 1/4$. Moreover, $|E(B\cup X,Y)|\leq |E(A,B)|-|E(X,B)|+|E'|<|E(A,B)|$, contradicting the fact that $(A,B)$ is the minimum $\rho$-balanced cut with respect to $\Gamma(C)$. Therefore, $|E'|\geq \min\set{|E(X,B)|,E(Y,B)|}$ must hold.

We assume without loss of generality that $|E(X,B)|\leq |E(Y,B)|$, so $|E'|\geq |E(X,B)|$, and
we consider two cases.

The first case happens when $|S_X|\leq |\Gamma(C)|/2$. In this case, since $C$ has the $\alpha$-bandwidth property, $|\out(X)|=|E'|+|E(X,B)|\geq \alpha |S_X|$. Since $|E'|\geq |E(X,B)|$, we conclude that:

\[(2+\alpha)|E'|\geq |E'|+|E(X,B)|+\alpha|E(X,B)|\geq \alpha |S_X|+\alpha |E(X,B)|\geq \alpha (|S_X|+|U_X|),\]

 and so $|E'|\geq \frac{\alpha}{2+\alpha}(|S_X\cup U_X|)$.

The second case happens when $|S_X|>|\Gamma(C)|/2$. Then $(X,B\cup Y)$ is a $\rho$-balanced cut of $G[C]$ with respect to $\Gamma(C)$, and $|E(X,B\cup Y)|\leq |E(A,B)|-|E(Y,B)|+|E'|$. From the minimality of the cut $(A,B)$, $|E'|\geq |E(Y,B)|$ must hold. From the $\alpha$-well-linkedness of $\Gamma(C)$, we get that $|\out(Y)|=|E'|+|E(Y,B)|\geq \alpha |S_Y|$, and so:

\[(2+\alpha)|E'|\geq |E'|+|E(Y,B)|+\alpha |E(Y,B)|\geq \alpha |S_Y|+\alpha |E(Y,B)|\geq \alpha(|S_Y|+|U_Y|),\]

 and $|E'|\geq \frac{\alpha}{2+\alpha}(|S_X|+|U_X|)\geq \frac{\alpha}{2+\alpha}(|S_X\cup U_X|)$.

\begin{figure}
\scalebox{0.35}{\includegraphics{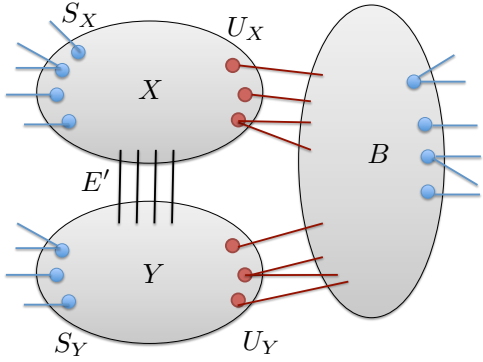}}\caption{Illustration for the proof of Lemma~\ref{lemma: balanced cut large piece wl}.}\label{fig: balanced-well-linked}
\end{figure}

\end{proof}

\label{---------------------------------------------subsec: degree reduction----------------------------}
\subsection{Treewidth and Degree Reduction}

Our proof of Theorem~\ref{thm: GMT} assumes that the maximum vertex degree of the input graph $G$ is bounded by a constant. There are several known results, that, given a graph $G$ of treewidth $k$, find a subgraph $G'$ of $G$, whose maximum vertex degree is bounded by a constant, and whose treewidth is close to $\tw(G)$. 
For example, Reed and Wood~\cite{ReedW-grid} have shown that any graph of treewidth $k$ contains a subgraph of maximum vertex degree at most $4$, and treewidth $\Omega(k^{1/4}/\log^{1/8}k)$. The algorithm of Chekuri and Ene~\cite{ChekuriE13} can be used to construct a subgraph $G'$ of $G$ of treewidth $k/\poly\log k$, and maximum vertex degree bounded by some constant. 
We use the following stronger result of~\cite{tw-sparsifiers}:

\begin{theorem}\cite{tw-sparsifiers}\label{thm: degree reduction}
Let $G$ be a graph of treewidth $k$. Then there is a subgraph $G'$ of $G$, whose maximum vertex degree is $3$, and $\tw(G')=\Omega(k/\poly\log k)$. Moreover, there is a set $\tset\subseteq V(G')$ of $\Omega(k/\poly\log k)$ vertices, such that $\tset$ is node-well-linked in $G'$, and each vertex of $\tset$ has degree $1$ in $G'$.
\end{theorem}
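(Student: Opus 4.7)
The plan is to proceed in two stages, each losing at most a $\poly\log k$ factor in treewidth. In the first stage I would reduce the maximum vertex degree to some (possibly large) constant $\Delta_0$; in the second stage I would push this further down to $3$ via a more delicate edge-selection argument.

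For the first stage, starting from a node-well-linked set $T_0 \subseteq V(G)$ of size $\Omega(k)$ guaranteed by Lemma~\ref{lem:tw-wl}, I would apply an iterative well-linked decomposition in the spirit of Chekuri--Khanna--Shepherd: repeatedly identify sparse balanced cuts, restrict to the side containing the larger share of terminals, and use the grouping technique of Theorem~\ref{thm: weak well-linkedness} to maintain a large well-linked terminal set after each contraction. Combined with a charging argument showing that the ``useful'' edges incident on each vertex of the flow witnessing well-linkedness have bounded cardinality in expectation, this yields a subgraph $G^*$ of max degree $\Delta_0$ with a node-well-linked set of size $\Omega(k/\poly\log k)$ -- essentially reproducing the Chekuri--Ene construction cited in the text.

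The second stage is the main technical obstacle: since we must output a subgraph of $G$, we cannot replace each high-degree vertex by a small gadget as in standard degree-reduction arguments. Instead, for each vertex $v$ of degree $d \in \{4,\ldots,\Delta_0\}$ we must select at most $3$ of its incident edges in a globally coordinated way. I would cast this as an LP relaxation, assigning fractional values $x_e \in [0,1]$ to each edge subject to $\sum_{e \ni v} x_e \leq 3$ at every vertex, together with flow-preservation constraints demanding that every balanced cut $(T',T'')$ of the surviving terminal set admits $\Omega(|T'|/\poly\log k)$ units of flow in the weighted subgraph. An iterative rounding argument in the spirit of Singh--Lau bounded-degree spanning tree rounding, or a randomized rounding with congestion concentration bounds, would then yield an integral selection retaining a $1/\poly\log k$ fraction of the flow and hence, by Theorem~\ref{thm: grouping} and Lemma~\ref{lem:tw-wl}, a subgraph of max degree $3$ with treewidth $\Omega(k/\poly\log k)$. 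This LP-rounding step is where I expect the bulk of the technical work to lie, since the hard degree-$3$ constraint at every vertex interacts with flow-preservation demands at every balanced cut in a way that naive rounding will destroy.

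Finally, to enforce that each terminal has degree exactly $1$, I would apply Theorem~\ref{thm: weak well-linkedness} once more on the surviving well-linked set: for each surviving terminal $t$, pick an incident edge $tt'$ whose other endpoint $t'$ has small degree, designate $t'$ as the new terminal, and prune the remaining edges at $t$. Using the $1/2$-well-linkedness guaranteed by Theorem~\ref{thm: weak well-linkedness} together with Observation~\ref{obs: EDP to NDP in degree-3} -- which at max degree $3$ converts edge-well-linkedness into node-well-linkedness for degree-$\leq 2$ terminals -- one recovers a node-well-linked set of degree-$1$ vertices of the required size, completing the construction.
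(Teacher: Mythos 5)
First, note that Theorem~\ref{thm: degree reduction} is not proved in this paper at all: it is imported verbatim from the cited reference, and the only thing the paper tells you about its proof is that the starting point is a \PoS of length $\Omega(\poly\log k)$ and width $\Omega(k/\poly\log k)$ obtained via the machinery of~\cite{CC14}. The cited construction is therefore combinatorial and PoS-driven: the degree-$3$ subgraph is assembled explicitly from the clusters and the node-disjoint inter-cluster paths of a long \PoS, exploiting the internal well-linkedness of each cluster to route bounded-degree trees locally, and the degree-$1$ terminals fall out as endpoints of surviving inter-cluster paths. Your proposal bypasses the \PoS entirely and instead attempts to pass from a constant-degree subgraph to a degree-$3$ one via a single global LP-rounding step, which is a genuinely different route.

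The difficulty is that your Stage~2 has a real gap. The fractional LP you describe is feasible (take $x_e = 3/\Delta_0$ uniformly: this satisfies every degree constraint and preserves a constant fraction of every cut flow, since $\Delta_0$ is a constant), but neither rounding method you invoke produces the integral object. Singh--Lau iterative rounding certifies additive-one degree violation for a \emph{single} matroid base such as a spanning tree; your constraint system is an exponential family of flow-preservation constraints, one per balanced bipartition of the terminal set, which is neither a matroid nor a laminar family, and no iterative-rounding analysis is available that simultaneously enforces a hard degree-$3$ cap and retains a $1/\poly\log k$ fraction of all of these flows at once. Randomized rounding fares no better: retaining each edge with probability roughly $3/\Delta_0$ (or sampling three incident edges per vertex) multiplies the survival probability of a fixed length-$\ell$ flow path by roughly $(3/\Delta_0)^{\ell}$, and since the paths certifying well-linkedness in $G^*$ can have length polynomial in $k$, this decays far faster than $1/\poly\log k$. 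Concentration bounds give you soft degree $O(\log n)$ or bounded edge-congestion, not a hard cap of $3$. This is exactly why the cited proof first imposes the rigid scaffold of a \PoS: the degree bound is then enforced \emph{locally}---inside each cluster, where Theorem~\ref{thm: grouping} already produces node-well-linkedness witnessed by disjoint paths, and along the inter-cluster paths, which are node-disjoint to begin with---rather than globally on a graph about which one knows nothing beyond its treewidth. Without that structural scaffold, Stage~2 as written does not go through, and it is the crux of the construction rather than a detail that can be deferred.
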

 

The starting point of the above theorem is a \PoS of length $\Omega(\poly\log k)$ and width $\Omega(k/\poly\log k)$, whose existence follows from~\cite{CC14}. 
We chose to use Theorem~\ref{thm: degree reduction} as our starting point, since it provides the best parameters, and, due to Observation~\ref{obs: EDP to NDP in degree-3}, degree-$3$ graphs are somewhat easier to work with. But our proof can work as well using the result of~\cite{ReedW-grid} as a starting point instead.

\label{---------------------------------------------subsec: PoS----------------------------}

\subsection{A Path-of-Sets System and the Grid Minor}
A central combinatorial object that we use  is the \PoS, introduced in~\cite{CC14}. A closely related object, called a grill, was previously defined by Leaf and Seymour~\cite{LeafS12}.


\begin{definition}
A \PoS $(\sset,\bigcup_{i=1}^{\ell-1}\pset_i)$ of width $w$ and length $\ell$ in a graph $G$ consists of:
\begin{itemize}
 
\item A sequence $\sset=(S_1,\ldots,S_{\ell})$ of $\ell$ disjoint vertex subsets of $G$, where for each $i$, $G[S_i]$ is connected;

\item For each $1\leq i\leq \ell$, two disjoint sets $A_i,B_i\subseteq S_i$ of $w$ vertices each; the vertices of $A_1\cup B_{\ell}$ must have degree at most $2$ in $G$; and

\item For each $1\leq i<\ell$, a set $\pset_i:B_i\sconnect A_{i+1}$ of $w$ paths, such that all paths in $\bigcup_i\pset_i$ are mutually node-disjoint, and do not contain the vertices of $\bigcup_{S_j\in \sset}S_j$ as inner vertices (see Figure~\ref{fig:pos}).
\end{itemize}

 We refer to vertex sets $A_1$ and $B_{\ell}$ as the \emph{anchors} of the \PoS.
 
We say that it is an \emph{$\alpha$-weak} \PoS, if for all $1\leq i\leq \ell$, $A_i\cup B_i$ is $\alpha$-well-linked in $G[S_i]$; we say that it is a \emph{good} \PoS, if for all $1\leq i\leq \ell$, $B_i$ is $1$-well-linked in $G[S_i]$, and $(A_i,B_i)$ are $\half$-linked in $G[S_i]$. Finally, we say that it is a \emph{perfect} \PoS, if for each $1\leq i\leq \ell$, $A_i$ is node-well-linked in $G[S_i]$, $B_i$ is node-well-linked in $G[S_i]$, and $(A_i,B_i)$ are node-linked in $G[S_i]$.
\end{definition}

The following theorem allows us to turn an $\alpha$-weak \PoS into a good one, and eventually into a perfect one, with only a small loss in the system's width. The proof appears in Appendix. A simpler proof, with somewhat weaker parameters, follows easily from Theorems~\ref{thm: grouping} and \ref{thm: linkedness from node-well-linkedness}.

\begin{theorem}\label{thm from weak to perfect PoS}
Let $G$ be a graph with maximum vertex degree at most $3$, and suppose we are given an $\alpha$-weak \PoS of width $w$ and length $\ell$ in $G$, where $0<\alpha\leq 1$, and $1/\alpha$ is an integer. Then $G$ contains a good \PoS of width $\ceil{\alpha w/4}$ and length $\ell$, and it contains a perfect \PoS of width at least $\alpha w/c^*$, for some constant $c^*$, and length $\ell$. Moreover, if we denote by $A_1,B_{\ell}$ the anchors of the original \PoS, and by $A_1',B_{\ell}'$ the anchors of the new \PoS, then in each case $A_1'\subseteq A_1$ and $B_{\ell}'\subseteq B_{\ell}$.
\end{theorem}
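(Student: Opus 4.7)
The plan is to convert the given $\alpha$-weak \PoS into a good, and subsequently a perfect, \PoS by processing each cluster $S_i$ separately, using the boosting results from Theorems~\ref{thm: weak well-linkedness}, \ref{thm: grouping}, and \ref{thm: linkedness from node-well-linkedness} inside $G[S_i]$. For each $i$, I want to extract subsets $A_i'\subseteq A_i$ and $B_i'\subseteq B_i$ of the required size whose well-linkedness/linkedness properties are stronger than those implied by the $\alpha$-well-linkedness of $A_i\cup B_i$. The new path system $\pset_i'$ will be a sub-collection of the original $\pset_i$: the paths in $\pset_i$ realize a bijection between $B_i$ and $A_{i+1}$, so once $B_i'$ is chosen, $A_{i+1}'$ is forced to be its image under this bijection, and $\pset_i'$ is the corresponding matching sub-collection. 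This cross-cluster rigidity is the main complication beyond the direct applications of the three boosting theorems.

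For the good \PoS, I would apply Theorem~\ref{thm: weak well-linkedness} inside each $G[S_i]$ to the $\alpha$-well-linked terminal set $A_i\cup B_i$, obtaining a family $\fset_i$ of vertex-disjoint trees in $G[S_i]$, each containing $\Theta(1/\alpha)$ terminals. Selecting one representative per tree, balanced carefully between $A_i$ and $B_i$, yields subsets $A_i'$ and $B_i'$ of size $\Omega(\alpha w)$; a tight accounting of tree sizes then matches the claimed $\ceil{\alpha w/4}$ bound. The $1$-well-linkedness of $B_i'$ and the $\half$-linkedness of $(A_i',B_i')$ in $G[S_i]$ follow from the standard three-step flow argument: spread one unit of flow from each chosen representative across the $\Theta(1/\alpha)$ siblings in its tree, reroute via the $\alpha$-well-linkedness of $A_i\cup B_i$ in $G[S_i]$ at congestion $1/\alpha$, and gather at the target representatives. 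To upgrade the resulting edge-disjoint routing to node-disjoint routing where required (so that $1$-well-linkedness of $B_i'$ amounts to node-well-linkedness for these terminals), I would invoke Observation~\ref{obs: EDP to NDP in degree-3}: since $G$ has maximum degree $3$ and the terminals of $A_i\cup B_i$ have degree at most $2$ in $G[S_i]$, edge-disjoint paths between them are automatically node-disjoint.

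For the perfect \PoS, I would additionally apply Theorem~\ref{thm: grouping} inside each $G[S_i]$ to extract a node-well-linked subset of $A_i\cup B_i$ of size $\Omega(\alpha w)$, and then Theorem~\ref{thm: linkedness from node-well-linkedness} to promote the connectivity between its intersections with $A_i$ and with $B_i$ to node-linkedness, losing a constant factor at each step and arriving at width at least $\alpha w/c^*$ for some constant $c^*$. The anchor inclusions $A_1'\subseteq A_1$ and $B_\ell'\subseteq B_\ell$ are then automatic, since all new subsets are taken from the old ones.

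The main obstacle will be the cross-cluster coordination: although Theorems~\ref{thm: weak well-linkedness} and \ref{thm: grouping} give considerable freedom in choosing representatives inside each $G[S_i]$, the bijection $\pset_i$ rigidly identifies the choice of $B_i'$ with that of $A_{i+1}'$. I would address this by sweeping through $i=1,\ldots,\ell$ and, inside $G[S_i]$, choosing a maximum set of tree representatives that respects both the grouping of $B_i$ and the inherited set $A_i'$ pulled back through $\pset_{i-1}$, absorbing any mismatch into the constant in the width bound. The sharpness of $\ceil{\alpha w/4}$ for the good \PoS, versus $\alpha w/c^*$ for the perfect one, reflects precisely how much slack this cross-cluster alignment, together with the additional node-linkedness upgrade, consumes.
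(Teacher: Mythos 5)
There is a genuine gap, and it is one the paper itself calls out at the start of its proof: directly iterating Theorems~\ref{thm: weak well-linkedness}, \ref{thm: grouping} and \ref{thm: linkedness from node-well-linkedness} inside each cluster, and then reconciling the choices across the rigid bijection $\pset_i$, only yields width $\Theta(\alpha^2 w)$, not $\Theta(\alpha w)$. Concretely, your ``absorbing any mismatch into the constant'' step is where the argument breaks. After you group $B_i$ in $S_i$ into trees of size $\Theta(1/\alpha)$ and carry a choice of $B_i'$ of size $\Theta(\alpha w)$ across $\pset_i$ to $S_{i+1}$, the inherited set $A_{i+1}'$ has no relationship to the trees $\fset_{i+1}$ that Theorem~\ref{thm: weak well-linkedness} (or Theorem~\ref{thm: grouping}) produces there. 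A single tree can contain up to $2\Delta\ceil{1/\alpha}$ vertices of $A_{i+1}'$, so restricting to one vertex per tree costs another factor of $\Theta(1/\alpha)$, and you end up with $\Theta(\alpha^2 w)$. This is not a constant-factor mismatch, so the claimed $\ceil{\alpha w/4}$ and $\alpha w/c^*$ bounds are out of reach by this route. There is a second, independent problem with the ``good'' stage: a good \PoS requires $B_i'$ to be $1$-well-linked in $G[S_i]$, but Theorem~\ref{thm: weak well-linkedness} and the standard three-step spread/route/gather flow argument only deliver congestion $2$, i.e.\ $\tfrac12$-well-linkedness. The grouping technique simply cannot manufacture $1$-well-linkedness.

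The paper avoids both obstacles by changing the first stage entirely. To get the good \PoS it takes, in each $S_i$, a \emph{minimum} $\tfrac14$-balanced cut $(X_i,Y_i)$ with respect to $B_i$; the minimality of this cut forces the boundary set $Z_i$ to be exactly $1$-well-linked in $G[X_i]$ (Claim~\ref{claim: wl in balanced cut}), which is where the $1$-well-linkedness for $B_i'$ comes from. It then sets up a single source--sink flow network per boundary (Lemma~\ref{lemma: select the As and Bs}) that routes from $Z_i$ through $\tilde B_i$, across $\pset_i$, through $\tilde A_{i+1}$, to $Z_{i+1}$, and selects $\pset_i'$ as the integral flow paths; this coordinates the choice across the bijection in one shot, so the cross-cluster cost is only a constant factor and the good \PoS has width $\ceil{\alpha w/4}$. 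Only \emph{then} does the paper apply Theorems~\ref{thm: grouping} and \ref{thm: linkedness from node-well-linkedness} cluster-by-cluster, which is exactly your proposed mechanism --- but at that point the well-linkedness parameter is an absolute constant, so the iterated grouping loses only a constant, giving the perfect \PoS of width $\alpha w/c^*$. If you want to retain your approach, you would have to either accept $\Theta(\alpha^2 w)$ (which is the weaker bound the paper explicitly discards) or insert a balanced-cut/flow stage of the paper's type to get to constant well-linkedness before grouping.
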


The following theorem was implicitly proved in~\cite{CC14}. We include its proof in Appendix for completeness.

\begin{theorem}\label{thm: find grid minor or good linkage}
Let $G=(V,E)$ be a connected graph, and let $A,B\subseteq V$ be two disjoint subsets of vertices of $G$, with $|A|=|B|=w$, such that $(A,B)$ are node-linked in $G$. Then for all integers $h_1,h_2>1$ with $(16h_1+10)h_2\leq w$, either $G$ contains the $(h_1\times h_1)$-grid minor, or there is a collection $\pset$ of $h_2$ node-disjoint paths, connecting vertices of $A$ to vertices of $B$, such that for every pair $P,P'\in \pset$ of paths with $P\neq P'$, there is a path $\beta_{P,P'}\subseteq G$, connecting a vertex of $P$ to a vertex of $P'$, where $\beta_{P,P'}$ is internally disjoint from $\bigcup_{P''\in \pset}V(P'')$.
\end{theorem}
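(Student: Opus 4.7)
The plan is to build the desired collection $\pset$ of paths and bridges by exploiting the rich supply of $A$-$B$ paths provided by node-linkedness, and to extract the $(h_1\times h_1)$-grid minor from any obstruction encountered. The generous hypothesis $w \geq (16h_1+10)h_2$ suggests working with $h_2$ ``bundles'' of $L := 16h_1+10$ paths each, so that either the redundancy within each bundle lets us route all of the required bridges, or else the concentration of paths around an obstruction forces a grid minor.

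Concretely, I would first apply the node-linkedness of $(A,B)$ to the sets $A$ and $B$ themselves, yielding a family $\qset=\{Q_1,\ldots,Q_w\}$ of $w$ pairwise node-disjoint $A$-$B$ paths in $G$. I would partition $\qset$ into $h_2$ bundles $\B_1,\ldots,\B_{h_2}$ each of size $L$, and try to select one representative $P_i\in\B_i$ per bundle such that $\pset:=\{P_1,\ldots,P_{h_2}\}$ has the bridge property. The remaining $L-1$ paths of each $\B_i$ form a reservoir: because each bridge $\beta_{P_i,P_j}$ is only required to be internally disjoint from $\bigcup_k V(P_k)$ --- and in particular, bridges may freely share vertices with one another --- all reservoir paths are simultaneously available for building every bridge. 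For each pair $(i,j)$, one would build $\beta_{P_i,P_j}$ by concatenating a detour through a reservoir path of $\B_i$, a connection within $G$ avoiding $\bigcup_k V(P_k)$, and a detour through a reservoir path of $\B_j$, arriving at $P_j$. If representatives can be chosen so that every such bridge exists, the collection $\pset$ is the required family.

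The main obstacle, and the heart of the proof, is the failure case: if for every choice of representatives some pair $(i,j)$ admits no bridge. Then in $G\setminus \bigcup_{k\neq i,j}V(P_k)$ the vertex sets $V(P_i)$ and $V(P_j)$ lie in distinct connected components, so there is a small vertex separator within the contaminated region $\bigcup_{k\neq i,j}V(P_k)$ cutting off the side containing $\B_i$ from the side containing $\B_j$. I would then leverage the $2L = 32h_1+20$ node-disjoint $A$-$B$ paths in $\B_i\cup \B_j$, all of which are trapped against this separator, to construct an $(h_1\times h_1)$-grid minor: roughly $h_1$ of the paths of $\B_i$ will serve as the ``rows'' of the grid, and $h_1$ crossings with the paths of $\B_j$ through the separator will provide the ``columns,'' with the constant $16h_1+10$ calibrated so that enough paths remain after discarding a bounded number that are too short, too close to the endpoints, or otherwise unusable. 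The main technical difficulty is formalizing this extraction: one needs a careful pigeonhole argument to pick the row and column paths, an analysis of how the forced interlace along the separator guarantees that the columns meet the chosen rows in the correct order, and possibly a further application of node-linkedness to route short connecting segments across the cut to realize the grid model.
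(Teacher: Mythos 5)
There is a real gap here, and it lies precisely where you flag ``the main technical difficulty'': extracting a grid from the failure case. Two node-disjoint families of $A$--$B$ paths that must cross each other do \emph{not} automatically yield a $(h_1\times h_1)$-grid minor. The crossing pattern can be topologically arbitrary -- paths from $\B_i$ can cross the paths from $\B_j$ in any order, multiple times, and in a way that produces nothing resembling a grid. To obtain a grid you need the crossings to be consistently ordered, i.e.\ you need to extract an essentially \emph{planar} configuration from the mess of crossings, and your proposal has no mechanism to do this. The bridge-failure hypothesis gives you a separator, but a separator plus $2L$ paths hitting it does not by itself impose any order on the crossings. This is the entire technical content of the theorem, and ``a careful pigeonhole argument'' and ``forced interlace'' will not produce it without an additional structural idea.

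The paper's proof supplies exactly the missing structural idea. It builds an auxiliary graph $H_\lset$ whose vertices are the $w$ paths of an $A$--$B$ linkage $\lset$, with adjacency whenever a bridge exists avoiding all of $V(\lset)$, and invokes the Leaf--Seymour dichotomy (Theorem~\ref{thm: many leaves or a long 2-path}): since $H_\lset$ is connected on $w$ vertices, it either has a spanning tree with $h_2$ leaves (the leaves give $\pset$ directly, because any two leaves are joined by a tree path that translates into a bridge in $G$), or it contains a $2$-path of length $8h_1+1$. The long $2$-path is the crucial ingredient you are missing: it means there is a chain $P_0,\dots,P_{8h_1}$ of linkage paths that, relative to $G\setminus V(\lset)$, are connected only to their immediate neighbors in the chain. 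This linear order forces any second family of $A$--$B$ paths (routed by node-linkedness so that they cross the chain) to sweep through $P_1,\dots,P_{2h_1}$ in sequence. After an iterative ``bump'' and ``cross'' elimination that strictly decreases the number of edges, the union of the chain segment and the crossing paths becomes planar, and at that point contracting the intersection segments literally gives the grid. Your bundle idea also differs from, and is weaker than, the spanning-tree step: choosing leaves of a spanning tree of $H_\lset$ guarantees bridges between every pair automatically, whereas choosing one representative per bundle requires an additional argument that the chosen paths can be mutually bridged, and it is not obvious how to guarantee that all ${h_2\choose 2}$ pairs succeed simultaneously.
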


Given a \PoS $(\sset,\bigcup_{i=1}^{\ell-1}\pset_i)$ in $G$, we say that $G'$ is a subgraph of $G$ spanned by the \PoS, if $G'$ is the union of $G[S_i]$ for all $1\leq i\leq \ell$ and all paths in $\bigcup_{i=1}^{\ell-1}\pset_i$. 

The following corollary of Theorem~\ref{thm: find grid minor or good linkage} allows us to obtain a grid minor from a Path-of-Sets system.  Its proof has implicitly appeared in~\cite{CC14}, and is included in Appendix for completeness.

\begin{corollary}\label{cor: paths from the path-set system}\cite{CC14}
Let $G$ be any graph, and let $(\sset,\bigcup_{i=1}^{\ell-1}\pset_i)$ be a perfect \PoS of length $\ell\geq 2$ and width $w$ in $G$. Let $h_1,h_2$ be integers with  $(16h_1+10)h_2\leq w$, and let $G'$ be a subgraph of $G$ spanned by the \PoS. Then either $G'$ contains the $(h_1\times h_1)$-grid minor, or there is a collection $\qset$ of $h_2$ node-disjoint paths in $G'$, connecting vertices of $A_1$ to vertices of $B_{\ell}$, such that for all $1\leq i\leq \ell$, for every path $Q\in \qset$, $S_i\cap Q$ is a path, and moreover, for every $1\leq j\leq \floor{\ell/2}$, for every pair $Q,Q'\in \qset$ of paths, there is a path $\beta_{2i}(Q,Q')$ contained in $G'[S_{2i}]$, connecting a vertex of $Q$ to a vertex of $Q'$, such that $\beta_{2i}(Q,Q')$ is internally disjoint from all paths in $\qset$.
\end{corollary}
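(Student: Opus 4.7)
The plan is to apply Theorem~\ref{thm: find grid minor or good linkage} to each even-indexed cluster $S_{2j}$, and then chain the resulting paths across all clusters using the inter-cluster paths $\pset_i$ together with the node-linkedness property guaranteed by the perfect \PoS. Specifically, for every $1\le j\le \floor{\ell/2}$, since $(A_{2j},B_{2j})$ are node-linked in $G[S_{2j}]$ with $|A_{2j}|=|B_{2j}|=w$ and $(16h_1+10)h_2\le w$, applying Theorem~\ref{thm: find grid minor or good linkage} inside $G[S_{2j}]$ either produces the $(h_1\times h_1)$-grid minor in $G[S_{2j}]\subseteq G'$ (in which case we are done), or yields a collection $\pset^{2j}$ of $h_2$ node-disjoint paths from some $\tA_{2j}\subseteq A_{2j}$ to some $\tB_{2j}\subseteq B_{2j}$, where for every pair $P,P'\in\pset^{2j}$ there is a path $\beta_{P,P'}$ in $G[S_{2j}]$ connecting $P$ to $P'$ and internally disjoint from $V(\pset^{2j})$.

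Next I would propagate the chosen endpoints through the odd clusters. For each $j\ge 1$, the set $\tA_{2j}$ is matched, via the inter-cluster paths $\pset_{2j-1}$ (which form a perfect matching between $B_{2j-1}$ and $A_{2j}$), to a size-$h_2$ subset $\tB_{2j-1}\subseteq B_{2j-1}$; similarly $\tB_{2j}$ is matched through $\pset_{2j}$ to a size-$h_2$ subset $\tA_{2j+1}\subseteq A_{2j+1}$ (when $2j<\ell$). For the odd boundary clusters we pick $\tA_1\subseteq A_1$ and (if $\ell$ is odd) $\tB_\ell\subseteq B_\ell$ arbitrarily of size $h_2$. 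Now in each odd cluster $S_i$, the pair $(A_i,B_i)$ is node-linked in $G[S_i]$ and $|\tA_i|=|\tB_i|=h_2\le w$, so we obtain a set $\qset^i$ of $h_2$ node-disjoint paths in $G[S_i]$ connecting $\tA_i$ to $\tB_i$.

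The final collection $\qset$ is obtained by concatenating, for each matched sequence of endpoints, one path from each $\qset^i$ (for odd $i$) or $\pset^{2j}$ (for even $i=2j$) with the corresponding paths of $\pset_1,\ldots,\pset_{\ell-1}$. The clusters $S_1,\ldots,S_\ell$ are pairwise disjoint, the inter-cluster paths are internally disjoint from $\bigcup_i S_i$ and are mutually node-disjoint, and within each cluster we used node-disjoint paths; hence the $h_2$ resulting paths in $\qset$ are node-disjoint, each starts in $A_1$, ends in $B_\ell$, and intersects each $S_i$ in a single sub-path. The crossing paths $\beta_{2j}(Q,Q')$ required in the conclusion are precisely the paths $\beta_{P,P'}$ produced inside $G[S_{2j}]$ by Theorem~\ref{thm: find grid minor or good linkage}, where $P,P'\in\pset^{2j}$ are the sub-paths of $Q,Q'$ in $S_{2j}$; they lie in $G'[S_{2j}]$ and are internally disjoint from all paths of $\qset$, because $\beta_{P,P'}$ avoids $V(\pset^{2j})\supseteq$ all vertices of $\qset$ contained in $S_{2j}$, and the remaining parts of the paths of $\qset$ lie outside $S_{2j}$.

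The only genuine issue in the argument is the bookkeeping of endpoint matching through the matchings $\pset_i$, together with the verification that choosing the endpoint sets $\tA_i,\tB_i$ in every cluster is consistent; since each $\pset_i$ is a perfect matching of $B_i$ to $A_{i+1}$ and each cluster's path set is selected only after its endpoint sets have been fixed by the adjacent even cluster, no conflict arises. The only place where a grid minor may be extracted is inside a single $G[S_{2j}]\subseteq G'$, so Theorem~\ref{thm: find grid minor or good linkage} gives the desired dichotomy directly.
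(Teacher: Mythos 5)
Your proposal is correct and follows essentially the same approach as the paper's own proof: apply Theorem~\ref{thm: find grid minor or good linkage} to each even-indexed cluster, and if no grid minor arises, propagate the endpoint sets through the matching paths $\pset_i$ and use node-linkedness in the odd-indexed clusters to complete the chaining. The only cosmetic difference is that the paper handles the boundary cases (when $\ell$ itself is even) by padding with single-vertex paths rather than by your case split on the parity of $\ell$, but the underlying argument is identical.
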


The following corollary, that was also proved in~\cite{CC14}, completes the construction of the grid minor. We include its proof for completeness in Appendix.
The corollary slightly improves upon a similar result of~\cite{LeafS12}.
\begin{corollary}\label{cor: from path-set system to grid minor}\cite{CC14}
  Let $G$ be any graph, $g>1$ an integer, and let $\left(\sset,\bigcup_{i=1}^{\ell-1}\pset_i\right)$ be a perfect \PoS of width  $w=16g^2+10g$ and length $\ell=2g(g-1)$ in $G$. Then $G$ contains the $\left (g\times g\right )$-grid as a minor.
\end{corollary}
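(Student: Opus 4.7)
The plan is to apply Corollary~\ref{cor: paths from the path-set system} to the given perfect \PoS with $h_1 = h_2 = g$. The hypothesis $(16h_1+10)h_2 \le w$ is satisfied with equality, since $(16g+10)g = 16g^2 + 10g = w$. If the first case of the corollary occurs, then a subgraph of $G$ spanned by the \PoS already contains the desired $(g\times g)$-grid minor and we are done. Otherwise we obtain $g$ node-disjoint paths $\qset = \set{Q_1,\ldots,Q_g}$ from $A_1$ to $B_\ell$, each intersecting every cluster in a subpath, together with, for each $1 \le j \le \floor{\ell/2} = g(g-1)$ and each pair $Q,Q' \in \qset$, a path $\beta_{2j}(Q,Q') \subseteq G[S_{2j}]$ internally disjoint from all paths in $\qset$. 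I plan to use $Q_1,\ldots,Q_g$ as the rows of the target grid and to assemble the $g(g-1)$ available $\beta$-paths into $g$ columns $V_1,\ldots,V_g$, each visiting the rows in order.

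For each $k \in \set{1,\ldots,g}$, I would assign column $V_k$ to use the $g-1$ even-indexed clusters $S_{c_k^1},\ldots,S_{c_k^{g-1}}$, where $c_k^j := 2(k-1)(g-1)+2j$. The $g(g-1)$ indices $c_k^j$ for $1\le k\le g$ and $1\le j\le g-1$ are exactly the even integers in $\set{2,4,\ldots,2g(g-1)}$, so every even-indexed cluster of the \PoS is used by exactly one column. The $j$-th segment of $V_k$ is the path $\beta_{c_k^j}(Q_j,Q_{j+1})$, and $V_k$ itself is the concatenation of these $g-1$ segments with subpaths of the intermediate rows: for $1 < i < g$, the subpath of $Q_i$ used by $V_k$ runs from the endpoint of $\beta_{c_k^{i-1}}$ on $Q_i$ to the endpoint of $\beta_{c_k^i}$ on $Q_i$. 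Since $c_k^{i-1} < c_k^i$ and $Q_i$ visits the clusters in order of their indices, this subpath is well-defined and traversed in the forward direction along $Q_i$.

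The hard part is to verify that the portions of each row $Q_i$ used by distinct columns are pairwise vertex-disjoint; this is where the choice $\ell = 2g(g-1)$ is tight. For fixed $i$ with $1 < i < g$ and fixed $k$, the subpath of $Q_i$ used by $V_k$ lies inside clusters and inter-cluster paths whose \PoS indices belong to the interval $[c_k^{i-1},c_k^i]$. The gap between the intervals of consecutive columns is $c_{k+1}^{i-1} - c_k^i = 2(g-1)-2 = 2g-4$, which is at least $2$ for $g\ge 3$ (the case $g=2$ has no intermediate rows and is handled trivially). Hence these intervals are pairwise disjoint, and so are the corresponding subpaths of $Q_i$. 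For $i \in \set{1,g}$ each column contributes only a single attachment vertex on $Q_i$, lying in a distinct cluster across columns, so distinctness is immediate. Once this disjointness is established, defining the minor model is routine: set $\phi(v(i,j)) := Q_i \cap V_j$ (a subpath of $Q_i$ or a single vertex, hence connected), map each horizontal edge $(v(i,j),v(i,j+1))$ to the subpath of $Q_i$ lying strictly between $\phi(v(i,j))$ and $\phi(v(i,j+1))$, and map each vertical edge $(v(i,j),v(i+1,j))$ to $\beta_{c_j^i}(Q_i,Q_{i+1})$. The node-disjointness of $\qset$, the disjointness of the clusters in the \PoS, and the internal disjointness of every $\beta$-path from $\qset$ together imply that $\phi$ is a valid model of the $(g\times g)$-grid in $G$.
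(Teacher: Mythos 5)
Your proof is correct and follows essentially the same route as the paper's: apply Corollary~\ref{cor: paths from the path-set system} with $h_1=h_2=g$, use the resulting paths as rows, and assign the $g(g-1)$ vertical edges to the $g(g-1)$ even-indexed clusters in column-major order (your index $c_k^j=2(k-1)(g-1)+2j$ matches the paper's ordering exactly). The paper simply asserts that completing the embedding "is immediate," whereas you spell out the disjointness verification; no new ideas, just a more careful exposition of the same argument.
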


\label{-------------------------------------------sec: building PoS--------------------------------}
\section{Constructing a Path-of-Sets System}\label{sec: building PoS}

The next theorem is central to our construction of the \PoS.

\begin{theorem}\label{thm: phase execution}
Suppose we are given a graph $G$ with maximum vertex degree $3$, and a good \PoS $\left (\sset=(S_1,\ldots,S_{\ell}),\bigcup_{i=1}^{\ell-1}\pset_i\right )$ of length $\ell$ and width $w$, where $w\geq 12000$ is an integral power of $2$. Let $A_1\subseteq S_1$, $B_{\ell}\subseteq S_{\ell}$ denote the anchors of the \PoS.
Then there is a good \PoS $\left (\sset'=(S'_1,\ldots,S'_{2\ell}),\bigcup_{i=1}^{2\ell-1}\pset'_i\right)$ of length $2\ell$ and width $w/2^{17}$ in $G$.
Moreover, if $A'_1\subseteq S_1'$, $B_{2\ell}'\subseteq S'_{2\ell}$ denote the anchors of this new \PoS, then $A'_1\subseteq A_1$ and $B'_{2\ell}\subseteq B_{\ell}$.
\end{theorem}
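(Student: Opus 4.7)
The plan is to apply the cluster-splitting subroutine (to be established in Section~\ref{sec: splitting a cluster}) to each cluster $S_i$ of the given good \PoS, producing two disjoint clusters $X_i,Y_i\subseteq S_i$ that will play the roles of $S'_{2i-1}$ and $S'_{2i}$ in the new \PoS. The key technical task is to chain these $\ell$ splittings so that the paths in $\bigcup_i\pset_i$ can be reused to connect the resulting clusters while respecting the internal-disjointness requirement of a \PoS.

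Before invoking the splitting subroutine on $S_i$, we verify its hypothesis: since the input \PoS is good, $B_i$ is $1$-well-linked in $G[S_i]$ and $(A_i,B_i)$ are $\half$-linked in $G[S_i]$; a standard argument combining these shows that $A_i\cup B_i$ is $\alpha_0$-well-linked in $G[S_i]$ for an absolute constant $\alpha_0>0$. We then process $S_1,S_2,\ldots,S_\ell$ in order. For $i=1$, apply splitting with $T_1$ equal to any $(w/c)$-subset of $A_1$ and $T_2=B_1$, where $c$ is the constant from the splitting lemma. For $i\geq 2$, choose $T_1\subseteq A_i$ to be the subset of $A_i$ matched through $\pset_{i-1}$ to a $(w/c)$-subset of the set $\tilde B_{i-1}$ produced by the previous splitting. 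The output is $\tilde A_i\subseteq X_i\cap T_1$ of size $\geq w/c^2$, $\tilde B_i\subseteq Y_i\cap B_i$ of size $\geq w/c$, and an edge set $E'_i\subseteq E(X_i,Y_i)$ of size $\geq w/c^2$ with endpoints $\mathcal U_{X_i}\subseteq X_i$ and $\mathcal U_{Y_i}\subseteq Y_i$, such that $\mathcal U_{X_i}\cup\tilde A_i$ is well-linked in $G[X_i]$ and $\mathcal U_{Y_i}\cup\tilde B_i$ is well-linked in $G[Y_i]$.

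Now set $S'_{2i-1}:=X_i$ and $S'_{2i}:=Y_i$, and let $w':=w/2^{17}$. The paths in $\pset'_{2i-1}$ are the single-edge paths taken from a $w'$-subset of $E'_i$; the paths in $\pset'_{2i}$ are the sub-collection of $\pset_i$ whose $B_i$-endpoint lies in $\tilde B_i$ \emph{and} whose $A_{i+1}$-endpoint lies in $\tilde A_{i+1}$. The left-to-right chaining above guarantees that at least $|\tilde A_{i+1}|\geq w/c^2$ such paths exist, since $\tilde A_{i+1}\subseteq T_1^{(i+1)}$, and $T_1^{(i+1)}$ was by construction matched into $\tilde B_i$ through $\pset_i$. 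The new anchors $A'_j,B'_j$ are chosen as $w'$-subsets of $\tilde A_i$, $\mathcal U_{X_i}$, $\mathcal U_{Y_i}$, $\tilde B_i$ consistent with these matchings, and internal disjointness of $\bigcup_j\pset'_j$ from the new clusters follows because the paths in $\pset'_{2i-1}$ are single edges in $\partial(X_i,Y_i)\subseteq S_i$ while those in $\pset'_{2i}$ are restrictions of paths in $\pset_i$, which were already internally disjoint from all $S_j$. The resulting object is an $\alpha$-weak \PoS of length $2\ell$ for some absolute constant $\alpha>0$; applying Theorem~\ref{thm from weak to perfect PoS} then converts it into a good \PoS at the cost of a further constant factor in width. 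The ``moreover'' clause follows because the new anchors are chosen from $\tilde A_1\subseteq A_1$ and $\tilde B_\ell\subseteq B_\ell$.

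The main obstacle is ensuring that, after the independent splittings, enough paths of $\pset_i$ connect $\tilde B_i$ to $\tilde A_{i+1}$: a naive splitting of each $S_i$ would provide no control over these matchings, since the splitting subroutine chooses the subsets $\tilde A_i,\tilde B_i$, not us. The left-to-right chaining — feeding the already-determined $\tilde B_{i-1}$ (pulled back through $\pset_{i-1}$) into the input $T_1$ at the next cluster — is exactly what forces the endpoints to align, and the quantitative losses ($w\mapsto w/c\mapsto w/c^2$ per splitting, plus the constant from the weak-to-good conversion) combine to yield the stated width $w/2^{17}$.
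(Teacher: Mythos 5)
Your chaining scheme is the same one the paper uses: split each cluster, pass the result of the $i$-th splitting forward as input to the $(i+1)$-th, and retain exactly those paths of $\pset_i$ whose two endpoints are simultaneously selected by the $i$-th and $(i+1)$-th splittings. The structure, the weak-to-good conversion at the end, and the ``moreover'' clause are all as in the paper. So the idea is right, but as written the proof has a concrete gap in how you invoke the splitting subroutine.

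Theorem~\ref{thm: main advanced splitting} requires that \emph{every vertex of $T_1\cup T_2$ has degree exactly $1$ in the graph it is applied to}. You apply it directly to $G[S_i]$ with $T_1\cup T_2\subseteq A_i\cup B_i$, but a vertex of $A_i\cup B_i$ can have degree up to $2$ in $G[S_i]$ (it already spends one degree unit on a path of $\pset_{i-1}$ or $\pset_i$, and can have two more edges inside $S_i$). So the hypotheses are not met and the theorem cannot be applied as stated. Related, and subtler: you take $B'_{2i-1}\subseteq\mathcal U_{X_i}$ and $A'_{2i-1}\subseteq\tilde A_i$ as the two anchor sets inside $S'_{2i-1}$, and similarly for even indices, but when $\tilde A_i,\tilde B_i$ live inside $A_i\cup B_i\subseteq S_i$ and $\mathcal U_{X_i},\mathcal U_{Y_i}$ are endpoints of edges in $E(X_i,Y_i)\subseteq E(G[S_i])$, these two sets can intersect; the definition of a \PoS requires the two anchor sets of a cluster to be disjoint, and nothing in the $2$-cluster-chain guarantee rules out an overlap. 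The paper sidesteps both problems with one gadget that you have omitted: it first builds an auxiliary graph $G'$ in which each path $P\in\pset$ is contracted to a two-edge path through a fresh vertex $t_P$, and fresh degree-$1$ leaves are attached to $A_1$ and $B_\ell$; the splitting theorem is then applied inside $G'[S_i\cup Z'_{i-1}\cup Z_i]$ with $T_1=Z_i$, $T_2=Z'_{i-1}$, where $Z_i=\{t_P: P\in\pset_i\}$. These $t_P$'s have degree exactly $1$ in that graph and are disjoint from the genuine vertices of $S_i$, so the degree condition holds and the returned sets $\tilde T_1,\tilde T_2$ (auxiliary vertices) can never collide with $\mathcal U_X,\mathcal U_Y$ (real edge endpoints). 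You need some such modification before the splitting subroutine can be invoked.

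A smaller remark: the detour through ``$A_i\cup B_i$ is $\alpha_0$-well-linked for some constant $\alpha_0$'' is unnecessary and is not what Theorem~\ref{thm: main advanced splitting} asks for. The hypotheses of that theorem --- one side $1$-well-linked and the pair $\half$-linked --- are \emph{exactly} what the definition of a good \PoS supplies for $B_i$ and $(A_i,B_i)$, without any loss of constants. (You also use the informal convention from the introduction in which $T_1$ is the small set and $T_2$ the large one; the formal statement of Theorem~\ref{thm: main advanced splitting} has $T_1$ large and $T_2$ small, so the roles and the $X/Y$ labels in your sketch are mirrored relative to the formal theorem --- cosmetic, but worth straightening out.) Finally, the claim that the losses ``combine to yield the stated width $w/2^{17}$'' should be verified: the construction produces an $\alpha^*$-weak \PoS of width $w/512$ with $\alpha^*=1/64$, and Theorem~\ref{thm from weak to perfect PoS} degrades the width by a factor $\alpha^*/4$, giving $\frac{w}{512}\cdot\frac{1}{256}=w/2^{17}$.
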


We prove Theorem~\ref{thm: phase execution} below, after we discuss some of its consequences here. First, we obtain the following immediate corollary.

\begin{corollary}\label{cor: PoS parameters}
Let $G$ be any graph with maximum vertex degree $3$, and $\tset_1,\tset_2\subseteq V(G)$ any two disjoint subset of vertices of $G$ of cardinality $k$ each, such that $(\tset_1, \tset_2)$ are node-linked in $G$, each of $\tset_1,\tset_2$ is node-well-linked in $G$, and the degree of every vertex in $T_1\cup T_2$ is at most $2$ in $G$. Let $w,\ell>1$ be integers, where $\ell$ is an integral power of $2$, and assume that for some large enough constant $c_p$, $k\geq c_pw\ell^{17}$. Then there is a perfect \PoS  $\left (\sset=(S_1,\ldots,S_{\ell}),\bigcup_{i=1}^{\ell-1}\pset_i\right )$ of length $\ell$ and width $w$ in $G$. Moreover, if $A_1\subseteq S_1$, $B_{\ell}\subseteq S_{\ell}$ are the anchors of this \PoS, then $A_1\subseteq \tset_1$ and $B_{\ell}\subseteq \tset_2$.
\end{corollary}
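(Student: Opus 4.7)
The plan is to construct the desired perfect \PoS by iterated length-doubling. I would start from a trivial good \PoS of length $1$ and width $\approx k$, apply Theorem~\ref{thm: phase execution} a total of $m := \log_2 \ell$ times to reach length $\ell$, and finally invoke Theorem~\ref{thm from weak to perfect PoS} once to upgrade the resulting good \PoS into a perfect one.

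For the base case, let $k_0$ be the largest integral power of $2$ with $k_0 \leq k$, pick arbitrary subsets $T_1' \subseteq T_1$ and $T_2' \subseteq T_2$ with $|T_1'| = |T_2'| = k_0$, and set $S_1 = V(G)$, $A_1 = T_1'$, $B_1 = T_2'$. By Observation~\ref{obs: well-linkedness properties}, $T_2'$ is $1$-well-linked in $G[S_1]=G$ (being a subset of the node-well-linked set $T_2$), and analogously $(T_1', T_2')$ are node-linked, hence $\half$-linked, in $G[S_1]$. Thus $((S_1),\emptyset)$ is a good \PoS of length $1$ and width $k_0$, with anchors inside $T_1 \cup T_2$ and the anchor degree condition inherited from the hypothesis. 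I would then apply Theorem~\ref{thm: phase execution} $m$ times in succession. After the $i$-th invocation, length is $2^i$, width is $k_0/2^{17i}$ (still a power of $2$), and the anchors remain inside $T_1, T_2$ by the anchor-inheritance clause of the theorem. The width-preconditions $w \geq 12000$ and ``power of $2$'' are maintained at every step provided $k_0/2^{17(m-1)} \geq 12000$. At the end I obtain a good \PoS $(\sset, \bigcup_i \pset_i)$ of length $\ell$ and width $w_g := k_0/\ell^{17}$.

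The last step is to apply Theorem~\ref{thm from weak to perfect PoS}, which requires an $\alpha$-weak input. I would first observe that any good \PoS is $\Omega(1)$-weak: for each cluster $S_i$ and any equal-sized disjoint $T',T'' \subseteq A_i \cup B_i$, one routes $T'$ to $T''$ in $G[S_i]$ by first using the $\half$-linkedness of $(A_i,B_i)$ to ship the $A_i$-portions of $T'$ and $T''$ into unused vertices of $B_i$ (this fits since $|B_i|=w$ and $|T'|\le w$; if necessary, partition $T', T''$ into halves of size at most $w/2$ each), and then using the $1$-well-linkedness of $B_i$ in $G[S_i]$ to complete a bijective routing inside $B_i$. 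The combined edge-congestion is an absolute constant, and $1/\alpha$ may be chosen integral. Theorem~\ref{thm from weak to perfect PoS} then produces a perfect \PoS of width at least $\alpha w_g/c^* = \Omega(k_0/\ell^{17})$, whose anchors remain inside those of the input good \PoS, hence in $T_1$ and $T_2$. Finally, since $k_0 \ge k/2 \ge c_p w \ell^{17}/2$, taking $c_p$ to be a sufficiently large absolute constant makes this final width at least $w$ and simultaneously ensures the intermediate requirement $k_0/2^{17(m-1)} \ge 12000$. I do not foresee any real obstacle: everything except the short verification that ``good $\Rightarrow \Omega(1)$-weak'' is pure parameter bookkeeping.
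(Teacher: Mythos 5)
Your proposal is correct and follows essentially the same route as the paper: initialize a trivial good \PoS of length $1$ with anchors drawn from $T_1,T_2$, double the length $\log_2\ell$ times via Theorem~\ref{thm: phase execution}, then convert to a perfect \PoS via Theorem~\ref{thm from weak to perfect PoS}. The only cosmetic differences are that the paper picks the initial width as the largest power of $2$ at most $k/2$ rather than at most $k$, and that the paper leaves the ``good implies $\Omega(1)$-weak'' observation implicit inside its invocation of Theorem~\ref{thm from weak to perfect PoS}, whereas you spell it out.
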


\begin{proof}
Let $k'$ be the largest integral power of $2$ with $k'\leq k/2$, and let $\hat A_1\subseteq \tset_1,\hat B_1\subseteq \tset_2$ be arbitrary disjoint subsets of vertices of cardinality $k'$ each. Let $S_1=V(G)$. Then $S_1$, together with $\hat A_1$ and $\hat B_1$ playing the role of the sets $A_1$ and $B_1$ define a good \PoS of width $k'$ and length $1$.

For $0\leq j\leq \log \ell$, let $\ell_j=2^j$ and $w_j=\frac{k'}{2^{17j}}$. We assume that $c_P$ is large enough, so $w_{\log \ell}\geq 12000$. We perform $\log_2 \ell$ phases. The input to phase $j$, for $1\leq j\leq \log\ell$, is a good \PoS of length $\ell_{j-1}$ and width $w_{j-1}$, and the output is a good \PoS of length $\ell_j$ and width $w_j$.  The anchors of each \PoS are contained in $\hat A_1$ and $\hat B_1$ respectively. The input to the first phase is the \PoS length $1$ and width $k'$, constructed above. 
Every phase is executed by applying Theorem~\ref{thm: phase execution} to the current \PoS, and using its output as the next \PoS.
Clearly, after $\log_2\ell$ iterations, we will obtain a good \PoS of length $\ell$ and width $\frac{k'}{2^{17\log_2\ell}}=\frac{k'}{\ell^{17}}\geq c_pw/4$. 
Let $A'_1$, $B'_{\ell}$ be the anchors of the resulting \PoS. Then it is easy to verify that $A'_1\subseteq \hat A_1$ and $B'_{\ell}\subseteq \hat B_1$.

Our final step is to apply Theorem~\ref{thm from weak to perfect PoS} to this last \PoS, in order to obtain a perfect \PoS of width $w$ and length $\ell$. If we denote by $A''_1$ and $B''_{\ell}$ the anchors of this final \PoS, then we are guaranteed that $A''_1\subseteq A_1'\subseteq \hat A_1\subseteq \tset_1$, and $B''_{\ell}\subseteq B_{\ell}'\subseteq \hat B_1\subseteq \tset_2$.
\end{proof}

We are now ready to complete the proof of Theorem~\ref{thm: GMT} with the weaker bound of $f(g)=O(g^{36\poly\log g})$, using Corollary~\ref{cor: PoS parameters}.  Let $G$ be any graph of treewidth $\kappa=\Omega(g^{36}\poly\log g)$.  We use Theorem~\ref{thm: degree reduction} to obtain a subgraph $G'$ of $G$, whose maximum vertex degree is $3$, together with a set $\tset$ of $\kappa^*=\Omega(\kappa/\poly\log \kappa)$ terminals, such that the terminals of $\tset$ are node-well-linked in $G'$, and the degree of every terminal is $1$. We assume that $\kappa^*\geq 2^{80}c_pg^{36}$, where $c_P$ is the constant from  Corollary~\ref{cor: PoS parameters}. From Corollary~\ref{cor: PoS parameters}, there is a perfect \PoS of width $16g^2+10g$ and length $2g(g-1)$ in $G'$, and from Corollary~\ref{cor: from path-set system to grid minor}, $G'$ (and hence also $G$) contains the $(g\times g)$-grid as a minor.

We now focus on proving Theorem~\ref{thm: phase execution}. The central combinatorial object that we use is a two-cluster chain (that can intuitively be thought of as a \PoS of width $2$, except that the sizes of $A_1,B_1,A_2,B_2$ are no longer uniform).

\begin{definition} 
Let $G$ be a graph, $\tset_1,\tset_2$ two disjoint sets of vertices,  with $|\tset_1|=k$ and $|\tset_2|=k'=k/64$, where $k\geq 12000$ is an integral power of $2$. A $2$-cluster chain $(X,Y,\ttset_1,\ttset_2,E')$ consists of:

\begin{itemize}
\item two disjoint clusters $X,Y\subseteq V(G)$;

\item a subset $\ttset_1\subseteq \tset_1\cap X$, with $|\ttset_1|=k'$, and a subset $\ttset_2\subseteq\tset_2\cap Y$, with $|\ttset_2|=k/512$;

\item a set $E'\subseteq E(X,Y)$ of $k/512$ edges, whose endpoints are all distinct;

Let $\U_X\subseteq X$ be the subset of vertices of $X$ incident on the edges of $E'$, and let $\U_Y\subseteq Y$ be the subset of vertices of $Y$ incident on the edges of $E'$. Then:

\item $\ttset_1\cup \U_X$ is $(k/512,\alpha^*)$-well-linked in $G[X]$ and $\ttset_2\cup \U_Y$ is $(k/512,\alpha^*)$-well-linked in $G[Y]$, for $\alpha^*=1/64$. (See Figure~\ref{fig: strong chain}).
\end{itemize}
\end{definition}

The main technical ingredient of the proof is the following theorem, that is proved in Section~\ref{sec: splitting a cluster}.

\begin{theorem}\label{thm: main advanced splitting}
Suppose we are given a graph $G$, with maximum vertex degree at most $3$, and two disjoint subsets of vertices, $T_1$ of size $k$ (where $k\geq 12000$ is an integral power of $2$), and $T_2$ of size $k'=k/64$, such that the degree of every vertex in $T_1\cup T_2$ is $1$ in $G$, the vertices of $T_1$ are $1$-well-linked, and $(T_1,T_2)$ are $\half$-linked in $G$. Then there is a $2$-cluster chain in $G$.
\end{theorem}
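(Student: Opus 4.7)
The plan is to introduce an intermediate object, a \emph{weak $2$-cluster chain}: a pair of disjoint clusters $C_1, C_2 \subseteq V(G) \setminus (T_1 \cup T_2)$ such that each $\Gamma(C_i)$ is constant-well-linked in $G[C_i]$, together with collections $\pset_i$ of $\Omega(k')$ mutually node-disjoint paths from $C_i$ to $T_1$, where $\pset_1 \cup \pset_2$ are internally disjoint from one another and avoid $C_1 \cup C_2$ as internal vertices. The theorem then reduces to two stages: (i) produce a weak $2$-cluster chain in $G$, and (ii) upgrade a weak chain into the required strong $2$-cluster chain.

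For stage (ii), let $R_i \subseteq T_1$ be the $T_1$-endpoints of $\pset_i$ and split $R_2$ into two halves. Using the $\half$-linkedness of $(T_1, T_2)$, obtain $\Theta(k')$ node-disjoint paths $\qset$ from one half of $R_2$ to a subset $\ttset_2 \subseteq T_2$ (first as a low-congestion flow, then extracting node-disjoint paths via Observations~\ref{obs: low cong flow to NDP} and~\ref{obs: EDP to NDP in degree-3}, leveraging the max-degree-$3$ assumption and that terminals have degree $1$). Symmetrically, the $1$-well-linkedness of $T_1$ provides $\Theta(k')$ node-disjoint paths $\rset$ from a subset $\ttset_1 \subseteq R_1$ to the other half of $R_2$. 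Let $X$ contain $C_1$, the paths of $\pset_1$, the $X$-side prefixes of $\rset$ and $\ttset_1$; let $Y$ contain $C_2$, the paths of $\pset_2$, $\qset$, the $Y$-side suffixes of $\rset$, and $R_2 \cup \ttset_2$. The edges $E'$ are the unique crossing edges of the paths of $\rset$. To verify that $\ttset_1 \cup \U_X$ is $(k/512,1/64)$-well-linked in $G[X]$, observe that each endpoint in $\ttset_1 \cup \U_X$ sits at the end of a unique node-disjoint ``tail'' (a $\pset_1$-path or an $\rset$-prefix) whose opposite endpoint lies in $\Gamma(C_1)$ and whose internal vertices have degree $2$ in $X$; any demand can therefore be routed along these tails into $\Gamma(C_1)$ and then dispersed through $G[C_1]$ using the well-linkedness of $\Gamma(C_1)$. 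The analogous argument handles $G[Y]$.

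For stage (i) (the main technical content), let $H = G$ and $T = T_1 \cup T_2$. Call a cluster $C \subseteq V(H) \setminus T$ \emph{good} if $\Gamma(C)$ is constant-well-linked in $H[C]$ and every $\frac{1}{4}$-balanced cut of $H \setminus C$ with respect to $T_1$ has value below a threshold $\tau_1$; call it \emph{perfect} if this minimum value lies in a window $[\tau_1, \tau_2]$ (concretely of the form $[k/c_1, k/c_2]$ for absolute constants). The cluster $C_0 = V(H) \setminus T$ is good: its complement is edgeless because terminals have degree $1$, and $\Gamma(C_0)$ inherits well-linkedness from the hypotheses on $T_1, T_2$ by pulling through the unique edge at each terminal. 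Among good clusters, choose $C$ minimizing $|\out(C)|$ and, subject to that, $|C|$. Take the minimum $\frac{1}{4}$-balanced cut $(Z, Z')$ of $C$ with respect to $\Gamma(C)$, with $|\Gamma(C) \cap Z| \leq |\Gamma(C) \cap Z'|$; by Lemma~\ref{lemma: balanced cut large piece wl}, $Z'$ inherits a constant bandwidth property. The dichotomy: when $|\Gamma(C) \cap Z|$ is small, the minimum $\frac{1}{4}$-balanced cut of $H \setminus Z'$ with respect to $T_1$ must land in the perfect window (a smaller value would make $Z'$ a good cluster with $|\out(Z')| < |\out(C)|$, contradicting minimality of $C$; a larger value is ruled out by $|E(Z, Z')|$ plus the structure of $(A, B)$ for $C$), and then $Z'$ plays the role of $C_1$ while $C_2$ is extracted from the perfect-cut side of $H \setminus Z'$ by one more application of Lemma~\ref{lemma: balanced cut large piece wl}; the required paths $\pset_1, \pset_2$ to $T_1$ are obtained from the bandwidth of the clusters together with the $1$-well-linkedness of $T_1$. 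When $|\Gamma(C) \cap Z|$ is large, $Z$ and $Z'$ themselves serve as $C_1, C_2$ after minor trimming, and the paths to $T_1$ are built analogously.

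The hard part is the last dichotomy. The numerical thresholds $\tau_1, \tau_2$ must be calibrated so that the balanced-cut value of $H \setminus Z'$ provably lands in the perfect window in the small case, and the constant-factor losses to well-linkedness (each invocation of Lemma~\ref{lemma: balanced cut large piece wl} costs a factor of roughly $\alpha/(2+\alpha)$) must compose to yield the final $\alpha^* = 1/64$ together with chain sizes $k/512$ and $k' = k/64$. The minimality of $(|\out(C)|, |C|)$ is used essentially and repeatedly: it both blocks trivial refinements that would shrink good clusters indefinitely and forces the structural window on the balanced cut of $H \setminus Z'$.
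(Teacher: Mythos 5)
Your overall architecture matches the paper's: a weak $2$-cluster chain as an intermediate object, a ``weak implies strong'' upgrade step, and a stage (i) built around good/perfect clusters defined via balanced cuts, with the cluster chosen to minimize $(|\out(C)|,|C|)$. But two essential mechanisms are missing, and without them neither stage closes.

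In stage (ii), the paths $\qset$ and $\rset$ you obtain from well-linkedness of $T_1,T_2$ live anywhere in $G$; nothing prevents them from running through $C_1,C_2$ or reusing vertices of $\pset_1,\pset_2$ (or each other). Your construction of $X,Y$ by taking ``$X$-side prefixes of $\rset$'' is circular (where you split a path of $\rset$ into an $X$-piece and a $Y$-piece depends on $X,Y$, which are defined from the split), and even once a split is fixed the resulting tails need not be disjoint from $C_1,C_2,\pset_1,\pset_2$. The paper handles this by applying the Conforti--Hassin--Ravi re-routing lemma (Lemma~\ref{lemma: re-routing of vertex-disjoint paths}) several times, sacrificing a bounded number of paths from each existing collection in exchange for genuine node-disjointness of the new ones. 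This is not a cosmetic step: the ``tail'' argument you give for $(k/512,\alpha^*)$-well-linkedness of $\ttset_1\cup\U_X$ in $G[X]$ requires the tails to be node-disjoint from one another and internally disjoint from $C_1$, which is exactly what fails without re-routing. In stage (i), you never explain where the $\Omega(k')$ node-disjoint paths from each cluster to $T_1$ come from: having $\Gamma(C_i)$ well-linked inside $G[C_i]$ says nothing about connectivity between $\Gamma(C_i)$ and $T$ in the rest of $G$. The paper passes to a minimal subgraph and proves a Deletable-Edge-style lemma (Lemma~\ref{lem: deletable edge}) giving $p(C)=|\Gamma(C)|$, i.e.\ $|\Gamma(C)|$ node-disjoint paths from $T$ to $\Gamma(C)$, and this invariant is tracked through the whole case analysis to harvest $\pset_1,\pset_2$. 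Your dichotomy is also under-specified: $|\Gamma(C)\cap Z|\geq|\Gamma(C)|/4$ always for a $1/4$-balanced cut, so ``when $|\Gamma(C)\cap Z|$ is small'' has no content (you likely mean $|E(Z,Z')|$ small, which is what the paper conditions on); the minimality contradiction via $|\out(Z')|<|\out(C)|$ requires $|E(Z,Z')|$ to be smaller than the edges of $\out(C)$ incident to $Z$, which the paper ensures by first splitting on the bandwidth parameter $\alpha$ of $C$; and in the ``large'' branch only the large-boundary side of a minimum balanced cut inherits a bandwidth property via Lemma~\ref{lemma: balanced cut large piece wl}, so the other side needs a separate extraction (the paper's Theorem~\ref{thm: balanced-cut-large-smaller-side}) rather than ``minor trimming.''
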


We are now ready to complete the proof of Theorem~\ref{thm: phase execution}. Let $\pset=\bigcup_{i=1}^{\ell-1}\pset_i$. For convenience, for each path $P\in \pset$, we delete all edges and inner vertices of $P$ from the graph, and instead add a new vertex $t_P$, that connects to the two endpoints $u,v$ of $P$. Let $P'=(u,t_P,v)$ be the resulting path. We also add a new set $Z_0'$ of $w/64$ new vertices, each of which connects to a distinct vertex of $A_1$, and a new set $Z_{\ell}$ of $w$ vertices, each of which connects to a distinct vertex of $B_{\ell}$.
We denote the resulting graph by $G'$. 

For each $1\leq i\leq \ell-1$, let $Z_i=\set{t_P\mid P\in \pset_i}$. We perform $\ell$ iterations, where the $i$th iteration splits cluster $S_i$. We assume that for each $1\leq i\leq \ell$, when iteration $i$ starts, we are given a subset $Z'_{i-1}\subseteq Z_{i-1}$ of $w/64$ vertices (where at the beginning of the first iteration we use the set $Z_0'$ we have just defined). In the $i$th iteration, we apply Theorem~\ref{thm: main advanced splitting} to graph $G_i=G'[S_i\cup Z'_{i-1}\cup Z_i]$, with $T_1=Z_i$, and $T_2=Z'_{i-1}$. Since the \PoS is good, it is easy to see that $T_1$ is $1$-well-linked in $G_i$,  $(T_1,T_2)$ are $\half$-linked, and from our definition of the graph $G'$, all vertices of $T_1\cup T_2$ have degree $1$ in $G_i$. Moreover, it is easy to verify that all vertex degrees in $G_i$ are bounded by $3$. Let $(X_i,Y_i)$ be the resulting pair of clusters, $E_i=E'$ the corresponding set of edges, and $Z_{i-1}''=\ttt2$, $Z_{i}'=\ttt1$ the corresponding vertex subsets. We then continue to the next iteration. Consider the final collection $(Y_1,X_1,\ldots,Y_{\ell},X_{\ell})$ of clusters obtained after $\ell$ iterations. Then for each $0\leq i\leq \ell$, $|Z_i''|=w/512$. We build an $\alpha^*$-weak path-of sets system $(\sset',\bigcup_{i=1}^{2\ell-1}\pset_i')$ in the original graph $G$, as follows. Let $Z=Z'_0\cup (\bigcup_{i=1}^{\ell}Z_i)$.
We let $\sset'=(S_1',\ldots,S_{2\ell}')$, where for $1\leq i\leq \ell$, $S_{2i-1}'=Y_i\setminus Z$ and $S_{2i}'=X_i\setminus Z$. For $1\leq i\leq \ell$, let $\pset'_{2i-1}$ be any subset of $w/512$ edges of $E_i$. For $1\leq i<\ell$, we let $\pset'_{2i}$ be the set of paths $P\in \pset_i$, where $t_P\in Z''_i$. We let $A_1'\subseteq A_1$ contain all vertices that are neighbors of the vertices of $Z_0''$ in $G'$, and similarly $B_{2\ell}'\subseteq B_{\ell}$ is any subset of $w/512$ neighbors of the vertices of $Z_{\ell}'$. For $1\leq i<2\ell$, we let $B_i'$ be the set of the endpoints of the paths in $\pset_i'$ that lie in $S'_i$, and we let $A_{i+1}'$ be the set of their endpoints that lie in $S'_{i+1}$. 

Fix some $1\leq i\leq \ell$, and recall that $(Z''_{i-1},B'_{2i-1})$ were $\alpha^*$-well-linked in $G[Y_i]=G[S_{2i-1}']$. Since the vertices of $Z\cap V(G_i)$ have degree $1$ each in $G_{i}$, and since from our definition $A_{2i-1}$ is the set of the neighbors of the vertices in $Z''_{i-1}$ in graph $G_i$, it is easy to verify that $A_{2i-1}\cup B_{2i-1}$ is $(w/512,\alpha^*)$-well-linked in $G[S_{2i-1}']$. Since $|A_{2i-1}\cup B_{2i-1}|=2\cdot w/512$, set $A_{2i-1}\cup B_{2i-1}$ is $\alpha^*$-well-linked in $G[S_{2i-1}']$.  Using a similar reasoning, $A_{2i}\cup B_{2i}$ is $\alpha^*$-well-linked in $G[S_{2i}]$. Notice that $A_1'\subseteq A_1$ and $B_{2\ell}'\subseteq B_{2\ell}$.
So far, we have obtained an $\alpha^*$-weak \PoS of length $2\ell$ and width $w/512$, where $\alpha^*=1/64$. From Theorem~\ref{thm from weak to perfect PoS}, we can now obtain a good \PoS of length $2\ell$ and width at least $\frac{w}{512}\cdot \frac{\alpha^*}{4}= \frac{w}{2^{17}}$.
Moreover, if $A_1'',B_{2\ell}''$ are the anchors of this final \PoS, then $A_1''\subseteq A_1'\subseteq A_1$, and $B_{2\ell}''\subseteq B_{2\ell}\subseteq B_{\ell}$.

\label{-------------------------------------------sec: Splitting a cluster--------------------------------}
\section{Splitting a Cluster}\label{sec: splitting a cluster}
The goal of this section is to prove Theorem~\ref{thm: main advanced splitting}. We denote $T=T_1\cup T_2$, and we call the vertices of $T$ \emph{terminals}. Recall that $|T_1|=k$, $|T_2|=k'=k/64$, and we denote $k''=k/512$.

 Let $G^*$ be a minimal (with respect to edge- and vertex-deletion) subgraph of $G$, in which $T_1$ is $(k/4,1)$-well-linked, and $(T_1,T_2)$ are $\half$-linked. For each terminal $t\in T$, we subdivide the unique edge incident on $t$ by a new vertex $v_t$. It is easy to see that a $2$-cluster chain in $G^*$ immediately defines a $2$-cluster chain in $G$. 
From now on we will be working with graph $G^*$, and for simplicity of notation, we denote $G^*$ by $G$.
Our goal is to show that $G$ contains a $2$-cluster chain. Notice that from the minimality of $G$, it is a connected graph.

Given a cluster $C\subseteq V(G)\setminus T$, we denote by $\pset(C)$ the maximum-cardinality set of node-disjoint paths connecting the terminals in $T$ to $\Gamma(C)$, and we denote $p(C)=|\pset(C)|$. We assume w.l.o.g. that the paths in $\pset(C)$ are internally disjoint from $C\cup T$. The following lemma can be seen as  a variation of the Deletable Edge Lemma of~ Chekuri, Khanna and Shepherd~\cite{deletable-edge-original} (the proof of their original lemma can be found in~\cite{deletable-edge}), though it is somewhat simpler. \ifabstract The proof is omitted from this extended abstract. \fi


\begin{lemma}\label{lem: deletable edge}
Let $C\subseteq V(G)\setminus T$ be a cluster of $G$, such that $C$ has the $(k/4,1)$-bandwidth property. Then $p(C)=|\Gamma(C)|$.\end{lemma}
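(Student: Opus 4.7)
The upper bound $p(C)\le|\Gamma(C)|$ is immediate: the paths of $\pset(C)$ end at distinct vertices of $\Gamma(C)$. For the reverse inequality I argue by contradiction, assuming $p(C)<|\Gamma(C)|$ and deriving a violation of the edge-minimality of $G$.

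First I apply Menger's theorem for node-disjoint paths in $H:=G-(C\setminus\Gamma(C))$, obtaining a vertex separator $X\subseteq V(H)$ with $|X|=p(C)$ that separates $T$ from $\Gamma(C)$ in $H$. Let $A$ (resp.\ $B$) denote the vertices of $H-X$ reachable (resp.\ not reachable) from $T$; then $T\subseteq A\cup X$ and $\Gamma(C)\subseteq B\cup X$. Since $|X|<|\Gamma(C)|$, the set $\Gamma(C)\cap B$ is non-empty, so I fix some $u^*\in\Gamma(C)\cap B$ together with an edge $e=(u^*,v)\in\out(C)$ (which exists because $u^*\in\Gamma(C)$). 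The separator property forces $v\in B\cup X$.

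The plan is then to show that $G-e$ still preserves both the $(k/4,1)$-well-linkedness of $T_1$ and the $\half$-linkedness of $(T_1,T_2)$, contradicting the edge-minimality of $G$. Suppose not; then some cut $(P,Q)$ of $V(G)$ is violating in $G-e$. Consider the well-linkedness case: $|E_{G-e}(P,Q)|<\min(|P\cap T_1|,|Q\cap T_1|,k/4)$. The matching inequality in $G$ (with $e$ added back) forces $e$ to cross $(P,Q)$, say $u^*\in P$ and $v\in Q$, and the cut to be tight in $G$. Letting $P_C=P\cap C$ and $Q_C=Q\cap C$, the $(k/4,1)$-bandwidth property of $C$ applied via Observation~\ref{obs: generalized well-linkedness alt def} in $G[C]$ yields
\[|E_{G[C]}(P_C,Q_C)|\;\ge\;\min\bigl(|P_C\cap\Gamma(C)|,\,|Q_C\cap\Gamma(C)|,\,k/4\bigr).\]
The next step combines this with the Menger-separator structure to produce an extra edge of $E_G(P,Q)$ beyond the count $|E_{G-e}(P,Q)|+1$. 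The crucial fact is that $u^*\in B$ is disconnected from $T$ in $H-X$, which forces most of $\Gamma(C)$ to lie on the same side as $u^*$ in $(P,Q)$; combined with the bandwidth lower bound, this strictly exceeds the tight count and yields a contradiction. The $(T_1,T_2)$-linkedness case is handled analogously, with the $\half$-linkedness producing the matching bound at congestion $2$.

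The main obstacle will be the quantitative accounting in the final combinatorial step, i.e.\ checking that the bandwidth contribution inside $C$, together with the structure forced by the Menger separator $X$ in $H$, strictly surpasses the tight cut count $\min(|P\cap T_1|,|Q\cap T_1|,k/4)$. The case where $|P_C\cap\Gamma(C)|$ or $|Q_C\cap\Gamma(C)|$ exceeds $k/4$ requires care, since the bandwidth bound then only gives $k/4$; one must show this partial bound still suffices. These technicalities are the deletable-edge heart of the lemma, paralleling (and slightly simplifying) the original deletable edge argument of Chekuri, Khanna and Shepherd~\cite{deletable-edge-original}.
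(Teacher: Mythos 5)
Your Menger setup is correct and parallels the paper: the separator $X$ (which the paper calls $R$), the set $B$ not reachable from $T$ in $H-X$, and the identification of an edge $e=(u^*,v)\in\out(C)$ with $u^*\in\Gamma(C)\cap B$ that lies on no path of $\pset(C)$, is exactly what the paper produces. But from there you abandon the standard rerouting mechanism and propose a cut-based contradiction, and this is where the argument has a genuine gap.

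Your stated plan is to take a violating cut $(P,Q)$ in $G-\set{e}$, observe that it is tight in $G$ (i.e.\ $|E_G(P,Q)|=m:=\min(|P\cap T_1|,|Q\cap T_1|,k/4)$ and $e$ crosses), and then ``produce an extra edge of $E_G(P,Q)$ beyond the count $|E_{G-e}(P,Q)|+1$.'' But $|E_G(P,Q)|=|E_{G-e}(P,Q)|+1$ holds exactly --- removing a single edge changes the cut size by at most one --- so there is nothing beyond that count to find. What you would need is an independent structural reason that $|E_G(P,Q)|>m$, contradicting tightness, and the mechanism you suggest does not supply one. The claim that $u^*\in B$ ``forces most of $\Gamma(C)$ to lie on the same side as $u^*$ in $(P,Q)$'' is not justified, and it is false in general: $(P,Q)$ is an arbitrary violating cut and has no a priori relation to the Menger partition $(A,X,B)$ of $H$, so $\Gamma(C)$ can be split arbitrarily between $P$ and $Q$. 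The bandwidth bound $|E_{G[C]}(P_C,Q_C)|\geq\min(|P_C\cap\Gamma(C)|,|Q_C\cap\Gamma(C)|,k/4)$ also does not help directly, because those edges already appear in $E_G(P,Q)$; they are not additional crossings you can charge on top of the $T_1$-well-linkedness bound.

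The missing idea is exactly the rerouting step that the ``deletable edge'' argument hinges on. The paper first uses the node-disjoint paths $\pset(C)$, together with the $(k/4,1)$-bandwidth property of $C$, to establish that the separator $R$ (your $X$) is $(k/4,1)$-well-linked in the graph $G''=G[C]\cup\pset'$, where $\pset'$ consists of the truncated paths from $R$ to $\Gamma(C)$. Since $e\notin G''$ (it is in $\out(C)$ and on no path of $\pset(C)$), $R$ remains $(k/4,1)$-well-linked in $G'\setminus\set{e}$, where $G'=G[U\cup R]$ is the side of the Menger separation containing $C$. Then, for any pair $T',T''\subseteq T_1$, one takes edge-disjoint (hence, in the degree-3 setting, node-disjoint) paths $T'\sconnect T''$ in $G$, and for each path that uses an edge of $G'$ one reroutes the $G'$-portion through $R$ using this well-linkedness, thereby avoiding $e$. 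A similar (slightly more delicate, due to congestion-$2$ flows and the degree-$3$ accounting at vertices of $R$) rerouting handles the $\half$-linkedness of $(T_1,T_2)$. This constructive rerouting is what shows no violating cut can exist in $G\setminus\set{e}$; a bare cut-counting argument on an arbitrary violating cut does not reach the contradiction.
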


\iffull
\begin{proof}
Let $r=p(C)$.
From Menger's theorem, there is a set $R$ of $r$ vertices, separating $T$ from $\Gamma(C)$ in $G$. Since the degree of every terminal is $1$, we can assume without loss of generality that $R\cap T=\emptyset$.
Each path in $\pset(C)$ contains exactly one vertex of $R$, and, since the paths in $\pset(C)$ are internally disjoint from $C$, $R\cap (C\setminus \Gamma(C))=\emptyset$. Let $U$ be the union of all connected components of $G\setminus R$ containing the vertices of $C$. Then $C\subseteq U\cup R$, and $T\cap U=\emptyset$. Let $G'=G[U\cup R]$. 

Notice that the paths in $\pset(C)$ define a collection $\pset': R\connect \Gamma(C)$ of node-disjoint paths, that are internally disjoint from $C$, and are contained in $G'$. Let $G''=G[C]\cup \pset'$. Then, since $\Gamma(C)$ is $(k/4,1)$-well-linked in $G[C]$, and $G[C]\subseteq G''$, while the paths in $\pset'$ are node-disjoint, is it is easy to see that  $R$ is $(k/4,1)$-well-linked in $G''$.

Let $R'\subseteq \Gamma(C)$ be the set of vertices where the paths of $\pset(C)$ terminate. We claim that $R'=\Gamma(C)$. Assume otherwise. Then there is some edge $e\in \out(C)$, that lies in $G'$, and does not belong to any path in $\pset(C)$. Notice that $e$ does not belong to $G''$, and so $R$ remains $(k/4,1)$-well-linked in $G'\setminus\set{e}$. The following claim will then finish the proof.

\begin{claim}\label{claim: wl}
Set $T_1$ is $(k/4,1)$-well-linked in $G\setminus\set{e}$, and $(T_1,T_2)$ are \half-linked in $G\setminus \set{e}$.
\end{claim}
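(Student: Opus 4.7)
The plan is to prove the claim by contradiction, focusing on the $(k/4,1)$-well-linkedness of $T_1$; the argument for $\half$-linkedness of $(T_1,T_2)$ is essentially identical, with the congestion bound adjusted from $1$ to $2$. Suppose that there exist disjoint subsets $T_1',T_1''\subseteq T_1$ with $|T_1'|=|T_1''|\leq k/4$ for which no flow $T_1'\sconnect_1 T_1''$ of value $|T_1'|$ exists in $G\setminus\{e\}$. By max-flow/min-cut with unit edge capacities, there is a cut $(A,B)$ with $T_1'\subseteq A$, $T_1''\subseteq B$ and $|E_{G\setminus\{e\}}(A,B)|<|T_1'|$. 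Since $|E_G(A,B)|\leq |E_{G\setminus\{e\}}(A,B)|+1\leq |T_1'|$, while the $(k/4,1)$-well-linkedness of $T_1$ in $G$ forces $|E_G(A,B)|\geq |T_1'|$, we conclude $|E_G(A,B)|=|T_1'|$ and $e\in E_G(A,B)$.

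Rather than contradicting this cut directly, the plan is to construct an explicit flow of value $|T_1'|$ from $T_1'$ to $T_1''$ in the subgraph $\hat{G}:=(G\setminus E(G'))\cup G''$, which is contained in $G\setminus\{e\}$ since $e\in E(G')\setminus E(G'')$. Starting from $|T_1'|$ edge-disjoint $T_1'$-$T_1''$ paths in $G$ (guaranteed by well-linkedness), I decompose each path into segments outside $G'$ and segments inside $G'$, where each inside segment has both endpoints on $R$. The outside segments lie in $G\setminus E(G')\subseteq \hat G$ and are kept unchanged; the inside segments are replaced by a rerouting within $G''\subseteq G'\setminus\{e\}$. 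The key observation is that the total flow value is $|T_1'|\leq k/4$, so the net flow pattern induced on $R$ (with net source demand $d(v)>0$ or sink demand $-d(v)>0$ at each $v\in R$, and $\sum_v|d(v)|/2\leq |T_1'|$) can be realized by a single-commodity flow from the net sources $R^+$ to the net sinks $R^-$. Since $R$ is $(k/4,1)$-well-linked in $G''$, a flow of this total value exists inside $G''$ with unit edge-congestion. Concatenating the outside segments with this new inside routing yields $|T_1'|$ edge-disjoint $T_1'$-$T_1''$ paths in $\hat G$, contradicting the existence of the violating cut.

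The main obstacle will be justifying the reduction from the multi-commodity inside pattern (where each inside segment has a specified endpoint pair on $R$) to the single-commodity transportation from $R^+$ to $R^-$. This reduction is legitimate because the outside segments attach each $R$-entry and each $R$-exit independently to a terminal in $T_1'\cup T_1''$: any inside routing that realizes the correct net flow $d(v)$ at each $R$-vertex, when concatenated with the outside segments, produces $|T_1'|$ edge-disjoint $T_1'$-$T_1''$ paths, possibly pairing the vertices of $T_1'$ with those of $T_1''$ differently from the original flow, which is harmless for $T_1'\sconnect_1 T_1''$. Since each $v\in R$ has $|d(v)|\leq \deg_G(v)\leq 3$ and the total demand is at most $|T_1'|\leq k/4$, the demand can be split into at most $k/4$ unit pieces, and the $(k/4,1)$-well-linkedness of $R$ in $G''$ supplies the required flow at unit congestion, completing the proof.
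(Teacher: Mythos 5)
Your high-level plan — split each $T_1'$-$T_1''$ path at the separator $R$, keep the portion outside $G'$, and re-route the portion inside using the $(k/4,1)$-well-linkedness of $R$ — is the right mechanism, and it is the one the paper uses. But your specific decomposition introduces a real gap.

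You decompose each path $Q$ into \emph{all} of its maximal inside segments (each with both endpoints on $R$) and claim the resulting net source/sink demand on $R$ satisfies $\sum_v|d(v)|/2\le|T_1'|\le k/4$, so that the $(k/4,1)$-well-linkedness of $R$ in $G''$ can route it at unit congestion. That bound is not justified. Since the paths in $\qset$ are node-disjoint (Observation~\ref{obs: EDP to NDP in degree-3}), each $v\in R$ lies on at most one path, so indeed $|d(v)|\le 1$; but $\sum_v|d(v)|/2$ equals the \emph{total number of inside segments across all paths}, not $|T_1'|$. A single path can enter and leave $G'$ many times — every time it crosses from $W$ back into $U\cup R$ creates a fresh inside segment — so this total can be as large as $|R|/2$. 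Since at this point in the proof the only a priori bound on $|R|=p(C)$ is $|R|\le|T|=k+k'$, the source set $R^+$ and sink set $R^-$ can each have size up to $(k+k')/2>k/4$, and the $(k/4,1)$-well-linkedness of $R$ does not guarantee the needed unit-congestion flow. Your sentence ``the demand can be split into at most $k/4$ unit pieces'' is exactly where the argument breaks.

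The paper sidesteps this by not decomposing into all inside segments: for each $Q\in\qset_2$ it keeps only the prefix $Q_1$ (from $T'$ up to the vertex $v\in R$ just before $Q$'s \emph{first} $G'$-edge) and the suffix $Q_2$ (from the vertex $u\in R$ just after $Q$'s \emph{last} $G'$-edge to $T''$), discards the entire middle portion of $Q$ (including any intermediate outside excursions), and re-routes from $R_1=\{v(Q)\}$ to $R_2=\{u(Q)\}$ entirely within $G'\setminus\{e\}$. This yields $|R_1|=|R_2|=|\qset_2|\le|T'|\le k/4$, so the well-linkedness of $R$ applies cleanly. (A cosmetic difference: the paper re-routes inside $G'\setminus\{e\}\supseteq G''$, which is only easier than your choice of $G''$.)

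Finally, the claim that the $(T_1,T_2)$ $\tfrac12$-linkedness case is ``essentially identical, with the congestion bound adjusted from $1$ to $2$'' glosses over a genuine complication: the paths in a $T'\sconnect_2 T''$ flow need not be node-disjoint, so multiple segment endpoints may coincide at a vertex of $R$, and one has to track multiplicities carefully (the paper's proof introduces the five sets $R_0,\dots,R_4$ precisely for this). Even after fixing the main gap, that part of your argument would need more than a one-line adjustment.
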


From the above claim, and the minimality of $G$, we conclude that $R'=\Gamma(C)$, and $|\Gamma(C)|=p(C)$. It now remains to prove Claim~\ref{claim: wl}.

\begin{proof}
We first prove that $T_1$ is  $(k/4,1)$-well-linked in $G\setminus\set{e}$.
Consider any pair $T',T''\subseteq T_1$ of disjoint equal-sized subsets of $T_1$, with $|T'|=|T''|\leq k/4$. Since $T_1$ is $(k/4,1)$-well-linked in $G$, there is a set $\qset: T'\sconnect T''$ of edge-disjoint paths in $G$. We view the paths in $\qset$ as directed from $T'$ to $T''$.
 From Observation~\ref{obs: EDP to NDP in degree-3}, and since all terminals have degree $1$, the paths in $\qset$ are node-disjoint. We partition $\qset$ into two subsets: $\qset_1$ contains all paths that do not contain the edges of $G'$, and $\qset_2$ contains all remaining paths. We now define a pair $R_1,R_2$ of disjoint equal-sized subsets of $R$, and two new sets $\qset_2',\qset_2''$ of paths as follows: for each path $Q\in \qset_2$, consider the first edge $(v,v')\in E(G')$ lying on $Q$, where $v$ appears before $v'$ on $Q$. We add $v$ to $R_1$, and we let $Q_1$ be the sub-path of $Q$ between its first vertex and $v$. Similarly, consider the last edge $(u',u)\in E(G')$ lying on $Q$, where $u'$ appears before $u$ on $Q$. We add $u$ to $R_2$, and we let $Q_2$ be the sub-path of $Q$ between $u$ and its last vertex.  (Notice that $u\neq v$, and $u,v\in R$).
Let $\qset_2'=\set{Q_1\mid Q\in \qset_2}$ and $\qset_2''=\set{Q_2\mid Q\in \qset_2}$. Observe that the paths in $\qset_1\cup \qset_2'\cup \qset_2''$ do not use the edges of $G'$. Notice also that $|R_1|=|R_2|=|\qset_2|\leq k/4$. Since $R$ is $(k/4,1)$-well-linked in $G'\setminus \set{e}$, there is a set $\rset: R_1\sconnect_1R_2$ of edge-disjoint paths in $G'\setminus\set{ e}$. By concatenating the paths in $\qset_2',\rset$, and $\qset_2''$, and taking the union with the paths in $\qset_1$, we obtain a collection of edge-disjoint paths, connecting every vertex in $T'$ to a distinct vertex in $T''$, in graph $G\setminus \set{e}$.

We now prove that $(T_1,T_2)$ are $\half$-linked in $G\setminus\set{e}$.
Consider any pair $T'\subseteq T_1,T''\subseteq T_2$ of equal-sized subsets. Since $(T_1,T_2)$ are $\half$-linked in $G$, there is a set $\qset: T'\sconnect_2 T''$ of paths in $G$. We view the paths in $\qset$ as directed from $T'$ to $T''$.

We partition $\qset$ into two subsets: $\qset_1$ contains all paths that do not contain the edges of $G'$, and $\qset_2$ contains all remaining paths. For each path $Q\in \qset_2$, we define two sub-paths $Q_1,Q_2$ of $Q$ as follows. Let $e$ be the first vertex of $G'$ on $Q$, and let $e'$ be the last edge of $G'$ on $Q$. Then $Q_1$ is the sub-path of $Q$ from its start vertex to an endpoint of $e$ (excluding $e$), and $Q_2$ is the sub-path of $Q$ from an endpoint of $e'$ to the last vertex of $Q$ (excluding $e'$). Let $\qset'_2=\set{Q_1\mid Q\in \qset_2}$, and $\qset''_2=\set{Q_2\mid Q\in \qset_2}$, so $|\qset'_2|=|\qset_2''|=|\qset_2|$. Since the degree of every vertex in $G$ is at most $3$, it is easy to see that every vertex in $R$ may serve as an endpoint of at most two paths in $\qset_2'\cup \qset_2''$. Indeed, consider any vertex $v\in R$. At least one edge $e$ incident on $v$ must lie in $G'$, and at least one edge $e'$ incident on $v$ does not belong to $G'$. There is at most one additional edge incident on $v$, that we denote by $e''$. If $e''\in E(G')$, then at most two paths of $\qset_2$ may contain the edge $e'$, and each of these paths may contribute at most one path to $\qset_2'\cup \qset_2''$ that has $v$ as its endpoint. No other paths in $\qset_2'\cup \qset_2''$ may have $v$ as their endpoint. Otherwise, if $e''\not\in E(G')$, then at most two paths of $\qset_2$ may contain the edge $e$, and each of these paths may contribute at most one path  to $\qset_2'\cup \qset_2''$ that has $v$ as its endpoint. No other paths in $\qset_2'\cup \qset_2''$ may have $v$ as their endpoint. In either case, $v$ may serve as an endpoint of at most two paths in $\qset_2'\cup \qset_2''$. We view each path of $\qset_2'\cup \qset_2''$ as directed toward its endpoint that lies in $R$.

We now define a partition of $R$ into five subsets, $R_0,R_1,R_2,R_3,R_4$: Set $R_0$ contains all vertices $v\in R$, such that either no path of $\qset_2'\cup \qset_2''$ terminates at $v$, or exactly one path of $\qset_2'$ and exactly one path of $\qset_2''$ terminate at $v$. Set $R_1$ contains all vertices $v$, such that exactly two paths of $\qset_2'$ terminate at $v$, and set $R_2$ contains all vertices $v$, such that exactly one path of $\qset_2'$, and no paths of $\qset_2''$ terminate at $v$. Similarly, $R_3$ contains all vertices $v\in R$, such that exactly two paths of $\qset_2''$ terminate at $v$, and $R_4$ contains all vertices $v\in R$, such that exactly one path of $\qset_2''$, and no path of $\qset_2'$ terminate at $v$. Since $|\qset_2'|=|\qset_2''|$, it is easy to see that $2|R_1|+|R_2|=2|R_3|+|R_4|$. Our goal is to construct a set $\rset$ of paths in $G'\setminus \set{e}$, connecting the vertices of $R_1\cup R_2$ to the vertices of $R_3\cup R_4$, such that
each vertex in $R_1$ has exactly two paths originating from it, and each vertex in $R_2$ has exactly  one path originating from it; similarly, each vertex in $R_3$ has  exactly two paths terminating at it, and each vertex in $R_4$ has exactly one path terminating at it. Moreover, we will ensure that every edge of $G'$ participates in at most two paths in $\rset$. It is then easy to see that the union of the paths in $\qset_1,\qset_2',\qset_2''$ and $\rset$ gives the desired set $\qset': T'\sconnect_2 T''$ of paths in $G\setminus \set{e}$.

We construct the set $\rset$ of paths in two steps. Start with $A=R_1$ and $B=R_3$. If $|A|>|B|$, add $|A|-|B|$ vertices of $R_4$ to $B$; otherwise add $|B|-|A|$ vertices of $R_2$ to $A$ (we can do so since $2|R_1|+|R_2|=2|R_3|+|R_4|$). Since set $R$ is $(k/4,1)$-well-linked in $G'\setminus\set{e}$, and $|A|=|B|\leq |T_2|<k/4$, there is a set $\rset_1:A\sconnect_1B$ of paths in $G'\setminus \set{e}$.

Let $A'=R_1\cup (R_2\setminus A)$, and $B'=R_3\cup(R_4\setminus B)$. It is easy to see that $|A'|=|B'|$. Since set $R$ is $(k/4,1)$-well-linked in $G'\setminus\set{e}$, and $|A'|=|B'|\leq |T_2|<k/4$, there is a set $\rset_2:A'\sconnect_1B'$ of paths in $G'\setminus \set{e}$. We then set $\rset=\rset_1\cup \rset_2$. Combining the paths in $\qset_1,\qset_2',\qset_2''$ and $\rset$, we obtain a collection $\qset':T'\sconnect_2 T''$ of paths in $G\setminus\set{e}$.
%
\end{proof}
\end{proof}
\fi

\iffull
We obtain the following immediate corollary of Lemma~\ref{lem: deletable edge}, since $p(C)\leq |T|=k+k'$ for any cluster $C\subseteq V(G)\setminus T$.

\begin{corollary}\label{cor: deletable edge}
Let $C\subseteq V(G)\setminus T$ be any cluster of $G$ containing non-terminal vertices only, that has the $(k/4,1)$-bandwidth property. Then $|\Gamma(C)|\leq k+k'$.
\end{corollary}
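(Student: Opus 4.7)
The plan is to simply combine Lemma~\ref{lem: deletable edge} with the trivial upper bound $p(C)\leq |T|$. Concretely, I would first invoke Lemma~\ref{lem: deletable edge} on the cluster $C$: since $C\subseteq V(G)\setminus T$ is a cluster of non-terminal vertices having the $(k/4,1)$-bandwidth property, the lemma yields $p(C)=|\Gamma(C)|$.

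Next I would argue the elementary upper bound $p(C)\leq |T|$. Recall that $\pset(C)$ is a collection of node-disjoint paths connecting terminals in $T$ to $\Gamma(C)$; since each such path uses a distinct terminal as an endpoint, the number of paths is at most $|T|=|T_1|+|T_2|=k+k'$. Therefore $p(C)\leq k+k'$, and combining this with the equality from Lemma~\ref{lem: deletable edge} gives $|\Gamma(C)|=p(C)\leq k+k'$, as required.

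This really is a one-line derivation, so there is no serious obstacle to overcome; the substantive content already sits inside Lemma~\ref{lem: deletable edge}, whose proof uses minimality of $G$ and the Deletable-Edge style argument to rule out any edge of $\out(C)$ not covered by a terminal-to-boundary path. The corollary just records the quantitative bound one obtains by plugging in the cardinality of the terminal set.
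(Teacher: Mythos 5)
Your proposal is correct and is exactly the paper's own argument: apply Lemma~\ref{lem: deletable edge} to get $p(C)=|\Gamma(C)|$, then observe $p(C)\leq |T|=k+k'$ since the node-disjoint paths in $\pset(C)$ have distinct terminal endpoints. Nothing more is needed.
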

\fi

\label{-------------------------------------weak 2-cluster chain-------------------------}
\subsection{Weak $2$-Cluster Chain}\label{subsec: weak 2-cluster chain}
In this section we define a weak $2$-cluster chain, which is somewhat weaker than the $2$-cluster chain defined in Section~\ref{sec: building PoS}. We then show that if $G$ contains a weak $2$-cluster chain, then it must contain a (strong) $2$-cluster chain.
\begin{definition}  A weak $2$-cluster chain consists of two disjoint clusters $X'$ and $Y'$, and a set $\pset=\pset_1\cup \pset_2$ of node-disjoint paths, such that:

\begin{itemize}

\item $T\cap X',T\cap Y'=\emptyset$; 
\item $X'$ has the $(k'',\alpha^*)$-bandwidth property, and $Y'$ has the $(k'',\alpha^*)$-bandwidth property in $G$; and

\item $|\pset_1|=|\pset_2|=2k'$; paths in $\pset_1$ connect vertices in $T_1$ to vertices in $X'$, and paths in $\pset_2$ connect vertices in $T_1$ to vertices in $Y'$. Moreover, the paths in $\pset$ are internally disjoint from $X'\cup Y'$.
\end{itemize}
\end{definition}

We will refer to the $2$-cluster chain defined in Section~\ref{sec: building PoS} as a strong $2$-cluster chain from now on.
In the next theorem we show how to obtain a strong $2$-cluster chain from a weak one. \ifabstract{The proof is omitted from this extended abstract.}\fi

\begin{theorem}\label{thm: from weak to strong $2$-cluster chain}
If $G$ contains a weak $2$-cluster chain, then it contains a strong $2$-cluster chain.
\end{theorem}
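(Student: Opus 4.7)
The plan is to build $X$ and $Y$ by attaching node-disjoint \emph{stubs} to $X'$ and $Y'$, whose outer endpoints become $\tilde T_1, \tilde T_2, U_X, U_Y$ and whose inner endpoints lie on $\Gamma(X')$ or $\Gamma(Y')$ respectively. The well-linkedness verification will then be automatic: once the stubs on the $X$-side are node-disjoint and attach to distinct vertices of $\Gamma(X')$, any flow requirement on subsets of $\tilde T_1 \cup U_X$ of size at most $k''$ can be transported along the stubs (at congestion $1$, since they are node-disjoint), routed within $G[X']$ with congestion at most $1/\alpha^*$ using the $(k'', \alpha^*)$-bandwidth property of $X'$, and then transported back out, yielding a flow in $G[X]$ of congestion at most $1/\alpha^*$; the argument on the $Y$-side is identical.

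To create the stubs I would perform three routings inside $G$. First, I would reserve $k'$ paths of $\pset_1$ to serve as the stubs anchoring $\tilde T_1 \subseteq T_1$ to $\Gamma(X')$; this is immediate since $\pset_1$ already consists of $2k'$ node-disjoint paths from $T_1$ to $\Gamma(X')$, internally disjoint from $X' \cup Y'$. Second, I would build a collection $\cset$ of $k''$ node-disjoint paths from $\Gamma(X')$ to $\Gamma(Y')$, internally disjoint from $X' \cup Y'$, by concatenating the remaining paths of $\pset_1$ with an edge-disjoint routing in $G$ between their $T_1$-endpoints and certain $T_1$-endpoints of $\pset_2$ (which exists by the $1$-well-linkedness of $T_1$ and is automatically node-disjoint by Observation~\ref{obs: EDP to NDP in degree-3}), and then with the corresponding paths of $\pset_2$; each resulting walk may re-enter $X' \cup Y'$, but truncating it between its last exit from $X'$ and its first subsequent entry into $Y'$ produces a walk internally disjoint from $X' \cup Y'$, and Observation~\ref{obs: low cong flow to NDP} applied to the resulting low-congestion flow extracts $k''$ mutually node-disjoint such paths. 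Third, I would build a collection $\mathcal{D}$ of $k''$ node-disjoint paths from $\Gamma(Y')$ to distinct vertices of $T_2$, internally disjoint from $Y'$ and from the paths already chosen, by concatenating further paths of $\pset_2$ with a congestion-$2$ flow from their $T_1$-endpoints into $T_2$ provided by the $\half$-linkage of $(T_1, T_2)$, and applying Observation~\ref{obs: low cong flow to NDP} once again.

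To assemble the strong chain, for each $C \in \cset$ I would pick an arbitrary edge $e_C$ of $C$, splitting $C$ into an $X$-side segment ending in $\Gamma(X')$ and a $Y$-side segment ending in $\Gamma(Y')$. Setting $E' := \{e_C : C \in \cset\}$, taking $U_X$ and $U_Y$ to be the $X$- and $Y$-side endpoints of the edges of $E'$, taking $\tilde T_1$ to be the $T_1$-endpoints of the reserved $\pset_1$-stubs, and $\tilde T_2$ to be the $T_2$-endpoints of $\mathcal{D}$, the cluster $X$ will consist of $X'$ together with the internal vertices of the reserved $\pset_1$-stubs and of the $X$-side segments of $\cset$, and $Y$ is defined analogously using $\mathcal{D}$ and the $Y$-side segments of $\cset$. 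Since all stubs on the $X$-side are node-disjoint among themselves and internally disjoint from $X'$, and similarly on the $Y$-side, the bandwidth property of $X'$ and $Y'$ will transfer to the required $(k'', \alpha^*)$-well-linkedness of $\tilde T_1 \cup U_X$ in $G[X]$ and $\tilde T_2 \cup U_Y$ in $G[Y]$, as sketched in the first paragraph.

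The main obstacle is step two: extracting $k''$ genuinely node-disjoint paths from $X'$ to $Y'$ after the double concatenation and truncation, while maintaining disjointness from the reserved $\pset_1$-stubs and from the paths of $\mathcal{D}$ produced in step three. Each concatenation raises the edge congestion, and converting a congestion-$\eta$ flow into a collection of node-disjoint paths in a degree-$3$ graph via Observation~\ref{obs: low cong flow to NDP} loses a factor of $3\eta$. Careful bookkeeping exploiting the factor-$16$ surplus in each of $\pset_1, \pset_2$ (of size $2k' = 16 k''$), together with the freedom to choose disjoint subsets of $T_1$-endpoints for the three routings so that distinct tasks do not compete for the same terminals, will be needed to certify that the required families $\cset$, $\mathcal{D}$, and the reserved $\pset_1$-stubs can be produced simultaneously in a disjoint fashion.
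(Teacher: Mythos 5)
Your high-level assembly — stubs into $X'$ from $T_1$ via $\pset_1$, a bridge family $\cset$ from $\Gamma(X')$ to $\Gamma(Y')$, stubs into $Y'$ from $T_2$ via $\pset_2$ and the linkage, pick one edge per bridge path as $E'$, verify $(k'',\alpha^*)$-well-linkedness by routing along node-disjoint stubs and then through the bandwidth property — is exactly the structure of the paper's proof, and your first paragraph's verification argument is correct in spirit.

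The gap is precisely the one you flag at the end and then wave at: you have no mechanism for making $\cset$, $\mathcal{D}$, and the reserved $\pset_1$-stubs mutually node-disjoint. Choosing disjoint subsets of $T_1$-endpoints for the three routings does nothing for you, because the conflict is not at the terminals: an interior vertex of a path in $\cset$ can coincide with an interior vertex of a reserved $\pset_1$-stub (they both live in $G\setminus(X'\cup Y')$ and neither is anchored to a terminal there), and Observation~\ref{obs: low cong flow to NDP} hands you \emph{some} node-disjoint family out of a low-congestion flow with no control over which vertices the extracted paths use. Nor does the factor-$16$ surplus help directly: even after trimming $\cset$ to $k''$ paths, nothing prevents all $k''$ of them from slicing through the $k'$ reserved stubs. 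Moreover you would need the same control twice more, once when constructing $\mathcal{D}$ subject to disjointness from both $\cset$ and the reserved stubs.

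The missing ingredient is a re-routing lemma (the paper uses the result of Conforti, Hassin and Ravi, Lemma~\ref{lemma: re-routing of vertex-disjoint paths}): given a family $\xset_1$ of nearly-disjoint paths to a common sink $s$ and a second family $\xset_2$ of nearly-disjoint paths to $s$, one can produce $|\xset_1|$ nearly-disjoint paths to $s$ of which all but $|\xset_2|$ are \emph{literally a subset of} $\xset_1$, the remaining $|\xset_2|$ originating where $\xset_2$ did. This is what lets one first push $\qset_0$ from $\Gamma(X')$ to $\Gamma(Y')$ while sacrificing at most $|\qset_0|$ paths of $\pset_1$, then repeat on the $\pset_2$ side, then repeat once more to inject the $T_2$-paths while sacrificing a controlled number from each of the three surviving families. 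The bookkeeping you allude to in the last paragraph is exactly the accounting of how many paths survive each application of that lemma; without the lemma itself there is no accounting to do because you cannot preserve \emph{specific} paths across an application of Observation~\ref{obs: low cong flow to NDP}.
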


\iffull
\begin{proof}
Our proof extensively uses the following re-routing lemma.
Suppose we are given a directed graph $\hat G$, a set $U\subseteq V(\hat G)$ of its vertices, and an additional vertex $s\in V(\hat G)\setminus U$. A set $\xset$ of directed paths that originate at the vertices of $U$ and terminate at $s$ is called a set of $U$-$s$ paths. We say that the paths in $\xset$ are \emph{nearly disjoint}, if except for vertex $s$ they do not share any other vertices.
The following lemma was proved by Conforti, Hassin and Ravi~\cite{CHR}. We provide a simpler proof, suggested to us by Paul Seymour \cite{PS-comm} in the Appendix.

\begin{lemma}\cite{CHR}
\label{lemma: re-routing of vertex-disjoint paths} 
There is an efficient algorithm, that, given a directed graph $\hat G$, two subsets $U_1,U_2$ of its vertices, and an additional vertex $s\in V(\hat G)\setminus(U_1\cup U_2)$, together with a set $\xset_1$ of $\ell_1$ nearly disjoint $U_1$-$s$ paths and a set $\xset_2$ of $\ell_2$ nearly disjoint $U_2$-$s$ paths in $\hat G$, where $\ell_1>\ell_2\geq 1$, finds a set $\xset'$ of $\ell_1$ nearly-disjoint $(U_1\cup U_2)$-$s$ paths, and  a partition $(\xset'_1,\xset_2')$ of $\xset'$, such that $|\xset_2'|=\ell_2$, the paths of $\xset_2'$ originate from $U_2$, and $\xset_1'\subseteq \xset_1$.\end{lemma}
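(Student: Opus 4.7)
The plan is to reduce to the subgraph $H = \bigcup_{P \in \xset_1 \cup \xset_2} V(P)$ together with the (directed) edges used by paths of $\xset_1 \cup \xset_2$, oriented toward $s$. Everything constructed will live in $H$, so we may replace $\hat G$ by $H$. I will proceed by induction on $\ell_2$, and the workhorse will be a one-shot ``walk-and-swap'' procedure.

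\textbf{Base case ($\ell_2 = 1$).} We have $\xset_1$ of size $\ell_1 \geq 2$, all paths ending at $s$ and nearly disjoint, together with a single path $Q$ starting at some $u \in U_2$ and ending at $s$. Traverse $Q$ from $u$ toward $s$ and let $v$ be the first vertex of $Q$ (other than $u$) that lies on some path of $\xset_1$; such a $v$ exists because $s$ itself lies on every path of $\xset_1$. Pick any $P \in \xset_1$ containing $v$ and define
\[
Q' \;=\; (\text{prefix of } Q \text{ from } u \text{ to } v) \;\cdot\; (\text{suffix of } P \text{ from } v \text{ to } s).
\]
Since $v$ is the \emph{first} vertex of $Q$ on any path of $\xset_1$, the prefix of $Q$ is disjoint from all of $\xset_1$ (except at $v$); since the paths of $\xset_1$ are nearly disjoint, the suffix of $P$ is disjoint from $\xset_1 \setminus \{P\}$ except at $s$. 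Hence $\xset_1' := \xset_1 \setminus \{P\}$ and $\xset_2' := \{Q'\}$ satisfy the conclusion.

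\textbf{Inductive step.} Assume the lemma for $\ell_2 - 1$. Pick an arbitrary $Q \in \xset_2$ and apply induction to $\xset_1$ and $\xset_2 \setminus \{Q\}$ (whose sizes $\ell_1$ and $\ell_2 - 1$ satisfy $\ell_1 > \ell_2 - 1$). This yields a nearly-disjoint family $\hat\xset_1 \cup \hat\xset_2$ of size $\ell_1$ with $\hat\xset_1 \subseteq \xset_1$, $|\hat\xset_1| = \ell_1 - \ell_2 + 1$, and $\hat\xset_2$ a set of $\ell_2 - 1$ nearly-disjoint $U_2$--$s$ paths. Now apply the base-case walk-and-swap to the nearly-disjoint system $\hat\xset_1 \cup \hat\xset_2$ (treating it as a collection of $\ell_1$ nearly-disjoint paths to $s$) together with $Q$: this produces a vertex $v^*$ on $Q$ that is the first to meet some path $P^* \in \hat\xset_1 \cup \hat\xset_2$, and a rerouted $U_2$--$s$ path $Q'$ nearly disjoint from $(\hat\xset_1 \cup \hat\xset_2) \setminus \{P^*\}$. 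When $P^* \in \hat\xset_1$, we finish: output $\hat\xset_1 \setminus \{P^*\}$ and $\hat\xset_2 \cup \{Q'\}$.

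\textbf{The main obstacle.} The delicate case is $P^* \in \hat\xset_2$: the swap exchanges one $U_2$--$s$ path for another and makes no net progress. I expect this to be the hardest part of the proof. To resolve it, my plan is to refine the choice of which path to process and/or to work with the entire system at once rather than path-by-path. Concretely, I would either (i) set up the argument as a \emph{global} augmenting-path exchange in $H$: among all families of $\ell_1$ nearly-disjoint $(U_1 \cup U_2)$--$s$ paths in $H$ with at least $\ell_2$ paths starting in $U_2$ (whose existence follows from the fact that $\xset_1, \xset_2$ together force both the min $(U_1\cup U_2)$--$s$ cut and the min $U_2$--$s$ cut in $H$ to be large enough, via Menger), pick one maximizing $|\mathcal F \cap \xset_1|$, and argue via an exchange that this maximum equals $\ell_1 - \ell_2$; or (ii) introduce a potential function such as the total edge-length of $\hat\xset_2$, and show that repeated application of the walk-and-swap either falls into the good case $P^* \in \hat\xset_1$ or strictly decreases the potential, guaranteeing termination in the desired state. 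Either refinement closes the case $P^* \in \hat\xset_2$ and completes the induction.
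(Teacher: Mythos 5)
Your reduction to the union subgraph and your base case are sound, and you correctly localize the real difficulty to the inductive step when the first contact vertex $v^*$ lies on a path $P^* \in \hat\xset_2$ rather than on a path of $\hat\xset_1$. But you then stop: neither of your two sketched ``fixes'' actually closes that case, and this is the crux of the lemma, not a loose end.

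On fix (ii), the potential argument: the total edge-length of $\hat\xset_2$ is \emph{not} guaranteed to decrease when $P^*$ is swapped out for $Q'$. The rerouted path $Q'$ contains the entire prefix of $Q$ up to $v^*$, which can be arbitrarily long, while the discarded $P^*$ can be short. Also, after the swap you have consumed $Q$ and made no net progress toward having $\ell_2$ new $U_2$-paths, and it is not specified which path you walk next; without that, ``repeated application'' is not well-defined. On fix (i), the extremal argument: you claim existence of a family of $\ell_1$ nearly-disjoint $(U_1\cup U_2)$--$s$ paths with at least $\ell_2$ starting in $U_2$ ``via Menger'' --- but this is a two-constraint demand (at least $\ell_1$ paths total and at least $\ell_2$ of them from $U_2$), which requires a genuine auxiliary flow-network argument with gadget vertices, not a single application of Menger. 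And the key step, that the maximum of $|\mathcal F\cap\xset_1|$ over such families equals $\ell_1-\ell_2$, is exactly the hard content of the lemma and is asserted without proof; you would also have to promote ``at least $\ell_2$'' to ``exactly $\ell_2$'' to match the statement.

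The paper resolves the difficulty with a device you have not hit on. Working in the union graph $\hat G'$ of the two path systems, it first repeatedly deletes any edge $e\notin E(\xset_1)$ whose removal still leaves $\ell_2$ nearly-disjoint $U_2'$--$s$ paths, until no such edge remains. In this edge-minimal graph, a flow argument in a node-capacitated network (with a super-source of capacity $\ell_1$ feeding two gadget vertices of capacities $\ell_1-\ell_2$ and $\ell_2$, attached to $U_1'$ and $U_2'$ respectively) produces $\ell_1$ nearly-disjoint $(U_1'\cup U_2')$--$s$ paths with \emph{exactly} $\ell_2$ originating from $U_2'$. Finally, the minimality of the pruned graph forces the $\ell_1-\ell_2$ paths from $U_1'$ to use only edges of $E(\xset_1)$; since paths of $\xset_1$ are nearly disjoint, following $E(\xset_1)$-edges from any $u'\in U_1'$ determines the path uniquely, so these $\ell_1-\ell_2$ paths are literally paths of $\xset_1$. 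This ``prune to minimality, then apply Menger, then exploit minimality'' pattern is the ingredient your induction is missing.
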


Let $(X',Y',\pset)$ be a weak $2$-cluster chain in $G$, together with the corresponding partition $(\pset_1,\pset_2)$ of $\pset$.
The idea of the proof is to construct three large sets of paths: set $\qset$, connecting the vertices of $X'$ to the vertices of $Y'$,  set $\rset$, connecting the terminals in $\tset_2$ to the vertices of $X'$ or $Y'$, and a subset $\pset'\subseteq \pset$, containing many paths from both $\pset_1$ and $\pset_2$. We also ensure that the paths in $\qset\cup \rset\cup \pset'$ are all disjoint. This is done by exploiting the connectivity properties of the terminals in $G$, and by carefully applying Lemma~\ref{lemma: re-routing of vertex-disjoint paths} several times. Constructing the strong $2$-cluster chain is then immediate. The rest of the proof consists of three steps. In the first step, we define the set $\qset$ of paths connecting the vertices of $X'$ to the vertices of $Y'$. In the second step, we define the set $\rset$ of paths, connecting the terminals of $T_2$ to the vertices of $X'\cup Y'$. In the third step, we complete the construction of the strong $2$-cluster chain.

\paragraph{Step 1: Connecting $X'$ to $Y'$.}

Our first step is  to construct a set $\qset$ of paths, connecting the vertices of $X'$ to the vertices of $Y'$. Let $\tset_1',\tset_1''\subseteq \tset_1$ be the sets of terminals where the paths of $\pset_1$ and $\pset_2$ originate, respectively, so $|\tset_1'|=|\tset_1''|=2k'$. Since the terminals of $\tset_1$ are $(k/4,1)$-well-linked in $G$, and $2k'\leq k/4$, there is a set $\tilde{\qset}:\tset_1'\sconnect_1\tset_1''$ of edge-disjoint paths in $G$. From Observation~\ref{obs: EDP to NDP in degree-3}, the paths in $\qset$ are node-disjoint. By concatenating the paths in $\pset_1,\tilde{\qset}$ and $\pset_2$, and sending $1/2$ flow units on each such path, we obtain a flow of value $k'$ from the vertices of $\Gamma(X')$ to the vertices of $\Gamma(Y')$, with node-congestion at most $1$. From the integrality of flow, there is a set $\qset_0$ of $3k'/4$ node-disjoint paths, connecting the vertices of $\Gamma(X')$ to the vertices of $\Gamma(Y')$. By appropriately truncating each such path, we can ensure that each path connects a vertex of $\Gamma(X')$ to a vertex of $\Gamma(Y')$, and is internally disjoint from $X'\cup Y'$. 
Our next step is to re-route the paths in $\qset_0$, so that they are disjoint from a large fraction of paths in sets $\pset_1$ and $\pset_2$. We do so in two steps.

First, we construct a directed graph $G_1$ as follows. We start by taking the union of the paths in $\qset_0\cup \pset_1$. If an edge $e$ appears on a path in $\qset_0$ and on a path in $\pset_1$, then we include two copies of $e$ in $G_1$ - one for each of the two paths. For each path $P\in \qset_0\cup \pset_1$, we direct all edges of $P$ toward the endpoint of $P$ lying in $\Gamma(X')$. Finally, we add a new vertex $s$, and connect all vertices of $\Gamma(X')$ that belong to $G_1$ to $s$ with a directed edge. We extend each path in $\qset_0\cup \pset_1$ by adding one edge to it, so that it now terminates at $s$. We can now apply Lemma~\ref{lemma: re-routing of vertex-disjoint paths} to graph $G_1$, with $\xset_1=\pset_1$ and $\xset_2=\qset_0$. As a result, we obtain a subset $\pset_1'\subseteq \pset_1$ of at least $|\pset_1|-|\qset_0|=2k'-3k'/4=5k'/4$ paths, and a collection $\hat{\qset}$ of paths in $G_1$, from vertices of $\Gamma_G(Y')$ to $S$. Moreover, the paths in $\pset_1'\cup \hat{\qset}$ are node-disjoint in $G_1$, except for sharing their last vertex $s$. Consider the corresponding sets $\pset_1',\hat{\qset}$ of paths in the original graph $G$. The paths in $\pset_1'\cup \hat{\qset}$ are node-disjoint; set $\pset_1'$ contains at least $5k'/4$ paths, connecting  vertices of $T_1'$ to  vertices of $\Gamma(X')$, such that the paths in $\pset_1'$ are disjoint from $Y'$, and internally disjoint from $X'$; set $\hat{\qset}$ contains $3k'/4=6k''$ paths, connecting vertices of $\Gamma(X')$ to vertices of $\Gamma(Y')$, such that the paths in $\hat{\qset}$ are internally disjoint from $X'\cup Y'$.

We next repeat the same procedure with $\hat{\qset}$ and $\pset_2$: we construct a directed graph $G_2$, by taking a union of the paths in $\hat{\qset}\cup \pset_2$ (if an edge $e$ appears on two such paths, then we include two copies of $e$ - one for each path). For each path $P\in \hat{\qset}\cup \pset_2$, we direct all edges of $P$ toward the endpoint of $P$ lying in $\Gamma(Y')$. We add a new vertex $s$, and connect all vertices of $\Gamma(Y')$ that belong to $G_2$ to $s$ with a directed edge. We extend each path in $\hat{\qset}\cup \pset_2$ by adding one edge to it, so it now terminates at $s$. As before, we apply Lemma~\ref{lemma: re-routing of vertex-disjoint paths} to graph $G_2$, with $\xset_1=\pset_2$ and $\xset_2=\hat{\qset}$. As a result, we obtain a subset $\pset_2'\subseteq \pset_2$ of at least $5k'/4$ paths, and a collection $\qset$ of paths in $G_2$ from $\Gamma_G(X')$ to $\Gamma_G(Y')$. Moreover, the paths in $\pset_2'\cup \qset$ are node-disjoint in $G_2$, except for sharing their last vertex $s$. Consider the corresponding sets $\pset_2',\qset$ of paths in the original graph $G$. Then the paths in $\pset_1'\cup\pset_2'\cup \hat{\qset}$ are node-disjoint (as the vertices lying on the paths in $\pset_1'$ do not participate in graph $G_2$); $|\pset_1'|,|\pset_2'|\geq 5k'/4$; set $\qset$ contains $6k''$ paths, connecting  vertices of $\Gamma(X')$ to  vertices of $\Gamma(Y')$, and all paths of $\qset$ are internally disjoint from $X'\cup Y'$.

\paragraph{Step 2: Connecting the Terminals of $\mathbf{T_2}$ to $\mathbf{X'\cup Y'}$.}
In this step we construct a set $\rset$ of paths, connecting terminals of $T_2$ to vertices of $\Gamma(X')\cup \Gamma(Y')$. Recall that $|T_1'|=2k'$. Let $T'\subseteq T_1'$ be any subset of $k'$ terminals. Since $(T_1,T_2)$ are $\half$-linked in $G$, there is a set $\tilde{\pset}:T_2\sconnect_2 T'$ of paths in $G$. By concatenating the paths of $\tilde{\pset}$ and the paths of $\pset_1$ that originate at the vertices of $T'$, and sending $1/4$ flow units on each such path, we obtain a flow of value $k'/4$ from $T_2$ to $X'$, with node-congestion at most $1$. From the integrality of flow, there is a collection $\rset_0$ of $k'/4$ node-disjoint paths from the terminals of $T_2$ to the vertices of $X'$ in $G$. By suitably truncating the paths in $\rset_0$, we obtain a collection of $k'/4=2k''$ node-disjoint paths, connecting the terminals of $T_2$ to the vertices of $\Gamma(X')\cup \Gamma(Y')$, such that the paths in $\rset_0$ are internally disjoint from $X'\cup Y'$. Next, we re-route the paths in $\rset_0$, so that they are disjoint from many paths in $\pset_1'\cup \pset_2'\cup \qset$.

We do so by constructing a directed graph $G_3$: Start with the union of the paths $\pset_1'\cup\pset_2'\cup \qset\cup \rset_0$. Note that an edge may belong to up to two such paths. If some edge $e$ belongs to two such paths, we make two copies of $e$ - one for each path. We then direct the edges on each path $P\in \pset_1'\cup \pset_2'\cup \rset_0$ toward the endpoint of $P$ lying in $X'\cup Y'$. Consider now some path $Q\in \qset$, and let $x\in X'$, $y\in Y'$ be its two endpoints. Let $e=(u,v)\in E(Q)$ be any edge of $Q$, with $u$ lying closer to $x$ on $Q$ than $v$. We subdivide the edge $(u,v)$ by two new vertices $a_Q,b_Q$, thus obtaining a new path $(u,a_Q,b_Q,v)$ replacing $e$. Let $Q'$ denote the resulting path, let $Q_1$ be the sub-path of $Q'$ from $x$ to $a_Q$, and let $Q_2$ be the sub-path of $Q'$ from $b_Q$ to $y$. We direct all edges of $Q_1$ along the path toward $x$, and all edges of $Q_2$ along the path toward $y$. Let $\tilde{\qset}=\set{Q_1,Q_2\mid Q\in \qset}$. Finally, we add a destination vertex $s$ and connect every vertex of $\Gamma(X')\cup\Gamma(Y')$ that belongs to $G_3$ to $s$ with a directed edge. We extend all paths in $\pset_1'\cup \pset_2'\cup \rset_0\cup \tilde{\qset}$, so that each such path now terminates at $s$. Let $\xset_1=\pset_1'\cup \pset_2'\cup \tilde{\qset}$, and $\xset_2=\rset_0$ be the resulting sets of paths in the new graph $G_3$. Then the paths in $\xset_1$ are all node-disjoint, except for sharing the destination vertex $s$, and the same holds for the paths in $\xset_2$. We apply Lemma~\ref{lemma: re-routing of vertex-disjoint paths} to obtain the sets $\xset'_1\subseteq \xset_1$ and $\rset=\xset_2'$ of paths, such that all paths in $\xset'_1\cup \rset$ are node-disjoint, except for sharing the destination vertex $s$.

Let $\pset_1''=\xset'_1\cap \pset_1'$, $\pset_2''=\xset'_1\cap \pset_2'$. We construct a set $\qset^*\subseteq\qset$ of paths as follows: for each path $Q\in \qset$, if both $Q_1,Q_2\in \xset'_1$, then add $Q$ to $\qset^*$. Notice that $|\qset^*|\geq |\qset|-|\rset_0|\geq 6k''-2k''\geq k''$. We discard paths from $\qset^*$ until $|\qset^*|=k''$ holds. Similarly, $|\pset_1''|\geq |\pset_1'|-|\rset_0|\geq 5k'/4-2k''=k'$, and $|\pset_2''|\geq k'$. 
Consider now the sets $\pset_1'',\pset_2'',\qset^*$ and $\rset$ of paths in the original graph $G$. Then all paths in $\pset_1''\cup \pset_2''\cup \qset^*\cup \rset$ are node-disjoint, and internally disjoint from $X'\cup Y'$; set $\rset$ contains $2k''$ paths connecting the terminals of $T_2$ to the vertices of $\Gamma(X')\cup \Gamma(Y')$; set $\qset^*$ contains $k''$ paths connecting the vertices of $\Gamma(X')$ to the vertices of $\Gamma(Y')$; set $\pset_1''$ contains at least $k'$ paths connecting the terminals of $T_1'$ to the vertices of $\Gamma(X')$, and set $\pset_2''$ contains at least $k'$ paths connecting the terminals of $T_1''$ to the vertices of $\Gamma(Y')$.


\paragraph{Step 3: Constructing the $2$-Cluster Chain.}

We are now ready to define the $2$-cluster chain.   Let $\rset',\rset''\subseteq \rset$ be the subsets of paths terminating at $X'$ and $Y'$ respectively, and assume w.l.o.g. that $|\rset''|\geq |\rset'|$, so $|\rset''|\geq k''$. The set $E'$ of edges contains exactly one edge $e_Q$ from each path $Q\in \qset^*$, where the edge is chosen arbitrarily. For each path $Q\in \qset^*$, let $Q_1$ and $Q_2$ be the two sub-paths obtained from $Q$ by deleting the edge $e_Q$ (where each sub-path contains an endpoint of $e_Q$). We assume that $Q_1$ contains a vertex of $X'$ and $Q_2$ contains a vertex of $Y'$. Let $\qset'=\set{Q_1\mid Q\in \qset^*}$, and $\qset''=\set{Q_2\mid Q\in \qset^*}$.

We let $\ttset_2\subseteq \tset_2$ be the set of $k''$ vertices, where the paths of $\rset''$ originate, and we let $\ttset_1\subseteq \tset_1$ be the set of $k'$ vertices, where the paths of $\pset_1''$ originate.
Cluster $X$ is the union of $X'$, $V(\pset_1'')$ and $V(\qset')$, while cluster $Y$ is the union of $Y'$, $V(\rset'')$ and $V(\qset'')$.
Let $\Upsilon_X\subseteq X$ be the set of vertices of $X$ that serve as endpoints of the edges in $E'$, and define $\Upsilon_Y\subseteq Y$ similarly. Recall that $X'$ has the $(k'',\alpha^*)$-bandwidth property in $G$. Moreover, we have a set $\pset_1''\cup \qset'_2$ of node-disjoint paths, connecting every vertex of $\ttset_1\cup \Upsilon_X$ to some vertex of $\Gamma(X')$, and these paths are internally disjoint from $X'$. It is immediate to see that the vertices of $\ttset_1\cup \Upsilon_X$ are $(k'',\alpha^*)$-well-linked in $G[X]$. Similarly, using the $(k'',\alpha^*)$-bandwidth property of $Y'$, and the sets $\rset_2''\cup \qset_2''$ of paths, we conclude that $\ttset_2\cup \Upsilon_Y$ is $(k'',\alpha^*)$-well-linked in $G[Y]$.
\end{proof}

\fi

\label{----------------------------------Good and Perfect Clusters---------------------------------------------}
\subsection{Good Clusters and Perfect Clusters}\label{subsec: good and perfect clusters}

\begin{definition} Let $C\subseteq V(G)\setminus T$ be any cluster containing non-terminals vertices only, and let $(A,B)$ be the minimum $1/4$-balanced cut in $G\setminus C$ with respect to $T_1$. 
We say that $C$ is a \emph{good} cluster, iff $|E(A,B)|< k/28$.  We say that it is a \emph{perfect} cluster, iff $k/28\leq |E(A,B)|\leq 7k/32$.
\end{definition}

\ifabstract
The proof of the following theorem uses standard techniques, namely variations of the so-called well-linked decompositions, and it is omitted from this extended abstract.
\fi
\iffull
The following theorem shows that a perfect cluster can be used to construct a $2$-cluster chain.
\fi

\begin{theorem}\label{thm: perfect cluster gives chain}
If there is a perfect cluster $C\subseteq V(G)\setminus T$, such that $|\out(C)|\leq k+k'+1$, and $C$ has the $(k'',\alpha^*)$-bandwidth property, then $G$ contains a strong $2$-cluster chain.
\end{theorem}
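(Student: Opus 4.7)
By Theorem~\ref{thm: from weak to strong $2$-cluster chain}, it suffices to construct a weak $2$-cluster chain $(X',Y',\pset_1\cup\pset_2)$ in $G$. Let $(A,B)$ be the minimum $1/4$-balanced cut of $G\setminus C$ with respect to $T_1$, with w.l.o.g.\ $|T_1\cap A|\geq |T_1\cap B|\geq k/4=16k'$, and by hypothesis $k/28\leq |E(A,B)|\leq 7k/32$. The plan is to produce both $X'\subseteq A\setminus T$ and $Y'\subseteq B\setminus T$ by running a standard well-linked decomposition on the two sides of the cut, and to use as the path families the single-edge paths from surviving terminals of $T_1\cap A$ (resp.\ $T_1\cap B$) to their unique neighbors inside $X'$ (resp.\ $Y'$); mutual disjointness of the two families is then automatic from $A\cap B=\emptyset$.

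To extract $X'$, consider $A\setminus T$ together with the set $W_A$ consisting of the endpoints in $A\setminus T$ of the edges of $E(A,B)$, of $E(A,C)$, and of the edges to terminals in $T_1\cap A$. Apply the standard well-linked decomposition: while the current cluster admits a partition that is $(k'',\alpha^*)$-violating with respect to its current boundary (initialized as $W_A$ and grown to absorb newly exposed cut endpoints, as in Observation~\ref{obs: generalized well-linkedness alt def}), discard the side containing the smaller portion of that boundary. The final cluster $X'$ has its boundary $(k'',\alpha^*)$-well-linked in $G[X']$, is disjoint from $T$ and from $C$, and every vertex of $\Gamma_G(X')$ lies in this boundary, so $X'$ has the $(k'',\alpha^*)$-bandwidth property \emph{in $G$}. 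We produce $Y'\subseteq B\setminus T$ symmetrically.

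The crux is to bound the loss incurred by the decomposition, and this is where the perfect-cluster bounds and the bandwidth hypothesis on $C$ enter. By the minimality of $(A,B)$, for any partition $(A_1,A_2)$ of $A$ with small $|E(A_1,A_2)|$, one side must contain very few $T_1$-terminals; otherwise sliding that side across to $B$ would produce a strictly smaller $1/4$-balanced cut of $G\setminus C$, the balance surviving thanks to the slack in $|T_1\cap A|$ and $|T_1\cap B|$ above $k/4$. An analogous argument, routed through $C$ rather than directly across $(A,B)$, shows that a violating sub-cut of $A$ cannot separate many vertices of $U_A^C$ (the endpoints in $A$ of $E(A,C)$) from the rest: the $(k'',\alpha^*)$-bandwidth property of $C$ supplies low-congestion flow inside $G[C]$ between $U_A^C$ and $U_B^C$, so such a sub-cut, concatenated with this $C$-flow, yields a cut of $G\setminus C$ contradicting minimality. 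A budgeting argument in the style of the proof of Lemma~\ref{lemma: balanced cut large piece wl} then shows that at most $k'$ terminals of $T_1\cap A$ are discarded during the decomposition, so at least $15k'\geq 2k'$ terminals of $T_1\cap A$ remain adjacent to $X'$; an analogous bound holds for $Y'$. Finally, we take $\pset_1$ to be the $2k'$ single-edge paths from such terminals to their unique neighbors in $X'$, and $\pset_2$ analogously in $B$; these paths have no internal vertices and are mutually node-disjoint, yielding the weak $2$-cluster chain. The main obstacle throughout is this tight loss analysis, which leans both on the perfect-cluster window $[k/28,7k/32]$ for $|E(A,B)|$ and on the bandwidth property of $C$, the latter being exactly what allows the argument to absorb cut violations involving the $E(A,C)/E(B,C)$ edges.
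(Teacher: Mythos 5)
Your proposal diverges substantially from the paper's proof, and the divergence introduces a real gap. The paper never runs a decomposition on \emph{both} sides of the balanced cut. Instead, it exploits the hypothesis directly: it sets $Y'=C$ (the perfect cluster itself, whose $(k'',\alpha^*)$-bandwidth property is given), and obtains $\pset_2$ from the $(k/4,1)$-well-linkedness of $T_1$: of the $k/4$ node-disjoint paths between $T_1\cap A$ and $T_1\cap B$, at least $k/4-|E(A,B)|\geq k/32=2k'$ avoid $E(A,B)$ and hence must cross $C$; truncating these at their first vertex of $C$ gives $\pset_2$. Only the \emph{larger} side of the balanced cut is decomposed (Lemma~\ref{lem: partitioning alg}), producing a cluster $S^*$ with at least $k/4$ terminals of $T_1$, from which $X'=S^*\setminus T$ and the single-edge $\pset_1$ are extracted.

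The gap in your proposal is the claim that ``an analogous bound holds for $Y'$.'' The sliding argument that closes the loss analysis requires that after removing the candidate cluster $S'$ from a side, the remaining portion still has at least $k/4$ terminals of $T_1$. On the larger side (your $A$, with $|T_1\cap A|\geq k/2$) one gets $|(A\setminus S')\cap T_1|\geq |T_1\cap A|/2\geq k/4$, so the slid cut is still $1/4$-balanced. On the smaller side there is no such slack: the minimum $1/4$-balanced cut is chosen to also minimize $\min\{|A\cap T_1|,|B\cap T_1|\}$, so you should expect $|T_1\cap B|$ to be essentially $k/4$, and then $|(B\setminus S')\cap T_1|\geq |T_1\cap B|/2$ is only $k/8$, which fails the balance requirement. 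Your attempted repair---routing through $C$ to fold $E(A,C)/E(B,C)$ violations into a cut of $G\setminus C$---is not quantified and is in tension with the weakness of the hypothesis: the $(k'',\alpha^*)$-bandwidth property of $C$ only supports flow between subsets of size $\leq k''=k/512$ at congestion $1/\alpha^*=64$, which is far too little to absorb cut violations at the scale of $|E(A,B)|=\Theta(k)$. Beyond the gap, the approach is also needlessly wasteful: it discards $C$ rather than using it as one of the two clusters, which is exactly the role the bandwidth hypothesis on $C$ is there to serve.
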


\iffull
\begin{proof}
From Theorem~\ref{thm: from weak to strong $2$-cluster chain}, it is enough to show that $G$ contains a weak $2$-cluster chain.  Let $(A,B)$ be the $1/4$-minimum balanced cut of $G\setminus C$ with respect to $T_1$, where $|A\cap T_1|\leq |B\cap T_1|$, 
so that $k/28\leq |E(A,B)|\leq 7k/32$.

Since the terminals in $T_1$ are $(k/4,1)$-well-linked in $G$, there is a set $\pset$ of $k/4$ node-disjoint paths between the vertices of $T_1\cap A$ and the vertices of $T_1\cap B$. At most $|E(A,B)|$ of such paths can use the edges of $E(A,B)$, and the remaining paths must intersect the cluster $C$. Let $\pset_2\subseteq \pset$ be the subset of paths that do not contain the edges of $E(A,B)$, truncated at the first vertex of $C$ on the path (where we view the paths as directed from $T_1\cap A$ to $T_1\cap B$). Then $\pset_2$ contains at least $k/4-|E(A,B)|\geq k/4-7k/32\geq k/32$ paths, connecting the vertices of $T_1\cap A$ to the vertices of $C$, and they are internally disjoint from $C\cup B$. Notice that $|E(A,C)|\geq |\pset_2|\geq k/4-|E(A,B)|$, and so $|E(B,C)|\leq |\out(C)|-|E(A,C)|\leq k+k'+1-k/4+|E(A,B)|\leq 49k/64+1+|E(A,B)|$, since $k'=k/64$.
The following lemma is central to the proof of the theorem.

\begin{lemma}\label{lem: partitioning alg}
There is a cluster $S^*\subseteq B$, such that $|T_1\cap S^*|\geq k/4$, and $\Gamma(S^*)\cup (T\cap S^*)$ is $(k'',\alpha^*)$-well-linked in $G[S^*]$. \end{lemma}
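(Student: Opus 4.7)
The plan is to prove the lemma by an iterative well-linked decomposition of $B$, combined with a charging argument that exploits the minimality of the balanced cut $(A,B)$. I will maintain a single cluster $S \subseteq B$ and progressively refine it until the desired well-linkedness holds.

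Initialize $S := B$ and repeat the following. Let $U(S) := \Gamma_G(S) \cup (T \cap S)$. If $U(S)$ is $(k'',\alpha^*)$-well-linked in $G[S]$, stop and output $S^* := S$. Otherwise, apply Observation~\ref{obs: generalized well-linkedness alt def} inside $G[S]$ to obtain a $(k'',\alpha^*)$-violating partition $(X,Y)$ of $S$ with
$$|E(X,Y)| \;<\; \alpha^* \cdot \min\{\,|X \cap U(S)|,\; |Y\cap U(S)|,\; k''\,\}.$$
Among $X$ and $Y$, retain the side containing at least half of $T_1\cap S$ (such a side exists by averaging); say it is $X$, and update $S := X$. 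Since $|S|$ strictly decreases, the loop terminates. Upon termination, $\Gamma(S^*) \cup (T\cap S^*)$ is by construction $(k'',\alpha^*)$-well-linked in $G[S^*]$. The reason the lemma insists on the side with more $T_1$-terminals is that we need to control how much of $T_1\cap B$ survives.

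The main content of the proof, and the hard part, is to show the invariant $|T_1 \cap S| \geq k/4$ is maintained throughout. I would establish this via a charging/minimality argument. Every violating cut contributes at most $\alpha^* k'' = k/32768$ edges, and so the new boundary added to $S$ per iteration is tiny compared to $|E(A,B)|$. Let $B^* := B \setminus S_{\text{final}}$ be the set of vertices discarded over the course of the procedure, and consider the alternative partition $(A \cup S_{\text{final}},\; B^*)$ of $V(G)\setminus C$ (or, more carefully, sequences of discarded $Y_i$'s merged with $A$). The total number of edges crossing this cut is at most $|E(A,B)|$, minus the edges of $E(A,B)$ that are internalized into the retained side, plus the cumulative contribution $\sum_i |E(X_i,Y_i)|$ from the iterations. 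The crucial observation is that if $|T_1\cap B^*| > k/4$, then both sides of this alternative cut contain at least $k/4$ terminals from $T_1$ (since $|T_1 \cap A| \geq k/4$ always), making it a $1/4$-balanced cut of $G\setminus C$; appropriate accounting then shows its cost is strictly smaller than $|E(A,B)|$ (or equal, but with a strictly smaller min-side $|T_1\cap\,\cdot\,|$), contradicting the choice of $(A,B)$ as the minimum $1/4$-balanced cut.

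To make the charging tight, I would introduce a potential function of the form $\Phi(S) := |\Gamma(S)\setminus T| + c\,|T_1 \cap S|$, with $c$ chosen so that at every step the decrease in $\Phi$ due to losing $T_1\cap Y$ terminals is compensated by the decrease in boundary to $B\setminus S$ minus the small number of new edges $|E(X,Y)| < \alpha^* k''$. Summing over iterations gives an upper bound on $\sum_i |T_1\cap Y_i|$ in terms of the initial boundary $|\Gamma(B)\setminus T| \leq |E(A,B)| + |E(B,C)|$ and the slack provided by the minimality of $(A,B)$. The main obstacle is choosing $c$ and the exact formulation of the charging so that the resulting bound is comfortably below $k/4$, which is feasible because $\alpha^* = 1/64$ and $k'' = k/512$ are chosen to make the cumulative boundary growth negligible compared to $k/2$, the initial count of $T_1$-terminals in $B$.
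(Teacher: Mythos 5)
Your high-level plan matches the paper's: iteratively refine $B$ by taking $(k'',\alpha^*)$-violating cuts, always retaining the side richer in $T_1$, and charge the terminal loss against the minimality of the balanced cut $(A,B)$. However, the contradiction step, which is the heart of the proof, is not correctly executed in your sketch, and the fix is a genuinely important idea rather than a detail.

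The alternative partition you propose, $(A\cup S_{\text{final}},\,B^*)$, need not have smaller cut value than $(A,B)$. Its cut value is $|E(A,B^*)|+|E(S_{\text{final}},B^*)|$, which compared to $|E(A,B)|=|E(A,B^*)|+|E(A,S_{\text{final}})|$ is larger or smaller according to the sign of $|E(S_{\text{final}},B^*)|-|E(A,S_{\text{final}})|$, and there is no reason this is negative. The paper instead maintains the entire partition $\sset$ of $B$ produced by the decomposition (not just the retained cluster), bounds the total number $|E'|$ of newly created internal cut edges by $|E(A,B)|/2$ via the budget invariant, and then uses averaging: $\sum_{C'\in\sset}d_2(C')=2|E'|<|E(A,B)|=\sum_{C'\in\sset}d_1(C')$, where $d_1(C')=|\out(C')\cap E(A,B)|$ and $d_2(C')=|\out(C')\setminus E(A,B)|$, so some single cluster $S'\in\sset$ has $d_1(S')>d_2(S')$. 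Moving that one cluster $S'$ across gives cut value $|E(A,B)|-d_1(S')+d_2(S')<|E(A,B)|$, and balancedness holds because every cluster $C'\in\sset$ satisfies $|C'\cap T_1|\leq |B\cap T_1|/2$ (a consequence of always retaining the $T_1$-heavier side). Your parenthetical ``sequences of discarded $Y_i$'s merged with $A$'' gestures toward this, but you never identify the single-cluster move or the averaging argument that makes it go through.

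There is a second ingredient you should make explicit: the budget invariant alone is not enough; at the critical iteration you also need a lower bound $|\Gamma(S)|\geq k/4$, which the paper derives from the $(k/4,1)$-well-linkedness of $T_1$ (routing $k/4$ node-disjoint paths from $T_1\cap S$ to $T_1\cap A$, each of which must cross $\Gamma(S)$). This lower bound on $\sum_v\beta(v)$ is what forces $|E'|<|E(A,B)|/2$. Your potential $\Phi(S)=|\Gamma(S)\setminus T|+c|T_1\cap S|$ does not obviously yield this, since it penalizes a large boundary rather than using it as leverage.
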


Before we prove the lemma, we show that Theorem~\ref{thm: perfect cluster gives chain} follows from it, by constructing a weak $2$-cluster chain in $G$. We set $Y'=C$ and $X'=S^*\setminus T$. Since each terminal has degree $1$, and $\Gamma(S^*)\cup(S^*\cap T)$ is $(k'',\alpha^*)$-well-linked in $G[S^*]$, it is easy to see that $\Gamma(X')$ is $(k'',\alpha^*)$-well-linked in $G[X']$, so $X'$ has the $(k'',\alpha^*)$-bandwidth property. We let $\pset_1$ be the set of edges with one endpoint in $T_1\cap S^*$, and another in $X'$, and the set $\pset_2$ of paths connecting the vertices of $T_1$ to the vertices of $Y'$ remains the same. It is easy to see that $(X',Y',\pset_1,\pset_2)$ is a valid weak $2$-cluster chain. It now remains to prove Lemma~\ref{lem: partitioning alg}.

\begin{proofof}{Lemma \ref{lem: partitioning alg}} We show an algorithm to compute the cluster $S^*$ with the required properties.
Throughout the algorithm, we maintain a partition $\sset$ of $B$ into clusters with a special cluster $S\in \sset$, and a set $E'=\left(\bigcup_{C'\in \sset}\out(C')\right )\setminus\out(B)$ of edges. At the beginning, $S=B$, $\sset=\set{S}$, and $E'=\emptyset$. While the set $\Gamma(S)\cup (T\cap S)$ of vertices is not $(k'',\alpha^*)$-well-linked in $G[S]$, let $(Z,Z')$ be a $(k'',\alpha^*)$-violating partition, that is, $|E(Z,Z')|<\alpha^*\cdot\min\set{|Z\cap (\Gamma(S)\cup T|),|Z'\cap (\Gamma(S)\cup T)|,k''}$. We remove $S$ from $\sset$ and add $Z$ and $Z'$ to $\sset$ instead. We also add the edges of $E(Z,Z')$ to $E'$. Finally, if $|Z\cap T_1|\geq |Z'\cap T_1|$, then we set $S=Z$, and otherwise we set $S=Z'$. We then continue to the next iteration. 

It is clear that at the end of the algorithm, if we denote the final cluster $S$ by $S^*$, then $\Gamma(S^*)\cup (T\cap S^*)$ is $(k'',\alpha^*)$-well-linked in $G[S^*]$. It now remains to show that $|T_1\cap S^*|\geq k/4$. We will show something stronger: namely, that $|T_1\cap S^*|\geq |T_1\cap B|/2$.

In order to analyze the algorithm, we maintain, throughout the algorithm's execution, non-negative budgets $\beta(v)$ for all vertices $v\in B$. The budgets are determined as follows. For each vertex $v\in \Gamma(S)\cup (S\cap T)$, we set $\beta(v)=\alpha^*/(1-\alpha^*)$, and all other vertices have budget $0$. We need the following claim.

\begin{claim}\label{claim: budget bound}
Throughout the execution of the algorithm, the following invariant holds:
\[\sum_{v\in B}\beta(v)+|E'|\leq \frac{\alpha^*}{1-\alpha^*}\left (\frac{49k}{32}+2|E(A,B)|+1\right ).\]
\end{claim}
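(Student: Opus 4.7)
The plan is to prove the invariant by induction on the iterations of the partitioning algorithm, showing that the potential $\Phi := \sum_{v\in B} \beta(v) + |E'|$ satisfies the claimed bound initially and strictly decreases at each iteration. Note that at any point during the execution, $\sum_{v\in B}\beta(v) = \frac{\alpha^*}{1-\alpha^*}\cdot|\Gamma(S)\cup (S\cap T)|$, where $S$ is the current special cluster.

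\emph{Base case.} When $S = B$ and $E' = \emptyset$, it suffices to show that $|\Gamma(B)\cup(B\cap T)| \le \frac{49k}{32} + 2|E(A,B)| + 1$. From $\out(B) = E(A,B)\cup E(B,C)$, together with $|\out(C)| \le k+k'+1$ and $|E(A,C)| \ge |\pset_2| \ge k/4 - |E(A,B)|$, I get $|E(B,C)| \le \frac{49k}{64} + 1 + |E(A,B)|$, and hence $|\Gamma(B)| \le |\out(B)| \le 2|E(A,B)| + \frac{49k}{64} + 1$. The $\frac14$-balancedness of $(A,B)$ with respect to $T_1$ gives $|T_1\cap B| \le 3k/4$, and $|T_2\cap B| \le |T_2| = k/64$, so $|B\cap T| \le \frac{49k}{64}$. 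Summing the bounds on $|\Gamma(B)|$ and $|B\cap T|$ yields the required inequality.

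\emph{Inductive step.} Consider an iteration that splits the current special cluster $S$ via the violating partition $(Z,Z')$ satisfying $|E(Z,Z')| < \alpha^*\cdot\min\{|Z\cap(\Gamma(S)\cup T)|,\, |Z'\cap(\Gamma(S)\cup T)|,\, k''\}$; without loss of generality $Z$ becomes the new special cluster. The key observation is that any vertex $v\in \Gamma(Z)\cup(Z\cap T)$ either already lies in $(\Gamma(S)\cup (S\cap T))\cap Z$, or is a $Z$-endpoint of some edge in $E(Z,Z')$, which yields
\[
|\Gamma(Z)\cup(Z\cap T)| \le |(\Gamma(S)\cup(S\cap T))\cap Z| + |E(Z,Z')|.
\]
Setting $M := |(\Gamma(S)\cup(S\cap T))\cap Z'|$, the change in $\sum\beta$ is therefore at most $\frac{\alpha^*}{1-\alpha^*}(|E(Z,Z')| - M)$; applying $|E(Z,Z')| < \alpha^* M$ gives $|E(Z,Z')| - M < -\frac{1-\alpha^*}{\alpha^*}|E(Z,Z')|$, and hence $\Delta\sum\beta < -|E(Z,Z')|$. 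Combined with $\Delta|E'| = |E(Z,Z')|$, the potential $\Phi$ strictly decreases, closing the induction.

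The main obstacle is the careful bookkeeping of $\Gamma(Z)$ after a split: $\Gamma(Z)$ contains both old boundary vertices of $S$ that lie in $Z$ and new boundary vertices born from the cut edges in $E(Z,Z')$, and some of these may simultaneously be terminals, so potential double counting must be avoided. The displayed inequality handles this cleanly by working with $\Gamma(Z)\cup(Z\cap T)$ and $(\Gamma(S)\cup(S\cap T))\cap Z$ throughout, rather than separately with boundaries and terminals. A secondary subtle point is that the constant $\frac{49k}{32}$ in the base case is tight: both the $\frac14$-balancedness of $(A,B)$ (which bounds $|T_1\cap B|$) and the smallness of $|T_2|$ (which bounds $|T_2\cap B|$) must be invoked, and weakening either would force a larger constant in the final bound.
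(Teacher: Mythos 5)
Your proof is correct and takes essentially the same approach as the paper: both establish the bound at the start and then show the potential $\sum_{v\in B}\beta(v)+|E'|$ is non-increasing across iterations, using the $(k'',\alpha^*)$-violating-cut inequality to charge the increase in $|E'|$ and the new boundary vertices against the budget freed by the dropped side of the split. Your formulation via the identity $\sum_{v\in B}\beta(v)=\frac{\alpha^*}{1-\alpha^*}|\Gamma(S)\cup(S\cap T)|$ is a slightly cleaner way to package the same accounting that the paper carries out vertex-by-vertex through the sets $L$ and $R$.
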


\begin{proof}
 Observe first, that the invariant holds at the beginning of the algorithm, since $|\Gamma(B)|\leq |E(B,C)|+|E(A,B)|$, and so:

\[\begin{split} \sum_{v\in B}\beta(v)+|E'|&\leq \frac{\alpha^*}{1-\alpha^*}\left(|E(B,C)|+|E(A,B)|+|T_1\cap B|+|T_2|\right )\\
&\leq \frac{\alpha^*}{1-\alpha^*}\left (\frac{49k}{64}+2|E(A,B)|+1+\frac{3k}{4}+k'\right )\\
&\leq \frac{\alpha^*}{1-\alpha^*}\cdot\left(\frac{49k}{32}+1+2|E(A,B)|\right ),\end{split}\]

since $k'=k/64$.


Assume now that the invariant holds at the beginning of the current iteration, where we split $S$ into $Z$ and $Z'$. Assume w.l.o.g. that $|Z\cap T_1|\leq |Z'\cap T_1|$, so $S=Z'$. Let $L=Z\cap (\Gamma(S)\cup T)$, and let $R\subseteq Z'$ be the set of vertices incident on the edges of $E(Z,Z')$. Then the budget of every vertex in $L$ decreases by $\alpha^*/(1-\alpha^*)$, and the budget of every vertex in $R$ increases by the same amount. All other vertex budgets remain the same. Therefore, in total, the budgets of the vertices in $L$ decrease by $|L|\cdot \alpha^*/(1-\alpha^*)$, and the budgets of the vertices in $R$, and the cardinality of $E'$ increase by:

\[\frac{\alpha^*}{1-\alpha^*}|R|+|E(Z,Z')|\leq |E(Z,Z')|\left (1+\frac{\alpha^*}{1-\alpha^*}\right )\leq \frac{\alpha^* |L|}{1-\alpha^*},\]

and so the invariant continues to hold.
\end{proof}

We now claim that when the algorithm terminates, $|S\cap T_1|\geq |B\cap T_1|/2$. Assume otherwise. 
Let $i$ be the largest integer, such that at the beginning of iteration $i$, $|S\cap T_1|\geq |B\cap T_1|/2$ held. Let $E_1$ be the set $E'$ of edges at the beginning of iteration $i$, and let $E_2$ be the set $E'$ of edges at the end of iteration $i$. Then $|E_2|\leq |E_1|+\alpha^*k''=|E_1|+\alpha^*k/512$.

Consider the set $S$ at the beginning of iteration $i$. Since the terminals in $T_1$ are the $(k/4,1)$-well-linked, and $|S\cap T_1|\geq |B\cap T_1|/2\geq k/4$, there is a set $\rset$ of at least $k/4$ edge-disjoint paths, connecting vertices of $S\cap T_1$ to vertices of $A\cap T_1$, whose endpoints are all distinct. From Observation~\ref{obs: EDP to NDP in degree-3}, the paths in $\rset$ are node-disjoint, and, since each such path must contain a vertex of $\Gamma(S)$,  $|\Gamma(S)|\geq k/4$ must hold. Therefore, $\sum_{v\in B}\beta(v)\geq (|T_1\cap S|+|\Gamma(S)|)\cdot \frac{\alpha^*}{1-\alpha^*}\geq \frac{k}{2}\cdot \frac{\alpha^*}{1-\alpha^*}$ at the beginning of iteration $i$. From Claim~\ref{claim: budget bound}, we conclude that:

\[\begin{split}
|E_2|&\leq |E_1|+\frac{\alpha^* k}{512}\\
&\leq \frac{\alpha^*}{1-\alpha^*}\left (\frac{49k}{32}+2|E(A,B)|+1\right )-\frac{k}{2}\cdot \frac{\alpha^*}{1-\alpha^*}+\frac{\alpha^* k}{512}\\
&=\frac{1}{63}\left(\frac{33k}{32}+1\right)+\frac{k}{64\cdot 512}+\frac{2|E(A,B)|}{63}\\
&< \frac{|E(A,B)|} 2,
\end{split}\]

since $|E(A,B)|\geq k/28$.

Consider the set $\sset$ of clusters at the end of iteration $i$. Since we have assumed that $|S\cap T_1|\leq |B\cap T_1|/2$, and since in every iteration, whenever $S$ was partitioned into two clusters $Z,Z'$, we chose to continue with the cluster containing more terminals of $T_1$, we conclude that $|C'\cap T_1|\leq |B\cap T_1|/2$ for every cluster $C'\in \sset$.

For each cluster $C'\in \sset$, we let $d_1(C')=|\out(C')\cap E(A,B)|$, and $d_2(C')=|\out(C')\setminus E(A,B)|$. Then $\sum_{C'\in \sset}d_2(C')=2|E'|<|E(A,B)|$. Therefore, there is some cluster $S'\in \sset$, with $d_1(S')>d_2(S')$. However, the partition $(A\cup S',B\setminus S')$ is a $1/4$-balanced cut in $G\setminus C$ with respect to $T_1$, and $|E(A\cup S',B\setminus S')|=|E(A,B)|-d_1(S')+d_2(S')<|E(A,B)|$, contradicting the choice of $(A,B)$. We conclude that at the end of the algorithm, $|S\cap T_1|\geq |B\cap T_1|/2$ must hold.
\end{proofof}
\end{proof}

\fi

We will also use the following simple observation\ifabstract{, whose proof is omitted from this extended abstract}\fi.

\begin{observation}\label{obs: cutting good cluster}
Let $C$ be a good cluster, and let $(C',C'')$ be any partition of $C$, where $|\out(C)\cap \out(C'')|\leq 27k/80$. Then $C'$ is a good or a perfect cluster.
\end{observation}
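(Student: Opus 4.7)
Let $(A,B)$ be the minimum $1/4$-balanced cut of $G\setminus C$ with respect to $T_1$, so by the assumption that $C$ is good, $|E(A,B)|<k/28$, and both $|A\cap T_1|$ and $|B\cap T_1|$ are at least $k/4$. The plan is to exhibit a $1/4$-balanced cut of $G\setminus C'$ with respect to $T_1$, whose value is at most $7k/32$; this will immediately imply that the minimum such cut has value $\leq 7k/32$, so $C'$ is either good (if the minimum is $<k/28$) or perfect (otherwise).

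To construct such a cut, I will extend $(A,B)$ by assigning $C''$ to one side. Since $C\cap T=\emptyset$ (because good clusters contain only non-terminals), we have $C''\cap T_1=\emptyset$, and therefore adding $C''$ to either side of $(A,B)$ keeps the cut $1/4$-balanced with respect to $T_1$. Also, because $A,B\subseteq V(G)\setminus C$, every edge from $C''$ to $A\cup B$ is an edge of $\out(C)$ with an endpoint in $C''$, so it lies in $\out(C)\cap \out(C'')$. Let $\alpha=|E(A,C'')|$ and $\beta=|E(B,C'')|$; then $\alpha+\beta\leq |\out(C)\cap\out(C'')|\leq 27k/80$, and so $\min\set{\alpha,\beta}\leq 27k/160$.

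The two candidate partitions of $V(G)\setminus C'$ are $(A\cup C'',B)$ and $(A,B\cup C'')$; their cut values in $G\setminus C'$ are $|E(A,B)|+\beta$ and $|E(A,B)|+\alpha$ respectively. Taking the better of the two, the minimum $1/4$-balanced cut of $G\setminus C'$ with respect to $T_1$ has value at most
\[
|E(A,B)|+\min\set{\alpha,\beta}\;<\;\frac{k}{28}+\frac{27k}{160}\;=\;\frac{40k+189k}{1120}\;=\;\frac{229k}{1120}\;<\;\frac{245k}{1120}\;=\;\frac{7k}{32},
\]
which shows that $C'$ is either good or perfect.

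There is no real obstacle here; the argument is a short arithmetic check together with the observation that the boundary edges of $C''$ leaving $C$ are exactly those we need to count when moving from a cut of $G\setminus C$ to a cut of $G\setminus C'$. The only thing to verify carefully is the constant $27/80$: one wants $1/28 + 27/(2\cdot 80)\leq 7/32$, which holds by the computation above.
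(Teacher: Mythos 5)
Your proof is correct and takes essentially the same approach as the paper: extend the minimum balanced cut $(A,B)$ of $G\setminus C$ to a balanced cut of $G\setminus C'$ by assigning $C''$ to the side that picks up fewer new crossing edges, then verify the arithmetic $k/28 + 27k/160 < 7k/32$. The paper's phrasing (``add $C''$ to the side with $\geq$ half of the boundary edges'') is just the same choice described from the other direction.
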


\iffull
\begin{proof}
It is enough to prove that there is some $1/4$-balanced cut $(A',B')$ in $G\setminus C'$, with respect to $T_1$, such that $|E(A',B')|\leq 7k/32$.
Let $(A,B)$ be the minimum $1/4$-balanced cut in $G\setminus C$ with respect to $T_1$, and recall that $|E(A,B)|\leq k/28$. We now define a  balanced cut $(A',B')$ in graph $G\setminus C'$, with respect to $T_1$, as follows. Start with $A'=A$ and $B'=B$. If $|E(A,C'')|\geq |E(B,C'')|$, add the vertices of $C''$ to $A'$; otherwise add them to $B'$. Notice that $|E(A',B')|\leq |E(A,B)|+|\out(C)\cap \out(C'')|/2\leq  k/28+27k/160 <7k/32$.
\end{proof}

\fi

\label{-----------------------------------------------The alg--------------------------------------------}

\subsection{Splitting the Cluster}\label{subsec: the alg}
We are now ready to prove that $G$ contains a $2$-cluster chain.


We are interested in a cluster $C\subseteq V(G)\setminus T$, with the following properties: $C$ is a good or a perfect cluster, and it has the $1/23$-bandwidth property. Among all such clusters $C$, let $C^*$ be the one minimizing $|\out(C^*)|$, and subject to this, minimizing $|C^*|$.
We note that $|\out(C^*)|\leq k+k'$ must hold, since $V(G)\setminus T$ is a good cluster, and, from the well-linkedness properties of the terminals, it is not hard to see that it has the $1/23$-bandwidth property.
We need the following two claims.

\begin{claim}\label{claim: boundary prop}
If $C^*$ is a good cluster, then every vertex $v\in \Gamma(C^*)$ is incident on exactly one edge of $\out(C^*)$.
\end{claim}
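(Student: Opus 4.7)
The strategy will be to argue by contradiction: I will suppose that some vertex $v\in \Gamma(C^*)$ is incident on at least two edges of $\out(C^*)$, and build a cluster $C'=C^*\setminus\{v\}$ that is good or perfect, has the $1/23$-bandwidth property, and satisfies $|\out(C')|<|\out(C^*)|$, contradicting the choice of $C^*$ as a minimizer of $|\out(\cdot)|$. Since the maximum degree in $G$ is $3$, $v$ has at most one neighbor in $C^*$.

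I first dispatch the degenerate subcase where $v$ has no neighbor in $C^*$, so that $v$ is isolated in $G[C^*]$. If $|\Gamma(C^*)|\ge 2$, this already contradicts Observation~\ref{obs: bandwidth gives connectivity} (with $k'=2$), which forces $G[C^*]$ to be connected. Otherwise $\Gamma(C^*)=\{v\}$; then the non-boundary vertices of $C^*$ have no edges leaving $C^*$, so $C^*\setminus\{v\}$ is cut off from $V(G)\setminus C^*$ in $G$. Connectedness of $G$ (which follows from the minimality used in constructing $G$) forces $C^*=\{v\}$. Since $T_1$ is $(k/4,1)$-well-linked in $G$, every $1/4$-balanced cut of $G\setminus\{v\}$ with respect to $T_1$ has value at least $k/4-\deg(v)\ge k/4-3$, which exceeds $k/28$ for $k\ge 12000$, contradicting goodness of $C^*$.

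Thus $v$ has exactly one neighbor $u\in C^*$, and $\deg(v)=3$ with two of its edges in $\out(C^*)$. Setting $C'=C^*\setminus\{v\}$, one gets $|\out(C')|=|\out(C^*)|-2+1<|\out(C^*)|$. To see that $C'$ remains good or perfect, I take the minimum $1/4$-balanced cut $(A,B)$ of $G\setminus C^*$ with respect to $T_1$, for which $|E(A,B)|<k/28$, and place $v$ on the side to which it has the fewer of its two outgoing edges; the result is a $1/4$-balanced cut of $G\setminus C'$ of value at most $|E(A,B)|+1\le k/28\le 7k/32$.

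The main point --- and the step I expect to require the most care --- is verifying that $C'$ inherits the $1/23$-bandwidth property. Since $v$ is a pendant vertex of $G[C^*]$ with unique neighbor $u$, passing from $C^*$ to $C'$ amounts to contracting $v$ into $u$, and $\Gamma(C')\subseteq(\Gamma(C^*)\setminus\{v\})\cup\{u\}$. Given disjoint equal-sized subsets $T',T''\subseteq\Gamma(C')$, I will form $\hat T',\hat T''\subseteq \Gamma(C^*)$ by replacing $u$, wherever it appears, with $v$; these subsets remain disjoint and equal-sized, so the $1/23$-well-linkedness of $\Gamma(C^*)$ in $G[C^*]$ furnishes a flow $F:\hat T'\sconnect_{23}\hat T''$. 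Because $(u,v)$ is the only $G[C^*]$-edge at $v$, every flow-path of $F$ touching $v$ must begin or end with this edge; by truncating such paths at $u$ (and cancelling any use of $(u,v)$ by paths that only pass through $v$), I obtain a flow in $G[C']$ from $T'$ to $T''$ with the same edge congestion at most $23$. This gives $C'$ the $1/23$-bandwidth property, completing the contradiction with the minimality of $C^*$.
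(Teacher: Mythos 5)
Your proposal is correct and follows essentially the same route as the paper: remove $v$ to form $C'=C^*\setminus\{v\}$, check it is still good or perfect (you reprove Observation~\ref{obs: cutting good cluster} for $C''=\{v\}$ where the paper just cites it), check it keeps the $1/23$-bandwidth property (you spell out the pendant-contraction/flow-truncation argument that the paper dismisses as ``easy to see''), and observe $|\out(C')|<|\out(C^*)|$, contradicting minimality of $C^*$. Your extra treatment of the degenerate subcase ($v$ with no neighbor in $C^*$) is sound and fills in a case the paper rules out implicitly via Observation~\ref{obs: bandwidth gives connectivity}.
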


\begin{proof}
Assume otherwise, and let $v\in \Gamma(C^*)$ be incident on more than one edge of $\out(C^*)$. Since maximum vertex degree in $G$ is $3$, and $G[C^*]$ is connected, due to the $1/23$-bandwidth property of $C^*$, $v$ has exactly one neighbor $u\in C^*$. Consider the cluster $C'=C^*\setminus\set{v}$. Then $|\out(C')|<|\out(C^*)|$, and it is easy to see that $C'$ has the $1/23$-bandwidth property. Moreover, from Observation~\ref{obs: cutting good cluster}, $C'$ is a good or a perfect cluster, contradicting the choice of $C^*$.
\end{proof}

\begin{claim}\label{claim: many paths}
$p(C^*)=|\Gamma(C^*)|$.
\end{claim}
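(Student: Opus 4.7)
The plan is to prove this by contradiction, exploiting the minimal choice of $C^*$. Suppose $p(C^*) < |\Gamma(C^*)|$; I will construct a cluster $C'$ that is good or perfect, has the $1/23$-bandwidth property, and satisfies either $|\out(C')| < |\out(C^*)|$ or $|\out(C')| = |\out(C^*)|$ with $|C'| < |C^*|$, contradicting the choice of $C^*$.

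First I would apply Menger's theorem in the subgraph $H$ of $G$ obtained by deleting the interior of $C^*$, that is $H = G \setminus (C^* \setminus \Gamma(C^*))$. Since the paths of $\pset(C^*)$ are internally disjoint from $C^* \cup T$, the maximum number of internally disjoint $T$-to-$\Gamma(C^*)$ paths in $H$ equals $p(C^*)$. By Menger, there is a vertex separator $R$ in $H$ of size $p(C^*)$, which I can take to be disjoint from $T$ (using that terminals have degree $1$) and from $\Gamma(C^*)$. I would then define $C' = C^* \cup R \cup W$, where $W$ is the set of vertices of $H$ reachable from $\Gamma(C^*)$ in $H \setminus R$. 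This gives $T \cap C' = \emptyset$, $C^* \subseteq C'$, and $\Gamma(C') \subseteq R$; combining $|\Gamma(C')| \leq |R| = p(C^*) < |\Gamma(C^*)|$ with Claim~\ref{claim: boundary prop} (which, when $C^*$ is good, gives $|\out(C^*)| = |\Gamma(C^*)|$) yields $|\out(C')| < |\out(C^*)|$.

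Two properties must then be verified for $C'$. For $C'$ being good or perfect, I would lift the minimum $1/4$-balanced cut of $G \setminus C^*$ with respect to $T_1$ to a candidate $1/4$-balanced cut in $G \setminus C'$ by absorbing $C' \setminus C^*$ into whichever side is compatible, and then apply Observation~\ref{obs: cutting good cluster} to conclude that $C'$ inherits the good-or-perfect classification from $C^*$. For the $1/23$-bandwidth property of $C'$, the key input is that $R$ is a \emph{minimum} vertex separator, so by Menger there exist $p(C^*)$ node-disjoint paths from $R$ to $\Gamma(C^*)$ lying in $W \cup R \cup \Gamma(C^*)$; splicing these paths with the $1/23$-well-linkedness of $\Gamma(C^*)$ in $G[C^*]$ should give the $1/23$-well-linkedness of $\Gamma(C') \subseteq R$ in $G[C']$.

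The main obstacle is verifying the $1/23$-bandwidth property of $C'$: a cut of $G[C']$ may bisect the Menger paths in awkward ways, so the splicing argument requires careful bookkeeping of flow values through the paths connecting $R$ to $\Gamma(C^*)$. If a direct verification fails, my fallback would be to apply a well-linked decomposition to $C'$ (in the spirit of the proof of Lemma~\ref{lem: partitioning alg}) and extract a sub-cluster of $C'$ that is good or perfect with the $1/23$-bandwidth property and still strictly improves on $C^*$ in the lexicographic order on $(|\out|, |C|)$; this again contradicts the minimality of $C^*$.
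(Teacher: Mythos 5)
Your overall strategy --- apply Menger to get a separator between $T$ and $\Gamma(C^*)$, extend $C^*$ to a larger cluster $C'$ with boundary in the separator, and contradict the minimality of $C^*$ --- is the same as the paper's. However, there is a genuine gap in the step where you conclude $|\out(C')| < |\out(C^*)|$.

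From $\Gamma(C')\subseteq R$ you only get $|\Gamma(C')|\leq |R| = p(C^*) < |\Gamma(C^*)|$, and from Claim~\ref{claim: boundary prop} you get $|\Gamma(C^*)| = |\out(C^*)|$. But this does \emph{not} yield $|\out(C')| < |\out(C^*)|$: a vertex of $\Gamma(C')\subseteq R$ may be incident on up to two edges leaving $C'$ (since the maximum degree is $3$ and such a vertex has at least one neighbor inside $C'$). So $|\out(C')|$ could be as large as $2|\Gamma(C')|$, which need not be less than $|\Gamma(C^*)|$. Since the contradiction you are aiming for is with the lexicographic minimality of $C^*$ in $(|\out(\cdot)|, |\cdot|)$, this bound on $|\out(C')|$ is exactly what the proof must deliver, and your argument as written does not deliver it. (A secondary issue: taking $R$ disjoint from $\Gamma(C^*)$ is not obviously possible, and the paper's proof does not try to do so --- it explicitly handles the case $Y\cap C^*\subseteq\Gamma(C^*)$.)

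The paper closes this gap by choosing the Menger tri-partition $(X,Y,Z)$ (with $Y$ the separator, $C^*\subseteq Y\cup Z$, $T\subseteq X$) to minimize $|Y|+|Z|$, not just $|Y|$. The minimality is then used to show that every vertex $v\in\Gamma(C')$ not in $\Gamma(C^*)$ is incident on exactly one edge of $\out(C')$: if $v$ were incident on two such edges, then $v$ has at most one neighbor $u$ in $Z$, and the tri-partition $(X\cup\{v\}, (Y\setminus\{v\})\cup\{u\}, Z\setminus\{u\})$ would have strictly smaller $|Y|+|Z|$. Combined with a split of $\out(C')$ and $\out(C^*)$ into edges incident on $Y\cap\Gamma(C^*)$ and the rest, this gives the needed edge-by-edge comparison $|\out(C')| < |\out(C^*)|$. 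You would need to import this minimality refinement and the corresponding edge-counting to make your argument complete. Your fallback (well-linked decomposition of $C'$) does not obviously repair this, because a sub-cluster extracted from $C'$ could again fail to improve $|\out(\cdot)|$. As a separate note, your reliance on Claim~\ref{claim: boundary prop} restricts the argument to the case where $C^*$ is good; the claim as stated also covers perfect $C^*$, which the paper handles without invoking Claim~\ref{claim: boundary prop}.
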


\begin{proof} Assume otherwise. Intuitively, if $p(C^*)<|\Gamma(C^*)|$, then there is a small cut separating $\Gamma(C^*)$ from the terminals in $T$. We use this cut to define a new cluster $C'$, such that $C'$ is either a good or a perfect cluster, and it has the $1/23$-bandwidth property, while $|\out(C')|<|\out(C^*)|$, contradicting the choice of $C^*$.

Let $p=p(C^*)$. From Menger's theorem, there is a tri-partition $(X,Y,Z)$ of $V(G)$, such that $|Y|=p$, $Y$ separates $X$ from $Z$ in $G$, $C^*\subseteq Y\cup Z$, and $T\subseteq X\cup Y$. Among all such tri-partitions, we choose the one minimizing $|Y|+|Z|$. 
As each terminal has degree $1$ in $G$, it is easy to see that $Y\cap T=\emptyset$, and so $T\subseteq X$. Recall that $\pset(C^*)$ is the largest-cardinality set of node-disjoint paths connecting the terminals of $T$ to $C^*$, and the paths in $\pset(C^*)$ are internally disjoint from $C^*$. Therefore,  $Y$ contains exactly one vertex from each path in $\pset(C^*)$, and so $Y\cap C^*\subseteq \Gamma(C^*)$. We let $C'$ be the set of vertices of the connected component of $G[Y\cup Z]$, containing $C^*$. Notice that $\Gamma(C')\subseteq Y$, and so $|\Gamma(C')|<|\Gamma(C^*)|$. Since $C^*\subseteq C'$, and $C^*$ is a good or a perfect cluster, it is easy to see that $C'$ is also a good or a perfect cluster. Moreover, the set $\pset(C^*)$ of paths defines a collection $\pset'$ of node-disjoint paths, connecting every vertex of $\Gamma(C')$ to some vertex in $\Gamma(C^*)$, such that the paths in $\pset'$ are internally disjoint from $C^*$. Using the fact that $C^*$ has the $1/23$-bandwidth property, it is easy to see that $C'$ also has the $1/23$-bandwidth property.

In order to reach a contradiction, it is now enough to show that $|\out(C')|<|\out(C^*)|$. We partition the edges of $\out(C')$ into two subsets: set $E_1$ contains all edges incident on the vertices of $Y\cap \Gamma(C^*)$, and set $E_2$ contains all remaining edges. Similarly, we partition the edges of $\out(C^*)$ into two subsets: set $E_1'$ contains all edges incident on the vertices of $Y\cap \Gamma(C^*)$, and set $E_2$ contains all remaining edges. Observe first that the edges of $E_1$ and $E_1'$ are incident on the same subset of vertices: $Y\cap \Gamma(C^*)$, and, since $C^*\subseteq C'$, it is easy to see that $|E_1|\leq |E_1'|$. 

Let $Y'=Y\cap \Gamma(C^*)$.
Since $|\Gamma(C')|<|\Gamma(C^*)|$, and $Y'\subseteq \Gamma(C')$, $|\Gamma(C')\setminus Y'|<|\Gamma(C^*)\setminus Y'|$. Every vertex of $\Gamma(C^*)\setminus Y'$ has at least one edge incident to it in $E_2'$, and every edge of $E_2$ is incident on some vertex of $\Gamma(C')\setminus Y'$. Therefore, it is enough to show that every vertex in $\Gamma(C')\setminus Y'$ is incident on exactly one edge of $E_2$. Assume otherwise, and let $v\in \Gamma(C')\setminus Y'$ be any vertex incident on at least two edges of $E_2$. Since $v\not \in Y'$, it does not belong to $\Gamma(C^*)$, or to $C^*$. Moreover, $v$ has at most one neighbor in $Z$ - denote it by $u$. Therefore, we can obtain a new tri-partition $(X\cup \set{v},(Y\setminus \set{v})\cup\set{u},Z\setminus \set{u})$ separating the terminals from $C^*$, contradicting the choice of the partition $(X,Y,Z)$. We conclude that $|E_2|\leq |\Gamma(C')\setminus Y'|<|\Gamma(C^*)\setminus Y'|\leq |E_2'|$, and $|\out(C')|<|\out(C^*)|$, contradicting the choice of $C^*$.
\end{proof}

If $C^*$ is a perfect cluster, then, from Theorem~\ref{thm: perfect cluster gives chain}, we obtain a $2$-cluster chain in $G$. Therefore, we assume from now on that $C^*$ is a good cluster. We use the following theorem to finish our proof. Its statement is slightly stronger than what we need, but this stronger statement will be used in the proof itself.
 
\begin{theorem}\label{thm: iteration}
Let $C\subseteq V(G)\setminus T$ be a good cluster with $|\out(C)|\leq k+k'+1$, and $p(C)\geq |\Gamma(C)|-1$, such that $C$ has the $1/23$-bandwidth property, and every vertex of $\Gamma(C)$ is incident on exactly one edge of $\out(C)$. Then either there is a  strong $2$-cluster chain in $G$, or there is a good or a perfect cluster $C'\subsetneq C$, with $|\out(C')|\leq |\out(C)|$, such that $C'$ has the $1/23$-bandwidth property.
\end{theorem}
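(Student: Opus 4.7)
The plan is to compute a minimum $\rho$-balanced cut $(Z_1,Z_2)$ of $G[C]$ with respect to $\Gamma(C)$ for some $\rho\le 1/4$, with $|\Gamma(C)\cap Z_1|\ge|\Gamma(C)\cap Z_2|$, and then case-analyze on the sizes of $|E(Z_1,Z_2)|$ and $|\Gamma(C)\cap Z_2|$. When $|\Gamma(C)\cap Z_2|$ and $|E(Z_1,Z_2)|$ are both small, I will take $C'$ to be $Z_1$, or a sub-cluster of $Z_1$ obtained by further refinement. To verify that $C'$ is good or perfect, I will extend the minimum $1/4$-balanced cut $(A,B)$ of $G\setminus C$ witnessing the goodness of $C$ to a balanced cut of $G\setminus C'$ by attaching $Z_2$ to one of the sides; since every vertex of $\Gamma(C)$ has exactly one edge in $\out(C)$ by hypothesis, this adds at most $|\Gamma(C)\cap Z_2|$ edges to the cut, and for $|\Gamma(C)\cap Z_2|$ small enough the new cut size stays below the perfect-cluster threshold $7k/32$. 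The bound $|\out(C')|\le|\out(C)|$ then follows from $|\out(Z_1)|=|\out(C)\cap Z_1|+|E(Z_1,Z_2)|$ together with smallness of the cut.

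The harder task is to preserve the full $1/23$-bandwidth property for $C'$, rather than the $1/47$ that a naive application of Lemma~\ref{lemma: balanced cut large piece wl} would yield. I plan to handle this by running a well-linked decomposition of $Z_1$ in the style of Lemma~\ref{lem: partitioning alg}: iteratively split the current cluster along any partition that violates the $1/23$-bandwidth property, always continuing with the side containing the larger share of $\Gamma(C)\cap Z_1$. A budget argument analogous to Claim~\ref{claim: budget bound}, with charges placed on the vertices of $\Gamma(C)\cap Z_1$ and on the newly created interface vertices, will show that the decomposition terminates with a sub-cluster that retains most of $\Gamma(C)\cap Z_1$ and enjoys the $1/23$-bandwidth property. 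Using the hypothesis $p(C)\ge|\Gamma(C)|-1$ (via the node-disjoint paths $\pset(C)$ guaranteed by Claim~\ref{claim: many paths}) together with the one-edge-per-boundary-vertex condition, I will bound the additional boundary introduced by the decomposition so that $|\out(C')|\le|\out(C)|$ survives.

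The complementary case is when $|E(Z_1,Z_2)|$ is large, or $|\Gamma(C)\cap Z_1|$ and $|\Gamma(C)\cap Z_2|$ are both substantial. In this case I will build a weak $2$-cluster chain and then invoke Theorem~\ref{thm: from weak to strong $2$-cluster chain}. The clusters $X'\subseteq Z_1\setminus T$ and $Y'\subseteq Z_2\setminus T$ are obtained by applying the same well-linked decomposition to each side separately, choosing sub-clusters that still retain a sizable fraction of $\Gamma(C)$ together with the $(k'',\alpha^*)$-bandwidth property. The path sets $\pset_1,\pset_2$ connecting $T_1$ to $X'$ and to $Y'$ are assembled by partitioning $\pset(C)$ according to which of $Z_1,Z_2$ the endpoints of its paths fall into, together with rerouting across $E(Z_1,Z_2)$ using the $(k/4,1)$-well-linkedness of $T_1$ and the re-routing technique of Lemma~\ref{lemma: re-routing of vertex-disjoint paths}. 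I expect the main obstacle to be closing the accounting in the first case so that the constant $1/23$, rather than something weaker, survives the combination of the $\alpha/(2+\alpha)$ loss from Lemma~\ref{lemma: balanced cut large piece wl} and the edge budget consumed by the well-linked decomposition; the specific value $1/23$ is dictated by requiring this compound accounting to close.
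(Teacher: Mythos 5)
Your proposal is built entirely around a balanced cut of $G[C]$ and a well-linked decomposition applied to one side of it, but this strategy cannot handle one essential regime that the paper isolates as a separate case. When $C$ already has the $(k/4,1)$-bandwidth property --- which is possible for a good cluster with the $1/23$-bandwidth property --- any $\rho$-balanced cut $(Z_1,Z_2)$ of $G[C]$ w.r.t.\ $\Gamma(C)$ with $\rho\leq 1/4$ satisfies $|E(Z_1,Z_2)|\geq\min\{|\Gamma(C)\cap Z_2|, k/4\}$. Under the standing assumption $|\Gamma(C)|>7k/8$ (the paper reduces to this case via Theorem~\ref{thm: good cluster w small boundary is enough}), this means $|E(Z_1,Z_2)|$ can exceed $|\Gamma(C)\cap Z_2|$, so that $|\out(Z_1)|=|\out(C)|-|\Gamma(C)\cap Z_2|+|E(Z_1,Z_2)|$ need not be $\leq|\out(C)|$. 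Your first branch (return a sub-cluster of $Z_1$) can therefore fail the conclusion $|\out(C')|\leq|\out(C)|$. Your complementary branch also breaks: to obtain $Y'\subseteq Z_2$ with the $(k'',\alpha^*)$-bandwidth property you would invoke Theorem~\ref{thm: balanced-cut-large-smaller-side}, but that theorem requires $|\Gamma(Z_2)\cap\Gamma(C)|\geq 27k/80$, whereas a $1/4$-balanced cut only guarantees $|\Gamma(C)\cap Z_2|\geq|\Gamma(C)|/4>7k/32=17.5k/80$, which is not enough; in the paper this threshold is met only because one side is \emph{not} good or perfect, via Observation~\ref{obs: cutting good cluster}, and that argument is unavailable when $C$ has the $(k/4,1)$-bandwidth property. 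The paper's proof handles this regime with an entirely different device: it finds a non-separating boundary vertex $v$ of $G[C]$ (using the minimality of $G$, through Lemma~\ref{lem: deletable edge} and the fact that otherwise a vertex could be deleted preserving well-linkedness of the terminals), sets $C'=C\setminus\{v\}$, and argues directly that $C'$ inherits the $1/23$-bandwidth property and is good or perfect. Nothing in your proposal produces such a candidate cluster.

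There is a second, smaller gap. You plan to ``run a well-linked decomposition at parameter $1/23$'' on one side of the balanced cut. The decomposition in Lemma~\ref{lem: balanced-cut-simple} is not a free-standing well-linked decomposition at $1/23$; it runs at a parameter $\alpha'$ strictly \emph{larger} than the actual bandwidth parameter $\alpha$ of $C$, and its termination with a large cluster crucially relies on the hypothesis (Subcase~1a) that \emph{every} balanced cut of $C$ is large, $|E(Z,Z')|>\frac{\alpha'}{1-\alpha'}|\Gamma(C)|$. Without that hypothesis the budget argument does not prevent the decomposition from eating away more than a quarter of the boundary, and the resulting cluster's boundary size cannot be controlled. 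The paper therefore dispatches the case analysis on $\alpha$ first (whether $\alpha<1/5$, whether $C$ has $(k/4,1)$-bandwidth) and only then chooses which tool to apply; your single-decomposition strategy collapses these distinctions and loses the conditional guarantees each tool depends on.
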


From Theorem~\ref{thm: iteration}, either $G$ has a strong $2$-cluster chain, or there is a good or a perfect cluster $C'\subsetneq C^*$ with $|\out(C')|\leq |\out(C)|$, such that $C'$ has the $1/23$-bandwidth property. The latter is impossible from the definition of $C^*$, so $G$ must contain a $2$-cluster chain.
From now on we focus on proving Theorem~\ref{thm: iteration}.

\subsection*{Proof of Theorem~\ref{thm: iteration}}

We start with the following two theorems, whose proofs use standard techniques, \iffull and are deferred to the Appendix.\fi \ifabstract and are omitted from this extended abstract.\fi

\begin{theorem}\label{thm: good cluster w small boundary is enough}
If there is a good cluster $C'\subseteq C$, with $|\out(C')|\leq k+k'+1$ and $|\Gamma(C')|\leq 7k/8$, such that $C'$ has the $(k'',\alpha^*)$-bandwidth property, then there is a strong $2$-cluster chain in $G$.
\end{theorem}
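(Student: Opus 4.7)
The plan is to reduce the theorem to Theorem~\ref{thm: perfect cluster gives chain} by producing a perfect sub-cluster $D \subseteq C'$ that still satisfies $|\out(D)| \leq k+k'+1$ and the $(k'',\alpha^*)$-bandwidth property. Let $(A, B)$ be the minimum $1/4$-balanced cut of $G \setminus C'$ with respect to $T_1$, with $|T_1 \cap A| \leq |T_1 \cap B|$ and $|E(A,B)| < k/28$ because $C'$ is good. Since $T_1$ is $(k/4,1)$-well-linked, any $k/4$ edge-disjoint paths between equal-sized subsets of $T_1 \cap A$ and $T_1 \cap B$ are node-disjoint by Observation~\ref{obs: EDP to NDP in degree-3}, and all but $|E(A,B)| < k/28$ of them must traverse $C'$. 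Hence $|E(A, C')|, |E(B, C')| \geq k/4 - k/28 \geq 3k/14$; write $\Gamma_A, \Gamma_B \subseteq \Gamma(C')$ for the $C'$-endpoints of the edges in $E(A, C')$ and $E(B, C')$.

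First I would take the minimum $1/4$-balanced cut $(X, Y)$ of $G[C']$ with respect to $\Gamma(C')$, with $|X \cap \Gamma(C')| \geq |Y \cap \Gamma(C')|$. An analogue of Lemma~\ref{lemma: balanced cut large piece wl}, applied to the $(k'',\alpha^*)$-bandwidth of $C'$, implies that $X$ inherits a $(k'', \cdot)$-bandwidth property up to a constant-factor loss. A well-linked decomposition of $X$ in the spirit of Lemma~\ref{lem: partitioning alg}, supported by a budget argument analogous to Claim~\ref{claim: budget bound}, then produces a sub-cluster $D \subseteq X$ with the full $(k'',\alpha^*)$-bandwidth property while losing only $O(\alpha^* k'')$ boundary edges and retaining a constant fraction of $\Gamma_A$ inside $D$. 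Crucially, removing only $D$ (rather than all of $C'$) from $G$ reinstates the paths through $C' \setminus D$ between $A$ and $B$: adding $C' \setminus D$ to whichever of $A, B$ yields a smaller cut increases the minimum $1/4$-balanced cut of $G \setminus D$ with respect to $T_1$ by at least $\min\{|E(A, C' \setminus D)|, |E(B, C' \setminus D)|\}/2 \geq \Omega(k)$ relative to $|E(A,B)|$, pushing its value above $k/28$; and, by the minimality of $(X, Y)$, the value stays below $7k/32$. Thus $D$ is a perfect cluster.

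The main obstacle is to simultaneously secure all three properties of $D$: (i) the perfect-cluster condition, a quantitative two-sided range on the minimum $1/4$-balanced cut of $G \setminus D$; (ii) the full $(k'',\alpha^*)$-bandwidth property, which requires a genuine well-linked decomposition on top of Lemma~\ref{lemma: balanced cut large piece wl}; and (iii) the boundary bound $|\out(D)| \leq k + k' + 1$, which relies on $|\out(C')| \leq k+k'+1$ together with the fact that both the minimum-cut step and the well-linked decomposition alter $|\out|$ by a controlled amount. The hypothesis $|\Gamma(C')| \leq 7k/8$ is precisely what provides the budget slack to keep the decomposition cost small enough to preserve enough of $\Gamma_A \cup \Gamma_B$ inside $D$ to guarantee the perfect-cluster lower bound of $k/28$, and to control the boundary size throughout the process. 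Once $D$ is produced, Theorem~\ref{thm: perfect cluster gives chain} delivers the desired strong $2$-cluster chain.
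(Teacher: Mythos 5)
Your plan is to always extract a perfect sub-cluster $D\subseteq C'$ and then invoke Theorem~\ref{thm: perfect cluster gives chain}. This cannot work in general, and the specific argument you give for the perfect-cluster lower bound fails. To see the gap: when you pass from $G\setminus C'$ to $G\setminus D$, you only exhibit \emph{one} $1/4$-balanced cut (the one extending $(A,B)$ by putting all of $C'\setminus D$ on one side) whose value increases by $\min\{|E(A,C'\setminus D)|,|E(B,C'\setminus D)|\}$. But the minimum $1/4$-balanced cut of $G\setminus D$ with respect to $T_1$ is free to split $C'\setminus D$ arbitrarily. In particular, take any $1/4$-balanced cut $(\hat A,\hat B)$ of $G\setminus D$; its restriction to $G\setminus C'$ is a $1/4$-balanced cut there, so $|E(\hat A,\hat B)|\geq |E(A,B)|$ by minimality of $(A,B)$ — but that is the only lower bound you get, and $|E(A,B)|<k/28$ precisely because $C'$ is good. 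So the minimum cut of $G\setminus D$ can remain below $k/28$ and $D$ can remain merely good, not perfect. Similarly, the upper bound ``$\le 7k/32$ by the minimality of $(X,Y)$'' does not follow: $(X,Y)$ is a balanced cut of $G[C']$ with respect to $\Gamma(C')$, whereas the perfect-cluster condition speaks of a balanced cut of $G\setminus D$ with respect to $T_1$; these are not directly comparable. (For the upper bound you would need something like Observation~\ref{obs: cutting good cluster}, which requires controlling $|\out(C')\cap\out(C'\setminus D)|$, which your construction does not do.)

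The paper's proof is structured exactly around this obstruction. It first dispatches, in one sentence, the case where $C$ contains a perfect sub-cluster $\tilde C$ with the required bandwidth and boundary bound (which is the case you are trying to force), and then assumes that no such $\tilde C$ exists. In the remaining case it does not work inside $C'$ at all: it establishes $|\Gamma(C')|=|\out(C')|$ by a minimality argument, sets $L=V(G)\setminus C'$, and runs a well-linked decomposition of $L$ (Lemma~\ref{lemma: type-2 perfect inner}) to find a cluster $S^*\subseteq L$ with $k/4\le|S^*\cap T_1|\le 3k/4$, small cut $|E(S^*,L\setminus S^*)|\le k/16$, and $(k'',\alpha^*)$-well-linkedness of $\Gamma(S^*)\cup(T\cap S^*)$. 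The hypothesis $|\Gamma(C')|\le 7k/8$ is used there: it caps $|\Gamma(L)|$, which is exactly what makes the budget argument (Claim~\ref{claim: bound E'}) close. It then forms a \emph{weak} $2$-cluster chain with $X'=C'$ and $Y'=S^*\setminus T$, using the $\ge 3k/16$ disjoint $T_1$-paths that avoid $E(S^*,L\setminus S^*)$ to produce $\pset_1$, and single-edge paths from $T_1\cap S^*$ into $Y'$ for $\pset_2$; Theorem~\ref{thm: from weak to strong $2$-cluster chain} then yields the strong chain. Your proposal thus attempts the easy half of the dichotomy as though it were the whole argument, and the harder half — where $C'$ simply does not contain a perfect sub-cluster — is the case the theorem actually needs to handle.
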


\begin{theorem}\label{thm: balanced-cut-large-smaller-side}
Suppose there is some value $\rho$, such that $\rho|\Gamma(C)|\leq (k+k'+1)/4$, and a minimum $\rho$-balanced cut $(A,B)$ of $C$ with respect to $\Gamma(C)$, such that $|\Gamma(A)\cap \Gamma(C)|\geq |\Gamma(B)\cap \Gamma(C)|\geq 27k/80$. Then there is a cluster $C'\subseteq B$ that has the $(k'',\alpha^*)$-bandwidth property, and $|\Gamma(C')\cap \Gamma(C)|> 3k/64$.
\end{theorem}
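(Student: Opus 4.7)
My plan is to construct $C'$ by applying a well-linked decomposition to $B$. Initialize $S=B$ and $E'=\emptyset$; while $\Gamma(S)$ is not $(k'',\alpha^*)$-well-linked in $G[S]$, find a violating partition $(X,Y)$ of $S$ with $|E(X,Y)| < \alpha^*\min\{|\Gamma(S)\cap X|,|\Gamma(S)\cap Y|,k''\}$, keep whichever side contains more vertices of $\Upsilon := \Gamma(C)\cap B$, and add the cut edges to $E'$. Setting $C'$ to be the final $S$ automatically yields the $(k'',\alpha^*)$-bandwidth property, so the task reduces to showing that $|\Gamma(C')\cap\Gamma(C)|=|\Upsilon\cap S^*|>3k/64$.

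The heart of the argument extracts a structural bound from the minimality of $(A,B)$. I will show that whenever $|E'|$ stays small compared to $|E(A,B)|$, every violating partition $(X,Y)$ of some $S\subseteq B$ must satisfy $\min\{|\Gamma(C)\cap X|,|\Gamma(C)\cap Y|\}<\rho|\Gamma(C)|$. To see this, extend $(X,Y)$ to two partitions of $C$ by grouping $A\cup(B\setminus S)$ with $X$ or with $Y$; if both sides of $(X,Y)$ had at least $\rho|\Gamma(C)|$ vertices of $\Gamma(C)$, each extended cut would be $\rho$-balanced of $C$, so by minimality each has size at least $|E(A,B)|$. Summing, using $|E(A,X)|+|E(A,Y)|\le|E(A,B)|$ and $|E(B\setminus S,S)|\le|E'|$, gives $|E(X,Y)|\ge(|E(A,B)|-|E'|)/2$. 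This contradicts $|E(X,Y)|<\alpha^*k''=k/32768$ once one verifies $|E(A,B)|\ge|\Gamma(C)\cap B|/23\ge 27k/(23\cdot 80)$ from the $(1/23)$-bandwidth of $C$, provided $|E'|$ remains bounded well below $|E(A,B)|/2$.

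The remaining step combines a standard budget argument modelled on Lemma~\ref{lem: partitioning alg}--- assigning every vertex of $\Gamma(S)$ a budget of $\alpha^*/(1-\alpha^*)$ gives $|E'|\le|\Gamma(B)|\cdot\alpha^*/(1-\alpha^*)$, which with $|\Gamma(B)|\le|\Gamma(C)|/2+|E(A,B)|$ keeps $|E'|$ comfortably below the required threshold throughout--- with an aggregate contradiction against the minimality of $(A,B)$. If $|\Upsilon\cap S^*|\le 3k/64$, I choose a subunion $\tilde S=S^*\cup\bigcup_{j\in J}X_j$ of the discarded parts so that $|\Gamma(C)\cap\tilde S|\in[\rho|\Gamma(C)|,|\Gamma(C)|-\rho|\Gamma(C)|]$ (possible since each $|\Upsilon\cap X_j|<\rho|\Gamma(C)|$ by the structural bound and the total $\Upsilon$-deficit is at least $27k/80-3k/64=93k/320$); then $(A\cup(B\setminus\tilde S),\tilde S)$ is a $\rho$-balanced cut of $C$ with size $|E(A,\tilde S)|+|E(\tilde S,B\setminus\tilde S)|\le|E(A,\tilde S)|+|E'|$, and a greedy/LP selection of $J$ minimising $|E(A,\tilde S)|$ per unit $\Upsilon$-mass added yields a cut strictly smaller than $|E(A,B)|$, contradicting its minimality.

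The main obstacle is the last aggregation step: one must verify that the greedy selection can both meet the $\Upsilon$-threshold $\rho|\Gamma(C)|$ and keep $|E(A,\tilde S)|+|E'|<|E(A,B)|$. The arithmetic is tight but calibrated by the hypotheses $|\Gamma(B)\cap\Gamma(C)|\ge 27k/80$ and $\rho|\Gamma(C)|\le(k+k'+1)/4$, together with $\alpha^*=1/64$ and $k''=k/512$, so that the target bound $|\Upsilon\cap C'|>3k/64$ just barely goes through.
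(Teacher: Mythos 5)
Your high-level strategy — a well-linked decomposition of $B$, bound the cut edges $E'$ with a budget argument, and then contradict minimality of $(A,B)$ — is the right one, and your observation that minimality forces every violating partition to have a side that is $\Upsilon$-light is genuinely clever (the paper does not use it). However, there is a real gap in the last step, and I do not think the aggregation idea you flag as the "main obstacle" can be pushed through.

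The problem is the shape of your decomposition. You keep only one surviving cluster $S^*$, so $S^*$ is the only piece guaranteed to have the $(k'',\alpha^*)$-bandwidth property; the discarded pieces $X_j$ have no such guarantee. When you try to contradict minimality assuming $|\Upsilon\cap S^*|\le 3k/64$, the natural move is to find a piece $P$ with a positive "surplus" $|E(P,A)|>|E(P,B\setminus P)|$ (such a $P$ exists once $|E'|<|E(A,B)|/2$). If $P=S^*$ the swap preserves balance and you are done. But if $P$ is some $X_j$, swapping it requires $|\Upsilon\cap X_j|\le|\Upsilon|-\rho|\Gamma(C)|$. Your structural bound only gives $|\Upsilon\cap X_j|<\rho|\Gamma(C)|$, and under the stated hypotheses it is perfectly possible to have $\rho|\Gamma(C)|>|\Upsilon|-\rho|\Gamma(C)|$ (equivalently $2\rho|\Gamma(C)|>|\Upsilon|$): $\rho|\Gamma(C)|$ can be as large as $(k+k'+1)/4\approx 0.254k$ while $|\Upsilon|$ can be as small as $27k/80\approx 0.3375k$, and $2\cdot 0.254k>0.3375k$. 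So a single discarded piece can simultaneously be ineligible to be $C'$ and ineligible to swap. Aggregating pieces into $\tilde S$ does not clearly repair this: you need $|\Gamma(C)\cap\tilde S|\ge\rho|\Gamma(C)|$ and $|E(A,\tilde S)|<|E(A,B)|-|E'|$, but nothing in your setup prevents the $\Upsilon$-rich pieces from also concentrating almost all of $E(A,B)$, in which case no such $\tilde S$ exists.

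The paper sidesteps this by running the decomposition on \emph{every} piece whose $\Gamma(C)$-count exceeds $(k+k'+1)/16$, not just one survivor. That way, the surplus piece $S^*$ either has $|S^*\cap\Gamma(C)|>(k+k'+1)/16$ — in which case the decomposition already endowed it with the bandwidth property and it can serve as $C'$ directly — or it has $|S^*\cap\Gamma(C)|\le(k+k'+1)/16$, and then $|\Gamma(C)\cap(B\setminus S^*)|\ge 27k/80-(k+k'+1)/16\ge(k+k'+1)/4\ge\rho|\Gamma(C)|$ so the swap is balanced and contradicts minimality. The threshold $(k+k'+1)/16$ is small enough for the balance arithmetic to close, whereas your threshold $\rho|\Gamma(C)|\le(k+k'+1)/4$ is not. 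This also changes the budget argument: the paper assigns larger budgets to $\Gamma(C)$-vertices in higher-level clusters so that every refining split pays for itself, and the final accounting deduces $|E'|<|E(A,B)|/2$ directly, without needing your per-iteration "structural bound." To fix your proof you would essentially have to adopt this multi-level decomposition — or find a different way to rule out the bad case where the surplus piece is a $\Upsilon$-heavy discarded $X_j$.
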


If $|\Gamma(C)|\leq 7k/8$, then from Theorems~\ref{thm: good cluster w small boundary is enough} and~\ref{thm: from weak to strong $2$-cluster chain}, there is a strong $2$-cluster chain in $G$. We assume from now on that $|\Gamma(C)|>7k/8$.
Let $\alpha$ be the largest value for which $C$ has the $\alpha$-bandwidth property, so $\alpha\geq 1/23$.
 We distinguish between three cases. The first case is when $\alpha< 1/5$; the second case is when $\alpha\geq1/5$ but $C$ does not have the $(k/4,1)$-bandwidth property, and the third case is when $C$ has the $(k/4,1)$-bandwidth property.

\paragraph{Case 1: $\mathbf{\alpha< 1/5}$.}
Let $(Z,Z')$ be the minimum $1/4$-balanced cut of $C$ with respect to $\Gamma(C)$, where $|Z\cap \Gamma(C)|\geq |Z'\cap \Gamma(C)|$. We consider three sub-cases.

\noindent {\bf Subcase 1a.} This first subcase happens if $|E(Z,Z')|>\frac{\alpha}{1-\alpha}|\Gamma(C)|$. 
Observe that for $0<\alpha'<1$, function $\frac{\alpha'}{1-\alpha'}$ monotonously increases in $\alpha'$. So there is some value $\alpha<\alpha'<1$, such that $|E(Z,Z')|>\frac{\alpha'}{1-\alpha'}|\Gamma(C)|$. The following lemma uses standard techniques, and its proof \iffull appears in Appendix.\fi \ifabstract is omitted from this extended abstract.\fi

\begin{lemma}\label{lem: balanced-cut-simple}
There is a cluster $C'\subsetneq C$, such that $|\out(C')|<|\out(C)|$, $|\Gamma(C)\cap \Gamma(C')|\geq 3|\Gamma(C)|/4$,  and
$C'$ has the $\alpha'$-bandwidth property. \end{lemma}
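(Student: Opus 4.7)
The plan is to obtain $C'$ via an iterative $\alpha'$-well-linked refinement of $C$. Initialize $C_0:=C$ and $E_0:=\emptyset$. At step $i$, if $C_i$ has the $\alpha'$-bandwidth property, set $C':=C_i$ and halt. Otherwise, by Observation~\ref{obs: well-linkedness alt def} there is an $\alpha'$-violating partition $(A_i,B_i)$ of $C_i$ with respect to $\Gamma(C_i)$; label the sides so that $|A_i\cap\Gamma(C)|\geq|B_i\cap\Gamma(C)|$, and set $C_{i+1}:=A_i$ and $E_{i+1}:=E_i\cup E(A_i,B_i)$. The procedure terminates because $|C_i|$ strictly decreases, and the resulting $C'$ satisfies condition (a) by construction.

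The core observation, enabled by the hypothesis on $(Z,Z')$, is that every $\alpha'$-violating partition $(X,Y)$ of $C$ with respect to $\Gamma(C)$ must satisfy $\min\{|X\cap\Gamma(C)|,|Y\cap\Gamma(C)|\}<|\Gamma(C)|/4$. Indeed, if both sides had at least $|\Gamma(C)|/4$ vertices of $\Gamma(C)$, then $(X,Y)$ would be a $1/4$-balanced cut, so by minimality of $(Z,Z')$ we would have $|E(X,Y)|\geq|E(Z,Z')|>\frac{\alpha'}{1-\alpha'}|\Gamma(C)|$, while the violating condition gives $|E(X,Y)|<\alpha'\cdot|\Gamma(C)|/2$; these are contradictory since $\alpha'<1$ implies $\alpha'/2<\alpha'/(1-\alpha')$.

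To handle intermediate clusters, observe that any partition $(A_i,B_i)$ of $C_i$ lifts to the partition $(A_i\cup(C\setminus C_i),B_i)$ of $C$, whose cut value in $G[C]$ is at most $|E(A_i,B_i)|$ plus the number of edges of $E_i$ currently on the boundary of $C_i$. A standard potential argument, which assigns each vertex of $\Gamma(C_i)$ a budget of $\alpha'/(1-\alpha')$, transfers budget to newly created boundary vertices and charges each removed edge against lost budget, preserves the invariant $|\Gamma(C_i)|+\tfrac{1-\alpha'}{\alpha'}|E_i|\leq|\Gamma(C)|$. Combining this invariant with the key observation applied to each lifted cut, we obtain $|B_i\cap\Gamma(C)|<|\Gamma(C)|/4$ at every step, and in fact the stronger invariant $|\Gamma(C)\setminus C_i|+|E_i|<|\Gamma(C)|/4$ holds throughout the procedure. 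At termination this forces $|\Gamma(C)\cap\Gamma(C')|=|\Gamma(C)\cap C'|\geq 3|\Gamma(C)|/4$, which is condition (b).

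Finally, since the theorem's hypothesis guarantees $|\out(C)|=|\Gamma(C)|$, we also have $|\out(C')|=|\Gamma(C)\cap C'|+|E^{\mathrm{bnd}}|$, where $E^{\mathrm{bnd}}\subseteq E_i$ is the portion of discarded edges still on the boundary of $C'$. The invariant bounds $|E^{\mathrm{bnd}}|\leq|E_i|$ strictly below $|\Gamma(C)\setminus C'|$, so $|\out(C')|<|\Gamma(C)|=|\out(C)|$, establishing condition (c). The main obstacle is calibrating the potential so that all invariants are preserved in each iteration, since the violating cut is specified with respect to $\Gamma(C_i)$ rather than $\Gamma(C)$; controlling the contribution of the ``new'' boundary vertices (those in $\Gamma(C_i)\setminus\Gamma(C)$, created by previously discarded edges) and carefully combining the budget argument with the key observation is the technically delicate part of the proof.
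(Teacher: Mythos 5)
Your high-level plan --- iterative $\alpha'$-well-linked refinement of $C$, a potential/budget argument on the boundary vertices, and a contradiction with the minimality of $(Z,Z')$ --- is the same as the paper's, and the potential invariant $|\Gamma(C_i)|+\frac{1-\alpha'}{\alpha'}|E_i|\leq|\Gamma(C)|$ you write down is essentially the one used there. But there are two concrete gaps. First, the ``stronger invariant'' $|\Gamma(C)\setminus C_i|+|E_i|<|\Gamma(C)|/4$ is simply false: already after one step, the discarded side $B_0$ may carry $|\Gamma(C)\cap B_0|$ just below $|\Gamma(C)|/4$ (the minimality of $(Z,Z')$ only rules out $|\Gamma(C)\cap B_0|\geq |\Gamma(C)|/4$), and since $|E(A_0,B_0)|\geq 1$ in a connected $G[C]$, the sum crosses $|\Gamma(C)|/4$. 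Relatedly, your ``key observation'' is about $\alpha'$-violating cuts of $C$ with respect to $\Gamma(C)$, but the lifted cut $(C\setminus B_i,B_i)$ is not $\alpha'$-violating for $C$ in general, so the observation cannot be ``applied to each lifted cut.'' What actually works is the following: if $(C\setminus B_i,B_i)$ were $1/4$-balanced with respect to $\Gamma(C)$, its cut value is at most $|E(A_i,B_i)|+|E_i|$, which the potential invariant (together with $|E(A_i,B_i)|<\alpha'|\Gamma(C_i)|/2$) bounds strictly below $\frac{\alpha'}{1-\alpha'}|\Gamma(C)|<|E(Z,Z')|$ --- a contradiction. The paper avoids these ``throughout the procedure'' claims entirely: it adds a second stopping criterion ($|S\cap\Gamma(C)|\leq 3|\Gamma(C)|/4$), applies the budget bound only at termination, and then shows that this second criterion can never actually be what triggers the exit.

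Second, the derivation of $|\out(C')|<|\out(C)|$ is not sound. It leans on the false stronger invariant, and even granting the potential bound $|E_i|\leq\frac{\alpha'}{1-\alpha'}(|\Gamma(C)|-|\Gamma(C_i)|)\leq\frac{\alpha'}{1-\alpha'}|\Gamma(C)\setminus C_i|$, you would need $\alpha'<1/2$ to conclude $|E_i|<|\Gamma(C)\setminus C_i|$, an assumption the lemma does not grant. The paper's route is simpler and unconditional: in each iteration, all edges of $\out(S)$ incident on $\Gamma(S)\cap R'$ leave the boundary of $R$, while only $|E(R,R')|<\alpha'|\Gamma(S)\cap R'|<|\Gamma(S)\cap R'|$ new boundary edges appear, so $|\out(R)|<|\out(S)|$ strictly at every step. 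Combined with the observation that at least one iteration must occur --- since $\alpha$ is by definition the largest value for which $C$ has the $\alpha$-bandwidth property and $\alpha'>\alpha$ --- this gives both $C'\subsetneq C$ and $|\out(C')|<|\out(C)|$. Your writeup never establishes that at least one iteration runs, so $C'\subsetneq C$ is also left open.
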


Let $C'$ be the cluster given by Lemma~\ref{lem: balanced-cut-simple}, and let $C''=C\setminus C'$. Then $|\Gamma(C'')\cap \Gamma(C)|\leq |\Gamma(C)|/4\leq (k+k'+1)/4$. Since every vertex of $\Gamma(C)$ is incident on exactly one edge of $\out(C)$, we get that $|\out(C'')\cap \out(C)|\leq |\Gamma(C'')\cap \Gamma(C)|\leq (k+k'+1)/4\leq 27k/80$. Therefore, from Observation~\ref{obs: cutting good cluster}, $C'$ is a good or a perfect cluster. Since $\alpha'>\alpha\geq 1/23$, $C'$ is a valid output for the theorem. Notice that $|\out(C')|<|\out(C)|$.

\noindent{\bf Subcase 1b.} This case happens if $|E(Z,Z')|\leq \frac{\alpha}{1-\alpha}|\Gamma(C)|$, and $Z$ is a good or a perfect cluster. In this case, from Lemma~\ref{lemma: balanced cut large piece wl}, cluster $Z$ has the $\alpha'$-bandwidth property, for $\alpha'=\frac{\alpha}{2+\alpha}$. Moreover, since $\alpha< 1/5$, $|E(Z,Z')|< |\Gamma(C)|/4\leq |\Gamma(C)\cap Z'|$, and so  $|\out(Z)|<|\out(C)|\leq k+k'$.

If $\alpha\geq 1/11$, then $\alpha'=\alpha/(2+\alpha)\geq 1/23$, and we return $C'=Z$. 
%
Otherwise, if $\alpha<1/11$, then $\alpha'=\alpha/(2+\alpha)\geq 1/64$, since $\alpha\geq 1/23$, and:

\[
\begin{split}
|\Gamma(Z)|&\leq \frac{3|\Gamma(C)|} 4+|E(Z,Z')|
\leq \frac{3|\Gamma(C)|} 4+ \frac{\alpha}{1-\alpha}|\Gamma(C)|\\
&\leq \frac{17|\Gamma(C)|}{20}
\leq \frac{17(k+k'+1)}{20}
\leq \frac{7k}{8},\end{split}\]

since $k'=k/64$. If $Z$ is a perfect cluster, then from Theorem~\ref{thm: perfect cluster gives chain}, we obtain a $2$-cluster chain. Otherwise, $Z$ is a good cluster, and we obtain the $2$-cluster chain by applying Theorem~\ref{thm: good cluster w small boundary is enough} to cluster $Z$.

\noindent{\bf Subcase 1c.} This  case happens if $|E(Z,Z')|\leq \frac{\alpha}{1-\alpha}|\Gamma(C)|$, but $Z$ is not a good or a perfect cluster. From Observation~\ref{obs: cutting good cluster}, $|\out(Z')\cap \out(C)|\geq 27k/80$ must hold, and, since every vertex of $\Gamma(C)$ is incident on exactly one edge of $\out(C)$, we get that $|\Gamma(Z')\cap \Gamma(C)|\geq |\out(Z')\cap \out(C)|\geq 27k/80$.

In this case, we construct a $2$-cluster chain in $G$. We let $X'=Z$. Since $C$ has the $\alpha\geq 1/23$-bandwidth property, from Lemma~\ref{lemma: balanced cut large piece wl}, $X'$ has the $\alpha'$-bandwidth property, for $\alpha'=\alpha/(2+\alpha)\geq \alpha^*$. Therefore, $X'$ has the $(k'',\alpha^*)$-bandwidth property. We then apply Theorem~\ref{thm: balanced-cut-large-smaller-side} to the partition $(Z,Z')$, to obtain a cluster $C'\subseteq Z'$, that has the $(k'',\alpha^*)$-bandwidth property, and we set $Y'=C'$. In order to define the sets $\pset_1$ and $\pset_2$ of paths, observe that all but at most one vertices of $\Gamma(C)$ have a path of $\pset(C)$ terminating at them, since $p(C)\geq |\Gamma(C)|-1$. Since we have assumed that $|\Gamma(C)|\geq 7k/8$, $|\Gamma(Z)\cap \Gamma(C)|\geq |\Gamma(C)|/2\geq 7k/16$. Therefore, at least $7k/16-1\geq k/32+k'$ paths of $\pset(C)$ terminate at the vertices of $\Gamma(Z)$, and all paths in $\pset(C)$ are internally disjoint from $C$. We let $\pset_1$ be any subset of $k/32$ paths terminating at the vertices of $Z$, that originate from the vertices of $T_1$. Recall that $|\Gamma(C')\cap \Gamma(C)|\geq 3k/64+1\geq k/32+k'+1$. Therefore, at least $k/32$ paths of $\pset(C)$ originate from the vertices of $T_1$ and terminate at the vertices of $C'$. We let $\pset_2\subseteq \pset(C)$ be any set of $k/32$ such paths. It is now easy to see that $(X',Y',\pset_1,\pset_2)$ is a valid weak $2$-cluster chain.

\paragraph{Case 2: $\mathbf{\alpha\geq 1/5}$, but $\mathbf{C}$ does not have the $\mathbf{(k/4,1)}$-bandwidth property.}
We say that a partition $(Z,Z')$ of $C$ is a \emph{sparse cut}, iff the following condition holds: 
\[|E(Z,Z')|<\min\set{|\Gamma(C)\cap Z|,|\Gamma(C)\cap Z'|}.\]

 Since we have assumed that $C$ does not have the $(k/4,1)$-bandwidth property, there is some sparse cut $(Z,Z')$ of $C$, with $|E(Z,Z')|<k/4$. Let $r$ be the smallest value $|E(Z,Z')|$ among all sparse cuts $(Z,Z')$ of $C$, so $r<k/4$, and let $\rho'=(r+1)/|\Gamma(C)|$. Finally, let $(A',B')$ be the minimum $\rho'$-balanced cut of $C$ with respect to $\Gamma(C)$, and assume w.l.o.g. that $|\Gamma(C)\cap A'|\geq |\Gamma(C)\cap B'|$. From the above discussion, $|E(A',B')|=r<k/4$. We need the following claim\ifabstract, whose proof is omitted from this extended abstract\fi.

\begin{claim}\label{claim: A is well-linked}
Set $A'$ has the $1/11$-bandwidth property.
\end{claim}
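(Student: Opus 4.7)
The plan is to argue by contradiction: suppose some partition $(X,Y)$ of $A'$ witnesses failure of the $1/11$-bandwidth property, so $|E(X,Y)| < \tfrac{1}{11}\min\{|\Gamma(A') \cap X|,\,|\Gamma(A') \cap Y|\}$. Write $S_X = \Gamma(C) \cap X$, $S_Y = \Gamma(C) \cap Y$, $e_X = |E(X, B')|$, $e_Y = |E(Y, B')|$ (so $e_X+e_Y = r$), and let $U_X, U_Y$ denote the endpoints in $X, Y$ of edges in $E(A',B')$, so $|U_X| \le e_X$, $|U_Y|\le e_Y$, and $|\Gamma(A') \cap X| \le |S_X|+|U_X|$ (analogously for $Y$). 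It suffices then to contradict $|E(X,Y)| < \tfrac{1}{11}\min\{|S_X|+|U_X|,\,|S_Y|+|U_Y|\}$. Three ingredients will drive the argument: (i) the $\alpha$-bandwidth property of $C$, with $\alpha \ge 1/5$, applied to the $C$-partitions $(X, Y \cup B')$ and $(Y, X \cup B')$; (ii) the minimality of $(A',B')$ among $\rho'$-balanced cuts of $G[C]$; and (iii) the minimality of $r$ among sparse-cut values of $G[C]$.

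I will split cases according to whether $|S_X|$ and $|S_Y|$ each exceed $r$. If both are $\ge r+1$, then both of $(X, Y\cup B')$ and $(Y, X\cup B')$ are $\rho'$-balanced (since $|\Gamma(C)\cap B'| \ge r+1$), so ingredient (ii) gives $|E(X,Y)| \ge e_X$ and $|E(X,Y)| \ge e_Y$. Combining with the $\alpha$-bandwidth inequality on whichever partition uses the smaller of $|S_X|, |S_Y|$ (which is $\le |\Gamma(C)|/2$) yields $|E(X,Y)| \ge |S|/10$ on that side, and together with $|E(X,Y)| \ge e \ge |U|$ on the same side, this gives $11|E(X,Y)| \ge |S| + |U|$, contradicting the violation regardless of which of $\Gamma(A')\cap X, \Gamma(A')\cap Y$ is the smaller side. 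In the mixed case where, say, $|S_Y| \le r$ but $|S_X| \ge r+1$, only $(X, Y\cup B')$ is $\rho'$-balanced; this still yields $|E(X,Y)| \ge e_Y$, and combined with $|E(X,Y)| + e_Y \ge |S_Y|/5$ from $\alpha$-bandwidth on $(Y, X\cup B')$, we again get $11|E(X,Y)| \ge |S_Y|+|U_Y|$.

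The main obstacle is the remaining subcase $|S_X|, |S_Y| \le r$, where neither auxiliary $C$-partition is $\rho'$-balanced and ingredient (ii) is useless. Here I will invoke ingredient (iii): if $|E(X,Y)| + e_X < |S_X|$, then $(X, Y\cup B')$ would be a sparse cut of $G[C]$ (since $|\Gamma(C) \cap (Y\cup B')| \ge r+1 > |S_X|$), so its size would be $\ge r \ge |S_X|$, a contradiction; hence $|E(X,Y)| \ge |S_X| - e_X$, and symmetrically $|E(X,Y)| \ge |S_Y| - e_Y$. Adding these and using $|S_X|+|S_Y| = |\Gamma(C)\cap A'| \ge |\Gamma(C)|/2 > 7k/16$ (since $A'$ is the larger side) together with $e_X+e_Y = r < k/4$ yields $|E(X,Y)| > 3k/32$. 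On the other hand, in this subcase $|S_X|+|U_X|+|S_Y|+|U_Y| \le 2r + r = 3r$, so $\min\{|S_X|+|U_X|,\,|S_Y|+|U_Y|\} \le 3r/2 < 3k/8$, and hence $\tfrac{1}{11}\min\{\cdot\} < 3k/88 < 3k/32 < |E(X,Y)|$, the desired contradiction.
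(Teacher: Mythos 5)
Your proof is correct and follows essentially the same approach as the paper's. Both arguments split on whether the smaller $\Gamma(C)$-share among the two sides exceeds $r$; when it does (at least one side has $|S|\geq r+1$), they use minimality of the $\rho'$-balanced cut $(A',B')$ to get $|E(X,Y)|\geq e$ on the smaller side and combine with the $\alpha$-bandwidth of $C$; when both sides have $|S|\leq r$, they use minimality of $r$ among sparse-cut values to get $|E(X,Y)|+e\geq |S|$ on each side and add. The only cosmetic differences are that you split the first regime into two subcases (``both'' and ``mixed'') where the paper handles them uniformly, and in the final regime you bound $\min\{|S_X|+|U_X|,|S_Y|+|U_Y|\}\leq 3r/2$ directly from the case hypothesis rather than via the paper's observation that $|\Gamma(A')|<|\Gamma(C)|$; both yield more than the needed factor $1/11$.
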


\iffull
\begin{proof}
Consider any partition $(J,J')$ of $A'$. Let $S=\Gamma(C)\cap J$,  $S'=\Gamma(C)\cap J'$, and let $U\subseteq J$, $U'\subseteq J'$ be the subsets of vertices incident on the edges of $E(A',B')$. Let $E'=E(J,J')$. It is enough to prove that $|E'|\geq\frac 1 {11} \min\set{|S\cup U|,|S'\cup U'|}$. Assume w.l.o.g. that $|S'|\leq |S|$, and consider two cases (see Figure~\ref{fig: balanced-well-linked2}).

\begin{figure}
\scalebox{0.35}{\includegraphics{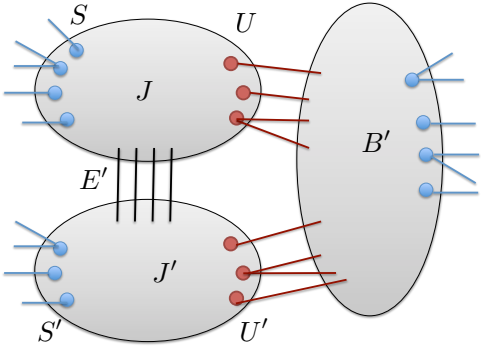}}\caption{Illustration for the proof of Claim~\ref{claim: A is well-linked}.\label{fig: balanced-well-linked2}}
\end{figure}

The first case is when $|S|\geq r+1$. In this case, $(J,C\setminus J)$ is a $\rho'$-balanced cut of $C$ with respect to $\Gamma(C)$, and so $|E(J,C\setminus J)|=|\out(J)|=|E'|+|E(J,B')|\geq r$ must hold. Since $|E(J,B')|+|E(J',B')|=r$, it follows that $|E'|\geq |E(J',B')|$. From the $1/5$-bandwidth property of $C$, $|\out(J')|=|E'|+|E(J',B')|\geq |S'|/5$, and so 

\[\frac{|S'|+|U'|}{5}\leq \frac{|S'|+|E(J',B')|}{5}\leq |E'|+\frac 6 5|E(J',B')|
\leq \frac{11}{5}|E'|.\]
We conclude that $|E'|\geq (|S'|+|U'|)/11$ in this case.

We now assume that $|S|<r+1$, and so $|S'|<r+1$ as well. From our definition of $r$, $S$ is not a sparse cut, and neither is $S'$. In particular:

\[|E'|+|E(J,B')|\geq |S|,\]

and

\[|E'|+|E(J',B')|\geq |S'|.\]

Combining these two inequalities together, we get that $2|E'|+|E(A',B')|\geq |S|+|S'|$. However, $|E(A',B')|\leq r\leq k/4\leq |\Gamma(C)|/3$ (since $|\Gamma(C)|\geq 7k/8$), while $|S|+|S'|=|\Gamma(C)\cap A'|\geq |\Gamma(C)|/2$. We conclude that $|E'|\geq \half|\Gamma(C)|\left(\frac 1 2 -\frac 1 3\right)=\frac{1}{12}|\Gamma(C)|$. Notice that, since $(A',B')$ is a sparse cut, $|E(A',B')|<|\Gamma(C)\cap B'|$, and so $|\Gamma(A')|<|\Gamma(C)|$. Therefore, $\min\set{|S\cup U|, |S'\cup U'|}\leq |\Gamma(A')|/2<|\Gamma(C)|/2$, and $|E'|\geq \frac{1}{6}\min\set{|S\cup U|, |S'\cup U'|}$.
\end{proof}
\fi

We now consider two subcases. The first subcase happens when $A'$ is a good or a perfect cluster. It is easy to see that $|\out(A')|<|\out(C)|\leq k+k'$ in this case, and we return $C'=A'$. 

The second subcase happens when $A'$ is not a good or a perfect cluster.  
In this case, we construct a $2$-cluster chain in $G$, similarly to Case 1c. We let $X'=A'$. From the above discussion $X'$ has the $1/11\geq \alpha^*$-bandwidth property, and so it has the $(k'',\alpha^*)$-bandwidth property. From Observation~\ref{obs: cutting good cluster}, $|\out(B')\cap \out(C)|\geq 27k/80$ must hold, and, since every vertex of $\Gamma(C)$ is incident on exactly one edge of $\out(C)$, we get that $|\Gamma(B')\cap \Gamma(C)|\geq |\out(B')\cap \out(C)|\geq 27k/80$.

 We then apply Theorem~\ref{thm: balanced-cut-large-smaller-side} to the partition $(A',B')$, to obtain a cluster $C'\subseteq B'$, that has the $\alpha^*$-bandwidth property, and we set $Y'=C'$. In order to define the sets $\pset_1$ and $\pset_2$ of paths, observe that all but at most one vertices of $\Gamma(C)$ have a path of $\pset(C)$ terminating at them, since $p(C)\geq |\Gamma(C)|-1$. Since we have assumed that $|\Gamma(C)|\geq 7k/8$, $|\Gamma(A')\cap \Gamma(C)|\geq |\Gamma(C)|/2\geq 7k/16$. Therefore, at least $7k/16-1\geq k/32+k'$ paths of $\pset(C)$ terminate at the vertices of $\Gamma(A')$, and all paths in $\pset(C)$ are internally disjoint from $C$. We let $\pset_1$ be any subset of $k/32$ paths terminating at the vertices of $A'$, that originate from the vertices of in $T_1$. Recall that $|\Gamma(C')\cap \Gamma(C)|\geq 3k/64+1\geq k/32+k'+1$. Therefore, at least $k/32$ paths of $\pset(C)$ originate from the vertices of $T_1$ and terminate at the vertices of $C'$. We let $\pset_2\subseteq \pset(C)$ be any such set of $k/32$ paths. It is now easy to see that $(X',Y',\pset_1,\pset_2)$ is a valid weak $2$-cluster chain.

\paragraph{Case 3: $\mathbf{C}$ has the $\mathbf{(k/4,1)}$-bandwidth property.}
Observe that in Cases 1 and 2, whenever we did not construct a $2$-cluster chain, we returned a good or a perfect cluster $C'\subsetneq C$ with the $1/23$-bandwidth property, such that $|\out(C')|<|\out(C)|$. We will use this fact later.

From Lemma~\ref{lem: deletable edge}, $p(C)=|\Gamma(C)|$, and so $|\Gamma(C)|\leq k+k'$. Since every vertex of $\Gamma(C)$ is incident on exactly one edge of $\out(C)$, we get that $|\out(C)|\leq k+k'$. We claim that there is some vertex $v\in \Gamma(C)$, such that $v$ is a non-separating vertex for $G[C]$. Indeed, assume otherwise, and let $\cset$ be the set of all connected components of $G[C]\setminus\Gamma(C)$. Then there must be some component $R\in \cset$, so that exactly one vertex $v\in \Gamma(C)$ has an edge connecting $v$ to a vertex of $R$. Let $u\in V(R)$ be any vertex. Since $V(R)\cap T=\emptyset$, and $v$ separates $R$ from $T$, it is easy to see that $T_1$ remains $(k/4,1)$-well-linked, and $(T_1,T_2)$ remain $\half$-linked in $G\setminus\set{u}$, contradicting the minimality of $G$. Let $v$ be any vertex in $\Gamma(C)$, such that $v$ is not a separator vertex for $G[C]$. Consider the cluster $C'=C\setminus\set{v}$.

We start by observing that $C'$ has the $1/23$-bandwidth property, in the following claim, whose proof uses standard techniques and is \iffull deferred to the Appendix.\fi \ifabstract omitted from this extended abstract.\fi

\begin{claim}\label{claim: C' has bw prop}
Cluster $C'$ has the $1/23$-bandwidth property.
\end{claim}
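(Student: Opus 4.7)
The plan is to lift any hypothetical $1/23$-bandwidth violation of $C'$ into a $(k/4,1)$-bandwidth violation of $C$, which contradicts the hypothesis of Case 3. Let $N := N_G(v) \cap C$ denote the set of neighbors of $v$ inside $C$; since $v$ has degree at most $3$ and is incident on exactly one edge of $\out(C)$ (by the assumption of Theorem~\ref{thm: iteration}), we have $|N| \leq 2$. Deleting $v$ from $C$ removes it from the boundary but puts every vertex of $N$ into the new boundary, so $\Gamma(C') = (\Gamma(C)\setminus\{v\}) \cup N$, and in particular $|\Gamma(C')| \leq |\Gamma(C)| + 1 \leq k + k' + 2$.

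I will consider an arbitrary non-trivial partition $(X,Y)$ of $C'$ and set $a = |X \cap \Gamma(C)|$, $b = |Y \cap \Gamma(C)|$, $a' = |X \cap N|$, $b' = |Y \cap N|$, so that $|X \cap \Gamma(C')| = a+a'$, $|Y \cap \Gamma(C')| = b+b'$, and $a'+b' \leq 2$. By symmetry I assume $a+a' \leq b+b'$; moreover I can arrange $a' \leq b'$ (hence $a' \leq 1$) by choosing on which side of the partition to place $v$ when lifting back to $C$. Indeed, the partition $(X, Y \cup \{v\})$ of $C$ has exactly $|E_{G[C']}(X,Y)| + a'$ cut edges, so the $(k/4,1)$-bandwidth of $C$ gives
\[
|E_{G[C']}(X,Y)| \;\geq\; \min\{a,\, b+1,\, k/4\} - a'.
\]

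The case $a+a' \leq 22$ is handled for free: the target bound $(a+a')/23 < 1$ follows from $|E_{G[C']}(X,Y)| \geq 1$, which holds because $v$ was chosen non-separating in $G[C]$, so $G[C']$ is connected and any non-trivial cut has at least one edge. Otherwise $a+a' \geq 23$, so $a \geq 21$, leaving ample slack against the additive loss $a' \leq 1$ in the displayed bound. A short case split on which term in the minimum is active finishes the proof: the $a$-branch reduces to $22a \geq 24a'$; the $(b+1)$-branch uses $a \leq b+(b'-a') \leq b+2$, forced by $a+a' \leq b+b'$; and the $k/4$-branch reduces to $23(k/4-1) \geq a+a'$, which holds comfortably since $a+a' \leq |\Gamma(C')| \leq k + k' + 2$ and $k \geq 12000$.

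I expect the main obstacle to be nothing conceptually deep, but rather the careful bookkeeping of the boundary under deletion of $v$ and the correct choice of side on which to re-attach $v$ when lifting. The $k/4$-branch is where the constants are tightest, but the slack provided by the hypothesis $k \geq 12000$ (and by the fact that $(k/4,1)$-bandwidth is vastly stronger than the $1/23$-bandwidth we need to output) makes the arithmetic routine.
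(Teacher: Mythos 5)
Your proof is correct and follows essentially the same route as the paper's: lift any partition of $C'$ to a partition of $C$ by re-attaching $v$ to one side, invoke the $(k/4,1)$-bandwidth of $C$, track how $\Gamma(C')$ differs from $\Gamma(C)$ by at most the two neighbors of $v$ in $C$, and dispose of small partitions using connectivity of $G[C']$ (which follows from $v$ being non-separating). The only difference is cosmetic: the paper always uses the pessimistic additive loss of $2$ for the edges from $v$ into the lifted side, whereas you choose which side receives $v$ so the loss is $\min\{a',b'\}\le 1$ — a tiny tightening that does not change the argument, though the phrase ``arrange $a'\le b'$'' really encodes a short case split on which lifted partition to use, which you should make explicit when writing it up.
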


From Observation~\ref{obs: cutting good cluster}, cluster $C'$ is either a good or a perfect cluster. 
Moreover, it is easy to see that $|\out(C')|\leq |\out(C)|+1\leq k+k'+1$, and $p(C')\geq |\Gamma(C')|-1$ (there are at most two vertices that belong to $\Gamma(C')\setminus \Gamma(C)$ - the neighbors of $v$ in $C$; we can extend the path of $\pset(C)$ terminating at $v$ to terminate at one of these vertices). If $C'$ is a perfect cluster, then from Theorem~\ref{thm: perfect cluster gives chain}, $G$ contains a $2$-cluster chain. Therefore, we assume that $C'$ is a good cluster. We now consider three subcases.

The first sub-case happens when $C'$ does not have the $(k/4,1)$-bandwidth property, and every vertex of $\Gamma(C')$ is incident on exactly one edge of $\out(C')$. In this case, cluster $C'$ is a valid input to Theorem~\ref{thm: iteration}, where it falls under Case 1 or Case 2. In each of these cases, we either showed that $G$ contains a $2$-cluster chain, or produced a good or a perfect cluster $C''\subsetneq C'$, that has the $1/23$-bandwidth property, and $|\out(C'')|<|\out(C')|\leq |\out(C)|+1$, so $|\out(C'')|\leq |\out(C)|$. We can then return the cluster $C''$.

The second sub-case happens when at least one vertex $u$ of $\Gamma(C')$ is incident on two edges of $\out(C')$. In this case, we consider the cluster $C''=C'\setminus \set{u}$. It is easy to see that $C''$ still has the $1/23$-bandwidth property, and from Observation~\ref{obs: cutting good cluster}, it is a good or a perfect cluster. Moreover, $C''\subsetneq C$, and $|\out(C'')|<|\out(C')|\leq |\out(C)|+1$, so $|\out(C'')|\leq |\out(C)|$. We then return cluster $C''$.

The third sub-case happens when every vertex of $\Gamma(C')$ is incident on exactly one edge of $\out(C')$, and $C'$ has the $(k/4,1)$-bandwidth property. In this case, from Lemma~\ref{lem: deletable edge}, $p(C')=|\Gamma(C')|=|\out(C')|$. However, since $C'\subsetneq C$, $p(C')\leq |\Gamma(C)|$ must hold. Therefore, $|\out(C')|\leq |\Gamma(C)|\leq |\out(C)|$, and we return $C'$.

\label{------------------------------------------------------End of New Proof----------------------------------}

\label{-------------------------------------------Tree-of-Sets System---------------------------------------------------}
\section{Tree-of-Sets System}\label{sec: ToS}

A \ToS is defined very similarly to the \PoS, except that instead of being organized into a path-like structure, the clusters are organized into a tree-like structure. We would also like to ensure that the resulting \ToS is \emph{anchored} - that is, if the original graph $G$ contains a set $\tset$ of terminals that are sufficiently well-linked, then there is some large enough set $\pset^*$ of node-disjoint paths, connecting one of the clusters of the \ToS to the terminals, so that the paths in $\pset^*$ are internally disjoint from all vertices participating in the \ToS. In order to make the notation convenient, we add one special vertex $v_0$ to the tree corresponding to the \ToS, whose vertex set $S(v_0)$ only contains a subset of the terminals, and we do not impose well-linkedness requirements on this cluster. We now formally define an anchored \ToS.


\begin{definition} Let $G$ be any graph, and let $\tset\subseteq V(G)$ be any subset of vertices of $G$ called terminals. A $T$-anchored \ToS of width $w$ and size $\ell$ in $G$ consists of the following:

\begin{itemize}
\item a tree $\tau$ with $|V(\tau)|=\ell+1$ vertices, whose root vertex $v_0$ has degree $1$;

\item for every vertex $v\in V(\tau)$, a subset $S(v)\subseteq V(G)$ of vertices of $G$, such that, if $v=v_0$ then $S(v)\subseteq \tset$, and otherwise $G[S(v)]$ is connected. Moreover, all resulting vertex sets $\set{S(v)}_{v\in V(\tau)}$ are disjoint; and

\item for every edge $e=(u,v)\in E(\tau)$, a set $\pset(e)$ of $w$ node-disjoint paths in graph $G$, where each path $P\in \pset(e)$ connects a vertex of $S(u)$ to a vertex of $S(v)$, and it is internally disjoint from $\bigcup_{x\in V(\tau)}S(x)$. The paths in $\pset=\bigcup_{e\in E(\tau)}\pset(e)$ must be all mutually disjoint.
\end{itemize}

For every vertex $v\in V(\tau)$ and edge $e\in \delta_{\tau}(v)$, let $U(v,e)\subseteq S(v)$ be the set of all endpoints of the paths of $\pset(e)$ that belong to $S(v)$. Let $U(v)=\bigcup_{e\in \delta_{\tau}(v)}U(v,e)$. We say that the \ToS is perfect, if for every vertex $v\in V(\tau)\setminus \set{v_0}$:

\begin{itemize}
\item for every edge $e\in \delta_{\tau}(v)$, set $U(v,e)$ of vertices is node-well-linked in $G[S(v)]$; 

\item for every pair $e\neq e'\in \delta_{\tau}(v)$ of edges, sets $U(v,e),U(v,e')$ of vertices are node-linked in $G[S(v)]$; and
\item the set $U(v)$ is $1/5$-well-linked in $G[S(v)]$.
\end{itemize}
\end{definition}

The main theorem of this section is the following.

\begin{theorem}\label{thm: build a ToS system} There is a universal constant $c$ such that the following holds.
Suppose we are given any graph $G$ with maximum vertex degree at most $d$, and a set $\tset$ of $k>2$ vertices called terminals, such that the terminals are $1$-well-linked in $G$. Then for all integers $\ell\geq 2,w\geq d$, with $w\ell^5\leq \frac{k}{cd^8}$, there is a $T$-anchored perfect \ToS of size $\ell$ and width $w$ in $G$.
\end{theorem}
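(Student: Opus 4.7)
The plan is to build the anchored \ToS in three stages, following the general strategy of~\cite{CC14}. In the first stage, I would produce a large family of disjoint ``good routers'' — clusters $C_1,\ldots,C_r \subseteq V(G)\setminus\tset$ with $r = \Omega(\ell)$, each having the $\alpha$-bandwidth property for some constant $\alpha$, and collectively supporting $\Omega(k/\poly(d,\ell))$ node-disjoint paths from $\tset$. The bandwidth property is obtained via a standard well-linked decomposition of $G\setminus\tset$ (discarding only a constant fraction of edges), and the terminal-connectivity to enough routers is guaranteed by a Menger/flow argument exploiting the $1$-well-linkedness of $\tset$ together with Observation~\ref{obs: low cong flow to NDP} to convert the resulting flow into node-disjoint paths.

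In the second stage, I would assemble the chosen routers into a tree by incremental routing. Initialize $v_0$ as the root with $S(v_0)$ equal to a subset of $w$ terminals, and attach a first router as $v_1$ via $w$ node-disjoint terminal-to-router paths. At step $i$, given the partial tree $\tau'$ on $i$ vertices with its clusters and connecting paths already placed, pick an unused router $R$ and produce $w$ node-disjoint paths connecting $R$ to $\bigcup_{v\in\tau'}S(v)$ that avoid the \emph{interiors} of previously-placed clusters and the previously-placed connecting paths. Existence of these paths is established by combining (i) the $\Omega(k/\poly(d,\ell))$ node-disjoint paths from $R$ to $\tset$, (ii) the $1$-well-linkedness of $\tset$ in $G$ to reroute the terminal-endpoints of those paths onto vertices already in the existing tree, and (iii) Lemma~\ref{lemma: re-routing of vertex-disjoint paths} to swap out individual paths that collide with the interiors of previously-placed clusters or with previously-placed routings. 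The router $R$ is attached as a new leaf to whichever vertex $v \in \tau'$ absorbs the majority of these paths.

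In the third stage, I would promote the resulting weak \ToS into a perfect one. For each internal cluster $S(v)$ and each edge $e \in \delta_\tau(v)$, invoke Theorem~\ref{thm: grouping} inside $G[S(v)]$ to replace $U(v,e)$ by a node-well-linked subset of size $w$, losing only a constant factor depending on $d$; then for each pair $e \neq e' \in \delta_\tau(v)$ invoke Theorem~\ref{thm: linkedness from node-well-linkedness} to obtain node-linkedness between $U(v,e)$ and $U(v,e')$. The $1/5$-well-linkedness of $U(v) = \bigcup_e U(v,e)$ in $G[S(v)]$ follows from the $\alpha$-bandwidth property of the original router combined with the boosted local properties of its pieces, applied via Observation~\ref{obs: well-linkedness properties}.

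The main obstacle will be stage two. Each iteration must produce $w$ node-disjoint paths from a fresh router to the existing tree while avoiding $O(\ell)$ previously-placed clusters and all previously-placed routings, whose total vertex count can be as large as $\Omega(w\ell \cdot \poly(d))$. Converting edge-disjoint flow bounds to node-disjoint path counts costs a factor of $d$ per hop, iterating $\ell$ times compounds the loss, and the grouping/linkedness passes in stage three each shrink the width by a further factor of $O(d)$. The hypothesis $w\ell^5 \leq k/(cd^8)$ precisely reflects this accounting: roughly $d^2$ loss in router construction, $d^2\ell^2$ from the $\ell$ iterations of tree assembly, $d^2\ell$ from the stage-three boosting, and the remaining factor ensures that individual routers are sized large enough to absorb all subsequent losses. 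Keeping these losses under control simultaneously — in particular, ensuring that after $i$ iterations the residual graph still contains an unused router that is routable to the current tree without sacrificing the well-linkedness needed for the final boosting — is the technical crux of the argument.
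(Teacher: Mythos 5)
Your high-level architecture (routers, then tree assembly, then boosting) matches the paper's, and your Stage 3 is essentially correct. However, there are two genuine gaps in Stages 1 and 2.

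\textbf{Stage 1 (routers).} A plain well-linked decomposition of $G\setminus T$ does not yield good routers. Such a decomposition produces pieces with the bandwidth property, but it gives you no control over how many pieces there are, whether enough of them are \emph{large} (boundary size at least $w'$), or whether each individual piece can route $\Omega(w')$ paths to $T$. The paper handles this with a potential function $\phi(\cset)$ on \emph{clusterings of the whole vertex set} (not of $G\setminus T$), a contracted graph $H=G_{\cset}$, and two potential-decreasing operations ($\partition$ and $\separate$). It first chooses the minimum-potential good clustering, partitions the contracted graph $H$ into $\ell+1$ pieces each with many internal edges relative to its boundary (Theorem~\ref{thm: partition into clusters w many edges inside and small boundary}), and then shows — by a ``if no router exists in $X'_j$, I can further decrease potential, contradiction'' argument (Theorem~\ref{thm: a router in each set}) — that each piece contains a good router. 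This potential-based termination/contradiction mechanism is the heart of the router construction and is absent from your plan.

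\textbf{Stage 2 (tree assembly).} Your incremental routing — route a fresh router to the growing tree at each step, avoiding the interiors of all previously placed clusters and all previously placed routings — would compound losses over $\ell$ iterations, and there is no obvious way to bound the blockage caused by the $\Omega(w\ell)$ vertices already committed. You flag this as the ``technical crux'' but do not resolve it. The paper circumvents it entirely: it sets up a \emph{single} directed node-capacitated max-flow from a source, through auxiliary per-router vertices, to $v_1$ (after contracting every router to a supernode with infinite capacity). Because this single-source flow can be taken acyclic, it induces an ordering of the supernodes. The tree structure is then read off \emph{after the fact}: for each $v_i$, truncate each flow path at the first lower-indexed supernode it hits, and attach $v_i$ as a child of the majority recipient. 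Since all the paths $\pset''$ are already mutually disjoint (except at supernodes) by the flow computation, there is nothing to ``avoid'' in the truncation step. This one-shot computation plus acyclicity-ordering is the key idea that avoids compounding losses and keeps the width loss at $\poly(d,\ell)$ rather than something exponential in $\ell$.

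\textbf{Stage 3.} Your use of Theorem~\ref{thm: grouping} and Theorem~\ref{thm: linkedness from node-well-linkedness} is correct; the paper additionally needs a flow-network argument (Theorem~\ref{thm: choosing subsets of paths1}) to select subsets $\pset'(e)$ so that at most two vertices per grouping tree are used, which is what underwrites the $1/5$-well-linkedness of $U(v)$ (Claim~\ref{claim: weak well-linkedness}). Your invocation of Observation~\ref{obs: well-linkedness properties} alone does not supply this.
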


The proof roughly follows the outline of the proof of the Excluded Grid Theorem of~\cite{CC14}, but it is much simpler (partly because we only provide a non-constructive version here, partly because we stop short of turning the Tree-of-Sets System into a Path-of-Sets System, and partly because some arguments have been simplified). As in~\cite{CC14}, we follow the bottom-up approach, that is, we first build a collection of $\ell$ disjoint \emph{good routers} - clusters whose boundaries are reasonably well-linked, such that each cluster can send a large amount of flow to the terminals. We then ``organize'' these good routers into a \ToS. We note that in~\cite{CC14}, the good routers are first organized into a \ToS, then a subset of the clusters is organized into a Tree-of-Sets System where the corresponding tree is sub-cubic, and finally they are organized into a \PoS. Much of the loss in the width and length parameters of the \PoS occurs in the last two steps. Here we both exploit the fact that we only need to construct a \ToS, and simplify parts of the proof of~\cite{CC14}. We break the proof of Theorem~\ref{thm: build a ToS system} into two steps. The first step, described in Section~\ref{subsec: find the routers}, constructs the good routers, and the second step, that appears in Section~\ref{subsec: build the ToS System}, organizes them into a \ToS.

Throughout the proof, we denote $n=|V(G)|$.

\subsection{Building the Routers}\label{subsec: find the routers}


The main objects studied in this section are good routers, that are defined below.
We use a parameter $w'=\frac{cwd^8\ell^3}{384}$, and we assume that $c$ is a large enough even integer, so $w'>w$, and $w'$ is even. Notice that that $k\geq 384w'\ell^2$. We set $\alpha=\frac{1}{2^{12}\cdot \ell}$.

\begin{definition}
A subset $S\subseteq V(G)$ of vertices is called a \emph{good router}, iff (i) $|S\cap \tset|\leq |\tset|/2$; (ii) there is a set of $w'/2$ edge-disjoint paths connecting the vertices of $\Gamma(S)$ to the vertices of $T\setminus S$ in $G$, and (iii) $\Gamma(S)$ is $(w',\alpha)$-well-linked in $G[S]$.
\end{definition}

The main result of this section is the following theorem.

\begin{theorem}\label{thm: build good routers}
Given a graph $G$ and a set $\tset$ of terminals as in Theorem~\ref{thm: build a ToS system}, there is a collection $\sset=\set{S_1,\ldots,S_{\ell}}$ of disjoint good routers in $G$, such that $|T\setminus(\bigcup_{i=1}^{\ell}S_i)|\geq |T|/(\ell+1)$.
\end{theorem}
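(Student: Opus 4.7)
The plan is to construct the $\ell$ disjoint good routers in two stages. In the first stage, I would find $\ell$ pairwise disjoint clusters $C_1, \ldots, C_\ell \subseteq V(G)$, each containing roughly $k/(\ell+1)$ terminals, with boundaries of controlled size, so that at least $k/(\ell+1)$ terminals remain outside their union. In the second stage, I refine each $C_i$ via a well-linked decomposition into a sub-cluster $S_i \subseteq C_i$ satisfying all three properties of a good router.

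For the first stage, I partition $T$ into $\ell+1$ groups $T_0, T_1, \ldots, T_\ell$ of size $\lfloor k/(\ell+1) \rfloor$ each. For each $i \geq 1$, I compute a minimum edge-cut $(C_i, V(G)\setminus C_i)$ in $G$ that separates $T_i$ from $T_0$, with $T_i \subseteq C_i$ and $T_0 \subseteq V(G) \setminus C_i$. Since every terminal has degree at most $d$, the trivial cut $(T_i, V(G) \setminus T_i)$ already separates $T_i$ from $T_0$ with at most $d|T_i| \leq dk/(\ell+1)$ edges, so $|\out(C_i)| \leq dk/(\ell+1)$. Using standard uncrossing arguments, which rely on the submodularity of the edge-cut function, I can choose the sets $C_1, \ldots, C_\ell$ to be pairwise disjoint while preserving the bound on boundaries and the property $T_0 \subseteq V(G) \setminus \bigcup_i C_i$.

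For the second stage, I apply a well-linked decomposition to each $C_i$ targeting $(w', \alpha)$-well-linkedness of the interface $\Gamma(C_i) \cup (C_i \cap T)$. The procedure iteratively identifies $(w', \alpha)$-violating partitions $(X, Y)$ of the current cluster and recurses on the side containing more terminals. A budget argument tracking the interface size shows that the total terminal loss is controlled by $\alpha \cdot |\Gamma(C_i) \cup (C_i \cap T)|$ times a logarithmic factor. Using the parameter choices $k \geq 384 w' \ell^2$ and $\alpha = 1/(2^{12}\ell)$, the resulting sub-cluster $S_i$ satisfies $|S_i \cap T| \geq |T_i|/2 \geq k/(2(\ell+1)) \gg w'$ and has $\Gamma(S_i)$ that is $(w', \alpha)$-well-linked in $G[S_i]$.

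Finally, to verify the good-router properties: (i) $|S_i \cap T| \leq |T|/(\ell+1) \leq |T|/2$ is immediate; (iii) follows directly from the well-linked decomposition. For (ii), I use the $1$-well-linkedness of $T$ in $G$: since $|S_i \cap T|, |T \setminus S_i| \geq w'/2$, for any choice of $w'/2$-subsets $T' \subseteq S_i \cap T$ and $T'' \subseteq T \setminus S_i$ there exist $w'/2$ edge-disjoint paths from $T'$ to $T''$ in $G$ (by integrality of the flow guaranteed by well-linkedness); truncating each such path at the last vertex of $\Gamma(S_i)$ it visits (any path from inside $S_i$ to outside $S_i$ must cross $\Gamma(S_i)$) yields the required $w'/2$ edge-disjoint paths from $\Gamma(S_i)$ to $T \setminus S_i$. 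The main technical obstacle will be the first-stage uncrossing argument to obtain pairwise disjoint minimum cuts simultaneously while preserving the boundary bounds, together with the careful budget accounting in the well-linked decomposition to ensure that at least half of the terminals of each $T_i$ survive in the corresponding $S_i$.
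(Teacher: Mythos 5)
Your proposal takes a genuinely different route from the paper. The paper works in a \emph{contracted graph} obtained from a minimum-potential ``good clustering,'' partitions that contracted graph into $\ell+1$ vertex sets each with many internal edges via an iterated degree-balanced partition (Lemma~\ref{lemma: partition into two}, Theorem~\ref{thm: partition into clusters w many edges inside and small boundary}), and then extracts a good router inside each piece using a potential-decrease argument driven by the two operations $\partition$ and $\separate$. Your proposal instead tries to place one terminal group $T_i$ in each router directly, via minimum cuts separating $T_i$ from $T_0$ followed by uncrossing, and then a well-linked decomposition. Your verification of property~(ii) by truncating $T\!\cap\!S_i \leadsto T\!\setminus\!S_i$ flow paths at $\Gamma(S_i)$ is a nice alternative to the paper's $\separate$-based argument and is correct as stated.

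However, Stage~1 has a real gap: uncrossing minimum cuts from the sets $T_1,\ldots,T_\ell$ against a common sink $T_0$ yields a \emph{laminar} family, not a disjoint one. When a cut $C_i$ contains some terminal of $T_j$ in $C_i\cap C_j$, submodularity forces you to uncross to $C_i\cup C_j$ and $C_i\cap C_j$ rather than $C_i\setminus C_j$ and $C_j\setminus C_i$, and you cannot preserve $T_j\subseteq C_j$ while also making the sets disjoint. Concretely, suppose $G$ has a bottleneck cut of size roughly $k/(\ell+1)$ that separates $T_0$ from $T_1\cup\cdots\cup T_\ell$, and every other cut separating any $T_i$ from $T_0$ is strictly larger (e.g.\ because each terminal has degree $d>1$, so the ``local'' cut around $T_i$ has size about $dk/(\ell+1)$). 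Such a graph can easily have $T$ being $1$-well-linked. Then the unique minimum $T_i$--$T_0$ cut is the bottleneck for every $i$, so $C_1=\cdots=C_\ell$, and no amount of uncrossing produces $\ell$ disjoint clusters. This is precisely why the paper abandons terminal-group separations and instead partitions the contracted graph by edge/degree balance, an argument that is indifferent to where the terminals lie. A secondary, smaller issue: even if you had disjoint $C_i$'s, for $\ell=2$ a single $C_i$ could contain up to $2k/3>k/2$ terminals, so property~(i) would need to be re-established inside the well-linked decomposition rather than inherited from $C_i$.
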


The rest of this section is dedicated to proving Theorem~\ref{thm: build good routers}. As in previous work, we use good clusterings and corresponding contracted graphs, but they are defined slightly differently.

\subsubsection{Vertex Clusterings and Contracted Graphs} 
We say that a cluster $C\subseteq V(G)$ is \emph{large} if $|\out(C)|\geq w'$, and we say that it is small otherwise.

\begin{definition}
A subset $C\subseteq V(G)$ of vertices is called a \emph{good cluster} iff $G[C]$ is connected; $C$ is a small cluster, and $|C\cap T|\leq |T|/2$. 
\end{definition}

A partition $\cset$ of the vertices of $G$ into disjoint subsets is called a \emph{clustering of $G$}, iff for each $C\in \cset$, $G[C]$ is connected. We say that a clustering $\cset$ of $G$ is \emph{good} iff every cluster of $\cset$ is a good cluster.

We note that, since $w'\geq w\geq d$ and $|T|>2$, a cluster $C$ consisting of a single vertex is a good cluster, and therefore there is a good clustering $\cset$ of $V(G)$, where each cluster contains a single vertex of $G$. Given any good clustering $\cset$ of $V(G)$, the corresponding \emph{contracted graph} $G_{\cset}$ is obtained from $G$ by contracting the vertices of every cluster $C\in \cset$ into a supernode $v_C$, and deleting all loops (the parallel edges stay in the graph). We need the following claim.

\begin{claim}\label{claim: m is large}
If $G_{\cset}$ is a contracted graph corresponding to some good clustering $\cset$, then $|E(G_{\cset})|\geq k/3$.
\end{claim}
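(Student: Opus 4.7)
The plan is to exploit the $1$-well-linkedness of $\tset$ together with the fact that every cluster $C\in\cset$ contains at most $k/2$ terminals to produce a partition of $V(G)$ in which both sides contain a constant fraction of $\tset$, and then bound the cut from below.

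The key combinatorial ingredient is the following partition fact: given nonnegative reals $a_1,\ldots,a_m$ summing to $k$ with each $a_i\leq k/2$, there is a bipartition of the indices into $(I_1,I_2)$ with $\sum_{i\in I_1}a_i,\sum_{i\in I_2}a_i\geq k/3$. This is shown by case analysis: if some $a_{i_0}\geq k/3$, place $a_{i_0}$ alone on one side and all other $a_i$ on the other (which sum to $k-a_{i_0}\geq k/2\geq k/3$); otherwise all $a_i<k/3$, and a greedy ``add to the lighter side'' assignment yields two sides whose totals differ by at most the last item added to the heavier side, hence by less than $k/3$, so the lighter side has total at least $(k-k/3)/2\geq k/3$.

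I will apply this fact with $a_C=|C\cap \tset|$ for $C\in\cset$. By goodness of the clustering $|C\cap \tset|\leq k/2$ for every $C\in\cset$, and $\sum_{C\in\cset}|C\cap \tset|=k$, so the partition fact yields a bipartition $(\cset_1,\cset_2)$ of $\cset$. Setting $A=\bigcup_{C\in\cset_1}C$ and $B=\bigcup_{C\in\cset_2}C$, this gives a partition $(A,B)$ of $V(G)$ with $|A\cap \tset|,|B\cap \tset|\geq k/3$. Since $\tset$ is $1$-well-linked in $G$, the contrapositive of Observation~\ref{obs: well-linkedness alt def} yields $|E_G(A,B)|\geq \min\{|A\cap \tset|,|B\cap \tset|\}\geq k/3$.

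Finally, because $A$ and $B$ are each a union of entire clusters of $\cset$, every edge in $E_G(A,B)$ has its two endpoints in different clusters of $\cset$, and therefore corresponds to a non-loop edge of the contracted graph $G_{\cset}$. Consequently $|E(G_{\cset})|\geq |E_G(A,B)|\geq k/3$, as claimed. No serious obstacle is anticipated; the only mildly delicate point is the partition fact, which is why the constant in the claim is $1/3$ rather than $1/2$.
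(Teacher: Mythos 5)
Your proof is correct and follows essentially the same route as the paper's: both split into the cases of whether some cluster holds at least $k/3$ terminals (isolate it) or all are small (greedy bipartition), and both then invoke $1$-well-linkedness of $\tset$ to lower-bound the cut $E_G(A,B)$, whose edges all survive into $G_{\cset}$. The only cosmetic difference is that you abstract the two cases into a standalone "partition fact," while the paper inlines them.
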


\begin{proof}
If there is some cluster $C\in \cset$, with $|C\cap T|\geq k/3$, then, since the terminals are $1$-well-linked in $G$, and $|C\cap T|\leq k/2$, there must be at least $k/3$ edges in $\out(C)$, and hence in $E(G_{\cset})$.

Assume now that for every cluster $C\in \cset$, $|C\cap T|< k/3$. Then we can find a partition $(\aset,\bset)$ of $\cset$, so that $\sum_{C\in \aset}|T\cap C|,\sum_{C\in \bset}|T\cap C|\geq k/3$, using a simple greedy algorithm: start with $\aset=\bset=\emptyset$, and process the clusters $C\in \cset$ one-by-one. When $C$ is processed, we add it to $\aset$ if $\sum_{C\in \aset}|T\cap C|<\sum_{C\in \bset}|T\cap C|$, and we add it to $\bset$ otherwise. Since $|C\cap T|< k/3$ for all $C\in \cset$, at the end of this procedure, $\left |\sum_{C\in \aset}|T\cap C|-\sum_{C\in \bset}|T\cap C|\right |\leq k/3$, and so $\sum_{C\in \aset}|T\cap C|,\sum_{C\in \bset}|T\cap C|\geq k/3$. We now let $A=\bigcup_{C\in \cset}C$ and $B=\bigcup_{C\in \cset}C$. From the well-linkedness of the terminals, $|E_G(A,B)|\geq k/3$, and all edges of $E_G(A,B)$ belong to $G_{\cset}$.
\end{proof}

Given {\bf any} clustering $\cset$ of $G$, we define a potential $\phi(\cset)$ for this clustering, somewhat similarly to~\cite{ChuzhoyL12,CC14}. The idea is that $\phi(\cset)$ will serve as a reasonably tight bound on the number of edges connecting the different clusters in $\cset$. At the same time, it is designed in a way that allows us to perform a number of useful operations on the current clustering, without increasing the potential.

We start by defining potentials $\rho(z)$ for integers $z>0$. Let $n_0=w'$, and for $i>0$, let $n_i=\left(\frac 3 2\right )^i n_0$.  We set $\rho(n_0)=4\alpha$, and for $i>0$, $\rho(n_i)=4\frac{\alpha w'}{n_i}+\rho(n_{i-1})$. Notice
that for all $i$, $\rho(n_i)\leq 12\alpha$.  Next, we partition all integers $z\geq n_0$ into sets $Z_1,Z_2,\ldots$, where set $Z_i$ contains all
integers $z$ with $n_{i-1}\leq z< n_i$. For $z\in Z_i$, we define
$\rho(z)=\rho(n_{i-1})$, and for all $z<w'$, we define $\rho(z)=0$. Clearly, for all $z$, $\rho(z)\leq 12\alpha$.

Given a clustering $\cset$ of $G$, we define a potential $\phi(e)$ for every edge of $G$ with respect to $\cset$, as follows. Let $e=(u,v)$, and let $C,C'\in \cset$ such that $v\in C$ and $u\in C'$. If $C=C'$, then $\phi(e)=0$. Otherwise, let $|\out(C)|=z$ and $|\out(C')|=z'$. Then we set $\phi(e)=1+\rho(z)+\rho(z')$.
The following observation is now immediate.

\begin{observation}
 For any partition $\cset$ of the vertices of $G$ and for any edge $e=(u,v)\in E(G)$, if $u,v$ belong to the same cluster of $\cset$, then $\phi(e)=0$. Otherwise,  $1\leq \phi(e)\leq 1.1$.   
\end{observation}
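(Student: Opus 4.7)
The plan is to verify the three distinct claims in the observation in order, treating the same-cluster case as immediate from the definition and then bounding $\phi(e)$ from above and below in the cross-cluster case. The first claim, that $\phi(e)=0$ when $u$ and $v$ lie in the same cluster of $\cset$, is literally the first sub-case of the definition of $\phi(e)$, so there is nothing to do.

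For the cross-cluster case, write $\phi(e)=1+\rho(z)+\rho(z')$ with $z=|\out(C)|$, $z'=|\out(C')|$. The lower bound $\phi(e)\geq 1$ is immediate once we note that $\rho(\cdot)$ takes only non-negative values: by inspection of its definition, $\rho(z)=0$ for $z<w'$, while for $z\geq w'$ the value $\rho(n_i)$ is a sum of non-negative terms, and $\rho(z)=\rho(n_{i-1})$ for $z\in Z_i$. Hence $\rho(z)+\rho(z')\geq 0$, giving $\phi(e)\geq 1$.

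For the upper bound I will first establish the auxiliary bound $\rho(z)\leq 12\alpha$ already asserted in the text, so that it can be used transparently. Expanding the recursion $\rho(n_i)=4\alpha w'/n_i+\rho(n_{i-1})$ with base case $\rho(n_0)=4\alpha$ and using $n_j=(3/2)^jw'$ gives
\[
\rho(n_i)=4\alpha+\sum_{j=1}^{i}\frac{4\alpha w'}{n_j}
=4\alpha+4\alpha\sum_{j=1}^{i}\bigl(\tfrac{2}{3}\bigr)^{j}
\leq 4\alpha+4\alpha\cdot 2 = 12\alpha,
\]
and since $\rho(z)=\rho(n_{i-1})$ for $z\in Z_i$ (and $\rho(z)=0$ for $z<w'$), we conclude $\rho(z)\leq 12\alpha$ for every positive integer $z$. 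Therefore $\phi(e)\leq 1+24\alpha$.

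Finally, it remains to verify $24\alpha\leq 0.1$ under the choice $\alpha=1/(2^{12}\ell)$. Since $\ell\geq 2$, we have $\alpha\leq 2^{-13}$, hence $24\alpha\leq 24/8192<0.003<0.1$, so $\phi(e)\leq 1.1$. There is no real obstacle here: the observation is a bookkeeping check on the definition of $\phi$, and the only non-trivial ingredient is the geometric-series estimate for $\rho$, which is entirely routine.
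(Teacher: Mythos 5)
Your proof is correct and is the natural verification the paper has in mind when it calls the observation ``immediate'': the same-cluster case is definitional, the lower bound follows from non-negativity of $\rho$, and the upper bound follows from the geometric-series estimate $\rho(z)\leq 12\alpha$ (which the paper itself records in passing) together with $\alpha=1/(2^{12}\ell)$. The paper omits the calculation, and you have filled it in accurately.
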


Finally, we define a potential $\phi(\cset)$ of any clustering $\cset$ of $G$, as the sum of all potentials of the edges $e\in E$.

Assume that we are given any clustering $\cset$ of $G$. We define two operations,  each of which produces a valid new  clustering of $G$, whose potential is strictly smaller than $\phi(\cset)$. We note that this part is almost identical to what appeared in~\cite{ChuzhoyL12,CC14}.

\paragraph{Action 1: Partitioning a large cluster.}
Suppose we are given a large cluster $C$, and let $\Gamma=\Gamma_G(C)$. Recall that a partition $(X,Y)$ of $C$ is a $(w',\alpha)$-violating partition with respect to $\Gamma$, iff $|E(X,Y)|<\alpha\cdot \min\set{|X\cap\Gamma|,|Y\cap \Gamma|,w'}$.

Suppose we are given any clustering $\cset$ of $G$, a large cluster $C\in \cset$, and a $(w',\alpha)$-violating partition $(X,Y)$ of $C$. 
In order to perform this operation, we remove $C$ from $\cset$. For every connected component $R$ of $G[X]$, we add $V(R)$ to $\cset$, and we do the same with every connected component of $G[Y]$, to obtain a new clustering $\cset'$.
We denote this operation by $\partition(C,X,Y)$.

\begin{claim}\label{claim: bound on potential for partition}
Let $\cset'$ be the outcome of operation $\partition(C,X,Y)$. Then $\phi(\cset')<\phi(\cset)-\frac{1}{10n}$. 
\end{claim}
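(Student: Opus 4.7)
The plan is to compute $\Delta\phi := \phi(\cset') - \phi(\cset)$ directly and show $\Delta\phi < -1/(10n)$. I will first observe that since $\rho$ is monotone non-decreasing and the out-degree of any connected component of $G[X]$ is at most $|\out(X)|$, further refining $X$ and $Y$ into their connected components in $\cset'$ can only decrease $\phi(\cset')$; it therefore suffices to upper-bound $\Delta\phi$ treating $X$ and $Y$ each as a single cluster. Writing $a = |X \cap \Gamma(C)|$, $b = |Y \cap \Gamma(C)|$, $m = |E(X,Y)|$, $\alpha_X = |E(X, V\setminus C)|$, $\alpha_Y = |E(Y, V\setminus C)|$, $z = |\out(C)| = \alpha_X + \alpha_Y$, $z_X = \alpha_X + m$, and $z_Y = \alpha_Y + m$, only edges touching $C$ can change their contribution: the $m$ edges of $E(X,Y)$ jump from $0$ to $1+\rho(z_X)+\rho(z_Y)$ each, while each of the $\alpha_X$ (resp.\ $\alpha_Y$) edges leaving $C$ has its $C$-side $\rho$-contribution change from $\rho(z)$ to $\rho(z_X)$ (resp.\ $\rho(z_Y)$). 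This yields
\[\Delta\phi \;\leq\; m\bigl(1 + \rho(z_X) + \rho(z_Y)\bigr) + \alpha_X\bigl(\rho(z_X) - \rho(z)\bigr) + \alpha_Y\bigl(\rho(z_Y) - \rho(z)\bigr).\]

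I will then record the structural facts driving the estimate. By the $(w',\alpha)$-violating hypothesis $m < \alpha \min(a,b,w')$, and since $G[C]$ is connected we have $m \geq 1$, so $a, b > 1/\alpha$. Each vertex of $X \cap \Gamma(C)$ contributes a distinct edge to $\alpha_X$ (by definition of $\Gamma$), so $\alpha_X \geq a$ and analogously $\alpha_Y \geq b$. After possibly swapping $X$ and $Y$, assume $\alpha_X \leq \alpha_Y$; then $z_X \leq (z + 2m)/2 \leq 2z/3$, and since $z \in Z_i = [n_{i-1}, n_i)$ with $n_i = (3/2) n_{i-1}$, this places $z_X < n_{i-1}$, in a strictly lower bucket than $z$. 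Moreover $\alpha_X \geq a > m/\alpha > m$, so $z_Y = z - \alpha_X + m < z$ and therefore $\rho(z_Y) \leq \rho(z)$.

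The analysis then splits on whether $X$ is still a large cluster. In Case A ($z_X < w'$, so $\rho(z_X) = 0$), dropping the non-positive $\alpha_Y$-term and using $\rho(z) \geq \rho(n_0) = 4\alpha$ together with $m < \alpha a$ gives $\Delta\phi \leq m(1+12\alpha) - 4\alpha a = -\Omega(\alpha a)$, which is $-\Omega(1)$ since $\alpha a > 1$. In Case B ($z_X \geq w'$), $z_X \in Z_{j_X}$ for some $1 \leq j_X \leq i-1$, and the telescoping identity $\rho(z) - \rho(z_X) = \rho(n_{i-1}) - \rho(n_{j_X-1}) = 4\alpha \sum_{l=j_X}^{i-1} w'/n_l$ combines with $\alpha_X \geq z_X - m \geq n_{j_X-1}/2$ (valid since $m < \alpha w' < n_{j_X-1}/2$ for $\alpha$ small and $j_X \geq 1$) to give an $X$-side saving of at least
\[2\alpha\, n_{j_X-1} \sum_{l=j_X}^{i-1} \frac{w'}{n_l} \;=\; 4\alpha w'\bigl(1 - (2/3)^{i-j_X}\bigr) \;\geq\; \tfrac{4}{3}\alpha w',\]
where the key cancellation is $n_{j_X-1}\cdot w'/n_l = w'(2/3)^{l-j_X+1}$, making the sum telescope to a bucket-independent quantity. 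The loss $m(1 + 24\alpha) < (1+24\alpha)\alpha w'$ is strictly below $\tfrac{4}{3}\alpha w'$ for $\alpha \ll 1$, so $\Delta\phi \leq -\Omega(\alpha w')$; choosing the universal constant $c$ in $w' = cwd^8\ell^3/384$ large enough that $\alpha w' \geq 1$ then yields $\Delta\phi < -1/(10n)$.

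The main obstacle is the bucket-independent bound in Case B. Using only the last term of the telescoping sum would give a saving of order $\alpha w'\cdot \alpha_X/n_{i-1}$, which degenerates when $z_X$ lies in a bucket much below $n_{i-1}$; recovering a bucket-independent saving requires simultaneously exploiting the geometric decay of the $\rho$-increments and the matching geometric growth of the lower bound on $\alpha_X$ as $j_X$ increases, whose product cancels the bucket dependence and dominates the $(1+O(\alpha))\alpha w'$ loss by a constant factor.
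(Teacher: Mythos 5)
Your proof is correct and takes essentially the same approach as the paper's: estimate $\Delta\phi$ by treating $X,Y$ as single clusters, split on whether $z_X<w'$, and show that the $\rho$-drop on the $\alpha_X$ edges of $\out(X)\cap\out(C)$ beats the $(1+O(\alpha))m$ gain from the cut edges. The only cosmetic difference is in the second case, where you telescope the full sum of $\rho$-increments against the bound $\alpha_X\geq n_{j_X-1}/2$, while the paper keeps the single top increment $4\alpha w'/n_{j_X}$ against the sharper bound $\alpha_X\geq z_X/2$; both routes land on the same $\tfrac{4}{3}\alpha w'$ saving.
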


\begin{proof}
Assume without loss of generality that $|\out(X)|\leq |\out(Y)|$.  Let
$z=|\out(C)|$, $z_1=|\out(X)|$, $z_2=|\out(Y)|$, so $z_1=|\out(X)\cap \out(C)|+|E(X,Y)|< |\out(C)|/2+\alpha |\Gamma(C)|\leq 
2z/3$. Assume that $z\in Z_i$. Then either $z_1\in Z_{i'}$ for $i'\leq
i-1$, or $z_1<n_0$. The potential of the edges in $\out(Y)\cap
\out(C)$ does not increase. The only other changes in the potential
are the following: the potential of each edge in $\out(X)\cap \out(C)$
decreases by at least $\rho(z)-\rho(z_1)$, and the potential of every edge in
$E(X,Y)$ increases from $0$ to at most $1.1$. We consider two cases.

First, if $z_1<n_0$, then $\rho(z_1)=0$, while $\rho(z)\geq 4\alpha$. So the
potential of each edge in $\out(X)\cap \out(C)$ decreases by at least
$4\alpha$, and the overall decrease in potential due to these edges is
at least $4\alpha|\out(X)\cap \out(C)|$. The total increase in
potential due to the edges in $E(X,Y)$ is bounded by
$1.1|E(X,Y)|<1.1\alpha|\Gamma\cap X|\leq 1.1\alpha|\out(X)\cap \out(C)|$,
so the overall potential decreases by at least $2\alpha|\out(X)\cap
\out(C)|>\frac{1}{10n}$.

The second case is when $z_1\geq n_0$. Assume that $z_1\in
Z_{i'}$. Then $n_{i'}\leq 3z_1/2$, and, since $i'\leq i-1$ must hold,
$\rho(z)\geq \frac{4\alpha w'}{n_{i'}}+\rho(n_{i'-1})=\frac{4\alpha
  w'}{n_{i'}}+\rho(z_1)\geq \frac{8\alpha w'}{3z_1}+\rho(z_1)$. So
the potential of each edge in $\out(X)\cap \out(C)$ decreases by at
least $\frac{8\alpha w'}{3z_1}$, and the total decrease in potential
due to these edges is at least $\frac{8\alpha
  w'}{3z_1}\cdot|\out(X)\cap \out(C)|\geq \frac{4\alpha w'}{3}$,
since $|\out(X)\cap \out(C)|\geq z_1/2$. The total increase in the
potential due to the edges in $E(X,Y)$ is bounded by
$1.1|E(X,Y)|<1.1\alpha w'$, since $|E(X,Y)|\leq \alpha
w'$. Overall, the total potential decreases by at least
$0.2\alpha w'>\frac{1}{10n}$.
\end{proof}

\paragraph{Action 2: Separating a large cluster.}
Let $\cset$ be any clustering of $G$, and let $C\in \cset$ be a large cluster in $\cset$. Assume further that we are given a partition $(A,B)$ of $V(G)$, with $C\sse A$, $|\tset\cap B|\geq |\tset|/2$, and $|E_G(A,B)|< w'/2$. We perform the following operation, that we denote by $\separate(C,A)$.

Consider any cluster $S\in \cset$. If $S\setminus A\neq \emptyset$, and $|\out(S\setminus A)|>|\out(S)|$, then we modify $A$ by removing all vertices of $S$ from it. Notice that in this case, the number of edges in $E(S)$ that originally contributed to the cut $(A,B)$,  $|E(S\cap A,S\cap  B)|>|\out(S)\cap E(A)|$ must hold, so $|\out(A)|$ only goes down as a result of this modification. We assume from now on that if $|S\setminus A|\neq \emptyset$, then $|\out(S\setminus A)|\leq |\out(S)|$. In particular, if $S$ is a small cluster, and $S\setminus A\neq\emptyset$, then $S\setminus A$ is also a small cluster.

We construct a new clustering $\cset'$ of $G$ as follows. First, for every connected component $R$ of $G[A]$, we add $V[R]$ to $\cset$. Notice that all these clusters are small, as $|\out(A)|=|E(A,B)|<w'/2$. Next, for every cluster $S\in\cset$, such that $S\setminus A\neq \emptyset$, for every connected component $R$ of $G[S\setminus A]$, we add $V[R]$ to $\cset'$.  
Notice that for every cluster $S\in \cset'$, either $S\subseteq A$, or there is some cluster $S'$ in the original partition $\cset$ with $S\subseteq S'$ and $|\out(S')|\geq |\out(S)|$. In particular, 
if $S\in\cset'$ is a large cluster, then $S\not\subseteq A$, and the cluster $S'\in \cset$ containing $S$ is also a large cluster. It is easy to see that $\cset'$ is a valid clustering of $G$.

\begin{claim}\label{claim: bound on potential for separation}
 Let $\cset'$ be the outcome of operation  $\separate(C,A)$. Then $\phi(\cset')\leq \phi(\cset)-1$. 
 \end{claim}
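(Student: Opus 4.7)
The plan is to track $\phi(\cset')-\phi(\cset)$ edge by edge, partitioning $E(G)$ according to whether each endpoint lies in $A$ or in $B$. For $e=(u,v)\in E_G(A)$, the edge $e$ itself lies in $G[A]$, so $u$ and $v$ end up in the same component of $G[A]$ in $\cset'$ and the new potential of $e$ is $0$; this contributes a saving equal to the old $\phi(e)$, which is $0$ if $u,v$ shared a cluster of $\cset$ and at least $1$ otherwise. For $e=(u,v)\in E_G(B)$, if $u,v$ were in the same old cluster $S$ the edge $e$ itself lies inside $G[S\setminus A]$, so $u,v$ stay in a common connected component of $G[S\setminus A]$ and the potential of $e$ stays $0$; if $u,v$ were in different old clusters $S_u,S_v$, then the new containers $R_u\subseteq S_u\setminus A$, $R_v\subseteq S_v\setminus A$ satisfy $|\out(R_\star)|\leq|\out(S_\star\setminus A)|\leq|\out(S_\star)|$, and by the monotonicity of $\rho$ the potential of $e$ can only decrease. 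Thus edges contained entirely in $A$ or entirely in $B$ never increase the potential.

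All potential increase therefore comes from edges $e=(u,v)\in E_G(A,B)$ whose endpoints previously shared a cluster $S\in\cset$ (so that $\phi(e)=0$). For each such edge, the new cluster $R_u$ containing $u$ is a component of $G[A]$ and satisfies $|\out(R_u)|\leq|E(A,B)|<w'/2<n_0$, giving $\rho(|\out(R_u)|)=0$, while $\rho(|\out(R_v)|)\leq 12\alpha$; hence the new potential of $e$ is at most $1+12\alpha\leq 1.1$. Crucially, edges of $\out(C)\cap E_G(A,B)$ cannot fall in this category, because $C\subseteq A$ forces the $B$-endpoint $v$ to lie in a different old cluster from the $C$-endpoint $u$. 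Consequently the aggregate increase is at most $1.1(|E(A,B)|-|\out(C)\cap E(A,B)|)$.

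The savings come from destroying the large cluster $C$. Writing $a=|\out(C)\cap E_G(A)|$ and $b=|\out(C)\cap E_G(A,B)|$, we have $a+b\geq w'$ and $b<w'/2$, whence $a\geq w'-b$; every edge of $\out(C)\cap E_G(A)$ saves at least $1+\rho(|\out(C)|)\geq 1+4\alpha$, while every edge of $\out(C)\cap E_G(A,B)$ saves at least $\rho(|\out(C)|)\geq 4\alpha$ (by the same vanishing of $\rho(|\out(R_u)|)$ and monotonicity). Combining,
\[
\phi(\cset)-\phi(\cset')\;\geq\;(1+4\alpha)a+4\alpha b-1.1\bigl(|E(A,B)|-b\bigr)\;\geq\;(0.45+4\alpha)w'+0.1b,
\]
which comfortably exceeds $1$ for our (very large) choice of $w'$. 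The one delicate point throughout the argument is the inequality $|\out(R)|\leq|\out(S)|$ for a component $R$ of $G[S\setminus A]$; this is precisely what the preliminary refinement of $A$ in the description of $\separate(C,A)$ is designed to guarantee, by re-absorbing any cluster $S\in\cset$ for which $|\out(S\setminus A)|$ would exceed $|\out(S)|$.
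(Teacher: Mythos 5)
Your proof is correct and follows essentially the same strategy as the paper's: both arguments observe that any edge whose potential can increase must cross the cut $(A,B)$, that such an edge has new potential at most $1.1$ (since the $A$-side component of $G[A]$ is small and contributes $\rho=0$), and that the at-least-$w'$ edges of $\out(C)$ drop their potential substantially because $C$ is destroyed; you merely bookkeep the savings from $\out(C)\cap E_G(A)$ versus $\out(C)\cap E_G(A,B)$ separately, where the paper folds them into a single ``$\out(C)$ contributed $\geq 1$ each'' count. Both yield a net decrease of order $w'$, far exceeding $1$.
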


 \begin{proof}

We can bound the changes in the potential as follows:

\begin{itemize}
\item Every edge in $\out(A)$ contributes at most $1.1$ to the potential of $\cset'$, and there are at most $ \frac{w'-1} 2$ such edges. 

\item Every edge in $\out(C)$ contributed at least $1$ to the
  potential of $\cset'$, and there are at least $w'$ such edges,
  since $C$ is a large cluster.
\end{itemize}

For every other edge $e$, the potential of $e$ does not increase. Indeed, let $e=(u,v)$, where $u\in S_1$, $v\in S_2$, with $S_1,S_2\in \cset'$, and $S_1,S_2\not\subseteq A$. Then there are clusters $S_1',S_2'\in \cset$, with $S_1\subseteq S_1'$ and $S_2\subseteq S_2'$. Notice that $S_1'\neq S_2'$, since $S_1$ and $S_2$ correspond to connected components of $S_1'$ and $S_2'$, respectively, and so no edge can connect them. From our construction of $\cset'$, $|\out(S_1)|\leq |\out(S_1')|$ and $|\out(S_2)|\leq |\out(S_2')|$, so the potential of $e$ cannot increase.
Therefore, the total decrease in the potential is at least $w'-\frac{1.1(w'-1)}2\geq 1$.
\end{proof}

Among all good clusterings $\cset$ of $G$, we select one minimizing $\phi(\cset)$. We denote $H=G_{\cset}$, and $m=|E(H)|$. From Claim~\ref{claim: m is large}, $m\geq k/3$. We now show how to find the family $\sset$ of $\ell$ disjoint good routers. This is done in two steps. First, we compute a collection $X_1,\ldots,X_{\ell+1}$ of disjoint subsets of vertices in $H$, such that for each $1\leq j\leq \ell+1$, $|E_H(X_j)|\geq \frac{|\out_H(X_j)|}{128\ell}$. For each $1\leq j\leq \ell+1$, let $X'_j\subseteq V(G)$ be obtained by un-contracting all clusters corresponding to the vertices of $X_j$, so $X'_j=\bigcup_{v_C\in X_j}C$. We then find, for each $1\leq j\leq \ell+1$ with $|T\cap X'_j|\leq |T|/2$, a good router $S_j\subseteq X_j'$.

\subsubsection{Initial Partition of $V(H)$}

In this section the goal is to prove the following theorem.

\begin{theorem}\label{thm: partition into clusters w many edges inside and small boundary}
There is a collection $X_1,\ldots,X_{\ell+1}$ of disjoint subsets of vertices of $H$, such that for each $1\leq j\leq \ell+1$, $|E_H(X_j)|\geq \frac{|\out_H(X_j)|}{128\ell}$.
\end{theorem}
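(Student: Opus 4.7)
The plan is to construct the $\ell+1$ subsets iteratively via region-growing in $H$. I will maintain a shrinking set $U \subseteq V(H)$ of ``available'' vertices, initialized to $V(H)$. At each iteration $j=1,\ldots,\ell+1$, I will find a subset $X_j \subseteq U_{j-1}$ via ball-growing in the induced subgraph $H[U_{j-1}]$, add it to the output, and set $U_j = U_{j-1}\setminus X_j$. The existence of a suitable ball at each step follows from the standard region-growing argument: starting from a single edge and growing by BFS radius, as long as the stopping condition $|\out_{H[U_{j-1}]}(B_r)| \le \eta|E_H(B_r)|$ fails with $\eta = 1/(256\ell)$, each boundary edge becomes internal at the next radius, so $|E_H(B_r)|$ grows by a factor at least $(1+\eta)$. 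Since the total number of edges is bounded by $m$, the condition must be met within $O(\ell \log m)$ steps.

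The crucial point is that the density condition we actually need, $|E_H(X_j)| \ge |\out_H(X_j)|/(128\ell)$, involves the boundary in the full graph $H$, not in the residual $H[U_{j-1}]$. Writing $|\out_H(X_j)| = |\out_{H[U_{j-1}]}(X_j)| + \sum_{i<j}|E(X_i,X_j)|$, the first term is at most $\eta|E_H(X_j)|$ by construction. For the second term, I will use the fact that $X_j \subseteq U_{j-1} \subseteq U_i$ for every $i<j$, hence each edge of $E(X_i,X_j)$ lies in $\out_{H[U_{i-1}]}(X_i)$; summing gives $\sum_{i<j}|E(X_i,X_j)| \le \sum_{i<j} \eta |E_H(X_i)| \le \eta m$ because the $X_i$'s are disjoint. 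Combining, $|\out_H(X_j)| \le \eta|E_H(X_j)| + \eta m$, so the desired ratio bound holds provided $|E_H(X_j)| \ge m/(2^{15}\ell^2)$.

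To guarantee this size lower bound, I will augment the ball-growing with a size requirement: continue growing past the first density-satisfying radius until $|E_H(X_j)| \ge T := m/(2^{15}\ell^2)$. If the density condition fails for every ball of size less than $T$, then from size $1$ up to size $T$ the ball grows geometrically by factor $(1+\eta)$ at every step, so after $O(\ell \log T)$ radius-expansions the size exceeds $T$; at that moment either density holds (and we stop) or growth continues geometrically, which cannot proceed indefinitely since $|E_H(B_r)|$ is capped by the number of edges in $H[U_{j-1}]$. In the worst case the procedure produces a ball $X_j$ with $|E_H(X_j)|$ as large as $|E(H[U_{j-1}])|$, but the density condition is trivially satisfied when the ball captures essentially all of $H[U_{j-1}]$.

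The main obstacle I anticipate is ensuring the residual graph $H[U_{j-1}]$ still contains at least $T$ edges at every one of the $\ell+1$ iterations; this amounts to preventing any single $X_j$ from absorbing almost all of $m$. I plan to handle this by further capping the ball radius during size-augmentation and invoking the bounds $m \ge k/3$ from Claim~\ref{claim: m is large}, the bounded-degree property of $H$ (each cluster in the good clustering $\cset$ is small, so every vertex of $H$ has degree $< w'$), and the parameter assumption $w\ell^5 \le k/(cd^8)$, which together make $T = m/(2^{15}\ell^2)$ comfortably smaller than $m/(\ell+1)$ and leave enough edge budget to complete all $\ell+1$ iterations.
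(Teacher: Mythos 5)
Your approach — iterative region-growing/ball-carving in the residual graph, coupled with the decomposition $|\out_H(X_j)| = |\out_{H[U_{j-1}]}(X_j)| + \sum_{i<j}|E(X_i,X_j)|$ and the observation that the cross-terms can be charged to earlier boundaries — is genuinely different from the paper's, and the accounting of the cross-boundary terms is a nice and correct observation. However, there is a gap you yourself flag but do not actually close, and it is fatal: region-growing gives you no upper bound on how much of the residual graph a single ball absorbs. The standard geometric-growth argument only says ``as long as the stopping condition fails, the ball grows by a factor $(1+\eta)$,'' so the first radius at which $|\out_{H[U_{j-1}]}(B_r)| \le \eta|E_H(B_r)|$ holds can be the one at which $B_r = U_{j-1}$ (where the boundary is empty and the condition is vacuous). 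This is exactly what happens if $H$ is expander-like — and nothing about the contracted graph $H$ here rules that out — in which case $X_1$ absorbs all of $H$, leaving $U_1 = \emptyset$ and making every subsequent iteration impossible. Your proposed fix, ``further capping the ball radius during size-augmentation,'' does not resolve this: stopping the growth early means you have no guarantee that the density condition holds at the capped radius, which is the whole point of region-growing. The parameter estimates you invoke ($m\ge k/3$, $T = m/(2^{15}\ell^2) < m/(\ell+1)$) bound the \emph{sum of the required sizes} but say nothing about how much a single uncapped ball can eat.

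The paper sidesteps this entirely with a balanced-bisection argument (Lemma~\ref{lemma: partition into two}): starting from a degree-balanced min-cut partition, both sides retain $\ge m'/4 - \Delta$ edges, so $\log\tilde\ell$ levels of recursion yield $\tilde\ell > 2\ell$ pieces, each with $\ge m/(64\ell^2)$ internal edges, and a simple averaging argument over $\sum_j|\out(X_j)|\le 2m$ shows at most $\tilde\ell/2$ of them can have $|\out(X_j)| > 4m/\tilde\ell$. The ``balance'' is the key mechanism missing from your proposal — it guarantees no single piece can dominate. If you want to rescue a region-growing variant, you would need to prove that the density condition is achieved at some radius where the ball is small (which is false in general), or switch to a ball-carving scheme combined with an averaging argument showing that \emph{many} of the carved balls simultaneously satisfy the density condition; but then you are essentially reinventing the paper's balance-plus-averaging structure, not improving on it.
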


\begin{proof}
Let $\tl$ be the smallest integral power of $2$ greater than $2\ell$. Our starting point is the following lemma.

\begin{lemma}\label{lemma: partition into two}
Let $G'$ be any graph with maximum vertex degree at most $\Delta$ and $|E(G')|=m'$. Then there is a partition $(A,B)$ of $V(G')$, such that $|E(A)|,|E(B)|\geq \frac{m'}{4}-\Delta$.
\end{lemma}

\begin{proof}
For every vertex $v\in V(G')$, let $d(v)$ denote its degree. For a subset $S\subseteq V(G')$ of vertices, let $D(S)=\sum_{v\in S}d(v)$. Notice that there is a partition $(A,B)$ of $V(G')$, such that $|D(A)-D(B)|\leq \Delta$. Such a partition can be computed by a greedy algorithm: start with $A=B=\emptyset$, and process the vertices of $G'$ one-by-one. When $v$ is processed, add it to $A$ if $D(A)<D(B)$ currently, and add it to $B$ otherwise. It is easy to see that at the end of this procedure, we obtain a partition $(A,B)$ of $V(G')$ with $|D(A)-D(B)|\leq \Delta$.

Among all partitions $(A,B)$ of $V(G')$ with $|D(A)-D(B)|\leq 2\Delta$, choose one minimizing $|E(A,B)|$, and assume w.l.o.g. that $D(A)\geq D(B)$. Notice that $D(A)=2|E(A)|+|E(A,B)|$, and $D(B)=2|E(B)|+|E(A,B)|$. Since $|D(A)-D(B)|\leq 2\Delta$, $|E(A)|-|E(B)|\leq \Delta$.

For every vertex $v\in A$, let $d_1(v)$ be the number of edges incident on $v$ whose other endpoint belongs to $A$, and let $d_2(v)$ be the number of edges incident on $v$ whose other endpoint belongs to $B$. We claim that $d_1(v)\geq d_2(v)$ for every vertex $v\in A$. Indeed, assume otherwise, and consider the partition $(A',B')$ of $V(G')$, where $A'=A\setminus\set{v}$ and $B'=B\cup\set{v}$. It is easy to see that $|E(A',B')|<|E(A,B)|$, while $|D(A')-D(B')|\leq 2d(v)\leq 2\Delta$, a contradiction.

Therefore, $|E(A)|=\half\sum_{v\in A}d_1(v)\geq \half\sum_{v\in A}d_2(v)=\half |E(A,B)|$.

Altogether, $m'=|E(A)|+|E(B)|+|E(A,B)|\leq 4|E(A)|$, and $|E(A)|\geq m'/4$. From the above discussion, $|E(B)|\geq |E(A)|-\Delta\geq m'/4-\Delta$.
\end{proof}

We perform $\log \tl$ iterations, where in iteration $i$ we start with some partition $\xset_{i-1}$ of $V(H)$ into $2^{i-1}$ subsets (the initial set, $\xset_0$ contains a single set $X=V(H)$). An iteration is executed by applying Lemma~\ref{lemma: partition into two} to each graph $H[X]$, for $X\in \xset_i$ in turn, obtaining a partition $(X',X'')$ of $X$, such that $|E_H(X')|,|E_H(X'')|\geq |E_H(X)|/4-w'$ (since maximum vertex degree in $H$ is bounded by $w'$). We then add all resulting clusters $X',X''$ for $X\in \xset_{i-1}$ to the new partition $\xset_{i}$, that becomes an input to the next iteration. It is easy to see that the final partition $\xset^*=\xset_{\log \tl}$ contains $\tl$ vertex subsets, and for each set $X\in \xset^*$, $|E(X)|\geq \frac{m}{4^{\log \tl}}-2w'\geq \frac{m}{\tl^2}-2w'\geq \frac{m}{64\ell^2}$, since $\tl\leq 4\ell+2$, and $w'\leq \frac{k}{384 \ell^2}\leq \frac{m}{128\ell^2}$.

Since there are $m$ edges in $H$, $\sum_{X\in \xset^*}|\out_H(X)|\leq 2m$. We say that a set $X\in \xset^*$ is bad if $|\out_H(X)|> 4m/\tl$. Clearly, at most $\tl/2$ clusters are bad. Let $\xset\subseteq \xset^*$ be the subset of clusters that are not bad. Then for each cluster $X\in \xset$, $|\out(X)|\leq 4m/\tl\leq 2m/\ell$, while $|E(X)|\geq \frac{m}{64\ell^2}$, so $|E(X)|\geq \frac{|\out(X)|}{128\ell}$ as required. We discard sets from $\xset$ as necessary, until $|\xset|=\ell+1$ holds.
\end{proof}

\subsubsection{Finding the Routers}

Let $X_1,\ldots,X_{\ell+1}$ be the subsets of $V(H)$ computed in Theorem~\ref{thm: partition into clusters w many edges inside and small boundary}.
For each $1\leq j\leq \ell+1$, let $X'_j\subseteq V(G)$ be obtained from $X_j$ by un-contracting all its clusters, that is, $X'_j=\bigcup_{v_C\in X_j}C$. We assume without loss of generality that $X_{\ell+1}'$ contains the most terminals of $T$ among all sets $X'_j$, that is, $|X'_j\cap T|\leq |X'_{\ell+1}\cap T|$ for all $1\leq j\leq \ell$. In particular, for each $1\leq j\leq \ell$, $|X'_j\cap T|\leq |T|/2$, and $|T\setminus(\bigcup_{j=1}^{\ell}X'_j|)\geq |T|/(\ell+1)$. 
Our final step, that finishes the proof of Theorem~\ref{thm: build good routers}, is summarized in the following theorem.

\begin{theorem}\label{thm: a router in each set}
For each $1\leq j\leq \ell$, there is a good router $S_j\subseteq X_j'$.
\end{theorem}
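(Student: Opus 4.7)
The plan is to construct $S_j$ as a subset of $X_j'$ (possibly restricted to a single connected component of $G[X_j']$) and verify the three good-router properties, with the key ingredient being the minimality of $\phi(\cset)$. Property (i) holds immediately because $|S_j \cap T| \leq |X_j' \cap T| \leq |T|/2$ by the choice of $X_{\ell+1}$ as the terminal-heaviest set.

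For property (iii), I would argue by contradiction. Suppose $\Gamma_G(S_j)$ is not $(w', \alpha)$-well-linked in $G[S_j]$, and let $(X, Y)$ be a $(w', \alpha)$-violating partition. Consider the auxiliary clustering $\cset'$ obtained from $\cset$ by merging all clusters $C \in \cset$ contained in $S_j$ into a single cluster $\hat{C}=S_j$ (valid because $S_j$ is a union of clusters of $\cset$). Compare potentials: the edges of $E_H(X_j \cap \hat{C})$ that used to cross between distinct clusters of $\cset$ inside $S_j$ become internal to $\hat{C}$ and each lose at least $1$ from their potential, while the edges of $\out_H(X_j \cap \hat{C})$ each gain at most $\rho(|\out_G(\hat{C})|) \leq 12\alpha$. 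Using $|E_H(X_j)| \geq |\out_H(X_j)|/(128\ell)$ and $\alpha = 1/(2^{12}\ell) < 1/(1536\ell)$, this gives $\phi(\cset') < \phi(\cset)$. If $\hat{C}$ is small in $\cset'$, then $\cset'$ is itself a good clustering, contradicting minimality of $\phi(\cset)$. Otherwise $\hat{C}$ is large and we apply $\partition(\hat{C}, X, Y)$ to obtain a good clustering $\cset''$ with $\phi(\cset'') < \phi(\cset') < \phi(\cset)$, again a contradiction.

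For property (ii), suppose there are fewer than $w'/2$ edge-disjoint paths from $\Gamma_G(S_j)$ to $T \setminus S_j$ in $G$. By Menger's theorem, there is a partition $(A, B)$ of $V(G)$ with $\Gamma_G(S_j) \subseteq A$, $T \setminus S_j \subseteq B$, and $|E_G(A, B)| < w'/2$. Since vertices of $S_j \setminus \Gamma_G(S_j)$ have no edges leaving $S_j$, we may freely move $S_j \cap B$ into $A$ without increasing the cut size, so we may assume $S_j \subseteq A$. Since $|T \setminus S_j| \geq |T|/2$, we have $|B \cap T| \geq |T|/2$. We then apply $\separate(\hat{C}, A)$ to the auxiliary clustering $\cset'$ defined above, yielding a good clustering $\cset''$ with $\phi(\cset'') < \phi(\cset') < \phi(\cset)$ by Claims~\ref{claim: bound on potential for separation}, again contradicting minimality.

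The main obstacle is handling the case where $G[X_j']$ is disconnected: the definition of a good router implicitly requires $G[S_j]$ to have its boundary well-linked, which forces $S_j$ to be connected. I would pass to a single connected component of $G[X_j']$ chosen by a pigeonhole argument on the bound $|E_H(X_j)| \geq |\out_H(X_j)|/(128\ell)$, so that the inherited ratio of internal to boundary edges is still $\Omega(1/\ell)$, and run the argument on that component. A secondary technical point is the precise potential-change calculation, which depends delicately on the definition of $\rho(z)$ and on the parameter choices $\alpha = 1/(2^{12}\ell)$ and $w' = cwd^8\ell^3/384$ from the statement of Theorem~\ref{thm: build a ToS system}.
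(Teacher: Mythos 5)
Your proposal correctly identifies the mechanism (building an auxiliary clustering from $\cset$ by merging the clusters inside $X'_j$, showing the potential strictly drops, and then invoking the two operations $\partition$ and $\separate$ to derive a contradiction with the minimality of $\phi(\cset)$), and the potential-change calculation for the merge step is essentially the one the paper uses. But there is a genuine gap in how you close the contradiction, and it changes the shape of the argument.

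You fix a candidate $S_j$ (a connected component of $G[X'_j]$) up front and try to verify each good-router property by a one-step contradiction: if a property fails, apply $\partition$ or $\separate$ to the merged clustering $\cset'$ and ``obtain a good clustering $\cset''$'' with lower potential. The problem is that neither operation produces a good clustering: $\partition(\hat C,X,Y)$ replaces $\hat C$ with the connected components of $G[X]$ and $G[Y]$, and these components can still be large; likewise $\separate$ can leave large clusters. So $\cset''$ is merely a valid clustering, not a good one, and no contradiction with ``$\cset$ has minimum potential among \emph{good} clusterings'' follows from a single application. The paper's proof therefore does not argue about a fixed $S_j$ at all. It starts from the merged clustering $\cset'$, establishes four invariants (the potential strictly decreases, the clustering stays valid, every cluster still has at most $|T|/2$ terminals, and every large cluster is contained in $X'_j$), and then iterates: as long as some large cluster violates well-linkedness or can be cut off from the terminals, apply $\partition$ or $\separate$. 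Because the potential drops by at least $1/(10n)$ each time, the iteration must terminate; because at every stage the clustering has strictly smaller potential than $\phi(\cset)$, it cannot be a good clustering, so it must retain at least one large cluster; and when no operation applies to that cluster it is, by definition, a good router. The good router is thus some large cluster that emerges at the end of the refinement, which in general is a strict subset of a component of $G[X'_j]$ -- not the component itself. This is also why your ``pigeonhole on a connected component'' step is unnecessary and, in fact, the wrong object to focus on: the invariant that every large cluster lies inside $X'_j$ does all the work, and you never need (and in general cannot get) the well-linkedness or the terminal-connection property for a whole component.
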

\begin{proof}
Throughout the proof, we fix some $1\leq j\leq \ell$.  Recall that $|E_H(X_j)|\geq \frac{|\out_H(X_j)|}{128\ell}$, and $|X_j\cap T|\leq |T|/2$. Clearly, $|E_G(X'_j)|\geq |E_H(X_j)|\geq \frac{|\out_H(X_j)|}{128\ell} = \frac{|\out_G(X_j')|}{128\ell}$. Let $\cset'$ be the clustering of $G$, obtained as follows: first, we add to $\cset'$ all clusters $C\in \cset$ with $v_C\not\in X_j$; next, for every connected component $R$ of $G[X'_j]$, we add $V(R)$ to $\cset'$. It is easy to see that $\cset'$ is a valid clustering of $G$ (though it may not be a good clustering). Moreover, if $C\in \cset'$ is a large cluster, then $C\subseteq X'_j$. 

\begin{claim}
$\phi(\cset')\leq \phi(\cset)-0.5$.
\end{claim}

\begin{proof}
The changes of the potential from $\cset$ to $\cset'$ can be bounded as follows:

\begin{itemize}
\item The edges in $E_{H}(X_j)$ contribute at least $1$ each to $\phi(\cset)$ and contribute $0$ to $\phi(\cset'_j)$.

\item The potential of the edges in $\out_G(X'_j)$ may increase. The increase is at most $\rho(n)\leq 12\alpha$ per edge. So the total increase is at most $12\alpha|\out_{H}(X_j)|\leq 12\alpha\cdot 128 \ell |E_H(X_j)|<\frac{|E_{H}(X_j)|}{2}$, since $\alpha=\frac{1}{2^{12}\ell}$. These are the only edges whose potential may increase.
\end{itemize}

Overall, the decrease in the potential is at least $\frac{|E_{H}(X_j)|}{2}\geq 0.5$.
\end{proof}

We now perform a number of iterations that modify the clustering $\cset'$, while maintaining the following invariants:

\begin{itemize}
\item the potential of $\cset'$ decreases by at least $\frac{1}{10n}$ after each iteration;

\item  $\cset'$ remains a valid clustering of $G$, and in particular, for each $C\in \cset'$, $G[C]$ is connected;

\item for each $C\in \cset'$, $|T\cap C|\leq |T|/2$; and

\item if $C\in \cset'$ is a large cluster, then $C\subseteq X'_j$.
\end{itemize}

Notice that if the above invariants are maintained, then we are guaranteed that throughout the algorithm, there is always some large cluster in $\cset'$. Indeed, if every cluster in $\cset'$ is small, then it is easy to verify that $\cset'$ is a good clustering with $\phi(\cset')<\phi(\cset)$, contradicting the choice of $\cset$. The invariants are clearly true for the initial clustering $\cset'$.

Each iteration is executed as follows. Let $S\in \cset'$ be any large cluster. If $\Gamma(S)$ is not $(w',\alpha)$-well-linked in $G[S']$, then from Observation~\ref{obs: generalized well-linkedness alt def}, there is an $(w',\alpha)$-violating partition $(X,Y)$ of $S$. We perform operation $\partition(S,X,Y)$ to obtain the new clustering of the vertices of $G$. It is easy to verify that all invariants continue to hold. 

Assume now that $\Gamma(S)$ is $(w',\alpha)$-well-linked in $G[S']$. If there is a set of $w'/2$ edge-disjoint paths connecting the vertices of $\Gamma(S)$ to the vertices of $\tset\setminus S$ in $G$, then $S$ is a good router and we are done. Otherwise, there is a partition $(A,B)$ of $V(G)$, with $S\subseteq A$, $|\tset\cap B|\geq |\tset|/2$, and $|E(A,B)|< w'/2$. We then apply operation $\separate(S,A,B)$ to obtain the new clustering of the vertices of $G$. All invariants again continue to hold, from the analysis of the operation above. 

Since the potential of the clustering $\cset'$ decreases by at least $\frac{1}{10n}$ after each iteration, we are guaranteed that this algorithm will terminate with a good router.
\end{proof}

\subsection{Constructing the Tree-of-Sets System}\label{subsec: build the ToS System}

Let $\sset=\set{S_1,\ldots,S_{\ell}}$ be the set of the good routers constructed in the previous step. Recall that for each $1\leq i\leq \ell$, $\Gamma(S_i)$ is $(w',\alpha)$-well-linked in $G[S_i]$, and there is a set $\pset(S_i)$ of $w'/2$ edge-disjoint paths connecting the vertices of $\Gamma(S_i)$ to the vertices of $\tset\setminus S_i$ in $G$. For convenience, we denote $w'/(4d)=w_1$. Over the course of the algorithm, we will define new parameters $w_2,w_3,\ldots$, as we construct sets of paths whose cardinalities become progressively smaller.

This part consists of three steps. First, we show that we can simultaneously connect every cluster $S_i\in \sset\setminus\set{S_1}$ to cluster $S_1$ by a large number of node-disjoint paths in graph $G'$, which is obtained from $G$ by contracting every cluster of $\sset$. At the same time, we will also connect $S_1$ to the terminals of $T\setminus(\bigcup_{j=1}^{\ell}S_j)$ by a large number of disjoint paths. Next, we organize the clusters into a Tree-of-Sets system, except that it may not be perfect. Finally, we boost the well-linkedness inside each cluster to obtain a perfect $T$-anchored Tree-of-Sets System.

\paragraph{Step 1: Initial Paths}
Since the maximum vertex degree in $G$ is at most $d$, for each $1\leq i\leq \ell$, there is a subset $\pset'(S_i)\subseteq \pset(S_i)$ of $w_1$ paths, whose endpoints are all distinct.
For all $1\leq i\leq \ell$, let $\tset_i\subseteq \tset$ be the set of terminals where the paths of $\pset'(S_i)$ terminate. Since the terminals are $1$-well-linked in $G$, there is a set $\qset_i$ of $w_1$ edge-disjoint paths, connecting the terminals of $\tset_i$ to the terminals of $\tset_1$. By combining the paths in sets $\pset'(S_i),\qset_i$, and $\pset'(S_1)$, we obtain a set $\pset''(S_i)$ of $w_1$ paths in $G$ that connect the vertices of $S_i$ to the vertices of $S_1$ with edge-congestion at most $3$.
Recall that $|T\setminus(\bigcup_{j=1}^{\ell}S_j|)\geq |T|/(\ell+1)>w'>w_1$. Let $T^*\subseteq T\setminus(\bigcup_{j=1}^{\ell}S_j)$ be any set of 
$w_1$ such terminals. As before, there is a set $\qset_1$ of $w_1$ edge-disjoint paths, connecting the terminals of $\tset^*$ to the terminals of $\tset_1$. By concatenating the paths of $\qset_1$ and $\pset'(S_1)$, we obtain a set $\tpset$ of $w_1$ paths connecting the terminals of $\tset^*$ to the vertices of $S_1$ with edge-congestion at most $2$.

Next, we build a directed node-capacitated flow network $\nset$. Start with graph $G$, bi-direct all its edges, and contract every cluster $S_i\in \sset$ into a super-node $v_i$. Set the capacities of all super-nodes $v_i$ to be infinite, and the capacities of all other vertices to $1$. We assume without loss of generality that no edge connects any pair of super-nodes, as each such edge can be subdivided by a capacity-$1$ vertex. Add a source vertex $s$ of infinite capacity, and $\ell$ additional vertices $s^*,s_2,\ldots,s_{\ell}$, each of which has capacity $\floor{w_1/(3d\ell)}$. For each $2\leq i\leq \ell$, we connect $s_i$ to $v_i$ via a directed edge. Additionally, we connect $s^*$ to every vertex in $\tset^*$ with a directed edge. Finally, connect $s$ to all vertices in $\set{s^*,s_2,\ldots,s_{\ell}}$ by directed edges. We use $s$ as our source vertex and vertex $v_1$ as the destination vertex for the single-commodity maximum flow that we compute in $\nset$. 

Consider the set $\pset'=\left(\bigcup_{i=2}^{\ell}\pset''(S_i)\right )\cup \tpset$ of paths in graph $G$. Recall that for $2\leq i\leq \ell$, set $\pset''(S_i)$ contains $w_1$ paths, connecting vertices of $S_i$ to vertices of $S_1$, with edge-congestion at most $3$. Therefore, the paths in $\pset''(S_i)$ cause vertex-congestion at most $3d$ in $G$. The paths in $\tpset$ connect the terminals of $\tset^*$ to the vertices of $S_1$, with total edge-congestion at most $2$, and therefore total vertex-congestion at most $2d$. Altogether, the paths in $\pset'$ cause total vertex-congestion at most $3d\ell$ in $G$. The set $\pset'$ of paths naturally defines a corresponding set of paths in the flow network $\nset$. By sending $1/(3d\ell)$ flow units along each path in $\pset'$ (and lowering the flow on some paths as needed), we obtain a valid $s$--$v_1$ flow in $\nset$ of value $\ell\cdot \floor{\frac{w_1}{3d\ell}}$ in $\nset$. We can then obtain an integral $s$--$v_1$ flow $F$ in $\nset$ of the same value. Since this is a single-source/single-sink flow, we can assume that it is acyclic, that is, there is no directed cycle $C$ in $\nset$, where every edge of $C$ carries non-zero flow. We can therefore define an ordering of the vertices in $\set{v_2,\ldots,v_{\ell}}$, where whenever there is a directed path $P$ in $\nset$ from $v_i$ to $v_j$, with every edge of $P$ carrying non-zero flow, $v_j$ appears before $v_i$ in this ordering. By re-indexing the vertices, we can assume that for $j<i$, $v_j$ appears before $v_i$ in the ordering.

The flow $F$ defines, for every vertex $v_i$, for $2\leq i\leq \ell$, a collection $\pset'''(S_i)$ of $\floor{w_1/(3d\ell)}$ paths in $\nset$, connecting $v_i$ to $v_1$, and an additional collection $\tpset'$ of $\floor{w_1/(3d\ell)}$ paths in $\nset$, connecting terminals of $T^*$ to $v_1$. Moreover, all paths in set $\pset''=\left(\bigcup_{i=2}^{\ell}\pset'''(S_i)\right ) \cup \tpset'$ are edge-disjoint and vertex-disjoint in $\nset$, except for possibly sharing the super-nodes $v_j$, for $2\leq j\leq \ell$.

\paragraph{Step 2: Building the Tree.}

Let $w_2=\floor{\frac{w_1}{6d\ell ^2}}=\Omega(cw d^6\ell )$.
The vertices of the tree $\tau^*$ are $\set{v_0,v_1,\ldots,v_{\ell}}$. For $1\leq i\leq \ell$, we set $S(v_i)=S_i$. Set $S(v_0)$ is defined later, and it will contain some terminals of $\tset^*$. In order to define the edges of the tree, we consider the vertices $v_2,\ldots,v_{\ell}$ in this order. Since $v_2$ is the first vertex in our ordering, the paths of $\pset'''(v_2)$ are disjoint from the vertices $v_3,\ldots,v_{\ell}$. In particular, they define a set $\hat{\pset}(v_2)$ of at least $w_2$ paths in graph $G$, connecting the vertices of $S_2$ to the vertices of $S_1$, that are disjoint in edges and inner vertices, and are internally disjoint from $\bigcup_{i=1}^{\ell}S_i$. We add the edge $(v_1,v_2)$ to the tree, and we let $\pset(e)$ be any subset of $w_2$ paths of $\hat{\pset}(v_2)$. Suppose now that we have processed vertices $v_2,\ldots,v_{i-1}$, and we would like to process vertex $v_i$. Consider any path $P\in \pset'''(v_i)$. Then $P$ does not contain the vertices $v_{i+1},\ldots,v_{\ell}$. Let $u(P)$ be the first vertex of $\set{v_1,v_2,\ldots,v_{i-1}}$ on $P$, and let $P'$ be the sub-path of $P$ from its first vertex to $u(P)$. Notice that $P'$ naturally defines a path in $G$, connecting a vertex of $S_i$ to a vertex of $S_j$, where $v_j=u(P)$, and moreover $\set{P'\mid P\in \pset'''(v_i)}$ is a collection of paths in $G$ that are edge-disjoint and internally node-disjoint. There is some index $1\leq i'< i$, so that for at least $\floor{|\pset'''(v_i)|/\ell}\geq w_2$ paths $P\in \pset'''(v_i)$, $u(P)=v_{i'}$. We add the edge $e=(v_i,v_{i'})$ to the tree $\tau^*$, and we let $\pset(e)$ be any set of $w_2$ paths in $G$, corresponding to the set $\set{P'\mid P\in \pset'''(v_i); u(P)=v_{i'}}$ of paths, so the paths in $\pset(e)$ connect vertices of $S_i$ to vertices of $S_{i'}$, and they are edge-disjoint and internally vertex-disjoint. 

In order to complete the definition of the \ToS, we need to define the set $S(v_0)\subseteq \tset$ of vertices corresponding to the root vertex $v_0$ of $\tau^*$, and to connect it to the remainder of the tree. Consider the set $\tpset'$ of $\floor{w_1/(3d\ell)}$ paths in $\nset$, connecting the terminals of $T^*$ to $v_1$. As before, for every path $P\in \tpset'$, we let $u(P)$ be the first vertex of $\set{v_1,v_2,\ldots,v_{\ell}}$ on $P$, and we let $P'$ be the sub-path of $P$ from its first vertex to $u(P)$. As before, $P'$ naturally defines a path in $G$, connecting a terminal of $\tset^*$ to a vertex of $S_j$, where $v_j=u(P)$, and $\set{P'\mid P\in \tpset'}$ is a collection of paths in $G$ that are edge-disjoint and internally node-disjoint. There is some index $1\leq i\leq \ell$, so that for at least $\floor{|\tpset'|/\ell}\geq w_2$ of paths $P\in \tpset'$, $u(P)=v_{i}$. We add the edge $e=(v_0,v_{i})$ to the tree $\tau^*$, and we let $\pset(e)$ be any set of $w_2$ paths in $G$, corresponding to the set $\set{P'\mid P\in \tpset'; u(P)=v_{i}}$ of paths, so the paths in $\pset(e)$ connect vertices of $\tset^*$ to vertices of $S_{i}$, and they are edge-disjoint and internally vertex-disjoint. We let $S(v_0)\subseteq \tset^*$ be the set of terminals that serve as endpoints of the paths in $\pset(e)$.

This procedure defines a tree $\tau^*$, and it almost gives us a \ToS, except for the following difficulty. We are guaranteed that the paths of $\bigcup_{e\in E(\tau^*)}\pset(e)$ are edge-disjoint and internally vertex-disjoint, but they may share endpoints. However, since we have assumed that the maximum vertex degree in $G$ is bounded by $d$, this is easy to resolve, by losing a factor of at most $4d^2$ in the sizes of the sets $\pset(e)$.
We do so by using the following simple observation.

\begin{observation}\label{obs: disjointness of endpoints}
Suppose we are given a set $A=\set{a_1,\ldots,a_N}$ of $N>0$ elements, and a collection $\rset=\set{R_1,\ldots,R_m}$ of multi-subsets of $A$,  so for $1\leq i\leq m$, $R_i$ may contain several copies of each element of $A$. Assume that each element $a_j$ appears at most $d$ times in the sets of $\rset$ in total. Then we can efficiently compute, for each $1\leq i\leq m$, a subset $R'_i\subseteq R_i$ of $\floor{\frac{|R_i|}{d}}$ elements, so that all sets $R_1',\ldots,R_m'$ are mutually disjoint, and every element appears at most once in each set.
\end{observation}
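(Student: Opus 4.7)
The plan is to reduce the problem to a bipartite matching question and apply Hall's theorem. Let $D_i \subseteq A$ denote the set of \emph{distinct} elements of $A$ appearing in the multiset $R_i$, and let $c_{ij}\ge 0$ denote the multiplicity of $a_j$ in $R_i$, so $|R_i|=\sum_j c_{ij}$ and $\sum_i c_{ij}\le d$ by hypothesis. Writing $k_i=\floor{|R_i|/d}$, my goal is to select subsets $R_i'\subseteq D_i$ with $|R_i'|=k_i$ such that $R_1',\ldots,R_m'$ are pairwise disjoint; since $R_i'\subseteq D_i$, every element appears at most once in each $R_i'$, satisfying the second requirement automatically.

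To find the $R_i'$'s, I will construct an auxiliary bipartite graph $H$ as follows. The left side of $H$ contains $k_i$ copies $R_i^{(1)},\ldots,R_i^{(k_i)}$ of the index $i$, for each $1\le i\le m$; the right side of $H$ is the set $A$; and each copy $R_i^{(t)}$ is joined to $a_j$ iff $a_j\in D_i$. Any matching of $H$ that saturates the left side produces the desired subsets $R_i'$, by letting $R_i'$ be the set of elements matched to the copies $R_i^{(1)},\ldots,R_i^{(k_i)}$.

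It remains to verify Hall's condition. Consider any subset $L$ of left vertices, and let $I\subseteq\set{1,\ldots,m}$ be the set of indices $i$ such that at least one copy $R_i^{(t)}$ lies in $L$. Then $|L|\le \sum_{i\in I} k_i$, while the neighborhood of $L$ in $H$ equals $U:=\bigcup_{i\in I}D_i$. Using the constraint $\sum_i c_{ij}\le d$ for each $j$, I compute
\[\sum_{i\in I}|R_i| \;=\; \sum_{i\in I}\sum_{a_j\in D_i}c_{ij} \;\le\; \sum_{a_j\in U}\sum_{i\in I}c_{ij} \;\le\; \sum_{a_j\in U} d \;=\; d|U|,\]
so $|U|\ge \sum_{i\in I}|R_i|/d\ge \sum_{i\in I}k_i \ge |L|$. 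Hence Hall's condition holds and the required matching exists, completing the construction.

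There is no serious obstacle here; the only thing to notice is that, because each $a_j$ appears at most $d$ times across all of $\rset$, one can afford to pass from multiset cardinalities to distinct-element counts with only a factor-$d$ loss, and this loss is exactly absorbed by the $\floor{|R_i|/d}$ target size, making Hall's condition hold with equality in the worst case. An efficient algorithm is obtained by computing a maximum matching in $H$ via standard bipartite matching algorithms.
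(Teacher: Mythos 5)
Your proof is correct. The paper establishes the same result by a flow argument: it builds a bipartite flow network with a source $s$, vertices $u_i$ of capacity $\floor{|R_i|/d}$ for the sets, unit-capacity vertices for the elements, and a sink $t$; it then exhibits an explicit fractional flow that routes $1/d$ flow unit along each set--element edge, observes this respects the unit capacities on the element side (because each $a_j$ is incident on at most $d$ such edges), and appeals to integrality of flow to round. Your route is the dual one: rather than write down a feasible fractional flow, you verify Hall's condition on the cloned bipartite graph by a counting argument. The key inequality you derive, $|U|\ge \sum_{i\in I}|R_i|/d$, is precisely the information encoded by the paper's fractional flow (the $1/d$ rate on each edge is what makes that counting tight). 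Both proofs are equally short; the paper's is marginally more ``algorithmic'' in flavor since it names the flow explicitly and says how to round it, whereas yours leans on Hall's theorem as a black box and then cites a matching algorithm for efficiency. Either presentation would be acceptable here; they are essentially min-cut--max-flow seen from opposite sides.
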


\begin{proof}
We build a flow network, whose vertex set consists of a source $s$; a set $U_1=\set{u_i\mid 1\leq i\leq m}$ of vertices representing the sets $R_i$; a set $U_2=\set{u'_j\mid 1\leq j\leq N}$ of vertices representing the elements of $A$; and a destination vertex $t$. We set the capacities of $s$ and $t$ to be infinite; the capacity of every vertex in $U_2$ is $1$, and the capacity of every vertex in $U_1$ is $\floor{\frac{|R_i|}{d}}$. We connect $s$ to every vertex of $U_1$, and every vertex of $U_2$ to $t$ via directed edges, and we add a directed edge $(u_i,u'_j)$ iff element $a_j$ belongs to $R_i$. It is easy to see that there is a valid $s$-$t$ flow in this network of value $\sum_{i=1}^m\floor{\frac{|R_i|}{d}}$, by sending $1/d$ flow units on each edge of the form $(u_i,u'_j)$ (we may need to lower the flows on some edges to satisfy the capacities of the vertices of $U_1$). Therefore, there is an integral flow of the same value. It is easy to see that every vertex $u_i\in U_1$ connects, via edges that carry non-zero flow, to exactly $\floor{\frac{|R_i|}{d}}$ vertices of $U_2$, and we define $R'_i$ to contain all these vertices.
\end{proof}

We root the tree $\tau^*$ at vertex $v_0$, and process all vertices of $\tau^*$ in the bottom-up fashion, so a vertex of $\tau^*$ is only processed after all its descendants have been processed. Suppose that a vertex $v_i$ is being processed, and let $e_1,\ldots,e_q$ be the set of edges incident on $v_i$ in $\tau^*$. We let $A$ be the set of all vertices of $S_i$, and for each $1\leq j\leq q$, we define a multi-subset $R_j$ of the vertices of $S_j$, containing all endpoints of the paths in $\pset(e_j)$. Then every element of $A$ appears at most $d$ times in these sets (counting multiplicities), and we can find subsets $R'_j\subseteq R_j$ that are mutually disjoint, as in Observation~\ref{obs: disjointness of endpoints}. We then discard from $\pset(e_j)$ all paths whose endpoints lie in $R_j\setminus R'_j$. Once every vertex of tree $\tau^*$ is processed, the resulting set $\pset=\bigcup_{e\in E(\tau^*)}\pset(e)$ contains paths that are completely disjoint from each other and internally disjoint from $\bigcup_{i=0}^{\ell}S(v_i)$. The cardinality of each set $\pset(e)$ of paths is at least $\frac{w_2}{4d^2} =\Omega(cwd^4\ell)$. We denote this value by $w_3$. This finishes the construction of a \ToS. Our last step is to turn it into a perfect \ToS, by boosting the well-linkedness inside each cluster $S_i$.

\paragraph{Step 3: Boosting Well-Linkedness}
Given a vertex $v_i\in V(\tau^*)\setminus\set{v_0}$, let $\delta(v)$ be the set of all edges of $\tau^*$ incident on $v_i$. 
Suppose we are given, for each edge $e\in E(\tau^*)$, a subset $\pset'(e)\subseteq \pset(e)$ of paths, and let $\pset'=\bigcup_{e\in E(\tau^*)}\pset'(e)$. For every vertex $v_i\in V(\tau^*)$, and every edge $e\in \delta(v_i)$, we let $U^{\pset'}(v_i,e)$ be the set of all endpoints of the paths in $\pset'(e)$ that belong to $S_i$. We also let $U^{\pset'}(v_i)=\bigcup_{e\in \delta(v_i)}U^{\pset'}(v_i,e)$.

Recall that for every cluster $S_i\in \sset$, $\Gamma(S_i)$ is $(w',\alpha)$-well-linked in $G[S_i]$. Since $|U^{\pset}(v_i)|\leq \ell w_3=\frac{\ell \cdot w_2}{4d^2}=\frac{\ell}{4d^2}\cdot \floor{\frac{w_1}{6d\ell ^2}}\leq w'$ and the vertices of $U^{\pset}(v_i)$ are contained in $\Gamma(S_i)$, set $U^{\pset}(v_i)$ is $\alpha$-well-linked in $G[S_i]$. 

In order to do boost the well-linkedness, we will define, for every edge $e\in E(\tau^*)$ a large subset $\pset'(e)\subseteq \pset(e)$ of paths, such that for every vertex $v_i\in V(\tau^*)\setminus\set{v_0}$: (i) for every edge $e\in \delta(v_i)$, $U^{\pset'}(v_i,e)$ is node-well-linked in $G[S_i]$; (ii) for every pair $e,e' \in \delta(v_i)$ of edges with $e\neq e'$, $U^{\pset'}(v_i,e)$ and $U^{\pset'}(v_i,e')$ are linked in $G[S_i]$; and $U^{\pset'}(v_i)$ is $1/5$-well-linked in $G[S_i]$.

Our first step is to apply Theorem~\ref{thm: weak well-linkedness} to each cluster $G[S_i]$ for $1\leq i\leq \ell$, with the set $U^{\pset}(v_i)$ of vertices serving as terminals. As a result, we obtain a collection $\fset_i$ of disjoint trees in $G[S_i]$, each of which contains at least $\ceil{1/\alpha}$ and at most $2d\cdot\ceil{1/\alpha}$ vertices of $U^{\pset}(v_i)$. 

Root the tree $\tau^*$ at vertex $v_0$. For every vertex $v_i\in V(\tau^*)\setminus\set{v_0}$, let $e_i$ be the edge of $\tau^*$ connecting $v_i$ to its parent. Let $w_4=\floor{\frac{w_3}{2d \ceil{1/\alpha}}}=\Omega(\frac{w_3}{d\ell})=\Omega(cwd^3)$.
Our next step is summarized in the following theorem.

\begin{theorem}\label{thm: choosing subsets of paths1}
We can find, for each $1\leq i\leq \ell$, a subset $\pset'(e_i)\subseteq \pset(e_i)$ of at least $w_4$ paths, such that for all $1\leq i\leq \ell$, for every tree $\tau\in \fset_i$, at most two vertices of $\tau$ belong to the paths of $\bigcup_{e\in E(\tau^*)}\pset'(e)$.
\end{theorem}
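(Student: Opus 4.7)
The plan is to reformulate the theorem as a constrained subgraph-selection problem on an auxiliary multigraph, and then solve it by combining a fractional LP-feasibility argument with a rounding step.

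First I would build a colored multigraph $\tilde G$ with vertex set $\bigcup_{i=1}^{\ell}\fset_i$ (one vertex per tree) and, for each path $P\in\pset(e)$ with $e=(v_i,v_j)\in E(\tau^*)$, an edge colored $e$ between the vertex representing the tree in $\fset_i$ that contains $P$'s $S_i$-endpoint and the vertex representing the tree in $\fset_j$ that contains $P$'s $S_j$-endpoint. The theorem is then equivalent to finding a sub-multigraph $\tilde H\subseteq\tilde G$ with maximum vertex-degree at most $2$ such that, for every color $e\in E(\tau^*)$, the color class of $e$ in $\tilde H$ contains at least $w_4$ edges. Each vertex of $\tilde G$ has degree at most $2d\lceil 1/\alpha\rceil$ (by the upper bound on tree sizes from Theorem~\ref{thm: weak well-linkedness}), and each color class contains at least $w_3$ edges.

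Next I would verify feasibility of the natural LP relaxation with variables $x_P\in[0,1]$, per-color constraints $\sum_{P\in\pset(e)}x_P\geq w_4$ for each $e\in E(\tau^*)$, and per-vertex constraints $\sum_{P\text{ incident to }\tau}x_P\leq 2$ for each tree $\tau$. The uniform assignment $x_P=1/(d\lceil 1/\alpha\rceil)$ satisfies both families of constraints with equality or with slack: each per-color sum equals $|\pset(e)|/(d\lceil 1/\alpha\rceil)\geq 2w_4$, while each per-vertex sum is at most $(2d\lceil 1/\alpha\rceil)/(d\lceil 1/\alpha\rceil)=2$. This establishes LP-feasibility with a factor-two slack on the color side.

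For the integer solution I would apply an iterative LP-rounding argument in the style of Singh--Lau for degree-bounded spanning trees. Starting from the fractional solution above, I would repeatedly compute a basic feasible optimum of the residual LP, then exploit the bounded constraint-incidence structure (each variable $x_P$ appears in exactly one color constraint and two tree constraints) via a standard counting argument on basic feasible solutions to argue that some variable must be integral or some tight tree constraint can be dropped with bounded violation. The loss would be absorbed by the factor-two slack in the color constraints and a constant additive violation of the degree-per-tree bound, which I would then remove by a local swap within the rounded solution that preserves the per-color counts.

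The main obstacle I anticipate is the absence of the natural bipartite or matroid structure: each path variable is incident to three constraints (one color, two trees), so the constraint matrix is not totally unimodular and standard bipartite-matching or matroid-intersection arguments do not apply directly. An alternative route I would pursue in parallel is a probabilistic-method approach based on the Lov\'asz Local Lemma, exploiting the fact that each bad tree-event depends only on the $O(d/\alpha)$ paths touching that tree; however the tightness of the fractional solution at both ends makes it unclear whether LLL can succeed without an additional $\operatorname{polylog}$ loss in $w_4$. Whichever route succeeds, the heart of the argument is closing the constant-factor gap between the fractional margin $2w_4$ per color and the integer target $w_4$ per color without exceeding the tree-capacity bound of $2$.
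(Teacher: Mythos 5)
Your reformulation as a degree-constrained sub-multigraph problem on the auxiliary graph $\tilde G$, and your observation that the uniform assignment $x_P=1/(d\ceil{1/\alpha})$ is LP-feasible with factor-two slack on the color constraints, are both correct. The gap is in the rounding step, and you have diagnosed the symptom accurately but missed the cure. You note that each variable $x_P$ touches one color constraint and two tree constraints and conclude that the constraint matrix is not totally unimodular; you then reach for Singh--Lau iterative rounding or the Lov\'asz Local Lemma, neither of which you carry through, and for good reason: it is far from clear that either closes the constant-factor gap without an additional loss, and both are considerably heavier machinery than the problem requires.

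The idea you are missing is that $\tau^*$ is \emph{rooted}, which gives every path $P\in\pset(e_i)$ a canonical direction: from $v_i$ up to its parent $v_j$. Exploiting this, split each tree node $\tau\in\fset_i$ into two capacity-one nodes: an ``outgoing'' node $a(i,\tau)$ touched only by the paths of $\pset(e_i)$ (those leaving $\tau$ toward $v_i$'s parent) and an ``incoming'' node $b(i,\tau)$ touched only by paths of $\pset(e')$ for $e'\in\delta(v_i)\setminus\set{e_i}$ (those arriving at $\tau$ from $v_i$'s children). Now the three constraints incident to $x_P$ --- the color node $u_i$, the outgoing node $a(i,\tau')$ at $P$'s child-side endpoint, and the incoming node $b(j,\tau'')$ at $P$'s parent-side endpoint --- lie on a directed chain $s\to u_i\to a(i,\tau')\to b(j,\tau'')\to t$. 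The resulting constraint matrix is therefore a network matrix, the polytope is an integral flow polytope, and the natural all-paths flow scaled by $1/(2d\ceil{1/\alpha})$ (note: you need $2d\ceil{1/\alpha}$, not $d\ceil{1/\alpha}$, since each of $a(i,\tau)$ and $b(i,\tau)$ separately may see up to $2d\ceil{1/\alpha}$ of $\tau$'s terminals) gives a feasible fractional $s$--$t$ flow of value $\ell w_4$; integrality of single-commodity maximum flow finishes the proof, and the two capacity-one nodes per tree immediately give the ``at most two vertices of $\tau$'' guarantee. Your merged degree-$\le 2$ formulation throws away exactly the directional structure that yields total unimodularity, which is why you found yourself fighting a general three-constraints-per-variable LP with no clean rounding.
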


\begin{proof}
We prove the theorem by defining an appropriate single-source single-sink flow network $\nset$. We view each edge $e_i\in E(\tau^*)$, and the corresponding set $\pset(e_i)$ of paths, as directed towards the root of $\tau^*$, that is, away from $v_i$. For each $1\leq i\leq \ell$, for every tree $\tau\in \fset_i$, we introduce two vertices, $a(i,\tau)$ and $b(i,\tau)$. Both vertices are used to represent the tree $\tau$, but, intuitively, $a(i,\tau)$ will be used by the paths of $\pset$ originating from the vertices of $\tau$ (that is, the paths of $\pset(e_i)$), and $b(i,\tau)$ by paths terminating at the vertices of $\tau$ (that is, the paths of $\pset(e')$ for all $e'\in \delta(v_i)\setminus\set{e_i}$). We set the capacities of all such vertices to $1$. For every edge $e_i=(v_i,v_j)\in E(\tau^*)$, for every path $P\in \pset(e)$, if $P$ originates at a vertex of some tree $\tau'\in \fset_i$ and terminates at a vertex of some tree $\tau''\in \fset_j$, then we add an edge $e_P=(a(i,\tau'),b(j,\tau''))$ to $\nset$, and we view this edge as representing the path $P$. We introduce a destination vertex $t$ of infinite capacity, and connect every vertex $b(j,\tau'')$ for all $v_j\in V(\tau^*)$ and $\tau''\in \fset_j$ to $t$. Finally, we would like to ensure that enough paths from each set $\pset(e)$ are selected. In order to do so, we introduce, for every edge $e_i=(v_i,v_j)\in E(\tau^*)$ a vertex $u_i$, whose capacity is $w_4$. For every path $P\in \pset(e)$, if $P$ originates at a vertex of some tree $\tau\in \fset_i$, then we add the edge $(u_i,a(i,\tau))$ to our network. (Notice that we allow parallel edges. Notice also that for all $\tau\in \fset_i$, all edges of $\nset$ leaving $a(i,\tau)$ correspond to the paths of $\pset_{e_i}$.) Finally, we add a source vertex $s$ of infinite capacity, and connect it to every vertex $u_i$, for $1\leq i\leq \ell$, with a directed edge. 

It is easy to see that the current set $\pset$ of paths naturally defines an $s$-$t$ flow of value $\ell w_3$, with congestion at most $2d\ceil{1/\alpha}$ on the vertices of the form $a(i,\tau)$ and $b(i,\tau)$, since every such tree $\tau$ contains at most $2d\ceil{1/\alpha}$ vertices of $U^{\pset}(v_i)$. By scaling this flow down by factor $2d\ceil{1/\alpha}$ and using the integrality of flow, we obtain a valid integral $s$-$t$ flow of value $\ell w_4$, where the flow through every vertex $u_i$ is $w_4$. Notice that from our construction, each edge $e_P=(a(i,\tau),b(j,\tau'))$ corresponds to some path $P\in \pset(e_i)$, connecting a vertex of $\tau$ to a vertex of $\tau'$. For every edge $e\in E(\tau^*)$, we let $\pset'(e)$ be the set of all paths $P\in \pset(e)$, where $e_P$ carries one flow unit. Since the flow through each vertex  $u_i$, for $1\leq i\leq \ell$, is $w_4$, we get that $|\pset'(e)|=w_4$ for each $e\in E(\tau^*)$. Since we have two vertices, $a(i,\tau)$ and $b(i,\tau)$ representing each tree $\tau\in \fset_i$, for all $1\leq i\leq \ell$, at most two vertices of each such tree belong to the paths of $\bigcup_{e\in E(\tau^*)}\pset'(e)$.
\end{proof}

We need the following simple claim.

\begin{claim}\label{claim: weak well-linkedness}
For every vertex $v_i\in V(\tau^*)\setminus\set{v_0}$, the set $U^{\pset'}(v_i)$ of vertices is $1/5$-well-linked in $G[S_i]$.
\end{claim}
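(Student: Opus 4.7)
The plan is to imitate the proof of Theorem~\ref{thm: weak well-linkedness}, accounting for the fact that each tree $\tau\in\fset_i$ now contains up to two vertices of $U^{\pset'}(v_i)$ rather than at most one. Fix an arbitrary pair $(X,Y)$ of disjoint equal-sized subsets of $U^{\pset'}(v_i)$, with $|X|=|Y|=\kappa$. By Observation~\ref{obs: well-linkedness alt def} it suffices to exhibit a flow $F:X\sconnect_5 Y$ in $G[S_i]$.

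First, I would handle the ``mixed'' trees: for every $\tau\in\fset_i$ with $|V(\tau)\cap X|=|V(\tau)\cap Y|=1$, route one flow unit from its $X$-vertex to its $Y$-vertex along the unique path between them in $\tau$. This contributes at most $1$ to the congestion on each tree edge and cancels those two vertices from the outstanding demand. Let $X_B\subseteq X$ and $Y_B\subseteq Y$ be the remaining vertices; then $|X_B|=|Y_B|$, and every ``pure'' tree that still carries outstanding vertices contains at most two vertices of $X_B$ and none of $Y_B$, or at most two of $Y_B$ and none of $X_B$.

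Next, I would adapt the two-sided grouping scheme of Theorem~\ref{thm: weak well-linkedness}. For every pure tree $\tau$ meeting $X_B$, fix a set of $\ceil{1/\alpha}$ vertices in $L(\tau)\subseteq V(\tau)\cap U^{\pset}(v_i)$ (which exists since $|V(\tau)\cap U^{\pset}(v_i)|\geq \ceil{1/\alpha}$), and partition it into up to two disjoint blocks $L^*(x)$ of size $\lfloor\ceil{1/\alpha}/2\rfloor$, one per $x\in V(\tau)\cap X_B$. Route one flow unit from $x$ to $L^*(x)$, split evenly along the edges of $\tau$; since at most two $X_B$-vertices contribute on any given tree, the resulting congestion on any tree edge is at most $2$. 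Do the symmetric construction for $Y_B$, producing disjoint blocks $L^*(y)\subseteq L(\tau_y)$. Let $\hat X=\bigcup_{x\in X_B}L^*(x)$ and $\hat Y=\bigcup_{y\in Y_B}L^*(y)$; since pure trees containing $X_B$-vertices are disjoint from those containing $Y_B$-vertices, $\hat X$ and $\hat Y$ are disjoint subsets of $U^{\pset}(v_i)$ of equal size $|X_B|\cdot\lfloor\ceil{1/\alpha}/2\rfloor$.

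Finally, the $\alpha$-well-linkedness of $U^{\pset}(v_i)$ in $G[S_i]$ produces a flow $\hat X\sconnect_{1/\alpha}\hat Y$ in which each vertex of $\hat X$ sends and each vertex of $\hat Y$ receives one unit. Scaling down by $\lfloor\ceil{1/\alpha}/2\rfloor$ yields a middle flow in which each position in $\hat X$ sends exactly $1/\lfloor\ceil{1/\alpha}/2\rfloor$ units (which matches what it received from its tree), with edge-congestion at most $1/\!\left(\alpha\lfloor\ceil{1/\alpha}/2\rfloor\right)\leq 3$. Concatenating the tree-distribution flows, this middle flow, and the symmetric tree-collection flows for $Y_B$, together with the direct mixed-tree routings, gives a flow $X\sconnect Y$ of value $\kappa$ whose congestion on any edge is at most $2$ (tree edges) plus $3$ (middle flow), hence at most $5$. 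The only bookkeeping to verify is the disjointness of the $L^*(x)$'s and $L^*(y)$'s within their host trees, which is immediate from the size constraint $|V(\tau)\cap(X\cup Y)|\leq 2$; I anticipate no real obstacle, as the argument is a direct extension of the grouping trick already used in Theorem~\ref{thm: weak well-linkedness}.
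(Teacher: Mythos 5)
Your proof is correct and achieves the same congestion bound of $5$. The first step---pairing off the ``mixed'' trees (those containing one vertex of $X$ and one of $Y$) and routing one unit of flow directly along each such tree---is precisely what the paper does to define the flow $F_1$ on the class $U_1', U_1''$. Where you diverge is in handling the remaining pure-tree vertices. The paper splits those into two further classes $U_2', U_3'$ (and symmetrically $U_2'', U_3''$) chosen so that each tree of $\fset_i$ contributes at most one vertex to $U_2'\cup U_2''$ and at most one to $U_3'\cup U_3''$, and then invokes Theorem~\ref{thm: weak well-linkedness} as a black box to get a congestion-$2$ flow on each class, for a total of $1+2+2=5$. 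You instead re-run the grouping argument directly with half-sized blocks $L^*(x)$ of $\lfloor\ceil{1/\alpha}/2\rfloor$ terminals each (two blocks per tree, since a pure tree holds up to two vertices of $X_B$ or of $Y_B$), giving congestion $2$ on tree edges for the distribution/collection phases plus congestion $\frac{1/\alpha}{\lfloor\ceil{1/\alpha}/2\rfloor}\le 3$ for the middle flow, again totalling $5$. The paper's route is slightly shorter because it reuses the already-proved $\tfrac12$-well-linkedness of one-per-tree subsets; yours is more self-contained and avoids the extra two-way split, at the cost of re-deriving the grouping estimate with modified block sizes. Both are valid and neither buys a better constant.
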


\begin{proof}
 Consider any pair $U',U''$ of disjoint equal-sized subsets of vertices of $U^{\pset'}(v_i)$, and assume that $|U'|=|U''|=\kappa$. 
It is enough to show that there is a flow $F:U'\sconnect_5U''$ in $G[S_i]$. In order to do so, we partition $U'$ into three subsets, $U'_1,U'_2$ and $U'_3$, and we similarly partition $U''$ into $U''_1,U''_2$ and $U''_3$. We then construct flow $F_j: U_j'\sconnect U''_j$ for all $j\in \set{1,2,3}$. For each tree $\tau\in \fset_i$, if there are two vertices, $u\in V(\tau)\cap U'$ and $u'\in V(\tau)\cap U''$, then we add $u$ to $U'_1$ and $u'$ to $U''_1$. This finishes the definition of the sets $U'_1$ and $U''_1$. Notice that there is a flow $F_1: U'_1\sconnect_1U''_1$, where we connect the pairs using their corresponding trees. We can partition the set $U'\setminus U'_1$ of the remaining vertices into two subsets, $U'_2$ and $U'_3$, where $|U'_2|=\floor{|U'\setminus U'_1|/2}$, and for each tree $\tau\in \fset_i$, at most one vertex of $\tau$ belongs to $U'_2$, and at most one vertex of $\tau$ belongs to $U'_3$. We partition $U''\setminus U''_1$ into $U''_2$ and $U''_3$ similarly. Notice that from Theorem~\ref{thm: weak well-linkedness}, set $U'_2\cup U''_2$ is $\half$-well-linked in $G[S_i]$, since for each tree $\tau\in \fset_i$, at most one vertex of $\tau$ belongs to $U'_2\cup U''_2$. From our definition of the sets, $|U'_2|=|U''_2|$. Therefore, there is a flow $F_2: U'_2\sconnect_2 U''_2$ in $G[S_i]$. Similarly, there is a flow $F_3: U'_3\sconnect_2 U''_3$ in $G[S_i]$. Combining the flows $F_1,F_2$ and $F_3$, we obtain a flow $F: U'\sconnect_5 U''$ in $G[S_i]$, proving that $U^{\pset'}(v_i)$ is $1/5$-well-linked in $G[S_i]$.
\end{proof}

In our final step, we process the edges of the tree $\tau^*$ one-by-one. Let $e=(v_i,v_j)$ be any such edge, and assume that $i\neq 0$. Using Theorem~\ref{thm: grouping}, we can find a subset $U'\subseteq U^{\pset'}(v_i,e)$ of $\Omega(w_4/d)$ vertices, such that $U'$ is node-well-linked in $G[S_i]$. We discard from $\pset'(e)$ all paths that do not have an endpoint in $U'$, obtaining a collection $\tpset(e)$ of $\Omega(w_4/d)$ paths. 
If $j\neq 0$, then we then apply Theorem~\ref{thm: grouping} to the endpoints of the paths in $\tpset(e)$ that lie in $S_j$, to obtain a subset $U''$ of $\Omega(|\pset'(e)|/d)=\Omega(w_4/d^2)$ vertices that are node-well-linked in $S_j$. We again discard from $\tpset(e)$ all paths that do not have an endpoint in $U''$, obtaining a subset $\tpset'(e)$ of $\Omega(w_4/d^2)$ paths. 
If $j=0$, then we let $\tpset'(e)\subseteq \tpset(e)$ be any collection of $\Omega(w_4/d^2)$ paths. 
Our final step is to select an arbitrary subset $\tpset''(e)\subseteq \tpset'(e)$ of $\floor{|\tpset'(e)|/20d}$ paths. Once we process all edges of tree $\tau^*$ in this way, and denote $\tpset''=\bigcup_{e\in E(\tau^*)}\tpset''(e)$, from Theorem~\ref{thm: linkedness from node-well-linkedness}, for every vertex $v_i\in V(\tau^*)\setminus\set{v_0}$, for every pair $e, e'\in \delta(v_i)$ of edges with $e\neq e'$, the vertices of $U^{\tpset''}(v_i,e)$ and $U^{\tpset''}(v_i,e')$ are node-well-linked in $G[S_i]$. We let $w_5=\Omega\left(\frac{w_4}{d^3}\right)=\Omega(cw)\geq w$ (if $c$ is chosen to be large enough), so that for each edge $e\in E(\tau^*)$, $|\tpset''(e)|=w_5$. This concludes the construction of the Tree-of-Sets System.

\label{----------------------------------extended construction---------------------------}
\section{Obtaining Better Bounds}\label{sec: extended construction}

In this section we prove that Theorem~\ref{thm: GMT} holds for $f(g)=\Theta(g^{19}\poly\log g)$. 
In this section, given a \ToS $(\tau,\set{S(v)}_{v\in V(\tau)},\set{\pset(e)}_{e\in E(\tau)})$ of width $w$ in graph $G$, it is convenient to think of it as a \emph{$w$-wide embedding of the tree $\tau$ into the graph $G$}. So every vertex $v\in V(\tau)$ is embedded into a cluster $S(v)$, and every edge $e\in E(\tau)$ is embedded into a set $\pset(e)$ of $w$ paths. This way we can discuss $w$-wide embeddings of specific trees into $G$, as opposed to general Tree-of-Sets systems, where we have no control over the structure of $\tau$. Later in this section we define such embeddings formally, and we make them more general, by allowing different edges $e$ of $\tau$ to have different width values $w(e)$, so $|\pset(e)|=w(e)$ holds.

We start with a high-level intuitive overview of our construction. We first note that, as seen from previous sections, we can build a Tree-of-Sets System with much better parameters than a Path-of-Sets system: a bounded-degree graph $G$ of treewidth $k$ is guaranteed to contain a perfect \ToS of size $\ell$ and width $w$, as long as $k\geq \Omega(w\ell^5)$. On the other hand, a perfect \PoS with the same parameters currently requires that $k\geq  \Omega(w\ell^{17})$. Unfortunately, the previous proofs require a \PoS system in order to construct the grid minor. In~\cite{CC14}, a \ToS was transformed into a \PoS before Corollary~\ref{cor: from path-set system to grid minor} was applied to obtain the grid minor. This step resulted in significant losses in the parameters of the final \PoS. Here we will use the \ToS directly (combined with a large number of Path-of-Sets Systems), in order to construct the grid minor. The idea is that, since we will be exploiting the \ToS, we will only need to construct relatively short Path-of-Sets Systems, and so save on the parameters.

Assume first that we are given a \PoS $(\sset,\bigcup \pset_i)$ in our input graph $G$, of length $\ell=\Omega(g^2)$ and width $w=\Omega(g^3)$. Then Corollary~\ref{cor: from path-set system to grid minor} guarantees the existence of the $(g\times g)$-grid minor in $G$ (in fact, width $\Omega(g^2)$ is sufficient). It would be instructive to consider a slightly different construction of the grid minor, that requires these weaker parameters, to motivate our final construction. As in the proof of Corollary~\ref{cor: from path-set system to grid minor}, we apply Corollary~\ref{cor: paths from the path-set system} to the \PoS, with parameters $h_1=g$ and $h_2=g^2$. If the outcome is a $(g\times g)$-grid minor, then we are done. Therefore, we assume that the outcome is a collection $\qset$ of $g^2$ node-disjoint paths, connecting vertices of $A_1$ to vertices of $B_{\ell}$, so that for all $1\leq i\leq \ell$, for every path $Q\in \qset$, $Q\cap S_i$ is a path, and for all $1\leq j\leq \floor{\ell/2}$, for every pair $Q,Q'\in \qset$ of paths, there is a path $\beta_{2j}(Q,Q')$ connecting a vertex of $Q$ to a vertex of $Q'$ in $G[S_{2j}]$, so that $\beta_{2j}(Q,Q')$ is internally disjoint from all paths in $\qset$. We now slightly depart from the proof of Corollary~\ref{cor: from path-set system to grid minor}: namely, we embed every vertex $v$ of the grid minor into a distinct path $P_v\in \qset$. For every edge $(u,v)$ of the grid minor, we select a distinct even-indexed cluster $S_{2i}$, that we use in order to embed the edge, via the path $\beta_{2i}(P_u,P_v)$. This immediately gives a model of the $(g\times g)$-grid minor into $G$.

Notice that in this proof, the embedding is ``sequential'' in some sense: the vertices of the grid are embedded into paths, that traverse the clusters of $\sset$ in a fixed order, and we use every other cluster in turn in order to embed a distinct edge of the grid minor. This naturally fits in with the Path-of-Sets system. In order to better exploit the Tree-of-Sets system, we observe that this construction can be ``parallelized'': we can break the $(g\times g)$ grid into sub-grids $Q_1,\ldots,Q_{(g/a)^2}$ of size $(a\times a)$ each, for some parameter $a$ (for simplicity of exposition we assume that $g/a$ is an integer), and then embed the edges contained in each such sub-grid independently. We then need to embed the edges connecting the different sub-grids in a coordinated fashion.

As an example of how we can exploit this idea, and avoid constructing a long \PoS, let $H$ be a spider graph with $(g/a)^2+1$ legs: that is, $H$ is a union of $(g/a)^2+1$ paths $L_0,\ldots,L_{(g/a)^2}$, that are completely disjoint, except for sharing the first vertex of each path, called the \emph{head of the spider}, and denoted by $v_0$.
We require that the length of path $L_0$ is $\Omega(g^2/a)$, and the lengths of all other paths $L_i$, for $1\leq i\leq g/a$, are $\Omega(a^2)$ each.
 Assume now that we are given a $w$-wide embedding of $H$ into our graph $G$, where $w=\Omega(g^3)$ (in other words, we are given a \ToS whose width is $w$ and the corresponding tree is $H$). We claim that we can find a model of the $(g\times g)$-grid minor in $G$. We break the $(g\times g)$-grid into $(g/a)^2$ sub-grids $Q_1,\ldots,Q_{(q/a)^2}$ of size $(a\times a)$ each. Notice that the embedding of $H$ into $G$ defines, for each $1\leq i\leq (g/a)^2$, a \PoS $(\sset_i,\bigcup_j\pset_j^i)$ of length $\Omega(a^2)$ and width $\Omega(g^3)$, corresponding to $L_i$. We discard the cluster corresponding to the head $v_0$ of the spider, and use the remaining \PoS in order to embed the grid $Q_i$, exactly like in the proof outlined above. In other words, we construct paths that traverse the clusters of $\sset_i$, and are used to embed the vertices of $Q_i$, and we use the even-indexed clusters of $\sset_i$ in order to embed the edges of $Q_i$. In this way, the subgrids $Q_i$ are embedded `` in parallel'', using different legs of the spider. Eventually, we need to connect the different subgrids to each other.  Let $U$ be the set of all vertices appearing on the boundaries of the subgrids $Q_i$, for $1\leq i\leq (q/a)^2$, so $|U|\leq 4g^2/a$. We extend the paths from the different Path-of-Sets systems that were used to embed the vertices of $U$ into the cluster corresponding to $v_0$, and from there to the clusters of the \PoS $(\sset_0,\bigcup_j\pset_j^0)$, corresponding to the leg $L_0$ of the spider. We then use the same construction as before, in order to embed the remaining edges of the grid minor, using the clusters of $\sset_0$.
 
 It is not hard to see (and we show it below), that we could use a similar proof, where instead of the spider $H$, we use any tree containing $\Omega(g^2/a^2)$ disjoint $2$-paths\footnote{Recall that a path $P$ in a graph $H$ is a $2$-path iff for every vertex $v\in P$, the degree of $v$ in $H$ is $2$} of length $\Omega(a^2)$ each, and one additional path of length at least $\Omega(g^2/a)$. We can construct such a tree, and its $w$-wide embedding into $G$, using tools we already have. First, we construct an $w'$-wide embedding of any tree with $\Omega(g^2/a^2)$ vertices into $G$, for a large enough parameter $w'$ (or, in other words, a \ToS of size $\Omega(g^2/a^2)$ and width $w'$), using Theorem~\ref{thm: build a ToS system}. Next, for every cluster $S$ of this embedding, corresponding to a leaf  or a degree-$2$ vertex of the tree, we perform a number of iterations that split $S$ into a \PoS of the desired length (either $\Omega(a^2)$ or $\Omega(g^2/a)$). The fact that we are now constructing much shorter Path-of-Sets Systems, and that Tree-of-Sets Systems are much cheaper to construct, allows us to improve the bounds of Theorem~\ref{thm: GMT}.  A natural way to push this approach even further is to parallelize more, by further partitioning the subgrids $Q_i$ into even smaller sub-grids.
 
 We now turn to a formal description of the proof.
We construct a family $\hset$ of trees, and show that at least one of the trees in $\hset$ can be appropriately embedded into $G$. In order to optimize the resulting parameters of the Excluded Grid theorem, we will use a variable-width embedding, instead of the width-$w$ embedding, that is defined as follows.

\begin{definition}
Let $H$ be any graph with non-negative integral width values $w(e)$ for each edge $e\in E(H)$, and let $G$ be any graph. A \emph{variable-width} embedding of $H$ into $G$ consists of the following:

\begin{itemize}
\item for every vertex $v\in V(H)$, a subset $S(v)\subseteq V(G)$ of vertices of $G$, such that $G[S(v)]$ is connected, and all resulting clusters $\set{S(v)}_{v\in V(H)}$ are disjoint; and

\item for every edge $e=(u,v)\in E(H)$, a set $\pset(e)$ of $w(e)$ disjoint paths in graph $G$, where each path $P\in \pset(e)$ connects a vertex of $S(u)$ to a vertex of $S(v)$, and it is internally disjoint from $\bigcup_{x\in V(H)}S(x)$. The paths in $\pset=\bigcup_{e\in E(H)}\pset(e)$ must be all mutually disjoint.
\end{itemize}

For every vertex $v\in V(H)$ and edge $e\in \delta_H(v)$, let $U(v,e)$ be the set of all endpoints of the paths in $\pset(e)$ that belong to $S(v)$, and let $U(v)=\bigcup_{e\in \delta_H(v)}U(v,e)$. We say that the embedding is perfect, if for every vertex $v\in V(H)$: (i) for every edge $e\in \delta_H(v)$, set $U(v,e)$ of vertices is node-well-linked in $G[S(v)]$; (ii) for every pair $e, e'\in \delta_H(v)$ of edges with $e\neq e'$, $(U(v,e),U(v,e'))$ are node-linked in $G[S(v)]$; and (iii) set $U(v)$ is $1/5$-well-linked in $G[S(v)]$.
\end{definition}

 We now proceed as follows. First, we formally define the graph family $\hset$, and show that a variable-width embedding of any graph from this family into $G$ is sufficient in order to construct the grid minor. We then show that if $G$ is a subcubic graph with a large enough treewidth, then at least one graph from $\hset$ can be embedded into $G$.

\subsection{Graph Family $\hset$}\label{subsec: the graph}
In this section we define the family $\hset$ of graphs.
We use two integral parameters: $g\geq 1$, which is an integral power of $2$, and $d\geq 3$. Intuitively, the goal is to find the $(g\times g)$-grid minor in a given input graph $G$, whose maximum vertex degree is bounded by $d$.
 We also use a parameter $q$, which is a large enough constant independent of $g$, that is an integral power of $2$, whose specific value we set later. The construction of the graph family is recursive. For $z=1,\ldots,\log_qg$, we construct a family $\hset_z$ of edge-weighted level-$z$ graphs. The final family $\hset$ of graphs is obtained from the last level, $\hset=\hset_{\log_qg}$. Every graph $H$ that we construct has one special vertex $r(H)$, that we refer to as \emph{the root of $H$}. For $z\geq 1$, let $W_z(d)=72dgq^{z+1}$.

We start by defining level-$1$ graphs. A level-$1$ graph $H_1$ is simply a path containing $4q^2+2$ vertices. One of the endpoints of the path is designated to be the root $r(H_1)$. Each edge $e$ of $H_1$ has width value $w(e)=W_1(d)$. Family $\hset_1$ then consists of a single graph $H_1$ defined above.

We now define the family $\hset_z$ of graphs, for $z>1$, assuming that family $\hset_{z-1}$ was already defined. In order to construct a graph $H_z\in \hset_z$, we start with a path $P_z$ containing $8q^{z+1}+1$ vertices, whose endpoints are denoted by $a_z$ and $a'_z$, respectively, and any tree $\tau_z$ that contains $4q^2$ vertices, where $P_z$ and $\tau_z$ are disjoint. We denote by $b_z$ any vertex of $\tau_z$, that is viewed as the root of the tree. Finally, we add an edge $(a'_z,b_z)$, obtaining an initial graph $H'_z$. Every edge $e$ of $H'_z$ has width $w(e)=W_z(d)$. 
Notice that graph $H'_z$ is also a tree, and we will view it as rooted at $a_z$. For every vertex $v\in V(H'_z)\setminus\set{a_z}$, we denote by $e_v$ the unique edge of $H'_z$, connecting $v$ to its parent.

Our final step is to select a subset $V_z$ of $q^2$ vertices of $V(\tau_z)\setminus \set{b_z}$, that have degrees $1$ or $2$ in $H'_z$, such that no edge of $\tau_z$ has both endpoints in $V_z$. This is done as follows. Let $L$ denote the number of leaves of tree $\tau_z$. If $L\geq q^2$, then we let $V_z$ be any set of $q^2$ leaves of $\tau_z$. Otherwise, there are at most $L-1$ vertices of degree greater than $2$ in $\tau_z$. Let $U\subseteq V(\tau_z)$ be the set of at least $3q^2$ vertices of $V(\tau_z)\setminus\set{b_z}$, whose degree is $1$ or $2$. Then $\tau_z[U]$ is a collection of disjoint paths, and so we can select a subset $V_z\subseteq U$ of $q^2$ vertices, such that no edge of $\tau_z$ connects a pair of vertices in $V_z$.

For each vertex $v\in V_z$, we select an arbitrary level-$(z-1)$ graph $H_{z-1}(v)\in \hset_{z-1}$, that is added to our graph, together with an edge $e'_v$, connecting its root $r(H_{z-1}(v))$ to $v$. The width of the edge $e'_v$ is $W_{z-1}(d)$, and the width of every edge in $H_{z-1}(v)$ remains the same as in the original graph $H_{z-1}(v)$. Let $H_z$ be this final graph (see Figure~\ref{fig: level z graph}). We designate the vertex $a_z$ to be the root of $H_z$. Notice that the degree of $r(H_z)$ is $1$.

\begin{figure}[h]
\scalebox{0.5}{\includegraphics{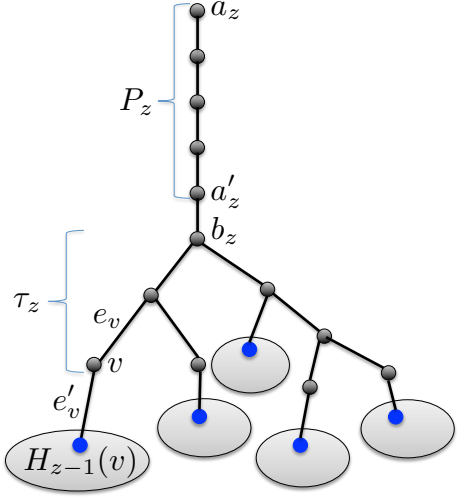}}
\caption{Construction of graph $H_z$. Path $P_z$ has $8q^{z+1}+1$ vertices, and tree $\tau_z$ has $4q^2$ vertices. All edges except those adjacent to the vertices lying in the level-$(z-1)$ graphs have width $W_z(d)$. The blue vertices are the roots of the level-$(z-1)$ graphs.\label{fig: level z graph}}
\end{figure}

This completes the construction of a level-$z$ graph $H_z$. Notice that every choice of the tree $\tau_z$, a subset $V_z$ of its vertices, and the graphs $\set{H_{z-1}(v)}_{v\in V_z}$ may result in a distinct graph $H_z$. The resulting family of all such graphs is denoted by $\hset_z$. We let $\hset=\hset_{\log_qg}$. It is easy to see that for every level $z$, every graph in $\hset_z$ is a tree.

\subsection{Constructing the Grid Minor}\label{subsec: building the minor}
In this section we show that if we are given any graph $H\in \hset$, and its perfect variable-width embedding into a graph $G$ whose maximum vertex degree is at most $d$, then $G$ contains the $(g\times g)$-grid as a minor. The proof is an induction on the graphs from different levels, where for each $z\geq 1$, we prove that if any graph $H_z\in \hset_z$ can be embedded into $G$ via the variable-width embedding, then $G$ contains the $(q^z\times q^z)$-grid as a minor.

For an integer $j>1$ and the $(j\times j)$-grid $R$, the \emph{boundary} of $R$ is the union of the first row, the $j$th row, the first column, and the $j$th column of $R$.

\begin{theorem}\label{thm: embedding to grid minor}
Given a graph $G$ with maximum vertex degree at most $d\geq 2$, an integer $z\geq 1$, and a graph $H_z\in \hset_z$, if there is a perfect variable-width embedding $\psi$ of $H_z$ into $G$, then either $G$ contains the $(g\times g)$-grid as a minor, or $G$ contains a model $\phi$ of the $(q^z\times q^z)$-grid $R$, that has the following additional property. Let $r=r(H_z)$, and let $e'_r$ be the unique edge of $H_z$ incident on $r$. Let $X$ be the set of all vertices of $R$ lying on its boundary. Then for each $x\in X$, $\phi(x)\cap S(r)$ contains exactly one vertex, and that vertex belongs to $U(r,e'_r)$, while for all $v\in V(R)\setminus X$, $\phi(v)\cap S(r)=\emptyset$. (Here $S(r)$, $U(r,e'_r)$ are defined with respect to the embedding $\psi$ of $H_z$ into $G$).
\end{theorem}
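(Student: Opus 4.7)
The proof is by induction on $z$.

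\textbf{Base case ($z=1$).} The graph $H_1$ is a path of $4q^2+2$ vertices, and the perfect variable-width embedding $\psi$ yields a perfect \PoS in $G$ of length $4q^2+2$ and width $W_1(d)=72dgq^2$, with root cluster $S(r)$ at one end. I apply Corollary~\ref{cor: paths from the path-set system} with $h_1=g$ and $h_2=4q-4$: either a $(g\times g)$-grid minor appears (and we are done), or we obtain $4q-4$ node-disjoint paths $Q_1,\dots,Q_{4q-4}$ through the PoS together with $\beta$-connecting paths in every even-indexed cluster. These ingredients are assembled into a $(q\times q)$-grid model in which the $4q-4$ paths $Q_i$ are identified with the $4q-4$ boundary grid vertices (each contributing its unique starting vertex of $U(r,e'_r)\subseteq S(r)$ as the sole vertex of $\phi(x_i)\cap S(r)$), while interior grid vertex models and all interior grid edges are built from $\beta$-paths living in clusters $S_j$ with $j\ge 2$, keeping interior models disjoint from $S(r)$. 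The choice $h_2=4q-4$ (rather than the usual $h_2=q$) is precisely what allows the \emph{entire} grid boundary, not merely one side, to be anchored in $S(r)$.

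\textbf{Inductive step.} Recall that $H_z$ decomposes into the path $P_z$ (with $8q^{z+1}+1$ vertices, rooted at $a_z=r$), the tree $\tau_z$ (with $4q^2$ vertices, rooted at $b_z$, attached to $P_z$ via the edge $(a'_z,b_z)$), and $q^2$ level-$(z-1)$ subgraphs $\{H_{z-1}(v)\}_{v\in V_z}$ hanging off distinguished tree vertices. I first apply the inductive hypothesis to each restricted embedding $\psi$ restricted to $H_{z-1}(v)$; this restriction is still a perfect variable-width embedding because subsets of well-linked or node-linked vertex sets retain those properties (Observation~\ref{obs: well-linkedness properties}). Either this yields the $(g\times g)$-grid minor (done), or for every $v\in V_z$ it yields a $(q^{z-1}\times q^{z-1})$-subgrid model $\phi_v$ whose $4q^{z-1}-4$ boundary vertices each contribute one anchor in $U(r(H_{z-1}(v)),e'_v)$. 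I extend each such anchor along the unique path of $\pset(e'_v)$ incident with it, landing at a vertex of $U(v,e'_v)\subseteq S(v)$.

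I then route all $\le 4q^{z+1}$ subgrid boundary anchors through the tree $\tau_z$ into the hub cluster $S(b_z)$. For each $v\in V_z$, I follow the unique $v$-to-$b_z$ path in $\tau_z$; at every internal cluster $S(u)$ along the way I exploit the perfect-embedding properties (node-well-linkedness of each $U(u,e)$, pairwise node-linkedness of $(U(u,e),U(u,e'))$, and $1/5$-well-linkedness of $U(u)$), combined with the flow-to-node-disjoint-paths conversion of Observation~\ref{obs: low cong flow to NDP}, to match incoming anchors to outgoing path-sets by node-disjoint routings. The load at any single cluster is at most $4q^{z+1}$, well within the available width $W_z(d)=72dgq^{z+1}$ even after the $O(d)$-factor loss from the flow rounding.

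Finally, I use the perfect \PoS arising from the embedding of $P_z$ -- of length $8q^{z+1}+1$ and width $W_z(d)$ -- to carry out two parallel tasks simultaneously: (i) pair up the \emph{internal} anchors (those not on the outer boundary of the big grid) into the $2q^z(q-1)$ crossing grid edges between adjacent subgrids in the $q\times q$ arrangement; and (ii) extend each of the $4q^z-4$ \emph{outer-boundary} anchors all the way through the PoS to terminate at a distinct vertex of $U(a_z,e'_r)\subseteq S(a_z)=S(r)$, via a reserved path of $\pset(e'_r)$. The total number of such operations is $O(q^{z+1})$ and each consumes $O(1)$ consecutive clusters, so the length $8q^{z+1}$ of the PoS suffices, and $W_z(d)$ dwarfs the per-cluster load. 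By construction, for each outer boundary vertex $x\in X$, the only vertex of $\phi(x)$ lying in $S(r)$ is the terminus of its reserved path inside $U(r,e'_r)$; for every interior grid vertex $v$, the model $\phi(v)$ lies entirely within the subgrid embeddings and the ``middle'' clusters used for crossings, never touching $S(r)$. \textbf{The main obstacle} is precisely this final step: constructing the many prescribed node-disjoint crossings and extensions in parallel inside a single PoS while respecting the pairings dictated by the global grid structure. This is handled by the parallel-splitting machinery of Section~\ref{sec: parallel splitting}, a careful generalization of the $2$-cluster-chain splitting of Theorem~\ref{thm: main advanced splitting} to a simultaneous, many-threaded setting.
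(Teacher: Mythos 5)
Your inductive step follows the paper's outline reasonably closely (recurse on the $q^2$ level-$(z-1)$ copies, route through $\tau_z$ into the hub, then use the \PoS from $P_z$ to embed crossing edges and extend outer anchors to $S(r)$), although the appeal at the end to ``the parallel-splitting machinery of Section~\ref{sec: parallel splitting}'' is a misattribution: Theorem~\ref{thm: parallel splitting} is used only in Theorem~\ref{theorem: find var-width embedding} to \emph{construct} the embedding, whereas the final step of Theorem~\ref{thm: embedding to grid minor} needs nothing beyond Corollary~\ref{cor: paths from the path-set system} applied to the \PoS of width $W_z(d)$ and length $8q^{z+1}$ with $h_1=g$, $h_2=4q^{z+1}$.

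The genuine gap is in your base case. You apply Corollary~\ref{cor: paths from the path-set system} with $h_2=4q-4$ and propose to use the resulting paths $Q_1,\dots,Q_{4q-4}$ as models of only the \emph{boundary} grid vertices, asserting that ``interior grid vertex models and all interior grid edges are built from $\beta$-paths.'' This cannot work: each $\beta$-path is a single path connecting a pair of $Q_i$'s, internally disjoint from $\bigcup_i V(Q_i)$, and the $\beta$-paths are pairwise disjoint from one another only through the $Q_i$'s. They therefore supply no way to produce the $(q-2)^2$ mutually disjoint connected interior-vertex models together with the horizontal and vertical edges joining them — there simply aren't enough independent connected pieces. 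The paper's resolution is different and essential: one takes $h_2=q^2$, \emph{one path per grid vertex}, assigns an arbitrary bijection $f\colon V(R)\to\qset$, and realizes each edge of the $(q\times q)$-grid as the $\beta$-path $\beta_{2i}\bigl(f(x),f(y)\bigr)$ in a dedicated even-indexed cluster (there are $4q^2$ clusters and fewer than $2q^2$ grid edges, so this fits). Interior vertex paths are not extended toward $S(r)$, while each boundary vertex path is extended by the unique path of $\pset(e'_r)$ it meets in $A_1$, which is what yields the anchoring property of the theorem. Your choice of $h_2$ is below the threshold where the standard width bound $(16h_1+10)h_2\le W_1(d)$ is even stressed, which is a hint that information is being thrown away; the intended $h_2=q^2$ is exactly what $W_1(d)=72dgq^2$ was sized for.
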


\begin{proof}
The proof is by induction on $z$. We start with the induction base, where $z=1$. Recall that $\hset_1$ contains a single graph $H_1$, which is a path containing $4q^2+2$ vertices. We denote the vertices of the path by $v_0,v_1,\ldots,v_{4q^2+1}$, where $v_0=r(H_1)$. For $0\leq i\leq 4q^2+1$, we denote $S(v_i)$ by $S_i$.
For all $0\leq i<4q^2+1$, let $e_i$ be the edge of $H_1$ connecting $v_i$ to $v_{i+1}$, and let $\pset_i=\pset(e_i)$. Let $B_i\subseteq S_i,A_{i+1}\subseteq S_{i+1}$ be the sets of vertices that serve as endpoints of the paths in $\pset(e_i)$. Let $\sset=(S_1,\ldots,S_{4q^2})$. We then obtain a perfect \PoS $(\sset,\bigcup_{i=1}^{4q^2-1}\pset_i)$ of width $W_1(d)=72dgq^2$ and length $4q^2$ in $G$.

From Corollary~\ref{cor: paths from the path-set system}, either there is a $(g\times g)$-grid minor in $G$, or there is a collection $\qset$ of $q^2$ node-disjoint paths in $G$, connecting  vertices of $A_1$ to  vertices of $B_{4q^2}$, such that for all $1\leq i\leq 4q^2$, for every path $Q\in \qset$, $S_i\cap Q$ is a path, and for every $1\leq j\leq 2q^2$, for every pair $Q,Q'\in \qset$ of paths, there is a path $\beta_{2i}(Q,Q')$ in $G[S_{2i}]$, connecting a vertex of $Q$ to a vertex of $Q'$, such that $\beta_{2i}(Q,Q')$ is internally disjoint from all paths in $\qset$. 

Consider some path $Q\in \qset$, and let $a\in A_1$ be the vertex where path $Q$ originates. Let $P_a\in \pset_0$ be the unique path containing vertex $a$, and let $u_Q$ be the other endpoint of path $P_a$, so $u_Q\in U(v_0,e_0)$. Finally, let $Q'$ be the concatenation of $P_a$ and $Q$. We are now ready to define the model $\phi$ of the $(q\times q)$-grid $R$ in $G$. 

Let $f: V(R)\rightarrow \qset$ be an arbitrary bijection  (recall that $|\qset|=q^2=|V(R)|$). For every vertex $v$ of $R$, if $v\not \in X$, then we let $\phi(v)=f(v)$; otherwise, we let $\phi(v)=Q'$, where $Q=f(v)$. In the latter case, we denote the corresponding vertex $u_Q\in U(v_0,e_0)$ by $\rho_v$. Notice that for each vertex $x\in X$, $\phi(x)\cap S(v_0)=\set{\rho_x}$, where $\rho_x\in U(v_0,e'_r)$, while for each vertex $v\in V(R)\setminus X$, $\phi(v)\cap S(v_0)=\emptyset$, as required.
Let $E(R)=\set{e^1,\ldots,e^{2q(q-1)}}$, where the ordering of the edges is arbitrary. Consider some edge $e^i=(x,y)\in E(R)$. 
We embed edge $e^i$ into the path $\beta_{2i}(Q_x,Q_y)$, contained in $G[S_{2i}]$, where $Q_x,Q_y\in\qset$ are the sub-paths of $\phi(x)$ and $\phi(y)$, respectively. Since the length of the \PoS is $4q^2$, and $|E(R)|<2q^2$, there are enough even-indexed clusters in the \PoS to accomplish this. This completes the construction of a model $\phi$ of $R$, for the case where $z=1$.

We now describe the induction step. Fix some integer $z>1$, and assume that the theorem holds for all levels $z'<z$. Let $H_z\in \hset_z$ be any level-$z$ graph. Recall that $H_z$ consists of a path $P_z$, whose endpoints are denoted by $a_z$ and $a'_z$, with $a_z=r(H_z)$, and a tree $\tau_z$, whose root vertex $b_z$ is connected to $a'_z$ with an edge (see Figure~\ref{fig: level z graph}). Additionally, we have selected a subset $V_z$ of $q^2$ vertices of $V(\tau_z)\setminus\set{b_z}$, and for each vertex $v\in V_z$, we have selected a level-$(z-1)$ graph $H_{z-1}(v)\in \hset_{z-1}$, connecting its root to $v$ with the edge $e'_v$. Recall that graph $H'_z$ is the union of $P_z,\tau_z$ and the edge $(a'_z,b_z)$, and it is a tree rooted at $a_z$. For every vertex $v\in V(H'_z)\setminus\set{a_z}$, we denote by $e_v$ the edge connecting $v$ to its parent in $H'_z$.

Given any subgraph $H'$ of $H_z$, we can use the variable-width embedding $\psi$ of $H$ into $G$ in order to define a corresponding subgraph $G'$ of $G$, into which $H'$ is embedded, as follows:

\[G'=\left(\bigcup_{v\in V(H')}G[S(v)]\right )\cup \left(\bigcup_{e\in E(H')}\pset(e)\right ).\]

We call $G'$ \emph{the subgraph of $G$ induced by the embedding  $\psi$ of $H'$ into $G$}. Clearly, the embedding $\psi$ of $H$ into $G$ immediately defines a perfect variable-width embedding of $H'$ into $G'$.

Let $R$ be the $(q^z\times q^z)$-grid. Partition $R$ into $q^2$ sub-grids of size $(q^{z-1}\times q^{z-1})$ each, and denote these sub-grids by $R_1,\ldots,R_{q^2}$ (the ordering is arbitrary). We also denote $V_z=\set{u_1,\ldots,u_{q^2}}$. For each $1\leq i\leq q^2$, we denote the graph $H_{z-1}(u_i)$ by $H^i$, and we let $G^i\subseteq G$ be the subgraph of $G$ induced by the embedding $\psi$ of $H^i$ into $G$. 
We also let $G',G''\subseteq G$ be the graphs induced by the embeddings of the path $P_z$ and the tree $\tau_z$ into $G$, respectively.
It is easy to verify that all graphs $G',G'',G^1,\ldots,G^{q^2}$ are mutually disjoint.

For each $1\leq i\leq q^2$, let $X_i$ be the set of vertices lying on the boundary of the grid $R_i$, so $|X_i|=4q^{z-1}$, and let $X'=\bigcup_{i=1}^{q^2}X_i$, so $|X'|=4q^{z+1}$.

The high-level idea is that for each $1\leq i\leq q^2$, we employ the induction hypothesis for $H^i$ and $G^i$, in order to construct a model of $R_i$ in $G^i$. 
We exploit the graph $G''$ in order to extend the embeddings of the vertices of $X'$, so that each of them contains a vertex of $S(b_z)$. Finally, we consider the \PoS defined by the embedding of the path $P_z$ into $G'$. We further extend the embeddings of the vertices of $X'$, so they traverse the clusters corresponding to the \PoS, and we then exploit the properties of the \PoS
 in order to embed the edges of $R$ whose endpoints lie in distinct sub-grids $R_i$.
 We now describe each of these steps in turn.
 
 \paragraph{Step 1: embedding the sub-grids $R_i$.}
Fix some $1\leq i\leq q^2$, and let $r_i=r(H^i)$. Let $e'_{r_i}$ be the unique edge of $H^i$ incident on $r_i$. Since we are given a perfect variable-width embedding of $H^i$ into $G^i$, by the induction hypothesis, either $G^i$ contains the $(g\times g)$-grid as a minor, or there is a model $\phi_i$ of $R_i$ in $G^i$, such that for every vertex $x\in X_i$, $\phi_i(x)\cap S(r_i)$ contains a single vertex, that belongs to $U(r_i,e'_{r_i})$, and for every vertex $v\in V(R_i)\setminus X_i$, $\phi_i(v)\cap S(r_i)=\emptyset$. If, for any $1\leq i\leq q^2$, graph $G^i$ contains the $(g\times g)$-grid as a minor, then we are done. Therefore, we assume from now on that for all $1\leq i\leq q^2$, there is a model $\phi_i$ of $R_i$ in $G^i$, as above. For every vertex $x\in X_i$, we denote by $\rho_x$ the unique vertex of $\phi_i(x)$ that lies in $U(r_i,e'_{r_i})$, and we let $Y_i=\set{\rho_x\mid x\in X_i}$, so $Y_i\subseteq U(r_i,e'_{r_i})$, and $|Y_i|=|X_i|=4q^{z-1}$ (see Figure~\ref{fig: step2.1}).

\begin{figure}[h]
\centering
\subfigure[Definitions of sets $Y_i$ and $Y_i'$. Recall that $|Y_i|=4q^{z-1}$, and $|Y_i'|=W_{z-1}(d)=72gdq^{z}$]{\scalebox{0.4}{\includegraphics{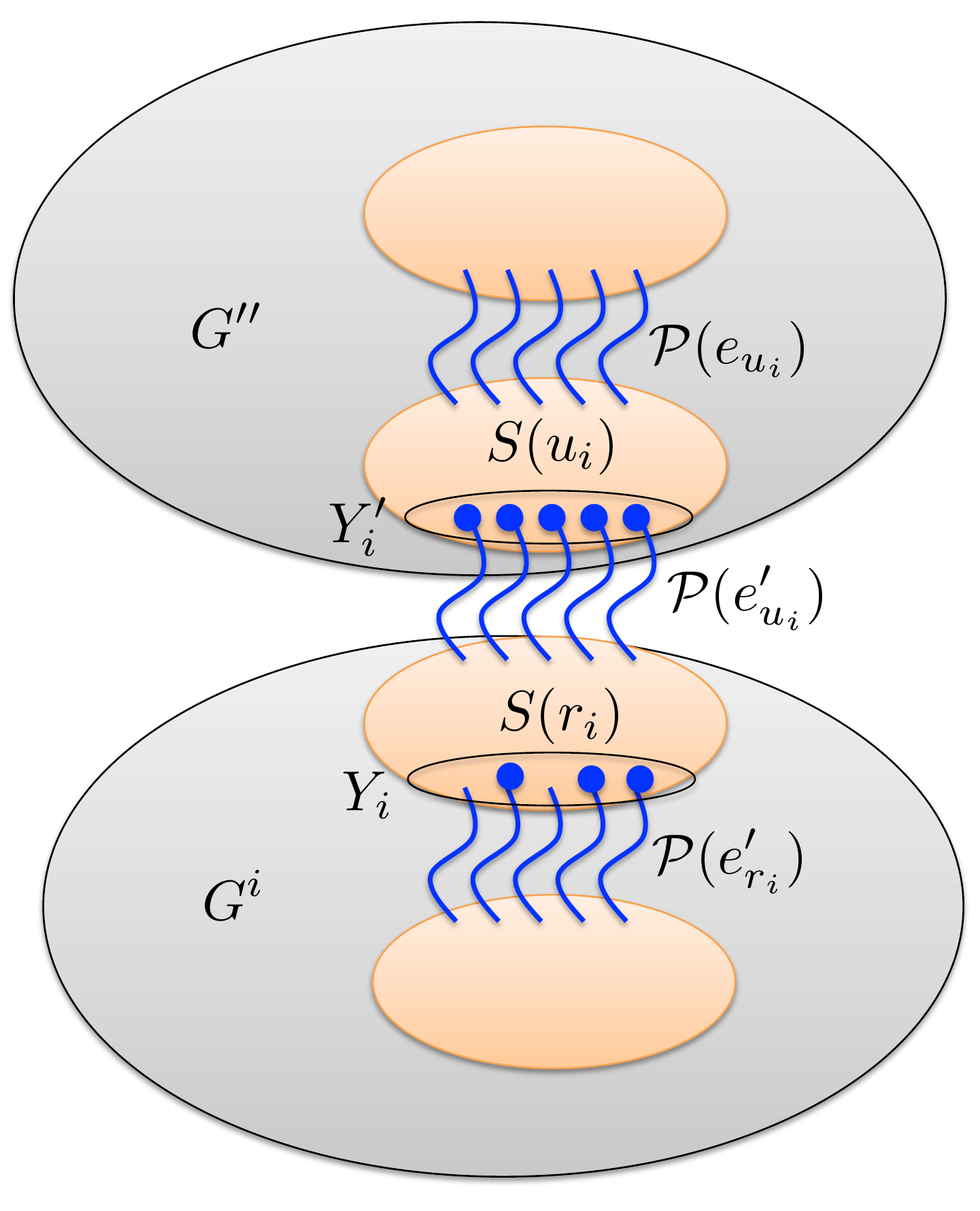}}\label{fig: step2.1}}
\hspace{1cm}
\subfigure[Definition of set $\Lambda$.]{
\scalebox{0.4}{\includegraphics{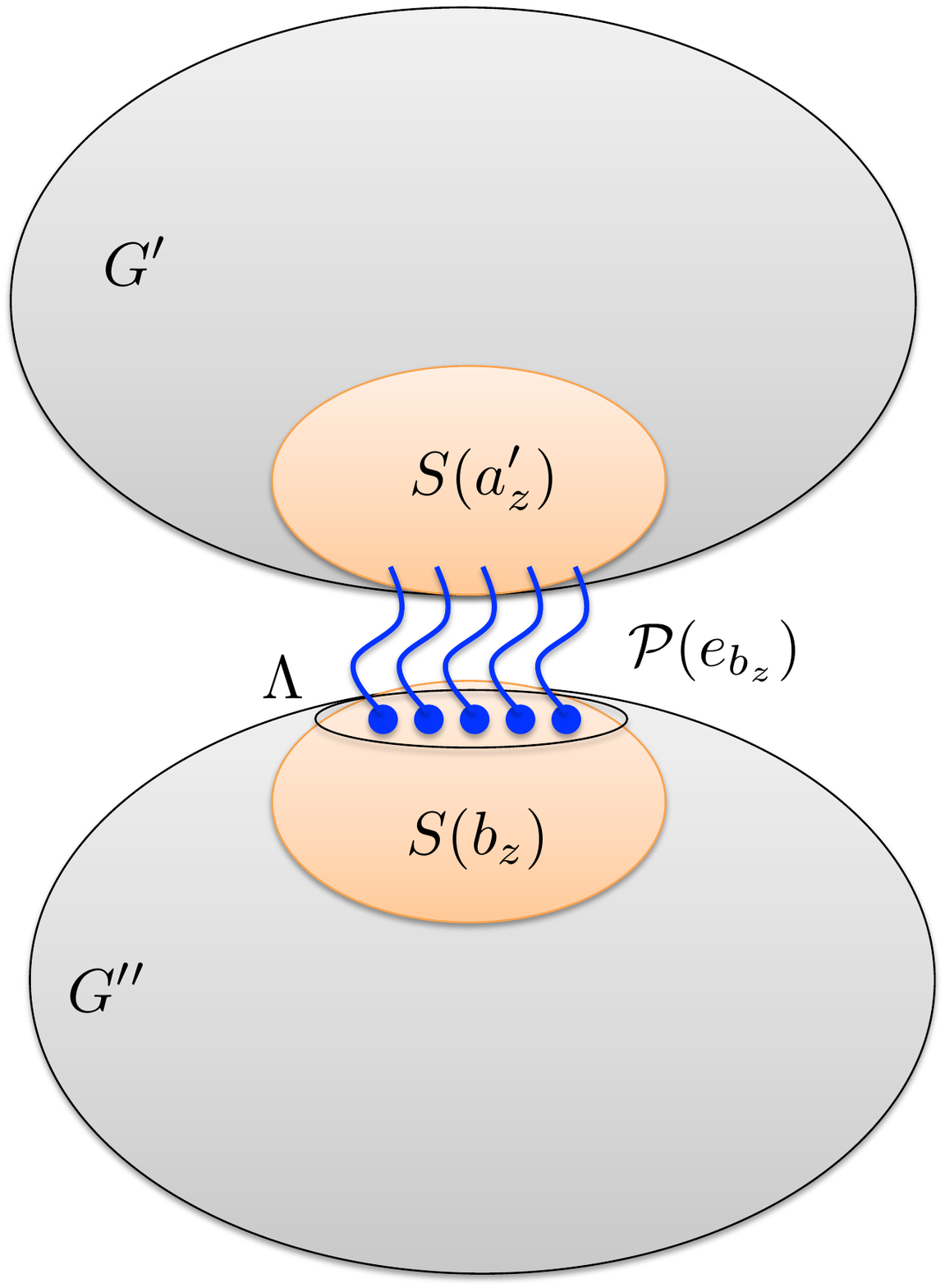}}\label{fig: step 2.2}}
\caption{Definitions of sets $Y_i$, $Y'_i$, and $\Lambda$. \label{fig: step 2}}
\end{figure}

\paragraph{Step 2: Routing inside $G''$.}
Consider some vertex $u_i\in V_z$, and let $e'_{u_i}$ be the edge connecting $u_i$ to $r_i$. Let $Y'_i=U(u_i,e'_{u_i})$ (see Figure~\ref{fig: step2.1}). Recall that $e_{b_z}$ is the edge connecting $b_z$ to $a'_z$. Let $\Lambda=U(b_z,e_{b_z})$  (see Figure~\ref{fig: step 2.2}), so $|\Lambda|=w(e_{b_z})=W_z(d)=72gdq^{z+1}$. The main result of this step is summarized in the following lemma.

\begin{lemma}\label{lem: extending the embedding for boundaries}
There is a set $\qset$ of node-disjoint paths in graph $G''$, where each path connects a vertex of $\bigcup_{i=1}^{q^2}Y'_i$ to a vertex of $\Lambda$, and for each $1\leq i\leq q^2$, the number of paths in $\qset$ originating from the vertices of $Y'_i$ is $4q^{z-1}$.
\end{lemma}

\begin{proof}
For each $1\leq i\leq q^2$, let $Y''_i\subseteq Y_i'$ be any subset containing $40dq^{z-1}$ vertices (since all edges $e'(u_i)$ have width $W_{z-1}(d)=72gdq^{z}$, such a set exists), and let $Y''=\bigcup_{i=1}^{q^2}Y''_i$, so $|Y''|=40dq^{z+1}$.
Recall that $|\Lambda|=W_z(d)=72gdq^{z+1}$. Let $\Lambda'\subseteq \Lambda$ be any subset of $40dq^{z+1}$ vertices. Our first step is to construct a collection $\qset':Y''\sconnect_5 \Lambda'$ of paths in $G''$.
The construction itself is somewhat tedious, but the intuition for it is simple: we ``grow'' the paths $\qset'$ from the vertices of $Y''$ towards the root of the tree $\tau'$, by exploiting the sets $\pset(e)$ of paths corresponding to the edges $e\in E(\tau)$, and the $1/5$-well-linkedness of the vertices of $U(v)$ in $G''[S(v)]$ for each $v\in V(\tau')$.

While the tree $\tau_z$ contains any leaf vertex that does not belong to $V_z$, we delete this vertex from $\tau_z$. Let $\tau'$ denote the resulting tree, so all leaves of $\tau'$ belong to $V_z$. We process the vertices of $\tau'$ one-by-one in the bottom-top order, so a vertex $v$ is processed only after all its descendants have been processed. Throughout the algorithm, we also gradually construct the set $\qset'$ of paths. For every vertex $v\in \tau'$, let $L(v)$ be the set of all descendants of $v$ (including possibly $v$ itself) that lie in $V_z$, and let $n(v)=|L(v)|$. Throughout the algorithm, we maintain the following invariants:

\begin{properties}{I}

\item All paths in $\qset'$ are contained in $G''$; they cause edge-congestion at most $5$ in $G''$, and all their endpoints are distinct;

\item If vertex $v$ was not processed yet, then all paths in $\qset'$ are disjoint from the paths in $\pset(e_v)$, and are internally disjoint from $S(v)$; and

\item If a vertex $v$ was already processed, and $v'$ is the parent of $v$ in $\tau'$, then set $\qset'$ contains a set of $40dq^{z-1}\cdot n(v)$ paths, connecting the vertices of $\bigcup_{u_i\in L(v)}Y''_i$ to distinct vertices of $U(v',e_v)$.\end{properties}

At the beginning of the algorithm, $\qset'=\emptyset$. It is easy to see that all invariants hold.
Consider some iteration of the algorithm, where some vertex $v$ of $\tau'$ is processed, and assume that all invariants hold so far.
We consider three cases.

The first case is when $v$ is a leaf of $\tau'$. Assume that $v=u_i$. Let $Z(u_i)$ be any set of $40dq^{z-1}$ vertices of $U(u_i,e_{u_i})$. Since we are given a perfect variable-width embedding of $H_z$ into $G$, $U(u_i,e_{u_i})$ and $U(u_i,e'_{u_i})$ are node-linked in $G[S(u_i)]$. 
Therefore, there is a set $\tilde{\pset}(u_i): Y''_i\sconnect_1 Z(u_i)$ of paths in $G[S(u_i)]$. Let $\tilde{\pset}'(u_i)\subseteq \pset(e_{u_i})$ be the set of paths corresponding to the edge $e_{u_i}$ that connects $u_i$ to its parent, that originate at the vertices of $Z(u_i)$. We add to $\qset'$ a collection of $40dq^{z-1}$ edge-disjoint paths, obtained by concatenating the paths in $\tilde{\pset}(u_i)$ and the paths in $\tilde{\pset}'(u_i)$. If we denote by $v'$ the parent of $u_i$ in the tree, then each of these new paths connects a distinct vertex of $Y''_i$ to a distinct vertex of $U(v',e_{u_i})$. It is easy to verify that all invariants continue to hold.

The second case is when $v$ belongs to $V_z$, but it is not a leaf of $\tau'$. In this case, the degree of $v$ in $\tau'$ is $2$. We assume again that $v=u_i$. Recall that $e_{u_i}$ is the edge connecting $u_i$ to its parent, that we denote by $v'$, and $e'_{u_i}$ connects $u_i$ to $r_i$. Since the degree of $u_i$ in $\tau'$ is $2$, it has exactly one descendant in $\tau'$, that we denote by $v''$. Let $e''$ be the edge $(u_i,v'')$. 
Notice that $n(u_i)=n(v'')+1$. Since we assumed that all invariants hold, there is a set $\qset(v'')\subseteq \qset'$ of $40 d q^{z-1}\cdot n(v'')$ paths of $\qset'$, connecting the vertices of $\bigcup_{u_j\in L(v'')}Y''_j$ to some vertices of $U(u_i,e'')$. The endpoints of the paths in $\qset(v'')$ are all distinct, and we denote the set of these endpoints that lie in $U(v_i,e'')$ by $Z(u_i)$.
Let $Z'(u_i)$ be any subset of $40dq^{z-1}n(u_i)$ vertices of $U(u_i,e_{u_i})$.  Notice that, since $n(u_i)\leq q^2$, and the width of every edge in $\tau_z$ is $W_z(d)=72gdq^{z+1}$, such a set exists. Since the embedding of $\tau_z$ into $G''$ is perfect, the set $U(u_i)$ of vertices is $1/5$-well-linked in $G[S(u_i)]$. Therefore, there is a set $\tilde{\pset}(u_i): Z(u_i)\cup Y''\sconnect_5 Z'(u_i)$ of paths contained in $G[S(u_i)]$. Let $\tilde{\pset}'(u_i)\subseteq \pset(e_{u_i})$ be the set of paths originating from the vertices of $Z'(u_i)$. We remove the paths of $\qset(v'')$ from $\qset'$, and replace them with the concatenation of the paths in $\qset(v''),\tilde{\pset}(u_i)$, and $\tilde{\pset}'(u_i)$. In other words, we have extended the paths in $\qset(v'')$, so they now terminate at the vertices of $U(v',e_{u_i})$, and we have connected the vertices of $Y''_i$ to the vertices of $U(v',e_{u_i})$.

The third case is when $v$ is an inner vertex of $\tau'$, and $v\not\in V_z$.
Let $v_1,\ldots,v_p$ be the children of $v$. 
For each $1\leq j\leq p$, let $\qset'_j\subseteq \qset'$ be the set of $40 d q^{z-1}\cdot n(v_j)$ paths of $\qset'$, connecting the vertices of $\bigcup_{u_i\in L(v_j)}Y''_i$ to some vertices of $U(v,e_{v_j})$. Let $\qset'(v)=\bigcup_{j=1}^p \qset'_j$. Then $|\qset'(v)|=40 d q^{z-1}\cdot n(v)$, and the endpoints of the paths in $\qset'(v)$ are all distinct, and lie in $\bigcup_{j=1}^pU(v,e_{v_j})$. We denote the set of these endpoints by $Z(v)$. 
Assume first that $v\neq b_z$. Let $Z'(v)\subseteq U(v,e_v)$ be any set of $40 d q^{z-1}\cdot n(v)$ vertices. Since $n(v)\leq q^2$, and the width of every edge in $\tau_z$ is $W_z(d)=72gdq^{z+1}$, such a set exists. Since the embedding of $\tau_z$ into $G''$ is perfect, the set $U(v)$ of vertices is $1/5$-well-linked in $G[S(v)]$. Therefore, there is a set $\tilde{\pset}(v): Z(v)\sconnect_5 Z'(v)$ of paths contained in $G[S(v)]$. Let $\tilde{\pset}'(v)\subseteq \pset(e_v)$ be the set of paths originating from the vertices of $Z'(v)$. We remove the paths of $\qset'(v)$ from $\qset'$, and replace them with the concatenation of the paths in $\qset'(v),\tilde{\pset}(v)$, and $\tilde{\pset}'(v)$. In other words, we have extended the paths in $\qset'(v)$, so they now terminate at the vertices of $U(v',e_v)$, where $v'$ is the parent of $v$.

Finally, if $v=b_z$, then $n(v)=q^2$, and so $|Z(v)|=40dq^{z+1}$. As before, since  the embedding of $\tau_z$ into $G''$ is perfect, the set $U(v)$ of vertices is $1/5$-well-linked in $G[S(v)]$, and so there is a set $\tilde{\pset}(v): Z(v)\sconnect_5 \Lambda'$ of paths contained in $G[S(v)]$. By concatenating the paths in $\qset'$ and the paths of $\tpset(v)$, we obtain the final set $\qset':Y''\connect_5 \Lambda$ of paths in $G''$.

Our last step is to build a collection of node-disjoint paths, using standard techniques. We construct a directed flow network $\nset$, by starting from $G''$ and bi-directing all its edges. All vertices of $G''$ are assigned unit capacities. For each $1\leq i\leq q^2$, we add a vertex $s_i$ of capacity $4q^{z-1}$, and connect it with a directed edge to every vertex of $Y''_i$. We also add a source vertex $s$ of infinite capacity, and connect it with a directed edge to every vertex $s_i$, for $1\leq i\leq q^2$. Finally, we add a destination vertex $t$ of infinite capacity, and connect every vertex of $\Lambda$ to it with a directed edge. The set $\qset'$ of paths induces an $s$-$t$ flow in $\nset$ with vertex-congestion at most $10d$. By scaling this flow down by factor $10d$, we obtain a valid $s$-$t$ flow of value $4q^{z+1}$ in $\nset$. Using the integrality of flow, we obtain a valid integral $s$-$t$ flow of the same value. This flow naturally induces a collection $\qset$ of node-disjoint paths, connecting some vertices of $\bigcup_{i=1}^{q^2}Y'_i$ to some vertices of $\Lambda$, where for each $1\leq i\leq q^2$, exactly $4q^{z-1}$ paths of $\qset$ originate from the vertices of $Y'_i$.
\end{proof}

Fix some $1\leq i\leq q^2$, and let $\qset_i\subseteq \qset$ be the set of paths originating from the vertices of $Y'_i$. Let $Y^*_i\subseteq Y'_i$ be the set of $4q^{z-1}$ vertices, where the paths of $\qset_i$ originate. Consider the graph $\hat{H}_i$, obtained by the union of $G[S(r_i)]$ and $\pset(e'_{u_i})$, where $e'_{u_i}=(u_i,r_i)$. Since the embedding of $H_z$ into $G$ is perfect, $U(r_i,e_{u_i})$ and $U(r_i,e'_{u_i})$ are node-linked in $G[S(r_i)]$. Therefore, there is a set $\pset'_i: Y_i\sconnect Y^*_i$ of node-disjoint paths in graph $\hat{H}_i$. Let $\qset^*_i$ be the set of paths obtained by concatenating the paths in $\qset_i$ and $\pset'_i$, and let $\qset^*=\bigcup_{i=1}^{q^2}\qset^*_i$. Then $\qset^*$ is a set of node-disjoint paths in $G\setminus G'$, connecting every vertex of $\bigcup_{i=1}^{q^2}Y_i$ to some vertex of $\Lambda$. 

Consider now some vertex $x\in X'$, and assume that $x\in X_i$, for $1\leq i\leq q^2$. Let $Q_x\in \qset^*$ be the unique path originating from the vertex $\rho_x$, and let $\rho'_x$ be its other endpoint. We define a new embedding $\phi'(x)$ of $x$ as follows: $\phi'(x)=\phi_i(x)\cup Q_x$. 
Let $\Lambda^*\subseteq \Lambda$ be the set of vertices $\set{\rho'_x\mid x\in X'}$, so $|\Lambda^*|=|X'|=4q^{z+1}$.

\paragraph{Step 3: Completing the Construction of the Model of the Grid.} In this step we complete the construction of the model of $R$ in $G$. We observe that the embedding of $P_z$ into $G'$ defines a perfect \PoS, which we exploit in order to embed the edges connecting different sub-grids $R_i$ of $R$, while extending the embeddings $\phi'(x)$ of all vertices $x\in X'$, so that they traverse the cluster of the \PoS.

 Recall that $P_z\subseteq H_z$ is a path containing $8q^{z+1}+1$ vertices. We denote these vertices by $a_z=v_0,v_1,\ldots,v_{8q^{z+1}}=a'_z$, and we assume that they appear on $P_z$ in this order. For $0\leq i\leq 8q^{z+1}$, we denote $S(v_i)$ by $S_i$.
For all $0\leq i<8q^{z+1}$, let $\hat{e}_i$ be the edge of $P_z$ connecting $v_i$ to $v_{i+1}$, and let $\hat{\pset}_i=\pset(\hat e_i)$. Let $B_i\subseteq S_i,A_{i+1}\subseteq S_{i+1}$ be the sets of vertices that serve as endpoints of the paths in $\hat{\pset_i}$. We also let $B_{8q^{z+1}}=U(a'_z,e_{b_z})$, so $|B_{8q^{z+1}}|=W_z(d)=72gdq^{z+1}$.  Letting $\sset=(S_1,\ldots,S_{8q^{z+1}})$, we then obtain a perfect \PoS $(\sset,\bigcup_{i=1}^{8q^{z-1}-1}\hat{\pset}_i)$ of width $72gdq^{z+1}$ and length $8q^{z+1}$ in $G'$.

From Corollary~\ref{cor: paths from the path-set system}, either there is a $(g\times g)$-grid minor in $G'$, or there is a collection $\hat{\qset}$ of $4q^{z+1}$ node-disjoint paths in $G'$, connecting the vertices of $A_1$ to the vertices of $B_{8q^{z+1}}$, such that for all $1\leq i\leq 8q^{z+1}$, for every path $Q\in \hat{\qset}$, $S_i\cap Q$ is a path, and for every $1\leq j\leq 4q^{z+1}$, for every pair $Q,Q'\in \hat{\qset}$ of paths, there is a path $\beta_{2i}(Q,Q')$ in $G'[S_{2i}]$, connecting a vertex of $Q$ to a vertex of $Q'$, such that $\beta_{2i}(Q,Q')$ is internally disjoint from all paths in $\hat{\qset}$. If $G'$ contains the $(g\times g)$-grid as a minor, then we are done, so we assume that the latter happens. We shorten every path $Q\in \hat \qset$, so it is internally disjoint from the last cluster, $S_{8q^{z+1}}$. That is, every path in $\hat Q$ now connects a vertex of $A_1$ to a vertex of $A_{8q^{z+1}}$. Let $\hat{\Lambda}\subseteq A_{8q^{z+1}}$ be the set of vertices where the paths of $\hat \qset$ now terminate, so $|\hat{\Lambda}|=4q^{z+1}$.

Let $\hat{G}$ be the subgraph of $G$, obtained by the union of $\pset(e_{b_z})$ and $G[S(a'_z)]=G[S_{8q^{z+1}}]$. Since the embedding of $H_z$ into $G$ is perfect, and $|\hat {\Lambda}|=|\Lambda^*|=4q^{z+1}$, we can find a set $\hat{\qset}'$ of node-disjoint paths, connecting every vertex of $\Lambda^*$ to a distinct vertex of $\hat{\Lambda}$ in $\hat G$. Let $\hat{\qset}^*$ be the set of paths obtained by concatenating the paths of $\hat{\qset}$ and the paths of $\hat{\qset'}$. Then for every vertex $x\in X'$ of the grid $R$, there is a unique path $\hat Q\in \hat{\qset }^*$, originating from the vertex $\rho'_x$. We are now ready to define the model $\phi$ of $R$ in $G$.

Consider first any vertex $v\in V(R)$. Assume first that $v\not \in X'$, and let $R_i$ be the sub-grid of $R$ to which $v$ belongs. Then we define $\phi(v)=\phi_i(v)$. Assume now that $v\in X'\setminus X$. Then we let $\phi(v)$ be the concatenation of $\hat Q_v$ and $\phi'(v)$.
Finally, if $v\in X$, then we let $\phi(v)$ be the concatenation of $\hat Q_v$, $\phi'(v)$, and the unique path $P_v\in\hat{\pset}_0$, that shares an endpoint with $\hat Q_v$. Let $\rho''_v$ denote the other endpoint of  $P_v$. Then it is easy to see that for all $v\in X$, $\phi(v)\cap S(a_z)=\set{\rho''_v}$, and $\rho''_v\in U(a_z,e'_{a_z})$, where $e'_{a_z}$ is the unique edge incident on $a_z$ in $H_z$,  while for all $v\in V(R)\setminus X$, $\phi(v)\cap S(a_z)=\emptyset$. 

Consider now some edge $e=(v,v')\in E(R)$. Assume first that $e\in E(R_i)$ for some $1\leq i\leq q^2$. We then set $\phi(e)=\phi_i(e)$. Let $E'\subseteq E(R)\setminus\bigcup_{i=1}^{q^2}E(R_i)$ be the set of all remaining edges. Then every edge in $E'$ connects a pair of vertices that both belong to $X'$, and $|E'|< |X'|=4q^{z+1}$. We assign a distinct integer $1\leq j(e)< 4q^{z+1}$ to each edge $e\in E'$. Consider now some edge $e=(x,x')\in E'$, and assume that $j(e)=j$. We then embed the edge $e$ into the path $\beta_{2j}(Q,Q')$, where $Q,Q'\in \hat \qset$ are the paths with $Q\subseteq \phi(x)$ and $Q'\subseteq \phi(x')$.
It is easy to verify that this gives a valid model $\phi$ of $R$ in graph $G$, that has the required properties.
\end{proof}

\subsection{Embedding $H$ into $G$}

Before we provide a construction of a variable-width embedding of some graph $H\in\hset$ into $G$, we need one last ingredient, which is a new way to split a cluster. 

Recall that in Section~\ref{sec: splitting a cluster} we proved Theorem~\ref{thm: main advanced splitting}, that allowed us to split one cluster into two. The splitting itself can be viewed as being ``sequential'': that is, we start with some cluster $C$, and two large disjoint subsets $T_1,T_2$ of its vertices, so that $T_1\cup T_2$ is well-linked in $G[C]$ (we omit precise technical details in this informal overview). We then showed that there are two disjoint clusters $X,Y\subseteq C$, such that there are large subsets $\tT_1 \subseteq T_1\cap X$ and $\tT_2 \subseteq T_2\cap Y$ of vertices, and a large set $E'\subseteq E(X,Y)$ of edges with the following properties. Let  $\Upsilon_X$ and $\Upsilon_Y$ denote the endpoints of the edges of $E'$ that belong to $X$ and $Y$, respectively. Then $\tT_1\cup \Upsilon_X$ is well-linked in $G[X]$, and $\tT_2\cup \Upsilon_2$ is well-linked in $G[Y]$ (see Figure~\ref{fig: splitting-sequential}). In this section, we need a different way to split $C$ into clusters $X$ and $Y$, that we can think of as being ``parallel'', as opposed to the ``sequential'' splitting described above. Our goal is to find two disjoint clusters $X,Y\subseteq C$, together with two large subsets $\tT_1\subseteq T_1\cap X$ and $\tT_2\subseteq T_2\cap X$ of vertices, and a large collection $\rset$ of node-disjoint paths, connecting vertices of $X$ to vertices of $Y$, such that, if we denote by $\Upsilon_X$ and $\Upsilon_Y$ the endpoints of the paths in $\rset$ that lie in $X$ and $Y$, respectively, then $\Upsilon_Y$ is well-linked in $G[Y]$, and $\tT_1\cup \tT_2\cup \Upsilon_X$ are well-linked in $G[X]$ (see Figure~\ref{fig: splitting-parallel}). 

\begin{figure}[h]
\centering
\subfigure[Starting Point]{\scalebox{0.4}{\includegraphics{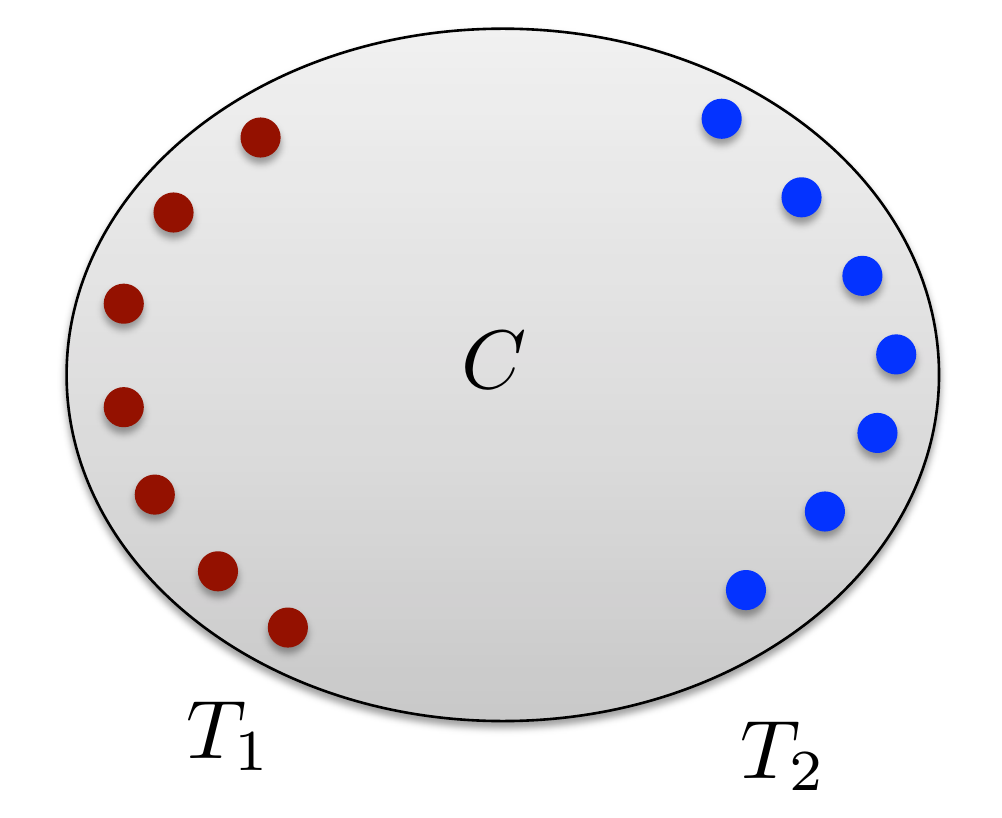}}\label{fig: splitting-start}}
\hspace{0.5cm}
\subfigure[Sequential Splitting]{\scalebox{0.4}{\includegraphics{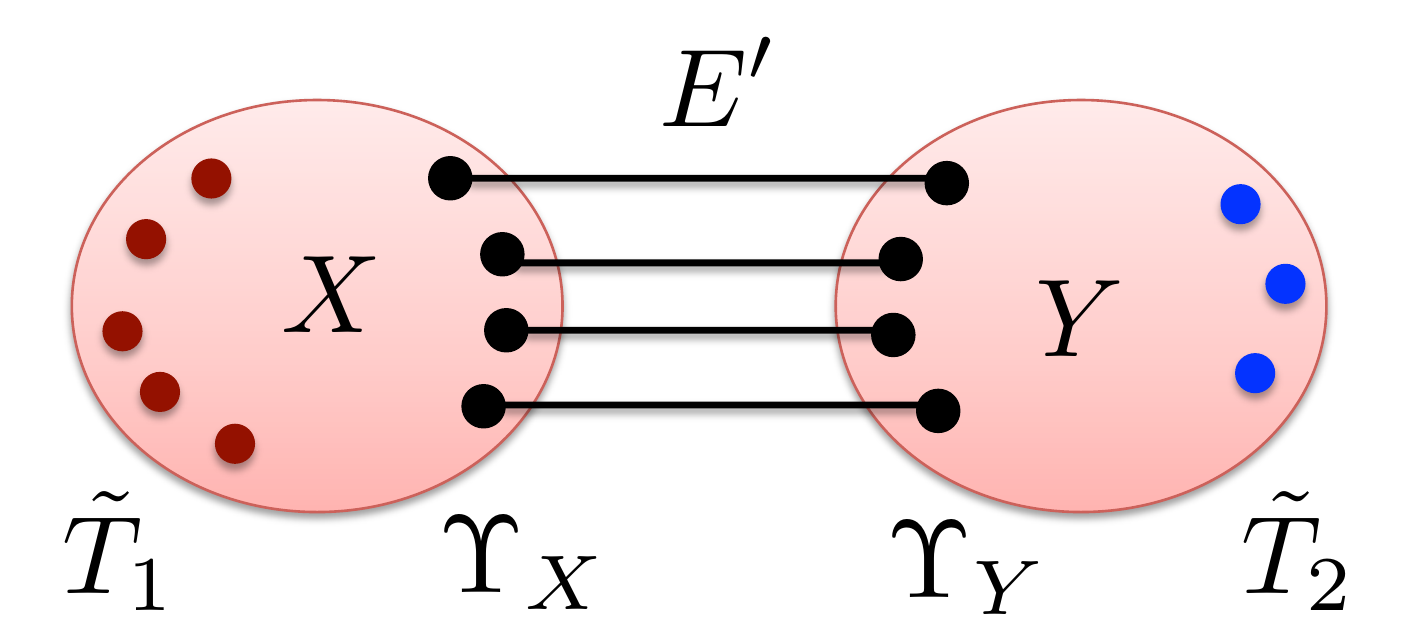}}\label{fig: splitting-sequential}}
\hspace{0.5cm}
\subfigure[Parallel Splitting]{\scalebox{0.4}{\includegraphics{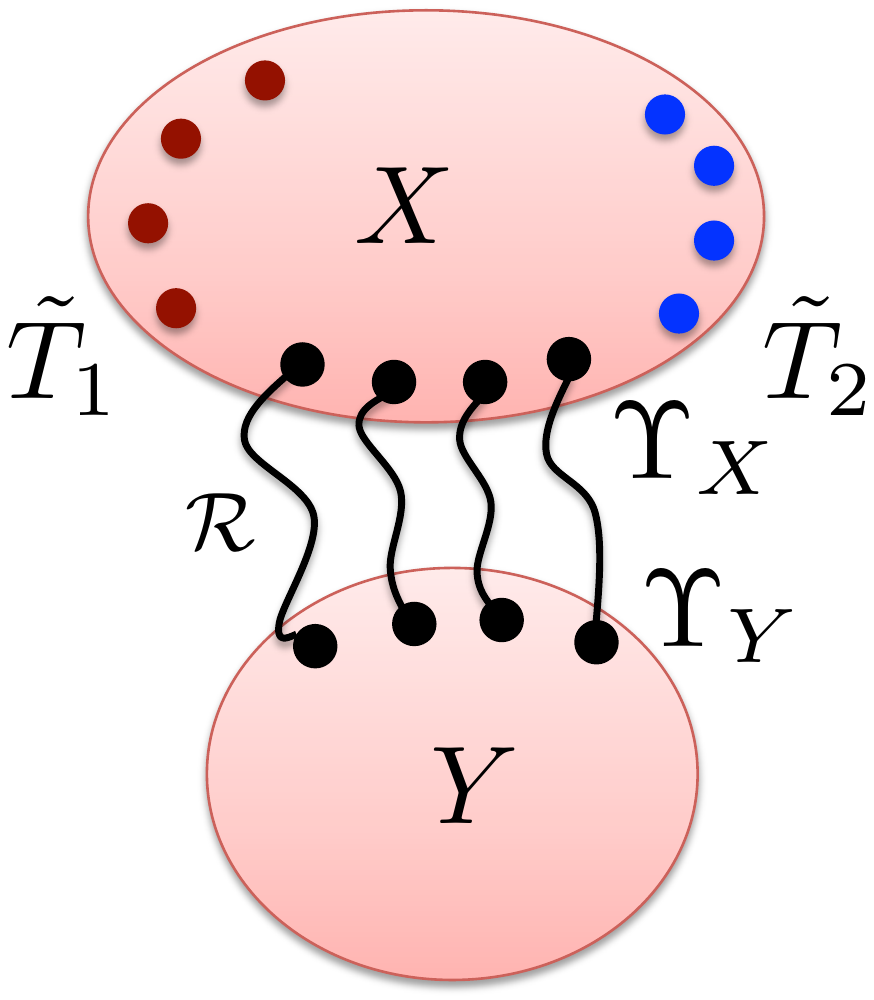}}\label{fig: splitting-parallel}}
\caption{Two ways to split a cluster \label{fig: splitting-cluster}}
\end{figure}

We summarize the new splitting procedure in the following theorem. As before, since we only focus on the subgraph $G[C]$ of $G$, we can ignore the rest of the graph, so we will denote $G[C]$ by $G$ in the next theorem.

\begin{theorem}\label{thm: parallel splitting}
For every integer $d>0$, there is an integer $c(d)>0$ depending only on $d$, such that the following holds. Let $G$ be any graph of maximum vertex degree at most $d$, and let $T_1,T_2$ be two disjoint subsets of vertices of $G$, with $|T_1|=|T_2|=\kappa$, such that $T_1$ and $T_2$ are each node-well-linked in $G$, and $(T_1,T_2)$ are node-linked in $G$. Then there are two disjoint clusters $X,Y\subseteq V(G)$, a set $\rset$ of at least $\kappa/c(d)$ node-disjoint paths connecting vertices of $X$ to vertices of $Y$, so that the paths of $\rset$ are internally disjoint from $X\cup Y$, and two subsets $\tT_1\subsetq T_1\cap X$, $\tT_2\subseteq T_2\cap X$ of at least $\kappa/c(d)$ vertices each such that, if we denote by $\Upsilon_X$ and $\Upsilon_Y$ the endpoint of the paths of $\rset$ lying in $X$ and $Y$ respectively, then:

\begin{itemize}
\item set $\Upsilon_Y$ is node-well-linked in $G[Y]$;

\item each of the tree sets $\tT_1,\tT_2$ and $\Upsilon_X$ is node-well-linked in $G[X]$; 

\item $\tT_1\cup \tT_2\cup \Upsilon_X$ is $1/5$-well-linked in $G[X]$;
and

\item every pair of sets in $\set{\tT_1,\tT_2,\Upsilon_X}$ is node-linked in $G[X]$.
\end{itemize}
\end{theorem}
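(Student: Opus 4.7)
The plan is to reduce this ``parallel'' splitting statement to the sequential splitting theorem of Section~\ref{sec: splitting a cluster} (Theorem~\ref{thm: main advanced splitting}), while crucially exploiting the much stronger hypotheses available here, namely that $T_1$ and $T_2$ are each node-well-linked and $(T_1,T_2)$ are node-linked (rather than merely $1$-well-linked and $\half$-linked). The key conceptual shift is that in Theorem~\ref{thm: main advanced splitting}, the auxiliary cluster $Y$ absorbed one of the two terminal sets, whereas here $Y$ is allowed to be a pure ``side router,'' containing no terminals at all, which makes it feasible to keep large subsets of \emph{both} $T_1$ and $T_2$ inside a single cluster $X$.

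First I would apply Theorem~\ref{thm: main advanced splitting} with a scaled-down copy $T_1'\subseteq T_1$ of size roughly $\kappa/C$ (for a sufficiently large constant $C=C(d)$) playing the role of the ``big'' terminal set, and a correspondingly small subset $T_2'\subseteq T_2$ playing the role of the ``small'' terminal set. The hypotheses of Theorem~\ref{thm: main advanced splitting} are implied by the stronger node-well-linkedness and node-linkedness assumed here, once the degree reduction to $3$ is taken care of (which can be done separately, or by a straightforward extension of the sequential splitting argument to bounded-degree graphs at the cost of losing a factor depending only on $d$). This yields a sequential two-cluster chain $(X^*, Y^*, \tT_1^*, \tT_2^*, E^*)$ in which $Y^*$ has node-well-linked boundary $\Upsilon_{Y^*}$ that is the side into which only a small fraction of each terminal set has been diverted. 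Because $|T_1'|, |T_2'|\ll \kappa$, the cluster $X^*$ still contains most of $T_1$ and most of $T_2$, by the node-well-linkedness of each terminal set (any small cut that removed too many terminals of $T_1$ or $T_2$ from $X^*$ would have to cut many disjoint paths, forcing $|E(X^*,Y^*)|$ to be too large).

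I would then take $Y = Y^*$ and construct $X$ as a subcluster of $X^*$, carefully chosen so that $\Upsilon_X$ (the endpoints in $X$ of the paths $\rset$, which come from the edges $E^*$ routed through $X^*$ to $\Upsilon_{X^*}$), together with the remaining $T_1\cap X$ and $T_2\cap X$, is $\alpha$-well-linked in $G[X]$ for some small constant $\alpha=\alpha(d)$. This boundary cluster $X$ can be identified by a standard well-linked decomposition in $X^*$, picking the component containing the largest fraction of $T_1\cup T_2\cup \Upsilon_{X^*}$; the decomposition loses only a constant factor in the fraction of terminals retained, as in the proof of Theorem~\ref{thm: perfect cluster gives chain}. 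Finally, to upgrade the weak well-linkedness of $\tT_1\cup\tT_2\cup\Upsilon_X$ in $G[X]$ to the $1/5$-well-linkedness required, and to obtain the \emph{node}-well-linkedness of each of the three individual sets $\tT_1,\tT_2,\Upsilon_X$, I would apply Theorem~\ref{thm: grouping} three times (once for each set), and then apply Theorem~\ref{thm: linkedness from node-well-linkedness} for each of the three pairs to obtain pairwise node-linkedness. Each application loses only a factor depending on $d$ and $\alpha$, and can be absorbed into the constant $c(d)$.

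The main obstacle I anticipate is the simultaneous maintenance of three well-linkedness invariants on $X$: node-well-linkedness of each of $\tT_1,\tT_2,\Upsilon_X$, pairwise node-linkedness of the three sets, and $1/5$-well-linkedness of their union. The standard grouping and linkedness-boosting lemmas are designed for two sets at a time, so applying them in sequence for three sets must be done carefully to ensure that the losses compose and that the sets chosen at each stage remain compatible. A secondary technical issue is verifying that after the initial sequential splitting, the cluster $X^*$ truly contains $\Omega(\kappa)$ terminals of \emph{both} $T_1$ and $T_2$ in a well-linked configuration; this is where the stronger hypotheses (node-well-linkedness and node-linkedness, as opposed to edge-flow statements) are crucial, since they give us the necessary control over how the individual terminal sets split across any cut of small capacity.
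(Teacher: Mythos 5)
Your approach is genuinely different from the paper's: you propose to reduce to the sequential splitting theorem (Theorem~\ref{thm: main advanced splitting}) applied to scaled-down terminal sets, whereas the paper never invokes that theorem here; instead it builds $Y$ directly by an iterative balanced-cut argument that shrinks the boundary $\Gamma'(Y)$ geometrically until $\rho/4\le|\Gamma'(Y)|\le\rho$ (Lemma~\ref{lemma: finding Y}), builds $X$ by a separate well-linked decomposition of $V(G)\setminus Y$ with a budget argument showing $|T_1\cap X|,|T_2\cap X|\ge\kappa/2$ (Lemma~\ref{lemma: finding X}), and then uses the strengthened Deletable Edge Lemma (Lemma~\ref{lemma: deletable edge2}) to connect the two. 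Unfortunately, your reduction has a gap that I do not see how to close.

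The sequential $2$-cluster chain deliberately places the selected $T_2$-terminals $\tT_2^*$ in the cluster $Y^*$, not $X^*$, and the guarantee it gives for $X^*$ is only that $\tT_1^*\cup\Upsilon_{X^*}$ is $(k'',\alpha^*)$-well-linked in $G[X^*]$ --- there is no statement whatsoever about $T_2\cap X^*$. You correctly identify that the parallel version needs both $\tT_1$ and $\tT_2$ inside $X$ together with $\Upsilon_X$, but your plan to recover this by a subsequent well-linked decomposition of $X^*$ has no initial well-linkedness of $T_2\cap X^*$ to work with, so that decomposition is not grounded. The claim that ``$X^*$ still contains most of $T_1$ and most of $T_2$'' is also unjustified: $X^*$ and $Y^*$ are disjoint subsets of the minimal subgraph $G^*$ in which $T_1'\cup T_2'$ are the only designated terminals; they do not form a partition of $V(G)$, need not contain the terminals of $T_1\setminus T_1'$ or $T_2\setminus T_2'$ at all, and even the argument via node-well-linkedness of $T_1$ only controls which side of a \emph{partition} the terminals land on, not whether they land inside two possibly small clusters. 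What is genuinely missing from your sketch is a mechanism (like the paper's two separate decompositions with carefully tuned budget bounds, plus the deletable-edge paths to link them) for producing a \emph{single} cluster $X$ that simultaneously retains $\Omega(\kappa)$ terminals of each of $T_1$, $T_2$ and has a small, well-linked effective boundary through which a side cluster $Y$ can be attached. The final boosting steps you describe (via Theorems~\ref{thm: grouping} and~\ref{thm: linkedness from node-well-linkedness}) are sound once that structure is in hand, and match the paper's Step 3, but the earlier steps are the crux.
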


We defer the proof of Theorem~\ref{thm: parallel splitting} to Section~\ref{sec: parallel splitting}.
%
%
%

From Theorem~\ref{thm: embedding to grid minor}, in order to complete the proof of the Excluded Grid Theorem, it is now enough to show that we can find a variable-width embedding of some graph $H\in \hset$ into $G$. We again employ an induction on the number of levels, and for each $1\leq z\leq \log_qg$, we show that if $G_z$ is a graph of a sufficiently large treewidth, then we can find a variable-width embedding of some graph $H_z\in \hset_z$ into $G_z$. Since this inductive proof involves combining several embeddings together, we need a slightly more general notion of variable-width embeddings, called an \emph{anchored embedding}. This is very similar to the definition of the $T$-anchored \ToS, except that for convenience of notation we define it slightly differently. As before, the graph $G$ that we embed into has a special set $T$ of vertices called terminals, and, in addition to embedding $H$ into $G$ using the standard variable-width embedding, we require that there are many node-disjoint paths connecting the root of $H$ to the vertices of $T$. 
Given a standard variable-width embedding $\phi=\left (\set{S(v)}_{v\in V(H)},\set{\pset(e)}_{e\in E(H)}\right )$ of $H$ into $G$, let $G_{\phi}\subseteq G$ denote the graph $G_\phi=\left(\bigcup_{v\in V(H)}G[S(v)]\right )\cup \left (\bigcup_{e\in E(H)}\pset(e)\right )$.
We use the following definition.

\begin{definition}
Let $G$ be any graph, with a special subset $\tset\subseteq V(G)$ of vertices called terminals, and let $w_0\geq 1$ be an integer. Let $H$ be any graph with non-negative integral width values $w(e)$ for edges $e\in E(H)$, and a special vertex $r\in V(H)$ called the root of $H$, that has degree $1$ in $H$. A $T$-anchored variable-width embedding of $H$ into $G$ with parameter $w_0$ consists of:

\begin{itemize}
\item a variable-width embedding $\phi=\left(\set{S(v)}_{v\in V(H)},\set{\pset(e)}_{e\in E(H)}\right )$ of $H$ into $G$;


\item a subset $\tset'\subseteq \tset$ of $w_0$ terminals with $\tset'\cap V(G_{\phi})=\emptyset$; and

\item a set $\pset^*$ of $w_0$ node-disjoint paths in $G$, connecting the vertices of $\tset'$ to some vertices of $S(r)$, such that the paths in $\pset^*$ are internally disjoint from $V(G_{\phi})$.
\end{itemize}

Let $e$ be the unique edge of $H$ incident on $r$, let $U(r)\subseteq S(r)$ be the set of the endpoints of the paths of $\pset(e)$ lying in $S(r)$, and let 
$U'(r)\subseteq S(r)$ be the set of the endpoints of the paths of $\pset^*$ lying in $S(r)$. We say that the embedding is perfect, if the embedding $\phi$ is perfect, and additionally $U'(r)$ is node-well-linked  in $G[S(r)]$ and $(U(r),U'(r))$ are node-linked in $G[S(r)]$.
\end{definition}


Throughout the proof we will be working with a graph $G$ whose maximum vertex degree is at most $3$.
We use three tools in order to embed a graph $H\in \hset$ into $G$. First, we use Corollary~\ref{cor: PoS parameters}, that asserts that, given any graph $G'$ of maximum vertex degree at most $3$, and two disjoint subsets $\tset_1,\tset_2$ of its vertices of cardinality $\kappa'$ each, such that each of $\tset_1,\tset_2$ is node-well-linked in $G'$, $(\tset_1,\tset_2)$ are node-linked in $G'$, and the degrees of the vertices in $T_1\cup T_2$ are at most $2$ in $G'$, there is a perfect \PoS of width $w$ and length $\ell$ in $G'$, if $\kappa'\geq c_p\ell^{17}w$, where $c_p>0$ is some fixed constant. Moreover, if $A_1,B_{\ell}$ are the anchors of the resulting \PoS, then $A_1\subseteq \tset_1$ and $B_{\ell}\subseteq\tset_2$. We note that the corollary requires that $\ell$ is an integral power of $2$, but we can ignore this condition. Indeed, let $\ell'$ be the smallest integral power of $2$ with $\ell'\geq \ell$, so $\ell\leq \ell'\leq 2\ell$. We can obtain a perfect \PoS of width $w$ and length $\ell'$ in $G'$, if for some constant $c$, $\kappa'\geq c w (2\ell)^{17}\geq c_pw(\ell')^{17}$, and then discard the extra clusters.

The second tool is Theorem~\ref{thm: build a ToS system}. For $d=3$, the theorem guarantees that there is some universal constant $c_t$, such that given any graph $G'$ with maximum vertex degree at most $3$, and a set $\tset$ of $\k>2$ terminals that are $1$-well-linked in $G'$, there is a perfect $T$-anchored Tree-of-Sets System of size $\ell$ and width $w$ in $G'$, if $\k'\geq c_t w \ell^5$.

The final tool is Theorem~\ref{thm: parallel splitting}. Since our graph $G$ is sub-cubic, we denote the constant $c(d)$ for $d=3$ from Theorem~\ref{thm: parallel splitting} by $c_s$. We assume without loss of generality that $c_s\geq 1$.

We are now ready to set the value of the constant $q$. We let $q$ be a large enough constant, so $q>2^{200}c^2_pc_sc_t$. Recall that we have defined, for $z\geq 1$, a parameter $W_z(d)=72dgq^{z+1}$, that is used in defining width values for edges in graphs in $\hset_z$. Since the bound $d$ on the maximum vertex degree is $3$ throughout the rest of this proof, we will denote by $W_z=W_z(3)=216gq^{z+1}$.
The main result of this section is the following theorem.

\begin{theorem}\label{theorem: find var-width embedding}
For all $z\geq 1$, if $G_z$ is a graph with maximum vertex degree at most $3$, with a subset $T_z\subseteq V(G_z)$ of $\rho_z=gq^{18z+19}$ vertices, such that $T_z$ is node-well-linked in $G_z$, and the degree of every vertex of $T_z$ is at most $2$ in $G_z$, then there is some graph $H_z\in \hset_z$, and a  perfect $T$-anchored variable-width embedding of $H_z$ into $G_z$, with parameter $w_0=W_z$.
\end{theorem}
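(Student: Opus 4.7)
The proof is by induction on $z$.

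For the base case $z=1$, I use Corollary~\ref{cor: PoS parameters} directly. Partition $T_1$ into disjoint halves $T^L, T^R$ of size $\rho_1/2$; both inherit node-well-linkedness from $T_1$, and $(T^L, T^R)$ are node-linked in $G_1$ (since any pair of equal-sized disjoint subsets of $T_1$ admits node-disjoint routing). Applying Corollary~\ref{cor: PoS parameters} with $w = W_1$ and $\ell$ equal to the smallest integral power of $2$ that is at least $4q^2+3$---the hypothesis $k \geq c_p w \ell^{17}$ being satisfied by our choice $q > 2^{200} c_p^2 c_s c_t$---yields a perfect \PoS $(S_1,\ldots,S_\ell)$ in $G_1$ with anchors $A_1 \subseteq T^L \cap S_1$ and $B_\ell \subseteq T^R \cap S_\ell$. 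I embed the vertices $v_0,\ldots,v_{4q^2+1}$ of $H_1$ into $S_2,\ldots,S_{4q^2+3}$ and discard the tail, reserving the first cluster $S_1$ and path set $\pset_1$ for anchoring: $T' = A_1 \subseteq S_1$ is disjoint from $G_\phi$, and $\pset^*$ concatenates node-disjoint paths in $G[S_1]$ from $A_1$ to $B_1$ (afforded by node-linkedness of $(A_1, B_1)$ in the perfect \PoS) with the paths of $\pset_1$, terminating at $A_2 \subseteq S(r) = S_2$. Perfectness of the \PoS at $S_2$ provides the remaining well-linkedness of $U(r) = B_2$ and $U'(r) = A_2$ and their node-linkedness in $G[S_2]$.

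For the inductive step ($z > 1$), I exploit the freedom in the definition of $\hset_z$: both the tree $\tau_z$ and each recursive subgraph $H_{z-1}(v) \in \hset_{z-1}$ may be chosen freely. The plan has three steps: (i) carve out $q^2$ disjoint sub-regions of $G_z$, each well-linked and terminal-rich enough for a recursive invocation, using iterated applications of Theorem~\ref{thm: parallel splitting}; (ii) in the residual main region, build a perfect \ToS representing $\tau_z$ via Theorem~\ref{thm: build a ToS system}, and further split its anchoring region into a long \PoS representing $P_z$ via Corollary~\ref{cor: PoS parameters}; (iii) glue the recursive embeddings into the main structure through the anchoring interfaces. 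For step (i), each application of Theorem~\ref{thm: parallel splitting} peels off a side cluster $Y$ with node-well-linked boundary $\Upsilon_Y$, while preserving a smaller main cluster with two node-well-linked sides $\tilde T_1, \tilde T_2$. Managed as a balanced recursion over both main and side branches (noting that $\Upsilon_Y$ can be split into two disjoint well-linked halves to enable further splitting of $Y$), after $O(\log q)$ levels I obtain $q^2$ disjoint side clusters, each carrying at least $\rho_z / c_s^{O(\log q)} = \rho_z / q^{O(1)} \geq \rho_{z-1}$ well-linked vertices; after boosting via Theorem~\ref{thm: grouping} to node-well-linkedness and subdividing edges to enforce the degree-$\leq 2$ condition on terminals, the inductive hypothesis at level $z-1$ yields a perfect $T$-anchored variable-width embedding of some $H_{z-1}(v_i) \in \hset_{z-1}$ in each side cluster, with parameter $W_{z-1}$.

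For step (ii), Theorem~\ref{thm: build a ToS system} applied to the residual main cluster with $\ell = 4q^2$ and width $\geq W_z$ yields a perfect $T$-anchored \ToS whose tree minus the root serves as $\tau_z$, with $b_z$ the unique neighbor of the root. The \ToS's anchoring region---its $w_0'$ node-disjoint paths from external terminals to $S(b_z)$---is then further split into a perfect \PoS of length $8q^{z+1}+1$ and width $W_z$ using Corollary~\ref{cor: PoS parameters}, thereby embedding $P_z$; the outermost cluster of this \PoS supplies the outer $T'$ and $\pset^*$ required by the theorem at level $z$. For step (iii), $V_z \subseteq V(\tau_z) \setminus \{b_z\}$ is chosen as $q^2$ non-adjacent leaves or degree-$2$ vertices of the \ToS tree (which always exist in a tree on $\geq 4q^2$ vertices), and for each such $v_i$, the $W_{z-1}$ anchoring paths of the recursive embedding in the associated side cluster $Y_i$ are concatenated with the connecting paths $\rset_i$ from the parallel splitting that originally separated $Y_i$, reaching $S(v_i)$ in the \ToS to form $\pset(e'_{v_i})$.

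The principal obstacle is ensuring \emph{perfectness} at each $S(v_i)$ for $v_i \in V_z$, where up to three sets of incident path endpoints meet---one or two from the adjacent vertices of $\tau_z$ and one from the edge $e'_{v_i}$---and must be each node-well-linked, pairwise node-linked, and jointly $1/5$-well-linked in $G[S(v_i)]$. Theorem~\ref{thm: parallel splitting} provides precisely this bundle of guarantees via its output sets $\tilde T_1, \tilde T_2, \Upsilon_X$, so the final parallel splitting performed at each $v_i$ must be orchestrated so that the two \ToS-neighbor endpoint sets take the roles of $\tilde T_1, \tilde T_2$ while the $e'_{v_i}$-endpoints play the role of $\Upsilon_X$. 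Propagating and composing these perfectness conditions across all $\log_q g$ levels of the recursion, while verifying that the terminal budget $\rho_z = gq^{18z+19}$ absorbs the constant-factor losses from Theorems~\ref{thm: parallel splitting},~\ref{thm: build a ToS system}, and~\ref{thm: grouping} together with Corollary~\ref{cor: PoS parameters} at each level (which requires $\log c_s = O(1)$ with a sufficiently small implied constant), constitutes the main technical burden of the proof.
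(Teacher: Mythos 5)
Your base case is correct and matches the paper's approach. The inductive step, however, has a genuine structural gap in the ordering of operations in steps (i)–(ii).

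You propose to \emph{first} peel off $q^2$ disjoint side clusters $Y_1,\ldots,Y_{q^2}$ via an $O(\log q)$-level balanced recursion of Theorem~\ref{thm: parallel splitting}, and only \emph{afterward} build a perfect \ToS inside the residual main region. The problem is alignment: when $Y_i$ is carved off, the connecting paths $\rset_i$ from $Y_i$ into the main region terminate at some set $\Upsilon_X(Y_i)$ of vertices, and the \ToS construction (Theorem~\ref{thm: build a ToS system}) then chooses its own clusters inside the main region with no knowledge of the sets $\Upsilon_X(Y_i)$. There is no mechanism that forces the endpoints of $\rset_i$ to land in, or be routable to, the specific \ToS cluster $S(v_i)$ you later want to attach $H^i_{z-1}$ to; in fact, $\Upsilon_X(Y_i)$ may land entirely in the interiors of paths $\pset(e)$ or in clusters $S(v)$ for vertices $v \notin V_z$. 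Your closing paragraph actually gestures at the fix---orchestrating the ``final parallel splitting performed at each $v_i$'' so that the two \ToS-neighbor endpoint sets in $S(v_i)$ play the roles of $\tT_1, \tT_2$ and the $e'_{v_i}$-endpoints play the role of $\Upsilon_X$---but that requires the splitting to happen \emph{inside} the \ToS cluster $S(v_i)$, which is incompatible with having carved $Y_i$ out of $G_z$ before the \ToS exists.

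The paper reverses the order: it first sets aside $G[\tS_4]$ (via a single perfect \PoS of length $4$), builds the $4q^2$-vertex \ToS there, \emph{then} selects $V_z$ and applies Theorem~\ref{thm: parallel splitting} exactly once inside each cluster $S'(u_i)$ for $u_i \in V_z$, obtaining $X_i$ and $Y_i$ as subsets of $S'(u_i)$. The embedding $S(u_i) = X_i$ and the recursive embedding into $Y_i$ are then trivially adjacent, and $\rset_i$ stays inside $S'(u_i)$. This also removes the need for any $O(\log q)$-deep splitting recursion: one application of Theorem~\ref{thm: parallel splitting} per $u_i$ suffices, and the per-level parameter loss is a fixed constant $c_s$ rather than $c_s^{O(\log q)}$, which makes the budgeting against $\rho_z/\rho_{z-1} = q^{18}$ straightforward rather than delicate.
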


Before we prove Theorem~\ref{theorem: find var-width embedding}, observe that for $z=\log_qg$, we get that $\rho_z=gq^{18\log_qg+19}=q^{19}g^{19}$. Therefore, if $G$ is a graph with maximum vertex degree at most $3$, that contains a subset $T\subseteq V(G)$ of $q^{19}g^{19}$ vertices, such that $T$ is node-well-linked in $G$, then there is a variable-width embedding of some graph $H\in \hset$ into $G$, and so from Theorem~\ref{thm: embedding to grid minor}, $G$ contains the $(g\times g)$-grid as a minor. Using Theorem~\ref{thm: degree reduction} to reduce the maximum vertex degree of the input graph $G$ to $3$, we conclude that for some large enough constants $c,c'$, for every $g\geq 2$, if $G$ is a graph of treewidth at least $cg^{19}(\log g)^{c'}$, then $G$ contains the $(g\times g)$-grid as a minor. We now turn to prove Theorem~\ref{theorem: find var-width embedding}.

\subsection{Proof of Theorem~~\ref{theorem: find var-width embedding}}
The proof is by induction on $z$. 

\subsubsection*{Induction base} 
The induction base is when $z=1$. Recall that in this case, $\hset_1$ contains a single graph $H_1$, which is a path containing $4q^2+2$ vertices. Every edge on the path has width $W_1=216gq^{2}$. Let $G_1$ be any graph with maximum vertex degree at most $3$, and a set $T_1\subseteq V(G_1)$ of $\rho_1=gq^{37}$ terminals, which are node-well-linked in $G_1$. Let $T_1',T_1''\subseteq T_1$ be any pair of disjoint subsets of $T_1$ of cardinality $\floor{\rho_1/2}$ each. Since $T_1$ is node-well-linked in $G_1$, it is easy to verify that each of $T_1',T_1''$ is node-well-linked, and $(T_1',T_1'')$ are node-linked in $G_1$. Using  Corollary~\ref{cor: PoS parameters}, we can now obtain  a perfect \PoS of width $w=W_1=216gq^2$ and length $\ell=4q^2+3$, since $\rho_1=gq^{37}$, while $c_p\ell^{17}w=216 c_p gq^2(4q^2+3)^{17}=\Theta(gq^{36})$. Since $q$ is a large enough constant, we can assume that $\floor{\rho_1/2}\geq c_p\ell^{17}w$.

Let $S_1,S_2,\ldots,S_{4q^2+3}$ be the clusters of the \PoS, and for each $1\leq i<4q^2+3$, let $\pset_i$ be the corresponding set of $W_1$ paths, connecting the vertices $B_i\subseteq S_i$ to the vertices $A_{i+1}\subseteq S_{i+1}$. Recall that we are guaranteed that $A_1\subsetq T_1'$. Let the vertices of $H_1$ be denoted by $v_1,\ldots,v_{4q^2+2}$, and assume that they appear on the path in this order, with $v_1=r(H_1)$. For $1\leq i<4q^2+2$, we denote by $e_i$ the edge $(v_i,v_{i+1})$ of $H_1$.

The embedding of $H_1$ into $G_1$ is defined as follows. For each $1\leq i\leq 4q^2+2$, we let $S(v_i)=S_{i+1}$, and for each $1\leq i<4q^2+2$, we let $\pset(e_i)=\pset_{i+1}$. In order to define the set $\pset^*$ of paths, connecting the terminals to $S(v_1)$, recall that $(A_1,B_1)$ are node-linked in $G_1[S_1]$. Therefore, there is a set $\qset: A_1\sconnect B_1$ of node-disjoint paths in $G_1[S_1]$. We then let $\pset^*$ be the concatenation of the paths in $\qset$ and $\pset_1$. These paths are internally disjoint from all clusters $S_2,\ldots,S_{4q^2+3}$, and they are disjoint from all paths in $\bigcup_{i=2}^{4q^2+2}\pset_i$. The properties of the \PoS ensure the required well-linkedness properties inside each cluster $G_1[S(v_i)]$.

\subsubsection*{Induction Step}
We now consider some integer $z>1$, and we assume that the theorem holds for all $z'<z$. Recall that we are given a graph $G_z$ with maximum vertex degree at most $3$, with a subset $T_z\subseteq V(G_z)$ of $\rho_z=gq^{18z+19}$ vertices that we call terminals, such that $T_z$ is node-well-linked in $G_z$,  and the degree of every vertex of $T_z$ is at most $2$. For simplicity of notation, we denote $G_z$ by $G$ and $T_z$ by $T$ from now on.

Let $T',T''\subseteq T$ be any pair of disjoint subsets of $T$ of cardinality $\floor{\rho_z/2}$ each. As before, each set $T',T''$ is node-well-linked,  $(T',T'')$ are node-linked, and the degree of every vertex in $T'\cup T''$ is at most $2$ in $G$. Using  Corollary~\ref{cor: PoS parameters}, we can obtain a perfect \PoS of width $w=\frac{\rho_z}{4^{18}\cdot c_p}$ and length $4$.
Let $\tS_1,\ldots,\tS_4$ be the clusters of the \PoS, and for $1\leq i<4$, let $\tpset_i$ be the set of paths connecting the set $\tB_i\subsetq S_i$ of vertices to the set $\tA_{i+1}\subseteq \tS_{i+1}$ of vertices, and let $\tA_1\subseteq T',\tB_{4}\subseteq T''$ be the anchors of the \PoS (see Figure~\ref{fig: var width embedding 1}). 

At a high level, we will use Corollary~\ref{cor: PoS parameters} in order to split cluster $\tS_2$ into a perfect \PoS, and use it to embed the path $P_z$ of $H_z$. We then use Theorem~\ref{thm: build a ToS system} in graph $G[\tS_4]$, with the vertices of $\tA_4$ serving as terminals to obtain a perfect embedding of some tree $\tau_z$ into $G[\tS_4]$, by constructing a \ToS. Cluster $\tS_3$ is then used in order to embed the edge $(a'_z,b_z)$, and cluster $\tS_1$ is used to construct the set $\pset^*$ of paths, connecting the cluster corresponding to  $r(H_z)$ to the terminals (see Figure~\ref{fig: var width embedding 2}). Finally, we select a subset $V_z$ of $q^2$ vertices of $\tau_z$, and for each such vertex $v\in V_z$, we use 
 Theorem~\ref{thm: parallel splitting}, in order to split $S(v)$ into two clusters, $X(v)$ and $Y(v)$. Cluster $Y(v)$ is then used in order to embed a copy $H_{z-1}(v)$ of some level-$(z-1)$ graph, via the induction hypothesis, while vertex $v$ itself is embedded into $X(v)$.
We now describe these steps in more detail.

\begin{figure}[h]
\centering
\subfigure[The \PoS of length $4$ and width $w=\frac{\rho_z}{4^{18}\cdot c_p}$.]{\scalebox{0.3}{\includegraphics{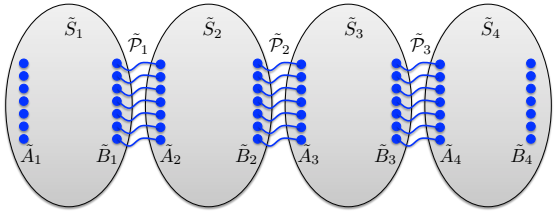}}\label{fig: var width embedding 1}}
\hspace{0.5cm}
\subfigure[The embeddings of path $P_z$ and tree $\tau_z$]{
\scalebox{0.3}{\includegraphics{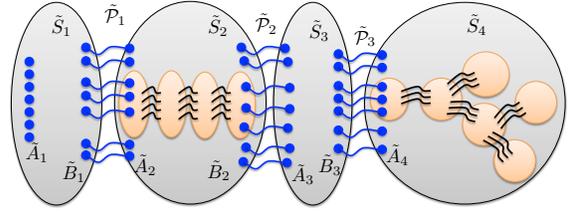}}\label{fig: var width embedding 2}}
\hspace{1cm}
\subfigure[The embedding of the set $\pset^*$ of paths. The sets $\tpset'\subseteq \tpset_1$ and $\qset$ of paths, and the set $X\subseteq \tilde{B}_1$ of vertices are shown in green. ]{
\scalebox{0.33}{\includegraphics{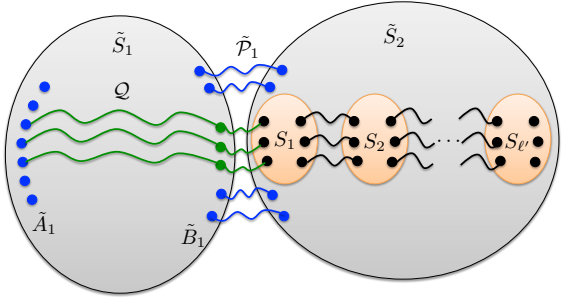}}\label{fig: var width embedding 3}}
\hspace{0.5cm}
\subfigure[Connecting the embeddings of path $P_z$ and tree $\tau_z$. Set $\qset_1\subseteq \pset'(e^*)$ of paths, and the vertices of $R_1$ are shown in green. The set $\qset_2\subseteq \tpset_3$ of paths and the set $R_2\subseteq \tilde B_3$ of vertices are shown in orange. The set $B_{\ell'}\subseteq \tilde B_2$ of vertices and the set $\qset_4\subseteq \tilde P_2$ of paths are shown in red. The vertices of $R_3\subseteq \tilde A_3$ and the paths of $\qset_3$ are shown in brown.]{
\scalebox{0.33}{\includegraphics{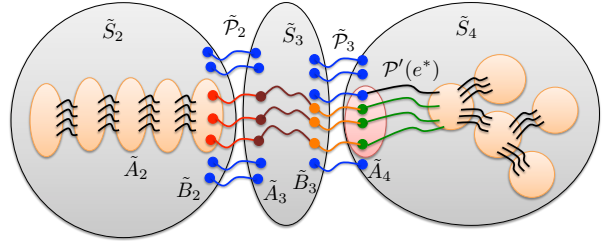}}\label{fig: var width embedding 4}}
\caption{Constructing the variable-width embedding of $H_z$ into $G_z$. \label{fig: var-width embedding}}
\end{figure}

\paragraph{Step 1: embedding the path $P_z$.}
 This step is very similar to the induction base. Let $G'=G[\tS_2]$, and recall that $\tA_2,\tB_2$ are disjoint sets of $w=\frac{\rho_z}{ 4^{18}\cdot c_p}$ vertices; sets $(\tA_2,\tB_2)$ are node-linked in $G'$, and each of the two sets is node-well-linked in $G'$. Since all vertex degrees in $G$ are bounded by $3$, every vertex in $\tilde A_2\cup \tilde B_2$ has degree at most $2$ in $G'$. Let $\ell'=8q^{z+1}+1$ and $w'=W_z=216gq^{z+1}$. Since $c_p(\ell')^{17}w'=c_p(8q^{z+1}+1)^{17}\cdot 216gq^{z+1}\leq 2^{76}c_pgq^{18z+18}$, while $w=\frac{\rho_z}{ 4^{18}\cdot c_p}=\frac{gq^{18z+19}}{4^{18}}$, and $q$ is a large enough constant, $w>c_p(\ell')^{17}w'$.
Therefore, from Corollary~\ref{cor: PoS parameters} there is a perfect \PoS $\left(\set{S_i}_{i=1}^{\ell'},\bigcup_{i=1}^{\ell'-1}\pset_i\right )$ of width $w'$ and length $\ell'$ in $G'$, with the anchors $A_1\subseteq \tA_2$ and $B_{\ell'}\subseteq \tB_2$.
Let $v_1,\ldots,v_{\ell'}$ be the vertices of the path $P_z$ in the order in which they appear on the path, with $a_z=v_1$ and $a'_z=v_{\ell'}$.
For $1\leq i\leq \ell'$, we define the embedding $S(v_i)=S_i$, and for all $1\leq i<\ell'$, if $e'_i$ denotes the edge $(v_i,v_{i+1})$, then we embed $\pset(e'_i)=\pset_i$. 

We now define the set $\pset^*$ of $W_z$ paths, connecting $S(a_z)$ to $\tset'$ (see Figure~\ref{fig: var width embedding 3}). Let $\tpset'\subseteq \tpset_1$ be the subset of paths of the length-$4$ \PoS computed in the previous step, that terminate at the vertices of $A_1\subsetq \tA_2$, and let $X\subseteq \tB_1$ denote the endpoints of the paths in $\tpset'$ that lie in $\tS_1$. Since the vertices of $\tA_1,\tB_1$ are node-linked in $G[\tS_1]$, there is a set $\qset$ of node-disjoint paths in $G[\tS_1]$, connecting every vertex of $X$ to some vertex of $\tA_1\subseteq T'$. By concatenating the paths in $\qset$ and $\tpset'$, we obtain a collection $\pset^*$ of $W_z$ node-disjoint paths, connecting the vertices of $A_1\subseteq S[a_z]$ to some vertices of $\tset'$. The properties of the \PoS ensure the desired well-linkedness properties inside every cluster $S(v_i)$ for $1\leq i\leq \ell'$, including for cluster $S(v_1)$ with respect to the sets of the endpoints of the paths of $\pset^*$ and $\pset(e_1')$ that lie in $S(v_1)$.

\paragraph{Step 2: defining and embedding the tree $\tau_z$.}
In this step, we focus on graph $G''=G[\tS_4]$, and the set $\tA_4$ of $w=\frac{\rho_z}{ 4^{18}\cdot c_p}$ vertices, that are node-well-linked in $G''$. We let $w''=2\rho_{z-1}\cdot c_S$, and $\ell''=4q^2$. Notice that
$c_t w''(\ell'')^5=2c_tc_s\rho_{z-1}\cdot (4q^2)^5=c_tc_s\cdot 2^{11} gq^{18z+11}$, while $w=\frac{\rho_z}{ 4^{18}\cdot c_p}=\frac{gq^{18z+19}}{4^{18}\cdot c_p}$. Since $q$ is a large enough constant, we get that $w>c_t w''(\ell'')^5$. 
Observe also that $\rho_{z-1}=gq^{18z+1}>W_z=216gq^{z+1}$. From Theorem~\ref{thm: build a ToS system}, there is a perfect $\tA_4$-anchored Tree-of-Sets system of width $w''$ and size $\ell''$ in $G''$. We denote the corresponding tree by $\tau$; the embedding of every vertex $v\in \tau$ is denoted by $S'(v)$, and the embedding of every edge $e\in \tau$ is denoted by $\pset'(e)$.
Let $v^*$ denote the root of the tree $\tau$, and let $b_z$ denote its unique neighbor in $\tau$. Recall that $S'(v^*)\subseteq \tA_4$. Let $e^*$ be the edge $(b_z,v^*)$ of the tree $\tau$. We define $\tau_z=\tau\setminus v^*$.

We select a subset $V_z\subseteq V(\tau_z)\setminus \set{b_z}$ of $q^2$ vertices, as in the construction of the graph $H_z$, so that every vertex in $V_z$ has degree $1$ or $2$, and no edge of $\tau_z$ connects any pair of such vertices. We denote $V_z=\set{u_1,\ldots,u_{q^2}}$. We now complete the embedding of the tree $\tau_z$ into $G''$. 

For every vertex $v\in V(\tau_z)\setminus V_z$, the embedding $S(v)=S'(v)$ (that is, we use the embedding given by the \ToS). Similarly, for every edge $e\in E(\tau_z)$, such that none of the endpoints of $e$ belong to $V_z$, we let $\pset(e)$ contain any subset of $W_z$ paths in $\pset'(e)$ (since $|\pset'(e)|=w''=2\rho_{z-1}\cdot c_S>W_z$, such a set exists).

Consider now some vertex $u_i\in V_z$, and assume first that $u_i$ has degree $2$ in $\tau_z$. Let $e$ be the edge connecting $u_i$ to its parent, and $e'$ be the edge connecting it to its child. Let $T^1_i,T^2_i\subseteq S'(u_i)$ be the sets of the endpoints of the paths of $\pset'(e),\pset'(e')$, respectively, that lie in $S'(u_i)$ (see Figure~\ref{fig: embedding ui - start}). Recall that $|T_i^1|,|T^2_i|=w''=2\rho_{z-1}c_S$; each of the two sets is node-well-linked in $G''[S(v_i)]$, and both sets are node-linked in $G''[S(v_i)]$. We use Theorem~\ref{thm: parallel splitting} in order to compute two disjoint clusters $X_i,Y_i\subseteq S'(v_i)$, with the corresponding subsets $\hat{T}_i^1\subseteq T_i^1\cap X_i$, $\hat{T}_i^2\subseteq T_i^2\cap X_i$ of vertices, of cardinality $\rho_{z-1}$ each, such that each of these sets is node-well-linked in $G''[X_i]$, and both sets are node-linked in $G''[X_i]$. We define the embedding $S(u_i)=X_i$. Edge $e$ is embedded into a subset $\pset(e)\subseteq \pset'(e)$ of $W_z<\rho_{z-1}$ paths that contain vertices of $\hat{T}_i^1$ as their endpoints, and edge $e'$ is embedded into a subset $\pset(e')\subseteq \pset'(e')$ of $W_z$ paths that contain vertices of $\hat{T}_{i}^2$ as their endpoints (see Figure~\ref{fig: embedding ui - finish}).
We denote by $\qset_i$ the set of $2\rho_{z-1}$ paths connecting $X_i$ to $Y_i$ that are given by the splitting procedure. We then discard paths from $\qset_i$, until $|\qset_i|=\rho_{z-1}$ holds, and we denote by $\Upsilon_i\subseteq X_i$ and $\Upsilon'_i\subseteq Y_i$ the endpoints of the paths of $\qset_i$ lying in $X_i$ and $Y_i$, respectively.

\begin{figure}[h]
\centering
\subfigure[The beginning. ]{
\scalebox{0.3}{\includegraphics{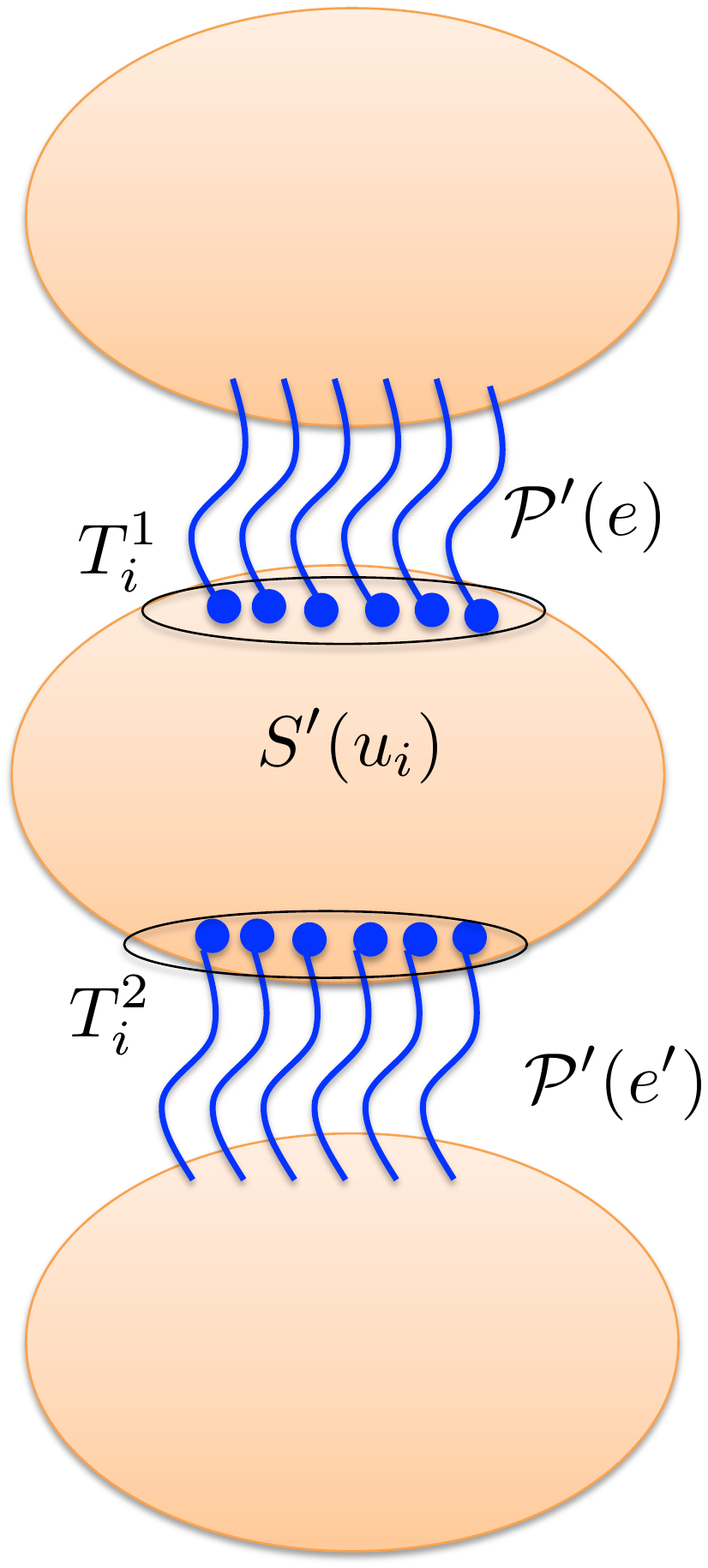}}\label{fig: embedding ui - start}}
\hspace{1cm}
\subfigure[The end. The sets $\pset(e)$, $\pset(e')$ of paths are shown in red.]{
\scalebox{0.3}{\includegraphics{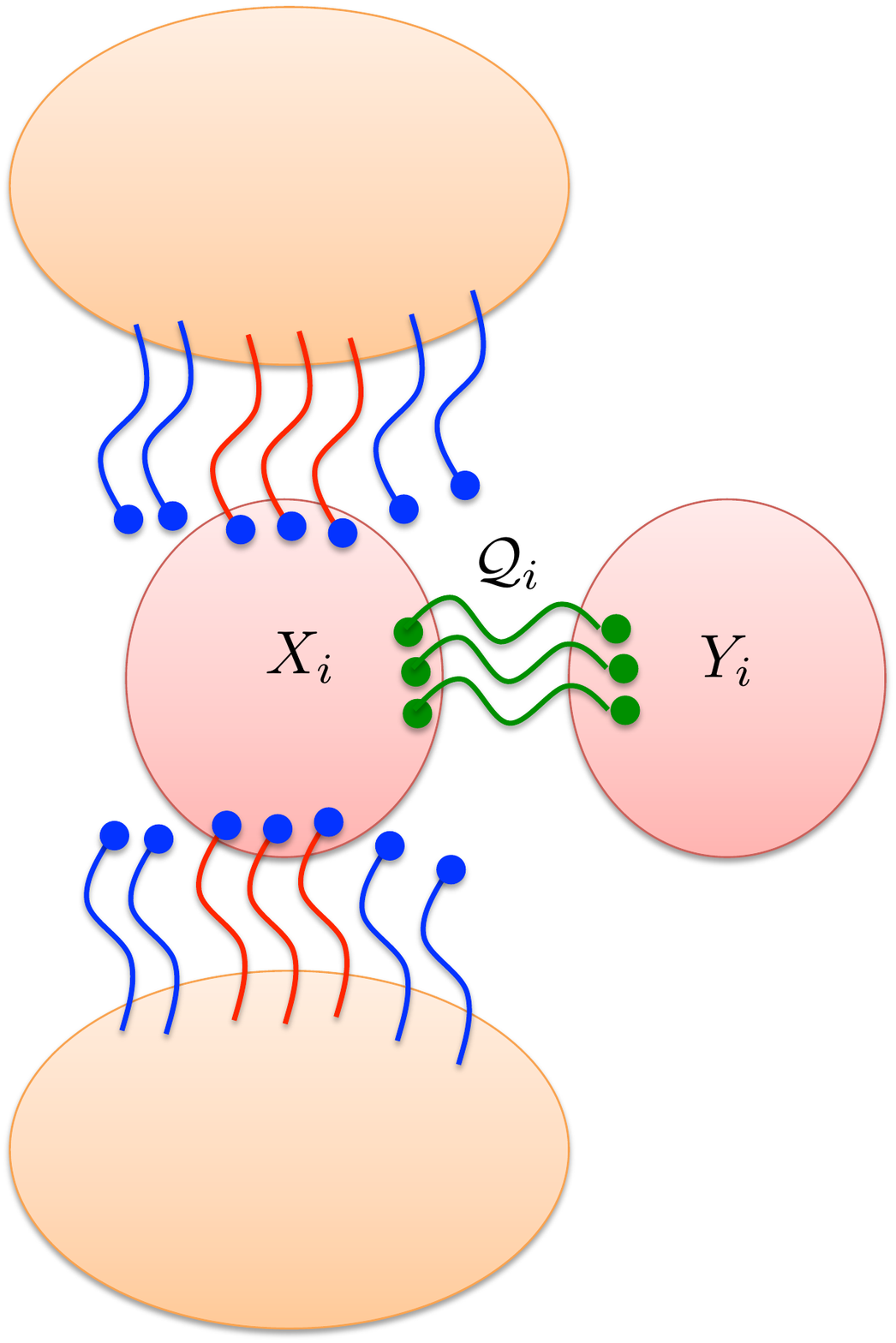}}\label{fig: embedding ui - finish}}
\hspace{1cm}
\subfigure[Connecting the embedding of $H^i_{z-1}$; the paths of $\pset(e'_{u_i})$ are shown in blue]{
\scalebox{0.3}{\includegraphics{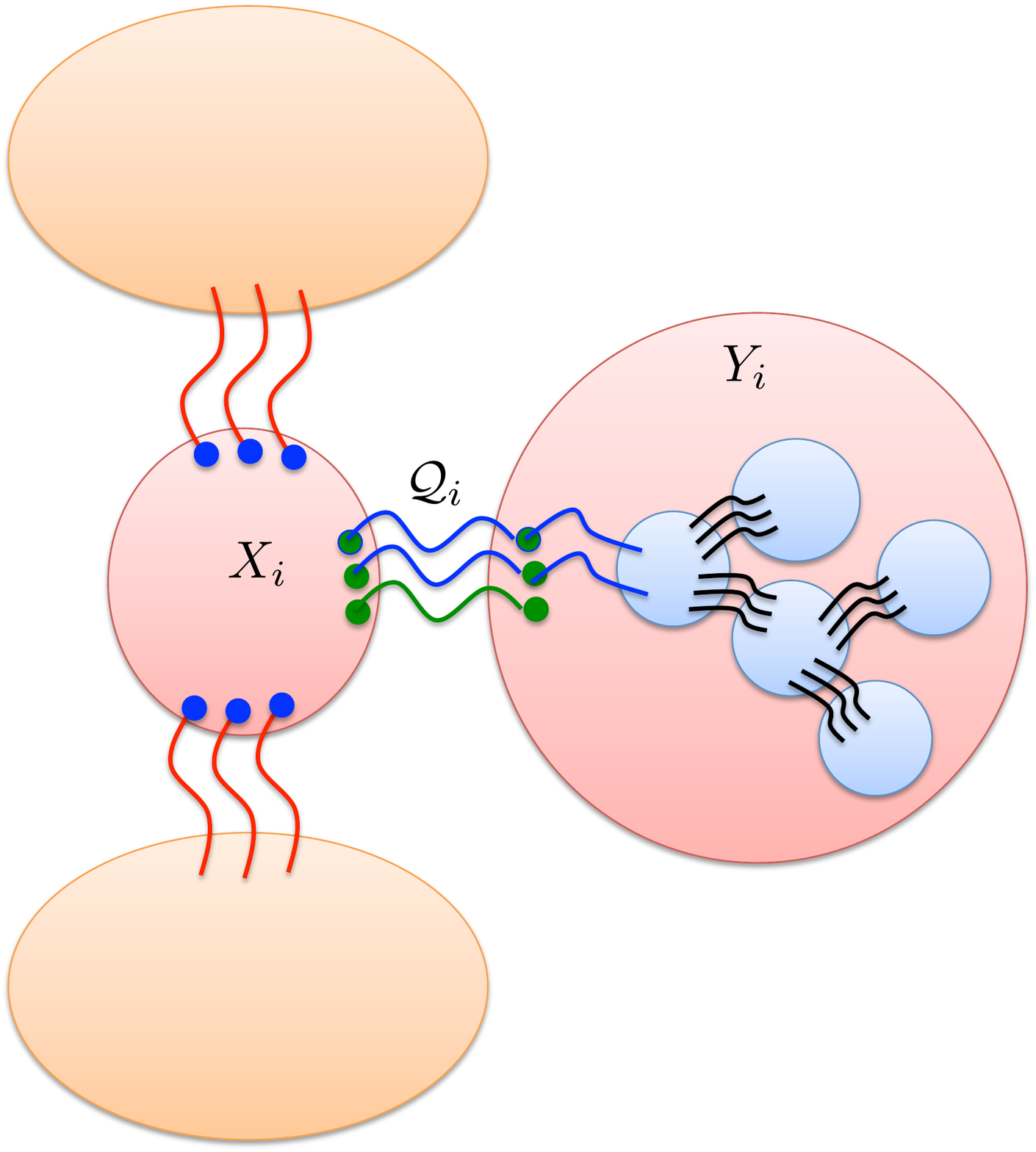}}\label{fig: embedding ui - add next level}}
\caption{{Embedding vertex $u_i\in V_z$.\label{fig: splitting for embedding}}}
\end{figure}

Assume now that $u_i$ has degree $1$ in $\tau_z$, and let $e$ be the unique edge incident on $u_i$. We let $T^1_i,T^2_i\subseteq S'(u_i)$ be two arbitrary disjoint subsets of the endpoints of the paths $\pset'(e)$ that lie in $S'(u_i)$ of cardinality $|\pset'(e)|/2=\rho_{z-1}\cdot c_S$ each. We compute the two clusters $X_i,Y_i\subseteq S'(u_i)$, define the embeddings of $u_i$ and of the edge $e$, the set $\qset_i$ of paths, and the sets $\Upsilon_i,\Upsilon_i'$ of vertices exactly as before.

This completes the definition of the embedding of the tree $\tau_z$ into $G[S_4]$. Let $\hat e$ be the edge $(a'_z,b_z)$ of $H_z$. We now combine the embeddings of $P_z$ and $\tau_z$, by defining an embedding of the edge $\hat e$. Recall that $P_z$ is embedded into $G[\tilde S_2]$, and the set $\pset^*$ of edges is embedded into $\tpset_1\cup G[\tilde S_1]$, while $\tau_z$ is embedded into $G[\tilde S_4]$ (see Figure~\ref{fig: var-width embedding}). We will embed the edge $\hat e$ into the graph $\tpset_2\cup G[\tilde S_3]\cup \tpset_3$ (see Figure~\ref{fig: var width embedding 4}).

Recall that the anchors of the \PoS system computed in Step 1 are $A_1,B_{\ell'}$, with $B_{\ell'}\subseteq \tB_2$, and $|B_{\ell'}|=W_z$. Recall also that the root vertex $v^*$ of the tree $\tau$ has $S'(v^*)\subseteq \tA_4$, and we are given a set $\pset'(e^*)$ of $w''=2c_S\rho_{z-1}>W_z$ paths connecting the vertices of $S'(v^*)$ to some vertices of $S(b_z)$.

Let $\qset_1\subseteq \pset'(e^*)$ be any subset of $W_z$ paths, and let $R_1\subseteq \tA_4$ be the set of their endpoints lying in $S'(v^*)$. Let $\qset_2\subseteq \tpset_3$ be the set of paths terminating of the vertices of $R_1$, and let $R_2\subseteq \tB_3$ be the set of their endpoints lying in $\tilde S_3$. Let $\qset_4\subseteq \tpset_2$ be the set of paths originating from the vertices of $B_{\ell'}$ (the anchors of the \PoS), and let $R_3\subseteq \tA_3$ be the set of their endpoints lying in $\tS_3$. Since the sets $(\tA_3,\tB_3)$ are node-linked in $G[\tS_3]$, there is a set $\qset_3:R_2\sconnect R_3$ of node-disjoint paths in $G[\tS_3]$ (see Figure~\ref{fig: var width embedding 4}). By combining the paths of $\qset_1,\qset_2,\qset_3$ and $\qset_4$, we obtain a collection of node-disjoint paths, that we denote by $\pset(\hat e)$, connecting the vertices of $B_{\ell'}\subseteq S(a'_z)$ to the vertices of $S(b_z)$. We use the set $\pset(\hat e)$ of paths in order to embed the edge $\hat e=(a'_z,b_z)$. It is easy to see that the paths in $\pset(\hat e)$ are completely disjoint from all other paths into which the edges of $P_z\cup \tau_z$ were embedded so far, and from the paths of $\pset^*$, and they are internally disjoint from all clusters into which the vertices of $P_z\cup \tau_z$ were embedded. The paths in $\pset(\hat e)$ are also disjoint from $\bigcup_{i=1}^{q^2}Y_i$.

\paragraph{Step 3: embedding level-$(z-1)$ graphs.}
We now fix some $1\leq i\leq q^2$, and denote $G_{z-1}^i=G_z[Y_i]$. Recall that we are given a set $\Upsilon_i'$ of $\rho_{z-1}$ vertices of $Y_i$, that are node-well-linked in $G_{z-1}^i$. It is easy to see that the degree of every vertex in $\Upsilon_i'$ is at most $2$ in $G_{z-1}^i$. By the induction hypothesis, there is some graph $H_{z-1}^i\in \hset_{z-1}$, and a perfect $\Upsilon'_i$-anchored embedding of $H_{z-1}^i$ into $G_{z-1}^i$, with parameter $w_0=W_{z-1}$. We add $H_{z-1}^i$ to the graph $H_z$ that we are constructing, and connect its root vertex $r(H_{z-1}^i)$ to $u_i$ with an edge that we denote by $e_{u_i}'$. The embedding of every vertex and edge of $H_{z-1}^i$ remains unchanged. It now only remains to define the embedding of $e_{u_i}'$ (see Figure~\ref{fig: embedding ui - add next level}). Recall that the $\Upsilon'_i$-anchored embedding of $H_{z-1}^i$ into $G_{z-1}^i$ defines a set $\pset^*_i$ of $W_{z-1}$ paths, connecting some vertices of $S(r(H_{z-1}^i))$ to some subset $\Upsilon''_i\subseteq \Upsilon'_i$ of $W_{z-1}$ vertices. Let $\qset'_i\subseteq \qset_i$ be the subset of paths (connecting $X_i$ to $Y_i$, that we have computed when splitting $S'(u_i)$), that contain the vertices of $\Upsilon''_i$ as their endpoints. We then set the embedding of $e_{u_i}'$ to be the set  $\pset(e_{u_i}')$ of paths, obtained by concatenating the paths of $\pset^*$ with the paths of $\qset'_i$.
This concludes the construction of the level-$z$ graph $H_z\in \hset_z$, and its $T_z$-anchored variable-width embedding into $G_z$. It is immediate to verify that the resulting embedding is perfect, due to the properties of the perfect \PoS, the perfect Tree-of-Sets system, and the guarantees given by Theorem~\ref{thm: parallel splitting}.

\label{-------------------------------------------sec: parallel splitting---------------------------------------}
\section{Parallel Cluster Splitting}\label{sec: parallel splitting}
This section is devoted to the proof of Theorem~\ref{thm: parallel splitting}.
%
Let $T=T_1\cup T_2$. We refer to the vertices of $T$ as terminals. The following observation follows immediately from the well-linkedness properties of $T_1$ and $T_2$ in $G$.

\begin{observation}\label{obs: well-linkedness of T}
The vertex set $T$ is $1/3$-well-linked in $G$.
\end{observation}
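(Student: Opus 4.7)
The plan is as follows. Fix any pair of disjoint equal-sized subsets $T',T''\subseteq T$, and write $A=T'\cap T_1$, $C=T'\cap T_2$, $B=T''\cap T_1$, $D=T''\cap T_2$, so that $A,B,C,D$ are pairwise disjoint, $A\cup C=T'$, $B\cup D=T''$, and in particular $|A|+|C|=|B|+|D|$. To establish $1/3$-well-linkedness, I need to exhibit a flow $F:T'\sconnect_3 T''$ in $G$, that is, a flow of value $|T'|$ from $T'$ to $T''$ with edge-congestion at most $3$.

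The idea is to split the demand into three pieces that can each be routed using one of the three hypotheses (node-well-linkedness of $T_1$, of $T_2$, and node-linkedness of $(T_1,T_2)$), and then take the union. Assume without loss of generality that $|A|\geq |B|$; the other case is symmetric. Then $|D|-|C|=|A|-|B|\geq 0$. Pick an arbitrary subset $A_0\subseteq A$ with $|A_0|=|B|$, and let $A_1=A\setminus A_0$, so $|A_1|=|A|-|B|=|D|-|C|$; pick also an arbitrary $D_0\subseteq D$ with $|D_0|=|C|$, and set $D_1=D\setminus D_0$, so $|D_1|=|A_1|$. I will route $A_0$ to $B$, $C$ to $D_0$, and $A_1$ to $D_1$.

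The three flows are built as follows. Since $T_1$ is node-well-linked in $G$ and $A_0,B\subseteq T_1$ are disjoint and equal-sized, there is a collection of node-disjoint paths $F_1:A_0\sconnect B$ in $G$. Similarly, since $T_2$ is node-well-linked and $C,D_0\subseteq T_2$ are disjoint and equal-sized, there is a collection of node-disjoint paths $F_2:C\sconnect D_0$ in $G$. Finally, since $(T_1,T_2)$ are node-linked and $A_1\subseteq T_1$, $D_1\subseteq T_2$ satisfy $|A_1|=|D_1|$, there is a collection of node-disjoint paths $F_3:A_1\sconnect D_1$ in $G$. Each $F_j$ is a unit flow of node-congestion (hence edge-congestion) $1$.

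Let $F=F_1\cup F_2\cup F_3$. By construction $F$ sends one unit out of every vertex of $T'=A_0\cup A_1\cup C$ and into every vertex of $T''=B\cup D_0\cup D_1$, so $F:T'\sconnect_{\eta} T''$ for some congestion $\eta$, and since each of $F_1,F_2,F_3$ has edge-congestion at most $1$, the combined flow has edge-congestion at most $3$. This gives $F:T'\sconnect_3 T''$, proving that $T$ is $1/3$-well-linked in $G$. I do not anticipate any real obstacle here: the only point that requires any care is the balancing of demands between the four parts $A,B,C,D$, which is handled by the case analysis on whether $|A|\geq |B|$ or $|A|<|B|$, and by the identity $|A|+|C|=|B|+|D|$.
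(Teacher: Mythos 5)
Your proof is correct, and it is precisely the argument the paper has in mind: the paper states only that the observation "follows immediately from the well-linkedness properties of $T_1$ and $T_2$ in $G$," and your three-way demand decomposition (intra-$T_1$, intra-$T_2$, and cross via node-linkedness of $(T_1,T_2)$), balanced by the identity $|A|+|C|=|B|+|D|$, is the standard way to make that explicit. Note that the cross term does require the node-linkedness hypothesis, not just the node-well-linkedness of each $T_i$, which your proof correctly uses even though the paper's prose only names $T_1$ and $T_2$.
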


 Let $G'$ be the smallest (with respect to edge-deletion) subgraph of $G$, such that $T$ is $1/3$-well-linked in $G'$. Notice that it is enough to find the clusters $X$ and $Y$ and the set $\rset$ of paths with the required properties (including the well-linkedness) in graph $G'$. In order to simplify the notation, we denote $G'$ by $G$ from now on.
We use the following lemma, which slightly generalizes and strengthens the Deletable Edge Lemma of Chekuri, Khanna and Shepherd~\cite{deletable-edge-original}.

\begin{lemma}\label{lemma: deletable edge2}
Let $H$ be any graph, $T\subseteq V(H)$ any subset of its vertices, such that for some $0<\alpha<1$, $T$ is $\alpha$-well-linked in $H$, and $H$ is a minimal graph with respect to edge-deletion in which $T$ is $\alpha$-well-linked. Let $\Gamma\subseteq V(H)\setminus T$ be another subset of vertices, so that $\Gamma$ is $\alpha'$-well-linked in $H$, for some $0<\alpha'<1$. Then there is a set $\pset$ of $\floor{\alpha'|\Gamma|/3}$ edge-disjoint paths that connect vertices of $T$ to vertices of $\Gamma$ in $H$.
\end{lemma}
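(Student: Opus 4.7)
The plan is to prove the contrapositive by exhibiting a deletable edge. Concretely, if the maximum number of edge-disjoint $T$--$\Gamma$ paths in $H$ is some $p<\floor{\alpha'|\Gamma|/3}$, I will produce an edge $e\in E(H)$ such that $T$ remains $\alpha$-well-linked in $H\setminus\{e\}$, contradicting the edge-minimality of $H$. This is a strengthened and somewhat rearranged version of the Deletable Edge Lemma of Chekuri--Khanna--Shepherd, following the same template as the proof of Lemma~\ref{lem: deletable edge} above.

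First, by the edge version of Menger's theorem (equivalently max-flow--min-cut applied to the $T$--$\Gamma$ connectivity), there is a partition $(X,Y)$ of $V(H)$ with $T\subseteq X$, $\Gamma\subseteq Y$ and $|E(X,Y)|=p$, together with a collection $\pset$ of $p$ edge-disjoint $T$--$\Gamma$ paths, each of which uses exactly one edge of $E(X,Y)$. Let $\Gamma^{*}\subseteq \Gamma$ be the set of $\Gamma$-endpoints of the paths in $\pset$, so that $|\Gamma^{*}|\leq p<\alpha'|\Gamma|/3$, and hence $|\Gamma\setminus\Gamma^{*}|> (1-\alpha'/3)|\Gamma|\geq 2|\Gamma|/3$. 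Let $V_Y\subseteq Y$ be the set of $Y$-endpoints of the edges in $E(X,Y)$, and let $\qset=\{P\cap H[Y]:P\in\pset\}$ be the corresponding sub-paths inside $Y$. The idea is that $\qset$, taken together with the $\alpha'$-well-linkedness of $\Gamma$ in $H$, forces $V_Y$ to be well-linked inside $H[Y]\cup\qset$: this plays the role of the well-linkedness of the separator $R$ in the proof of Lemma~\ref{lem: deletable edge} and is the vehicle for transferring $\alpha$-violating cuts on the $Y$-side back into $\alpha$-violating cuts of $H$.

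The candidate edge $e$ will be drawn from $H[Y]$: since $\alpha'$-well-linkedness of $\Gamma$ in $H$ supplies far more internal $\Gamma$--$\Gamma$ flow than is consumed by the $p$ endpoints in $\Gamma^{*}$, there must exist at least one edge of $H[Y]$ whose deletion can be compensated by rerouting through $\Gamma\setminus\Gamma^{*}$. Given any hypothetical $\alpha$-violating cut $(A,B)$ of $H\setminus\{e\}$ with respect to $T$, I will combine three ingredients --- the $\alpha$-well-linkedness of $T$ on the $X$-side, the well-linkedness of $V_Y$ in $H[Y]\cup\qset$, and the $\alpha'$-well-linkedness of $\Gamma$ inside $Y$ --- to reroute the violated flow through $V_Y$ and $E(X,Y)$ and produce an $\alpha$-violating cut already in $H$ itself, contradicting the hypothesis that $T$ is $\alpha$-well-linked in $H$. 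The main obstacle I expect is the congestion bookkeeping in this last rerouting step: each composition of flows introduces a constant multiplicative loss, and the factor of $3$ in the bound $\alpha'|\Gamma|/3$ is designed precisely to absorb the three-way split of $\Gamma$ into $\Gamma^{*}$, $V_Y\cap\Gamma$, and the remaining well-linked bulk of size at least $2|\Gamma|/3$. Verifying that these constants close up, and that an edge $e$ simultaneously avoiding the ``skeleton'' of both the $T$-routing and the $\Gamma$-routing genuinely exists in $H[Y]$, is the delicate part of the argument.
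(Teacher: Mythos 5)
Your proposal takes a genuinely different route from the paper, and the route has a gap that I do not think closes. You are modelling the argument on the proof of Lemma~\ref{lem: deletable edge}: find a $T$--$\Gamma$ min cut $(X,Y)$, argue that the separator $V_Y$ inherits well-linkedness, then show some edge in $H[Y]$ is deletable by \emph{rerouting} any would-be $\alpha$-violating flow/cut around that edge using $V_Y$ and the bulk of $\Gamma$. That template works for Lemma~\ref{lem: deletable edge} because there everything is $(k/4,1)$-well-linked and node-disjoint in a subcubic graph, so paths can be concatenated with zero congestion overhead, and the \emph{exact} well-linkedness constant is preserved. Here $\alpha$ and $\alpha'$ are arbitrary fractional constants, and that exactness is precisely what you must preserve: edge-minimality of $H$ refers to $\alpha$-well-linkedness of $T$, not to $\alpha/c$-well-linkedness for some absorbed constant $c$. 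Every composition step in your rerouting plan (through $V_Y$, through $\Gamma\setminus\Gamma^*$, back across $E(X,Y)$) incurs multiplicative congestion, so the best you can conclude is that $T$ stays $\Omega(\alpha)$-well-linked in $H\setminus\{e\}$, which contradicts nothing. The factor $3$ in $\lfloor\alpha'|\Gamma|/3\rfloor$ is not there to absorb these losses; in the paper's argument it is used for a counting bound on $|\Gamma\cap M|$, as I explain next.

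The actual proof avoids rerouting entirely and argues on the cut side, and the crucial ingredient you are missing is a \emph{minimality} step. Starting as you do from a minimum set $E'$ of edges ($|E'|=\gamma<\alpha'|\Gamma|/3$) disconnecting $T$ from $\Gamma$, one takes $U$ to be the union of components of $H\setminus E'$ meeting $\Gamma$, and then replaces $U$ by a \emph{minimum-cardinality} set $M$ with the three properties $M\cap T=\emptyset$, $|M\cap\Gamma|\geq|\Gamma|/2$, and $|\out_H(M)|\leq\gamma$. Well-linkedness of $\Gamma$ forces $H[M]$ to contain an edge, and \emph{any} such edge $e$ is deletable: given an $\alpha$-violating cut $(A,B)$ of $H\setminus\{e\}$ (which necessarily has $e\in E_H(A,B)$ and $|E_H(A,B)|<\alpha|T\cap A|+1$), submodularity of $|\out_H(\cdot)|$ applied to the pairs $\{A,M\}$ and $\{A\setminus M, M\setminus A\}$ yields $|\out_H(A\cap M)|\leq\gamma$ and $|\out_H(M\setminus A)|\leq\gamma$. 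The minimality of $M$ then forces both $|\Gamma\cap A\cap M|$ and $|\Gamma\cap(M\setminus A)|$ to be strictly below $|\Gamma|/2$, but $|\Gamma\cap M|\geq|\Gamma|-\gamma/\alpha'\geq 2|\Gamma|/3$ (this is where the factor $3$ earns its keep), so the larger of the two pieces has at least $|\Gamma|/2-\gamma/(2\alpha')$ vertices of $\Gamma$ while still being at most half, and the $\alpha'$-well-linkedness of $\Gamma$ then forces its boundary to exceed $\gamma$ --- a contradiction. Without passing from $Y$ to the minimal $M$, the submodularity inequalities give you a bound on $|\out_H(A\cap Y)|$ but no lower bound on $|\Gamma\cap A\cap Y|$ or $|\Gamma\cap (Y\setminus A)|$ to close the loop, and your plan of ``finding an edge that avoids the skeleton of both routings'' is not something the argument needs or can afford to promise. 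I suggest abandoning the rerouting picture and building the proof around the minimal set $M$ and submodularity.
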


\begin{proof}
We follow the proof of~\cite{deletable-edge} almost exactly (slightly tightening their bounds).
We can assume that $|\Gamma|\geq 3/\alpha'$, as otherwise the claim is trivial.
Assume for contradiction that no such set $\pset$ of paths exists. Let $E'$ be a minimum-cardinality set of edges, such that no path connects a vertex of $T$ to a vertex of $\Gamma$  in $H\setminus E'$, and denote $|E'|=\gamma$. Since we have assumed that $\pset$ does not exist, $\gamma\leq \floor{\alpha'|\Gamma|/3}-1$. Let $U$ be the union of all connected components of $H\setminus E'$, that contain vertices of $\Gamma$. Then $T\cap U=\emptyset$, and $|\out_H(U)|\leq |E'|=\gamma$.

Let $M\subseteq V(H)$ be a set of vertices that has the following properties:

\begin{itemize}
\item $M\cap T=\emptyset$;

\item $|M\cap \Gamma|\geq |\Gamma|/2$;

\item $|\out_H(M)|\leq \gamma$;

\item $|M|$ is minimum among all sets satisfying the above properties.
\end{itemize}

Notice that set $U$ satisfies the first three properties, so $M$ is well-defined. Since $|M\cap \Gamma|\geq |\Gamma|/2$, while $|\out_H(M)|\leq \gamma <\alpha'|\Gamma|/3$, from the $\alpha'$-well-linkedness of $\Gamma$, $H[M]$ must contain at least one edge. We claim that any such edge is deletable, in the following claim.

\begin{claim}\label{claim: deletable}
Let $e$ be any edge with both endpoints in $M$. Then $T$ remains $\alpha$-well-linked in $H\setminus \set{e}$.
\end{claim}

Notice that the above claim contradicts the minimality of $H$. Therefore, in order to complete the proof of Lemma~\ref{lemma: deletable edge2}, it is now enough to prove Claim~\ref{claim: deletable}.

\begin{proof}
Assume otherwise.
Then from Observation~\ref{obs: well-linkedness alt def}, there is a partition $(A,B)$ of $V(H)$, with $|A\cap \tset|\leq |B\cap \tset|$, and $|E_H(A,B)|<\alpha |\tset\cap A|+1$, such that $e\in E_H(A,B)$. Let $Z=A\cap M$ and $Z'=B\cap M$ (see Figure~\ref{fig: deletable edge}). 

\begin{figure}[h]
\scalebox{0.4}{\includegraphics{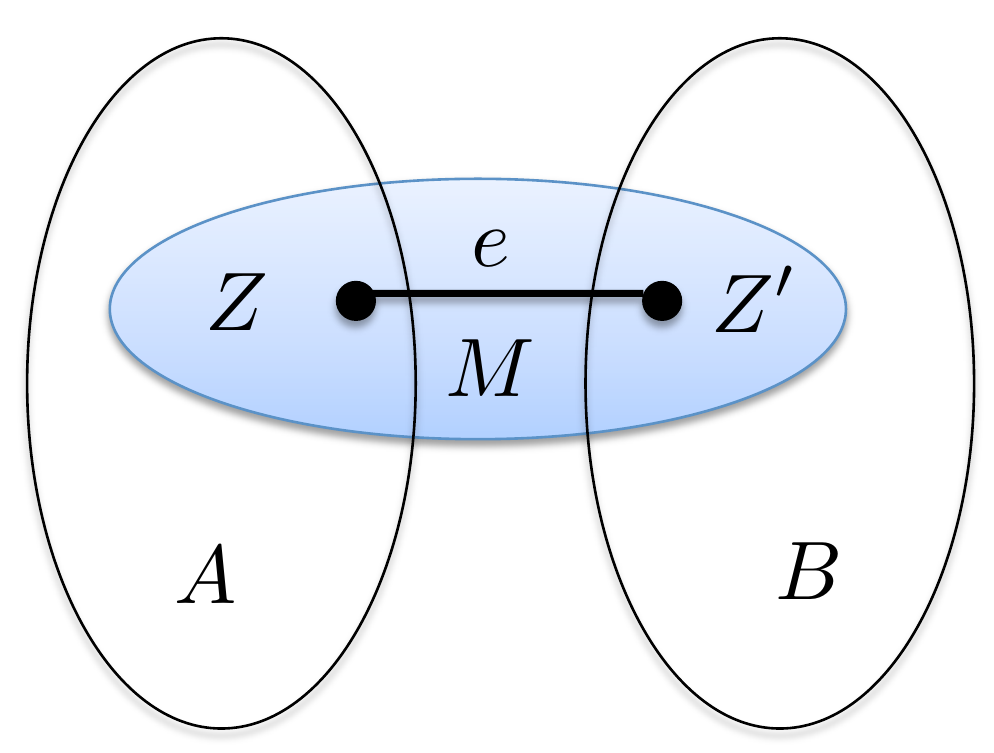}}
\caption{Illustration for Claim~\ref{claim: deletable}\label{fig: deletable edge}}
\end{figure}

From the sub-modularity of cuts,

\[|\out_H(A)|+|\out_H(M)|\geq |\out_H(A\cup M)|+|\out_H(A\cap M)|=|\out_H(A\cup M)|+|\out_H(Z)|,\]

and

\[|\out_H(A)|+|\out_H(M)|\geq |\out_H(A\setminus M)|+|\out_H(M\setminus A)|=|\out_H(A\setminus M)|+|\out_H(Z')|.\]

Recall that $|\out_H(A)|=|E_H(A,B)|<\alpha |\tset\cap A|+1$, while $|\out(M)|\leq \gamma$. On the other hand, $|\out(A\cup M)|\geq \alpha |\tset\cap A|$, from the well-linkedness of $\tset$, and since $M\cap \tset=\emptyset$. Therefore, $|\out_H(Z)|<\gamma+1$. Since $|\out(Z)|$ and $\gamma$ are both integers, $|\out_H(Z)|\leq \gamma$ must hold. Similarly, since $|\out_H(A\setminus M)|\geq \alpha|\tset\cap A|$ (from the well-linkedness of $\tset$), we get that $|\out_H(Z')|<\gamma+1$, and $|\out_H(Z')|\leq \gamma$. From the minimality of $M$, $|Z\cap \Gamma|,|Z'\cap \Gamma|<|\Gamma|/2$ must hold. But since $|\out_H(M)|\leq \gamma$ and $|M\cap \Gamma|\geq |\Gamma|/2$, we get that $M$ must contain at least $|\Gamma|-\gamma/\alpha'$ vertices of $\Gamma$ (from the $\alpha'$-well-linkedness of $\Gamma$). Therefore, $|Z\cap \Gamma|+|Z'\cap \Gamma|\geq |\Gamma|-\gamma/\alpha'$. Assume w.l.o.g. that $|Z\cap \Gamma|\geq |Z'\cap \Gamma|$. Then $|Z\cap \Gamma|\geq \frac{|\Gamma|}{2}-\frac{\gamma}{2\alpha'}$, and since $|Z\cap \Gamma|\leq |\Gamma|/2$, from the well-linkedness of $\Gamma$, $|\out_H(Z)|\geq \alpha'|Z\cap \Gamma|\geq \frac{\alpha'|\Gamma|}{2}-\frac{\gamma}{2}>\gamma$, a contradiction.
\end{proof}
\end{proof}

\begin{corollary}\label{cor: small out-degree of well-linked cluster}
Let $S\subseteq V(G)$ be any cluster, such that $\Gamma(S)$ is $\alpha'$-well-linked in $G[S]$ for some $0<\alpha'<1$. Then $|\Gamma(S)|\leq 12\kappa d/\alpha'$.
\end{corollary}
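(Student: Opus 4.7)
The plan is to apply Lemma~\ref{lemma: deletable edge2} to extract a large collection of edge-disjoint paths from $T$ to $\Gamma(S)\setminus T$, and then bound the number of such paths by the total degree of the terminal set.

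First I note that all the hypotheses of Lemma~\ref{lemma: deletable edge2} are already in place: by Observation~\ref{obs: well-linkedness of T}, $T$ is $(1/3)$-well-linked in $G$, and at the start of the section we replaced $G$ by its minimal subgraph with respect to edge-deletion in which $T$ is $(1/3)$-well-linked. So I can use the lemma with $\alpha=1/3$, on any vertex set $\Gamma\subseteq V(G)\setminus T$ that is $\alpha'$-well-linked in $G$.

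Set $\Gamma'=\Gamma(S)\setminus T$. By Observation~\ref{obs: well-linkedness properties}, every subset of the $\alpha'$-well-linked set $\Gamma(S)$ is $\alpha'$-well-linked in $G[S]$, and $\alpha'$-well-linkedness is preserved when passing to the supergraph $G$. Hence $\Gamma'$ is $\alpha'$-well-linked in $G$, and Lemma~\ref{lemma: deletable edge2} yields a set $\pset$ of at least $\lfloor\alpha'|\Gamma'|/3\rfloor$ edge-disjoint paths in $G$, each connecting a vertex of $T$ to a vertex of $\Gamma'$.

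Every path in $\pset$ uses at least one edge incident on a terminal. Since the maximum vertex degree in $G$ is at most $d$ and $|T|=2\kappa$, the total number of edges incident on $T$ is at most $2\kappa d$, and therefore $|\pset|\leq 2\kappa d$. Combining this with the lower bound from the lemma,
\[
\Bigl\lfloor \frac{\alpha'|\Gamma'|}{3}\Bigr\rfloor \leq 2\kappa d,
\]
so $\alpha'|\Gamma'|/3 < 2\kappa d+1$, which gives $|\Gamma'|<(6\kappa d+3)/\alpha'$. Finally,
\[
|\Gamma(S)| \leq |\Gamma'|+|T| < \frac{6\kappa d+3}{\alpha'}+2\kappa \leq \frac{8\kappa d+3}{\alpha'}\leq \frac{12\kappa d}{\alpha'},
\]
using $2\kappa\leq 2\kappa d/\alpha'$ and $3\leq 4\kappa d$ (we may clearly assume $\kappa,d\geq 1$, else the statement is trivial).

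The only real subtlety is that Lemma~\ref{lemma: deletable edge2} requires $\Gamma\cap T=\emptyset$, which is why I pass from $\Gamma(S)$ to $\Gamma'=\Gamma(S)\setminus T$ and add back the $2\kappa$ at the end; there is no deeper obstacle.
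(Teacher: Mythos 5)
Your proof is correct and uses the same key lemma (Lemma~\ref{lemma: deletable edge2}) and the same counting argument against the total degree of $T$ as the paper's own proof. The one point where you are more careful than the paper: the paper applies the lemma directly to $\Gamma(S)$ without explicitly checking the precondition $\Gamma\cap T=\emptyset$, whereas you pass to $\Gamma(S)\setminus T$ and add back the $2\kappa$ at the end, absorbing it into the constant. This is a minor tightening of an otherwise identical argument, not a different route.
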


\begin{proof}
Assume otherwise. Since  $\Gamma(S)$ is $\alpha'$-well-linked in $G[S]$, it is also $\alpha'$-well-linked in $G$. From Lemma~\ref{lemma: deletable edge2}, there must be a set of more than $2\kappa d$ edge-disjoint paths connecting the terminals in $T$ to $\Gamma(S)$. But $|T|=2\kappa$, and every terminal is incident on at most $d$ edges, so this is impossible.
\end{proof}

We now turn to complete the proof of Theorem~\ref{thm: parallel splitting}. Let $\rho=\frac{\kappa}{64}$. Given any subset $S\subseteq V(G)$ of vertices of $G$, we denote by $\Gamma'(S)=\Gamma(S)\cup (T\cap S)$.

The rest of the proof consists of three steps. In the first step, we compute the set $Y\subseteq V(G)$, so that $\rho/4\leq |\Gamma'(Y)|\leq \rho$, and $\Gamma'(Y)$ is $\alpha(d)$-well-linked in $G[Y]$ for some constant $\alpha(d)$ (that depends on $d$). In the second step, we compute the set $X$, and an initial set $\rset'$ of paths connecting $X$ to $Y$. In the final step, we boost well-linkedness inside the two resulting clusters, to obtain the final set $\rset$ of paths, and the final subsets $\tT_1$ and $\tT_2$ of terminals.

\paragraph{Step 1: Constructing Cluster $Y$.}
  This step is summarized in the following lemma.

\begin{lemma}\label{lemma: finding Y}
There is a cluster $Y\subseteq V(G)$, such that $\Gamma'(Y)$ is $\alpha(d)$-well-linked in $G[Y]$ for some parameter $\alpha(d)=\Omega(1/\poly(d))$, with $0<\alpha(d)<1$, and $\rho/4\leq |\Gamma'(Y)|\leq \rho$. 
\end{lemma}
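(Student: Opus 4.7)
I plan to construct $Y$ by an iterative refinement procedure, starting from $Y_0 = V(G)$, where $\Gamma'(Y_0) = T$ has size $2\kappa = 128\rho$ by Observation~\ref{obs: well-linkedness of T}. At each step I will maintain a cluster $Y$ and decrease $|\Gamma'(Y)|$ by a controlled factor (neither too slowly nor too aggressively), terminating once $|\Gamma'(Y)|$ lands in the window $[\rho/4,\rho]$ with $\Gamma'(Y)$ being $\alpha(d)$-well-linked in $G[Y]$ for some parameter $\alpha(d)=\Omega(1/\poly(d))$ to be tuned (think of $\alpha(d) = \Theta(1/d)$).

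Given the current cluster $Y$, the refinement distinguishes two cases. (i) If $\Gamma'(Y)$ is not $\alpha(d)$-well-linked in $G[Y]$, then by Observation~\ref{obs: well-linkedness alt def} there is an $\alpha(d)$-violating partition $(A,B)$ of $Y$ with $|E(A,B)|<\alpha(d)\min\{|\Gamma'(Y)\cap A|,|\Gamma'(Y)\cap B|\}$; I replace $Y$ by the side containing more vertices of $\Gamma'(Y)$. Since $\Gamma'(Y')\subseteq (\Gamma'(Y)\cap Y')\cup(\text{endpoints of $E(A,B)$ in }Y')$, this yields $|\Gamma'(Y')|\le |\Gamma'(Y)\cap Y'|+|E(A,B)| \le |\Gamma'(Y)|$, and in fact $|\Gamma'(Y')|$ shrinks by at least a factor $(1-(1-\alpha(d))/2)$ from $|\Gamma'(Y)|$. (ii) If $\Gamma'(Y)$ is $\alpha(d)$-well-linked in $G[Y]$ and $|\Gamma'(Y)|>\rho$, I take a minimum $\tfrac14$-balanced cut $(A,B)$ of $G[Y]$ with respect to $\Gamma'(Y)$ and replace $Y$ by the larger side, giving $|\Gamma'(Y')|\le \tfrac{3}{4}|\Gamma'(Y)|+|E(A,B)|$. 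I stop the process once $|\Gamma'(Y)|\le\rho$ and $\Gamma'(Y)$ is well-linked in $G[Y]$.

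The main technical obstacle is Case (ii): when $\Gamma'(Y)$ is well-linked, any balanced cut has at least $\alpha(d)|\Gamma'(Y)|/4$ crossing edges, so a priori $|E(A,B)|$ could blow up $|\Gamma'(Y')|$. I control this by exploiting the minimality of $G$ together with Lemma~\ref{lemma: deletable edge2} applied to $\Gamma = \Gamma'(Y)$: since $T$ is $1/3$-well-linked in $G$ and $\Gamma'(Y)$ is $\alpha(d)$-well-linked in $G[Y]\subseteq G$, the lemma yields $\lfloor\alpha(d)|\Gamma'(Y)|/3\rfloor$ edge-disjoint $T$-to-$\Gamma'(Y)$ paths, and Corollary~\ref{cor: small out-degree of well-linked cluster} then caps $|\Gamma(Y)|$. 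Combining these bounds with the maximum-degree assumption, every balanced cut in Case~(ii) has $|E(A,B)| = O(d\cdot|\Gamma'(Y)|/\alpha(d)^{-1})$, so by choosing $\alpha(d)=\Theta(1/d^{c})$ for a sufficiently large constant $c$, $|\Gamma'(Y')|$ shrinks by a constant factor at each step while never dropping by more than a factor $1/4$.

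Since $|\Gamma'(Y_0)|=128\rho$ and each step multiplies $|\Gamma'(Y)|$ by a factor in $[1/4, 7/8]$, after $O(\log(1/\rho)^{-1})=O(1)$ iterations the cluster $Y$ produced satisfies $|\Gamma'(Y)|\in[\rho/4,\rho]$. The final iteration (always a Case (i) step, if needed) restores $\alpha(d)$-well-linkedness of $\Gamma'(Y)$ in $G[Y]$ without dropping below $\rho/4$ because violating cuts remove at most an $\alpha(d)$-fraction of the boundary, and $\alpha(d)$ is chosen small enough that the combined shrinkage over the whole procedure remains above $\rho/4$. The cluster $Y$ output at the end of the procedure satisfies both required conditions, completing the proof.
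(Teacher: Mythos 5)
Your proposal has a genuine gap in Case~(ii), and the mechanism you invoke to close it does not do so. You claim that for a minimum $\tfrac14$-balanced cut $(A,B)$ of $G[Y]$ one has $|E(A,B)| = O(d\cdot|\Gamma'(Y)|\cdot\alpha(d))$ by combining Lemma~\ref{lemma: deletable edge2} with Corollary~\ref{cor: small out-degree of well-linked cluster}, but these two results only bound $|\Gamma'(Y)|$ itself when $\Gamma'(Y)$ is well-linked; they say nothing about the value of a minimum balanced cut of $G[Y]$. Worse, well-linkedness pushes in the opposite direction: if $\Gamma'(Y)$ is $\alpha(d)$-well-linked in $G[Y]$ then $|E(A,B)|\ge\alpha(d)|\Gamma'(Y)|/4$ for every balanced cut, and there is no a priori upper bound. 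Without an upper bound, the new boundary of the retained side could grow rather than shrink, so the claimed $[1/4,7/8]$ per-step contraction and the $O(1)$ iteration count are unsupported. (A smaller problem: your Case~(i) shrinkage claim, ``a factor $(1-(1-\alpha(d))/2)$,'' also fails, since a violating cut can be arbitrarily unbalanced with respect to $\Gamma'(Y)$, so the minority side might contain only $O(1)$ boundary vertices and the drop could be negligible.)

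The idea you are missing is to replace ``minimality of $G$'' with minimality of the \emph{starting cluster}. The paper takes $S_0$ to be an inclusion-minimal cluster with $|\Gamma'(S_0)|\ge\rho/4$ and $\Gamma'(S_0)$ $\tfrac19$-well-linked in $G[S_0]$ (such a cluster exists, e.g.\ $V(G)$), and bounds $|\Gamma'(S_0)|\le 110\kappa d$ via Corollary~\ref{cor: small out-degree of well-linked cluster}. The minimality of $S_0$ is then exactly what tames the balanced cut: if the set $F$ of new boundary vertices created by a minimum balanced cut of $S_{i-1}\subseteq S_0$ exceeded $|\Gamma'(S_{i-1})|/8$, then a well-linked decomposition of $S_{i-1}$ would produce a strictly smaller cluster $S'\subsetneq S_0$ that is still $\tfrac19$-well-linked with $|\Gamma'(S')|\ge\rho/4$, contradicting the choice of $S_0$. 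This yields the per-step contraction $|\Gamma'(S_i)|\le\tfrac78|\Gamma'(S_{i-1})|$ while keeping $|\Gamma'(S_i)|\ge\rho/2$, so the number of iterations is $O(\log d)$; the well-linkedness degrades by a constant factor per step (via Lemma~\ref{lemma: balanced cut large piece wl}), giving $\alpha(d)=\Omega(1/\poly(d))$. Note also that the paper accounts for new boundary vertices $F$ (not cut edges $E(A,B)$, which could be $d$ times larger), which is what actually enters the bound on $|\Gamma'(A_i)|$.
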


\begin{proof}
 Let $S_0\subseteq V(G)$ be the smallest, inclusion-wise, subset of vertices, such that $|\Gamma'(S_0)|\geq \rho/4$, and $\Gamma'(S_0)$ is $1/9$-well-linked in $G[S_0]$. Note that such a set $S_0$ exists, since, for example, we can take $S_0=V(G)$. From Corollary~\ref{cor: small out-degree of well-linked cluster}, $|\Gamma'(S)|\leq 110\kappa d$. For integers $i\geq 0$, let $\alpha_i=\frac{1}{3^{i+2}}$.
 
We perform a number of iterations. The input to the $i$th iteration is a set $S_{i-1}\subseteq S_0$ of vertices, such that $\Gamma'(S_{i-1})$ is $\alpha_{i-1}$-well-linked in $G[S_{i-1}]$, and $\rho/4\leq |\Gamma'(S_{i-1})|\leq |\Gamma'(S_0)|\cdot \left(\frac 7 8\right )^{i-1}$. Notice that $S_0$ is a valid input to the first iteration. The algorithm terminates once we compute a set $S_i$ with $\rho/4\leq |\Gamma'(S_i)|\leq \rho$.

The $i$th iteration is executed as follows. If $|\Gamma'(S_{i-1})|\leq \rho$, then we terminate the algorithm. Assume now that $|\Gamma'(S_{i-1})|> \rho$, and let $(A_i,B_i)$ be the minimum $1/4$-balanced cut of $S_{i-1}$ in $G$ with respect to $\Gamma'(S_{i-1})$. Assume without loss of generality that $|A_i\cap \Gamma'(S_{i-1})|\geq |B_i\cap \Gamma'(S_{i-1})|$, and let $F\subseteq A_i$ be the set of vertices incident on the edges of $E(A_i,B_i)$. 
We need the following two observations.

\begin{observation}
Set $\Gamma'(A_i)$ is $\alpha_i$-well-linked in $G[A_i]$.
\end{observation}

\begin{proof}
Let $H$ be the graph obtained from $G[S_{i-1}]$, by adding, for every vertex $v\in \Gamma'(S_{i-1})$, an edge $e_v$, whose one endpoint is $v$, and the other endpoint is a new vertex. Then $\Gamma'(S_{i-1})=\Gamma_H(S_{i-1})$, and so $S_{i-1}$ has the $\alpha_{i-1}$-bandwidth property in $H$, and moreover $(A_i,B_i)$ is the minimum $1/4$-balanced cut of $S_i$ in $H$ with respect to $\Gamma_H(S_{i-1})$. From Lemma~\ref{lemma: balanced cut large piece wl}, $A_i$ has the $\alpha'$-bandwidth property in $H$, for $\alpha'=\frac{\alpha_{i-1}}{2-\alpha_{i-1}}\geq \frac{\alpha_{i-1}}{3}=\alpha_i$, and since $\Gamma'(A_i)=\Gamma_H(A_i)$, set $\Gamma'(A_i)$ is $\alpha_i$-well-linked in $G[A_i]$.
\end{proof}

\begin{observation}\label{obs: size goes down}
$|F|\leq |\Gamma'(S_{i-1})|/8$.
\end{observation}

Assume for now that Observation~\ref{obs: size goes down} is correct; we prove it below. Then $|\Gamma'(A_i)|\leq 3|\Gamma'(S_{i-1})|/4+|F|\leq 7|\Gamma'(S_{i-1})|/8$, and, since $|\Gamma'(S_{i-1})|>\rho$, we get that $|\Gamma'(A_i)|\geq |\Gamma'(S_{i-1})|/2\geq \rho/2$. Therefore, $S_i$ is a valid input to the next iteration. Let $z$ be the index of the last iteration, and let $Y=S_z$ be the output of the algorithm. Since $|\Gamma'(S_0)|\leq 110\kappa d$, and for all $0< i\leq z$, $|\Gamma'(S_i)|\leq 7|\Gamma'(S_{i-1})|/8$, while $\rho=\kappa/64$, we get that $z\leq \log_{8/7} 7040d$. We are then guaranteed that $\rho/4\leq |\Gamma'(Y)|\leq \rho$, and $\Gamma'(Y)$ is $\alpha(d)$-well-linked in $G[Y]$, where $\alpha(d)=\frac{1}{3^{z+2}}=\Omega\left(\frac{1}{\poly(d)}\right )$.

It now remains to prove Observation~\ref{obs: size goes down}.

\begin{proofof}{Observation~\ref{obs: size goes down}}
For convenience, we denote $S_{i-1}$ by $S$ and $\Gamma'(S_{i-1})$ by $\Gamma'$. Assume for contradiction that $|F|> |\Gamma'|/8$. Then $|E(A_i,B_i)|> |\Gamma'|/8$, and so for every $1/4$-balanced partition $(A,B)$ of $S$ in $G$ with respect to $\Gamma'$, $|E(A,B)|>|\Gamma'|/8$ must hold. We will show that there is a subset $S'\subsetneq S$, such that $|\Gamma'(S')|\geq\rho/4$, and $\Gamma'(S')$ is $1/9$-well-linked in $G[S']$. Since $S\subseteq S_0$, this will contradict the choice of $S_0$.

In order to do so, we employ standard well-linked decomposition techniques. Throughout the algorithm, we maintain a subset $U\subseteq S$ of vertices, starting with $U=S$, and a subset $E'\subseteq E(S)$ of edges (that we delete from $G[S]$), also starting with $E'=\emptyset$. For accounting purposes, we associate every vertex $v\in S$ with a budget $\beta(v)$, that may change throughout the algorithm as set $U$ changes, as follows: if $v\in \Gamma'(U)$, then $\beta(v)=1/8$, and otherwise $\beta(v)=0$. Throughout the algorithm, we will maintain the invariant that $\sum_{v\in U}\beta(v)+|E'|\leq |\Gamma'|/8$. The invariant clearly holds at the beginning of the algorithm.

An iteration is executed as follows. While $\Gamma'(U)$ is not $1/9$-well-linked in $G[U]$, let $(A,B)$ be any violating partition: that is, if we assume that $|A\cap \Gamma'(U)|\geq |B\cap \Gamma'(U)|$, then $|E(A,B)|<|B\cap \Gamma'(U)|/9$. We add the edges of $E(A,B)$ to $E'$, set $U=A$, and continue to the next iteration. We now verify that the invariant continues to hold. Let $\Gamma_1=A\cap \Gamma'(U)$, and let $\Gamma_2$ be the endpoints of the edges of $E(A,B)$ that lie in $A$. Let $\Gamma_3=B\cap \Gamma'(U)$. Notice that $\Gamma'(A)=\Gamma_1\cup \Gamma_2$. The changes to the budgets of the vertices are the following: the budget of every vertex in $\Gamma_3$ decreases by $1/8$, and the budget of every vertex in $\Gamma_2$ increases by $1/8$. Therefore, in total, the budgets of the vertices in $\Gamma_3$ decrease by $|\Gamma_3|/8$, and, since $|\Gamma_2|\leq |E(A,B)|< |\Gamma_3|/9$, the total increase in the budgets of the vertices in $\Gamma_2$, and the number of edges in $E'$ is bounded by:

\[|\Gamma_2|/8+|E(A,B)|\leq 9|E(A,B)|/8\leq |\Gamma_3|/8,\]

and so the invariant continues to hold. Let $S'$ be the final set $U$ obtained at the end of the algorithm, so $\Gamma'(S')$ is $1/9$-well-linked in $G[S']$. We claim that $|\Gamma'(S')|\geq |\Gamma'|/2$. Indeed, assume otherwise. Consider the last iteration of the algorithm, such that $|\Gamma'(U)|>|\Gamma'|/2$ held at the beginning of the iteration, and let $(A,B)$ be the partition of $U$ computed in that iteration. Then $|\Gamma'(A)|\geq |\Gamma'(U)|/2\geq |\Gamma'|/4$, and $|\Gamma'(A)|\leq 3|\Gamma'|/4$. Therefore, $(A,S\setminus A)$ is a $1/4$-balanced partition of $S$ with respect to $\Gamma'(S)$. However, $E(A,S\setminus A)\subseteq E'$, so $|E(A,S\setminus A)|\leq |\Gamma'|/8$, contradicting our assumption that for every $1/4$-balanced cut $(A,B)$ of $S$ with respect to $\Gamma'(S)$, $|E(A,B)|>|\Gamma'|/8$. Therefore, $|\Gamma'(S')|\geq |\Gamma'|/2\geq \rho/4$ must hold.

We conclude that there is a subset $S'\subsetneq S$, such that $|\Gamma'(S')|\geq \rho/4$, and $\Gamma'(S')$ is $1/9$-well-linked in $G[S']$, contradicting the choice of $S_0$. Therefore, $|F|\leq |\Gamma'(S_{i-1})|/8$ must hold.
\end{proofof}
\end{proof}

\paragraph{Step 2: Constructing Cluster $X$.}
In this step, we prove the following lemma.

\begin{lemma}\label{lemma: finding X}
There is a cluster $X\subseteq V(G)\setminus Y$, such that $|T_1\cap X|,|T_2\cap X|\geq \kappa/2$, and $\Gamma'(X)$ is $1/33$-well-linked in $G[X]$.
\end{lemma}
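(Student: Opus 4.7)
The plan is to take $X$ to be the output of a well-linked decomposition of $V(G)\setminus Y$, where at each splitting step we carefully choose which side to keep so as to preserve large majorities of both $T_1$ and $T_2$. Initialize $X_0:=V(G)\setminus Y$. Since $|\Gamma'(Y)|\leq \rho=\kappa/64$, we have $|T_j\cap X_0|\geq \kappa-\rho\geq 63\kappa/64$ for $j=1,2$, and every edge in $\out(X_0)=\out(Y)$ is incident to a vertex of $\Gamma(Y)$, so $|\Gamma(X_0)|\leq |\out(Y)|\leq d|\Gamma(Y)|\leq d\rho$, and hence $|\Gamma'(X_0)|\leq 2\kappa+d\rho$.

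Iteratively maintain a cluster $U\subseteq X_0$, starting with $U=X_0$. While $\Gamma'(U)$ is not $1/33$-well-linked in $G[U]$, use Observation~\ref{obs: well-linkedness alt def} (with $\Gamma'(U)$ playing the role of the terminal set for the subgraph $G[U]$) to obtain a $(1/33)$-violating partition $(A,B)$ of $U$. Replace $U$ by the side $Z\in\{A,B\}$ satisfying $|T_1\cap Z|\geq \kappa/2$ and $|T_2\cap Z|\geq \kappa/2$. Once no violating cut exists, set $X:=U$.

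The main obstacle is showing the choice rule is always feasible, i.e.\ that at least one side of every $(1/33)$-violating cut contains $\kappa/2$ vertices of each $T_j$. Suppose not; then (up to relabeling) $|T_1\cap A|,|T_2\cap B|\geq \kappa/2$ while $|T_2\cap A|,|T_1\cap B|<\kappa/2$. Pick arbitrary subsets $T_1'\subseteq T_1\cap A$ and $T_2'\subseteq T_2\cap B$ of size $\kappa/2$. By the $(T_1,T_2)$-node-linkedness hypothesis there is a collection of $\kappa/2$ node-disjoint paths connecting them in $G$; every such path either uses an edge of $E_G(A,B)$ or visits a vertex of $\Gamma(U)$ in order to exit $U$. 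Hence
\[
|E(A,B)|+|\Gamma(U)|\geq \kappa/2.
\]
I will derive a contradiction by bounding both quantities separately via a standard potential argument, in the spirit of the proof of Lemma~\ref{lem: partitioning alg}. Assign budget $\beta(v)=1/32$ to each $v\in\Gamma'(U)$ (and $0$ elsewhere), and let $E'$ denote the set of edges cut so far. An iteration changes the total budget by at most $(33/32)|E(A,B)|-|B\cap\Gamma'(U)|/32$, which is non-positive by the violating condition $|E(A,B)|<|B\cap\Gamma'(U)|/33$. Thus the invariant $\sum_v\beta(v)+|E'|\leq |\Gamma'(X_0)|/32$ is preserved, giving $|E'|\leq (2\kappa+d\rho)/32$. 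Since each cut edge adds at most one new vertex to the current external boundary, $|\Gamma(U)|\leq|\Gamma(X_0)|+|E'|\leq d\rho+(2\kappa+d\rho)/32$; and the violating cut satisfies $|E(A,B)|<|\Gamma'(U)|/66\leq (2\kappa+d\rho)/66$. Substituting $d=3$ (as $G$ is subcubic) and $\rho=\kappa/64$ shows $|E(A,B)|+|\Gamma(U)|<\kappa/2$, the required contradiction.

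Consequently the choice rule is always applicable, the invariant $|T_j\cap U|\geq \kappa/2$ is preserved throughout, and the algorithm terminates with $X=U$ such that $\Gamma'(X)$ is $1/33$-well-linked in $G[X]$ (by the stopping condition) and $|T_j\cap X|\geq \kappa/2$ for both $j$. The hardest part is the combinatorial estimate in the contradiction step; it is this calculation that pins down the specific constants $\rho=\kappa/64$ inherited from Lemma~\ref{lemma: finding Y} and $\alpha=1/33$ appearing in the statement.
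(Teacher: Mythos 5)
Your plan — iteratively cut with $(1/33)$-violating partitions while tracking a budget on $\Gamma'(U)$, then stop when the cluster is $1/33$-well-linked — is the right decomposition, and your potential accounting (budget $1/32$ per boundary vertex, non-positive change per split, hence $|E'|\leq|\Gamma'(X_0)|/32$) is essentially the same device the paper uses, modulo a small constant slip: $|\Gamma'(U)|$ can grow to about $\tfrac{33}{32}(2\kappa+d\rho)$ rather than staying below $2\kappa+d\rho$, so the bound on $|E(A,B)|$ needs that extra factor, but the numerics still clear $\kappa/2$ comfortably.

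The gap is in the feasibility argument for your choice rule, and it is a real one. You pick the side $Z\in\{A,B\}$ with $|T_1\cap Z|\geq\kappa/2$ and $|T_2\cap Z|\geq\kappa/2$, and to show that side exists you claim that otherwise ``(up to relabeling)'' the cut must be crossed, i.e. $|T_1\cap A|,|T_2\cap B|\geq\kappa/2$ while $|T_2\cap A|,|T_1\cap B|<\kappa/2$. That is not the only failure mode. Your invariant only guarantees $|T_j\cap U|\geq\kappa/2$, so it is entirely possible that, say, $|T_1\cap U|=3\kappa/5$, the violating cut splits $T_1\cap U$ roughly in half, and then \emph{both} $|T_1\cap A|<\kappa/2$ and $|T_1\cap B|<\kappa/2$. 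In that configuration there is no crossed pattern to exploit: the $\min\{|T_1\cap A|,|T_1\cap B|\}$ node-disjoint paths from node-well-linkedness of $T_1$ give only $|E(A,B)|+|\Gamma(U)|\geq\min\{|T_1\cap A|,|T_1\cap B|\}$, and this minimum can be tiny, so the contradiction with the potential bound evaporates. To rule this mode out you would need a quantitative invariant controlling $|T_1\setminus U|$ and $|T_2\setminus U|$ directly (essentially: small $|\out(U)|$ forces $|T\setminus U|$ small via well-linkedness of $T$), which your argument never establishes.

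A second, compounding problem: you invoke $(T_1,T_2)$-node-linkedness ``in $G$''. At the point where this lemma is applied, the paper has already passed to a minimal subgraph in which only the $1/3$-well-linkedness of $T=T_1\cup T_2$ survives; $T_1$ and $T_2$ need not be individually node-well-linked, and $(T_1,T_2)$ need not be node-linked, in that reduced graph. Using node-linkedness of $(T_1,T_2)$ for a cut $(A,B)$ in the reduced graph gives paths that may traverse deleted edges, so they bound neither $|E_{G'}(A,B)|$ nor $|\Gamma_{G'}(U)|$. The paper sidesteps both issues at once: the choice rule is simply to keep the side with the larger $\Gamma'$-share (so it is always applicable), the budget invariant bounds $|\out(U)|<\kappa/12$ throughout, and then a single application of the $1/3$-well-linkedness of $T$ at the last iteration with $|T\cap U|\geq 3\kappa/2$ forces $|\out(A)|\geq\kappa/6$, a contradiction. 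From $|X\cap T|\geq 3\kappa/2$ one reads off $|T_1\cap X|,|T_2\cap X|\geq\kappa/2$ for free, because $|T_1|=|T_2|=\kappa$. If you want to repair your proof, adopt that simpler choice rule and replace your crossed-cut argument with the ``last iteration where $|T\cap U|\geq 3\kappa/2$'' argument, which only needs the $1/3$-well-linkedness of $T$ and thus survives the minimality reduction.
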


\begin{proof}
The proof uses standard techniques, and is very similar to the proof of Observation~\ref{obs: size goes down}. Throughout the algorithm, we maintain a subset $U\subseteq V(G)\setminus Y$ of vertices, starting with $U=V(G)\setminus Y$, and a subset $E'\subseteq E(G)$ of edges, starting with $E'=\out_G(Y)$. For accounting purposes, we associate every vertex $v\in V(G)$ with a budget $\beta(v)$, that may change throughout the algorithm as set $U$ changes, as follows: if $v\in \Gamma'(U)$, then $\beta(v)=1/32$, and otherwise $\beta(v)=0$. Throughout the algorithm, we will maintain the invariant that $\sum_{v\in U}\beta(v)+|E'|<\kappa/12$, and $\out(U)\subseteq E'$.
Notice that at the beginning of the algorithm, $|E'|\leq \rho=\kappa/64$, and $|\Gamma'(U)|\leq |\tset|+\rho\leq 2\kappa+\kappa/64=129\kappa/64$. It is now easy to verify that the invariant holds at the the beginning of the algorithm.

An iteration is executed as follows. While $\Gamma'(U)$ is not $1/33$-well-linked in $G[U]$, let $(A,B)$ be any violating partition: that is, if we assume that $|A\cap \Gamma'(U)|\geq |B\cap \Gamma'(U)|$, then $|E(A,B)|<|B\cap \Gamma'(U)|/33$. We add the edges of $E(A,B)$ to $E'$, set $U=A$, and continue to the next iteration. We now verify that the invariant continues to hold. If $\out(U)\subseteq E'$ held at the beginning of the iteration, then clearly $\out(A)\subseteq E'$ holds at the end of the iteration. Let $\Gamma_1=A\cap \Gamma'(U)$, and let $\Gamma_2$ be the endpoints of the edges of $E(A,B)$ that lie in $A$. Let $\Gamma_3=B\cap \Gamma'(U)$. Notice that $\Gamma'(A)=\Gamma_1\cup \Gamma_2$. The changes to the budgets of the vertices are the following: the budget of every vertex in $\Gamma_3$ decreases by $1/32$, and the budget of every vertex in $\Gamma_2$ increases by $1/32$. Therefore, in total, the budgets of the vertices in $\Gamma_3$ decrease by $|\Gamma_3|/32$, and, since $|\Gamma_2|\leq |E(A,B)|< |\Gamma_3|/33$, the total increase in the budgets of the vertices in $\Gamma_2$, and the number of edges in $E'$ is bounded by:

\[|\Gamma_2|/32+|E(A,B)|\leq 33|E(A,B)|/32\leq |\Gamma_3|/32,\]

and so the invariant continues to hold. Let $X$ be the final set $U$ obtained at the end of the algorithm, so $\Gamma'(X)$ is $1/33$-well-linked in $G[X]$. Since $|T_1|=|T_2|=\kappa$, it is now enough to show that $|X\cap T|\geq 3\kappa/2$. We do so using the following claim.

\begin{claim}\label{claim: X has many terminals}
$|X\cap T|\geq 3\kappa/2$.
\end{claim}


\begin{proof}
Assume otherwise. Notice that $|Y\cap \tset|\leq |\Gamma'(Y)|\leq \rho=\kappa/64$, so at the beginning of the algorithm, $|U\cap \tset|>3\kappa/2$. Consider the last iteration of the algorithm, such that at the beginning of the iteration $|U\cap T|\geq 3\kappa/2$ held. Recall that $\out(U)\subseteq E'$, and so $|\Gamma(U)|\leq |\out(U)|\leq |E'|<\kappa/12$, while $|\Gamma'(U)|\geq |U\cap \tset|\geq 3\kappa/2$. Consider the partition $(A,B)$ of $U$ computed in the iteration, so $|\Gamma'(A)\cap \Gamma'(U)|\geq |\Gamma'(U)|/2\geq 3\kappa/4$. 
Since $|\Gamma'(A)\cap \Gamma(U)|\leq |\Gamma(U)|<\kappa/12$, we get that $|\Gamma(A)\cap T|> 3\kappa/4-\kappa/12=2\kappa/3$. 
Since the terminals are $1/3$-well-linked, $A$ contains at least $2\kappa/3$ terminals, and $V(G)\setminus A$ contains at least $\kappa/2$ terminals, we get that $|\out(A)|\geq \kappa/6$ must hold. But from our invariant, $\out(A)\subseteq E'$ and $|E'|<\kappa/12$, a contradiction.
\end{proof}
\end{proof}

\paragraph{Step 3: Connecting the Clusters and Boosting Well-Linkedness.}
 To summarize, so far we have shown the existence of two disjoint clusters $X,Y\subseteq V(G)$, such that $\rho/4\leq |\Gamma'(Y)|\leq \rho$ for $\rho=\kappa/64$, and $\Gamma'(Y)$ is $\alpha(d)$-well-linked in $G[Y]$, for some constant $0<\alpha(d)<1$ that depends on $d$. Additionally, $|T_1\cap X|,|T_2\cap X|\geq \kappa/2$, and $\Gamma'(X)$ is $1/33$-well-linked in $G[X]$.
 We need the following lemma.

\begin{lemma}\label{lemma: paths}
There is a set $\qset'$ of $\kappa_1=\Omega(\frac{\kappa\alpha(d)}{d^2})$ node-disjoint paths, connecting vertices of $X$ to vertices of $Y$ in $G$, so that the paths in $\qset'$ are internally disjoint from $X\cup Y$.
\end{lemma}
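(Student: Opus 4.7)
My plan is to construct a flow from $T\cap X$ into $Y$ with bounded edge-congestion, and then extract node-disjoint paths via Observation~\ref{obs: low cong flow to NDP}. I would split into two cases depending on whether $Y$ contains many terminals, specifically on whether $|T\cap Y|$ is above or below $\rho/8$.

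\textbf{Case 1 ($|T\cap Y|\ge \rho/8$).}  By Observation~\ref{obs: well-linkedness of T}, the set $T$ is $1/3$-well-linked in $G$.  Since $|T\cap X|\ge 3\kappa/2\ge |T\cap Y|$, I pick any $T'\subseteq T\cap X$ with $|T'|=|T\cap Y|$ and obtain a flow $F\colon T'\sconnect_{3} T\cap Y$ in $G$ of value $|T\cap Y|=\Omega(\kappa)$ and edge-congestion at most $3$.  Observation~\ref{obs: low cong flow to NDP} then yields $\Omega(\kappa/d)$ node-disjoint paths from $T\cap X\subseteq X$ to $T\cap Y\subseteq Y$, which is already larger than $\kappa_1$; each such path can be truncated between its last vertex lying in $X$ and the first subsequent vertex lying in $Y$ to produce a path from $X$ to $Y$ that is internally disjoint from $X\cup Y$.

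\textbf{Case 2 ($|T\cap Y|<\rho/8$).}  Then $|\Gamma(Y)\setminus T|=|\Gamma'(Y)|-|T\cap Y|\ge \rho/8$, and as a subset of $\Gamma'(Y)$ this set is $\alpha(d)$-well-linked in $G[Y]$ and hence (Observation~\ref{obs: well-linkedness properties}) in $G$, and it is disjoint from $T$.  Since $G$ has been reduced at the start of the section to be a minimal graph in which $T$ is $1/3$-well-linked, Lemma~\ref{lemma: deletable edge2} (with $\Gamma=\Gamma(Y)\setminus T$) produces a set $\pset_{1}$ of $N_{1}=\Omega(\alpha(d)\kappa)$ edge-disjoint paths in $G$ from $T$ to $\Gamma(Y)\setminus T$.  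Some of these paths start in $T^{\mathrm{bad}}:=T\setminus X$; because each terminal has degree at most $d$ and $|T^{\mathrm{bad}}|\le\kappa/2$, at most $f^{\mathrm{bad}}\le d\kappa/2$ paths of $\pset_{1}$ do so.  I would then use the $1/3$-well-linkedness of $T$ to construct an auxiliary flow $F_{2}$ from a subset of $T\cap X$ to $T^{\mathrm{bad}}$ whose supplies and demands match the starting-vertex distribution of those $\pset_{1}$-paths; since the per-terminal supplies and demands are at most $d$ and $|T\cap X|\ge 3\kappa/2$ provides enough slack to place demand at most $d$ at sufficiently many distinct sinks, a scaling of the unit flows guaranteed by well-linkedness gives $F_{2}$ with edge-congestion $O(d)$.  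Concatenating $F_{2}$ with the $T^{\mathrm{bad}}$-starting sub-collection of $\pset_{1}$ and adding the $T\cap X$-starting sub-collection unchanged produces a flow from $T\cap X$ to $\Gamma(Y)\setminus T$ of value $N_{1}$ and edge-congestion $O(d)$.  Observation~\ref{obs: low cong flow to NDP} then yields $\Omega(N_{1}/d^{2})=\Omega(\alpha(d)\kappa/d^{2})=\kappa_{1}$ node-disjoint paths in $G$ from $T\cap X$ to $\Gamma(Y)\setminus T$, which I truncate to be internally disjoint from $X\cup Y$ as in Case~1.

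The main obstacle will be the construction of the auxiliary flow $F_{2}$ in Case~2.  Well-linkedness of $T$ directly supplies only a unit flow between two disjoint equal-sized subsets of $T$, so realising a general supply/demand pattern with per-terminal multiplicities up to $d$ requires decomposing the pattern into $f^{\mathrm{bad}}$ unit supplies and unit demands (placing the demands on distinct vertices of $T\cap X$, which is possible because of the slack $|T\cap X|\ge 3\kappa/2\gg f^{\mathrm{bad}}/d$), and then either invoking well-linkedness repeatedly or appealing to a scaling argument on the fractional flow guaranteed by $1/3$-well-linkedness.  The factor $d$ in the congestion bound for $F_{2}$, together with the degree-$d$ loss when converting bounded-congestion flows to node-disjoint paths via Observation~\ref{obs: low cong flow to NDP}, is precisely what produces the $d^{2}$ loss in the final count $\kappa_{1}=\Omega(\kappa\alpha(d)/d^{2})$.
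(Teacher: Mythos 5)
Your proposal is correct and takes a somewhat different route from the paper's. Your case split on $|T\cap Y|$ vs.\ $\rho/8$ is cleaner: when $|T\cap Y|<\rho/8$ you obtain $|\Gamma(Y)\setminus T|=|\Gamma'(Y)|-|T\cap Y|\ge\rho/8$ directly, a factor-$\Theta(d)$ stronger bound than the $\rho/(48d)$ the paper extracts, and this dispenses with the paper's preliminary observation that $|\Gamma(Y)|\ge\rho/(24d)$ (the paper instead first proves that bound and then cases on $|\Gamma(Y)\cap T|$). Your Case~1, a $T\cap X\to T\cap Y$ flow of congestion $3$, already gives $\Omega(\kappa/d)\gg\kappa_1$ node-disjoint paths after the truncation you describe. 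The one place you over-engineer is the auxiliary flow $F_2$: you correctly note that $1/3$-well-linkedness supplies only one unit per terminal, and your ``repeat $d$ rounds'' fix does work (congestion $O(d)$), but the paper sidesteps the multi-demand bookkeeping entirely. Since each terminal has degree at most $d$ and the $\pset_1$-paths are edge-disjoint, at most $d$ of them can share a $T$-endpoint; so greedily select a subset $\pset_1'\subseteq\pset_1$ of $\lfloor N_1/d\rfloor=\Omega(\alpha(d)\kappa/d)$ paths with pairwise distinct $T$-endpoints, call that endpoint set $T'$, pick a disjoint $T''\subseteq T\cap X$ with $|T''|=|T'|$, obtain $\qset_2\colon T'\sconnect_3 T''$ from one application of well-linkedness, and concatenate with $\pset_1'$. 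The resulting flow from $T\cap X$ into $Y$ has congestion $4$ and value $\Omega(\alpha(d)\kappa/d)$, and Observation~\ref{obs: low cong flow to NDP} then yields $\Omega(\alpha(d)\kappa/d^2)$ node-disjoint paths at once. Both routes land on the stated $\kappa_1$; yours is tighter on the boundary-size estimate while the paper's re-routing step is tighter and simpler, and the cleanest write-up would combine the two.
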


\begin{proof}
We use the following observation.

\begin{observation}
$\abs{\Gamma(Y)}\geq \frac{\rho}{24 d}$.
\end{observation}

\begin{proof}
Assume first that  $|\tset\cap Y|\geq \rho/8$. Then, since the terminals of $T$ are $1/3$-well-linked, and $X$ contains at least half the terminals, $\abs{\out(Y)}\geq \rho/24$ must hold. Since the maximum vertex degree in $G$ is bounded by $d$, and every edge in $\out(Y)$ is incident on some vertex of $\Gamma(Y)$, we get that $\abs{\Gamma(Y)}\geq \frac{\rho}{24d}$.

Therefore, we can assume that $\abs{\tset\cap Y}<\rho/8$. Then $\abs{\Gamma(Y)}\geq \abs{\Gamma'(Y)}-\abs{T\cap Y}\geq \rho/4-\rho/8\geq \rho/8$.
\end{proof}

Notice that it is possible that $\Gamma(Y)\cap \tset\neq \emptyset$. If $|\Gamma(Y)\cap \tset|\geq \frac{\rho}{48d}$, then there is a set $\qset_1$ of $\frac{\rho}{48d}$ node-disjoint paths connecting vertices of $\Gamma(Y)$ to vertices of $\tset$, where every path consists of a single vertex. Otherwise, we can use Lemma~\ref{lemma: deletable edge2} to conclude that there is a set $\qset_1$ of at least $\floor{\frac{\alpha(d) \rho}{144d}}=\Omega(\frac{\kappa\alpha(d)}{d})$ edge-disjoint paths connecting vertices of $\Gamma(Y)$ to vertices of $T$. We select a subset $\qset_1'\subseteq \qset_1$ of $\Omega(\frac{\kappa\alpha(d)}{d^2})$ paths, so that they all terminate at distinct vertices of $T$, and we denote by $\kappa'=|\qset_1'|$. Let $\tset'\subseteq \tset$ be the subset of vertices where the paths of $\qset_1'$ terminate, and let $\tset''\subseteq \tset\cap X$ be any subset of $\kappa'$ terminals distinct from the terminals of $\tset'$ (since $X$ contains at least $3\kappa/2$ terminals, such a set exists). Since the terminals are $1/3$-well-linked, there is a set $\qset_2: \tset'\sconnect_3\tset''$ of paths in $G$. Combining the paths in $\qset_1'$ and $\qset_2$, we obtain a collection of $\kappa'$ paths, connecting some vertices of $\Gamma(Y)$ to the terminals of $\tset''$, with total edge-congestion at most $4$.  Using Observation~\ref{obs: low cong flow to NDP}, there is a set $\qset'$ of at least $\frac{\kappa'}{4d}$ node-disjoint paths, connecting some vertices of $\Gamma(Y)$ to some vertices of $X$. By suitably truncating these paths, we can ensure that they are internally disjoint from $Y$ and $X$.
\end{proof}

So far we have constructed two clusters, $X$ and $Y$, and a collection $\qset'$ of $\kappa_1$ node-disjoint paths, connecting vertices of $\Gamma(Y)$ to vertices of $\Gamma(X)$, so that the paths in $\qset'$ are internally disjoint from $X\cup Y$. Our construction also guarantees that $\Gamma(Y)$ is $\alpha(d)$-well-linked in $G[Y]$, and $\Gamma'(X)$ is $1/33$-well-linked in $G[X]$. Moreover, $|T_1\cap X|,|T_2\cap X|\geq \kappa/2$ must hold. Our last step is to boost well-linkedness. 

Let $\Upsilon_Y'\subseteq Y$ be the subset of vertices of $\Gamma(Y)$ that serve as endpoints of the paths in $\qset'$. Since the vertices of $\Upsilon_Y'$ are $\alpha(d)$-well-linked in $G[Y]$, we can use Theorem~\ref{thm: grouping} in order to find a subset $\Upsilon_Y''\subseteq \Upsilon_Y'$ of $\Omega(\alpha(d)\kappa_1/d)$ vertices, that are node-well-linked in $G[Y]$. Let $\qset''\subseteq \qset'$ be the set of paths in which the vertices of $\Upsilon_Y''$ participate. Let $\Upsilon_X''\subseteq \Gamma(X)$ be the set of vertices of $X$ that serve as endpoints of the paths in $\qset''$. Since the vertices of $\Upsilon_X''$ are $1/33$-well-linked in $G[X]$, using Theorem~\ref{thm: grouping}  we can find a subset $\Upsilon'''_X\subseteq \Upsilon_X''$ of $\Omega(\alpha(d)\kappa_1/d^2)$ vertices that are node-well-linked in $G[X]$. Let $\kappa_3=|\Upsilon'''_X|$, and let $\qset'''\subseteq\qset''$ be the subset of paths in which the vertices of $\Upsilon'''_X$ participate.

Similarly, we can find subsets $\ttset_1'\subseteq \tset_1\cap X$,  $\ttset_2'\subseteq \tset_2\cap X$, such that $\ttset_1',\ttset_2'$ are each node-well-linked in $G[X]$, and $|\ttset_1'|=\abs{\ttset_2'}=\kappa_3$.

Let $\kappa_4=\frac{\kappa_3}{132d}$
In our final step, we select arbitrary subsets $\ttset_1\subseteq \ttset_1',\ttset_2\subseteq\ttset_2'$ containing $\kappa_4$ vertices each, and set $\qset\subseteq \qset'''$ containing $\kappa_4$ paths each. Let $\Upsilon_X,\Upsilon_Y$ denote the endpoints of the paths in $\qset$ that lie in $X$ and $Y$ respectively. Then from Theorem~\ref{thm: linkedness from node-well-linkedness}, every pair of sets in $\set{\ttset_1,\ttset_2,\Upsilon_X}$ is node-linked in $G[X]$. We have already ensured that $\ttset_1,\ttset_2$ and $\Upsilon_X$ are each node-well-linked in $G[X]$, and $\Upsilon_Y$ is node-well-linked in $G[Y]$. The cardinalities of the sets $\ttset_1,\ttset_2$ and $\qset$ are $\kappa_4=\Omega((\alpha(d))^2\kappa/d^2)=\Omega(\kappa/\poly(d))$.
We now only need to verify that $\ttset_1\cup \ttset_2\cup \Upsilon_X$ is $1/5$-well-linked in $G[X]$.
\begin{observation}
Set $\ttset_1\cup \ttset_2\cup \Upsilon_X$ is $1/5$-well-linked in $G[X]$.
\end{observation}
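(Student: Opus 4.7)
The plan is to construct, for any disjoint equal-sized subsets $U',U''\subseteq \ttset_1\cup\ttset_2\cup\Upsilon_X$, a flow $F\colon U'\sconnect_{5} U''$ in $G[X]$, by decomposing it as a sum of at most five sub-flows of edge-congestion $1$ each. For $S\in\{\ttset_1,\ttset_2,\Upsilon_X\}$, write $a_S=|U'\cap S|$, $b_S=|U''\cap S|$, $m_S=\min(a_S,b_S)$, and $\delta_S=a_S-b_S$. The three ``same-set'' sub-flows are obtained as follows: for each $S$, pick arbitrary subsets $P_S\subseteq U'\cap S$ and $Q_S\subseteq U''\cap S$ of size $m_S$; since $S$ is node-well-linked in $G[X]$, there is a set of node-disjoint paths connecting $P_S$ to $Q_S$ in $G[X]$, yielding a flow of value $m_S$ and edge-congestion $1$.

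For the leftover demand, let $E^{+}=\{S:\delta_S>0\}$ and $E^{-}=\{S:\delta_S<0\}$. Since $\sum_S\delta_S=0$ and $|E^{+}|+|E^{-}|\leq 3$, a trivial case analysis (equivalent to a basic-feasible-solution argument for the transportation polytope) produces non-negative reals $\{f_{S,S'}\}_{S\in E^{+},S'\in E^{-}}$ with at most $|E^{+}|+|E^{-}|-1\leq 2$ positive entries, satisfying $\sum_{S'}f_{S,S'}=\delta_S$ for every $S\in E^{+}$ and $\sum_{S}f_{S,S'}=-\delta_{S'}$ for every $S'\in E^{-}$. For each pair $(S,S')$ with $f_{S,S'}>0$, we use the fact that $(S,S')$ are node-linked in $G[X]$ to find a collection of $f_{S,S'}$ node-disjoint paths connecting a subset of $(U'\cap S)\setminus P_S$ (which has size $\delta_S$) to a subset of $(U''\cap S')\setminus Q_{S'}$ (which has size $-\delta_{S'}$); this gives an additional flow of value $f_{S,S'}$ and edge-congestion $1$. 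There are at most two such cross-set sub-flows.

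Summing the at most three same-set sub-flows and the at most two cross-set sub-flows yields a valid flow $F\colon U'\sconnect U''$ in $G[X]$ of value $|U'|$ and edge-congestion at most $5$, establishing that $\ttset_1\cup\ttset_2\cup\Upsilon_X$ is $1/5$-well-linked in $G[X]$. The only subtlety in the write-up is the bookkeeping that the ``same-set'' and ``cross-set'' flows use disjoint source and sink vertices (so that together they deliver the correct net flow at every vertex of $U'\cup U''$), which is immediate from the definitions of $P_S,Q_S$, $E^{+}$, $E^{-}$, and the $f_{S,S'}$; I expect no real obstacle beyond this accounting.
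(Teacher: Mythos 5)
Your proof is correct and follows essentially the same route as the paper: three node-disjoint same-set flows of congestion $1$ each (coming from node-well-linkedness of $\ttset_1,\ttset_2,\Upsilon_X$ individually), plus at most two node-disjoint cross-set flows of congestion $1$ each (coming from pairwise node-linkedness), giving total congestion $5$. The paper's version of your ``at most $|E^{+}|+|E^{-}|-1\leq 2$'' step is the slightly more concrete observation that since $E^{+}$ and $E^{-}$ are disjoint subsets of a three-element set, one of them is a singleton, so the side with excess demand sits entirely inside a single set and the other side splits into at most two pieces --- the bookkeeping you flagged at the end is exactly this observation, and it works.
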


\begin{proof}
Let $A,B$ be any pair of disjoint equal-sized subsets of $\ttset_1\cup \ttset_2\cup \Upsilon_X$, and assume that $|A|=|B|=z$.

Let $z_1=\min\set{|A\cap \ttset_1|, |B\cap \ttset_1|}$, and let $A_1\subseteq A\cap \ttset_1,B_1\subseteq B\cap \ttset_2$ be any subsets of $z_1$ vertices each. Similarly, we let $z_2=\min{|A\cap \ttset_2|, |B\cap \ttset_2|}$, and $z_3=\min\set{|A\cap \Upsilon_X|, |B\cap \Upsilon_X|}$,
and define subsets $A_2\subseteq A\cap \ttset_2,B_2\subseteq B\cap \ttset_2$ of cardinality $z_2$ each, and subsets $A_3\subseteq A\cap \Upsilon_X$ and $B_3\subseteq B\cap \Upsilon_X$ similarly.

Since each of the three sets is node-well-linked, for each $1\leq i\leq 3$, there is a set $\pset_i: A_i\sconnect B_i$ of node-disjoint paths in $G$.

Let $A'=A\setminus(A_1\cup A_2\cup A_3)$, and let $B'=B\setminus (B_1\cup B_2\cup B_3)$. Then at least one of the two sets $A'$ or $B'$ must be contained in one of the three sets $\ttset_1,\ttset_2$ or $\Upsilon_X$. We assume without loss of generality that $A'$ only contains vertices of $\ttset_1$, and so $B'$ only contains vertices of $\ttset_2$ and $\Upsilon_X$. 

Let $B'_1=B'\cap \ttset_2$ and $B'_2=B'\setminus B'_1$. Let $z'_1=|B'_1|$. We partition $A'$ arbitrarily into two subsets $A'_1$ of cardinality $z'_1$, and $A'_2$ of cardinality $|A'|-z'_1$. Since every pair of sets $\ttset_1,\ttset_2,\Upsilon_X$ is node-linked in $G[X]$, there are two sets of node-disjoint paths in $G$: $\pset_4: A'_1\sconnect B'_1$ and $\pset_5: A'_2\sconnect B'_2$. Overall, we obtain a set $\pset=\bigcup_{i=1}^5\pset_i$ of paths, connecting every vertex of $A$ to a distinct vertex of $B$ with edge-congestion at most $5$. 
\end{proof}

{\bf Acknowledgement.} The author thanks Chandra Chekuri for many extensive discussions.



\iffull
\bibliographystyle{alpha}
\fi

\ifabstract
\bibliographystyle{plain}
\fi

\bibliography{improved-improved-GMT-v7}

\iffull

\label{--------------------------------------------Appendix--------------------------------------------}
\appendix
\section{Proofs Omitted from Section~\ref{sec: prelims}}\label{sec: proofs from Prelims}
\subsection{Proof of Observation~\ref{obs: EDP to NDP in degree-3}}

Assume for contradiction that there are two paths $P,P'\in \pset$ that share the same vertex $v$. Assume first that $v\in T_1\cup T_2$. The endpoints of the paths in $\pset$ are all distinct, and so if $v$ is an endpoint of, say, $P$, then it is an inner vertex on $P'$. Then $P'$ contains two edges incident on $v$, and $P$ contains one such edge, contradicting the fact that the degree of $v$ is at most $2$. Therefore, $v\not\in T_1\cup T_2$, and it is an inner vertex on both $P$ and $P'$. But then $P$ contains two edges incident on $v$, and $P'$ also contains two such edges, contradicting the fact that the degree of $v$ is at most $3$.

\subsection{Proof of Observation~\ref{obs: well-linkedness alt def}}

Let $T',T''\subseteq T$ be any pair of disjoint equal-sized vertex subsets, such that there is no flow $F:\tset'\sconnect_{1/\alpha}\tset''$ in $G$.
We construct a directed flow network $H$ from $G$, by replacing every edge of $G$ with a pair of bi-directed edges, and setting the capacity $c(e)$ of each such edge $e$ to $1/\alpha$. We then add two special vertices to the graph: a source $s$, that connects with a capacity-$1$ edge to every vertex of $\tset'$, and a destination $t$, to which every vertex of $\tset''$ connects with a capacity-$1$ edge.  Let $k=|\tset'|=|\tset''|$, and let $F$ be the maximum $s$-$t$ flow in $H$. Clearly, the value of $F$ is less than $k$, since otherwise we can use $F$ to define a flow $F':\tset'\sconnect_{1/\alpha}\tset''$ (as we can assume w.l.o.g. that for every pair $e',e''$ of anti-parallel edges, only one of these edges carries non-zero flow). 

Let $(A',B')$ be the minimum $s$-$t$ cut in $H$, so $\sum_{e\in E_{H}(A',B')}c(e)<k$, and let $A=A'\setminus\set{s}$ and $B=B'\setminus\set{t}$. We assume that $|\tset\cap A|\leq |\tset\cap B|$ - the other case is symmetric. Let $k_1=|\tset'\cap A'|$ and $k_2=|\tset'\cap B'|$. Then $\sum_{e\in E_H(A',B')}c(e)\geq k_2+|E_G(A,B)|/\alpha$. Therefore, $|E_G(A,B)|< \alpha (k-k_2)= \alpha k_1= \alpha |\tset'\cap A|\leq \alpha|\tset\cap A|$.

\subsection{Proof of Observation~\ref{obs: generalized well-linkedness alt def}}

Let $T',T''\subseteq T$ be any pair of disjoint equal-sized vertex subsets, with $|\tset'|=|\tset''|\leq k'$, such that there is no flow $F:\tset'\sconnect_{1/\alpha}\tset''$ in $G$. We define a flow network $H$, and partitions $(A',B')$ of $V(H)$ and $(A,B)$ of $V(G)$ exactly as in the proof of Observation~\ref{obs: well-linkedness alt def}. Let $\ttset=\tset'\cup \tset''$, so $|\ttset|\leq 2 k'$. As in the proof of Observation~\ref{obs: well-linkedness alt def}, $|E_G(A,B)|<\alpha\cdot |\ttset\cap A|\leq \alpha\cdot |\tset\cap A|$, and $|E_G(A,B)|<\alpha\cdot |\ttset\cap B|\leq \alpha\cdot |\tset\cap B|$.
On the other hand,  since $|\ttset|\leq 2k'$, either $|\ttset\cap A|\leq k'$, or $|\ttset\cap B|\leq k'$ must hold, and so $|E_G(A,B)|<\alpha\cdot k'$.

\subsection{Proof of Theorem~\ref{thm: grouping}}
A separation in graph $G$ is a pair $(Y,Z)$ of subgraphs of $G$, such that
every edge of $G$ belongs to exactly one of $Y,Z$. The order of the
separation is $|V(Y)\cap V(Z)|$. We say that a separation $(Y,Z)$ is
\emph{balanced} iff $|V(Y)\cap \tset|,|V(Z)\cap \tset|\geq
|\tset|/4$. Let $(Y,Z)$ be a balanced separation of $G$ of minimum
order, and let $X=V(Y)\cap V(Z)$. Assume without loss of generality
that $|V(Y)\cap \tset|\geq |V(Z)\cap \tset|$, so $|V(Y)\cap \tset|\geq
|\tset|/2$. We claim that $X$ is node-well-linked in graph $Y$.

\begin{claim}
Set $X$ of vertices is node-well-linked in graph $Y$.
\end{claim}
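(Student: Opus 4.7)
The plan is a standard uncrossing/minimality argument: I would suppose for contradiction that $X$ is not node-well-linked in $Y$ and then use a minimum vertex cut in $Y$ to build a balanced separation of $G$ of strictly smaller order than $(Y,Z)$, contradicting the choice of $(Y,Z)$.

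First, assuming $X$ is not node-well-linked in $Y$, I would pick disjoint equal-sized $X',X''\subseteq X$ with $|X'|=|X''|=s$ that cannot be joined by $s$ node-disjoint paths in $Y$. By Menger's theorem there is a vertex set $X^*\subseteq V(Y)$ with $|X^*|<s$ separating $X'$ from $X''$ in $Y$; note $|X^*|<s\le |X|/2$. Let $(A,B)$ be a partition of $V(Y)\setminus X^*$ with $X'\setminus X^*\subseteq A$, $X''\setminus X^*\subseteq B$, and no edge of $Y$ between $A$ and $B$.

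Second, I would form two new separations of $G$:
\[
(Y_1,Z_1)=\bigl(Y[A\cup X^*],\ Z\cup Y[B\cup X^*]\bigr),\qquad
(Y_2,Z_2)=\bigl(Y[B\cup X^*],\ Z\cup Y[A\cup X^*]\bigr),
\]
where the bracketed graphs include all edges with both endpoints in the listed vertex sets (with edges of $Y$ incident to $X^*$ distributed so that each edge lies in exactly one of $Y_i,Z_i$). A direct calculation gives $V(Y_1)\cap V(Z_1)=X^*\cup (A\cap X)$ and $V(Y_2)\cap V(Z_2)=X^*\cup (B\cap X)$. Using $X'\cap B=\emptyset$ I get $|A\cap X|\ge |X'\setminus X^*|\ge s-|X'\cap X^*|$, so
\[
|X^*|-|X^*\cap X|\le |X^*|-|X^*\cap X'|<s-|X'\cap X^*|\le |A\cap X|,
\]
and similarly for $B$. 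Rearranging, both separations have order strictly less than $|A\cap X|+|B\cap X|+|X^*\cap X|=|X|$.

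Third, I would verify that at least one of $(Y_1,Z_1),(Y_2,Z_2)$ is balanced. The $Z$-side always carries at least $|V(Z)\cap T|\ge |T|/4$ terminals, so only the $Y$-side needs checking. Since $(A\cup X^*)\cap T$ and $(B\cup X^*)\cap T$ both contain $X^*\cap T$, their sizes sum to at least $|A\cap T|+|B\cap T|+2|X^*\cap T|\ge |V(Y)\cap T|+|X^*\cap T|\ge |T|/2$, so by pigeonhole one of them has size $\ge |T|/4$. That separation is balanced and has order smaller than $|X|$, contradicting the minimality of $(Y,Z)$.

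The only delicate point, and the one I would expect to need the most care in the write-up, is the bookkeeping in the second step: showing that $|X^*|-|X^*\cap X|<|A\cap X|$ (and its $B$-counterpart) so that the orders really drop below $|X|$. Everything else is a routine application of Menger plus a pigeonhole on terminals.
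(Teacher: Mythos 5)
Your proof is correct and follows essentially the same approach as the paper: assume a small vertex separator exists in $Y$ via Menger's theorem, use it to construct a balanced separation of $G$ of order strictly smaller than that of $(Y,Z)$, and contradict minimality. The paper classifies the components of $Y\setminus S$ into three classes $R_1,R_2,R_3$, makes a single choice of new separation guided by which of $R_1,R_2$ carries more terminals, and bounds its order by $|X|-|B|+|S|<|X|$, whereas you consider the two candidate separations symmetrically, verify that both have order below $|X|$, and then select the balanced one by a pigeonhole count of terminals; this is a mild streamlining of the same argument rather than a different route.
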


\begin{proof}
  Let $A,B$ be any two disjoint equal-sized subsets of $X$, and assume that
  $|A|=|B|=z$. It is enough to show that there is a set $\pset$ of $z$
  disjoint paths connecting $A$ to $B$ in $Y$. Assume otherwise. Then
  there is a set $S\subseteq V(Y)$ of at most $z-1$ vertices separating $A$ from $B$
  in $Y$.

  Let $\cset$ be the set of all connected components of $Y\setminus
  S$. We partition $\cset$ into three subsets: $\cset_1$ contains all
  clusters containing the vertices of $A$; $\cset_2$ contains all
  clusters containing the vertices of $B$, and $\cset_3$ contains all
  remaining clusters (notice that all three sets of clusters are
  pairwise disjoint). Let $R_1=\bigcup_{C\in \cset_1}C$, and define
  $R_2$ and $R_3$ for $\cset_2$ and $\cset_3$, respectively. Assume
  without loss of generality that $|R_1\cap \tset|\geq |R_2\cap
  \tset|$. We define a new separation $(Y',Z')$, as follows. The set
  of vertices $V(Y')=R_1\cup R_3\cup S$, and $V(Z')=V(Z)\cup R_2\cup
  S$. Let $X'=V(Y')\cap V(Z')$. The edges of $Y'$ include all edges of
  $G$ with both endpoints in $V(Y')\setminus X'$, and all edges of $G$
  with one endpoint in $V(Y')\setminus X'$ and the other endpoint in
  $X'$. The edges of $Z'$ include all edges with both endpoints in
  $V(Z')$.

  We claim that $(Y',Z')$ is a balanced separation. Clearly,
  $|V(Z')\cap \tset|\geq |\tset|/4$, since $V(Z)\subseteq V(Z')$, and
  $|V(Z)\cap \tset|\geq |\tset|/4$.  We next claim that $|V(Y')\cap
  \tset|\geq |\tset|/4$. Assume otherwise. Then $|(R_1\cup R_3\cup S)\cap \tset|\leq |\tset|/4$, and, from our assumption,
  $|R_2\cap \tset|<|\tset|/4$, and so $|V(Y)\cap \tset|<|\tset|/2$, a
  contradiction. Therefore, $(Y',Z')$ is a balanced
  separator. Finally, we claim that its order is less than $|X|$,
  contradicting the minimality of $X$. Indeed, $|V(Y')\cap V(Z')|\leq
  |X|-|B|+|S|<|X|$.
\end{proof}

Let $\tset_1=\tset\cap V(Z)$, $\tset_2=\tset\cap V(Y)$, and let $\tset_1'\subseteq \tset_1$, $\tset_2'\subseteq \tset_2$ be two disjoint subsets containing $\ceil{\kappa /4}$ vertices each. Since the terminals are $\alpha$-well-linked, there is a flow $F:\tset_1'\sconnect_{1/\alpha}\tset_2'$ in $G$. We now bound the vertex-congestion caused by the flow $F$. For every vertex $v\in V(G)$, let $F_1(v)$ be the total amount of flow on all paths that originate or terminate at $v$, and let $F_2(v)$ be the total amount of flow on all paths that contain $v$ as an inner vertex. It is immediate to verify that $F_1(v)\leq 1$, while $F_2(v)\leq \frac{\Delta}{2\alpha}$, since every flow-path $P$ that contains $v$ as an inner vertex contributes flow $F(P)$ to two edges incident to $v$. Therefore, the total flow through $v$ is at most $\frac{\Delta}{2\alpha}+1\leq \frac{5\Delta}{6\alpha}$, as $\Delta\geq 3$ and $\alpha\leq 1$.
By sending $\frac{6\alpha}{5\Delta}\cdot F(P)$ flow units via every path $P$, we obtain a flow of value at least $\frac{\kappa}{4}\cdot \frac{6\alpha}{5\Delta}=\frac{3\alpha \kappa}{10\Delta}$ from vertices of $\tset_1'$ to vertices of $\tset_2'$, that causes vertex-congestion at most $1$. From the integrality of flow, there is a set 
%
%
%
$\pset'$ of $\kappa'=\ceil{\frac{3\alpha \kappa}{10\Delta}}$
node-disjoint paths connecting terminals in $\tset_1'$ to terminals in
$\tset_2'$ in $G$. Each such path has to go through a vertex of $X$. For each
path $P'\in \pset'$, we truncate the path $P'$ to the first vertex of
$X$ on $P'$ (where the path is directed from $\tset_1$ to
$\tset_2$). Let $\pset$ be the resulting set of truncated paths. Then
$\pset$ is a set of $\kappa'$ disjoint paths, connecting  vertices
of $\tset_1'$ to vertices of $X$; every path in $\pset$ is
completely contained in graph $Z$, and is disjoint from $X$ except for
its last endpoint that belongs to $X$.

Let $\tset'\subseteq \tset_1$ be the set of terminals from which the
paths of $\pset$ originate, and let $X'\subseteq X$ be the set of
vertices where they terminate. We claim that $\tset'$ is
node-well-linked in $G$. Indeed, let $A,B\subseteq\tset'$ be any pair of disjoint
equal-sized subsets of terminals.  Denote $|A|=|B|=z$.

We define a set $\tilde{A}\subset X'$ of vertices as follows: for each terminal
$t\in A$, let $P_t\in \pset$ be the path originating at $t$, and let
$x_t$ be its other endpoint, that belongs to $X$. We then set
$\tilde{A}=\set{x_t\mid t\in A}$. We define a set $\tB\subseteq X$
similarly for $B$. Let $\pset_A\subseteq \pset$ be the set of paths
originating at the vertices of $A$, and let $\pset_B\subseteq \pset$
be the set of paths originating at the vertices of $B$.  Notice that
both sets of paths are contained in $Z$, and are internally disjoint
from $X$. The paths in $\pset_A\cup \pset_B$ are also mutually
disjoint.

Consider the two subsets $\tA,\tB\subseteq X$ and recall that $|\tA|=|\tB|=z$.  Since
$X$ is node-well-linked in $Y$, there is a set $\qset$ of $z$ node-disjoint
paths connecting $\tA$ to $\tB$ in $Y$. The paths in $\qset$ are then
completely disjoint from the paths in $\pset_1,\pset_2$ (except for
sharing endpoints with them). The final set of paths connecting $A$ to
$B$ is obtained by concatenating the paths in $\pset_1,\qset$, and $\pset_2$.

\subsection{Proof of Theorem~\ref{thm: linkedness from node-well-linkedness}}
%
Let $\tset=\tset_1\cup \tset_2$. We refer to the vertices of $\tset$ as terminals. Assume for contradiction that $(\tset'_1,\tset_2')$ are not node-linked in $G$. Then there are two sets $A\subseteq\tset_1',B\subseteq \tset_2'$, with $|A|=|B|=\kappa'$ for some $\kappa'\leq \frac{\alpha\kappa}{2\Delta}$, and a set $S$ of $\kappa'-1$ vertices, separating $A$ from $B$ in $G$. 

Let $A'\subseteq \tset_1$ be the set of all terminals $t\in \tset_1$, such that $t$ lies in the same component of $G\setminus S$ as some vertex of $A$. We claim that $|A'|> \kappa-\kappa'$. Indeed, assume otherwise, and let $A''\subseteq \tset_1\setminus A'$ be any set of $\kappa'$ vertices. Notice that $A''\cap A=\emptyset$, as $A\subseteq A'$. Since $\tset_1$ is node-well-linked in $G$, there is a set $\pset$ of $\kappa'$ node-disjoint paths, connecting the vertices of $A$ to the vertices of $A'$ in $G$. At most $\kappa'-1$ of these paths may contain the vertices of $S$, and so at least one vertex of $\tset_1\setminus A'$ is connected to some vertex of $A$ in $G\setminus S$, a contradiction.

Similarly, we let $B'\subseteq \tset_2$ be the set of all terminals $t\in \tset_2$, such that $t$ lies in the same component of $G\setminus S$ as some vertex of $B$. As before, $|B'|\geq \kappa-\kappa'$. Finally, we show that there is some pair $a\in A'$, $b\in B'$ of vertices that lie in the same connected component of $G\setminus S$. Indeed, since the terminals of $\tset$ are $\alpha$-well-linked in $G$, there is a set $\qset$ of at least $\kappa-\kappa'$ paths in $G$, where each path originates at a distinct vertex of $A'$ and terminates at a distinct vertex of $B'$, and every edge of $G$ participates in at most $\ceil{1/\alpha}$ paths. At most $(k'-1)\Delta\cdot \ceil{1/\alpha}$ of the paths in $\qset$ may contain the vertices of $S$. Since $|\qset|=\kappa-\kappa'>(\kappa'-1)\Delta\cdot \ceil{1/\alpha}$, at least one path of $\qset$ belongs to $G\setminus S$. Therefore, there is a path in $G\setminus S$ from a vertex of $A$ to a vertex of $B$, a contradiction.
\subsection{Proof of Theorem~\ref{thm from weak to perfect PoS}}
We note that using by applying  Theorems~\ref{thm: grouping} and \ref{thm: linkedness from node-well-linkedness} to each of the clusters of the \PoS in turn, we can obtain a significantly simpler proof of the theorem, but with weaker parameters: the width of the new \PoS becomes roughly $\Theta(w\alpha^2)$.

\paragraph{Obtaining a good \PoS.}
We assume that we are given an $\alpha$-weak \PoS $(\sset,\bigcup_{i=1}^{\ell-1}\pset_i)$, where $\sset=(S_1,\ldots,S_{\ell})$, and each set $S_i$ is associated with a pair $(A_i,B_i)$ of disjoint vertex subsets of $S_i$ of cardinality $w$ each, such that $A_i\cup B_i$ is $\alpha$-well-linked in $G[S_i]$. 
 In order to simplify the notation, we add a new set $\pset_{\ell}$ of $w$ artificial paths, where each path in $\pset_{\ell}$ consists of a single new edge, connecting a distinct vertex in $B_{\ell}$ to a distinct vertex in $A_1$. Let $E_{\ell}$ denote this corresponding new set of edges. Then the edges of $E_{\ell}$ define a matching between the vertices of $A_1$ and the vertices of $B_{\ell}$. Since the vertices of $A_1\cup  B_{\ell}$ have degree at most $2$ each, the maximum vertex degree in the resulting graph is at most $3$. We will use addition modulo $\ell$ when indexing sets of vertices or sets of paths, so, for example, by $A_{\ell+1}$ we mean $A_1$.

In our final \PoS, the sequence of clusters $(S_1,\ldots,S_{\ell})$ will remain the same. For each $1\leq i\leq \ell$, we will select a subset $\pset'_i\subseteq \pset_i$ of $\ceil{\alpha w/4}$ paths, and we will denote by $B_i'$ the set of the endpoints of the paths in $\pset'_i$ that lie in $S_i$, and by $A_{i+1}'$ the set of their endpoints that lie in $S_{i+1}$. We will ensure that for each $1\leq i\leq \ell$, the vertices of $B_i'$ are $1$-well-linked in $G[S_i]$, and $(A_i',B_i')$ are $\half$-linked in $G[S_i]$. The final good \PoS is then $(\sset,\bigcup_{i=1}^{\ell-1}\pset'_i)$.

Fix some $1\leq i\leq \ell$, and let $(X_i,Y_i)$ be the minimum $1/4$-balanced cut of $G[S_i]$ with respect to $B_i$, where $|X_i\cap B_i|\geq |Y_i\cap B_i|$. Let $Z_i\subseteq X_i$ be the set of vertices incident on the edges of $E(X_i,Y_i)$. We need the following simple claim.

\begin{claim}\label{claim: wl in balanced cut}
For all $1\leq i\leq \ell$, set $Z_i$ is $1$-well-linked in $G[X_i]$, and $|Z_i|\geq \alpha \ceil{w/4}$.
\end{claim}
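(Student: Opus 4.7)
The plan is to prove both parts by exploiting two properties of the cut $(X_i,Y_i)$: its minimality (at two levels --- edge count and then terminal imbalance) and the $\alpha$-well-linkedness of $A_i\cup B_i$ in $G[S_i]$.

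For the lower bound $|Z_i|\geq \alpha\lceil w/4\rceil$, first I would show that $|E(X_i,Y_i)|\geq \alpha\lceil w/4\rceil$. Since $(X_i,Y_i)$ is a $1/4$-balanced cut with respect to $B_i$, both $|X_i\cap B_i|$ and $|Y_i\cap B_i|$ are at least $\lceil w/4\rceil$; pick equal-sized subsets $T'\subseteq X_i\cap B_i$ and $T''\subseteq Y_i\cap B_i$ of size $\lceil w/4\rceil$. Since $B_i\subseteq A_i\cup B_i$ is $\alpha$-well-linked in $G[S_i]$, there is an $(T',T'')$ flow of value $\lceil w/4\rceil$ with edge-congestion $1/\alpha$; the cut $E(X_i,Y_i)$ separates $T'$ from $T''$, so $|E(X_i,Y_i)|\geq \alpha\lceil w/4\rceil$. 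Next, I would argue $|Z_i|=|E(X_i,Y_i)|$: if some $v\in Z_i$ were incident on two or more edges of $E(X_i,Y_i)$, then since $G$ is subcubic, $v$ has at most one neighbor in $X_i\setminus\{v\}$, and moving $v$ from $X_i$ to $Y_i$ strictly decreases the cut size. Using the secondary minimality (minimize $\min\{|X_i\cap B_i|,|Y_i\cap B_i|\}$ among minimum cuts), and a small case analysis depending on whether $v\in B_i$ and whether $|X_i\cap B_i|$ exceeds $\lceil w/4\rceil$, the $1/4$-balance is preserved, contradicting minimality.

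For the well-linkedness of $Z_i$ in $G[X_i]$, I would verify the contrapositive of Observation~\ref{obs: well-linkedness alt def}: for every partition $(P,Q)$ of $X_i$, $|E_G(P,Q)|\geq \min\{|P\cap Z_i|,|Q\cap Z_i|\}$. Since $|X_i\cap B_i|\geq |Y_i\cap B_i|\geq w/4$ gives $|X_i\cap B_i|\geq w/2$, one of $|P\cap B_i|$, $|Q\cap B_i|$ is at least $w/4$; say $|Q\cap B_i|\geq w/4$ by symmetry. Then consider the partition $(Q,\,P\cup Y_i)$ of $S_i$. It is $1/4$-balanced with respect to $B_i$, because $|Q\cap B_i|\geq w/4$ and $|(P\cup Y_i)\cap B_i|\geq |Y_i\cap B_i|\geq w/4$. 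Minimality of $(X_i,Y_i)$ then gives $|E(Q,P\cup Y_i)|\geq |E(X_i,Y_i)|$, which simplifies (after cancelling $|E(Q,Y_i)|$ from both sides) to $|E(P,Q)|\geq |E(P,Y_i)|$. But every vertex of $P\cap Z_i$ has at least one edge to $Y_i$ by definition of $Z_i$, so $|E(P,Y_i)|\geq |P\cap Z_i|$, yielding $|E(P,Q)|\geq |P\cap Z_i|\geq \min\{|P\cap Z_i|,|Q\cap Z_i|\}$ as desired.

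The main obstacle is the first part: the careful handling of the secondary minimality condition when arguing each vertex of $Z_i$ has exactly one edge to $Y_i$. Vertices in $B_i$ have to be treated separately from the others, because moving a $B_i$-vertex from $X_i$ to $Y_i$ can change the side-balance of terminals. The argument works because the secondary minimality condition rules out the only configuration where a degree-$3$ $B_i$-vertex with two $Y_i$-neighbors could survive in $Z_i$; nevertheless, getting the inequalities tight enough to rule out all edge cases is the delicate part of the proof.
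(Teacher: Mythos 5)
Your argument is correct, and the second part (well-linkedness of $Z_i$ in $G[X_i]$) is essentially the paper's proof: given a candidate violating partition $(J,J')$ of $X_i$, move the side with fewer $B_i$-terminals into $Y_i$, check the result is still $1/4$-balanced using $|X_i\cap B_i|\geq w/2$, and contradict the edge-minimality of $(X_i,Y_i)$ because each vertex of $Z_i$ on the moved side contributes at least one cut edge to $Y_i$. You phrase it as direct verification of the contrapositive of Observation~\ref{obs: well-linkedness alt def} rather than by contradiction, but the inequalities are the same.

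Your route to the lower bound $|Z_i|\geq\alpha\lceil w/4\rceil$ is genuinely different from the paper's. You bound the cut size $|E(X_i,Y_i)|\geq\alpha\lceil w/4\rceil$ via flow--cut duality and then upgrade this to a bound on $|Z_i|$ by arguing $|Z_i|=|E(X_i,Y_i)|$: any vertex of $Z_i$ with two or more cut edges could be moved into $Y_i$, strictly decreasing the cut while preserving $1/4$-balance (since in the only delicate case, $v\in B_i$, one uses $|X_i\cap B_i|\geq w/2>\lceil w/4\rceil+1$). The paper instead routes directly: take $\lceil w/4\rceil$ vertices on each side of $B_i\cap X_i$, $B_i\cap Y_i$, scale the $1/\alpha$-congestion flow down by $\alpha$, and use integrality plus Observation~\ref{obs: EDP to NDP in degree-3} to extract $\alpha\lceil w/4\rceil$ node-disjoint $B_i$--$B_i$ paths, each of which must cross a distinct vertex of $Z_i$. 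Your version buys you the stronger intermediate fact $|Z_i|=|E(X_i,Y_i)|$ but costs an extra case analysis on which the paper does not rely; the paper's path-routing avoids that entirely at the price of the flow-to-node-disjoint-paths conversion. Both are valid for subcubic $G$. One small inaccuracy: the secondary minimality (tie-breaking on $\min\{|X_i\cap B_i|,|Y_i\cap B_i|\}$) that you invoke is not actually needed for the vertex-moving step --- edge-minimality plus $|X_i\cap B_i|\geq w/2$ already handles the $v\in B_i$ case --- so you can drop that dependence.
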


\begin{proof}
Assume for contradiction that $Z_i$ is not $1$-well-linked in $G[X_i]$. Then there is a partition $(J,J')$ of $X_i$, with $|E(J,J')|<\min\set{|Z_i\cap J|,|Z_i\cap J'|}$. Assume without loss of generality that $|J\cap B_i|\geq |J'\cap B_i|$. Then $(J, J'\cup Y_i)$ is a $1/4$-balanced cut of $G[S_i]$ with respect to $B_i$, and $|E(J,J'\cup Y_i)|\leq |E(X_i,Y_i)|-|E(J',Y_i)|+|E(J,J')|\leq |E(X_i,Y_i)|-|Z_i\cap J'|+|E(J,J')|<|E(X_i,Y_i)|$, contradicting the minimality of the cut $(X_i,Y_i)$.

To see that $|Z_i|\geq \alpha \ceil{w/4}$, recall that the vertices of $B_i$ are $\alpha$-well-linked in $G[S_i]$, and $|B_i\cap X_i|,|B_i\cap Y_i|\geq \ceil{w/4}$. Let $T_1\subseteq B_i\cap X_i$, $T_2\subseteq B_i\cap Y_i$ be any pair of sets of cardinality $\ceil{w/4}$ each. Then there is a flow $F: T_1\sconnect_{1/\alpha}T_2$ in $G[S_i]$. Scaling this flow by factor $\alpha$, we obtain a flow of value $\alpha\ceil{w/4}$ from vertices of $T_1$ to vertices of $T_2$, that causes no edge-congestion, such that for every vertex $v\in T_1\cup T_2$, the total flow on paths originating from or terminating at $v$ is exactly $\alpha$. From the integrality of flow, there is a 
set $\pset$ of $\alpha \ceil{w/4}$ edge-disjoint paths from vertices of $T_1$ to vertices of $T_2$ in $G[S_i]$, whose endpoints are all distinct. Since the degree of every vertex of $B_i$ is at most $2$ in $G[S_i]$, from Observation~\ref{obs: EDP to NDP in degree-3}, the paths in $\pset$ are also node-disjoint.  Each such path must contain a vertex of $Z_i$, so $|Z_i|\geq  \alpha \ceil{ w/4}$.
\end{proof}

Fix some $1\leq i\leq \ell$.
Let $\tilde{B_i}$ be any subset of $\ceil {w/4}$ vertices of $B_i\cap Y_i$, and let $\tilde{\pset}_i\subseteq \pset_i$ be the set of paths originating at the vertices of $\tilde{B_i}$. Let $\tilde A_{i+1}\subseteq A_{i+1}$ be the set of vertices where the paths in $\tpset_i$ terminate (see Figure~\ref{fig: flow-network1}). The following lemma is central to our proof.

\begin{lemma}\label{lemma: select the As and Bs}
For each $1\leq i\leq \ell$,  there is a subset $\pset_i'\subseteq \tpset_i$ of $\ceil{\alpha w/4}$ paths, such that,  if $B_i'\subseteq \tilde B_i$ and $A_{i+1}'\subseteq\tilde A_{i+1}$ denote the sets of the endpoints of paths in $\pset_i'$ lying in clusters $S_i$ and $S_{i+1}$ respectively, then:

\begin{itemize}
\item there is a set $\qset_i'$ of node-disjoint paths  in $G[Y_i\cup Z_i]$, connecting every vertex of $B'_i$ to a distinct vertex of $Z_i$, where the paths in $\qset_i'$ are internally disjoint from $Z_i$; and

\item there is a set $\qset_{i+1}''$ of edge-disjoint paths  in $G[S_{i+1}]$, connecting every vertex of $A_{i+1}'$ to a distinct vertex of $Z_{i+1}$. 
\end{itemize}
\end{lemma}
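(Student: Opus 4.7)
The plan is to prove Lemma~\ref{lemma: select the As and Bs} by combining the three required routings into a single flow problem and extracting $\pset_i'$ from an integral max flow. Concretely, I would build a flow network $\mathcal{N}_i$ that chains together: (a) a node-capacitated copy of $G[Y_i\cup Z_i]$, in which internal vertices have capacity $1$ (to enforce node-disjointness on the $Y_i$-side), with a super-source $s$ attached to every vertex of $Z_i$; (b) for each path $P\in\tilde\pset_i$, a unit-capacity ``shortcut'' edge from its endpoint $b(P)\in\tilde B_i$ to its endpoint $a(P)\in\tilde A_{i+1}$, which represents the decision to include $P$ in $\pset_i'$; and (c) an edge-capacitated copy of $G[S_{i+1}]$ with unit edge capacities, ending at a super-sink $t$ attached to every vertex of $Z_{i+1}$. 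A feasible integral $s$--$t$ flow of value $\lceil\alpha w/4\rceil$ in $\mathcal{N}_i$ immediately yields the desired $\pset_i'$ together with the routings $\qset_i'$ and $\qset_{i+1}''$ via standard flow decomposition.

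To establish that a flow of this value exists, I would construct a fractional flow of value $\lceil\alpha w/4\rceil$ by concatenating three separate sub-flows. For the $Y_i$-side, I would use the $\alpha$-well-linkedness of $B_i$ in $G[S_i]$: since $(X_i,Y_i)$ is a $1/4$-balanced cut and $|\tilde B_i|=|B_i\cap X_i\text{ (subset)}|=\lceil w/4\rceil$, there is a flow $F:\tilde B_i\sconnect_{1/\alpha} (B_i\cap X_i)$ in $G[S_i]$; truncating each flow-path at its first vertex in $Z_i$ and scaling by $\alpha$ gives a unit-edge-congestion flow in $G[Y_i\cup Z_i]$ of value $\alpha\lceil w/4\rceil$ from $\tilde B_i$ to $Z_i$, which (since the max degree is $3$ and the vertices in $\tilde B_i\subseteq B_i$ have degree at most $2$) Observation~\ref{obs: EDP to NDP in degree-3} together with Observation~\ref{obs: low cong flow to NDP} converts into a vertex-capacity-respecting flow of the same value. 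Symmetrically, for the $S_{i+1}$-side, I would use the $\alpha$-well-linkedness of $A_{i+1}\cup B_{i+1}$ in $G[S_{i+1}]$ (noting that $|Z_{i+1}|\geq\alpha\lceil w/4\rceil$ by Claim~\ref{claim: wl in balanced cut}, and that the edges of $E(X_{i+1},Y_{i+1})$ can be used to shuttle flow out of $Z_{i+1}$) to obtain an edge-capacity-respecting flow of value $\alpha\lceil w/4\rceil$ from $\tilde A_{i+1}$ to $Z_{i+1}$. Finally, routing one unit through each shortcut edge of $\tilde\pset_i$ corresponding to a compatible $(b(P),a(P))$ pair glues the two flows together, yielding a fractional $s$--$t$ flow of value $\lceil\alpha w/4\rceil$ in $\mathcal{N}_i$.

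Applying the integrality of single-commodity max flow (separately for the vertex-capacitated portion and the edge-capacitated portion, which is legitimate because the two portions only interact through the unit-capacity shortcut edges) gives an integral $s$--$t$ flow of the same value. Reading off which shortcut edges carry flow defines $\pset_i'$ with $|\pset_i'|=\lceil\alpha w/4\rceil$, and the flow restricted to $G[Y_i\cup Z_i]$ and to $G[S_{i+1}]$ decomposes into the required sets $\qset_i'$ of $|\pset_i'|$ node-disjoint paths (internally disjoint from $Z_i$ because of how we attached the super-source) and $\qset_{i+1}''$ of $|\pset_i'|$ edge-disjoint paths.

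The main obstacle is getting the fractional lower bound right: one must verify that the flow out of $\tilde B_i$ in step~(a) and the flow into $\tilde A_{i+1}$ in step~(c) can be made supported on the correct endpoints of shortcut edges so that they truly concatenate through the $\tilde\pset_i$ layer without losing a further factor. This is handled by the fact that every vertex of $\tilde B_i$ sends exactly $\alpha$ units in the scaled well-linkedness flow (so we can route $\alpha$ units through each shortcut edge), and symmetrically on the $\tilde A_{i+1}$ side; since the shortcut edges match $\tilde B_i$ to $\tilde A_{i+1}$ bijectively, the three sub-flows compose into one of value $\alpha|\tilde B_i|=\alpha\lceil w/4\rceil$, as required.
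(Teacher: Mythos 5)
Your high-level architecture — a single chained flow network through $G[Y_i\cup Z_i]$, the shortcut edges for $\tilde\pset_i$, and $G[S_{i+1}]$ — is essentially the paper's network $H_i$. But there are two genuine gaps, both exactly at the point you yourself flag as ``the main obstacle,'' and they are not patched by the argument you give.

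First, the $Y_i$-side. You want a vertex-capacity-$1$ fractional flow of value $\lceil\alpha w/4\rceil$ from $\tilde B_i$ to $Z_i$, and you claim that scaling the $\alpha$-well-linkedness flow to edge-congestion $1$ and then invoking Observations~\ref{obs: EDP to NDP in degree-3} and~\ref{obs: low cong flow to NDP} yields ``a vertex-capacity-respecting flow of the same value.'' It does not. An edge-congestion-$1$ flow in a degree-$3$ graph can push up to $3/2$ (fractionally) or, after truncation, up to $2$ units through an inner vertex, so the flow is not vertex-congestion-$1$ as is. Observation~\ref{obs: low cong flow to NDP} converts it to a vertex-disjoint collection only at the cost of a factor $d\eta=3$, giving $\approx\alpha w/12$ paths, not $\lceil\alpha w/4\rceil$. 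Observation~\ref{obs: EDP to NDP in degree-3} requires \emph{both} terminal sets to have degree at most $2$; the $\tilde B_i$ side is fine, but a vertex of $Z_i$ can have degree $3$ in $G[Y_i\cup Z_i]$, so the observation does not apply to the pair $(\tilde B_i,Z_i)$. The paper avoids this entirely by never imposing vertex capacities: in $H_i$ it uses unit \emph{edge} capacities, deletes all incoming edges at $Z_i$ (so $Z_i$-vertices can never appear as inner vertices of the extracted paths, and each such vertex starts at most one path via the unit edge from $s$), and then recovers node-disjointness of $\qset_i'$ \emph{post hoc}: an edge-disjoint collection in a degree-$3$ graph cannot share an inner vertex (that would need $4$ edge-slots), and the endpoint degrees are controlled by the construction. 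Your vertex-capacity formulation forces you to prove the fractional lower bound under vertex capacities, which the well-linkedness hypothesis does not directly supply.

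Second, the $S_{i+1}$-side. You assert that well-linkedness of $A_{i+1}\cup B_{i+1}$ together with ``the edges of $E(X_{i+1},Y_{i+1})$ can be used to shuttle flow out of $Z_{i+1}$'' yields a flow of value $\alpha\lceil w/4\rceil$ from $\tilde A_{i+1}$ to $Z_{i+1}$. This is where the paper spends real effort (Claim~\ref{claim: third flow}): the balanced cut $(X_{i+1},Y_{i+1})$ is taken with respect to $B_{i+1}$ only, so $\tilde A_{i+1}$ is distributed arbitrarily across $X_{i+1}$ and $Y_{i+1}$, and there is no a priori reason a well-linkedness routing of $\tilde A_{i+1}$ to some target in $B_{i+1}$ should pass through $Z_{i+1}$. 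Moreover, you additionally need the cap of $1/\alpha$ paths per $Z_{i+1}$-vertex to keep capacities finite at $t$. The paper establishes all of this by a min-cut contradiction whose punchline is the \emph{minimality} of the balanced cut $(X_{i+1},Y_{i+1})$ (it produces a strictly smaller balanced cut from a hypothetical bad cut in the flow network). This is the substantive content of the lemma, and ``shuttle flow out of $Z_{i+1}$'' is not a substitute for it.

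In short, you have the right network topology, but the fractional lower-bound analysis — the parts the paper carries in Claims~\ref{claim: first flow} and~\ref{claim: third flow} — is missing, and the vertex-capacity formulation makes the $Y_i$-side claim false as stated.
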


We prove Lemma~\ref{lemma: select the As and Bs} below, after we complete the construction of the good \PoS using it.
Our final \PoS consists of the sequence $(S_1,\ldots,S_{\ell})$ of clusters and the sets $\pset_1',\ldots,\pset_{\ell-1}'$ of paths given by Lemma~\ref{lemma: select the As and Bs}. Sets $A'_i,B'_i\subseteq S_i$ of vertices are defined as in the lemma. It now remains to show that for each $1\leq i\leq \ell$, $B_i'$ is $1$-well-linked, and $(A_i',B_i')$ are $\half$-linked  in $G[S_i]$. We do so in the following two observations.

\begin{observation} For all $1\leq i\leq \ell$, $B_i'$ is $1$-well-linked in $G[S_i]$.
\end{observation}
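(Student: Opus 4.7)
The plan is to exploit the two ingredients already provided: the set $\qset_i'$ of node-disjoint paths routing $B_i'$ to a subset of $Z_i$ inside $G[Y_i\cup Z_i]$, and the $1$-well-linkedness of $Z_i$ inside $G[X_i]$ guaranteed by Claim~\ref{claim: wl in balanced cut}. The strategy is to concatenate these two routings, verifying that the edge-congestion stays bounded by $1$.

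More precisely, let $T',T''\subseteq B_i'$ be any pair of disjoint equal-sized subsets; I need to exhibit a flow $F:T'\sconnect_1 T''$ in $G[S_i]$. Using the bijection between $B_i'$ and its image $Z_i'\subseteq Z_i$ under $\qset_i'$, let $T_1'\subseteq Z_i'$ and $T_1''\subseteq Z_i'$ be the images of $T'$ and $T''$, respectively, so $|T_1'|=|T_1''|$ and $T_1'\cap T_1''=\emptyset$. Since $Z_i$ is $1$-well-linked in $G[X_i]$ by Claim~\ref{claim: wl in balanced cut}, there is a flow $F^*:T_1'\sconnect_1 T_1''$ in $G[X_i]$. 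The final flow is obtained by concatenating three sub-flows: one unit along each path of $\qset_i'$ from $T'$ to $T_1'$; the flow $F^*$ from $T_1'$ to $T_1''$; and one unit along each path of $\qset_i'$ (in reverse) from $T_1''$ to $T''$.

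The main thing to verify is edge-congestion. The paths of $\qset_i'$ are node-disjoint, so each edge appearing in $\qset_i'$ belongs to exactly one such path, and since $T'$ and $T''$ are disjoint subsets of $B_i'$, no path of $\qset_i'$ is used both in the first sub-flow and the third sub-flow; hence every edge in $\qset_i'$ carries total flow at most $1$. The flow $F^*$ is entirely inside $G[X_i]$ with congestion at most $1$ by choice. Finally, since every path of $\qset_i'$ lives in $G[Y_i\cup Z_i]$ and is internally disjoint from $Z_i$, the edges it uses lie either in $G[Y_i]$ or are edges of $E(X_i,Y_i)$ incident to $Z_i$; in particular none of these edges lies in $G[X_i]$, so the edge sets used by $F^*$ and by $\qset_i'$ are disjoint. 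Therefore the concatenated flow has total edge-congestion at most $1$, as required, establishing the $1$-well-linkedness of $B_i'$ in $G[S_i]$. No real obstacle is anticipated here; the work has already been done in constructing $\qset_i'$ and in Claim~\ref{claim: wl in balanced cut}, and the only thing to be careful about is confirming that the two routings use disjoint sets of edges, which follows cleanly from the location of $\qset_i'$ inside $G[Y_i\cup Z_i]$.
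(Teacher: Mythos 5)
Your proof is correct and takes the same approach the paper intends; the paper simply declares the claim an immediate consequence of the $1$-well-linkedness of $Z_i$ in $G[X_i]$ and the existence of $\qset_i'$, while you spell out the concatenation and verify the edge-disjointness of the two routings. Nothing to add.
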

\begin{proof}
The observation follows immediately from the fact that $Z_i$ is $1$-well-linked in $G[X_i]$, and there is a set $\qset_i'$ of node-disjoint paths, connecting every vertex of $B'_i$ to a distinct vertex of $Z_i$, that are contained in $G[Y_i\cup Z_i]$, and are internally disjoint from $Z_i$.
\end{proof}

\begin{observation} For all $1\leq i\leq \ell$,  $(A_i',B_i')$ are $\half$-linked in $G[S_i]$.
\end{observation}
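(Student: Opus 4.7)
The plan is to exploit the three routing structures guaranteed by Lemma~\ref{lemma: select the As and Bs} together with the $1$-well-linkedness of $Z_i$ in $G[X_i]$ from Claim~\ref{claim: wl in balanced cut}, and assemble them by composition. Fix disjoint equal-sized subsets $A''\subseteq A_i'$ and $B''\subseteq B_i'$ with $|A''|=|B''|=\kappa$; it suffices to construct a flow $F:A''\sconnect_2 B''$ in $G[S_i]$. Let $\qset^A\subseteq \qset_i''$ be the paths originating at $A''$ and $Z_A\subseteq Z_i$ the set of their endpoints, and let $\qset^B\subseteq \qset_i'$ be the paths originating at $B''$ and $Z_B\subseteq Z_i$ the set of their endpoints; both $Z_A$ and $Z_B$ have size $\kappa$, but they may intersect.

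Next I would define three auxiliary flows and compose them. Flow $F_1$ sends one unit along each path of $\qset^A$, routing $A''$ to $Z_A$; since $\qset_i''$ is edge-disjoint, $F_1$ has edge-congestion at most $1$, and it lives entirely in $G[S_i]$. Flow $F_3$ sends one unit (in reverse) along each path of $\qset^B$, routing $Z_B$ to $B''$; since $\qset_i'$ is node-disjoint, $F_3$ has edge-congestion at most $1$, and lives entirely in $G[Y_i\cup Z_i]$, using only edges of $E(Y_i)$ and the edges of $E(X_i,Y_i)$ incident on $Z_i$ (as endpoints). Flow $F_2$ routes $Z_A$ to $Z_B$: for every $z\in Z_A\cap Z_B$, handle it as a trivial pass-through of one unit at $z$; for the remaining vertices, $Z_A\setminus Z_B$ and $Z_B\setminus Z_A$ are disjoint equal-sized subsets of $Z_i$, so by the $1$-well-linkedness of $Z_i$ in $G[X_i]$ there is a flow of value $|Z_A\setminus Z_B|$ between them with edge-congestion at most $1$ supported entirely inside $G[X_i]$.

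The main step is then to verify, by a straightforward case analysis on edge type, that the composition $F_1\circ F_2\circ F_3$ achieves edge-congestion at most $2$ on $G[S_i]$. For any edge $e$ with both endpoints in $X_i$, only $F_1$ and $F_2$ can use it, each contributing at most $1$; for any edge with both endpoints in $Y_i$, only $F_1$ and $F_3$ can use it, each contributing at most $1$; for any edge in $E(X_i,Y_i)$, $F_2$ does not use it (as $F_2$ lives in $G[X_i]$) and $F_1$, $F_3$ each contribute at most $1$. Hence on every edge the total congestion is at most $2$.

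I do not foresee a serious obstacle here, but the one subtlety worth checking carefully is the composition when $Z_A$ and $Z_B$ intersect: one must make sure that units passing through a common vertex $z\in Z_A\cap Z_B$ are accounted for consistently (connecting the path of $\qset^A$ ending at $z$ directly to the path of $\qset^B$ ending at $z$) so that each $\qset_i'$-path and each $\qset_i''$-path is used at most once, preserving the congestion bound of $1$ for $F_1$ and $F_3$. Once this is in place, the composed flow witnesses that $(A_i',B_i')$ are $\half$-linked in $G[S_i]$, as desired.
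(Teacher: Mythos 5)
Your proof is correct and takes essentially the same route as the paper's: route $A''$ to $Z_A\subseteq Z_i$ along the edge-disjoint $\qset_i''$-paths, route $B''$ to $Z_B\subseteq Z_i$ along the node-disjoint $\qset_i'$-paths, stitch $Z_A$ to $Z_B$ inside $G[X_i]$ using the $1$-well-linkedness of $Z_i$, and observe that the three pieces are supported on (nearly) disjoint edge sets so the congestion is at most $2$. The one place you flagged as a subtlety — vertices in $Z_A\cap Z_B$ — is the same thing the paper handles by routing only between $Z_i'\setminus Z_i''$ and $Z_i''\setminus Z_i'$, so it is genuinely a non-issue once phrased as you did; the two write-ups differ only in whether the middle stage is described as a flow or as a set of edge-disjoint paths extracted from it by integrality.
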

\begin{proof}
Fix some $1\leq i\leq \ell$.
 Let $T'\subseteq A_i'$, $T''\subseteq B_i'$ be any pair of equal-sized vertex sets. We need to show that there is a flow $F: T'\sconnect_2 T''$ in $G[S_i]$. Let $\rset'\subseteq \qset_i'$ be the set of paths originating at the vertices of $T''$, and let $Z_i'$ be the set of vertices where the paths of $\rset'$ terminate. Recall that $G[S_i]$ contains a set $\qset_i'':A_i'\connect Z_i$ of edge-disjoint paths, where every path terminates at a distinct vertex of $Z_i$. Let $\rset''\subseteq \qset_i''$ be the set of paths originating at the vertices of $T'$, and let $Z_i''$ be the set of vertices where the paths of $\qset_i''$ terminate. Since set $Z_i$ is $1$-well-linked in $G[X_i]$, there is a set $\tilde{\qset}: Z_i'\setminus Z_i''\sconnect Z_i''\setminus Z_i'$ of edge-disjoint paths in $G[X_i]$. The final flow $F$ is obtained by concatenating the paths in $\rset',\tilde{\qset}$, and $\rset''$, and sending one flow unit via each resulting path. We claim that $F$ causes edge-congestion at most $2$ in $G[S_i]$. Consider an edge $e\in E_G(S_i)$. This edge may participate in at most one path in $\rset''$. If $e\in E(X_i)$, then the edge may participate in at most one path in $\tilde{\qset}$, but it does not belong to any path in $\rset'$. If $e\in E(Y_i)\cup E(X_i,Y_i)$, then it may participate in at most one path in $\rset'$, but it does not belong to any path in $\tilde{\qset}$. Therefore, overall, edge $e$ participates in at most two paths.
\end{proof}

In order to complete the proof of Theorem~\ref{thm from weak to perfect PoS} for a good \PoS, it is now enough to prove Lemma~\ref{lemma: select the As and Bs}.

\begin{proofof}{Lemma~\ref{lemma: select the As and Bs}}
Fix some $1\leq i\leq \ell$. 
Let $G_i=G[S_{i+1}]\cup \tpset_i\cup G[Y_i]\cup E(Z_i,Y_i)$. We build a directed flow network $H_i$, as follows (see Figure~\ref{fig: flow-network2}). Start with $G_i$, and replace every edge in $G_i[S_{i+1}]\cup G_i[Y_i\cup Z_i]$ with a bi-directed pair of edges. Direct all edges on paths $P\in \tilde{\pset}_i$ along the path, from $\tilde B_i$ toward $\tilde A_{i+1}$. For each vertex $v\in Z_i$, delete all incoming edges. Add a source vertex $s$, that connects to every vertex in $Z_i$ with a directed edge, and a destination vertex $t$, to which every vertex in $Z_{i+1}$ is connected. We set the capacity of every edge to $1$. 

\begin{figure}[h]
\centering
\subfigure[Balanced cuts in $S_i$ and $S_{i+1}$. The vertices of $Z_i$ and $Z_{i+1}$ are shown in blue, the vertices of $\tilde{B}_i$ in red, and the vertices of $\tilde{A}_{i+1}$ in green.]{\scalebox{0.45}{\includegraphics{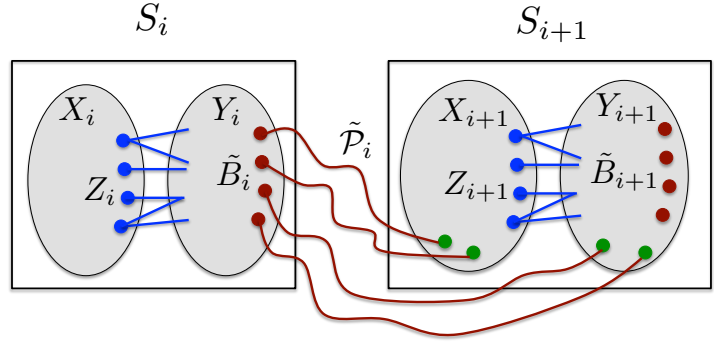}}\label{fig: flow-network1}}
\hspace{1cm}
\subfigure[Constructing flow network $H_i$.]{
\scalebox{0.45}{\includegraphics{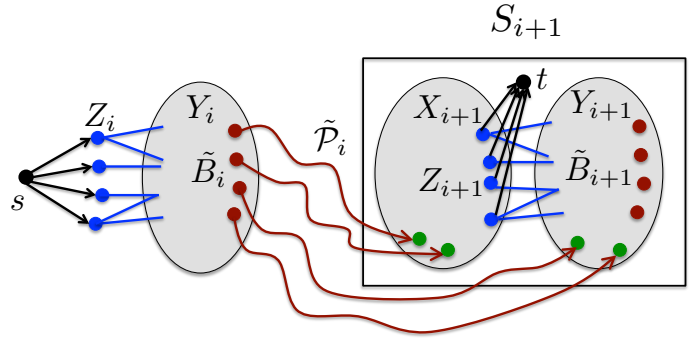}}\label{fig: flow-network2}}
\caption{Illustration to the proof of Theorem~\ref{thm from weak to perfect PoS}.\label{fig: flow-network}}
\end{figure}

\begin{figure}[h]
\end{figure}

We claim that the resulting flow network has a valid $s$-$t$ flow $F$ of value $\alpha \ceil{w/4}$.
In order to construct the flow, we need the following two claims.

\begin{claim}\label{claim: first flow}
There is a flow in $G[Z_i\cup Y_i]$ from the vertices of $\tilde B_i$ to the vertices of $Z_i$, where every vertex in $\tilde B_i$ sends exactly $\alpha$ flow units, every vertex in $Z_i$ receives at most one flow unit, and the flow causes no edge-congestion. Moreover, vertices of $Z_i$ do not serve as inner vertices on flow-paths that carry non-zero flow.
\end{claim}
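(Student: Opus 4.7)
The plan is to derive the claim from Maximum-Flow Minimum-Cut applied to a carefully designed auxiliary network. Build $\hat N$ from $G[Y_i\cup Z_i]$ by bi-directing every edge of $G[Y_i]$ (capacity $1$) and keeping each edge of $E(X_i,Y_i)$ as a single directed edge from its $Y_i$-endpoint to its $Z_i$-endpoint (capacity $1$); then add a source $s$ connected to every $b\in \tilde B_i$ by a directed edge of capacity $\alpha$, and a sink $t$ to which every $z\in Z_i$ is connected by a directed edge of capacity $1$. An $s$-$t$ flow of value $\alpha\lceil w/4\rceil$ in $\hat N$ directly yields the flow required by the claim: because the edges of $E(X_i,Y_i)$ are one-way into $Z_i$ and $Z_i$ has no outgoing edge other than to $t$, no $Z_i$-vertex can serve as an inner vertex on a flow path; saturating the capacity-$\alpha$ edges $s\to b$ forces each $b\in \tilde B_i$ to send exactly $\alpha$; the capacity-$1$ edges $z\to t$ cap each sink at $1$; and the unit edge capacities give no edge-congestion.

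It therefore suffices to show that every $s$-$t$ cut in $\hat N$ has value at least $\alpha\lceil w/4\rceil$. Fix a cut $(A,\bar A)$ with $s\in A,\ t\in\bar A$, and set $A_Y=A\cap Y_i$, $A_Z=A\cap Z_i$, $\bar A_Y=\bar A\cap Y_i$, $\bar A_Z=\bar A\cap Z_i$. Its value is
\[
\Phi(A,\bar A) \;=\; \alpha\,|\tilde B_i\cap\bar A_Y| \;+\; |A_Z| \;+\; |E_{G[Y_i]}(A_Y,\bar A_Y)| \;+\; |E(A_Y,\bar A_Z)|.
\]
If $|A_Z|\ge \alpha\lceil w/4\rceil$ the bound is immediate, so assume $|A_Z|<\alpha\lceil w/4\rceil$. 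Extend $(A,\bar A)$ to a partition $(P,Q)$ of $V(G[S_i])$ by placing $X_i\setminus Z_i$ on the $\bar A$ side, i.e.\ $P=\bar A_Y\cup\bar A_Z\cup(X_i\setminus Z_i)=\bar A_Y\cup(X_i\setminus A_Z)$ and $Q=A_Y\cup A_Z$. Because $(X_i,Y_i)$ is $1/4$-balanced with $|B_i\cap X_i|\ge |B_i\cap Y_i|$, we have $|B_i\cap X_i|\ge \lceil w/4\rceil$, and hence
\[
|P\cap B_i| \;\ge\; |B_i\cap X_i|-|B_i\cap A_Z| \;\ge\; \lceil w/4\rceil - |A_Z|.
\]
The $\alpha$-well-linkedness of $B_i$ in $G[S_i]$ now forces the cut value of $(P,Q)$ in $G[S_i]$ to be at least $\alpha\min(|P\cap B_i|,|Q\cap B_i|)$. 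The cut $(P,Q)$ differs from the $\hat N$-cut exactly in two places: it replaces the $\alpha|\tilde B_i\cap\bar A_Y|+|A_Z|$ source/sink contribution by $|E_{G[X_i]}(A_Z,X_i\setminus A_Z)| + |E(\bar A_Y,A_Z)|$, i.e.\ the edges inside $G[X_i]$ crossing $A_Z$ and the $E(X_i,Y_i)$-edges from $\bar A_Y$ to $A_Z$. Comparing the two cuts and invoking Claim~\ref{claim: wl in balanced cut} ($Z_i$ is $1$-well-linked in $G[X_i]$ with $|Z_i|\ge \alpha\lceil w/4\rceil$), which lower-bounds $|E_{G[X_i]}(A_Z,X_i\setminus A_Z)|$ by roughly $\min(|A_Z|,|Z_i|-|A_Z|)$, lets us trade each surplus $|A_Z|$-term in $(P,Q)$ for the $|A_Z|$-term in the $\hat N$-cut, and combine with the $\alpha|\tilde B_i\cap \bar A_Y|$-contribution to conclude $\Phi(A,\bar A)\ge \alpha\lceil w/4\rceil$.

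The hard part is the final bookkeeping step: edges of $E(X_i,Y_i)$ crossing from $\bar A_Y$ to $A_Z$ are counted in the $G[S_i]$-cut but not in the $\hat N$-cut, so one must use both the minimality of $(X_i,Y_i)$ as a $1/4$-balanced cut with respect to $B_i$ (which prevents $A_Z$ from being "too connected" to the $\bar A$ side) and the $1$-well-linkedness of $Z_i$ in $G[X_i]$ to absorb these terms. Once the min-cut bound is established, Maximum-Flow Minimum-Cut delivers the required flow, and all four structural properties in the claim follow automatically from the construction of $\hat N$.
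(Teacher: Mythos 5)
Your approach is genuinely different from the paper's, and the difference matters. The paper never does any min-cut bookkeeping: it routes a congestion-$1/\alpha$ flow from $\tilde B_i$ to a $\lceil w/4\rceil$-subset $T_i\subseteq B_i\cap X_i$ inside the \emph{whole} cluster $G[S_i]$ (where the $\alpha$-well-linkedness of $B_i$ applies directly), truncates each flow-path at its first $Z_i$-vertex (which it must contain), and scales by $\alpha$. The bound of $1$ on the flow into each $Z_i$-vertex comes from flow conservation plus the degree-$3$ pigeonhole fact that at most one of {\{}incoming, outgoing{\}} arcs at $v$ has two or more non-zero-flow members. You instead restrict attention to $G[Z_i\cup Y_i]$ from the start, build an auxiliary network $\hat N$ with tight source and sink capacities, and try to prove a min-cut bound combinatorially.

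The final step of that min-cut argument has a genuine gap, and the tools you cite do not close it. Writing $W=\tilde B_i\cap A_Y$, the bound $\Phi(A,\bar A)\ge\alpha\lceil w/4\rceil$ is equivalent to $|A_Z|+|E_{G[Y_i]}(A_Y,\bar A_Y)|+|E(A_Y,\bar A_Z)|\ge\alpha|W|$. Applying $\alpha$-well-linkedness of $B_i$ to $A_Y$ gives $|E_{G[Y_i]}(A_Y,\bar A_Y)|+|E(A_Y,\bar A_Z)|+|E(A_Y,A_Z)|\ge\alpha|W|$, so you are left needing $|A_Z|\ge|E(A_Y,A_Z)|$ --- but a $Z_i$-vertex may have two or all three of its incident edges landing in $A_Y$, so $|E(A_Y,A_Z)|$ can be as large as $3|A_Z|$ and the inequality is simply false as a combinatorial statement. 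Your extension to the cut $(P,Q)$ of $V(G[S_i])$ hits the same wall: the offending term there is $|E_{G[X_i]}(A_Z,X_i\setminus A_Z)|+|E(\bar A_Y,A_Z)|$, which you must \emph{upper}-bound by $|A_Z|$, but Claim~\ref{claim: wl in balanced cut} ($1$-well-linkedness of $Z_i$ in $G[X_i]$) gives a \emph{lower} bound on $|E_{G[X_i]}(A_Z,X_i\setminus A_Z)|$, and the minimality of $(X_i,Y_i)$ applied by moving $A_Z$ from $X_i$ to $Y_i$ yields $|E_{G[X_i]}(A_Z,X_i\setminus A_Z)|\ge|E(Y_i,A_Z)|$, again a lower bound pointing the wrong way. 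The slack the paper exploits comes from flow conservation at $Z_i$-vertices, which is invisible to a static cut $(A,\bar A)$; so the bookkeeping as sketched cannot be completed with the listed ingredients. (The max-flow value in $\hat N$ is indeed $\alpha\lceil w/4\rceil$, but the easiest proof of that fact is the paper's flow construction itself, which defeats the purpose of a min-cut detour.)
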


\begin{claim}\label{claim: third flow}
There is a set $\qset^*_{i+1}:\tilde A_{i+1} \connect_{1/\alpha}Z_{i+1}$ of paths in $G[S_{i+1}]$, where every vertex of $Z_{i+1}$ is an endpoint of at most $1/\alpha$ paths.
\end{claim}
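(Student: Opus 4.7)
My plan is to prove Claim~\ref{claim: third flow} via a max-flow/min-cut argument in an auxiliary network, combining the $\alpha$-well-linkedness of $A_{i+1}\cup B_{i+1}$ in $G[S_{i+1}]$ with the minimality of the $1/4$-balanced cut $(X_{i+1},Y_{i+1})$. I would build a network $N$ from $G[S_{i+1}]$ by attaching a source $s$ to each vertex of $\tilde A_{i+1}$ by a unit-capacity edge, attaching each vertex of $Z_{i+1}$ to a sink $t$ by an edge of capacity $1/\alpha$, and setting the capacity of every edge of $G[S_{i+1}]$ to $1/\alpha$. Since $1/\alpha$ is an integer, a maximum $s$--$t$ flow of value $\lceil w/4\rceil$ in $N$ decomposes integrally (after rescaling) into the required path collection $\qset^*_{i+1}$, in which each vertex of $\tilde A_{i+1}$ sends exactly one unit, each vertex of $Z_{i+1}$ receives at most $1/\alpha$ units, and no edge is used by more than $1/\alpha$ paths. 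Thus it suffices to show that every $s$--$t$ cut in $N$ has value at least $\lceil w/4\rceil$.

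By a short optimality argument one may reduce to canonical cuts, i.e.\ cuts induced by partitions $(P',Q')$ of $V(G[S_{i+1}])$ with $\tilde A_{i+1}\subseteq P'$ and $Z_{i+1}\subseteq Q'$; for such a cut the value equals $|E_{G[S_{i+1}]}(P',Q')|/\alpha$, so the goal becomes $|E(P',Q')|\geq\alpha\lceil w/4\rceil$. If $|Q'\cap(A_{i+1}\cup B_{i+1})|\geq\lceil w/4\rceil$, this follows directly from Observation~\ref{obs: well-linkedness alt def} applied to the $\alpha$-well-linked set $A_{i+1}\cup B_{i+1}$, using $|P'\cap(A_{i+1}\cup B_{i+1})|\geq|\tilde A_{i+1}|=\lceil w/4\rceil$. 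Otherwise $|Q'\cap B_{i+1}|<\lceil w/4\rceil$, and I exchange: consider the partition $(X',Y')=(X_{i+1}\setminus Q',\,Y_{i+1}\cup Q')$ of $S_{i+1}$. Using $|X_{i+1}\cap B_{i+1}|\geq w/2$ (which follows from the convention $|X_{i+1}\cap B_{i+1}|\geq|Y_{i+1}\cap B_{i+1}|$) together with $|Q'\cap B_{i+1}\cap X_{i+1}|<\lceil w/4\rceil$, one verifies $(X',Y')$ is $1/4$-balanced with respect to $B_{i+1}$. Minimality of $(X_{i+1},Y_{i+1})$ then gives $|E(X',Y')|\geq|E(X_{i+1},Y_{i+1})|\geq\alpha\lceil w/4\rceil$, where the last inequality follows from Claim~\ref{claim: wl in balanced cut} and the observation that $|E(X_{i+1},Y_{i+1})|\geq|Z_{i+1}|$. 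Finally, because $Z_{i+1}\subseteq Q'\subseteq Y'$, no edge of $E(X_{i+1},Y_{i+1})$ crosses $(X',Y')$ (both endpoints now lie in $Y'$), so $|E(X',Y')|=|E_{G[X_{i+1}]}(X_{i+1}\setminus Q',\,X_{i+1}\cap Q')|\leq|E(P',Q')|$, yielding the desired bound.

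The main obstacle is making the exchange argument in the second subcase fully rigorous: verifying that $(X',Y')$ is $1/4$-balanced with respect to $B_{i+1}$ requires the careful arithmetic $w/2-(\lceil w/4\rceil-1)\geq w/4$, and the key identity $|E(X',Y')|=|E_{G[X_{i+1}]}(X_{i+1}\setminus Q',X_{i+1}\cap Q')|$ relies essentially on $Z_{i+1}\subseteq Q'$ so that none of the cut edges $E(X_{i+1},Y_{i+1})$ cross $(X',Y')$. The reduction to canonical cuts is a standard but slightly delicate local-optimality step: for a minimum cut in $N$, if some $a\in\tilde A_{i+1}$ lies in $Q'$ then moving $a$ into $P'$ changes the cut value by $-1+(|E(a,Q')|-|E(a,P'\setminus\{a\})|)/\alpha$, and I would argue that for an appropriately chosen min cut this always weakly improves the bound, and analogously for $z\in Z_{i+1}\cap P'$.
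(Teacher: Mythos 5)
Your network is identical to the paper's, and your two cases (well-linkedness of $A_{i+1}\cup B_{i+1}$, then an exchange against the minimum balanced cut) track the paper's argument closely. However, the reduction to canonical cuts is a genuine gap, not a ``slightly delicate local-optimality step.'' Moving a vertex $a\in\tilde A_{i+1}\cap Q'$ into $P'$ saves only $1$ on the edge $(s,a)$ but may pay $1/\alpha>1$ for each edge from $a$ to $Q'$ that becomes cut (and an additional $1/\alpha$ on $(a,t)$ if $a\in Z_{i+1}$, a possibility your formula omits). Likewise, moving $z\in Z_{i+1}\cap P'$ into $Q'$ saves $1/\alpha$ on $(z,t)$ but may pay $1/\alpha$ per newly-cut edge from $z$ to $P'$. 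So the minimum cut may be strictly non-canonical, and your case analysis, which bounds only $|E(P',Q')|$, does not cover it: for a non-canonical cut the total value is $|\tilde A_{i+1}\cap Q'|+|Z_{i+1}\cap P'|/\alpha+|E(P',Q')|/\alpha$, and you have no handle on the first two terms.

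The paper's proof never reduces to canonical cuts. Writing $z_1=|\tilde A_{i+1}\cap S'|$, $z_2=|\tilde A_{i+1}\cap T'|$, $z_1'=|Z_{i+1}\cap S'|$, $z_2'=|Z_{i+1}\cap T'|$, $E'=E(S',T')$, it first disposes of the subcases $|E'|\geq z_2'$ (then $(z_1'+|E'|)/\alpha\geq|Z_{i+1}|/\alpha\geq\lceil w/4\rceil$) and $|E'|\geq\alpha z_1$ (then $z_2+|E'|/\alpha\geq z_1+z_2=\lceil w/4\rceil$) --- note that these bounds use precisely the non-canonical contributions $z_1'/\alpha$ and $z_2$ that your reduction tries to eliminate. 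Only in the remaining case, where both $|E'|<z_2'$ and $|E'|<\alpha z_1$, does it run the well-linkedness and exchange arguments (your Cases 1 and 2). Your proof becomes correct if you abandon the canonical-cut reduction and instead carry the two extra terms through the case analysis as the paper does.
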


We prove both claims below, after we complete the proof of Lemma~\ref{lemma: select the As and Bs} using them.
Flow $F$ is constructed by concatenating three different flows, $F_1,F_2$, and $F_3$. 
Flow $F_1$ is contained in $H_i[s\cup Z_i\cup Y_i]$, and it originates at $s$ and terminates at the vertices of $\tilde B_i$, where every vertex of $\tilde B_i$ receives exactly $\alpha$ flow units. This flow is constructed using the flow given by Claim~\ref{claim: first flow}, after we reverse the direction of the flow.
For the second flow, $F_2$, we simply send $\alpha$ flow units along each path in $\tilde{\pset}_i$, and flow $F_3$ is obtained by sending $\alpha$ flow units along each path in $\qset_{i+1}^*$.
By concatenating the flows in $F_1,F_2$ and $F_3$, we obtain a valid $s$--$t$ flow of value $\alpha \ceil{w/4}\geq \alpha w/4$ in $H_i$. Using the integrality of flow, there is a set $\rset_i$ of $\ceil{\alpha w/4}$ edge-disjoint paths in $H_i$, connecting $s$ to $t$, such that for every pair $(e,e')$ of anti-parallel edges, at most one edge participates in the paths in $\rset_i$. Each path in $\rset_i$ contains exactly one path in $\tilde{\pset}_i$. We let $\pset_i'\subseteq \tilde{\pset}_i$ be the set of paths contained in the paths of $\rset_i$. Let $B_i'\subseteq \tilde B_i$ be the set of vertices where the paths of $\pset_i'$ originate, and let $A_{i+1}'\subseteq \tilde A_{i+1}$ be
the set of vertices where the paths of $\pset_i'$ terminate. Note that the set $\rset_i$ of paths defines two sets of paths: a set $\qset_i'$ of edge-disjoint paths contained in $G_i[Y_i\cup Z_i]$, connecting every vertex of $B_i'$ to a distinct vertex of $Z_i$, and a set $\qset_{i+1}''$ of edge-disjoint paths contained in $G[S_{i+1}]$, connecting every vertex of $A_{i+1}'$ to a distinct vertex of $Z_{i+1}$. Moreover, since each vertex $v\in Z_i$ has no incoming edges in $H_i$, except for the edge $(s,v)$, the paths in $\qset_i'$ are internally disjoint from $Z_i$. Since all vertices of $G$ have degree at most $3$ in $G$, and the degree of every vertex $v\in B_i'$ in graph $G[S_i]$ is at most $2$, using reasoning similar to that used in the proof of Observation~\ref{obs: EDP to NDP in degree-3}, we conclude that the paths in $\qset_i'$ are node-disjoint. It now remains to complete the proofs of Claims~\ref{claim: first flow} and~~\ref{claim: third flow}.

\begin{proofof}{Claim~\ref{claim: first flow}}
Recall that $B_i$ is $\alpha$-well-linked in $G[S_i]$, and $|X_i\cap B_i|\geq |B_i|/2\geq \ceil{w/4}$. Let $T_i\subseteq X_i\cap B_i$ be any subset of $\ceil{w/4}$ such vertices. From the $\alpha$-well-linkedness of $B_i$ in $G[S_i]$, there is a flow $F': \tilde B_i\sconnect_{1/\alpha} T_i$ in $G[S_i]$. Let $G'$ be a directed graph obtained from $G[S_i]$ by bi-directing all its edges. Then there is a flow $F'':\tilde B_i\sconnect_{1/\alpha} T_i$ in $G'$, where for every pair $(e,e')$ of anti-parallel edges, at most one of these edges carries non-zero flow.
Since the maximum vertex degree is at most $3$ in $G$, for every vertex $v\in Z_i$, either there is at most one incoming edge with non-zero flow, or there is at most one outgoing edge with non-zero flow in $G'$. Therefore, the total amount of flow through $v$, $\sum_{P: v\in P}F''(P)\leq 1/\alpha$. We can then obtain the desired flow $F$ as follows. For every path $P$ carrying no-zero flow, we truncate $P$ at the first vertex of $Z_i$ that lies on it (where we view the path as directed from a vertex of $\tilde B_i$ to a vertex of $T_i$), and then send $\alpha\cdot F'(P)$ flow units via $P$.\end{proofof}

\begin{proofof}{Claim~\ref{claim: third flow}}
We construct a flow network $\nset$, as follows (see Figure~\ref{fig: st-flow1}). Start with graph $G[S_{i+1}]$, and set the capacity of every edge in $G[S_{i+1}]$ to be $1/\alpha$. Add a source $s$, that connects to every vertex of $\tilde A_{i+1}$ with an edge of capacity $1$, and add a destination vertex $t$, connecting every vertex of $Z_{i+1}$ to $t$ with an edge of capacity $1/\alpha$. Since $1/\alpha$ is an integer, from the integrality of flow, it is enough to show that there is an $s$--$t$ flow of value $|\tilde A_{i+1}|=\ceil{w/4}$ in $\nset$ -- such a flow immediately defines the desired collection $\qset^*_{i+1}: \tilde A_{i+1}\connect_{1/\alpha} Z_{i+1}$ of paths, where for every vertex $v\in Z_{i+1}$, at most $1/\alpha$ of the paths terminate at $v$. Assume for contradiction that such a flow does not exist. Then there is an $s$--$t$ cut $(S,T)$ in $\nset$, with $s\in S$, $t\in T$, and $\sum_{e\in E(S,T)}c(e)<\ceil{w/4}$. Let $S'=S\setminus \set{s}$, and $T'=T\setminus\set{t}$, so $(S',T')$ is a partition of $S_{i+1}$. Let $z_1=|\tilde A_{i+1}\cap S'|$, $z_2=|\tilde A_{i+1}\cap T'|$, and similarly $z'_1=|Z_{i+1}\cap S'|$, and $z'_2=|Z_{i+1}\cap T'|$ (see Figure \ref{fig: st-flow1}). Let $E'=E_G(S',T')$.

\begin{figure}[h]
\centering
\subfigure[Network $\nset$. Edges incident on $s$ have capacity $1$; all other edges have capacity $1/\alpha$.]{\scalebox{0.45}{\includegraphics{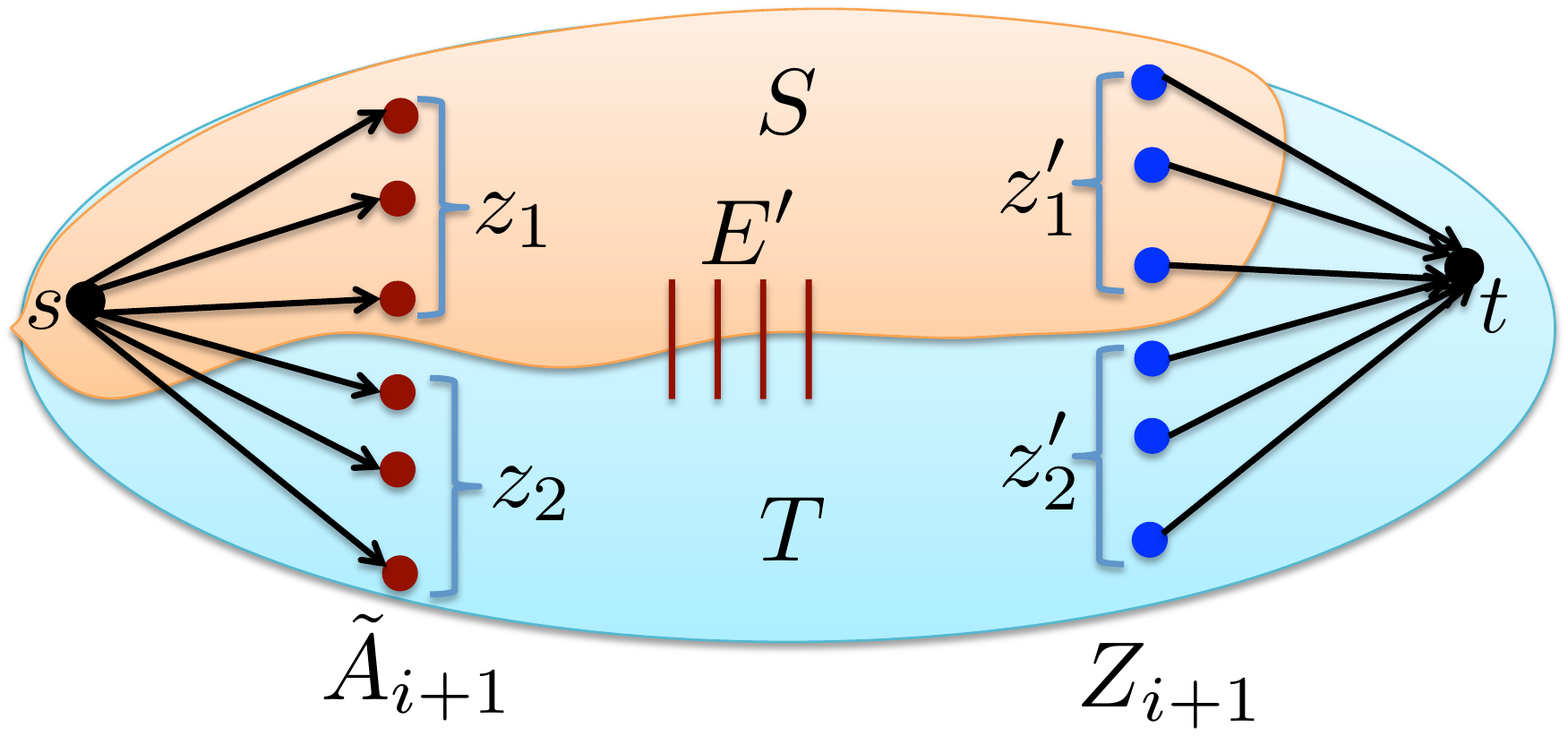}}\label{fig: st-flow1}}
\hspace{1cm}
\subfigure[Improving the balanced cut $(X_{i+1},Y_{i+1})$. The red edges are edges of $E'$.]{
\scalebox{0.5}{\includegraphics{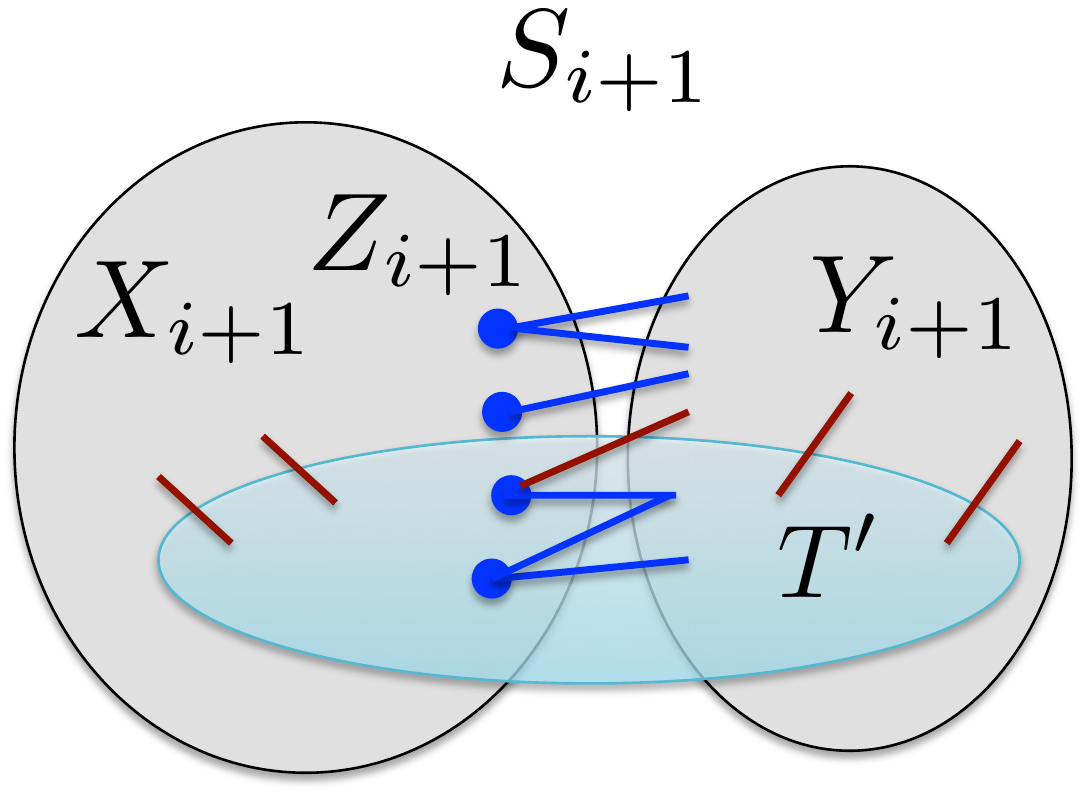}}\label{fig: st-flow2}}
\caption{Illustration to the proof of Claim~\ref{claim: third flow}.\label{fig: st-flow}}
\end{figure}

Notice that $|E'|<z_2'$, since otherwise $\sum_{e\in E_{\nset}(S,T)}c(e)\geq (|E'|+z_1')/\alpha\geq (z_2'+z_1')/\alpha= |Z_{i+1}|/\alpha\geq \ceil{w/4}$ from Claim~\ref{claim: wl in balanced cut}. Similarly, $|E'|<\alpha z_1$ must hold, since otherwise $\sum_{e\in E_{\nset}(S,T)}c(e)\geq |E'|/\alpha+z_2\geq z_1+z_2=\ceil{w/4}$, a contradiction.

Finally, we claim that $|T'\cap B_{i+1}|<z_1$. Indeed, otherwise, from the $\alpha$-well-linkedness of $A_{i+1}\cup B_{i+1}$, there must be a flow of value at least $z_1$ between vertices of $A_{i+1}\cap S'$ and $B_{i+1}\cap T'$ with edge-congestion at most $1/\alpha$, implying that $|E'|\geq \alpha z_1$, contradicting our conclusion above.

Consider now a new partition $(X',Y')$ of $S_{i+1}$, where $X'=X_{i+1}\setminus T'$, and $Y'=Y_{i+1}\cup T'$ (see Figure~\ref{fig: st-flow2}). Since $|B_{i+1}\cap T'|<z_1\leq \ceil{w/4}$, this is a $1/4$-balanced cut with respect to $B_{i+1}$. We now claim that $|E_G(X',Y')|<|E_G(X_{i+1},Y_{i+1})|$, contradicting the minimality of the cut $(X_{i+1},Y_{i+1})$. Indeed, every vertex of $Z_{i+1}\cap T'$ is incident on some edge $e\in E_G(X_{i+1},Y_{i+1})$. If $e\not \in E'$, then $e$ does not belong to $E_G(X',Y')$. Therefore, $|E_G(X',Y')|\leq |E_G(X_{i+1},Y_{i+1})|-|Z_{i+1}\cap T'|+|E'|=|E_G(X_{i+1},Y_{i+1})|-z_2'+|E'|<|E(X_{i+1},Y_{i+1})|$, since $|E'|<z_2'$.
\end{proofof}
\end{proofof}


\paragraph{Obtaining a perfect \PoS.}
We now turn to construct a perfect \PoS, from a good \PoS of width $\ceil{\alpha w/4}$ and length $\ell$. Let $w'=\ceil{\alpha w/4}$.
To simplify notation, we assume that we are given a good \PoS $(\sset,\bigcup_{i=1}^{\ell-1}\pset_{i})$ of width $w'$ with $\sset=(S_1,\ldots,S_{\ell})$, and we denote the corresponding sets of vertices contained in $S_i$, for $1\leq i\leq \ell$, by $A_i$ and $B_i$. As before, we add an artificial set $\pset_{\ell}$ of $w'$ disjoint paths, each containing a single edge, connecting a distinct vertex of $B_{\ell}$ to a distinct vertex of $A_1$.

It is immediate to verify that there is some constant $0<\alpha'\leq 1$, such that for all $1\leq i\leq \ell$, $A_i\cup B_i$ is $\alpha'$-well-linked in $G[S_i]$, since $B_i$ is $1$-well-linked and $(A_i,B_i)$ are $\half$-linked in $G[S_i]$. Fix some $1\leq i\leq \ell$. From Theorem~\ref{thm: grouping}, there is a subset $\tilde B_i\subseteq B_i$ of at least $\alpha' w'/12$ vertices, such that $\tilde B_i$ is node-well-linked in $G[S_i]$. Let $\tpset_i\subseteq \pset_i$ be the subset of paths originating from $\tilde B_i$, and let $\tilde A_{i+1}\subseteq A_{i+1}$ be the set of vertices where they terminate. By Theorem~\ref{thm: grouping}, there is a subset $\tilde A_{i+1}'\subseteq \tilde A_{i+1}$ of at least $\alpha' |\tilde A_{i+1}'|/12\geq (\alpha')^2w'/144$ vertices, such that $\tilde A_{i+1}'$ is node-well-linked in $G[S_{i+1}]$. Let $\tpset'_i\subseteq \tpset_i$ be the subset of paths terminating at $\tilde A_{i+1}$. Finally, let $\pset'_i\subseteq \tpset'_i$ be any subset of $\floor{\frac{\alpha'}{2\Delta} \cdot |\tpset_i'|}=\Omega(w')$ paths. Denote by $B_i'\subseteq \tilde B_i$, $A_{i+1}'\subseteq \tilde{A}_{i+1}'$ the sets of endpoints of the paths in $\pset_i'$ that lie in $S_i$ and $S_{i+1}$, respectively. Then $B_i'$ is node-well-linked in $G[S_i]$ and $A_{i+1}'$ is node-well-linked in $G[S_{i+1}]$. Moreover, from Theorem~\ref{thm: linkedness from node-well-linkedness}, for each $1\leq j\leq \ell$, $(A'_j,B'_j)$ are node-linked in $G[S_j]$. Our final \PoS is  $(\sset,\bigcup_{i=1}^{\ell-1}\pset_{i}')$.

\subsection{Proof of Theorem~\ref{thm: find grid minor or good linkage}}

Suppose we are given any connected $n$-vertex graph $Z$. A simple path
$P$ in $Z$ is called a \emph{$2$-path} iff every vertex
$v\in P$ has degree $2$ in $Z$. In particular, $P$ must be an induced path in $Z$. The following theorem, due to Leaf and
Seymour~\cite{LeafS12} states that we can find either a spanning tree
with many leaves or a long $2$-path in $Z$. 

\begin{theorem}\label{thm: many leaves or a long 2-path}
  Let $Z$ be any connected $n$-vertex graph, and $L\geq 1,p\geq 1$ integers
  with $\frac n {2L}\geq p+5$. Then there is an efficient
  algorithm that either finds a spanning tree $T$ with at least $L$
  leaves in $Z$, or a $2$-path of length at least $p$ in $Z$.
\end{theorem}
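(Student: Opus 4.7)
The plan is to use an extremal argument on spanning trees of $Z$. I would fix a spanning tree $T$ of $Z$ that maximizes the number of leaves $\lambda(T)$. If $\lambda(T)\geq L$, we are done. Otherwise $\lambda(T)<L$, and the hope is that $T$ must then have enough degree-$2$ vertices to contain a long induced degree-$2$ chain whose internal vertices, if they have no non-tree edges, give the required $2$-path in $Z$.

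First I would carry out the standard tree-counting. Writing $\ell=\lambda(T)$, $b$ for the number of degree-$\geq 3$ vertices of $T$, and $d_2$ for the number of degree-$2$ vertices of $T$, the handshake lemma gives $\sum_{v\text{ branch}}\deg_T(v)=\ell+2b-2$, so the bound $\deg_T(v)\geq 3$ at branch vertices yields $b\leq \ell-2<L-2$. Consequently $d_2=n-\ell-b\geq n-2L+2$. Contracting every maximal chain of degree-$2$ vertices of $T$ produces a ``skeleton'' tree $T^*$ whose vertex set is exactly the leaves and branch vertices of $T$, so $|V(T^*)|\leq 2L-3$ and $|E(T^*)|\leq 2L-4$. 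Each edge of $T^*$ corresponds to a maximal degree-$2$ chain in $T$, so pigeonhole yields a chain with at least $(n-2L+2)/(2L-4)$ internal degree-$2$ vertices; using the hypothesis $n/(2L)\geq p+5$ this number is strictly greater than $p$.

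Writing the chain as $v_0,v_1,\ldots,v_k,v_{k+1}$ with $v_1,\ldots,v_k$ of degree $2$ in $T$ and $k>p$, the easy case is that every $v_i$ ($1\leq i\leq k$) has degree $2$ in the whole graph $Z$: then $v_0,v_1,\ldots,v_{k+1}$ is a $2$-path in $Z$ of length at least $p$ and we are done. The main obstacle — and the core of the argument — is the remaining case, where some internal $v_i$ has a non-tree neighbour $u\in V(Z)$. Here I would use a swap: adding $(v_i,u)$ to $T$ creates a unique fundamental cycle $C$, and since $v_i$ has degree $2$ in $T$, the cycle $C$ uses exactly one of the two tree-edges at $v_i$; say it uses $(v_{i-1},v_i)$. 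Then $T':=T-(v_{i-1},v_i)+(v_i,u)$ is a spanning tree in which $v_i$ still has degree $2$, $v_{i-1}$ drops from degree $2$ to degree $1$ (becoming a new leaf), and $u$ gains one incident tree-edge. If $u$ was not a leaf of $T$, this strictly increases the leaf count, contradicting the maximality of $\lambda(T)$.

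The delicate sub-case is when $u$ was itself a leaf of $T$, in which case the swap preserves $\lambda$. To close this gap I would strengthen the extremal choice of $T$ with a secondary criterion — for instance, among all spanning trees maximizing $\lambda$, take one minimizing the lexicographic multiset of chain-lengths in the skeleton (or equivalently minimizing the length of the longest maximal degree-$2$ chain). One then verifies that the swap, even when it preserves the leaf count, strictly shortens the chain containing $v_i$ (since $v_{i-1}$ is split off as a new leaf) without creating any longer chain elsewhere (the only vertex whose tree-degree grows is $u$, which either remains non-branch or becomes a branch endpoint of a possibly shorter chain). This again contradicts the extremal choice of $T$, completing the argument. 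The essentially algorithmic flavour of both steps — maximizing leaves by repeatedly applying such swaps, and then either returning $T$ or reading off the long chain and testing its internal vertices for non-tree neighbours — gives the efficient algorithm the theorem promises.
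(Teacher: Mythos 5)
The paper does not prove Theorem~\ref{thm: many leaves or a long 2-path} itself; it cites Leaf and Seymour and uses it as a black box, so there is no in-paper argument to compare against. Your overall plan --- a spanning tree maximizing the number of leaves, a skeleton-tree pigeonhole to locate a long maximal degree-$2$ chain, and edge swaps to argue that the chain's internal vertices must have degree $2$ in $Z$ --- is the right kind of extremal argument. But two of the steps do not go through as written.

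The first problem is at the ends of the chain. When $i=1$ and the fundamental cycle of $(v_1,u)$ passes through $(v_0,v_1)$, the vertex $v_0$ that loses a tree edge need not become a leaf: if the cycle goes that way then $v_0$ must in fact be a branch vertex (a leaf $v_0$ would force $u=v_0$, which is a tree edge, not a chord), and a branch vertex merely drops from degree $\geq 3$ to degree $\geq 2$. So for $i$ at or near the endpoints of the chain, your swap creates no new leaf and yields no contradiction; the same occurs at $i=k$. Repairing this means restricting to interior indices, so the $2$-path you finally extract is several vertices shorter than the chain, and your estimate $(n-2L+2)/(2L-4)>p$ is not strong enough to absorb the loss --- the hypothesis $n/(2L)\geq p+5$ does supply the needed slack, but only if you carry the constants more carefully than you have.

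The second, more substantial, problem is that the ``delicate sub-case'' ($u$ a leaf of $T$) is not closed by the secondary criterion and swap you propose. After the swap $T-(v_{i-1},v_i)+(v_i,u)$, the vertex $u$ acquires tree-degree $2$, so the maximal degree-$2$ chain that contained $u$ in $T$ is spliced onto the $v_i$-side of the long chain. The resulting chain through $v_i$ has at least $k-i+2$ plus the internal length of $u$'s old chain internal vertices, which can be $\geq k$ (take $i$ small, or let $u$ sit at the end of a long chain). Thus the claim that the swap ``strictly shortens the chain containing $v_i$ \ldots\ without creating any longer chain elsewhere'' is simply false, and the lexicographic chain-length potential you propose need not decrease. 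One can instead remove the tree edge at $u$ (which also lies on the fundamental cycle), making $v_i$ a branch vertex and splitting the long chain into two strictly shorter pieces, but then $u$'s former tree-neighbor may drop from degree $3$ to degree $2$ and merge two other chains into a single longer one, and the same kind of obstruction reappears. Closing this sub-case requires a more careful choice of both the potential and the swap than what you have sketched, and as written the argument does not establish the theorem.
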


A set $\lset$ of $w$ disjoint paths that connect the vertices of $A$ to the vertices of $B$ is called an $A$--$B$ linkage. Since $(A,B)$ are node-linked in $G$, such a linkage must exist.

Given an $A$--$B$ linkage $\lset$, we associate a graph $H_{\lset}$ with it, as follows.  The vertices of $H_{\lset}$ are $U=\set{u_P\mid P\in \lset}$, and there is an edge between $u_P$ and $u_{P'}$ iff $P\neq P'$ and there is a path $\gamma_{P,P'}$ in $G$, whose first vertex belongs to $P$,
last vertex belongs to $P'$, and all inner vertices are disjoint from $\bigcup_{P''\in \lset}V(P'')$. Notice that since $G$ is a connected graph, so is $H_{\lset}$, for any $A$--$B$ linkage $\lset$. We say that an $A$--$B$ linkage $\lset$ is \emph{good} if and only if the length of the longest $2$-path in the corresponding graph $H_{\lset}$ is less than $8h_1+1$. 

Assume first that there is a good linkage $\lset$ in $G$. Then Theorem~\ref{thm: many leaves or a long 2-path} guarantees that there is a spanning tree $\tau$ in $H_{\lset}$ with at least $\frac{w}{2(8h_1+5)}\geq h_2$ leaves. We let $\pset$ contain all paths $P\in \lset$ whose corresponding vertex $u_P$ is a leaf of $\tau$. Then $\pset$ contains at least $h_2$ node-disjoint paths, connecting vertices of $A$ to vertices of $B$. Consider any pair $P,P'\in \pset$ of paths with $P\neq P'$, and let $Q$ be the path connecting $u_P$ to $u_{P'}$ in $\tau$. Let $H_{P,P'}\subseteq G$ be the graph consisting of the union of all paths $P''$ with $u_{P''}\in V(Q)$, and paths $\gamma_{P_1,P_2}$ where $(u_{P_1},u_{P_2})$ is an edge of $Q$. Then graph $H_{P,P'}$ contains a path $\beta_{P,P'}$, connecting a vertex of $P$ to a vertex of $P'$, such that all inner vertices of $\beta_{P,P'}$ are disjoint from $\bigcup_{P''\in \pset}V(P'')$.

Therefore, we assume from now on that $G$ contains no good $A$--$B$ linkage. Among all $A$--$B$ linkages $\lset$ in $G$, choose the one minimizing the number of degree-$2$ vertices in the corresponding graph $H_{\lset}$.

Since $\lset$ is not a good $A$--$B$ linkage, there is a $2$-path $R^*=(u_{P_0},\ldots,u_{P_{8h_1}})$ of length $8h_1+1$ in the corresponding graph $H=H_{\lset}$. Consider the following four subsets of paths: $\pset_1=\set{P_1,\ldots,P_{2h_1}}$, $\pset_2=\set{P_{{2h_1}+1},\ldots,P_{4{h_1}}}$, $\pset_3=\set{P_{4{h_1}+1},\ldots,P_{6{h_1}}}$, and $\pset_4=\set{P_{6{h_1}+1},\ldots,P_{8{h_1}}}$, whose corresponding vertices participate in the $2$-path $R^*$. (Notice that $P_0\not\in \pset_1$, but the degree of $u_{P_0}$ is $2$ in $H$ - we use this fact later). Let $X\subseteq A$ be the set of the endpoints of the paths in $\pset_2$ that belong to $A$, and let $Y\subseteq B$ be the set of paths in $\pset_4$ that belong to $B$ (see Figure~\ref{fig-setup}). Since $(A,B)$ are node-linked in $G$, we can find a set $\qset$ of ${2h_1}$ node-disjoint paths connecting $X$ to $Y$ in $G$. We view the paths in $\qset$ as directed from $X$ to $Y$.

Let $Q\in \qset$ be any such path. Observe that, since $R^*$ is a $2$-path in $H$, path $Q$ has to either intersect all paths in $\pset_1$, or all paths in $\pset_3$, before it reaches $Y$. Therefore, it must intersect $P_{2{h_1}+1}$ or $P_{4{h_1}}$. Let $v$ be the last vertex of $Q$ that belongs to $P_{2{h_1}+1}\cup P_{4{h_1}}$. Let $Q'$ be the segment of $Q$ starting from $v$ and terminating at a vertex of $Y$. Assume first that $v\in P_{2{h_1}+1}$. We say that $Q$ is a type-1 path in this case. Let $u$ be the first vertex on $Q'$ that belongs to $P_0$. (Such a vertex must exist again due to the fact that $R^*$ is a $2$-path.) Let $Q^*$ be the segment of $Q'$ between $v$ and $u$. Then
$Q^*$ intersects every path in $\pset_1\cup\set{P_0,P_{2{h_1}+1}}$, and does not intersect any other path in $\lset$, while $|V(Q^*)\cap V(P_0)|=|V(Q^*)\cap P_{2{h_1}+1}|=1$. (see Figure~\ref{fig-setup}).

Similarly, if $v\in P_{4{h_1}}$, then we say that $Q$ is a type-2 path. Let $u$ be the first vertex of $Q'$ that belongs to $P_{6{h_1}+1}$, and let $Q^*$ be the segment of $Q'$ between $u$ and $v$. Then $Q^*$ intersects every path in $\pset_3\cup\set{P_{4{h_1}}\cup P_{6{h_1}+1}}$, and does not intersect any other path in $\lset$, while $|V(Q^*)\cap V(P_{4{h_1}})|=|V(Q^*)\cap V(P_{6{h_1}+1})|=1$. 

\begin{figure}[h]
\scalebox{0.6}{\includegraphics{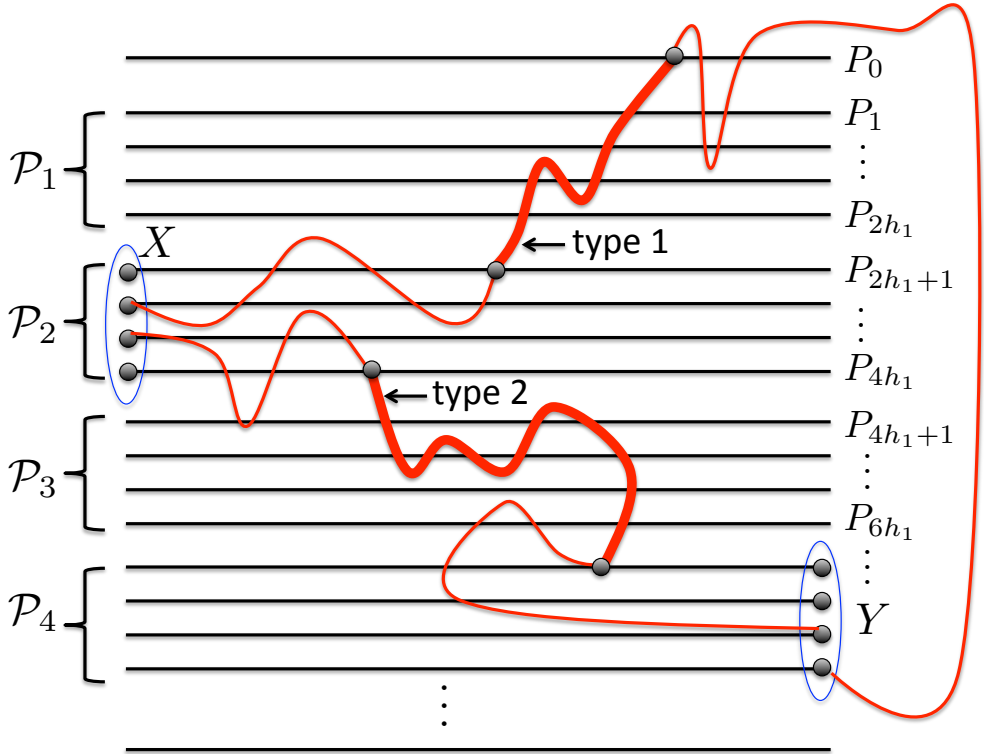}}
\caption{Two examples for paths in $\qset$ - a type-1 and a type-2 path - are shown in red, with the $Q^*$ segment highlighted.\label{fig-setup}}
\end{figure}

 Clearly, either at least half the paths in $\qset$ are type-1 paths, or at least half the paths in $\qset$ are type-2 paths. Assume without loss of generality that the former is true.
 Let $\qset'$ be the set of the sub-paths $Q^*$ for all type-1 paths $Q\in \qset$, that is, $\qset'=\set{Q^*\mid Q\in \qset \mbox{ and $Q$ is type-1}}$, so $|\qset'|\geq {h_1}$.

The rest of the proof is based on the following idea. We will iteratively simplify the intersection pattern of the paths in $\qset'$ and $\pset_1$, until the graph obtained from their union becomes planar, and then find a model of the grid minor in the resulting planar graph directly.


The algorithm performs a number of iterations. Throughout the algorithm, the set $\qset'$ of paths remains unchanged. The input to every iteration consists of a set $\pset_1'$ of paths, such that  the following hold:

\begin{itemize}
\item $\lset'=(\lset\setminus \pset_1)\cup \pset_1'$ is a valid $A$-$B$ linkage;

\item The graphs $H$ and $H'=H_{\lset'}$ are isomorphic to each other, where the vertices $u_P$ for $P\not\in \pset_1$ are mapped to themselves; 

\item Every path $Q\in \qset'$ intersects every path in $\pset_1'\cup \set{P_0,P_{2{h_1}+1}}$, and no other paths of $\lset'$. Moreover, $Q$ originates at a vertex of $P_0$, terminates at a vertex of $P_{2h_1+1}$, and is internally disjoint from $V(P_0)\cup V(P_{2h_1+1})$.
 \end{itemize}

 The input to the first iteration is $\pset_1'=\pset_1$. Throughout the algorithm, we maintain a graph $\tilde H$ - the subgraph of $G$ induced by the edges participating in the paths of $\pset_1'\cup \qset'$. We define below two combinatorial objects: a bump and a cross. We show that if $\tilde H$ contains either a bump or a cross, then we can find a new set $\pset''_1$ of paths, such that $\lset''=(\lset'\setminus \pset_1')\cup \pset_1''$ is a valid $A$--$B$ linkage. Moreover, $H_{\lset''}$ is isomorphic to $H_{\lset'}$, and we show that we obtain a valid input to the next iteration, while $|E(\qset')\cup E(\pset_1')|> |E(\qset')\cup E(\pset_1'')|$. In other words, the number of edges in the graph $\tilde H$ goes down in every iteration. We also show that if $\tilde H$ contains no bump and no cross, then a large subgraph of $\tilde H$ is planar, and contains a grid minor of size $(h_1\times h_1)$. Therefore, after $|E(G)|$ iterations the algorithm is guaranteed to terminate with the desired output. We now proceed to define the bump and the cross, and their corresponding actions.
A useful observation is that for any $A$--$B$ linkage $\lset'$, the corresponding graph $H_{\lset'}$ is a connected graph, since $G$ is connected.

\paragraph{A bump.} Let $\pset_1'$ be the current set of paths, and let $\lset'=(\lset\setminus \pset_1)\cup \pset_1'$ be the corresponding linkage.
 We say that the corresponding graph $\tilde H$ contains a bump, if there is a sub-path $Q'$ of some path $Q\in \qset'$, whose endpoints, $s$ and $t$, both belong belong to the same path $P_j\in \pset_1'$, and all inner vertices of $Q'$ are disjoint from all paths in $\pset_1'$. (See Figure~\ref{fig: bump}). 
Let $a_j\in A,b_j\in B$ be the endpoints of $P_j$, and assume that $s$ appears before $t$ on $P_j$, as we traverse it from $a_j$ to $b_j$. Let $P'_j$ be the path obtained from $P_j$, by concatenating the segment of $P_j$ between $a_j$ and $s$, the path $Q'$, and the segment of $P_j$ between $t$ and $b_j$.

\begin{figure}[h]
\scalebox{0.6}{\includegraphics{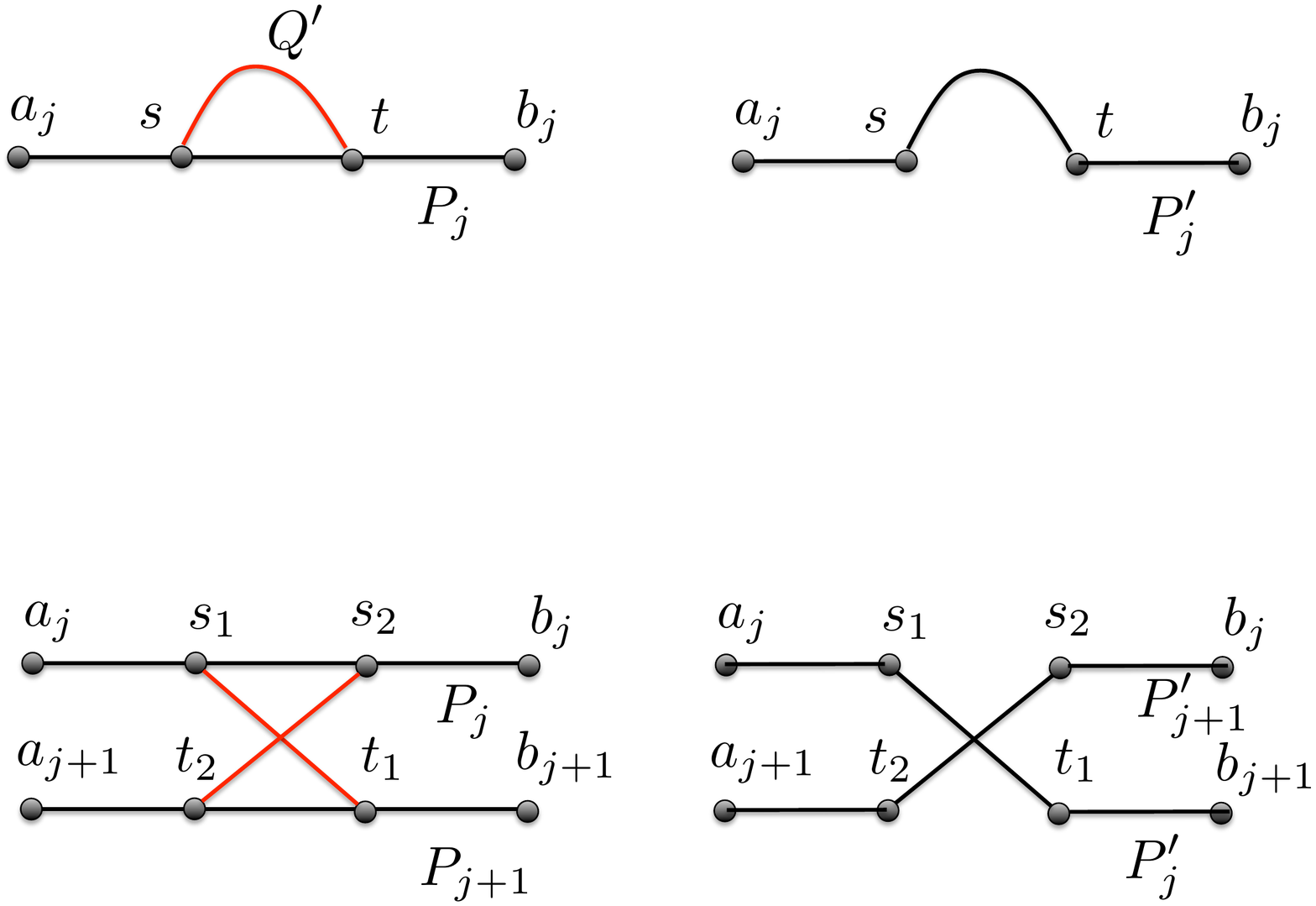}}
\caption{A bump and the corresponding action.\label{fig: bump}}
\end{figure}

Let $\pset_1''$ be the set of paths obtained by replacing $P_j$ with $P'_j$ in $\pset_1'$, and let $\lset''=(\lset'\setminus \pset_1')\cup \pset_1''=(\lset \setminus \pset_1)\cup \pset_1''$. It is immediate to verify that $\lset''$ is a valid $A$--$B$ linkage. Let $H'=H_{\lset'}$, and $H''=H_{\lset''}$, and let $E'$ be the set of edges in the symmetric difference of the two graphs (that is, edges, that belong to exactly one of the two graphs). Then for every edge in $E'$, both endpoints must belong to the set $\set{u_{P_{j-1}},u_{P_j},u_{P_{j+1}}}$. In particular, the only vertices whose degree may be different in the two graphs are $u_{P_{j-1}},u_{P_j},u_{P_{j+1}}$. If the degree of any one of these three vertices is different in $H''$ and $H'$, then, since their degrees are $2$ in both $H'$ and the original graph $H$, we obtain a new $A$--$B$ linkage $\lset''$, such that $H_{\lset''}$ contains fewer degree-$2$ vertices than $H$, which is impossible. Therefore, we assume that the degrees of all three vertices remain equal to $2$. Then it is immediate to verify that $H''$ is isomorphic to $H'$, where each vertex is mapped to itself, except that we replace $u_{P_j}$ with $u_{P_{j'}}$. It is easy to verify that all invariants continue to hold in this case. Let $\tilde H$ be the graph obtained by the union of the paths in $\pset_1'$ and $\qset'$, and define $\tilde H'$ similarly for $\pset_1''$ and $\qset'$. Then $\tilde H'$ contains fewer edges than $\tilde H$, since one of the edges of $P_j$ that is incident on $s$ belongs to $\tilde H$ but not to $\tilde H'$.

\paragraph{A cross.}
Suppose we are given two disjoint paths $Q_1',Q_2'$, where $Q_1'$ is a sub-path of some path $Q_1\in \qset'$, and $Q_2'$ is a sub-path of some path $Q_2\in \qset'$ (where possibly $Q_1=Q_2$). Assume that the endpoints of $Q_1'$ are $s_1,t_1$ and the endpoints of $Q_2'$ are $s_2,t_2$. Moreover, suppose that $s_1,s_2$ appear on some path  $P_j\in \pset_1'$ in this order, and  $t_2,t_1$ appear on $P_{j+1}\in \pset_1'$ in this order (where the paths in $\pset_1'$ are directed from $A$ to $B$), and no inner vertex of $Q_1'$ or $Q_2'$ belongs to any path in $\pset_1'$. We then say that $(Q_1',Q_2')$ is a cross. (See Figure~\ref{fig: cross}.)

\begin{figure}[h]
\scalebox{0.6}{\includegraphics{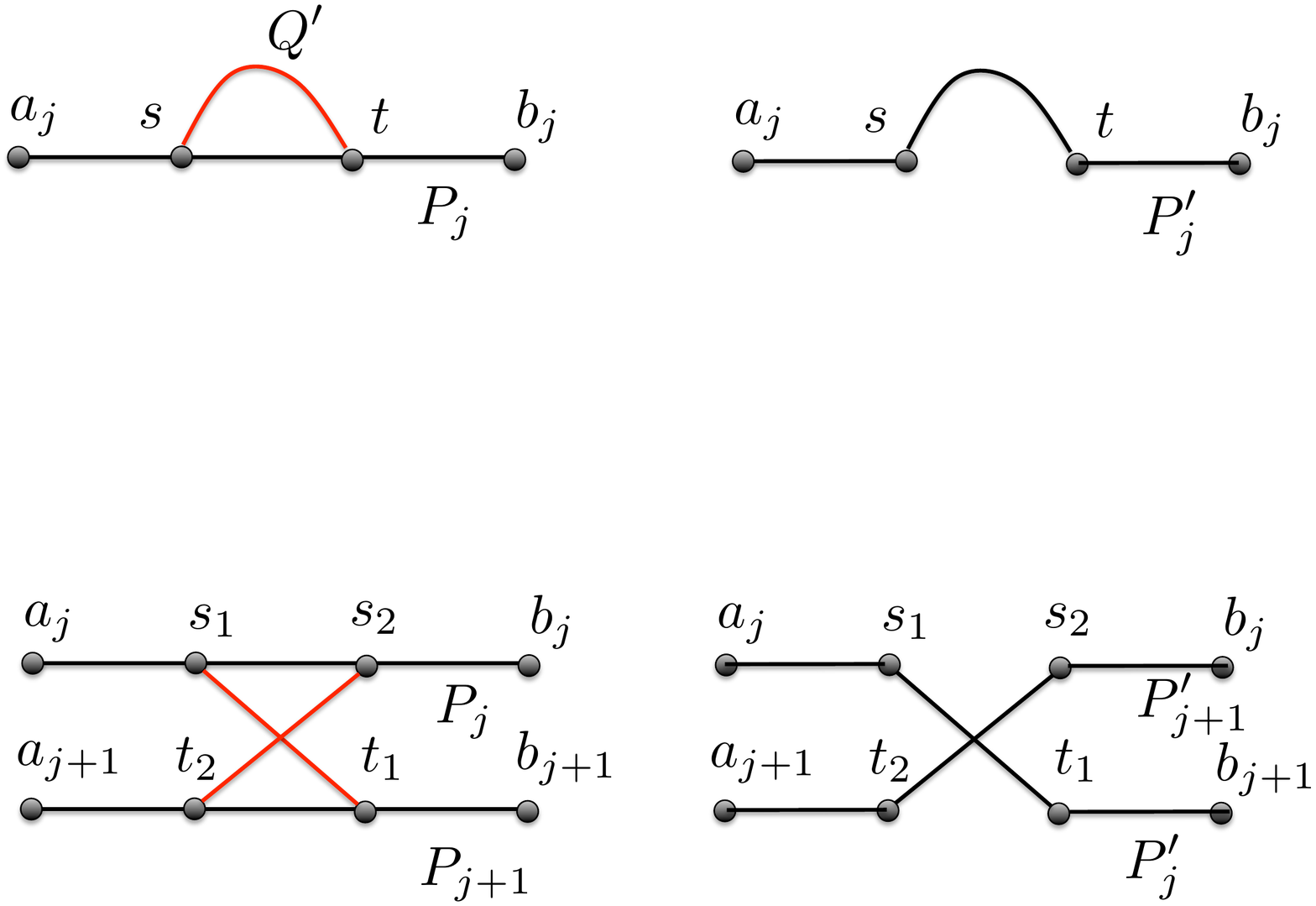}}
\caption{A cross and the corresponding action.\label{fig: cross}}
\end{figure}

Given a cross as above, we define two new paths, as follows. Assume that the endpoints of $P_j$ are $a_j\in A$, $b_j\in B$, and similarly the endpoints of $P_{j+1}$ are $a_{j+1}\in A$, $b_{j+1}\in B$. Let $P'_j$ be obtained by concatenating the segment of $P_j$ between $a_j$ and $s_1$, the path $Q_1'$, and the segment of $P_{j+1}$ between $t_1$ and $b_{j+1}$. Let $P'_{j+1}$ be obtained by concatenating the segment of $P_{j+1}$ between $a_{j+1}$ and $t_2$, the path $Q_2'$, and the segment of $P_j$ between $s_2$ and $b_j$. We obtain the new set $\pset''_1$ of paths by replacing $P_j,P_{j+1}$ with $P'_j,P'_{j+1}$ in $\pset_1'$. Let $\lset''=(\lset'\setminus \pset_1')\cup \pset_1''=(\lset \setminus \pset_1)\cup \pset_1''$. It is immediate to verify that $\lset''$ is a valid $A$--$B$ linkage. As before, let $H'=H_{\lset'}$, and $H''=H_{\lset''}$, and let $E'$ be the set of edges in the symmetric difference of the two graphs. Then for every edge in $E'$, both endpoints must belong to the set $\set{u_{P_{j-1}},u_{P_j},u_{P_{j+1}},u_{P_{j+2}}}$. The only vertices whose degree may be different in the two graphs are $u_{P_{j-1}},u_{P_j},u_{P_{j+1}},u_{P_{j+2}}$. If the degree of any one of these four vertices is different in $H''$ and $H'$, then, since their degrees are $2$ in both $H'$ and the original graph $H$, we obtain a new linkage $\lset''$, such that $H_{\lset''}$ contains fewer degree-$2$ vertices than $H$, which contradicts the choice of $\lset$. Therefore, the degrees of all four vertices remain equal to $2$.  It is now immediate to verify that $H''$ is isomorphic to $H'$, where each vertex is mapped to itself, except that we replace $u_{P_j}, u_{P_{j+1}}$ with $u_{P_{j}'}, u_{P_{j+1}'}$ (possibly swapping them). It is easy to verify that all invariants continue to hold in this case. Let $\tilde H$ be the graph obtained by the union of the paths in $\pset_1'\cup \qset'$, and define $\tilde H'$ similarly for $\pset_1''\cup \qset'$. Then $\tilde H'$ contains fewer edges than $\tilde H$, since one of the edges of $P_j$ incident on $s_1$ belongs to $\tilde H$ but not to $\tilde H'$.

We are now ready to complete the description of our algorithm. We start with $\pset_1'=\pset_1$, and then iterate. In every iteration, we construct the graph $\tilde H$, the subgraph of $G$ induced by $\pset_1'\cup \qset'$. If $\tilde H$ contains a bump or a cross, we apply the appropriate action, and obtain a valid input to the next iteration. Moreover, the number of edges in the new graph $\tilde H$ strictly decreases. Therefore, we are guaranteed that within $O(|E(G)|)$ iterations, the graph $\tilde H$ contains no bump and no cross. 

Consider the final graph $\tilde H$. For each path $Q\in \qset'$, let $v_Q$ be the last vertex of $Q$ that belongs to $V(P_1)$, and let $u_Q$ be the first vertex of $Q$ that belongs to $V(P_{2h_1})$ (we view $Q$ as directed from $P_0$ to $P_{2h_1+1}$). Let $\tilde Q$ be the sub-path of $Q$ between $v_Q$ and $u_Q$. Delete from $\tilde H$ all vertices of $V(Q)\setminus V(\tilde Q)$ for all $Q\in \qset'$, and let $\tilde H'$ denote the resulting graph. Let $\tilde \qset$ be the set of the sub-paths $\tilde Q$ for all $Q\in \qset$. We need the following claim.

\begin{claim} If $\tilde H$ contains no cross and no bump, then $\tilde H'$ is planar.
\end{claim}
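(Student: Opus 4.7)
\medskip
\noindent\textbf{Proof plan.} The plan is to construct an explicit planar drawing of $\tilde H'$. I would arrange the paths $P_1,\ldots,P_{2h_1}$ as pairwise-disjoint horizontal arcs in the plane, stacked vertically in order of their indices (so $P_j$ lies at height $j$), and then embed each $\tilde Q\in\tilde\qset$ as a simple curve that starts at its chosen vertex on $P_1$, ends at its chosen vertex on $P_{2h_1}$, and crosses the intermediate $P_j$'s in a monotone, non-crossing fashion. Since the paths in $\tilde\qset$ are already pairwise internally disjoint (they inherit node-disjointness from $\qset'$), the only crossings to worry about are between the vertical curves $\tilde Q$ and horizontal arcs $P_j$, which are built into the natural intersection pattern.

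To make this drawing well-defined, I would first use the no-bump property to analyze a single $\tilde Q$. Traversing $\tilde Q$ from its $P_1$-endpoint to its $P_{2h_1}$-endpoint, decompose it into alternating maximal on-path segments (lying within a single $P_j$) and off-path excursions (internally disjoint from $\pset_1'$). The absence of bumps says that every off-path excursion must connect two \emph{distinct} $P_j$'s, so $\tilde Q$ never ``bounces back'' into the same $P_j$ without first visiting another. This lets me treat each $\tilde Q$-$P_j$ interaction as a single, transverse crossing point on $P_j$.

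Next, I would use the no-cross property across pairs $\tilde Q,\tilde Q'$ to establish a globally consistent ordering. I would show that the left-to-right order in which the $\tilde Q$'s appear on $P_j$ is identical to that on $P_{j+1}$ for all $j$; otherwise, two sub-paths witnessing the swap across the strip between $P_j$ and $P_{j+1}$ would form exactly the configuration of a cross, contradicting the hypothesis. With this consistent ordering in hand, I can draw the portion of $\tilde H'$ inside each horizontal strip between $P_j$ and $P_{j+1}$ as a collection of non-crossing curves going upward, and then glue the strips together to obtain a planar embedding of the entire graph $\tilde H'$.

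The main obstacle I expect is the technical bookkeeping for $\tilde Q$'s that ``skip'' intermediate $P_j$'s by running along on-path segments rather than transversally crossing each $P_j$ once, and for multi-vertex intersections of a single $\tilde Q$ with a single $P_j$; here the no-bump analysis above is exactly what collapses these complicated local pictures into a single insertion point of $\tilde Q$ into each $P_j$ it meets, after which the strip-by-strip drawing extends in the routine way.
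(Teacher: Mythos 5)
Your drawing idea (stacked horizontal arcs for $P_1,\ldots,P_{2h_1}$ with $\tilde Q$'s running upward through the strips) is the right picture, and the use of no-cross to get a consistent left-to-right ordering in each strip is also on the right track. But there is a real gap in the step where you pass from ``no bumps'' to the drawing being well-defined.

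The absence of bumps only tells you that each off-path excursion has its two endpoints on \emph{distinct} paths $P_j, P_{j'}\in\pset_1'$; it does not tell you that $j'=j\pm1$. Your strip-by-strip drawing silently assumes that every excursion lives inside a single strip between $P_j$ and $P_{j+1}$. If an excursion went directly from $P_j$ to $P_{j+3}$, your drawing would force it to cross the arcs for $P_{j+1}$ and $P_{j+2}$ in the plane, and nothing in the no-bump or no-cross conditions rules this out (the cross is defined only for segments between a pair of \emph{consecutive} paths $P_j,P_{j+1}$). The step that is actually needed here comes from the other invariant maintained throughout the iterations: the vertices $u_{P_j}$ for $P_j\in\pset_1'$ induce a $2$-path in $H_{\lset'}$. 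Each excursion $\sigma$ is internally disjoint not just from $\pset_1'$ but from every path in $\lset'$ (since each $\tilde Q$ avoids $P_0$ and $P_{2h_1+1}$, and $Q$ touches no other $\lset'$ paths), so $\sigma$ witnesses an edge of $H_{\lset'}$. Since each $u_{P_j}$ has degree exactly $2$ there and its only neighbors along the $2$-path are $u_{P_{j-1}}$ and $u_{P_{j+1}}$, it follows that $j'=j\pm1$. You need this degree-$2$ argument, which is genuinely separate from no-bump.

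A second, smaller inaccuracy: your claim that no-bump ``collapses'' the local picture to a single insertion point of $\tilde Q$ into each $P_j$ is not correct. At this stage a single $\tilde Q$ can still visit a given $P_j$ several times (for instance via an excursion up to $P_{j+1}$ and a later excursion back down to $P_j$), and neither bump- nor cross-elimination removes such revisits; the paper only simplifies them afterwards, via the hill-removal step that comes \emph{after} the planarity claim. Planarity does not require a single insertion point per $P_j$; it requires only that every excursion be confined to a strip between consecutive arcs (the degree-$2$ argument above) and that excursions within each strip do not swap order (the no-cross argument). Once you add the consecutive-strip fact, the rest of your drawing argument goes through even when $\tilde Q$ revisits a $P_j$.
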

\begin{proof}
Consider some path $\tilde Q\in \tilde \qset$. Delete from $\tilde Q$ all edges that participate in the paths in $\pset_1'$, and let $\Sigma(\tilde Q)$ be the resulting set of sub-paths of $\tilde Q$. While any path $\sigma\in \Sigma(\tilde Q)$ contains a vertex $v\in V(\pset_1')$ as an inner vertex, we replace $\sigma$ with two sub-paths, where each subpath starts at one of the endpoints of $\sigma$ and terminates at $v$. Let $\Sigma=\bigcup_{\tilde Q\in \tilde \qset}\Sigma(\tilde Q)$ be the resulting set of paths.
Then for each path $\sigma\in \Sigma$, both endpoints of $\sigma$ belong to $V(\pset_1')$, and the inner vertices are disjoint from $V(\pset_1')$. Moreover, since the paths in $\pset_1'$ induce a $2$-path in the corresponding graph $H_{\lset'}$, and since there are no bumps, the endpoints of each such path $\sigma$ connect two consecutive paths in $\pset_1'$. Since no crosses are allowed, it is easy to see that the graph $\tilde H'$ is planar.
\end{proof}

We now show how to construct a grid minor in graph $\tilde H'$. We start from the union of the paths in $\pset_1'$ and $\tilde \qset$, and perform the following transformation. We say that a segment $\sigma$ of a path $Q\in \tilde \qset$ is a \emph{hill} if and only if (i) the endpoints $s,t$ of $\sigma$ lie on some path $P_i\in \pset_1'$; (ii) the segment $\sigma'$ of $P_i$ whose endpoints are $s$ and $t$ does not contain any vertex of $V(\pset_1'\setminus\set{P_i})$; and (iii) $\sigma$ intersects $P_{i-1}$ and is internally disjoint from all vertices of $V(\pset_1'\setminus\set{P_i})$.  While there is a hill in $\pset_1'\cup \tilde \qset$, we modify the corresponding path $Q$ by replacing the segment $\sigma$ with $\sigma'$. If this creates a cycle on $Q$ (this can happen if $\sigma'$ contained a vertex of $Q$), we discard all such cycles until $Q$ becomes a simple path. We continue performing such transformations, until there is no hill in the set $\pset_1'\cup \tilde \qset$ of paths. Notice that this transformation cannot create any bumps. 
We need the following claim:

\begin{claim}\label{claim: grid}
When the above algorithm terminates, for all $P_i\in \pset_1'$ and $Q\in \tilde \qset$, $P_i\cap Q$ is a path.
\end{claim}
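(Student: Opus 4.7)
I will argue by contradiction. Suppose the algorithm has terminated yet for some $P_i\in\pset_1'$ and some $Q\in\tilde\qset$ the intersection $V(P_i)\cap V(Q)$ does not induce a single sub-path of $Q$. Then, traversing $Q$ from its $P_0$-endpoint to its $P_{2h_1+1}$-endpoint, we can find a sub-path $\sigma\subseteq Q$ whose two endpoints $s,t$ lie on $P_i$ but whose interior vertices avoid $V(P_i)$. Among all such sub-paths I will choose one with the minimum number of edges; the goal is to exhibit $\sigma$ as a hill, contradicting the fact that the algorithm terminates only when no hill is present.

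The first step is to observe that the interior of $\sigma$ must meet at least one other path of $\pset_1'$: if it were disjoint from $V(\pset_1')\setminus V(P_i)$, then $\sigma$ would be a bump, and we already know that $\tilde H$ is bump-free at termination. (Bump-freeness is preserved through the hill-elimination loop because replacing a hill-segment of $Q$ by a segment of $P_i$ can only shorten $Q$ and does not produce a new sub-path with both endpoints on a single $P_j$ whose interior avoids $V(\pset_1')$.) Thus the interior of $\sigma$ intersects some $P_j$ with $j\neq i$.

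The second, crucial step is to restrict which $P_j$'s the interior of $\sigma$ can visit. Here I would exploit the fact that $\pset_1'$ together with $\tilde\qset$ projects onto a $2$-path structure in $H_{\lset'}$ and that $\tilde H$ is cross-free. Specifically, I would show that if the interior of $\sigma$ visited some $P_j$ with $|j-i|\ge 2$, then, combined with the way $Q$ itself crosses the consecutive levels between $P_i$ and $P_j$, one could extract either a cross (by pairing $\sigma$ with a suitable sub-path of another $Q'\in\tilde\qset$, or with another segment of $Q$ itself that traverses the intermediate levels) or a sub-bump contradicting minimality of $\sigma$. Similarly, I would argue that $\sigma$ cannot simultaneously touch $P_{i-1}$ and $P_{i+1}$ in its interior without yielding a cross. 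The conclusion is that the interior of $\sigma$ meets exactly one of $\set{P_{i-1},P_{i+1}}$, and after reorienting the indexing of the $2$-path if necessary, $\sigma$ matches the definition of a hill.

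The main obstacle is carrying out this middle step cleanly: converting the informal ``if $\sigma$ jumps over a level, you get a cross'' into a rigorous extraction of a cross-pair $(Q_1',Q_2')$ or a strictly shorter witness, using only the local information at $s,t$ and the $2$-path structure. This will likely require a short case analysis on the relative order in which $\sigma$ enters and leaves each intermediate $P_j$ along $Q$, together with a parity/monotonicity observation about how any $Q\in\tilde\qset$ progresses from $P_1$ to $P_{2h_1}$. Once this structural lemma is in place, the hill assertion and hence the contradiction with termination follow immediately, completing the proof of the claim.
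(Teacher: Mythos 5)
Your plan reduces to exhibiting a minimum-length excursion $\sigma$ as a hill, but the extremal principle you chose does not give you the leverage needed, and two of the specific moves you rely on do not hold up. First, ``reorienting the indexing of the $2$-path'' is not an available move: a hill is by definition a segment $\sigma$ of some $Q$ whose interior dips \emph{backward}, meeting $P_{i-1}$, and the elimination loop only removes such backward dips. If your minimum-length $\sigma$ happens to dip forward, meeting $P_{i+1}$, it simply is not a hill and your contradiction evaporates. The right observation is that a forward dip of $Q$ away from $P_i$ forces $Q$ to visit $P_{i+1}$ at least twice, and hence encodes a backward dip at the \emph{higher} level $i+1$. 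This is exactly why the paper uses a different extremal choice: among all segments of all $Q\in\tilde\qset$ with both endpoints on some $P_i$ whose interior meets $P_{i-1}$ and no other level, it picks the pair $(Q,P_i)$ \emph{maximizing $i$}. That choice pins the dip to be backward and one level deep, which your local min-length choice inside a fixed $(Q,P_i)$ does not.

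Second, the hill definition has a further clause that the segment $\sigma'$ of $P_i$ between $s$ and $t$ must not be traversed by any other path $Q'\in\tilde\qset$, and your argument never addresses it. Even if you establish that the interior of $\sigma$ meets only one neighboring level, some other $Q'$ may run through $\sigma'$, so $\sigma$ is still not a hill and you derive nothing. In the paper this is precisely where the max-$i$ choice pays off: if some $Q'\ne Q$ meets $\sigma'$, cross-freeness together with the fact that $\sigma$ caps $\sigma'$ on the $P_{i-1}$ side forces $Q'$ to enter and leave $\sigma'$ from $P_{i+1}$, producing a segment of $Q'$ with both endpoints on $P_{i+1}$ whose interior meets only $P_i$ --- a valid pair at the larger index $i+1$, contradicting maximality. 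Your min-length choice gives you no traction on this case. Finally, the cross-extraction you propose for forbidding $|j-i|\ge 2$ hops is also on shaky ground as stated: a single connected excursion of $Q$ that passes over an intermediate level does not by itself produce two interleaved sub-paths of the kind the cross definition requires, so some additional structure (again, coming naturally from the max-$i$ set-up) is needed to close that case.
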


Notice that it is now immediate to obtain the $(h_1\times h_1)$-grid minor from the union of the paths in $\pset_1'\cup \tilde \qset$, by first contracting every path $P_i\cap Q$ for all $P_i\in \pset_1'$ and $Q\in \tilde \qset$, and then suppressing all degree-2 vertices, after which we discard the $h_1$ extra rows.  It is now enough to prove Claim~\ref{claim: grid}.

\begin{proof}
Assume otherwise. Then there must be some path $Q\in \qset$, and some segment $\sigma$ of $Q$, whose two endpoints $s,t$ lie on some path $P_i\in \pset_1'$, such that $\sigma$ intersects $P_{i-1}$, and it is internally disjoint from  $V(\pset_1'\setminus\set{P_i})$. Notice that since there are no bumps, the intersection of $Q$ and $P_{i-1}$ is a path. Among all such pairs $(Q,P_i)$, choose the one maximizing $i$. Since $\sigma$ is not a hill, there must be some path $Q'\neq Q$ in $\tilde \qset$ that intersects the segment $\sigma'$ of $P_i$, lying between $s$ and $t$. Let $v$ be any vertex in $Q'\cap \sigma'$, and let $\sigma''$ be the longest contiguous sub-path of $Q'$ contained in $\sigma'$. Let $u$ be the last vertex of $Q'$ before $\sigma''$ that belongs to $V(\pset_1')$, and let $u'$ be the first vertex of $Q'$ after $\sigma''$ that belongs to $V(\pset_1')$. Then it is easy to verify that $u,u'\in V(P_{i+1})$ (and in particular $i\neq 2h_1$). But then we should have chosen the pair $(Q',P_{i+1})$ instead of $(Q,P_i)$, a contradiction.
\end{proof}

\subsection{Proof of Corollary~\ref{cor: paths from the path-set system}}

We say that a cluster $S_i\in \sset$ is \emph{even} if $i$ is even, and otherwise we say that $S_i$ is \emph{odd}.
We apply Theorem~\ref{thm: find grid minor or good linkage} to graph $G[S_i]$ for every even cluster $S_i$, using $A=A_i$ and $B=B_i$. If, for any even cluster $S_i$, the outcome is the $(h_1\times h_1)$-grid minor, then we terminate the algorithm and return the model of this minor. Therefore, we assume that for every even index $i$, Theorem~\ref{thm: find grid minor or good linkage} returns a collection $\lset_i$ of $h_2$ node-disjoint paths contained in $G[S_i]$, that connect some subset $A_i'\subseteq A_i$ of $h_2$ vertices to a subset $B_i'\subseteq B_i$ of $h_2$ vertices, such that for every pair $P,P'\in \lset_i$, there is a path $\beta_i(P,P')$ in $G[S_i]$, connecting a vertex of $P$ to a vertex of $P'$, where $\beta_i(P,P')$ is internally disjoint from $V(\lset_i)$.

Fix some $1\leq i\leq \floor{\ell/2}$. Let $A'_{2i}\subseteq A_{2i}$ and $B'_{2i}\subseteq B_{2i}$ be the sets of endpoints of the paths in $\lset_{2i}$. Let $\lset^-_{2i-1}\subseteq \pset_{2i-1}$ be the set of paths terminating at the vertices of $A'_{2i}$. If $2i<\ell$, then let $\lset^+_{2i}\subseteq \pset_{2i}$ be the set of paths originating at the vertices of $B'_{2i}$; otherwise, let $\lset^+_{2i}$ contain $h_2$ paths, each of which consists of a single distinct vertex of $B_{2i}'$.

Consider now some odd-indexed cluster $S_{i}$. If $i\neq 1$, then let $A'_{i}\subseteq A_{i}$ be the set of vertices where the paths of $\lset^+_{i-1}$ terminate, and otherwise let $A'_{i}$ be any set of $h_2$ vertices of $A_{i}$. If $i<\ell$, then let $B_{i}'\subseteq B_{i}$ be the set of vertices where the paths of $\lset^-_{i+1}$ originate, and otherwise let $B_i'$ be any set of $h_2$ vertices of $B_{i}$. Since $(A_{i},B_{i})$ are node-linked in $G[S_i]$, there is a set $\rset_{i}$ of $h_2$ node-disjoint paths, that are contained in $G[S_{i}]$, and connect $A_{i}'$ to $B_{i}'$.

We now define a the set $\qset$ of  paths, obtained by the concatenation of all paths $\lset^-_i,\lset_i,\lset^+_i$ where $S_i$
is an even cluster, and paths $\rset_j$, where $S_j$ is an odd cluster.  The resulting set $\qset$ contains $h_2$ disjoint paths, originating at the vertices of $A_1$ and terminating at the vertices of $B_{\ell}$, where for every $1\leq i\leq \ell$, for every path $Q\in \qset$, $Q\cap S_i$ is a path. Moreover, for every even integer $1\leq i\leq \ell$, for every pair $Q,Q'\in \qset$ of paths, there is a path $\beta_i(Q,Q')$ contained in $G[S_i]$, that connects a vertex of $Q$ to a vertex of $Q'$, and is internally disjoint form all paths in $\qset$. It is immediate to verify that all paths in $\qset$ are contained in $G'$.

\subsection{Proof of Corollary~\ref{cor: from path-set system to grid minor}}

We apply Corollary~\ref{cor: paths from the path-set system} to the \PoS, with parameters $h_1=g$ and $h_2=g$, so $w\geq 16g^2+10g$ as required. If the outcome is the $(g\times g)$-grid minor, then we terminate the algorithm and return its model. Therefore, we assume that the outcome of Corollary~\ref{cor: paths from the path-set system} is a set $\qset$ of $g$ paths connecting vertices of $A_1$ to vertices of $B_{\ell}$, that we denote by $\qset=\set{Q_1,\ldots,Q_g}$.
 We will embed the rows of the grid into the paths in $\qset$, where the $j$th row is embedded into $Q_j$. Let $E'$ be the set of the vertical edges of the $(g\times g)$ grid. We define the following ordering of the edges in $E'$: given any pair $e,e'$ of edges, where $e$ belongs to column $C_i$ and $e'$ belongs to column $C_j$, if $i<j$, then $e$ appears before $e'$; and if $i=j$, but $e$ appears above $e'$ in the grid, then $e$ also appears before $e'$ in this ordering. In other words, we order the edges by their column index, and inside each column in their natural top-to-bottom ordering. Notice that $|E'|=g(g-1)$, as is the number of the clusters $S_i\in \sset$ where $i$ is even. For each $1\leq i\leq g(g-1)$, we will embed the $i$th edge of $E'$ inside $G[S_{2i}]$. Let $e_i$ be the $i$th edge of $E'$, and assume that it connects a vertex on row $j$ to a vertex on row $j+1$ of the grid. Recall that there is a path $\beta_i(Q_j,Q_{j+1})$ in $G[S_{2i}]$, connecting a vertex of $Q_j$ to a vertex of $Q_{j+1}$, such that $\beta_i(Q_j,Q_{j+1})$ is internally disjoint from all paths in $\qset$. We use the path $\beta_i(Q_j,Q_{j+1})$ to embed the edge $e_i$. It is immediate now to complete an embedding of the $(g\times g)$-grid into the graph.

\label{-----------------------------------proofs from new sec------------------------}
\section{Proofs Omitted from Section~\ref{sec: splitting a cluster}}
\subsection{Proof of Lemma~\ref{lemma: re-routing of vertex-disjoint paths}}
The proof we provide here was suggested by Paul Seymour~\cite{PS-comm}. A different proof, using stable matching was shown by
Conforti, Hassin and Ravi~\cite{CHR}.

Let $\hat G'\subseteq \hat G$ be obtained from the union of the paths in $\xset_1\cup \xset_2$. Let $U_1'\subseteq U_1$ be the set of vertices where the paths of $\xset_1$ originate, and define $U_2'\subseteq U_2$ for the set $\xset_2$ of paths similarly.
Let $E_1$ be the set of all edges participating in the paths in $\xset_1$. While there is an edge $e\in E(\hat G')\setminus E_1$, such that graph $\hat G'\setminus\set{e}$ contains a set of $\ell_2$ nearly-disjoint $U_2'$--$s$ paths, we delete $e$ from $\hat G'$. At the end of this procedure, the final graph $\hat G'$ has the property that for every edge $e\in E(\hat G')\setminus E_1$, the largest number of nearly-disjoint $U_2'$--$s$ paths in graph $\hat G'\setminus\set{e}$ is less than $\ell_2$. Notice that $\xset_1\subseteq \hat G'$, and graph $\hat G'$ contains $\ell_2$ nearly-disjoint $U_2'$--$s$ paths. We need the following claim.

\begin{claim}\label{claim: augmenting}
There is a set $\xset'$ of $\ell_1$ nearly-disjoint $(U_1'\cup U_2')$--$s$ paths in graph $\hat G'$, such that exactly $\ell_2$ paths of $\xset'$ originate at the vertices of $U_2'$.
\end{claim}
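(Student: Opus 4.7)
The plan is to convert the claim into a statement about integer flows and then apply the flow decomposition theorem. First I build a unit-capacity flow network $N$ from $\hat G'$: replace every vertex $v \neq s$ by two copies $v^{\mathrm{in}}, v^{\mathrm{out}}$ joined by a unit-capacity edge $v^{\mathrm{in}} \to v^{\mathrm{out}}$ (so that near-disjointness in $\hat G'$ becomes edge-disjointness in $N$), give every original edge $(u,v)$ of $\hat G'$ unit capacity (now running from $u^{\mathrm{out}}$ to $v^{\mathrm{in}}$), and introduce a super-source $\sigma$ connected to $u^{\mathrm{in}}$ by a unit-capacity edge for every $u \in U_1' \cup U_2'$, with $s$ as the sink. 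In this network, a collection of $k$ nearly-disjoint $X$-$s$ paths in $\hat G'$, for $X \subseteq U_1' \cup U_2'$, corresponds bijectively to an integral $\sigma$-$s$ flow of value $k$ whose only $\sigma$-outgoing edges carrying flow are those into $X$. Consequently, the claim is equivalent to exhibiting an integral $\sigma$-$s$ flow in $N$ of value $\ell_1$ that saturates every $\sigma$-$U_2'$ edge.

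To produce such a flow, let $f_1$ be the integral flow of value $\ell_1$ in $N$ corresponding to $\xset_1$, so that $f_1(\sigma \to u^{\mathrm{in}}) = 1$ precisely for $u \in U_1'$, and let $f_2$ be the integral flow of value $\ell_2$ corresponding to some fixed collection $\yset_2$ of $\ell_2$ nearly-disjoint $U_2'$-$s$ paths in $\hat G'$, which exists by construction of $\hat G'$. Since both flows are $\{0,1\}$-valued, the difference $g := f_1 - f_2$ defines an integral flow in the residual network $N_{f_2}$ (putting unit flow on the forward residual edge of each $e$ with $f_1(e) > f_2(e)$, and on the backward residual edge of each $e$ with $f_1(e) < f_2(e)$), with net outflow $\ell_1 - \ell_2$ at $\sigma$, net inflow $\ell_1 - \ell_2$ at $s$, and conservation elsewhere. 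The key observation is that in $N_{f_2}$ the only edges leaving $\sigma$ are the forward edges $\sigma \to u^{\mathrm{in}}$ with $u \in U_1'$: the forward residual on $\sigma \to u^{\mathrm{in}}$ vanishes for $u \in U_2'$ because $f_2$ saturates those edges, and $\sigma$ has no in-neighbors in $N$, so no backward residual edges exit $\sigma$.

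I now apply the flow-decomposition theorem in $N_{f_2}$ to write $g$ as a non-negative integer combination of simple $\sigma$-$s$ paths $P_1, \ldots, P_{\ell_1 - \ell_2}$ plus some simple cycles. Each $P_i$ begins at $\sigma$, so by the previous paragraph its first edge must be $\sigma \to u^{\mathrm{in}}$ for some $u \in U_1'$, and because $P_i$ is simple it does not revisit $\sigma$ and therefore cannot traverse any residual edge of the form $w^{\mathrm{in}} \to \sigma$ with $w \in U_2'$. Setting $g' := \sum_i \chi_{P_i}$, the sum $f := f_2 + g'$ is an integral $\{0,1\}$-valued $\sigma$-$s$ flow in $N$ of value $\ell_2 + (\ell_1 - \ell_2) = \ell_1$, and for every $u \in U_2'$ the edge $\sigma \to u^{\mathrm{in}}$ remains saturated because $g'$ does not use its backward residual. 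Decomposing $f$ into $\ell_1$ edge-disjoint $\sigma$-$s$ paths in $N$ and translating them back to $\hat G'$ produces the desired collection $\xset'$ of $\ell_1$ nearly-disjoint $(U_1' \cup U_2')$-$s$ paths with exactly $\ell_2$ of them originating in $U_2'$. The main subtlety is verifying that the augmenting paths $P_i$ cannot unsaturate any $\sigma$-$U_2'$ edge; this is handled precisely by the simpleness of the residual decomposition together with the fact that $\sigma$ has no in-edges in $N$, forcing every simple $\sigma$-$s$ residual path to avoid revisiting $\sigma$. Notably, the edge-minimality of $\hat G'$ established earlier is not needed for this particular claim, and will be used only elsewhere in the proof of Lemma~\ref{lemma: re-routing of vertex-disjoint paths}.
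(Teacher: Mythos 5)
Your proof is correct, and it takes a genuinely different technical route from the paper's, although both live inside max-flow theory.

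The paper's proof is a \emph{dual} (min-cut) argument: it builds a node-capacitated flow network by adding auxiliary vertices $t$, $t_1$, $t_2$ with carefully chosen capacities ($\ell_1-\ell_2$ for $t_1$, $\ell_2$ for $t_2$) so that a value-$\ell_1$ flow automatically splits into $\ell_2$ paths from $U_2'$ and $\ell_1-\ell_2$ from $U_1'$, and then it shows by a three-way case analysis (whether $t_1\in Z$, $t_2\in Z$, or neither) that every vertex cut separating $t$ from $s$ has total capacity at least $\ell_1$. Your proof is a \emph{primal} (augmentation) argument: you vertex-split to a unit-capacity edge network, take $f_2$ to be the flow realizing the $\ell_2$ nearly-disjoint $U_2'$--$s$ paths, and observe that the integral residual difference $f_1-f_2$ decomposes into $\ell_1-\ell_2$ simple augmenting paths plus cycles, each path starting on a forward residual edge out of $\sigma$ (since $\sigma$ has no in-edges and $f_2$ saturates the $\sigma\to U_2'$ edges) and never revisiting $\sigma$ by simplicity, hence never unsaturating any $\sigma\to U_2'$ edge. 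Discarding the cycles and adding the paths yields a valid $\{0,1\}$-flow of value $\ell_1$ with all $\ell_2$ of the $\sigma\to U_2'$ edges still saturated, which decomposes into the desired family $\xset'$.

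The two arguments buy slightly different things. The paper's version is shorter once the auxiliary network is set up, because the whole burden is carried by one min-cut check; but that check requires a small case analysis that one must verify is exhaustive. Your version avoids that case analysis entirely and replaces it by a single clean structural observation (simple paths from a source with no in-edges cannot revisit the source, hence cannot traverse any backward residual of a source-edge); the tradeoff is that you carry a bit more bookkeeping to verify that dropping cycles from the flow decomposition preserves flow validity and the saturation invariant, which you do correctly. Your closing remark that edge-minimality of $\hat G'$ plays no role in this particular claim is also right --- it is used elsewhere in the lemma (to conclude $\xset_1'\subseteq\xset_1$), not here, and the paper's proof of the claim likewise does not use it.
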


Before we prove Claim~\ref{claim: augmenting}, we show that the set $\xset'$ of paths has the properties required by Lemma~\ref{lemma: re-routing of vertex-disjoint paths}. Let $\xset'_1\subseteq \xset'$ be the set of paths originating from the vertices of $U_1'$, and let $\xset'_2=\xset'\setminus\xset'_1$. We only need to show that $\xset'_1\subseteq \xset_1$. Assume otherwise. Then there is some edge $e\in E(\hat G')\setminus E_1$, that lies on some path in $\xset'_1$. But then $\xset'_2\subseteq \hat G'\setminus \set{e}$, and edge $e$ should have been removed from graph $\hat G'$. It now only remains to prove Claim~\ref{claim: augmenting}.

\begin{proofof}{Claim~\ref{claim: augmenting}}
The proof follows standard arguments. We construct a directed node-capacitated flow network $H$: start from graph $\hat G'$, and assign capacity $1$ to each vertex of $\hat G'$, except for vertex $s$, whose capacity is $\ell_1$. We add two new vertices: vertex $t_1$ of capacity $\ell_1-\ell_2$, that connects to every vertex of $U_1'$ with a directed edge, and vertex $t_2$ of capacity $\ell_2$ that connects to every vertex of $U_2'$ with a directed edge. Finally, we add a vertex $t$ of capacity $\ell_1$, that connects to $t_1$ and $t_2$ with directed edges. It is enough to show that there is a $t$--$s$ flow of value $\ell_1$ in the resulting flow network: we can then use the integrality of flow to obtain an integral flow of the same value, which in turn immediately defines the desired set $\xset'$ of paths.

Assume for contradiction that there is no $t$--$s$ flow of value $\ell_1$ in $H$. Then there is a set $Z$ of vertices, whose total capacity is less than $\ell_1$, such that $H\setminus Z$ contains no path connecting $t$ to $s$. Since the capacities of $t$ and $s$ are $\ell_1$ each, $s,t\not\in Z$. Also, since the capacities of $t_1$ and $t_2$ sum up to $\ell_1$, both these vertices cannot simultaneously belong to $Z$.

Assume first that $t_1\in Z$. Then, since $t_2\not \in Z$, set $Z$ contains at most $\ell_2-1$ vertices, each of which must have capacity $1$. Since there is a set of $\ell_2$ nearly-disjoint $U_2'$--$s$ paths in $\hat G'$, at least one such path $P$ is disjoint from $Z$, and so $H\setminus Z$ must contain a path connecting $t$ to $s$, a contradiction.

Similarly, if $t_2\in Z$, then $t_1\not\in Z$, and set $Z$ contains at most $\ell_1-\ell_2-1$ vertices, whose capacities must be all unit. But then at least one path in $\xset_1$ is disjoint from $Z$, giving a path connecting $t$ to $s$ in $H\setminus Z$, a contradiction.

Therefore, we assume that all vertices of $Z$ are capacity-$1$ vertices, that belong to $\hat G'$. But then $Z$ contains at most $\ell_1-1$ vertices, so at least one path in $\xset_1$ is disjoint from $Z$, giving again a path connecting $t$ to $s$ in $H\setminus Z$, a contradiction.
\end{proofof}

\section{Proof of Theorem~\ref{thm: good cluster w small boundary is enough}}\label{type-2 perfect clusters}

If $C$ contains a perfect cluster $\tC$, with $|\out(\tC)|\leq k+k'+1$, such that $\tC$ has the $(k'',\alpha^*)$-bandwidth property, then, from Theorem~\ref{thm: perfect cluster gives chain}, $G$ contains a strong $2$-cluster chain. Therefore, we assume from now on that there is no such cluster $\tC\subseteq C$.

Assume that there is a good cluster $C'\subseteq C$, that has the $(k'',\alpha^*)$-bandwidth property, with $|\out(C')|\leq k+k'+1$ and $|\Gamma(C')|\leq 7k/8$. Among all such clusters, choose the one minimizing $|C'|$. We claim that $|\Gamma(C')|=|\out(C')|$. Indeed, assume otherwise, that is, $|\out(C')|>|\Gamma(C')|$. Then there is some vertex $v\in \Gamma(C')$, that is incident on exactly two edges of $\out(C')$ (since all vertex degrees are at most $3$ and, from Observation~\ref{obs: bandwidth gives connectivity}, $G[C']$ is connected). Consider the cluster $C''=C'\setminus\set{v}$. It is easy to see that $C''$ has the $(k'',\alpha^*)$-bandwidth property,  $|\out(C'')|\leq k+k'+1$, and $|\Gamma(C'')|\leq 7k/8$. From our assumption, $C''$ cannot be a perfect cluster. But from Observation~\ref{obs: cutting good cluster}, it is a good cluster, contradicting the choice of $C'$.
Therefore, we assume from now on that $|\out(C')|= |\Gamma(C')|\leq 7k/8$. For simplicity of notation, we denote $C'$ by $C$ in the rest of this proof.
Let $L=V(G)\setminus C$, and observe that $|\Gamma(L)|\leq |\out(L)|=|\out(C)|=|\Gamma(C)|\leq 7k/8$. The following lemma is central to the proof of Theorem~\ref{thm: good cluster w small boundary is enough}.

\begin{lemma}\label{lemma: type-2 perfect inner}
There is a cluster $S^*\subseteq L$, with $k/4\leq |S^*\cap T_1|\leq 3k/4$, such that $|E(S^*,L\setminus S^*)|\leq k/16$, and $\Gamma(S^*)\cup (T\cap S^*)$ is $(k'',\alpha^*)$-well-linked in $G[S^*]$.
\end{lemma}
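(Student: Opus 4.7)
The approach is a well-linked decomposition, adapting the proof of Lemma~\ref{lem: partitioning alg} from the perfect-cluster setting to the good-cluster setting. Since $C$ is a good cluster, the minimum $1/4$-balanced cut $(A,B)$ of $G[L] = G \setminus C$ with respect to $T_1$ has $|E(A,B)| < k/28$. Without loss of generality suppose $|B \cap T_1| \geq |A \cap T_1|$, so that $|B \cap T_1| \in [k/2, 3k/4]$; the candidate $S^*$ will be produced inside $B$. Starting from $S = B$, I would iterate the standard step: while $\Gamma_G(S) \cup (T \cap S)$ is not $(k'', \alpha^*)$-well-linked in $G[S]$, invoke Observation~\ref{obs: generalized well-linkedness alt def} to obtain a $(k'', \alpha^*)$-violating partition $(Z,Z')$, add $E(Z,Z')$ to a running set $E'$, and replace $S$ by whichever of $Z,Z'$ contains more $T_1$ vertices; update the running partition $\sset$ of $B$ accordingly, and declare the terminating cluster to be $S^*$. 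The well-linkedness requirement holds by construction, and the upper bound $|S^* \cap T_1| \leq |B \cap T_1| \leq 3k/4$ is immediate.

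For the bound $|E(S^*, L\setminus S^*)| \leq k/16$ I would split the cut as $|E(S^*, A)| + |E(S^*, B \setminus S^*)| \leq |E(A,B)| + |E'|$ and bound $|E'|$ by the budget argument of Claim~\ref{claim: budget bound}: place a budget $\beta(v) = \alpha^*/(1-\alpha^*) = 1/63$ on every vertex in $\Gamma_G(S) \cup (T \cap S)$ and verify the invariant
\[\sum_v \beta(v) + |E'| \;\leq\; \frac{\alpha^*}{1-\alpha^*}\bigl(|\Gamma_G(B)| + |T \cap B|\bigr),\]
which is preserved because the violating condition $|E(Z,Z')| < \alpha^* \cdot |Z' \cap (\Gamma(S) \cup (T \cap S))|$ makes the budget released at the discarded side dominate the sum of the new boundary budget and the edges added to $E'$. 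Plugging in $|\Gamma_G(B)| \leq |E(A,B)| + |E(B,C)| \leq k/28 + 7k/8$ and $|T \cap B| \leq 3k/4 + k/64$, the arithmetic yields $|E'| \leq (1/63)(k/28 + 7k/8 + 3k/4 + k/64) < k/16 - k/28$, so $|E(S^*, L\setminus S^*)| \leq k/16$, though the slack is narrow.

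The main obstacle is the lower bound $|S^* \cap T_1| \geq k/4$. I would mirror the concluding argument of Lemma~\ref{lem: partitioning alg}: assume for contradiction that $|S^* \cap T_1| < |B \cap T_1|/2$; since each iteration kept the side with more $T_1$ vertices, every cluster $C' \in \sset$ has $|C' \cap T_1| < |B \cap T_1|/2 \leq 3k/8$, so $(A \cup C', B \setminus C')$ remains a $1/4$-balanced cut for any $C' \in \sset$. Writing $d_1(C') = |\out_G(C') \cap E(A,B)|$ and $d_2(C') = |\out_G(C') \cap E'|$, one has $\sum_{C'} d_1(C') = |E(A,B)|$ and $\sum_{C'} d_2(C') = 2|E'|$; any $C' \in \sset$ with $d_1(C') > d_2(C')$ then produces a $1/4$-balanced cut with strictly fewer crossing edges than $(A,B)$, contradicting minimality. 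The delicate point is that the inequality $2|E'| < |E(A,B)|$ required to locate such a cluster is no longer automatic once $|E(A,B)|$ can be much smaller than $k/28$ (which was the setting of Lemma~\ref{lem: partitioning alg}); I expect to close the gap by sharpening the per-iteration budget decrement using the fact that every $(k'', \alpha^*)$-violating partition discards at least $\lceil 1/\alpha^* \rceil$ boundary units (so the effective coefficient on released budget is larger than $\alpha^*/(1-\alpha^*)$ in almost every iteration), and by invoking the secondary tiebreak in the definition of a minimum $1/4$-balanced cut in the borderline case $2|E'| = |E(A,B)|$. This quantitative reconciliation, rather than any conceptually new idea, is the principal obstacle to completing the proof.
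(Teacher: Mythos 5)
You have correctly located the genuine obstacle: when $C$ is a good rather than a perfect cluster, the quantity $|E(A,B)|$ has no useful lower bound (it can even be zero if $G[L]$ is disconnected), so the plan of concluding $|S^* \cap T_1| \geq k/4$ by exhibiting a $1/4$-balanced cut strictly cheaper than $(A,B)$ is structurally broken, not just quantitatively tight. Neither proposed patch closes the gap. Sharpening the per-iteration budget accounting cannot help because the bound $|E'| \leq \sum_v \beta(v)|_{\mathrm{init}} \approx k/37.6$ is a ceiling that does not shrink with $|E(A,B)|$, whereas the target $|E'| < |E(A,B)|/2$ shrinks to zero; invoking the secondary tiebreak only perturbs cuts of equal value and cannot repair a comparison that is off by an unbounded factor.

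The paper avoids the minimum-cut contradiction entirely. It runs the well-linked decomposition on \emph{all} of $L$, initializing $\sset = \{A,B\}$ and $E' = E(A,B)$, and splits any level-1 or level-2 cluster (those with $|S\cap T_1| > k/4$) that violates $(k'',\alpha^*)$-well-linkedness. At termination it proves $|E'| < k/16$ directly, and then argues: if every cluster $S$ in the final partition $\sset$ had $|S\cap T_1| < k/4$, then $(k/4,1)$-well-linkedness of $T_1$ forces $|\out(S)| \geq |S\cap T_1|$, so $\sum_{S}|\out(S)| \geq k$; but $\sum_S|\out(S)| = |\out(L)| + 2|E'| < 7k/8 + k/8 = k$, a contradiction. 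This counting argument never touches the minimality of $(A,B)$ and works even when $|E(A,B)| = 0$. The budget bound $|E'| < k/16$ itself requires more than your uniform $\beta \equiv \alpha^*/(1-\alpha^*)$: the paper uses a three-level scheme in which terminals of level-1 clusters carry an extra $1/512$ of budget, which is what pays for the edges cut when a level-1 cluster drops to level 2 and makes the final bound tight enough to feed into the counting argument (note the counting argument needs exactly $2|E'| < k/8$). Your single-cluster decomposition confined to $B$ also cannot support that counting argument, because it says nothing about the out-degree of $A$ relative to $|A\cap T_1|$, and $A$ already contains $\geq k/4$ terminals of $T_1$ without any well-linkedness guarantee. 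The fix is therefore to replace the minimality argument with the out-degree counting argument, which in turn requires both the full decomposition of $L$ and the refined budget scheme.
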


We prove Lemma~\ref{lemma: type-2 perfect inner} below, and complete the proof of Theorem~\ref{thm: good cluster w small boundary is enough} here. We construct a weak $2$-cluster chain, with $X'=C$ and $Y'=S^*\setminus T$. Notice that both clusters have the $(k'',\alpha^*)$-bandwidth property. We let $\pset_2$ be a set of $2k'=k/32$ paths, each of which consists of a single edge, connecting a terminal of $T_1\cap S^*$ to a vertex of $Y'$. It now remains to construct the set $\pset_1$ of paths. In order to do so, denote $R=L\setminus S^*$. Then $|T_1\cap S^*|,|T_1\cap R|\geq k/4$, so there is a set $\qset$ of $k/4$ node-disjoint paths, connecting the vertices of $T_1\cap R$ to the vertices of $T_1\cap S^*$ in $G$. Let $\qset'\subseteq \qset$ be the subset of paths that do not use the edges of $E(R,S^*)$. Since $|E(R,S^*)|\leq k/16$, $|\qset'|\geq 3k/16$. By truncating each path in $\qset'$ at the first vertex of $C$ on each such path (where we view the paths as directed from the vertices of $R\cap T_1$ to the vertices of $S^*\cap T_1$), we obtain a collection $\qset''$ of node-disjoint paths, connecting the vertices of $R\cap T_1$ to the vertices of $C$, so that the paths in $\qset''$ are internally disjoint from $C\cup S^*$. 
We let $\pset_1\subseteq \qset''$ be any subset of $2k'=k/32$ paths, completing the construction of a weak $2$-cluster chain. From Theorem~\ref{thm: from weak to strong $2$-cluster chain}, there is a strong $2$-cluster chain in $G$. It now only remains to prove Lemma~\ref{lemma: type-2 perfect inner}.

\begin{proofof}{Lemma~\ref{lemma: type-2 perfect inner}}
We show an algorithm to compute the cluster $S^*$. Throughout the algorithm, we maintain a partition $\sset$ of $L$, and a set $E'=(\bigcup_{S\in \sset}\out(S))\setminus\out(L)$ of edges. We say that a cluster $S\in \sset$ is a level-1 cluster, iff $|S\cap T_1|>k/2$, we say that it is a level-2 cluster, iff $k/4<|S\cap T_1|\leq k/2$, and we say that it is a level-3 cluster otherwise. The algorithm is executed as follows. 

Recall that since $C$ is a good cluster, there is a $1/4$-balanced partition $(A,B)$ of $G\setminus C$ with respect to $T_1$, with $|E(A,B)|\leq k/28$. We start with $\sset=\set{A,B}$, and $E'=E(A,B)$.
While there is any level-1 or level-2 cluster $S\in \sset$, such that $\Gamma(S)\cup (S\cap T)$ is not $(k'',\alpha^*)$-well-linked in $G[S]$, let $(Z,Z')$ be any $(k'',\alpha^*)$-violating partition of $S$: that is, $|E(Z,Z')|<\alpha^*\cdot\min\set{|Z\cap(\Gamma(S)\cup T)|,|Z'\cap (\Gamma(S)\cup T)|, k''}$. We replace $S$ with $Z$ and $Z'$ in $\sset$, add the edges of $E(Z,Z')$ to $E'$, and continue to the next iteration. The algorithm terminates when for each level-1 and level-2 cluster $S\in \sset$, $\Gamma(S)\cup (S\cap T)$ is  $(k'',\alpha^*)$-well-linked in $G[S]$. The following claim is central to our proof.

\begin{claim}\label{claim: bound E'}
When the algorithm terminates, $|E'|< k/16$.
\end{claim}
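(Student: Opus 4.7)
The proof is a budget-based potential argument. Assign to every vertex $v$ a budget $\beta(v) := \alpha^*/(1-\alpha^*) = 1/63$ whenever $v \in \Gamma(S) \cup (T \cap S)$ for some active (level-1 or level-2) cluster $S \in \sset$, and $\beta(v) := 0$ otherwise. Set $\Phi := |E'| + \sum_v \beta(v)$. The goal is to show that $\Phi$ is essentially non-increasing during the algorithm, and that its initial value overshoots $k/16$ by an amount that is absorbed by the strictly positive final budget.

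To verify the invariant for a single split, let $S$ be active and replaced by $(Z,Z')$ with $m := |E(Z,Z')| < \alpha^* \min\{\ell,\ell',k''\}$, where $\ell = |Z \cap (\Gamma(S) \cup T)|$ and $\ell' = |Z' \cap (\Gamma(S) \cup T)|$. If one side (say $Z$) becomes level-3, its $\ell$ budgeted vertices release a budget of $\ell/63$, the other side acquires at most $m$ new budgeted boundary vertices (contributing $\leq m/63$), and $|E'|$ grows by $m$. Thus $\Delta\Phi \leq m + m/63 - \ell/63 = (64m - \ell)/63 < 0$, because $\ell > m/\alpha^* = 64m$. If instead both sides stay active then $|S \cap T_1| > k/2$, so $S$ was level-1 and the split strictly increases the active-cluster count; since each active cluster requires $> k/4$ terminals of $T_1$ and $|T_1| = k$, there are at most three active clusters at any time, bounding the number of such ``both-active'' splits by $\leq 3 - (\text{initial active count}) \leq 2$. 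Each such split contributes $\Delta\Phi \leq 65m/63 \leq 65 \alpha^* k''/63$, a negligible total compared to $k$.

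For the initial value, with $\sset = \{A,B\}$ one has $|E'|_0 = |E(A,B)| < k/28$. In the worst case both $A$ and $B$ are active; using $|\Gamma(A)| + |\Gamma(B)| \leq |\out(L)| + 2|E(A,B)| \leq 7k/8 + k/14$ and $|T \cap A| + |T \cap B| = |T| = 65k/64$, one computes
\[
\Phi_0 \;<\; \tfrac{65}{63}\cdot\tfrac{k}{28} + \tfrac{1}{63}\!\left(\tfrac{7k}{8} + \tfrac{65k}{64}\right) \;=\; \tfrac{1887 k}{28224},
\]
which overshoots $k/16 = 1764 k/28224$ by only $123 k/28224$. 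Because the active-cluster count never decreases (Case A preserves it, Case B increases it), termination has at least as many active clusters as the start; so when both $A,B$ are initially active, the final budget is at least $2 \cdot (1/63)(k/4) = k/126 = 224 k/28224$, which comfortably dominates the $123 k/28224$ slack plus the negligible Case-B contribution. The $1$-active and $0$-active initial cases admit a tighter initial bound (since only one of $\Gamma(A) \cup (T \cap A)$, $\Gamma(B) \cup (T \cap B)$ contributes), and the analogous inequality holds with room to spare. Therefore $|E'| = \Phi - \sum_v \beta(v) < k/16$.

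\textbf{The hard part.} The difficulty is the tight numerical accounting: the slack $\Phi_0 - k/16$ must be strictly absorbed by the guaranteed final budget together with the small Case-B contribution. This leaves little room and forces careful use of the strict inequalities $|E(A,B)| < k/28$ and $|\Gamma(C)| \leq 7k/8$, the specific values $\alpha^* = 1/64$ and $k'' = k/512$, and the key structural observation that the active-cluster count is monotonically non-decreasing under the algorithm.
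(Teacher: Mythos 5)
Your plan mirrors the paper's at the coarse level (a budget-based potential argument), but there is a genuine gap in the step you call the ``hard part'': the claim that \emph{the active-cluster count never decreases} is false. Your case analysis treats only ``exactly one side becomes level-3'' (Case A) and ``both sides stay active'' (Case B), but misses the case where \emph{both} $Z$ and $Z'$ become level-3. This can happen whenever the cluster $S$ being split is level-2: with $k/4 < |S \cap T_1| \le k/2$, a violating cut $(Z,Z')$ can easily have $|Z\cap T_1|, |Z'\cap T_1| \le k/4$, since the violating-cut condition constrains only $|E(Z,Z')|$ against $|Z\cap(\Gamma(S)\cup T)|$ and $|Z'\cap(\Gamma(S)\cup T)|$, and says nothing about how $T_1$ is split. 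In that case the active count \emph{drops} by one. This breaks both downstream uses of monotonicity: the bound ``at most $3 - (\text{initial active count}) \le 2$'' on Case-B splits no longer follows (the count can oscillate up and down, so Case-B splits are unbounded a priori), and more critically, at termination the active count can be $0$, so the final budget you want to subtract from $\Phi$ can vanish, leaving only $|E'| < \Phi_0 \approx 1887k/28224 > 1764k/28224 = k/16$ — not enough.

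The paper sidesteps both problems with two devices you did not use. First, it assigns a \emph{differential} budget: terminals in a level-1 cluster carry an extra $1/512$ beyond the base $\alpha^*/(1-\alpha^*)$. When a level-1 cluster sheds a level-2 piece $Z$, the $\ge k/4$ terminals of $Z\cap T_1$ each release $1/512$, and this release ($\ge k/2048$) strictly dominates the increase from $|E(Z,Z')| < \alpha^*k'' = k/32768$ plus new boundary budgets. So the invariant $\Phi \le \mu$ holds \emph{exactly} in the ``both active'' split, with no accumulating error term to track. Second, rather than arguing about the configuration at termination (where, as above, no lower bound on the budget is guaranteed), the paper looks at the budget \emph{at the start of the last iteration}: there the cluster $S$ being split is active, so $|S\cap T_1| \ge k/4$, and by the $(k/4,1)$-well-linkedness of $T_1$ also $|\Gamma(S)| \ge k/4$, forcing $\sum_v \beta(v) \ge \frac{\alpha^*}{1-\alpha^*}\cdot\frac{k}{2}$ at that moment. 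Combined with the invariant, this bounds $|E'|$ before the last iteration, and the last iteration contributes at most $\alpha^* k''$. Your underlying intuition — that a large budget must be present to absorb the slack — is correct, but it must be anchored to the start of the last split (where an active cluster is guaranteed) rather than to the terminal state (where none need exist).
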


Before we prove Claim~\ref{claim: bound E'}, we complete the proof of Lemma~\ref{lemma: type-2 perfect inner} using it. We start by showing that there is some cluster $S\in \sset$ with $|S\cap T_1|\geq k/4$. Indeed, assume otherwise. Then, since $T_1$ is $(k/4,1)$-well-linked in $G$, for each $S\in \sset$, $|\out(S)|\geq |S\cap T_1|$, and $\sum_{S\in \sset}|\out(S)|\geq k$. On the other hand, $\sum_{S\in \sset}|\out(S)|=|\out(L)|+2|E'|< 7k/8+k/8=k$. Therefore, there is some cluster $S$ with $|S\cap T_1|\geq k/4$.  We let $S^*$ be any such cluster $S$. From our algorithm, $\Gamma(S^*)\cup (S^*\cap T)$ is $(k'',\alpha^*)$-well-linked in $G[S^*]$. It is also easy to see that $|S^*\cap T_1|\leq 3k/4$, since $S^*\subseteq A$ or $S^*\subseteq B$ must hold, and $|A\cap T_1|,|B\cap T_1|\leq 3k/4$, as $(A,B)$ is a $1/4$-balanced cut of $G[L]$ with respect to $T_1$.
Finally, since $E(S^*,L\setminus S^*)\subseteq E'$, $|E(S^*,L\setminus S^*)|< k/16$. It now remains to prove Claim~\ref{claim: bound E'}.

\begin{proofof}{Claim~\ref{claim: bound E'}}
Throughout the algorithm, we maintain budgets $\beta(v)$ for all vertices $v\in L$, defined as follows. For each level-$i$ cluster $S\in \sset$, where $i\in \set{1,2}$, for each vertex $v\in \Gamma(S)$, we set $\beta(v)=\alpha^*/(1-\alpha^*)$. For each terminal $v\in T\cap S$, we set $\beta(v)=\alpha^*/(1-\alpha^*)$ if $S$ is a level-2 cluster, and $\beta(v)=\alpha^*/(1-\alpha^*)+1/512$ if it is a level-1 cluster. All other vertex budgets are $0$.

Observe that at the beginning, the budget of each terminal is at most $\frac{\alpha^*}{1-\alpha^*}+\frac{1}{512}$, and the budget of each vertex in $\Gamma(L)$, and every vertex that serves as an endpoint of an edge in $E(A,B)$ is $\frac{\alpha^*}{1-\alpha^*}$. Since $|\Gamma(L)|\leq 7k/8$, and $|E(A,B)|\leq k/28$,  at the beginning of the algorithm:

\[
\sum_{v\in L}\beta(v)+|E'|\leq \frac{\alpha^*}{1-\alpha^*}\left (k+k'+\frac{7k}{8}+\frac{k}{14}\right )+\frac{k}{28}+\frac{k+k'}{512}.\]

We denote this bound by $\mu$. We now claim that throughout the algorithm the following invariant holds: $\sum_{v\in L}\beta(v)+|E'|\leq \mu$. We already showed that the invariant holds at the beginning of the algorithm.

Assume now that the invariant holds at the beginning of the current iteration, when some cluster $S\in \sset$ is partitioned into clusters $Z$ and $Z'$. We assume w.l.o.g. that $|Z\cap (\Gamma(S)\cup T_1)|\leq |Z'\cap (\Gamma(S)\cup T_1)|$. If $S$ belongs to level $i\in\set{1,2}$,  then $Z$ must belong to level $i+1$ or higher. We now consider two cases. The first case happens when $Z$ belongs to level $2$. Then $|Z\cap T_1|\geq k/4$, and the budget of each vertex in $Z\cap T_1$ decreases by at least $1/512$. On the other hand, $|E(Z,Z')|\leq \alpha^* k''$, and the budget of each endpoint of each edge in $E(Z,Z')$ increases by at most $\alpha^*/(1-\alpha^*)$. Therefore, the total increase in the budgets of the vertices that serve as endpoints of the edges in $E(Z,Z')$, and in $|E'|$, is bounded by:

\[|E'|\left (1+\frac{2\alpha^*}{1-\alpha^*}\right )\leq \alpha^* k''\cdot \frac{1+\alpha^*}{1-\alpha^*}=\frac{1}{64}\cdot \frac{k}{512}\cdot\frac{65}{63}<\frac{k}4\cdot\frac 1 {512}.\]
The budgets of all other vertices do not increase, and so the invariant continues to hold.

The second case is when $Z$ belongs to level $3$. In this case, the budget of each vertex in $Z\cap (\Gamma(S)\cup T)$ decreases by at least $\alpha^*/(1-\alpha^*)$, while the budget of every vertex in $Z'$ that serves as an endpoint of an edge in $E(Z,Z')$ increases by at most $\alpha^*/(1-\alpha^*)$. Therefore, the total increase in the budgets of the vertices of $Z'$, and in $|E'|$, is bounded by:

\[|E'|\left (1+\frac{\alpha^*}{1-\alpha^*}\right )\leq \alpha^*\cdot |Z\cap (\Gamma(S)\cup T)|\cdot \frac{1}{1-\alpha^*}.\]

The budgets of all other vertices do not increase, and so the invariant continues to hold.

Let $\sset^*$ denote the final partition, and $E^*$ the final set $E'$ of edges. Our goal is to show that $|E^*|\leq k/16$. Let $\sset'$ be the set $\sset$ at the beginning of the last iteration, and let $E''$ be the corresponding set $E'$. Let $S$ be the set that was partitioned in the last iteration, into subsets $Z$ and $Z'$. Notice that $|E(Z,Z')|<\alpha^*k''$, since $(Z,Z')$ is the partition of $S$ violating the $(k'',\alpha^*)$-well-linkedness of $\Gamma(S)\cup (S\cap T)$ in $G[S]$. Therefore, $|E^*|\leq |E''|+\alpha^* k''$. On the other hand, $S$ is a level-1 or a level-2 cluster. Therefore, $|S\cap T_1|\geq k/4$. Moreover, from the $(k/4,1)$-well-linkedness of the terminals in $T_1$, $|\Gamma(S)|\geq k/4$. Therefore, before the last iteration started, $\sum_{v\in L}\beta(v)\geq \frac{\alpha^*}{1-\alpha^*}\cdot \frac k 2$ held.

From our invariant,

\[\begin{split}
|E^*|&\leq |E'|+\alpha^* k''\\
&\leq \mu-\frac{\alpha^*}{1-\alpha^*}\cdot \frac k 2 +\alpha^* k''\\
&=\frac{\alpha^*}{1-\alpha^*}\left (k+k'+\frac{7k}{8}+\frac{k}{14}\right )+\frac{k}{28}+\frac{k+k'}{512}-\frac{\alpha^*}{1-\alpha^*}\cdot \frac k 2 +\alpha^* k''\\
&=\frac{k}{63}\left(\half+\frac 7 8 +\frac 1 {14}+\frac{1}{64}\right )+\frac k {28}+\frac{65k}{64\cdot 512}+\frac{k}{64\cdot 512}\\
&< \frac{k}{16}.
\end{split}
\]
 (We have used the fact that $\alpha^*=1/64$, $k'=k/64$, and $k''=k/512$).
\end{proofof}
\end{proofof}

\section{Proof of Theorem~\ref{thm: balanced-cut-large-smaller-side}}\label{type-3 perfect clusters}
Throughout the proof, we maintain a partition $\sset$ of $B$, together with a set $E'=\left(\bigcup_{S\in \sset}\out(S)\right )\setminus\out(B)$ of edges. 

We say that a cluster $S\in \sset$ is a \emph{level-1} cluster, iff $(k+k'+1)/4<|S\cap \Gamma(C)|\leq (k+k'+1)/2$. We say that it is a \emph{level-2} cluster iff $(k+k'+1)/8<|S\cap \Gamma(C)|\leq (k+k'+1)/4$. We say that it is a \emph{level-3} cluster iff $(k+k'+1)/16<|S\cap \Gamma(C)|\leq (k+k'+1)/8$, and we say that it is a \emph{level-4} cluster otherwise.

We start with $\sset=\set{B}$, and $E'=\emptyset$. Observe that $|B\cap \Gamma(C)|\leq (k+k'+1)/2$. While there is a cluster $S\in \sset$, such that $S$ belongs to levels $1,2$ or $3$, and it does not have the $(k'', \alpha^*)$-bandwidth property, let $(Z,Z')$ be the $(k'',\alpha^*)$-violating partition of $S$ with respect to $\Gamma(S)$, that is, $|E(Z,Z')|< \alpha^*\cdot \min\set{|Z\cap \Gamma(S)|,|Z'\cap \Gamma(S)|,k''}$. We then replace $S$ with $Z$ and $Z'$ in $\sset$, add the edges of $E(Z,Z')$ to $E'$, and continue to the next iteration. The algorithm terminates, when each cluster $S\in \sset$ that belongs to levels $1,2$ or $3$ has the $(k'', \alpha^*)$-bandwidth property.
The key to the analysis of the algorithm is the following claim:

\begin{claim}\label{claim: number of edges cut}
When the algorithm terminates, $|E'|< |E(A,B)|/2$.
\end{claim}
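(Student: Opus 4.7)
The plan is to adapt the amortized budget analysis used in the proof of Claim~\ref{claim: bound E'} to the four-level structure defined here. Throughout the algorithm I maintain, for each cluster $S\in\sset$ at level $i\in\{1,2,3\}$ and each $v\in\Gamma(S)$, a budget $\beta(v)=\beta_i$, while $\beta(v)=0$ for $v$ in a level-$4$ cluster; the constants $\beta_1>\beta_2>\beta_3>0$ are small rationals chosen so that $\Phi:=\sum_{v\in B}\beta(v)+|E'|$ is non-increasing under each iteration.

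For a single iteration that splits a level-$i$ cluster $S$ via a $(k'',\alpha^*)$-violating cut $(Z,Z')$, I take WLOG $|Z\cap\Gamma(C)|\leq|Z'\cap\Gamma(C)|$; then $|Z\cap\Gamma(C)|\leq|S\cap\Gamma(C)|/2$, and the level thresholds force $Z$ to migrate to a level $\geq i+1$, while $Z'$ stays at level $\geq i$. The vertices of $\Gamma(S)\cap Z$ move from budget $\beta_i$ to at most $\beta_{i+1}$, freeing $\geq |\Gamma(S)\cap Z|(\beta_i-\beta_{i+1})$ units. Against this I must pay for $|E(Z,Z')|$ new edges of $E'$ together with up to $2|E(Z,Z')|$ new boundary endpoints receiving budget at most $\beta_{i+1}$ on the $Z$-side and $\beta_i$ on the $Z'$-side. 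Invoking $|E(Z,Z')|<\alpha^*|Z\cap\Gamma(S)|$ from the choice of violating partition, the per-iteration invariant $\Delta\Phi\leq 0$ reduces to the recurrence $\alpha^*(1+\beta_i+\beta_{i+1})\leq\beta_i-\beta_{i+1}$, which is solved from $\beta_4=0$ upward, giving e.g.\ $\beta_3=\alpha^*/(1-\alpha^*)=1/63$ and suitable $\beta_2,\beta_1$.

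The initial budget is $\mu_0=\beta_1|\Gamma(B)|\leq\beta_1(|B\cap\Gamma(C)|+|E(A,B)|)$, since $\Gamma(B)$ consists of $B\cap\Gamma(C)$ together with the $B$-side endpoints of $E(A,B)$. To conclude $|E'|<|E(A,B)|/2$ I combine this with (a) the lower bound $|E(A,B)|\geq \tfrac{1}{23}\cdot\tfrac{27k}{80}$ coming from the $1/23$-bandwidth property of $C$ that holds at every point where Theorem~\ref{thm: balanced-cut-large-smaller-side} is invoked in the proof of Theorem~\ref{thm: iteration}, and (b) a lower bound on the surviving budget at termination, obtained by mimicking the last-iteration bookkeeping in Claim~\ref{claim: bound E'}: the cluster split in the final iteration must have $|\Gamma(S)|$ large enough for the $(k'',\alpha^*)$-bandwidth property to fail, so its vertices contribute non-trivially to $\sum_v\beta(v)$ at the start of that iteration. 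Writing $|E^*|\leq\mu_0-(\text{surviving budget})+\alpha^*k''$ and plugging in the constants then yields the claim.

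The main obstacle I anticipate is numerical. The recurrence forces $\beta_1$ to be a constant of order $1/20$, so the na\"ive bound $\mu_0$ is of order $\beta_1\cdot(k+k'+1)/2\gg|E(A,B)|/2$ in the regime where $|B\cap\Gamma(C)|$ dominates over $|E(A,B)|$. Closing this gap likely requires a two-tier budget: in addition to $\beta_i$ on all $\Gamma(S)$ vertices, I would place an extra per-vertex credit $\gamma_i$ on $\Gamma(C)\cap S$ vertices that is released whenever a $\Gamma(C)$-vertex migrates to a higher-level cluster — analogous to the extra $1/512$ boost placed on terminals in the proof of Claim~\ref{claim: bound E'}. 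Under this refinement the per-iteration invariant becomes $\alpha^*(1+\beta_i+\beta_{i+1})|Z\cap\Gamma(S)|\leq|Z\cap\Gamma(S)|(\beta_i-\beta_{i+1})+|Z\cap\Gamma(C)|(\gamma_i-\gamma_{i+1})$, which gives additional slack exactly when the shrinking child carries $\Gamma(C)$-mass. Calibrating $\beta_i$, $\gamma_i$ and the level thresholds against $\alpha^*$, $k''$ and the constant $27/80$ so that the final arithmetic delivers $|E'|<|E(A,B)|/2$ is the step I expect to be most delicate.
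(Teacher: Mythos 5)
You have correctly identified the amortized-budget framework, and your diagnosis of the numerical obstacle is accurate: the recurrence $\alpha^*(1+\beta_i+\beta_{i+1})\leq\beta_i-\beta_{i+1}$ forces $\beta_1\approx 1/20$, so $\mu_0\leq 24\beta_1|E(A,B)|\approx 1.2|E(A,B)|$, which overshoots the target $|E(A,B)|/2$. But the two-tier repair you sketch does not actually close this gap, because you still insist on strictly positive base increments $\beta_i-\beta_{i+1}$ at every level; the geometric blow-up of $\beta_1$ therefore persists, and the calibration step you flag as ``most delicate'' is exactly where the plan breaks down.

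The decisive structural choice in the paper's proof, which your refinement does not hit, is to make the base budget \emph{flat} across levels $1,2,3$: every vertex of $\Gamma(S)\setminus\Gamma(C)$ carries $\alpha^*/(1-2\alpha^*)$ regardless of level, dropping to $0$ only at level $4$. Only the $\Gamma(C)$-credit decays, and only by the tiny increment $\alpha^*/(8(1-2\alpha^*))$ per level. When the shrinking child $Z$ stays at level $\leq 3$, the per-iteration accounting is funded entirely by $\Gamma(C)$-vertex migration, using the floor $|Z\cap\Gamma(C)|\geq(k+k'+1)/16$ (from the level-$3$ threshold) against the ceiling $|E(Z,Z')|<\alpha^*k''$ (from the $(k'',\alpha^*)$-violating partition); the $k''$ cap is essential here, and your invariant, which only invokes $|E(Z,Z')|<\alpha^*|Z\cap\Gamma(S)|$, misses it. When $Z$ drops to level $4$, the entire base budget of $\Gamma(S)\cap Z$ is released, which suffices because $|E(Z,Z')|<\alpha^*|Z\cap\Gamma(S)|$. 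With the flat base, $\mu_0\leq\left(\tfrac{5}{4}\cdot 23+1\right)\cdot\tfrac{\alpha^*}{1-2\alpha^*}\cdot|E(A,B)|=\tfrac{119}{248}|E(A,B)|<\tfrac{1}{2}|E(A,B)|$, and the invariant $\sum_v\beta(v)+|E'|<|E(A,B)|/2$ delivers the claim directly; neither the lower bound $|E(A,B)|\geq\tfrac{27k}{80\cdot 23}$ nor the last-iteration surviving-budget argument that you import from Claim~\ref{claim: bound E'} is needed.
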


We prove the claim below, and complete the proof of Theorem~\ref{thm: balanced-cut-large-smaller-side}  here. For each cluster $S\in \sset$, let $d_1(S)=|\out(S)\cap E(A,B)|$, and let $d_2(S)=|\out(S)\setminus E(A,B)|$. Then $\sum_{S\in \sset}d_2(S)=2|E'|< |E(A,B)|$. Therefore, there is at least one cluster $S^*\in \sset$ with $d_2(S^*)< d_1(S^*)$. We claim that $|S^*\cap \Gamma(C)|> (k+k'+1)/16$. Assume otherwise, and consider the cut $(A\cup S^*,B\setminus S^*)$. Notice that $|E(A\cup S^*,B\setminus S^*)|=|E(A,B)|-d_1(S^*)+d_2(S^*)< |E(A,B)|$. Moreover,

\[|\Gamma(C)\cap (B\setminus S)|\geq |\Gamma(C)\cap B|-|\Gamma(C)\cap S| \geq\frac{27k}{80}-\frac{k+k'+1}{16}\geq\frac{k+k'+1}{4}\geq \rho|\Gamma(C)| ,\]

since we have assumed that $\rho |\Gamma(C)|\leq (k+k'+1)/4$. We conclude that $(A\cup S^*,B\setminus S^*)$ is a $\rho$-balanced cut, contradicting the minimality of $(A,B)$. Therefore, $|\Gamma(C)\cap S^*|> (k+k'+1)/16> 3k/64$ must hold, and $S^*$ has the $(k'', \alpha^*)$-bandwidth property. It now remains to prove Claim~\ref{claim: number of edges cut}.

\begin{proofof} {Claim~\ref{claim: number of edges cut}}
Throughout the algorithm, we maintain non-negative budgets $\beta(v)$ for all vertices $v\in B$, as follows. Let $S\in \sset$ be any cluster, and assume that it belongs to level $i$, for $1\leq i\leq 3$. Then every vertex $v\in \Gamma(S)$ has budget $\beta(v)=\frac{ \alpha^*}{1-2 \alpha^*}\cdot \frac{11-i}{8}$ if $v\in \Gamma(C)$, and budget $\beta(v)=\frac{ \alpha^*}{1-2 \alpha^*}$ if $v\in \Gamma(S)\setminus \Gamma(C)$. All other vertices have budget $0$. It is now enough to prove that throughout the algorithm, the following invariant holds:

\[\sum_{v\in B}\beta(v)+|E'|< |E(A,B)|/2.\]

At the beginning of the algorithm, the vertices in $\Gamma(C)\cap B$ have budgets at most $\frac{5 \alpha^*}{4(1-2 \alpha^*)}$ each, and there are at most $23|E(A,B)|$ such vertices, due to the $1/23$-bandwidth property of $C$. The vertices of $B$ incident on the edges of $E(A,B)$ have budgets $\frac{ \alpha^*}{1- \alpha^*}$ each. Therefore,

\[\sum_{v\in B}\beta(v)\leq \frac{5 \alpha^*}{4(1-2 \alpha^*)}\cdot 23|E(A,B)|+\frac{ \alpha^*}{1-2 \alpha^*}|E(A,B)|< \frac{|E(A,B)|}{2},\]

since $ \alpha^*=1/64$.

Assume now that the invariant holds at the beginning of the current iteration, when some cluster $S\in \sset$ is partitioned into clusters $Z$ and $Z'$. We assume w.l.o.g. that $|Z\cap \Gamma(C)|\leq |Z'\cap \Gamma(C)|$. If $S$ belongs to levels $1\leq i\leq 3$, then $Z$ must belong to level $i+1$ or higher. We now consider two cases. The first case happens when $Z$ belongs to level $2$ or $3$. Then $|Z\cap \Gamma(C)|\geq (k+k'+1)/16$, and the budget of each vertex in $Z\cap \Gamma(C)$ decreases by at least $\frac{ \alpha^*}{8(1-2 \alpha^*)}$. On the other hand, $|E(Z,Z')|\leq  \alpha^* k''\leq  \alpha^* k/512$, and the budget of each endpoint of each edge in $E(Z,Z')$ increases by at most $ \alpha^*/(1-2 \alpha^*)$. The budgets of all other vertices do not increase. The total increase in the budgets of the vertices that serve as endpoints of the edges in $E(Z,Z')$, and in $|E'|$, is bounded by:

\[|E(Z,Z')|\left (1+\frac{2 \alpha^*}{1-2 \alpha^*}\right )\leq \frac{ \alpha^* k}{512}\cdot \frac{1}{1-2 \alpha^*}<\frac{k+k'+1}{16}\cdot \frac{ \alpha^*}{8(1-2 \alpha^*)},\]
and so the invariant continues to hold.

The second case is when $Z$ belongs to level $4$. In this case, the budget of each vertex in $Z\cap \Gamma(S)$ decreases by at least $ \alpha^*/(1-2 \alpha^*)$, while the budget of every vertex in $Z'$ that serves as an endpoint of an edge in $E(Z,Z')$ increases by at most $ \alpha^*/(1-2 \alpha^*)$. The budgets of all other vertices do not increase. The total increase in the budgets of the vertices of $Z'$, and in $|E'|$, is bounded by:

\[|E(Z,Z')|\left (1+\frac{ \alpha^*}{1-2 \alpha^*}\right )\leq  \alpha^*\cdot |Z\cap \Gamma(S)|\cdot \frac{1}{1-2 \alpha^*},\]

and the invariant continues to hold.
\end{proofof}

\subsection{Proof of Lemma~\ref{lem: balanced-cut-simple}}\label{subsec: proof of balanced cut lemma}
We show an algorithm to compute $C'$.
Throughout the algorithm, we maintain a partition $\cset$ of $C$ into clusters, and a special cluster $S\in \cset$. We also maintain a collection $E'$ of edges, where $E'=\left(\bigcup_{R\in \cset}\out_G(R)\right )\setminus \out_G(C)$. At the beginning, $S=C$, $\cset=\set{S}$, and $E'=\emptyset$. While $|S\cap \Gamma(C)|>3|\Gamma(C)|/4$, and $S$ does not have the $\alpha'$-bandwidth property, let $(R,R')$ be an $\alpha'$-violating partition of $S$, that is, $|E(R,R')|<\alpha'\cdot \min\set{|\Gamma(S)\cap R|,|\Gamma(S)\cap R'|}$. Assume w.l.o.g. that $|R\cap \Gamma(C)|\geq |R'\cap \Gamma(C)|$. We then replace $S$ with $R$ and $R'$ in $\cset$, set $S=R$, add the edges of $E(R,R')$ to $E'$, and continue to the next iteration. Notice that $|\Gamma(R)|<|\Gamma(S)|$, since $\alpha'<1$. The algorithm terminates when either $|S\cap \Gamma(C)|\leq 3|\Gamma(C)|/4$, or $S$ has the $\alpha'$-bandwidth property. Notice that $|\out(S)|$ strictly decreases in each iteration, and since $C$ does not have the $\alpha'$-bandwidth property, at least one iteration is executed. Therefore, when the algorithm terminates, $|\out(S)|<|\out(C)|$.
We need the following claim.

\begin{claim}\label{claim: bound on E'}
When the algorithm terminates, $|E'|\leq \frac{\alpha'}{1-\alpha'}|\Gamma(C)|$.
\end{claim}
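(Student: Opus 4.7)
The plan is to prove the claim by a standard budget / charging argument of the kind used in well-linked decompositions. Introduce, for every $v \in V(G)$, a non-negative budget $\beta(v)$, and maintain throughout the algorithm the invariant
\[
\sum_{v \in C} \beta(v) + |E'| \;\le\; \frac{\alpha'}{1-\alpha'}\,|\Gamma(C)|,
\]
together with the local invariant that $\beta(v) \ge \frac{\alpha'}{1-\alpha'}$ for every $v \in \Gamma(S)$, where $S$ is the current special cluster. Initially $S = C$, $E' = \emptyset$, and we set $\beta(v) = \frac{\alpha'}{1-\alpha'}$ for each $v \in \Gamma(C)$ and $\beta(v) = 0$ otherwise, so both invariants hold with equality.

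Now consider an iteration in which $S$ is split via an $\alpha'$-violating partition $(R,R')$, where WLOG we follow the algorithm and set $S \leftarrow R$. Write $y_1 = |\Gamma(S) \cap R|$ and $y_2 = |\Gamma(S) \cap R'|$, and recall that $|E(R,R')| < \alpha' \min(y_1, y_2) \le \alpha' y_2$. Update the budgets as follows: each vertex in $\Gamma(S) \cap R'$ is abandoned and its budget is reset to $0$, while each vertex of $\Gamma(R) \setminus \Gamma(S)$ (these are all endpoints of $E(R,R')$ lying in $R$ that were not already in $\Gamma(S)$) gets its budget raised to $\frac{\alpha'}{1-\alpha'}$. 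Since $|\Gamma(R) \setminus \Gamma(S)| \le |E(R,R')|$, the total budget change is at most $-\frac{\alpha'}{1-\alpha'}\, y_2 + \frac{\alpha'}{1-\alpha'}\,|E(R,R')|$, while $|E'|$ increases by exactly $|E(R,R')|$. Summing, the net change in $\sum_v \beta(v) + |E'|$ is at most
\[
\frac{|E(R,R')| - \alpha' y_2}{1-\alpha'} \;<\; 0,
\]
so the global invariant is preserved; the local invariant clearly holds for the new $S = R$ by construction.

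At termination, $\sum_v \beta(v) \ge 0$, and the global invariant immediately yields $|E'| \le \frac{\alpha'}{1-\alpha'}\,|\Gamma(C)|$, as required. The only technical subtlety, which I expect to be the main bookkeeping obstacle, is to verify that the new-boundary bookkeeping is correct: a vertex $v$ may simultaneously lie in $\Gamma(S) \cap R$ and be incident to an edge of $E(R,R')$, in which case it already carries the required budget and no new budget needs to be allocated for it. This just makes the bound on the gain side tighter, so the argument goes through unchanged. The algebra $|E(R,R')| \le \alpha' y_2$ (which is where we use that the partition is $\alpha'$-violating on the smaller side as well as on the larger one) is what makes the potential strictly decrease, and hence guarantees termination with the claimed bound on $|E'|$.
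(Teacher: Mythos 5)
Your proposal is correct and follows essentially the same approach as the paper: the same budget of $\frac{\alpha'}{1-\alpha'}$ is placed on each vertex of $\Gamma(S)$, the same global invariant $\sum_v\beta(v)+|E'|\le \frac{\alpha'}{1-\alpha'}|\Gamma(C)|$ is maintained, and the per-iteration calculation charging the gain on $|E'|$ and on new boundary vertices against the loss from $\Gamma(S)\cap R'$ is identical in substance. The subtlety you flag (a vertex of $\Gamma(S)\cap R$ also incident on $E(R,R')$ needing no new budget) is handled in the paper by simply over-counting $|U|\le|E(R,R')|$, which is the same slack you observe.
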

\begin{proof}
Throughout the algorithm, we maintain a non-negative budget $\beta(v)$ for each vertex $v\in C$, defined as follows. For $v\in \Gamma(S)$, $\beta(v)=\frac{\alpha'}{1-\alpha'}$, and for all other vertices $v\in C$, $\beta(v)=0$. It is enough to show that throughout the algorithm, $\sum_{v\in C}\beta(v)+|E'|\leq   \frac{\alpha'}{1-\alpha'}|\Gamma(C)|$. It is clear that the inequality holds at the beginning of the algorithm.

Assume now that the inequality holds at the beginning of the current iteration, where cluster $S$ is partitioned into $R$ and $R'$. Then the budgets of the vertices in $\Gamma(S)\cap R'$ decrease from  $\frac{\alpha'}{1-\alpha'}$ to $0$. Let $U\subseteq R$ be the subset of vertices incident on the edges of $E(R,R')$. The budget of every vertex in $U$ increases by $\frac{\alpha'}{1-\alpha'}$, and the total increase in the vertex budgets, and in $|E'|$ is bounded by:

\[\frac{\alpha'}{1-\alpha'}\cdot |U|+|E(R,R')|\leq \left (\frac{\alpha'}{1-\alpha'}+1\right )|E(R,R')|< \frac{\alpha'}{1-\alpha'}\cdot |\Gamma(S)\cap R'|.\]
%
Since the budget of every vertex in $\Gamma(S)\cap R'$ decreases by $\alpha'/(1-\alpha')$, the invariant that $\sum_{v\in C}\beta(v)+|E'|\leq   \frac{\alpha'}{1-\alpha'}|\Gamma(C)|$ continues to hold.
\end{proof}

It is easy to see that when the algorithm terminates, $|S\cap \Gamma(C)|\geq |\Gamma(C)|/4$: before the last iteration started, $|S\cap \Gamma(C)|>3|\Gamma(C)|/4$ held. Therefore, $|R\cap \Gamma(C)|\geq 3|\Gamma(C)|/8\geq |\Gamma(C)|/4$. Since the final set $S=R$, we get that $|S\cap \Gamma(C)|\geq |\Gamma(C)|/4$. We claim that $|\Gamma(C)\setminus S|<|\Gamma(C)|/4$. Indeed, assume otherwise. Then $(S,C\setminus S)$ is a $1/4$-balanced partition of $C$, with respect to $\Gamma(C)$.
Since $|E(S,C\setminus S)|\leq |E'|\leq \frac {\alpha'}{1-\alpha'}|\Gamma(C)|$, the value of the minimum balanced cut in $C$ with respect to $\Gamma(C)$ is at most $\frac{\alpha'}{1-\alpha'}|\Gamma(C)|$, a contradiction.

 We conclude that when the algorithm terminates, $|S\cap \Gamma(C)|\geq 3|\Gamma(C)|/4$, and so $S$ has the $\alpha'$-bandwidth property. We return $C'=S$. As observed above, $|\out(C')|<|\out(C)|$, and $C'\subsetneq C$.

\subsection{Proof of Claim~\ref{claim: C' has bw prop}}

Let $(Z,Z')$ be any partition of $C'$, and assume w.l.o.g. that $|Z\cap \Gamma(C')|\leq |Z'\cap \Gamma(C')|$. It is enough to show that $|E(Z,Z')|\geq |Z\cap \Gamma(C')|/23$. Since $v$ is a non-cut vertex of $G[C]$, $|E(Z,Z')|>0$.  If $|Z\cap \Gamma(C')|\leq 23$, then, since $|E(Z,Z')|\geq 1$, we get that $|E(Z,Z')|\geq |Z\cap \Gamma(C')|/23$, as required. Therefore, we assume from now on that $|Z\cap \Gamma(C
')|>23$.

Consider the partition $(Z,Z'\cup\set{v})$ of $C$. From the $(k/4,1)$-bandwidth property of $C$, $|E(Z,Z'\cup\set{v})|\geq \min\set{k/4,|Z\cap\Gamma(C)|, |Z'\cap \Gamma(C)|}$. 

Since $\Gamma(C')\setminus \Gamma(C)$ contains at most two vertices - the neighbors of $v$ in $C$, $|Z\cap \Gamma(C')|\leq |Z\cap \Gamma(C)|+2$, and similarly, $|Z'\cap \Gamma(C')|\leq |Z'\cap \Gamma(C)|+2$. Therefore, 

\[\begin{split}
|Z\cap \Gamma(C')|&=\min\set{|Z\cap \Gamma(C')|,|Z'\cap \Gamma(C')|}\\
&\leq \min\set{|Z\cap \Gamma(C)|,|Z'\cap \Gamma(C)|}+2\\
&\leq \frac{k+k'}{2}+2\\
&\leq \frac{65k}{128}+2.\end{split}\]

(We have used the fact that $|\Gamma(C)|\leq k+k'$, and $k'= k/64$). We conclude that:

\[|Z\cap \Gamma(C')|\leq \frac{65}{32}\min\set{|Z\cap \Gamma(C)|,|Z'\cap \Gamma(C)|,\frac{k}{4}}+2.\]

On the other hand, since $v$ is incident on at most two edges of $G[C]$, $|E(Z,Z')|\geq |E(Z,Z'\cup\set{v})|-2$. Therefore, altogether:

\[\begin{split}
|E(Z,Z')|&\geq |E(Z,Z'\cup\set{v})|-2\\
&\geq  \min\set{k/4,|Z\cap \Gamma(C)|,|Z'\cap \Gamma(C)|}-2\\
&\geq \frac{32}{65}(|Z\cap \Gamma(C')|-2)-2\\
&\geq \frac{|Z\cap \Gamma(C')|}{23},\end{split}\]

since $|Z\cap \Gamma(C')|\geq 23$.

\fi

\end{document}

Let $G$ be the input graph, with the set $T$ of terminals that are $1$-well-linked. We assume without loss of generality that $G$ is minimal with respect to this edge-deletion, in which $T$ is $1$-well-linked. In other words, for each edge $e\in E(G)$, $T$ is not $1$-well-linked in $G\setminus \set{e}$.

We exploit the minimality of $G$ to prove the following lemma, which is a variation of the Deletable Edge Lemma of Chekuri, Khanna and Shepherd~\cite{deletable-edge-original}, whose proof can be found in~\cite{deletable-edge}. We include the proof in Appendix for completeness.

\begin{lemma}\label{lemma: deletable edge}
Let $\Gamma\subseteq V(G)$ be any vertex set, with $3\leq |\Gamma|\leq k$ and $\Gamma\cap \tset=\emptyset$, such that $\Gamma$ is $\alpha$-well-linked in $G$, for some $\alpha\leq 1$. Then there is a set $\pset$ of $\floor{\alpha \left (\frac{|\Gamma|}{3}-1\right )}$ edge-disjoint paths, connecting the vertices of $T$ to the vertices of $\Gamma$ in $G$.\end{lemma}

\begin{proof}
This proof follows the proof of~\cite{deletable-edge} almost exactly (slightly tightening their bounds).
Assume for contradiction that such a set $\pset$ of paths does not exist. Then there is a subset $E'\subseteq E(G)$ of at most $\alpha\left (\frac{|\Gamma|}{3}-1\right )$ edges, such that $G\setminus E'$ contains no path from a vertex of $\tset$ to a vertex of $\Gamma$. We show that there is an edge $e\in E(G)$, such that the vertices of $\tset$ are $1$-well-linked in $G\setminus\set{e}$, contradicting the minimality of $G$.

We construct the following flow network. Start with the graph $G$, and add two new vertices, $s$ and $t$ to it. Connect $s$ to every vertex of $\tset$, and connect $t$ to every vertex of $\Gamma$ with an edge. Set the capacities of all edges to $1$. Let $G'$ be this resulting flow network, and let $\gamma$ be the value of the minimum $s$-$t$ flow in $G'$. Observe that $\gamma\leq \alpha\left (\frac{|\Gamma|}{3}-1\right )$, since the edges of $E'$ define an $s$-$t$ cut of this value. Let $(X',Y')$ be the resulting partition of $V(G')$, with $s\in X'$, $t\in Y'$, and $|E(X',Y')|=\gamma$. We can assume w.l.o.g. that all vertices of $\tset$ belong to $X'$, since the degree of each such vertex in $G$ is $1$ (so moving a vertex of $T$ from $Y'$ to $X'$ does not increase the cut value). Since the cut contains at most $\gamma<\alpha\cdot |\Gamma|/2$ edges, $|Y'\cap \Gamma|\geq |\Gamma|/2$ must hold. Let $X=X'\setminus\set{s}$, $Y=Y'\setminus\set{t}$ be the corresponding vertex subsets of $G$. Then $|\out_G(Y)|\leq \gamma$, $Y\cap \tset=\emptyset$, and $|Y\cap \Gamma|\geq |\Gamma|/2$. Let $M$ be a minimum-cardinality subset of vertices of $G$ with the above properties, that is, $|\out_G(M)|\leq \gamma$, $M\cap \tset=\emptyset$, and $|M\cap \Gamma|\geq |\Gamma|/2$. Observe that $M$ is not an independent set: since $|\out_G(M)|\leq \gamma$, while $|M\cap \Gamma|\geq |\Gamma|/2$, from the $\alpha$-well-linkedness of $\Gamma$, there is at least one edge with both endpoints in $M$. We claim that any such edge is deletable: that is, if both endpoints of an edge $e$ belong to $M$, then the vertices of $\tset$ are $1$-well-linked in $G\setminus\set{e}$. Assume otherwise.

Then there is a partition $(A,B)$ of $V(G)$, with $|A\cap \tset|\leq |B\cap \tset|$, and $|E_G(A,B)|<|\tset\cap A|+1$, such that $e\in E_G(A,B)$. Let $Z=A\cap M$ and $Z'=B\cap M$. From the sub-modularity of cuts,

\[|\out(A)|+|\out(M)|\geq |\out(A\cup M)|+|\out(A\cap M)|=|\out(A\cup M)|+|\out(Z)|\]

and

\[|\out(A)|+|\out(M)|\geq |\out(A\setminus M)|+|\out(M\setminus A)|=|\out(A\setminus M)|+|\out(Z')|\]

Recall that $|\out(A)|=|E_G(A,B)|<|\tset\cap A|+1$, while $|\out(M)|\leq \gamma$. On the other hand, $|\out(A\cup M)|\geq |\tset\cap A|$, from the well-linkedness of $\tset$, and since $M\cap \tset=\emptyset$. Therefore, $|\out(Z)|<\gamma+1$. Since $|\out(Z)|$ and $\gamma$ are both integers, $|\out(Z)|\leq \gamma$ must hold. Similarly, since $|\out(A\setminus M)|\geq |\tset\cap A|$ (from the well-linkedness of $\tset$), we get that $|\out(Z')|<\gamma+1$, and $|\out(Z')|\leq \gamma$. From the minimality of $M$, $|Z\cap \Gamma|,|Z'\cap \Gamma|<|\Gamma|/2$ must hold. But since $|\out_G(M)|\leq \gamma$, we get that $M$ must contain at least $|\Gamma|-\gamma/\alpha$ vertices of $\Gamma$. Therefore, $|Z\cap \Gamma|+|Z'\cap \Gamma|\geq 2|\Gamma|/3+1$. Assume w.l.o.g. that $|Z\cap \Gamma|\geq |Z'\cap \Gamma|$. Then $|Z\cap \Gamma|\geq |\Gamma|/3+1$, and since $|Z\cap \Gamma|\leq |\Gamma|/2$, from the well-linkedness of $\Gamma$, $|\out_G(Z)|>\alpha|\Gamma|/3\geq \gamma$, a contradiction.
\end{proof}

We obtain the following immediate corollary from Lemma~\ref{lemma: deletable edge}.

\begin{corollary}\label{cor: good router lots of flow}
If $S$ is a good router, then there is a set $\pset(S)$ of $\Omega(h'/r')$ node-disjoint paths from the vertices of $\Gamma(S)$ to the vertices of $T$ in $G$.
\end{corollary}

\begin{proof}
If $|T\cap \Gamma(S)|\geq h'/3$, then we can construct the set $\pset(S)$ of paths, containing, for each vertex $v\in T\cap\Gamma(S)$ the path $P=(v)$. Otherwise, $|\Gamma(S)\setminus T|\geq 2h'/3$, and we obtain a collection $\pset'$ of $\Omega(h'/r')$ edge-disjoint paths connecting the vertices of $\Gamma(S)$ to the vertices of $T$ in $G$, by applying Lemma~\ref{lemma: deletable edge} to $\Gamma(S)\setminus T$. Since the maximum vertex degree in $G$ is bounded by a constant, we can find a collection $\pset(S)$  of $\Omega(h'/r')$ node-disjoint paths from the vertices of $\Gamma(S)$ to the vertices of $T$ in $G$, using standard techniques.
\end{proof}


Our starting point is the following simple lemma.
Its proof uses standard techniques and has appeared in previous work...

\begin{lemma}\label{lem: tree decomposition}
For $i\in \set{1,2}$, there is a subset $\rset'_i\subseteq \rset_i$ of at least $b_i/\ceil{2\log g}$ paths, such that for all $R\neq R'\in \rset'_i$, if $v\in R$ and $v'\in R'$, then $v$ is not a descendant of $v'$ in tree $T_i$.
\end{lemma}
\begin{proof}
We assume without loss of generality that for every leaf vertex $v$ of $T_i$, its parent belongs to one of the paths in $\rset_i$ - as otherwise $v$ can be discarded from $T_i$. Therefore, the number of leaves in $T_i$, that we denote by $L$, is bounded by $|\rset_i|\leq g^2$. Let $h=\ceil{\log L}\leq \ceil{2\log g}$. The following claim will finish the proof.

\begin{claim}\label{claim: cut the tree}
It is enough to show that there is a partition $U_1,\ldots,U_{h}$ of the vertices of $T_i$, so that:

\begin{itemize}
\item for each $1\leq j\leq h$, $T_i[U_j]$ is a collection of disjoint paths, that we denote by $\yset_j$;

\item for each $1\leq j\leq h$, if $v,v'\in U_j$ and $v$ is a descendant of $v'$ in $T_i$, then $v,v'$ lie on the same path in $\yset_j$; and

\item for each path $R\in \rset_i$, there is some $1\leq j\leq h$, so that $V(R)\subseteq U_j$.
\end{itemize}
\end{claim}

Notice that there must be some index $1\leq j\leq h$, such that the number of paths $R\in \rset_i$ with $V(R)\subseteq U_j$ is at least $|\rset_i|/h\geq |\rset_i|/\ceil{\log 2g}$. We can then return the subset of all paths $R\in \rset_i$ with $V(R)\subseteq U_j$. It is now enought o prove Claim~\ref{claim: cut the tree}.

\begin{proof}
We compute the partition of $V(T_i)$ in iterations, where in the $j$th iteration we define the set $U_j$ of vertices, together with the corresponding collection $\yset_j$ of paths. For the first iteration, for every leaf $v$ of $T_i$, let $Y(v)$ be the longest path of $T_i$ starting at $v$ that only contains degree-1 and degree-2 vertices. We then add the vertices of $Y(v)$ to $U_1$, and the path $Y(v)$ to $\yset_1$. Once we process all leaf vertices of $T_i$ in this way, the first iteration terminates. It is easy to see that all resulting vertices in $U_1$ induces a collection $\yset_1$ of disjoint paths in $T_i$, and moreover if $v,v'\in U_1$, and $v'$ is a descendant of $v$ in $T_i$, then $v,v'$ lie on the same path in $\yset_1$. We then delete all vertices of $U_1$ from $T_i$.

The rest of the iterations are executed similarly, except that the tree $T_i$ becomes smaller, since we delete all vertices that have been added to the sets $U_j$ so far from the tree.

It is now enough to show that this process terminates after $\ceil{\log L}$ iterations. In order to do so, we can describe each iteration slightly differently. Before each iteration starts, we contract every edge $e$ of the current tree, such that at least one endpoint of $e$ has degree $2$ in the tree. We then obtain a tree in which every inner vertex has degree at least $3$, and delete all leaves from this tree (whose corresponding vertices are added to $U_j$ in the current iteration). The number of vertices remaining in the contracted tree after each such iteration therefore decreases by the factor of at least $2$. It is easy to see that the number of iterations in this procedure is the same as the number of iterations in our algorithm, and is bounded by $\ceil{\log L}$.
\end{proof}
\end{proof}